\documentclass[a4paper,10pt]{article}
\usepackage{amsthm,amsfonts,amssymb,euscript}
\usepackage{latexsym, multicol, fancybox}
\usepackage{graphicx}
\usepackage{color}
\usepackage{amsmath, amsthm, amssymb,bm}
\usepackage{epstopdf}
\usepackage{caption}
\usepackage{psfrag}
\usepackage{comment}

\usepackage{mathrsfs}
\usepackage{xcolor}
\usepackage[citebordercolor={green}]{hyperref}
\usepackage{tikz}
\usepackage{bbm}
\usepackage{dsfont}

\numberwithin{equation}{section}

\newtheorem{theorem}{Theorem}[section]
\newtheorem{lemma}[theorem]{Lemma}
\newtheorem{proposition}[theorem]{Proposition}
\newtheorem{corollary}[theorem]{Corollary}
\newtheorem{definition}[theorem]{Definition}
\newtheorem{remark}[theorem]{Remark}

\newcommand{\Red}{\textcolor{red}}

\newcommand{\bea}{\begin{eqnarray}}
    \newcommand{\eea}{\end{eqnarray}}
\newcommand{\beq}{\begin{equation}}
    \newcommand{\eeq}{\end{equation}}
\def\beaa{\begin{eqnarray*}}
    \def\eeaa{\end{eqnarray*}}
\def\ba{\begin{array}}
    \def\ea{\end{array}}

\DeclareFontFamily{U}{mathx}{\hyphenchar\font45}
\DeclareFontShape{U}{mathx}{m}{n}{
    <5> <6> <7> <8> <9> <10>
    <10.95> <12> <14.4> <17.28> <20.74> <24.88>
    mathx10
}{}
\DeclareSymbolFont{mathx}{U}{mathx}{m}{n}
\DeclareFontSubstitution{U}{mathx}{m}{n}
\DeclareMathAccent{\widecheck}{0}{mathx}{"71}

\newcommand{\Romanupper}[1]
{\MakeUppercase{\romannumeral #1}}

\def\a{{\alpha}}
\def\al{{\alpha}}
\def\b{{\beta}}
\def\be{{\beta}}
\def\ga{\gamma}

\def\de{\delta}
\def\De{\Delta}
\def\ep{\epsilon}

\def\la{\lambda}

\def\Si{\Sigma}
\def\om{\omega}

\def\Th{\Theta}

\def\vphi{\varphi}

\def\th{\theta}

\def\nab{\nabla}

\def\pr{{\partial}}
\def\les{\lesssim}
\def\c{\cdot}

\def\AA{{\mathcal A}}
\def\BB{{\mathcal B}}
\def\CC{{\mathcal C}}

\def\NN{{\mathcal N}}

\def\LL{{\mathcal L}}
\def\II{{\mathcal I}}

\def\FF{{\mathcal F}}
\def\EE{{\mathcal E}}
\def\HH{{\mathcal H}}
\def\GG{{\mathcal G}}
\def\TT{{\mathcal T}}
\def\WW{{\mathcal W}}
\def\VV{{\mathcal V}}
\def\VVt{\widetilde{\VV}}
\def\OO{{\mathcal O}}
\def\SS{{\mathcal S}}
\def\UU{{\mathcal U}}

\def\KK{{\mathcal K}}
\def\Lie{{\mathcal L}}

\def\DD{{\mathcal D}}
\def\PP{{\mathcal P}}
\def\RR{{\mathcal R}}
\def\QQ{{\mathcal Q}}
\def\YY{{\mathcal Y}}
\def\AA{{\mathcal A}}

\def\HH{{\mathcal H}}

\def\Lie{{\mathcal L}}

\def\lap{{\triangle}}

\def\D{{\bf D}}

\def\M{{\bf M}}

\def\X{{\bf X}}
\def\g{{\bf g}}
\def\e{{\bf e}}
\def\k{{\bold k}}

\def\w{{\bf w}}

\def\T{T}
\def\Z{Z}

\def\SSS{{\Bbb S}}

\def\f12{{\frac 1 2}}

\def\Pt{\widetilde{P}}
\def\It{\widetilde{I}}
\def\Jt{\widetilde{J}}
\def\JtP {\, ^{(P)} \mkern-3.5mu\Jt}

\def\Kt{\widetilde{K}}

\def\Bk{{\mathfrac{B}}}

\def\Bk{\mathfrak{B}}

\def\Ik{ {\protect{\mathfrak{I}}}}

\def\Jk{{\mathfrak{J}}}
\def\Kk{\mathfrak{K}}

\def\dk{\mathfrak{d}}

\def\Div{{\mbox{ \bf Div} \hspace{- 0.2pt}}}

\def\piX{\, ^{(X)}\pi}

\def\err{{\mbox{Err}}}
\def\ov{\overline}

\def\f12{\frac 1 2}
\def\lab{\label}

\def\bsplit{\begin{split}}

    \def\That{{{ \widehat T}}}
    \def\Rhat{{{\widehat R}}}
    \def\Zhat{{{\widehat Z}}}
    \def\rhat{{\widehat{r}}}

    \def\aund{{\underline{a}}}
    \def\bund{{\underline{b}}}
    \def\cund{{\underline{c}}}
    \def\dund{{\underline{d}}}

    \def\SS{\mathcal{S}}
    \def\phit{{\tilde{\phi}}}
    \def\gt{\tilde{\g}}

    \def\RRtp{\GG'}
    
    \def\GGtp{\GG'}
    
    \def\UUwtp{\widetilde{\UU}}

    \def\aund{{\underline{a}}}
    \def\bund{{\underline{b}}}
    \def\cund{{\underline{c}}}

    \def\Sa{S_{\underline{a}}}

    \def\SSa{\SS_{\underline{a}}}

    \def\psia{\psi_{\aund}}
    \def\psib{\psi_{\bund}}
    
    \def\phia{\phi_{\aund}}

    \def\FFab{\FF^{\aund\bund}}

    \def\RRa {\GG^{\underline{a}}}

    \def\gadot{\dot{\ga}}
    \def\lz{{\bf \ell_z}}

    \def\ntrap{trap\mkern-18 mu\big/\,}
    
    \def\Dtrap{\,\DD_{trap}}

    \def\AAt{\widetilde{\AA}}
    \def\PPt{\widetilde{\PP}}

    \def\ges{\gtrsim}

    \def\Tht{\widetilde{\Th}}

    \def\ntrap{trap\mkern-18 mu\big/\,}

    \def\AAt{\widetilde{\AA}}
    \def\PPt{\widetilde{\PP}}

    \def\ges{\gtrsim}

    \def\BEF{B\hspace{-2.5pt}E \hspace{-2.5pt} F}
    \def\BE{B\hspace{-2.5pt} E}
    \def\EF{E \hspace{-2.5pt} F}
    \def\BF{B\hspace{-2.5pt} F}
    \def\BEFdot{\dot{\BEF}\hspace{-2.5pt}}
    \def\BEFddot{\ddot{\BEF}\hspace{-2.5pt}}

    \def\LE{L \hspace{-2.5pt} E}

    \def\BEFdeg{\,^{(deg)} \BEF}
    \def\EFdeg{\,^{(deg)} \EF}
    \def\BEdeg{\,^{(deg)} \BE}
    \def\BFdeg{\,^{(deg)} \BF}

    \def\psit{\widetilde{\psi}}

    \def\Mor{\mbox{Mor}}
    
    \def \Bdeg{\,^{(deg)} B}
    
    \def \Edeg{\,^{(deg)} E}
    \def\Fdeg{\,^{(deg)} F}
    \def\ahat{\hat{a}}
    
    \def\Shat{\widehat S}
    \def\SShat{\widehat{\SS}}
    
    \def\GGhat{\widehat{\GG}}
    \def\GGhatp{\widehat{\GG}'}

    \def\grad{\mbox{grad}}
    \def\gradf{\,^{(f)}\grad}
    \def\gradt{\,^{(t)}\grad}
    \def\gradr{\,^{(r)}\grad}
    \def\gradtau{\,^{(\tau)}\grad}
    \def\Tt{\widetilde{T}}
    \def\Rt{\widetilde{R}}
    \def\rt{{\tilde{r}}}

    \def\gt{{\widetilde{\g}}}

    \def\Nt{\widetilde{N}}
    \def \Tplus{{T_{+}}}

    \def\ezero{\,^{(0)} e}
    \def\Dezero{\,^{(0)}\De}
    
    \def\nabzero{\,^{(0)}\nab}
    \def\dkzero{\,^{(0)}\dk}
    \def\Tring{\mathring{T}}

    \def\Jtgood{\, ^{(g)}\hspace{-2.5pt}\Jt}
    \def\Jtbad{\,^{(b)}\hspace{-2.5pt}\Jt}
    \def\Mgood{\,^{(g)}\hspace{-2.5pt}M}
    \def\vgood{\,^{(g)}\hspace{-2pt}v}

    \def\eSS{\,^{(\SSS)} \hspace{-1.8 pt}e}

\def\nabSS{\,^{(\SSS)}\hspace{-1.6 pt}\nab}

\def\lapSS{\,^{(\SSS)}\hspace{-2 pt}\lap}

 \def\ezero{\,^{(0)} e}
\def\Dezero{\,^{(0)}\De}

\def\nabzero{\,^{(0)}\nab}
\def\dkzero{\,^{(0)}\dk}
\def\Tring{\mathring{T}}

\def\rdot{\dot{r}}

\def\yhat{\widehat{y}}
\def\Ta{\,^{(a)} T}

    \def\Bdegg{\,^{(deg)}\overline{B}}

\def\DDtrap{ \DD_{trap}}

\def\EEt{\widetilde{\EE}}

\def\piY{\, ^{(Y)}\pi}
\def\pithree{\, ^{(e_3)}\pi}
\def\piring{\mathring{\pi}}

\begin{document}

        \title{ A Physical space derivation of Morawetz-Energy estimates in Kerr spacetimes with large angular momentum.}
        \author{Lili He and Sergiu Klainerman}

        \maketitle

        \begin{abstract}

    We revisit the derivation of Morawetz--energy estimates for scalar wave
equations in the domain of outer communication of a Kerr spacetime
\(\KK(a,m)\). Our goal is to develop robust physical-space methods which
are well suited for extension to realistic perturbations of Kerr.

The proof rests on several ingredients. First, we derive conditional
Morawetz estimates which extend the physical-space techniques initiated by
Andersson and Blue \cite{AB}, and later adapted in \cite{GKS} to
perturbations of slowly rotating Kerr, by exploiting a physical-space
characterization of the full \(r\)-range of trapped null geodesics. Second,
we use an idea introduced by Stogin \cite{St} in the axially symmetric case
to handle the low-frequency difficulties in the Morawetz estimates. In the
general case, the control of the lower-order terms also requires making
full use of the principal trapping term in the Morawetz bulk norm, together
with a new use of Hardy-type inequalities.

A further new ingredient is the control of the boundary terms generated by the
Morawetz estimates. This is based on two additional ideas: physical-space
versions of Whiting's transform \cite{W}, developed in a forthcoming paper
\cite{H-K2}, which yield a flux-independent energy estimate; and an
adaptation of the Andersson--Blue invariant-operator method, which turns
that estimate into a bound for the horizon flux. Finally, a continuity
argument yields an unconditional global-in-time Morawetz estimate, while a
new energy estimate is obtained from the construction of a causal vectorfield
which is Killing on the trapping set.

The results proved here are restricted to scalar wave equations, corresponding
to spin \(0\), in the range \(|a|/m\leq 0.75\). We expect this restriction
to be technical, and the methods developed in this paper to extend to the
Teukolsky equation.
        \end{abstract}

        \tableofcontents


        \section{Introduction}

\subsection{General Motivation}

This is the first in a projected sequence of papers whose principal goal is
to revisit the derivation of Morawetz--energy estimates for wave equations
on Kerr spacetimes \(\KK(a,m)\), in the full subextremal range
\(|a|<m\), using robust physical-space methods. Since such estimates are
principally motivated by the Kerr black-hole stability problem, we begin by
recalling the present status of the conjecture.\footnote{For a more complete
discussion, see the introductions to \cite{KS:Kerr}, \cite{GKS}, and the
survey \cite{Chr-Survey}.}

The first complete nonlinear stability result for the Kerr family was
obtained, for \(|a|/m\ll1\), in the sequence of works
\cite{KS-GCM1}, \cite{KS-GCM2}, \cite{KS:Kerr}, \cite{GKS}, and
\cite{Shen}. As observed already in \cite{KS:Kerr}, the smallness assumption on the
angular momentum enters only in \cite{GKS}, mainly through the derivation of
decay estimates for the Teukolsky equations satisfied by the extreme
curvature components. The general strategy for such estimates combines
Morawetz--energy estimates with \(r^p\)-weighted estimates.

For the exact Kerr spacetime \(\KK(a,m)\), Morawetz--energy estimates in the
full subextremal range \(|a|<m\) are already known: see \cite{DRS} for the
scalar wave equation, and \cite{S-Rita1}, \cite{S-Rita2}, \cite{Millet} for
the Teukolsky equation. These results rely in an essential way on frequency
analysis and on Whiting's deep mode stability theorem \cite{W}. For this
reason, they appeared difficult to adapt directly to realistic perturbations
of Kerr, of the type expected in a proof of nonlinear stability. This
difficulty has recently been overcome by Ma and Szeftel in \cite{MaS} and
\cite{MaS2}, where they derive full energy--Morawetz estimates for both the
scalar wave equation and the Teukolsky equation on realistic perturbations of
subextremal Kerr, using the results of \cite{DRS} and \cite{Millet} as black
boxes together with delicate microlocal techniques in the spirit of
\cite{TT} and \cite{LT}.

It now appears that, by combining these advances with the methods developed
for the nonlinear stability of slowly rotating Kerr, especially the gauge
constructions in \cite{KS:Kerr}, \cite{KS-GCM1}, \cite{KS-GCM2}, and
\cite{Shen}, the remaining conceptual obstacles to the full subextremal Kerr
stability conjecture have largely been removed. Although we can no longer
aspire to be the first to provide a proof for the missing piece, we believe that the
physical-space methods developed here and in our forthcoming paper
\cite{H-K2} remain valuable. They offer a way to simplify, unify, and
clarify aspects of how to derive decay of waves in both Kerr and perturbations of Kerr, which
otherwise remains technically involved and dependent on several sophisticated
black boxes. We also expect these methods to be useful in other problems in
general relativity, and possibly beyond.

As we were finishing this paper, we became aware of a recent work of Hintz
\cite{Hintz1} which also claims to settle the nonlinear stability of the full
subextremal Kerr family. Though there are major differences
between the two approaches, one can make a few comparisons between the results. Hintz works with an
unconventional class of initial data, consisting of a structured,
polyhomogeneous, more slowly decaying part, together with a general remainder
term with \(O(r^{-3-\ep})\) decay. In the traditional framework of weighted
\(C^k\) or \(H^s\) spaces, only the \(O(r^{-3-\ep})\) component is generic;
in this sense the result is more restrictive than the
\(O(r^{-3/2-\ep})\) assumptions used in \cite{KS:Kerr}, \cite{GKS}, and
\cite{MaS2}.\footnote{This is also the initial-data decay used in the
original Christodoulou--Klainerman stability theorem for Minkowski space.
Major improvements on the required decay were obtained by Bieri
\cite{Bieri} and by Shen \cite{Shen23}, \cite{Shen24}; the latter proves
stability of Minkowski space with only \(O(r^{-\de})\) decay.}

In the language of the null-frame curvature formalism, Hintz's result for
the core \(O(r^{-3-\ep})\) component is consistent with full peeling,
namely \(O(r^{-5})\) and \(O(r^{-4})\) decay for the \(\alpha\) and
\(\beta\) curvature components, rather than the weaker
\(O(r^{-7/2-\ep})\) behavior. The structured, more slowly decaying part of
the data, however, leads to serious violations of peeling and is responsible
for some of the main difficulties addressed in that work.\footnote{It is
known that full peeling is inconsistent with many important physical
applications; see, for example, \cite{Ch-memory}, \cite{Ch02}, \cite{BD86},
and \cite{Kehr1}.}

On the technical side, Hintz's proof relies on microlocal methods developed
in his long-term collaboration with A. Vasy. One striking feature of the
argument is the reduction of the gauge problem, which is intrinsically
infinite-dimensional, to a finite-dimensional problem for this restricted
class of initial data; see Remark 1.3 in \cite{Hintz1}. Finally, it is
important to note that the approaches of \cite{MaS2} and \cite{Hintz1},
despite being very different, rely on Whiting's mode stability theorem
as an indispensable input.

            \subsection{Main features of our results and methods}
                For simplicity of the exposition we limit ourselves in this paper to the case of the standard wave equation $\square_{a, m}\psi=N$. Moreover, for technical reasons, our results are restricted to the case $|a|/ m\le 0.75$.

          To start with, our method of deriving Morawetz
           estimates
            relies on the commutation properties of the relevant geometric wave equations in $\KK(a,m)$ with the second order operators generated by the Killing vectorfields $T$, $Z$ and the
        second Carter operator $\OO$. The method, to which we refer broadly, as the Andersson-Blue (AB) invariant operators method, was first introduced in \cite{AB} for solutions of the scalar wave equation $\square_{a,m} \psi=0$, for slowly rotating Kerr $\KK(a,m)$, $|a|/m\ll 1$. It was extended in \cite{GKS} to derive decay estimates for the extreme Teukolsky variables (part II of \cite{GKS}) in perturbations of Kerr as well as to provide control of the top derivatives of the curvature components (part III of \cite{GKS}). The assumption of small rotation $|a|/m\ll 1$ was crucially used in \cite{GKS} both in the derivation of the conditional Morawetz estimates (chapters 6-9) and the derivation of the energy estimates (section 7.3. of \cite{GKS}).

Morawetz--energy estimates in the full subextremal range \(|a|<m\) are known:
see \cite{DRS} for the scalar wave equation, and
\cite{S-Rita1}--\cite{S-Rita2}, \cite{Millet} for the Teukolsky equation.
The proof in \cite{DRS}, in particular, relies on three ingredients closely
tied to Carter's mode separation for the wave equation on Kerr
\cite{Carter1}--\cite{Carter2}: the quantitative mode stability theorem of
Shlapentokh-Rothman \cite{Yacov}, based on Whiting's transformation method;
the strict separation between superradiant and trapped frequencies in the
full subextremal range; and the freedom to choose vectorfields adapted to
individual mode solutions.

         In our work we rely instead on the following new ingredients:
        \begin{enumerate}
         \item A far reaching extension of the Andersson-Blue (AB)--invariant operators method \cite{AB} which takes full advantage of the precise $r$-range of trapped null geodesics (discussed in section \ref{section:geodesics-Kerr}).
         \item
           The innovative technique introduced by Stogin in his PhD thesis \cite{St}, in the context of axially symmetric solutions, to deal with the low frequency difficulties of the Morawetz estimates. The control of the lower order Morawetz bulk term for general solutions requires, in addition, making full use of the principal trapping term\footnote{Typically, in the trapping set, one only uses the non-negativity of the trapping term.} of the bulk Morawetz norm, as well as a new way to make use of Hardy type inequalities.

         \item We make use of a direct bound on the energy norm of solutions, \textit{independent} of
          having first to control the flux through the horizon. The bound\footnote{Conditional
         on controlling the energy restricted to a compact set.}, stated in Theorem \ref{thm:flux-ind-energy} below, is based on a series of new, physical space adaptations, of Whiting's integral transformation technique. We delay a full discussion of these
          transformations, as well as the proof of Theorem \ref{thm:flux-ind-energy}, to our forthcoming paper \cite{H-K2}.

    \item To derive the crucial bound for the flux along the horizon, stated in Theorem \ref{Thm:Flux-estimate}, we combine Theorem \ref{thm:flux-ind-energy} with an adaptation of the AB--invariant operators technique.

    \item The estimates of Theorems \ref{thm:flux-ind-energy} and \ref{Thm:Flux-estimate} only provide bounds for $\T \psi$. To derive a similar bound for $\psi$ we need the additional result of Proposition \ref{Prop.1.6-Intro}.

         \item A continuity type argument\footnote{ Similar to one used in \cite{DRS}, in a micro-localized setting.}, which ties together the preceding conditional estimates above (i.e. conditional on controlling the energy restricted to a compact set) to derive an unconditional global in time Morawetz estimate.
    \item A new energy type estimate based on the construction of a causal vectorfield, in the range $|a|/m\leq 0.9$, which is Killing in the trapping set.
        \end{enumerate}

        \subsection{Statement of the Main Theorem}
        We rely on the notations and definitions made in the preliminary section \ref{section:preliminaries} to state
        our main result. Most importantly we work with Boyer-Lindquist coordinates $(t,r,\th, \phi)$, see \eqref{eq:coordsBL}, and the principal null pair $(e_3, e_4, e_1, e_2) $ given by \eqref{eq:null-pair-in},
        \eqref{eq:canonicalHorizBasisKerr} and the two Killing vectorfields $T=\pr_t, Z=\pr_\phi$. In addition the Kerr metric has a
        second order symmetry connected to the Carter tensor defined by Lemma \ref{lemma:inversemetricexpressioninKerr}. Thus the conformal wave operator
         $|q|^2 \square$ commutes not only with the first order operators $\T$ and $\Z$ but also the second order operator $\OO$
         given by Proposition \ref{prop:decompose-square}.

         Our estimates for solutions of \eqref{eq:scalarwave} are stated in
        the domain of outer communication $\DD=\{ r\ge r_+\}$
        where
          $r= r_+= m+\sqrt{m^2-a^2}$ corresponds to the event horizon $\HH$.
         More precisely we restrict to the causal domains $\DD(\tau_1, \tau_2)=\DD\cap\{ \tau_1\le \tau\le \tau_2\}$, see Definition \ref{def:DDtau}, where $\tau$ is a strict time function in $\DD$.
        We denote by $\DD_{trap}$ the trapping region $
\DD_{trap}=\big\{ \rhat_1\le r\le \rhat_2\big\} $, with $\rhat_1, \rhat_2$ characterized in Lemma \ref{Lemma:rangeofrfortrappednullgeodesics}. The precise values for $\rhat_1, \rhat_2$, given in Lemma \ref{Lemma:rangeofrfortrappednullgeodesics} appear in \cite{Teo}, see also \cite{BPT} as well as\footnote{The latter two papers show that all trapped null geodesics are orbital. } \cite{Dyatlov} and \cite{CeJa}. For the convenience of the reader, we include a straightforward proof in section \ref{section:geodesics-Kerr} based on the characteristic polynomial
\bea
\lab{eq:charact-polynomial}
\Th(r):=\TT_{-a}^2- 4 a^2 r^2 \De, \qquad \De:=r^2-2mr+a^2,\quad\TT_{-a}:=r\big(r^2-  3mr+2a^2\big).
\eea
We show in fact, see Proposition \ref{Prop:r-trapped.region}, that $\Th\le 0$ on the trapping set and $\Th>0$ on its complement.
As it turns out, this plays an important role in our work.

We rely also on the renormalized
        vectorfield $\That=\T+\frac{a}{r^2+a^2}\Z$, which is causal in the entire domain of outer communication (DOC), $r\ge r_+$,
         and $\Rhat=\frac{\De}{r^2+a^2}\pr_r$, regular in DOC.

As mentioned above, we restrict\footnote{Our methods can be extended to higher spin equations. Full sub-extremal results for such wave equations were derived in \cite{S-Rita1,S-Rita2} based on mode decompositions. See also \cite{Millet} for a simpler proof based on microlocal analysis and spectral theory.}
 our analysis to solutions of the inhomogeneous scalar wave equation \bea\lab{eq:scalarwave}
        \square_{a,m}\psi=N
        \eea
           in the exterior region $\DD$ of $\KK(a,m)$.
            To state our main results below we refer to
        section \ref{section:BEFquantities} where the bulk, energy and flux quantities $B, E, F$, the combined quantities
         $\BEF$, $\EF$, $\BF$, as well as the bulk quantity $\NN$ for the inhomogeneous term $N$, are given.
           Higher derivative versions are denoted by $B^s, E^s, F^s$, etc. Each of these quantities is accompanied by a degenerate version denoted $\Bdeg^s, \Edeg^s$ etc. It is common that one first proves a degenerate version of a particular integral estimate, in which certain terms vanish at the horizon, and then passes to the non-degenerate form by an additional red shift inequality\footnote{Such estimates go back to \cite{DR1}. Since degeneracies in our estimates occur only in the $e_3$ direction at the horizon, we use a partial red shift inequality, which appears first in \cite{MMTT}.}. We also make use of the $r^p$ weighted version of all these quantities\footnote{ We note that $r^p$-weighted estimates, which go back to \cite{Mor}, were first used in the Schwarzschild context in \cite{DR2}. We refer the reader to chapter 10.3 in \cite{GKS} for a detailed treatment relevant to our work.}.

                  Note the precise characterization of the trapping set $[\rhat_1,\rhat_2]$ which appears in the definition of the Morawetz bulk quantity $B$,
          \[
B[\psi](\tau_1, \tau_2)=\int_{\DD(\tau_1, \tau_2) }  r^{-2} m | \Rhat  \psi|^2  +mr^{-4} |\psi|^2+ \int_{\DD_{\ntrap}(\tau_1, \tau_2)} \left(   m r^{-2} |(e_3, e_4) \psi|^2        + r^{-1}  |\nab  \psi|^2\right)
\]
where $\nab$ refers to the horizontal derivative in the span of $e_1, e_2$.

        \begin{theorem}[Main Theorem]
            \lab{Thm:MainMorawetz}
            Consider a Kerr spacetime $\KK(a,m)$ with $|a|/m\leq 0.75$ and general solutions\footnote{For which $E^8_p[\psi](\tau_1) + \NN_p^8[N] (\tau_1, \infty) <\infty$.} to the scalar wave equation $ \square_{a,m} \psi=N$. Then, for any integer $s\ge 8$, $1+\de <p< 2-\de$, and any $\tau_1<\tau_2$, we have
            \beq\lab{eq:Thm-MainMorawetz}
            \BEF^s_p [\psi](\tau_1, \tau_2) \les  E_p^s[\psi](\tau_1) + \NN_p^s[\psi, N](\tau_1, \tau_2).
            \eeq

                \end{theorem}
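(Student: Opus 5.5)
The proof will assemble the ingredients listed in the introduction, in the following order. We work with a solution $\psi$ of $\square_{a,m}\psi=N$ on $\DD(\tau_1,\tau_2)$ with $|a|/m\le 0.75$.

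\textbf{Step 1: conditional Morawetz estimate.} First we derive a degenerate, conditional Morawetz estimate for the commuted quantities $\psi$, $T\psi$, $Z\psi$, $\OO\psi$, using the Andersson--Blue invariant-operator method. Since $|q|^2\square$ commutes with $T$, $Z$ and $\OO$, we consider a second-order multiplier combining these operators with a function $\mathcal{F}(r)$. The function $\mathcal{F}$ is chosen so that its principal bulk coefficient is a positive multiple of the characteristic polynomial $\Th(r)$. Its sign structure, $\Th\le 0$ exactly on $[\rhat_1,\rhat_2]$ by Proposition \ref{Prop:r-trapped.region}, then yields a bulk term that is positive outside $\DD_{trap}$ and degenerate but non-negative inside it.

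The zeroth-order bulk terms ($mr^{-4}|\psi|^2$) are handled by a Stogin-type modification of the lower-order part of the current. We use the principal trapping term quantitatively rather than only its non-negativity, and we absorb the remaining errors with Hardy inequalities in $r$ adapted to the horizon. The output of this step is
\[
\Bdeg^s[\psi](\tau_1,\tau_2)\les \EF^s[\psi](\tau_1)+\EF^s[\psi](\tau_2)+\NN^s[\psi,N]+\text{(energy on a compact $r$-region)}.
\]

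\textbf{Step 2: boundary terms.} Next we control the boundary terms produced by Step 1. We invoke Theorem \ref{thm:flux-ind-energy}, whose proof is deferred to \cite{H-K2}, which bounds the energy of $T\psi$ independently of the horizon flux, up to a compact-region energy. We then apply Theorem \ref{Thm:Flux-estimate} to convert this into a bound on the horizon flux of $T\psi$. Proposition \ref{Prop.1.6-Intro} passes from these bounds on $T\psi$ to bounds on $\psi$ itself; it uses the equation to recover $\Rhat\psi$, $\nab\psi$ from $T\psi$, together with elliptic estimates for $\OO$.

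\textbf{Step 3: removing the conditional term.} We then close a continuity argument in $\tau_2$ to eliminate the compact-region energy term. Let $A(\tau_2)$ be the best constant for which the estimate holds on $[\tau_1,\tau_2]$. Local energy estimates show $A$ is finite for short time intervals. The compact-region energy is bounded by $\ep\,\Bdeg$ plus lower-order terms controlled through Hardy inequalities and the localized bulk, which lets us improve the bound on $A$ uniformly in $\tau_2$.

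\textbf{Step 4: energy, redshift and $r^p$ weights.} We obtain boundedness of the energy from the causal vectorfield of item 7 (valid for $|a|/m\le0.9$). This field is Killing on the trapping set, so its deformation tensor contributes no bulk term there, and it is controlled by the Morawetz bulk elsewhere. We remove the degeneracy at $\HH$ with the partial red-shift estimate in the $e_3$ direction. Finally we add the $r^p$-weighted estimates near null infinity, for $1+\de<p<2-\de$, following \cite{GKS} chapter 10.3. Higher $s$ follows by commuting with $T$, $Z$, $\OO$ and the red-shift vectorfield, together with elliptic estimates; this requires $s\ge8$ because of the loss of derivatives in the $\OO$-commutations and in Theorem \ref{thm:flux-ind-energy}.

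\textbf{Main obstacle.} The hardest step is expected to be the lower-order control in Step 1 for non-axisymmetric solutions at large $a$. Superradiant and low frequencies interact with the degenerate trapping region, so positivity of the zeroth-order bulk requires careful choices of $\mathcal{F}$ and of the Stogin-type lower-order modifications. This is presumably where the restriction $|a|/m\le0.75$ enters. We would first try to verify the positivity numerically or by polynomial root analysis in $r$, using $\Th$.
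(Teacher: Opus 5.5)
Your Steps 1, 2 and 4 follow the paper's architecture (Theorems \ref{Thm:Morawetz1}, \ref{thm:flux-ind-energy}, \ref{Thm:Flux-estimate}, Proposition \ref{Prop.1.6-Intro}, Theorem \ref{Thm:Morawetz4}). Step 3 does not work, however, and without it the conditional estimates do not yield the theorem. After Steps 1--2 (this is Theorem \ref{Thm:Morawetz2}) the right-hand side still contains the local energy
\[
\int_{\Si(\tau_2)\cap\{r\le r_0\}}|\dk^s\Z\psi|^2+\int_{\tau_2-1}^{\tau_2}\int_{\Si(\tau')\cap\{r\le r_0\}}|\dk^{\le s}\Z\psi|^2\,d\tau'
\]
at the \emph{final} time. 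This term has no small coefficient and the same differential order as $E^s[\psi](\tau_2)$ on the left, so there is no bound of the form ``$\ep\,\Bdeg$ plus lower order''. Hardy inequalities do not produce smallness, and the bulk over $[\tau_2-1,\tau_2]$ is not small compared with the left-hand side. A bootstrap on the best constant therefore only gives $A(\tau_2)\le C+C\,A(\tau_2)$ with $C$ not small. More fundamentally, every conditional estimate available at that stage is consistent with a hypothetical exponentially growing mode $e^{\omega\tau}\phi$, $\omega>0$, since both sides grow like $e^{2\omega\tau_2}$. This includes Theorem \ref{thm:flux-ind-energy}, which itself contains $\int_{\tau-1}^{\tau}\int_{\Si(\tau')\cap\{r\le r_0\}}|\Z\psi|^2\,d\tau'$. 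So no argument at fixed $a$ on finite time intervals can remove these terms; some global-in-time input is indispensable.

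The paper supplies it by a continuity argument in the rotation parameter, not in time. By orthogonality one reduces to $Z$-equivariant solutions $\Z\psi=-i\ell\psi$. For these, Proposition \ref{Prop:Main-quantitative-estim} and its corollary give $\BEF^s_p[\psi](\tau_0,\tau)\les \ell^{2s-6}\LE[\dk^4\psi](\tau)+E^s_p[\psi](\tau_0)+\NN^s_p[\psi,N](\tau_0,\tau)$, with the local term at a fixed derivative level. If $\int_{\tau_0}^\infty\int_{\Si(\tau)\cap\{r\le r_0\}}|\dk^4\psi|^2\,d\tau<\infty$, one picks $\tau_i\to\infty$ with $\LE[\dk^4\psi](\tau_i)\to0$ and gets the global bound of Theorem \ref{Thm:Morawetz3} with constants independent of $\ell$. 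The set of $|a|/m\le 0.75$ for which this qualitative condition holds is nonempty (slow rotation, \cite{GKS}), closed, and open. For openness one writes $\square_{a_0,m}\psi=N+(\square_{a_0,m}-\square_{a,m})\psi$ and uses $|(\square_{a_0,m}-\square_{a,m})\psi|\les|a-a_0|\,r^{-2}|\dk^{\le 2}\psi|$ (Lemma \ref{Lemma;RHS}). The two-derivative loss is then absorbed, for fixed $\ell$, by the derivative-gaining estimate above and the smallness of $|a-a_0|\ell^{2s-2}$. This bookkeeping is what produces $s\ge 8$ and the hypothesis $E^8_p+\NN^8_p<\infty$. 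The finite-interval statement follows by smoothly extending $N$ beyond $\tau_2$ and using Theorem \ref{Thm:Morawetz4} for $E^s_p[\psi](\tau_2)$.

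Two smaller points. First, a principal coefficient that is a positive multiple of $\Th$ would be non-positive on $[\rhat_1,\rhat_2]$, contradicting the non-negativity you claim there. In the paper $\Th$ enters only through the determinant of the first-order form of Theorem \ref{Thm:Moraw3}, used off the trapping set. Non-negativity on the trapping set comes instead from the second-order current of Theorem \ref{Thm:Moraw4}, whose principal part is $\frac12 h\big(|mT\Psi_1|^2+|\Zhat\Psi_1|^2+\dots\big)$ modulo divergences. Second, Proposition \ref{Prop.1.6-Intro} is not an elliptic recovery of $\psi$ from $T\psi$; the stationary operator fails to be elliptic in the ergoregion. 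It is an energy estimate for $\Tring=T+\chi(r)\frac{a}{r_+^2+a^2}Z$, whose deformation error is supported in $r_0/2\le r\le r_0$. That error is absorbed using the $r^p$-weighted bulk and the smallness of $r_0^{1-p}$.
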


                The proof of the Main Theorem \ref{Thm:MainMorawetz} is based on four steps,
                which we state as theorems of independent interest, together with a standard smooth extension argument for the inhomogeneous term $N$.

                  Theorem \ref{Thm:Morawetz1}, which establishes a Morawetz type estimate conditional on the control of the degenerate energy
                 flux through the horizon and a future spacelike hypersurface, is the core of this paper.

                \begin{theorem}[Morawetz]
                \lab{Thm:Morawetz1}
                 Under the same assumptions, with $s\ge 2$, we have
                    \beq
     \lab{eq:Thm-Morawetz1}
\BEF_p^s[\psi](\tau_1, \tau_2)\les \EFdeg[(e_3, e_4,r\nab)^{\leq s}\psi](\tau_1, \tau_2)+ E_p^s[\psi](\tau_1)+ \NN_p^s[N](\tau_1, \tau_2).
\eeq
\end{theorem}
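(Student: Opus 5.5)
The strategy is to apply the Andersson--Blue invariant-operator method with a vectorfield multiplier adapted to the full trapping range $[\rhat_1,\rhat_2]$. We first prove the estimate for $s=0$ at the level of $\psi$ and its commuted versions $\T\psi$, $\Z\psi$ and $\OO\psi$. We then pass to general $s$ by commuting with $\T$, $\Z$, $\OO$ (which commute with $|q|^2\square$) and by using elliptic estimates together with the red shift. Concretely, we work with the tensorial quantity $\Psi=(\psi_{\aund\bund})$ built from second order operators $\SS_{\aund}\in\{\T^2,\T\Z,\Z^2,\OO\}$. We apply the vectorfield method with
\[
X=\FF^{\aund\bund}(r)\,\pr_r ,\qquad \FF^{\aund\bund}=z\,\mathcal{F}^{\aund\bund}\cdot\frac{\De}{(r^2+a^2)^2}\,\cdot\,\text{(sign-changing factor)} .
\]
The coefficients are chosen so that the principal bulk term reduces to $-\pr_r\big(\tfrac{\De}{(r^2+a^2)^2}\big)$ contracted against $\OO+$ (a combination of $\T^2,\T\Z,\Z^2$). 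Up to positive factors this combination is the polynomial $\TT_{-a}\T+ 2ar\ldots$. By Proposition \ref{Prop:r-trapped.region}, the sign of $\Th$ then says precisely that the resulting quadratic form in $(\T\psi,\Z\psi,\OO\psi)$ is positive definite on $\DD_{\ntrap}$ and degenerates only on $[\rhat_1,\rhat_2]$. This produces the trapping-degenerate part of $B$ (the $|(e_3,e_4)\psi|^2$ and $r^{-1}|\nab\psi|^2$ terms off trapping), together with the nondegenerate $|\Rhat\psi|^2$ term, which comes from a positive $\pr_r$-term.

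The main obstacle is the zeroth order bulk $mr^{-4}|\psi|^2$ and the sign of the lower order terms generated by the choice of $w$ (the Lagrangian correction). Away from $a=0$ these have no definite sign near the horizon and near $\rhat_{1,2}$. Here I would follow Stogin: choose the lower-order function $w$ and an additional modification $\frac{1}{2}\square$-type term so that the zeroth order coefficient is $-\frac12\square w$ plus a correction, and arrange the correction to be nonnegative. Where positivity fails, I would absorb the deficit in two ways. First, the principal trapping term is not merely nonnegative but coercive, with weight $\Th$, and this dominates $|\nab\psi|^2$-like quantities, which via Poincaré on spheres control $|\psi|^2$ after subtracting the mean. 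Second, a one-dimensional Hardy inequality in $r$, $\int |\psi|^2 \lesssim \int |\Rhat\psi|^2$ plus boundary terms, handles the spherical mean. The restriction $|a|/m\le 0.75$ should enter exactly in making these numerical inequalities close, and I expect to check them by explicit bounds on $\rhat_1,\rhat_2$.

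The boundary terms of the divergence identity on $\DD(\tau_1,\tau_2)$ are bounded by energies on $\Si_{\tau_1},\Si_{\tau_2}$ and by fluxes on $\HH$ and the future hypersurface. Those at $\tau_1$ give $E^s_p[\psi](\tau_1)$. Those at $\tau_2$ and on the horizon and future boundary are controlled by $\EFdeg[(e_3,e_4,r\nab)^{\le s}\psi]$, since $X$ is tangent to $\HH$ up to degenerate factors. Terms involving $N$ go into $\NN^s_p$ by Cauchy--Schwarz. Finally, three further steps complete the argument. Near infinity, we add the standard $r^p$-weighted estimates ($1+\de<p<2-\de$) to upgrade to the weighted norms. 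Near $\HH$, a partial red shift multiplier in $e_3$ removes the degeneracy, converting the degenerate bulk and energy to nondegenerate ones at the cost of terms absorbable in $B$. Lastly, commuting with $\T,\Z,\OO$, with $e_3$ near the horizon, and with $r\nab$ and $e_4$ via elliptic estimates yields the higher $s$ statement.
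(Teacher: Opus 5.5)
Your toolkit (Andersson--Blue operators, Stogin, Hardy, red shift, $r^p$ weights, commutation) is the right one. There is, however, a genuine gap in how you treat the lower-order terms, and it sits exactly where the paper's main new idea is.

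\textbf{Lower-order terms on the trapping region.} In the $\SS$-valued current the zeroth-order part is $\VVt^{\aund}\psia\c\psi_{\LL}$, with $\psi_\LL=m^2TT\psi+\Zhat\Zhat\psi$. It contains the mixed coefficient $\VVt^2$ multiplying $T\Zhat\psi\c\psi_\LL$. This is a cross term between different $\SS$-components, it has no sign, and it is present on $[\rhat_1,\rhat_2]$. Neither of your two devices reaches it there.
\begin{enumerate}
\item Your first device needs coercivity, but on the trapping region the principal term is only nonnegative. Modulo divergences it equals $\frac12h\big(|mT\Psi_1|^2+|\Zhat\Psi_1|^2\big)$ plus a similar $\Psi_2$ term. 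So it controls derivatives of one specific combination $\Psi_1$ of the $\psia$, not $\nab\psia$, and any weight built from $\Th$ is $\le0$ there. There is therefore nothing to feed into a Poincar\'e inequality.
\item A one-dimensional Hardy inequality in $r$ only redistributes diagonal $|\psia|^2$ positivity along $r$. It handles your spherical mean, but not this cross term.
\end{enumerate}

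The missing idea is to use the algebraic structure of the trapping quantity rather than its sign. Since $T\Zhat\psi$ enters $\Psi_1$ with the nonvanishing coefficient $\frac{4a}{r^3}\big(1-\de_0m^2\frac{\TT_{-a}}{2r^5}\big)$, one proceeds as follows.
\begin{enumerate}
\item Write $\VVt^2T\Zhat\psi=c(r)\Psi_1$ plus multiples of $TT\psi,\Zhat\Zhat\psi,\OO\psi$.
\item Integrate $\Psi_1\c\psi_\LL$ by parts into $mT\Psi_1\c mT\psi+\Zhat\Psi_1\c\Zhat\psi$.
\item Absorb this into $P$ by Cauchy--Schwarz, using Poincar\'e-type bounds valid after removing the axisymmetric part.
\item Only then close the remaining diagonal coefficients with Hardy one-forms $M^{\aund\bund}=v^{(\aund}\LL^{\bund)}\pr_r$, whose $v^\aund$ solve Riccati inequalities (Proposition \ref{Prop:refinedP} and Section \ref{sec:proofofSSMor}).
\end{enumerate}
It is the size of $\VVt^2$ in those Riccati inequalities, not explicit bounds on $\rhat_{1,2}$, that forces $|a|/m\le0.75$.

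\textbf{The principal term merges two mechanisms.} The Andersson--Blue principal term needs no sign information at all, and it gives no off-trapping control of the individual $\psia$. In the paper, the $\DD_{\ntrap}$ part of $B[\psia]$ comes from a separate scalar current with Lagrangian correction $w^\flat$ (Theorem \ref{Thm:Moraw3}). That current is applied to each $\psia$ and is conditional on $\Bdegg[\psia]$, which the $\SS$-current supplies. There, completing the square at the equator produces a determinant proportional to $\Th$. This is done with $z=\De/r^4$, for which $\pr_rz=-2\TT_{-a}/r^6$ and $\Zhat=re_2$ at $\th=\pi/2$. With your weight $\De/(r^2+a^2)^2$ one gets instead $\pr_r\big(\De/(r^2+a^2)^2\big)=-2\TT/(r^2+a^2)^3$, not a multiple of $\TT_{-a}$. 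So the link with $\Th$ you assert is not what your weight produces; the paper's choice of $z$ is made precisely to make it explicit.

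\textbf{Further omissions.} The radial term $\AAt^{\aund\bund}\pr_r\psia\c\pr_r\psib$ is matrix-valued. It becomes nonnegative only after the integration by parts of Lemma \ref{lemma:IntegrationbypartsI} and absorption of its $\AAt^2$ component. The scalar Stogin-type estimate (Theorem \ref{Thm:Moraw2}) is itself conditional on $\int a^2m^{-1}r^{-4}|\Zhat\psi|^2$. It also uses a logarithmically singular multiplier that must be tempered at $r_+$. This is why the paper closes the argument at $s=2$, by combining Theorems \ref{Thm:Moraw2}--\ref{Thm:Moraw4} through Lemma \ref{Lemma:calculuslemmaintro}, rather than starting from an $s=0$ estimate for $\psi$.
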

In a typical situation, when $\T$ is both Killing and timelike in $\DD$, the $\EF$ term on the right hand side can be easily controlled by $\T$-energy estimates. This, of course, is not the case of Kerr due to the presence of the ergoregion. The case $|a|/m\ll 1 $ can however be treated, as in \cite{GKS}, by replacing $\T$ with $\That=\T+\frac{a}{r^2+a^2} \Z$, which is timelike in $\DD$ and, though not Killing, has a small deformation tensor proportional to $|a|/m$.

The case of large $|a|/m$ requires a completely different approach. To start with, one can easily check, using the coerciveness of the standard $\T$-energy inequality, that the $\Edeg$ part of the $\EFdeg$ norm of the right hand side of \eqref{eq:Thm-Morawetz1} can be replaced by a term of the form $\int_{\Si(\tau_2)\cap\{r\le r_0\}}|\dk^s\Z\psi|^2$ supported on a compact set and $\Fdeg$. As one expects that $\int_{\Si(\tau_2)\cap\{r\le r_0\}}|\dk^s\Z\psi|^2$ decays to zero at $\tau_2\to \infty$ one expects to be able to deal with it by a continuity argument.

  It thus remains to control the flux $\Fdeg[\psi]$ through the horizon.
  In \cite{DRS} the authors rely on a quantitative version of Whiting's mode stability result, established in \cite{W}, to control the flux in the restricted case of bounded frequencies. To treat large superradiant frequencies\footnote{For the non-superradiant frequencies, standard $T$-energy estimate gives control of the flux.} they make use of the separation between superradiant and trapping frequencies, mentioned above.

In our work we adopt the opposite point of view, that is we first derive a bound for the global energy $E[\psi]$ (independent of the flux at the event horizon)
 and use it to derive the flux estimate, based on a new application of the AB--invariant operators method.
 To derive the global energy bound we rely in an essential way on our new physical space adaptation of Whiting's integral transformation, mentioned above. The idea is to transform solutions $\psi$ of the wave equation $\square_\g \psi=N$, with $\g$ the Kerr metric, into $\square_\gt \psit=\Nt$ for a new, asymptotically flat, Lorentz metric $\gt$, having the same symmetries as $\g$, but which is such that the Killing vectorfield $\T$ is timelike
 in the entire domain of outer communication. This leads to the following.

\begin{theorem}[Flux-independent energy bound]
\lab{thm:flux-ind-energy}
The following estimate holds true for solutions of the wave equation $\square_{a,m}\psi =N$
   \beq
   \lab{eq:flux-ind-energy}
     \Edeg[\psi](\tau)\les \int_{\tau-1}^{\tau} \int_{\Si(s)\cap\{r\leq r_0\}}|\Z\psi|^2\, ds+ \Edeg[\psi](\tau_0)+\Big|\int_{\DD(\tau_0, \tau)} \T\psi\c N\Big|.
          \eeq
          \end{theorem}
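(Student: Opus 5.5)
The plan is to transfer the problem, via a physical-space version of Whiting's transformation, to a modified metric $\gt$ for which $\T$ is a globally timelike Killing field. On $\gt$ a standard $\T$-energy identity then gives a coercive, flux-independent bound.

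\textbf{Step 1: the transformation.} I will construct an integral-type transformation $\psi\mapsto\psit$, built in physical space from $\T$, $\Z$ and the radial structure, in the spirit of \cite{W}. It should map solutions of $\square_{a,m}\psi=N$ to solutions of $\square_{\gt}\psit=\Nt$. Here $\gt$ is asymptotically flat, shares the symmetries $\T,\Z$ of $\g$, and has $\T$ timelike on all of $r\ge r_+$.

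Concretely, I would aim for a transformation acting on the radial variable by a weight $e^{i\om\,h(r)}$-type factor. In physical space this amounts to a conjugation by an operator built from $\T$ and $\Z$ together with a change of radial coordinate. The goal is to replace the Kerr potential $\De$-structure by one without ergoregion.

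\textbf{Step 2: energy identity on $\gt$.} Apply the energy identity for $\psit$ with the Killing field $\T$ of $\gt$ on $\DD(\tau_0,\tau)$. Since $\T$ is timelike for $\gt$, the boundary energies on $\Si(\tau)$ and $\Si(\tau_0)$ are coercive. The horizon flux has a favorable sign, or vanishes after the transformation, so it can be dropped. This yields
\[
\widetilde{E}[\psit](\tau)\les \widetilde{E}[\psit](\tau_0)+\Big|\int \T\psit\c\Nt\Big|.
\]

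\textbf{Step 3: returning to $\psi$.} I will compare $\widetilde{E}[\psit]$ with $\Edeg[\psi]$. Away from a compact region $r\le r_0$, $\gt$ and $\g$ agree up to lower order, and $\psit-\psi$ is controlled. In the compact region, the difference between the two energy densities is governed by the $\Z$-derivative terms responsible for the ergoregion, namely $|\Z\psi|^2$. The nonlocal-in-time character of the transformation is why I expect this error to appear averaged over $[\tau-1,\tau]$.

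The inhomogeneous term $\int\T\psit\c\Nt$ must be converted back into $\int_{\DD(\tau_0,\tau)}\T\psi\c N$ plus errors. Those errors are again absorbed into the compact $\Z\psi$ term or into the initial energy.

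\textbf{Main obstacle.} The hardest step is Step 1: finding a physical-space transformation, rather than a mode-by-mode one, that produces a genuinely Lorentzian $\gt$ with $\T$ timelike. It must also preserve enough locality that the comparison in Step 3 costs only the compactly supported $\int_{\tau-1}^\tau\int_{\Si(s)\cap\{r\le r_0\}}|\Z\psi|^2$. As indicated, this is deferred to \cite{H-K2}. I would first try a transformation with a cutoff in $r$, so that it is the identity for $r\ge r_0$, and check that the commutator errors are purely $\Z$-derivatives.
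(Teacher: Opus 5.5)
The paper does not prove this theorem: the proof is postponed to \cite{H-K2}. The only indication given is the outline you reproduce, namely map $\psi$ to a solution of $\square_{\gt}\psit=\Nt$ on an asymptotically flat $\gt$ with the same symmetries and $\T$ timelike on the exterior, then use the $\T$-energy. Your Steps 1--2 match that outline, but the substantive content is left unproved, and the one concrete mechanism you suggest for Step 1 cannot work.

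An $e^{i\om h(r)}$-type weight is, in physical space, the map $\psit=\exp(h(r)\T+g(r)\Z)\psi=\psi\circ\Phi$ with $\Phi(t,r,\th,\phi)=(t+h(r),r,\th,\phi+g(r))$. Adding a radial reparametrization, or multiplying by a nonvanishing weight, changes nothing at the principal level. Since $\Phi$ commutes with the flow of $\T$, we have $(\Phi^*\g)(\T,\T)=\g(\T,\T)\circ\Phi=-\big((\De-a^2\sin^2\th)/|q|^2\big)\circ\Phi$. So the ergoregion is merely relabelled and $\T$ never becomes timelike. Equivalently, mode by mode, the superradiant sign $\om(\om-\ell\om_+)<0$ of the horizon term $\frac{r^2+a^2}{|q|^2}\T\psi\c\Tplus\psi$ is unchanged by any nonvanishing radial factor.

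Whiting's transformation \cite{W} is instead an integral operator in $r$ whose kernel depends on $(\om,\ell)$. The exponentials sit inside the kernel and are integrated against the radial function, so the transform is genuinely nonlocal in $r$, and in $t$ once written in physical space. That nonlocality, which alters the behaviour of the transformed solution at $r=r_+$, is the missing idea.

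Once the transformation is genuinely nonlocal, Steps 2--3 no longer follow from what you wrote. The values of $\psit$ on $\{r\ge r_0\}$ depend on $\psi$ at all radii. So ``identity for $r\ge r_0$ after a cutoff, with commutator errors that are pure $\Z$-derivatives'' is not available: cutting off an integral operator in $r$ produces nonlocal errors, and a transformation that is local near the ergoregion falls back into the class above. What actually has to be proved is:
(i) $\widetilde{E}[\psit](\tau_0)\les\Edeg[\psi](\tau_0)$;
(ii) a quantitative inversion of the type $\Edeg[\psi](\tau)\les\widetilde{E}[\psit](\tau)+\dots$, with only the compact, time-averaged error $\int_{\tau-1}^{\tau}\int_{\Si(s)\cap\{r\le r_0\}}|\Z\psi|^2\,ds$;
(iii) an identity turning $\int\T\psit\c\Nt$ into $\int_{\DD(\tau_0,\tau)}\T\psi\c N$ up to such errors;
(iv) an actual argument that the boundary term at $r=r_+$ is nonnegative or vanishes.

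For (iv), note that $\T$ is tangent to $\{r=r_+\}$. If $\gt$ is regular there, $\T$ can be $\gt$-causal on it only if that hypersurface is $\gt$-null with generator $\T$, or is $\gt$-timelike, in which case no flux sign is automatic. This is exactly where flux-independence is won, and your ``favorable sign, or vanishes'' simply assumes it.

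Your expectation that the error vanishes when $\Z\psi=0$ is consistent with the axisymmetric case. There the horizon term $\frac{r^2+a^2}{|q|^2}|\T\psi|^2$ is nonnegative and the ordinary $\T$-energy identity suffices. But this is a heuristic, not a substitute for (i)--(iv).
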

We postpone the proof of Theorem \ref{thm:flux-ind-energy} to \cite{H-K2} where we also give a full account of our transformation technique.

\begin{theorem}[Conditional Flux estimate]
\lab{Thm:Flux-estimate}
The following estimate holds true for solutions of $\square_{a,m} \psi =N$, and
for $r_0$ sufficiently large and $\ep_0$ sufficiently small
\beq
\lab{eq:Flux-estimate}
\bsplit
 \Fdeg[(T,Z)\T\psi](\tau_0, \tau) & \les
\int_{\tau-1}^{\tau} \int_{\Si(s)\cap\{r\leq r_0\}}|\dk^{\le 2}\Z\psi|^2\, ds+  \int_{\Si(\tau)\cap\{r\leq r_0\}}|\dk^2\Z\psi|^2\\
 & +\ep_0 \Edeg^2[\psi](\tau)\!+\!\Edeg^2[\psi](\tau_0)+\Big|\int_{\DD(\tau_0, \tau)} \T\dk ^{\le 2 }\psi\c  \dk ^{\le 2 }N\Big|.
 \end{split}
\eeq
\end{theorem}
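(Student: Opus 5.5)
The plan is to control the horizon flux of $\T\psi$, and of its $(\T,\Z)$ derivatives, by an energy identity for a suitable second-order combination of $\T$, $\Z$ and the Carter operator $\OO$, in the spirit of the AB invariant-operators method. On the horizon $\HH$ the generator is $\That_{\HH}=\T+\omega_{\HH}\Z$ with $\omega_{\HH}=\frac{a}{r_+^2+a^2}$. The degenerate flux of a solution $\Psi$ through $\HH$ is $\int_\HH |\That_{\HH}\Psi|^2$ up to lower order terms, and its sign is only bad on the superradiant part, where $\T\Psi\cdot\That_{\HH}\Psi<0$.

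\textbf{Step 1: $\T$-energy identity for $\Psi=\T\dk^{\le 1}\psi$.} First apply the standard $\T$-energy identity to $\Psi$, where $\dk$ runs through $(\T,\Z)$. Since $\T$ is Killing, there is no bulk term, and the identity reads
\[
\EE_\T[\Psi](\tau)+\int_{\HH(\tau_0,\tau)}\T\Psi\cdot\That_{\HH}\Psi=\EE_\T[\Psi](\tau_0)+\int_{\DD(\tau_0,\tau)}\T\Psi\cdot \dk^{\le2}N .
\]
Next insert Theorem \ref{thm:flux-ind-energy}, applied to $\dk^{\le 1}\psi$ and $\dk^{\le 2}\psi$. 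This controls $\Edeg$, hence the non-coercive $\EE_\T$ after adding compactly supported $|\Z\psi|^2$ terms. From it we obtain
\[
\Big|\int_{\HH}\T\Psi\cdot\That_{\HH}\Psi\Big|\les \text{RHS of \eqref{eq:Flux-estimate}}.
\]
The difficulty is that this horizon integrand is not coercive: it equals $|\That_{\HH}\Psi|^2-\omega_\HH\Z\Psi\cdot\That_\HH\Psi$.

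\textbf{Step 2: an auxiliary operator identity that is coercive on the horizon.} Using that $\OO$ commutes with $|q|^2\square$, apply the energy identity with multiplier $\T$ to a second-order quantity $\OO\psi$, or to a linear combination $\Psi_{\bf c}=\big(c_1\T^2+c_2\T\Z+c_3\Z^2+c_4\OO\big)\psi$. This produces horizon terms of the form $\T\Psi'\cdot\That_{\HH}\Psi''$. Combining the resulting identity with the one from Step 1, as in AB, we choose the coefficients so that the total horizon quadratic form dominates $|\That_{\HH}\T\psi|^2+|\That_{\HH}\Z\psi|^2$ modulo a term $\ep_0|\dk\Psi|^2$. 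The key input is that on $\HH$ one has $\OO\ge a^2\sin^2\th\,\T^2$-type bounds, with $\Z^2\les \OO$ in the sense of Carter's bound $|\Z\psi|^2\le \OO$-energy. This gives $|\omega_\HH\Z\Psi|\le \omega_\HH a\,|\OO^{1/2}\Psi|$. Since $a\omega_\HH<1$, the cross term can be absorbed.

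\textbf{Step 3: closing.} The bulk terms all vanish because $\T,\Z,\OO$ are symmetries. The boundary energies on $\Si(\tau)$ of second-order quantities are bounded by $\Edeg^2[\psi](\tau)$, again via Theorem \ref{thm:flux-ind-energy}, except near infinity and in the compact region. Near infinity, for $r\ge r_0$ large, they become coercive, and the relevant tail is $\les\ep_0\Edeg^2$. In the compact region they reduce to $\int_{\Si(\tau)\cap\{r\le r_0\}}|\dk^2\Z\psi|^2$. Recombining gives $\Fdeg[(\T,\Z)\T\psi]$.

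\textbf{Main obstacle.} The main difficulty is Step 2: producing a positive horizon form from the invariant operators in the large-$a$ regime. This requires a careful algebraic choice of coefficients, together with a check that the constraint $|a|/m\le0.75$ leaves enough room to absorb the superradiant cross term. I would try first the combination $\T(\T+\omega_\HH\Z)$ paired against $\OO$, and then optimize the coefficients.
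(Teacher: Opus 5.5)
The gap is in Step 2, and no choice of coefficients closes it. Take any combination $\Psi$ of $\T^2,\T\Z,\Z^2,\OO$ applied to $\psi$ and put it into the $\T$-energy identity. Its horizon density is $\frac{r^2+a^2}{|q|^2}\T\Psi\c\Tplus\Psi$. Since $\g(\T,\Tplus)=0$ on $\HH$, no angular or Carter term appears. For a complex $(\om,\ell)$-mode, $\Re(\T\Psi\,\overline{\Tplus\Psi})=\om(\om-\om_+\ell)|\Psi|^2$, which is negative in the superradiant range for every such $\Psi$. So the horizon form contains no positive $\OO$-term into which $\om_+\Z\Psi\c\Tplus\Psi$ could be absorbed. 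The bound $|\Z\psi|^2\les\OO$ you invoke is also false: $O^{\a\b}\D_\a\psi\D_\b\psi=|\pr_\th\psi|^2+|a\sin\th\,\T\psi+\frac{1}{\sin\th}\Z\psi|^2$ involves $\T\psi$. The comparison $a\om_+<1$ plays no role, and the statement holds for all $|a|<m$ with no restriction like $|a|/m\le 0.75$. Step 1 does not help either, because the horizon term it bounds is exactly this indefinite one.

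The missing idea is to polarize with a specific pair so that the indefinite factor gets squared.
\begin{itemize}
\item Use the bilinear $\T$-identity with constant $\FF^{\aund\bund}=\CC^{(\aund}\BB^{\bund)}$, where $\CC=(1,\om_+,0,0)$ is chosen so that the horizon form is itself $S_\CC$. Then $\QQ[\psi_\CC,\psi_\BB](\T,N_\HH)=\frac{r^2+a^2}{|q|^2}S_\CC^{\mu\nu}\D_\mu\psi_\CC\D_\nu\psi_\BB$ with $\psi_\CC=\T\Tplus\psi$.
\item Lemma \ref{lemma:IntegrationbypartsP} (integration by parts in $\T,\Z$, both tangent to $\HH$, using $\SS_\CC\psi_\BB=\SS_\BB\psi_\CC$) turns this into $\frac{r^2+a^2}{|q|^2}S_\BB^{\mu\nu}\D_\mu\psi_\CC\D_\nu\psi_\CC$ plus a horizon divergence.
\item With $\BB=(1,0,1,0)$ this is $\frac{r^2+a^2}{|q|^2}|\Tplus(\T,\Z)\T\psi|^2$, exactly the density of $\Fdeg[(\T,\Z)\T\psi]$. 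On modes it is $\om^2(\om-\om_+\ell)^2(\om^2+\ell^2)|\psi|^2$ up to the weight.
\end{itemize}
You guessed the right first factor $\T(\T+\om_+\Z)$, but pairing it with $\OO$ only gives $|q|^2|\nab\psi_\CC|^2$. That sees $\pr_\th\psi_\CC$ and $(a\sin\th\,\T+\frac{1}{\sin\th}\Z)\psi_\CC$, and controls neither $\T\psi_\CC$ nor $\Z\psi_\CC$. You need $S_\BB^{\mu\nu}$ to dominate $\T^\mu\T^\nu+\Z^\mu\Z^\nu$.

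Your Step 3 also misplaces the source of $\ep_0\Edeg^2[\psi](\tau)$ and $\int_{\Si(\tau)\cap\{r\le r_0\}}|\dk^2\Z\psi|^2$; they do not come from a large-$r$ tail. They come from the sphere terms $\int_{S(\tau)}\frac{r^2+a^2}{|q|^2}\T\Tplus\psi\,\Tplus\Z\Z\psi$ (and their analogues at $\tau_0$) left by the $\T$-integration by parts along $\HH$. To handle them:
\begin{itemize}
\item integrate by parts in $\Z$ on $S(\tau)$;
\item write the result as an integral over $\Si(\tau)$ of an $r$-derivative, against a cutoff equal to $1$ at $r_+$ and supported in $\{r\le r_0\}$;
\item apply Cauchy--Schwarz with weight $\ep_0$.
\end{itemize}
The energies on $\Si(\tau)$ are simply bounded by $\Edeg[(\T,\Z)^{\le 2}\psi](\tau)$, and then by Theorem \ref{thm:flux-ind-energy}.
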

The proof, given in section \ref{sec:Flux-estimate}, makes use of the commutation properties of the wave operator with the second order operators $TT, TZ, ZZ$ to construct an appropriate second order multiplier which leads to a coercive bound for the flux, conditional on bounds for the global energy.

Note that the estimate \eqref{eq:Flux-estimate} does not provide any information in the stationary case, i.e. when $\T \psi=0$. To correct for this we need to make use of the following.

\begin{proposition}
\lab{Prop.1.6-Intro}
The following estimate holds true for solutions of $\square_{a,m} \psi =N$ with $s\geq 2$, $1+\de<p<2-\de$ and
$r_0$ sufficiently large
\beq
\lab{eq:Prop.1.6-Intro}
\bsplit
\EFdeg^s_p[\psi](\tau_0, \tau) &\les   \Edeg^s_{\le r_0}[\psi](\tau_0) + r_0^{1-p} B_p[\psi](\tau_0, \tau) + r_0^{3-p} \Bdeg_p^{s-1}[\T\psi] (\tau_0, \tau)\quad\\
&+ \NN_p^s[\psi,N](\tau_0, \tau).
\end{split}
\eeq
\end{proposition}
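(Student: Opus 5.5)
The plan is to control the degenerate energy and flux of $\psi$ by combining three ingredients: a local elliptic estimate on compact regions, which recovers derivatives of $\psi$ from $\T\psi$; the $r^p$-weighted estimates in the far region $r\ge r_0$; and a standard energy identity with a cut-off that joins the two regions. The underlying observation is the following. In a region $\{r_+\le r\le r_0\}$ we rewrite $|q|^2\square_{a,m}\psi=|q|^2N$ using the decomposition of Proposition \ref{prop:decompose-square}. Then all $\T$-derivatives are moved to the right-hand side, which leaves an operator of the form
\[
\pr_r(\De\pr_r\psi)+\OO\psi + \text{(terms involving $\Z^2\psi$ and $\T\psi$)} .
\]
Away from the ergoregion this operator is elliptic in $(r,\th,\phi)$. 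Near the horizon it degenerates, because of $\De$ and because $\That$ is null on $\HH$. This degeneration matches the degenerate quantities $\Edeg,\Fdeg$: the degeneracy occurs only in the $e_3$ direction, and $\Rhat$ is regular.

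The steps, in order, are as follows.

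\textbf{Step 1: far region.} Apply the $r^p$-weighted multiplier $r^p(e_4+\ldots)\psi$, with a cut-off $\chi_{r\ge r_0}$, to $\psi$ and to its commuted versions $\dk^{\le s}\psi$. This yields $\EF^s_p$ restricted to $r\ge r_0$, in terms of $E^s_p[\psi](\tau_0)$, the inhomogeneous term $\NN^s_p$, and an error supported on $r_0\le r\le 2r_0$. The error carries the weight $r^{p-1}$ times the $r^{-2}$ density of $B$, which gives the contribution $r_0^{1-p}B_p[\psi]$ after rearranging weights. Here $p>1+\de$ is used so that this cut-off error is small.

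\textbf{Step 2: compact region, top order.} On each $\Si(\tau)\cap\{r\le r_0\}$, and on the horizon piece, apply the degenerate elliptic estimate just described:
\[
\int_{\Si(\tau)\cap\{r\le r_0\}}\big|\dk^{\le s}\psi\big|^2_{deg}\les \int \big|\dk^{\le s-1}\T\psi\big|^2+|\dk^{\le s-2}N|^2+ \text{(boundary terms at $r_0$)}.
\]
Commutation with $\T,\Z,\OO$, which are exact symmetries, keeps the errors lower order. The boundary terms at $r_0$ are taken from Step 1. Next, integrate in $\tau$ over unit intervals and use the trace/mean-value inequality to pass from spacetime integrals to a single slice $\Si(\tau)$. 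The spacetime integral of $|\dk^{\le s-1}\T\psi|^2$ on $r\le r_0$ is bounded by $r_0^{3-p}\Bdeg^{s-1}_p[\T\psi]$, since the weights $r^{-2}$ and $r^{-3}$ in $B$ cost powers of $r_0$. The degenerate trapping weights are harmless here, because $\T\psi$ only appears after a loss of one derivative. The value at $\tau_0$ contributes $\Edeg^s_{\le r_0}[\psi](\tau_0)$.

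\textbf{Step 3: horizon flux.} The flux $\Fdeg$ along $\HH$ involves only tangential derivatives $e_4$ and $\nab$, weighted appropriately. Obtain it from a localized divergence identity using $\That$ and the cut-off $\chi_{r\le r_0}$. The resulting bulk terms lie in the compact region and are already controlled by Step 2. The identity is not coercive inside the ergoregion, but any term lacking a sign is a spacetime integral over a compact set, which Step 2 bounds.

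The main obstacle is Step 2 near the horizon and inside the ergoregion. There the operator obtained after removing $\T$ terms is not elliptic, since $\That$ is null on $\HH$ and $g^{tt}$ changes sign. The first thing I would try is to use $\That$ in place of $\T$. The difference $\That-\T=\frac{a}{r^2+a^2}\Z$ is tangential and angular; $\Z$ is controlled by $\OO$ via ellipticity of the Carter operator on spheres, modulo $\T$ terms. One then has to check that the residual operator in $(\Rhat,\OO)$ is degenerate elliptic with exactly the degeneracy encoded in $\Edeg$. If a low-frequency (stationary-type) obstruction appears, its remainder must be absorbed by the zeroth-order term $mr^{-4}|\psi|^2$ in $B_p$, which is part of the $r_0^{1-p}B_p$ term.
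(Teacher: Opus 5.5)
There is a genuine gap, and it is in Step 2. The degenerate elliptic estimate that recovers $\dk^{\le s}\psi$ on $\Si(\tau)\cap\{r\le r_0\}$ from $\T\psi$ fails in the ergoregion, which is nonempty and adjacent to $\HH$ (away from the poles) for every $a\neq 0$.

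\begin{itemize}
\item \emph{Why the estimate fails.} Drop the $\pr_t$-terms in \eqref{eq:RR-OO}. The coefficient of $\pr_\phi^2$ in $|q|^2\square$ becomes $\frac{1}{\sin^2\th}-\frac{a^2}{\De}=\frac{\De-a^2\sin^2\th}{\De\sin^2\th}$. It vanishes on the ergosurface and is negative inside it. The remaining operator in $(r,\th,\phi)$ is therefore hyperbolic there, and elliptic theory gives no control of $\Z\psi$ in terms of $\T\psi$. This is the physical-space form of superradiance. The obstruction is first order with an $O(1)$ constant, so it cannot be absorbed into the zeroth-order part of $r_0^{1-p}B_p$.
\item \emph{Why the $\That$ fix is circular.} After discarding $\That$-derivatives the principal part is indeed elliptic. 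But the right-hand side of the statement contains only $\T\psi$, and $\That\psi-\T\psi=\frac{a}{r^2+a^2}\Z\psi$ is exactly the uncontrolled quantity. Recovering $\Z\psi$ from $\OO\psi$ through $\OO\psi=|q|^2N-\RR\psi$ brings back $\That\That\psi$.
\item \emph{Step 3.} $\That$ is not Killing. $\QQ\c\pi^{(\That)}$ is a multiple of $\frac{\De}{|q|^2}\pr_r\big(\frac{a}{r^2+a^2}\big)\pr_r\psi\,\Z\psi$. This is a bulk term of relative size $|a|/m$ on all of $\{r\le r_0\}$, including the trapping region, where $B$ does not control $\nab\psi$. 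Once Step 2 is unavailable, nothing on the right-hand side absorbs it.
\item \emph{Step 1.} The factor $r_0^{1-p}$ cannot come from the $r^p$ cut-off. That error has density comparable to $r^{p-3}|\dk\psi|^2$ on $\{r_0\le r\le 2r_0\}$, i.e. to $B_p$ itself with an $O(1)$ constant.
\end{itemize}

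The paper uses no elliptic theory here. It applies a single energy identity with $\Tring=\T+\chi(r)\frac{a}{r_+^2+a^2}\Z$, where $\chi=1$ for $r\le r_0/2$ and $\chi=0$ for $r\ge r_0$.
\begin{itemize}
\item Near $\HH$ this is the Hawking field $\Tplus$. It is Killing and produces the horizon flux $|\nab_4\psi|^2$ with a sign, which is exactly $\Fdeg$. For $r\ge r_0$ it is $\T$.
\item Its deformation tensor $\D_\a h\,\Z_\b+\D_\b h\,\Z_\a$, with $h=\chi\frac{a}{r_+^2+a^2}$, is supported in $\{r_0/2\le r\le r_0\}$ and has size $r_0^{-1}$. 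Moreover $\QQ\c\piring$ is proportional to $\chi'\frac{\De}{|q|^2}\pr_r\psi\,\Z\psi$.
\item Cauchy--Schwarz with weight $\la=r_0$ then gives $r_0^{1-p}\Bdeg_p[\psi]+r_0^{3-p}\int_{\{r_0/2\le r\le r_0\}} r^{p-3}|\T\psi|^2$. So the smallness comes from $\chi'$, not from any $r^p$ cut-off.
\end{itemize}

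The ergoregion obstruction you ran into is not removed by this argument either. Where $\Tring$ is spacelike, $\QQ(\Tring,N_\Si)$ is coercive only up to $\int_{\Si(\tau)\cap\{r\le r_0\}}|\Z\psi|^2$. That term appears on the right-hand side of the estimate actually proved, \eqref{eq:Proof-Prop.1.6-Intro}; the printed statement omits it. It is carried into Theorem \ref{Thm:Morawetz2}, for the same reason the Whiting-type bound of Theorem \ref{thm:flux-ind-energy} keeps a local $\Z\psi$ term. Such terms are removed only by the continuity argument for Theorem \ref{Thm:Morawetz3}. Higher $s$ follows by commuting with $\T,\Z,\OO$ and using the norm equivalence lemmas.
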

We rely on Theorems \ref{thm:flux-ind-energy}, \ref{Thm:Flux-estimate} and Proposition \ref{Prop.1.6-Intro} to deduce the following\footnote{This step also requires making use of the commutations of $\square_{a,m}$ with $\T, \Z, \OO$ to upgrade \ref{eq:flux-ind-energy} and \eqref{eq:Flux-estimate}, as well as the norm equivalence lemmas of section \ref{section:normEqivLemmas}. }

\begin{theorem}[Conditional Morawetz-Energy]
\lab{Thm:Morawetz2}
                Under the same assumptions, with $s\ge 4$, for some fixed $r_0> 2m$, sufficiently large
\beq
\lab{eq:Thm-Morawetz2}
\bsplit
\BEF_p^s[\psi](\tau_0,\tau)&\les  \int_{\Si(\tau)\cap\{r\le      r_0\}}|\dk^s\Z\psi|^2+\int_{\tau-1}^{\tau} \int_{\Si(s)\cap\{r\leq r_0\}}|\dk^{\le s}Z\psi|^2 ds\\
&+ E_p^s[\psi](\tau_0) + \NN^s_p[\psi, N](\tau_0, \tau)    \end{split}
\eeq
\end{theorem}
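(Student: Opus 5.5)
Starting from Theorem \ref{Thm:Morawetz1}, it suffices to bound the degenerate energy-flux term $\EFdeg[(e_3,e_4,r\nab)^{\le s}\psi](\tau_0,\tau)$ by the right-hand side of \eqref{eq:Thm-Morawetz2}, up to a small multiple of $\BEF^s_p[\psi](\tau_0,\tau)$ that can then be absorbed. Using the norm equivalence lemmas of section \ref{section:normEqivLemmas} together with the commutation of $|q|^2\square_{a,m}$ with $\T,\Z,\OO$, I would replace $(e_3,e_4,r\nab)^{\le s}$ by the symmetry operators $\T^i\Z^j\OO^k$ (with $i+j+2k\le s$) acting on $\psi$. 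The error terms this produces are lower order, and they are controlled by $\BEF^{s-1}_p$ plus the inhomogeneous norm $\NN^s_p$, the latter picking up the commutator contributions. The unavoidable $r_0$-localized remainders are exactly what generates the compact terms $\int_{\Si\cap\{r\le r_0\}}|\dk^{\le s}\Z\psi|^2$.

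Next I split according to whether a $\T$-derivative is present. For $\Psi=\T\T^{i}\Z^j\OO^k\psi$, which also satisfies a wave equation, I apply Theorem \ref{Thm:Flux-estimate} to control $\Fdeg[\Psi]$, after upgrading it with commutations to order $s-2$. I then apply Theorem \ref{thm:flux-ind-energy} to the same quantities to control $\Edeg$. These two bounds together give
\[
\EFdeg^{s-1}[\T\psi]\les \int_{\tau-1}^{\tau}\int_{\Si(s)\cap\{r\le r_0\}}|\dk^{\le s}\Z\psi|^2ds+\int_{\Si(\tau)\cap\{r\le r_0\}}|\dk^s\Z\psi|^2+\ep_0\Edeg^s[\psi](\tau)+\Edeg^s[\psi](\tau_0)+\NN^s_p.
\]
The $\ep_0$ term is absorbed into the left-hand side. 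The terms $|\int\T\dk^{\le 2}\psi\cdot\dk^{\le2}N|$ are bounded by Cauchy–Schwarz as $\de\,\BEF^s_p+\de^{-1}\NN^s_p$, using the $r$-weights in $\NN_p$.

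The part without $\T$, which includes stationary components, is handled by Proposition \ref{Prop.1.6-Intro}. It bounds $\EFdeg^s_p[\psi]$ by $\Edeg^s_{\le r_0}(\tau_0)$, by $r_0^{1-p}B_p[\psi]$, which is absorbable for $r_0$ large since $p>1$, and by $r_0^{3-p}\Bdeg^{s-1}_p[\T\psi]$. I would control this last term through Theorem \ref{Thm:Morawetz1} applied to $\T\psi$ with $s-1$ derivatives. Its right-hand side is $\EFdeg^{s-1}[\T\psi]$, already bounded above, plus data and $\NN$ terms. Combining everything yields
\[
\BEF^s_p\les \text{RHS of }\eqref{eq:Thm-Morawetz2}+(\de r_0^{3-p}+r_0^{1-p}+\ep_0 r_0^{3-p})\BEF^s_p.
\]
I then choose $r_0$ large first, and afterwards $\ep_0$ and $\de$ small depending on $r_0$, so the last term can be absorbed. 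The requirement $s\ge4$ comes from the two-derivative loss in Theorem \ref{Thm:Flux-estimate} combined with the $s\ge2$ requirement in Theorem \ref{Thm:Morawetz1} applied to $\T\psi$.

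The main obstacle I expect is the order of constants in this absorption. The term $r_0^{3-p}\Bdeg^{s-1}_p[\T\psi]$ carries a large factor, so every $\T\psi$ estimate must avoid leaking a non-small multiple of $\BEF^s_p[\psi]$. Theorem \ref{Thm:Flux-estimate} has to be applied with $\ep_0$ chosen after $r_0$, and it must be checked that its implicit constants do not depend on $r_0$ in a bad way. A further point is to confirm that the commutator errors for $\OO$ near the horizon only involve degenerate quantities, which are controlled by the red-shift and included in $\BEF$ at lower order.
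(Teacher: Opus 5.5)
Your proposal is correct and follows essentially the same chain as the paper's proof: Theorem \ref{Thm:Morawetz1}, then Proposition \ref{Prop.1.6-Intro} with $r_0$ large to absorb $r_0^{1-p}B_p$, then Theorem \ref{Thm:Morawetz1} applied to $\T\psi$ to reduce $r_0^{3-p}\Bdeg^{s-1}_p[\T\psi]$ to $\EFdeg^{s-1}[\T\psi]$, and finally Theorems \ref{thm:flux-ind-energy} and \ref{Thm:Flux-estimate} (commuted with $\T,\Z,\OO$), with $\ep_0$ chosen after $r_0$. Two small remarks: the preliminary reduction to $\T^i\Z^j\OO^k$ and the split by presence of $\T$ are not needed, since Proposition \ref{Prop.1.6-Intro} already bounds all of $\EFdeg^s_p[\psi]$; and at top order, the part of the $N$-terms lying in $\DDtrap$ should not be handled by Cauchy--Schwarz, because the bulk $B^s$ does not control $\T\dk^{s}\psi$ there, but simply kept, since it is by definition one of the terms of $\NN^s_p[\psi,N]$.
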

\begin{remark}To    derive  the  desired estimate   in \eqref{eq:Thm-Morawetz2} we need to  make use of   the smallness of  $r_0^{1-p} $   in \eqref{eq:Prop.1.6-Intro}.
The proof of Proposition \ref{Prop.1.6-Intro}, in section \ref{section:Proof-Prop.1.6-Intro}, is based on another energy type estimate generated by an appropriate, non-Killing, deformation of $\T$.
\end{remark}

\begin{remark}
 Note that Proposition \ref{Prop.1.6-Intro} depends in a fundamental way on $r^p$-weighted estimates; this in fact is the only place in the paper where such estimates are strictly needed.
\end{remark}

Based on the expectation that the first two local energy terms $\LE$ on the right hand side of \eqref{eq:Thm-Morawetz2}, converge to zero as $\tau\to \infty$ we then derive the following unconditional control of the global $\BF$ norms.

\begin{theorem}[Global Morawetz]
\lab{Thm:Morawetz3}

Under the same assumptions as in the Main Theorem, with $s\ge 8$, we have
\beq
\lab{eq:Thm-Morawetz3}
\BF_p^{s}[\psi] (\tau_0, \infty) \les  E^s_{p}[\psi](\tau_0)+ \NN_p^s[\psi,N](\tau_0, \infty).
\eeq
\end{theorem}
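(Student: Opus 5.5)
The plan is to derive Theorem \ref{Thm:Morawetz3} from the conditional estimate of Theorem \ref{Thm:Morawetz2} by a continuity (bootstrap) argument in $\tau$. The only obstruction in \eqref{eq:Thm-Morawetz2} is the local energy terms
\[
\LE^s[\psi](\tau):=\int_{\Si(\tau)\cap\{r\le r_0\}}|\dk^s\Z\psi|^2+\int_{\tau-1}^{\tau}\int_{\Si(s)\cap\{r\le r_0\}}|\dk^{\le s}\Z\psi|^2\,ds .
\]
These are supported in the compact region $r\le r_0$ and carry an extra $\Z$ derivative, so we want to show they become small along a suitable sequence of times.

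\textbf{Step 1: qualitative finiteness.} For fixed finite $\tau_2$, the quantity $\BF^s_p[\psi](\tau_0,\tau_2)$ is finite by local well-posedness and energy estimates on bounded time intervals, with a constant that may grow in $\tau_2$. Hence the function
\[
\mathcal M(\tau_2):=\BF^{s}_p[\psi](\tau_0,\tau_2)
\]
is finite and nondecreasing.

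\textbf{Step 2: averaging.} We apply Theorem \ref{Thm:Morawetz2} with the higher regularity $s+4$ available from $s\ge 8$, so that we can afford to lose derivatives. On $\{r\le r_0\}$ the spacetime term $\int_{\tau-1}^\tau\int_{\Si\cap\{r\le r_0\}}|\dk^{\le s}\Z\psi|^2$ is bounded by the nondegenerate bulk $B^{s}$ restricted to $[\tau-1,\tau]$. This holds even in the trapping region, because $\Z$ is tangent to the trapped set and the trapping-degenerate part of $B$ controls $\nab\psi$ with only a loss of derivatives. Equivalently, we control it by $B^{s-1}[\Z\psi]$ together with the norm-equivalence lemmas using $\OO$. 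The pointwise-in-time term $\int_{\Si(\tau)\cap\{r\le r_0\}}|\dk^s\Z\psi|^2$ is handled by choosing $\tau$ cleverly. Since $\int_{\tau_0}^{\tau_2}\int_{\Si(s)\cap\{r\le r_0\}}|\dk^{s}\Z\psi|^2\,ds\le \mathcal M(\tau_2)<\infty$, for any $\tau_2$ there is $\tau\in[\tau_2-1,\tau_2]$ with the slice term bounded by $\int_{\tau_2-1}^{\tau_2}(\cdots)$.

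\textbf{Step 3: the main obstacle, closing the argument.} Summing gives the bound
\[
\mathcal M(\tau)\les E^s_p(\tau_0)+\NN^s_p+\int_{\tau-2}^{\tau}\int_{\{r\le r_0\}}|\dk^{\le s}\Z\psi|^2,
\]
where the last term is a local-in-time piece of $\mathcal M(\tau)$ itself, so it is not small a priori. I would try first to make it small by interpolation between the lower-order bulk $B^{s-4}$ and the top norm $\BF^{s}$. Alternatively, I would exploit that $\int_{\tau_0}^{\infty}$ of a lower-order local energy is finite at level $s-4$, which follows from an inductive version of the estimate, in order to find a sequence $\tau_n\to\infty$ along which the local terms at level $s-4$ tend to zero. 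I would then pass to the limit, using the monotonicity of $\mathcal M$ and the fact that $\BF$ controls the flux $F$ through all of $\DD(\tau_0,\tau)$. Making this induction-on-regularity rigorous is the step I expect to be delicate. The loss from $s+4$ to $s$ (hence the requirement $s\ge 8$) should come from Theorem \ref{Thm:Morawetz2} requiring $s\ge4$ applied at two levels, and the smooth-extension argument for $N$ handles the inhomogeneous term.
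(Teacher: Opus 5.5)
Your Step 3 is more than delicate: as set up, the argument cannot close, because it runs only in $\tau$. The conditional estimate \eqref{eq:Thm-Morawetz2}, and Theorem \ref{thm:flux-ind-energy} behind it, contain no information forcing the local terms $\LE$ to become small. For a hypothetical exponentially growing, or non-decaying stationary, solution, both sides of \eqref{eq:Thm-Morawetz2} grow at the same rate. So no averaging, pigeonholing, or interpolation in $\tau$ can produce a sequence $\tau_n\to\infty$ along which $\LE(\tau_n)\to 0$.

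Your proposed remedy, finiteness of $\int_{\tau_0}^\infty$ of a lower-order local energy ``by an inductive version of the estimate'', has no base case. The statement $\int_{\tau_0}^\infty\int_{\Si(\tau)\cap\{r\le r_0\}}|\dk^4\psi|^2\,d\tau<\infty$ is the paper's condition $(*)$, and at any regularity level it is essentially the global Morawetz estimate itself. Interpolating between $B^{s-4}$ and $\BF^s$ on $[\tau-2,\tau]$ likewise only yields another local-in-time piece with an $O(1)$ constant.

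The missing idea is a continuity argument in the rotation parameter $a$, not in $\tau$. The paper proceeds as follows.

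First, by orthogonality it reduces to $\Z$-equivariant solutions $\Z\psi=-i\ell\psi$. For these, Proposition \ref{Prop:Main-quantitative-estim} turns the derivative loss into a power of $\ell$:
$$\BEF^s_p[\psi](\tau_0,\tau)\les \ell^{2s-6}\LE[\dk^4\psi](\tau)+E^s_p[\psi](\tau_0)+\NN^s_p[\psi,N](\tau_0,\tau).$$
If $(*)$ holds, one picks $\tau_i\to\infty$ with $\LE[\dk^4\psi](\tau_i)\to0$ and passes to the limit (Lemma \ref{Lemma:*}).

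Then $(*)$ is propagated over the whole range $|a|/m\le 0.75$:
\begin{itemize}
\item it holds for small $|a|/m$ by the unconditional estimates of \cite{GKS};
\item the set of admissible $a$ is closed;
\item the set is open. For $a$ near $a_0$ one writes $\square_{a_0,m}\psi=N+(\square_{a_0,m}-\square_{a,m})\psi$, where the perturbation is bounded by $|a-a_0|\,r^{-2}|\dk^{\le 2}\psi|$ (Lemma \ref{Lemma;RHS}). One applies the estimate at $a_0$ and absorbs the two-derivative loss using the $\ell$-weighted estimate at higher order. This requires $|a-a_0|\ell^{2s-2}$ small for each fixed $\ell$.
\end{itemize}
This connectedness argument from the slowly rotating regime is the global input your $\tau$-bootstrap lacks. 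The derivative loss in the openness step is also what drives the regularity requirement.
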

  The proof of Theorem \ref{Thm:Morawetz3}, presented in section \ref{section:Proof of Thm-Morawetz3}, uses a continuity argument\footnote{A similar argument was used in \cite{DRS}, in the microlocalized setting of that paper.} with respect to $a$.
     Roughly, we assume that the expectation mentioned above, i.e. $\LE\to 0 $ as $\tau\to \infty$, holds true for $a\le a_0$, note that it is unconditionally true\footnote{Based on the small $|a|/ m$ results of \cite{GKS}. The closeness requirement in the continuity argument is straightforward.} for small $a_0$, and show that it must also hold
    true for a small neighborhood of $a_0$, for which the result\footnote{In fact a simple corollary of it, see Proposition \ref{Prop:Main-quantitative-estim}.} of Theorem \ref{Thm:Morawetz2} is valid. We note that
this is the only part of the paper which requires some mild frequency decomposition (in fact only decomposition into $Z$-frequency modes). As in \cite{DRS} the continuity argument makes use of the orthogonality properties of the $\Z$-frequency modes.

To finish the proof of estimate \eqref{eq:Thm-MainMorawetz} in the Main Theorem we
 still need to control the energy. This is done by the following.

\begin{theorem}[Conditional Energy]
\lab{Thm:Morawetz4}
Under the assumption $|a|/m <0.9$, $s\ge 0$, $\de<p< 2-\de$, we have
\beq\lab{eq:Thm-Morawetz4}
E^s_{p}[\psi](\tau_2)\les  E^s_{p}[\psi](\tau_1)+ B_p^s[\psi] (\tau_1, \tau_2) +\NN_p^s[\psi,N](\tau_1, \tau_2)
\eeq
\end{theorem}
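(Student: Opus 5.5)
The plan is to prove \eqref{eq:Thm-Morawetz4} with a vectorfield multiplier $X = \That + \chi(r)\, \omega(r) Z$, or more generally a vectorfield of the form
\[
X = \T + \om(r)\,\Z ,
\]
which satisfies three conditions:
\begin{itemize}
\item[(a)] $X$ is causal (timelike away from the horizon, and null but transversal in the right way at $\HH$) throughout $\DD=\{r\ge r_+\}$;
\item[(b)] $X$ is Killing in the trapping region $\DD_{trap}=\{\rhat_1\le r\le \rhat_2\}$, i.e. $\om$ is constant there;
\item[(c)] $X=\T$ for $r\ge R$ large.
\end{itemize}
Since $\T$ and $\Z$ are Killing, the deformation tensor of $X$ is $\piX = \om'(r)\,(dr\otimes \Z + \Z\otimes dr)$ up to symmetrization. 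It is therefore supported where $\om'\neq 0$, which by (b) and (c) is contained in the complement of $\DD_{trap}$ and inside a compact $r$-range. There the nondegenerate Morawetz bulk $B[\psi]$ controls $|(e_3,e_4)\psi|^2+|\nab\psi|^2$ (including the $\Rhat\psi$ component). Hence the error term $\int \piX\cdot Q[\psi]$ is bounded by $B[\psi](\tau_1,\tau_2)$. The energy identity for $J^X=Q[\psi]\cdot X$ on $\DD(\tau_1,\tau_2)$ then gives the degenerate energy at $\tau_2$, plus the nonnegative horizon flux by causality, bounded by $E[\psi](\tau_1)+B[\psi]+|\int X\psi\, N|$. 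The last term is absorbed into $\NN$.

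The key step, and the main obstacle, is constructing $\om$. For $X=\T+\om\Z$ to be causal we need
\[
\g_{tt}+2\om\,\g_{t\phi}+\om^2\g_{\phi\phi}\le 0 ,
\]
i.e. $\om$ must lie in the interval between the two roots
\[
\om_\pm(r,\th)=\frac{-\g_{t\phi}\pm\sqrt{\g_{t\phi}^2-\g_{tt}\g_{\phi\phi}}}{\g_{\phi\phi}} .
\]
This interval is nonempty for $r>r_+$, with $\g_{t\phi}^2-\g_{tt}\g_{\phi\phi}=\De\sin^2\th$. It collapses to the horizon angular velocity $\om_H=\frac{a}{r_+^2+a^2}$ at $r=r_+$, and contains $0$ only outside the ergoregion, $r\ge 2m$. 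We need a function of $r$ alone, so we need the intersection over $\th$ of these intervals to be nonempty for every $r$, a constant value on $[\rhat_1,\rhat_2]$, and a monotone transition to $0$ beyond the ergoregion. I would try the concrete choice
\[
\om = \frac{a}{r^2+a^2}\quad \text{for } r\le \rhat_1 ,
\]
which is admissible since $\That$ is causal. I would freeze $\om\equiv \frac{a}{\rhat_1^2+a^2}$ on $[\rhat_1,\rhat_2]$ and then interpolate down to $0$ for $r\ge \max(\rhat_2,2m)$ after the ergoregion. The condition to verify is that the constant $\frac{a}{\rhat_1^2+a^2}$ stays inside $[\om_-,\om_+]$ for all $\th$ on $[\rhat_1,\rhat_2]$. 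Equivalently, we need
\[
\frac{a}{r^2+a^2}\le \frac{a}{\rhat_1^2+a^2}\le \om_+(r,\th)
\]
uniformly in $\th$ there. This is an explicit inequality in $r$, $a/m$, using the values of $\rhat_{1,2}$ from Lemma \ref{Lemma:rangeofrfortrappednullgeodesics} and the sign of $\Th$. I expect it to hold precisely in a range such as $|a|/m<0.9$, which explains the hypothesis; I would check it numerically and analytically at the endpoints. One also needs the interpolation region to satisfy the inequality, and this is easier since $0$ is admissible once $r\ge 2m$. If $\rhat_2<2m$ for some $a$, the interpolation must be done carefully inside the ergoregion using monotonicity of $\om_+$ in $r$.

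Finally, I would upgrade the estimate to $s$ derivatives and $r^p$ weights. For the $r^p$ part at large $r$, I would combine with the standard $r^p$-weighted estimate, whose bulk is controlled by $B_p$. For derivatives, I would commute with $\T,\Z$ and use the Carter operator $\OO$, together with redshift near $\HH$ to remove the degeneracy in $e_3$. Lower-order commutator errors with non-Killing derivatives ($\Rhat$, redshift vectorfield) are supported where $B^s_p$ is nondegenerate or near the horizon, where redshift absorbs them. This yields \eqref{eq:Thm-Morawetz4}.
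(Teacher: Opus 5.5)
Your architecture is the paper's: a multiplier $T+\om(r)Z$ equal to $\That$ near $\HH$, Killing and timelike on $[\rhat_1,\rhat_2]$, equal to $T$ further out, with the deformation error $\sim\om'\,\Rhat\psi\c Z\psi$ absorbed by $B$, followed by red shift, $r^p$ and commutation. The gap is in the step you yourself flagged as key: the constant $\om\equiv\frac{a}{\rhat_1^2+a^2}$ on the trapping set does not give a causal vectorfield in the stated range. For $\la=\frac{a}{a^2+r_0^2}$, Lemma \ref{lemma:inside-Hawking-vf} gives on the equator
\[
(a^2+r_0^2)^2\,|q|^2\g(T_\la,T_\la)=-\De\, r_0^4+a^2\big(r^2-r_0^2\big)^2 ,
\]
and your choice corresponds to $r_0=\rhat_1$. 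At $a=0.9m$ one has $\rhat_1\approx1.558m$, $\rhat_2\approx3.910m$ and $\De(\rhat_2)\approx8.28m^2$. So at $r=\rhat_2$ the right-hand side is about $(-48.8+134.0)m^6>0$, i.e. $X$ is spacelike on the equator at the outer edge of the trapping set. Equivalently, at that point the admissible equatorial angular velocities lie below $\frac{a+\sqrt{\De}}{r^2+a^2+a\sqrt{\De}}\approx0.202\,m^{-1}$, whereas $\frac{a}{\rhat_1^2+a^2}\approx0.278\,m^{-1}$. The sign change occurs near $|a|/m\approx0.8$. Your construction therefore covers only roughly $|a|/m<0.8$: enough for the $0.75$ used in the Main Theorem, but not the theorem as stated, and your expectation that the threshold comes out as $0.9$ is unfounded. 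The root cause is requiring $\om$ to match $\frac{a}{r^2+a^2}$ continuously at $r=\rhat_1$. For large $a$, the constants admissible on all of $[\rhat_1,\rhat_2]$ form a narrow window; at $a=0.9m$ the equatorial constraints at $\rhat_1$ and $\rhat_2$ alone force $\la\in(0.189,0.202)\,m^{-1}$. This window does not contain the angular velocity of $\That$ at $\rhat_1$.

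The missing idea is to decouple the constant from $\rhat_1$ and let $\om$ jump in a thin layer below $\rhat_1$. The paper takes $\la=\frac{a}{a^2+(1.92m)^2}$, which is $\approx0.200\,m^{-1}$ at $a=0.9m$. It shows that $|q|^2\g(T_\la,T_\la)$ is increasing in $\sin^2\th$, so the equator is the worst case. It then writes $H_{r_0}=a^2\Tht+rP_{r_0}$, with $\Tht\le0$ on $[\rhat_1,\rhat_2]$ and $P_{r_0}$ a convex quadratic, so it suffices to check $P_{r_0}(\rhat_{1,2})<0$. This is verified for all $\mu=a^2/m^2\le0.81$ through the $\mu$-monotonicity of $P(\yhat_{1,2}(\mu))$ (Lemmas \ref{Le:sufficientcond}, \ref{Lemma:ontrappingset}, Proposition \ref{Prop:Tring}). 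The transition from $\That$ to $T_\la$ is then harmless. At each point the set of $\la$ with $T_\la$ timelike is an interval, because $\la\mapsto\g(T_\la,T_\la)$ is a quadratic with leading coefficient $\g(Z,Z)\ge0$. Near $\rhat_1$ this interval contains both $\frac{a}{r^2+a^2}$ and the chosen constant, so convex interpolation stays timelike at the price of an $O(\de^{-1})$ deformation tensor. A further point: $B$ controls $|(e_3,e_4)\psi|^2+|\nab\psi|^2$ only on $\DD_{\ntrap}$, which excludes a $\de_{trap}$-neighbourhood of $[\rhat_1,\rhat_2]$ where $Z\psi$ is uncontrolled. The non-Killing layers must therefore lie inside $\DD_{\ntrap}$, i.e. $X$ should be Killing on a slightly enlarged interval, which is possible because timelikeness is an open condition. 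Your $\om$, with $\om'\neq0$ all the way up to $r=\rhat_1$, does not yet arrange this.
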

The proof of Theorem \ref{Thm:Morawetz4} is based on a choice of a vectorfield multiplier which is precisely Killing and timelike on the trapping set $[\rhat_1, \rhat_2]$.
 More precisely,
    due to our precise characterization of the trapping set, in section \ref{section-range.trapped.ngeod},
       we construct, under the assumption that $|a|/m\le 0.9$, a causal vectorfield $\Tring$ which is Killing and timelike in the trapping region $[\rhat_1, \rhat_2]$, equal to
       $\That=T+\frac{a}{r^2+a^2}Z$ before the trapping, and equal to $\T$ everywhere else.
      The corresponding energy-flux estimate generates a bulk term, supported away from the trapping region, which can be controlled in view of Theorem \ref{Thm:Morawetz3}.
      Note that the range for which $\Tring $ can be constructed far exceeds the range $|a|/m<\sqrt{2}/2$, for which the trapping set is disjoint from the ergoregion, in physical space.

To derive the local in time statement made in our Main Theorem we rely on a standard smooth extension argument for the inhomogeneous term $N$.

\begin{remark}
The results mentioned above remain true if we work with a time function $\tau$, which is spacelike, asymptotically null, instead of asymptotically flat.
The only modification required in that case is to adapt the definition of the energy norms $E_p^s[\psi](\tau)$, see section 6.1.5 in \cite{GKS}.
 \end{remark}

 \subsection{Comments   on  the proof of  Theorem  \ref{Thm:Morawetz1}}

 The proof of Theorem \ref{Thm:Morawetz1} is itself the consequence of four steps.
 The first one is an adaptation of a result proved by Stogin in his PhD thesis \cite{St}, in the context of axially symmetric solutions\footnote{Note that in that case it is straightforward to derive full control of the energy-flux and as such the
    result reduces to deriving a Morawetz type estimate.} of the wave equation in the sub-extremal case. For simplicity, in this section, we ignore the contribution of the inhomogeneous term $N$ in \eqref{eq:scalarwave}, which can be easily added to the final result.

 \begin{theorem}[$\Zhat$-conditional Morawetz]
            \lab{Thm:Moraw2}
            The following estimate holds true for solutions of \eqref{eq:scalarwave} in the full sub-extremal case, with
            $\Zhat=Z+aT$,
            \beq
            \bsplit
            \lab{eq:Thm-Moraw2}
            \Mor[\psi](\tau_1,\tau_2) +\EF[\psi](\tau_1,\tau_2)&\les  \EFdeg[\psi](\tau_1,\tau_2) +   E[\psi](\tau_1)          +\int_{\DD(\tau_1,\tau_2)} \frac{a^2}{mr^4}|\Zhat\psi|^2
             \end{split}
            \eeq
            where\footnote{This polynomial, introduced in \eqref{eq:charact-polynomial},
    is closely related to the determination of the $r$-range of trapped null geodesics in section
\ref{section-range.trapped.ngeod}}
            \begin{align*}
            \Mor[\psi](\tau_1, \tau_2) &:=  \Bdegg[\psi ](\tau_1, \tau_2)+        \int_{\DD(\tau_1, \tau_2) }
    \frac{\TT_{-a}^2}{r^6} \left(\frac{m}{r^2} |(e_3, e_4) \psi|^2 + \frac{1}{r}|\nab\psi|^2\right),\\
             \Bdegg[\psi](\tau_1, \tau_2)&:=\int_{\DD(\tau_1, \tau_2) } m r^{-2}  | \Rhat  \psi|^2  +mr^{-4} |\psi|^2.
    \end{align*}
    Here $\TT_{-a}=r\big( r^2 - 3mr+2a^2\big)$ and $\nab$ refers to the horizontal derivative in the span of $e_1, e_2$.
            \end{theorem}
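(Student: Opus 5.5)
The plan is a vectorfield method with multiplier $X=\FF(r)\pr_r$, which I would take in the renormalized form $\FF\Rhat$ with a weight $w$. The point is that the principal part of the bulk depends only on $r$ and on the Carter-type decomposition of the wave operator from Proposition \ref{prop:decompose-square}, not on the frequency. Recall that $|q|^2\square = \pr_r(\De\pr_r) + \frac{1}{\De}\RR^{\a\b}\pr_\a\pr_\b$ (schematically), with
\[
\RR^{\a\b}\pr_\a\pr_\b = -(r^2+a^2)^2 T^2 - 2a(r^2+a^2) TZ - a^2 Z^2 + \De\, \OO .
\]
I would apply the multiplier identity for $X=\FF\pr_r$ together with a lower-order term $w\psi$. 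After integration by parts, the bulk becomes
\[
\frac{1}{2}\pr_r\Big(\frac{\FF}{(r^2+a^2)^{\dots}}\Big)\big(\text{principal quadratic form in } \pr_t\psi,\pr_\phi\psi,\nab\psi\big) + \text{coefficient}\cdot|\pr_r\psi|^2 + \text{zero order terms}.
\]

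Following Stogin, I would rewrite the principal quadratic form in terms of $\Zhat=Z+aT$, not the Carter operator. The key algebraic step is to write $(r^2+a^2)T + aZ = r^2 T + a\Zhat$. The combination $\pr_r\big((r^2+a^2)^2/\De\big)$-type weights then produce a quadratic form in $(T\psi, \Zhat\psi, \nab\psi)$. The $T\psi$ and $\nab\psi$ coefficients are proportional to $\frac{d}{dr}\frac{r^2}{\De}$-like expressions whose sign changes only at $r^3-3mr^2+a^2r\pm\dots$. Choosing $\FF$ to vanish exactly where $\TT_{-a}$ vanishes, with $\FF\sim \TT_{-a}/r^{k}$, I would aim for the $|T\psi|^2$ and $|\nab\psi|^2$ coefficients to be $\ges \TT_{-a}^2/r^{7}$, since they come as $\FF\cdot\pr_r(\cdot)$ with both factors vanishing at the same place. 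All cross terms involving $\Zhat\psi$ would be absorbed by Cauchy–Schwarz into the admissible error $\int a^2 m^{-1}r^{-4}|\Zhat\psi|^2$. This is exactly why the statement is only $\Zhat$-conditional.

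For the degenerate lower-order bulk $\Bdegg$, I would choose $w$ as in Stogin, $w\approx \frac12 \pr_r(\FF\De/(r^2+a^2))$ up to a correction. This gives a $|\Rhat\psi|^2$ coefficient that is positive everywhere, and a $|\psi|^2$ coefficient $-\frac12\square w$ that is positive outside a compact range. The remaining negative zero-order part would be handled by a Hardy inequality in $r$, using the positive $|\Rhat\psi|^2$ term. Boundary terms on $\Si(\tau)$ and on the horizon are bounded by $E$ and $\EFdeg$. Combining this with a standard $T$-energy and a partial red-shift estimate, localized near the horizon, upgrades $\EFdeg$ to the nondegenerate $\EF$ on the left, with the red-shift error terms absorbed by the Morawetz bulk.

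The main obstacle is the simultaneous positivity of the $|\psi|^2$ and $|\Rhat\psi|^2$ coefficients in the full subextremal range. This requires an explicit choice of $\FF$ (for instance $\FF = \TT_{-a}(r^2+a^2)^{-2}$ times a tailored factor), together with a one-variable polynomial inequality in $r\in[r_+,\infty)$. If the pointwise sign fails on some interval, I would try first to fix it by adding a Hardy-type integral in $r$ rather than by pointwise inequalities.
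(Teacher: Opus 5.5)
Your principal-part analysis is in the right spirit. With the paper's choice $X=-zu\pr_r$, $w=-z\pr_r u$, $z=\De/r^4$, the coefficient of $O^{\a\b}\D_\a\psi\D_\b\psi$ is $-u\TT_{-a}r^{-6}$ and the $TT$ coefficient vanishes identically. The $|\That\psi|^2$ term with weight $\TT_{-a}^2$ then comes from a separate small Lagrangian correction $w_\delta=-\delta\,2m\De\TT_{-a}^2r^{-10}$. After that, Cauchy--Schwarz on $u\,\That\psi\cdot\Zhat\psi$ produces the $a^2m^{-1}r^{-4}|\Zhat\psi|^2$ error exactly as you say.

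The genuine gap is the zeroth-order term, which you call ``the main obstacle'' and then leave as a contingency. With a regular coefficient $\FF\sim\TT_{-a}r^{-k}$, the $|\psi|^2$ coefficient $\VV=\frac14\pr_r\big(\De\pr_r(z\pr_r u)\big)$ involves four derivatives of the weights and has no reason to be nonnegative. The paper points out that even for $a=0$, partial coerciveness is impossible with smooth $(X,w)$ and no extra current. The Hardy/Riccati rebalancing you fall back on was carried out in \cite{GKS} only for $|a|\ll m$, and the paper calls it much harder in the full range. Whether a Hardy inequality closes depends on a quantitative relation between $\AA$ and the negative part of $\VV$, which you neither state nor verify. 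As written, your argument gives at best the estimate with an extra $\int|\psi|^2$ over a compact $r$-range on the right-hand side. Also, Stogin's device is not a choice of $w$ for a given regular $X$. It is a specific, horizon-singular choice of $u=hf$, which fixes both $X$ and $w$.

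The missing idea in the paper is to make $\VV$ vanish rather than fight its sign.
\begin{enumerate}
\item Take $\pr_r u=-r_3z(r_3)/z(r)$ on $[r_+,r_3]$ and $\pr_r u=-r$ beyond, normalized by $u(r_2)=0$. Then $z\pr_r u$ is constant on $[r_+,r_3]$, so $\VV\equiv0$ there and $\VV>0$ for $r>r_3$. Moreover $\AA>0$, and $u$ has the sign of $-\TT_{-a}$, so the angular part of $P$ is nonnegative.
\item The price is $u\sim-m^2\ln(r-r_+)$. This is removed by tempering, $u_\ep=\ep^{-1}F(\ep u)$, which keeps $\AA_\ep>0$ and the sign structure of $P$. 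It creates only an error $\VV_{error}$ with $\|\VV_{error}\|_{L^1(r)}\les\ep$.
\item Add the red-shift current $c_0Y$, $Y=d(r)e_3$, to the multiplier itself, not only when upgrading boundary terms. This gives nondegenerate control of $\pr_r\psi$ near $r_+$.
\item Apply the local Hardy Lemma \ref{lem:localHardy} to transfer a small part of the positive $\VV$ on $[r_3,R]$ to $[r_+,r_3]$. After integration this controls $\int m r^{-4}|\psi|^2$ on all of $[r_+,\infty)$, which is the $|\psi|^2$ part of $\Bdegg$ in the full subextremal range.
\end{enumerate}
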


 The second result, which holds true outside the $r$-range of the trapping set $[\rhat_1,\rhat_2]$, is conditional
  on controlling the degenerate quantity $ \Bdegg[\psi](\tau_1, \tau_2)$.

        \begin{theorem}[Non-trapping, conditional, Morawetz]
            \lab{Thm:Moraw3}
            Under the same assumptions as in Theorem \ref{Thm:Moraw2} we have the following conditional result
            \beq\lab{eq:Thm-Moraw3}
            \bsplit
                \int_{\DD_{\ntrap}(\tau_1, \tau_2)}  \left(\frac{m}{r^2} |\That \psi|^2 + r^{-1}|\nab\psi|^2\right)
                \quad&\les
                \Bdegg[\psi](\tau_1, \tau_2) +\EFdeg[\psi](\tau_1,\tau_2)\\
                &+     \Edeg[\psi](\tau_1)
            \end{split}.
            \eeq
        \end{theorem}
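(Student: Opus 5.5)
Theorem \ref{Thm:Moraw3} will follow from a second Morawetz multiplier argument that uses only the non-negativity of $\Th$ away from the trapping set (Proposition \ref{Prop:r-trapped.region}). Everything is done in the AB invariant operator framework. We use the decomposition of $|q|^2\square$ from Proposition \ref{prop:decompose-square}, in which the second order operator $\OO$ commutes with $|q|^2\square$.

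\textbf{Step 1: the multiplier.} Choose a vectorfield of the form $X=\FF(r)\Rhat$ together with a lower order modification $w\psi$. Here $\FF$ is chosen so that it vanishes (changes sign) inside $[\rhat_1,\rhat_2]$ and is cut off smoothly near the boundary of the trapping region. The bulk term of the current $\mathcal{P}_\mu=Q_{\mu\nu}X^\nu+\frac12 w\psi\pr_\mu\psi-\frac14|\psi|^2\pr_\mu w$ then decomposes, as in \cite{AB}, \cite{GKS}, into three parts:
\begin{itemize}
\item a term $\FF'\,|\Rhat\psi|^2$;
\item a term of the form $-\pr_r\big(\FF\,\frac{\TT_{-a}\text{-type}}{(r^2+a^2)^{k}}\big)$ times the quadratic form $Q(\psi):=\OO\psi\cdot\psi$ built from $(r^2+a^2)\That$ and the horizontal derivatives;
\item a zero order term.
\end{itemize}
The key choice is to take $\FF$ proportional to $\Th$-related quantities, e.g. $\FF\sim \mathrm{sgn}(r-r_{trap})$ times a positive weight. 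Then on $\DD_{\ntrap}$ the coefficient of the principal form is bounded below by $c\,\Th/r^{k}>0$. Since $\Th(r)=\TT_{-a}^2-4a^2r^2\De$, this positivity holds uniformly on the complement of a neighborhood of $[\rhat_1,\rhat_2]$. It yields control of $r^{-2}|\That\psi|^2+r^{-1}|\nab\psi|^2$ there, after using $\Zhat$ and the identity relating $\OO$, $\That$, $\Z$ and $\nab$. The discriminant $\Th>0$ is precisely what makes the relevant $2\times2$ form in $(\That\psi,\Z\psi)$ positive definite.

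\textbf{Step 2: error terms.} All remaining bulk terms are bounded by $\Bdegg[\psi]$. These include the region where $\FF'$ has the wrong sign or where the cutoff is active, the zero order terms, and the $|\Rhat\psi|^2$ terms. This works because they involve only $|\Rhat\psi|^2$ and $|\psi|^2$ with the weights of $\Bdegg$. If first-order terms supported near $\pr\DD_{\ntrap}$ appear, we choose the cutoff region so that it lies inside the trapping region, where no positivity is claimed. Near the horizon, $\FF$ is chosen so that its contribution degenerates consistently with $\Rhat$, so that only degenerate quantities appear.

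\textbf{Step 3: boundary terms and the main obstacle.} The boundary terms on $\Si(\tau_1)$, $\Si(\tau_2)$ and $\HH$ are controlled by $\Edeg$ and the degenerate flux, giving $\EFdeg[\psi](\tau_1,\tau_2)+\Edeg[\psi](\tau_1)$. Large $r$ is handled by the standard $\FF\to 1$ Minkowski-type multiplier with $w\sim 2/r$. The main obstacle is Step 1: producing a single smooth $\FF$ whose principal coefficient is coercive on both components $r<\rhat_1$ and $r>\rhat_2$ for all frequencies $(\That,\Z)$, i.e.\ all impact parameters. I would first try $\FF$ built from $\TT_{-a}/(r^2+a^2)^{2}$ modified by $\Th$, checking positivity of the form's discriminant directly via Lemma \ref{Lemma:rangeofrfortrappednullgeodesics}.
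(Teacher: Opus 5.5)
Step 1 is where the whole proof lives, and as you describe it, it would fail. You treat $w$ as a lower-order correction and expect coercivity from $X=\FF\Rhat$ alone, with a principal coefficient $\gtrsim\Th/r^k$. With the standard $w$ of Proposition \ref{proposition:Morawetz1} ($X=-zu\pr_r$, $w=-z\pr_r u$), the tangential principal part is $\frac12 u\,\pr_r\big(\frac{z}{\De}\GG^{\a\b}\big)\D_\a\psi\D_\b\psi$. The $r$-derivative falls on $\frac{z}{\De}\GG^{\a\b}$, not on the multiplier, and the result is not a scalar times a fixed form. For $z=\De/r^4$, the choice that brings in $\TT_{-a}$ and hence $\Th$, it equals $\frac12u\big(-\frac{2\TT_{-a}}{r^6}O^{\a\b}\D_\a\psi\D_\b\psi+\frac{4ar(r^2+a^2)}{r^6}\That\psi\c\Zhat\psi\big)$. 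On the equator, where $\Zhat=|q|e_2$, its block in $(\That\psi,|q|e_2\psi)$ has zero $|\That\psi|^2$ entry and a nonzero off-diagonal entry, so it is indefinite wherever $u\neq0$.

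The missing idea is the Lagrangian modification $\EE[0,w^\flat,0]$ of Section \ref{section:Largemodified}. Its divergence contributes the principal term $\frac12 w^\flat|q|^2\LL[\psi]=\frac12\De w^\flat|\pr_r\psi|^2-\frac{w^\flat(r^2+a^2)^2}{2\De}|\That\psi|^2+\frac12 w^\flat O^{\a\b}\D_\a\psi\D_\b\psi$, which supplies the $|\That\psi|^2$ entry. Take $u=-2\TT_{-a}/r$, i.e.\ $X=\frac{2\De\TT_{-a}}{r^5}\pr_r$, which changes sign at the root $r_2\in(\rhat_1,\rhat_2)$ of $\TT_{-a}$, together with the matched weight $w^\flat=-2\TT_{-a}^2/r^7$. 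The diagonal entries become $\frac{\TT_{-a}^2(r^2+a^2)^2}{\De r^7}$ and $\frac{\TT_{-a}^2}{r^7}$, and the determinant is $\ge\frac{\TT_{-a}^2(r^2+a^2)^2}{\De r^{14}}\Th$. Off the equator this uses the sign of $\TT_{-a}$ separately on $[r_+,r_2]$ and $[r_2,\infty)$. So $\Th$ enters only as a discriminant after this completion of the square. Heuristically, $\Th>0$ encodes positivity only on null covectors, since it comes from the geodesic analysis; adding a multiple of the Lagrangian is exactly what upgrades this to positive definiteness. Your proposal never uses that freedom at the principal level. Also, the invariant-operator framework and the commutation with $\OO$ play no role here: this is a scalar multiplier estimate.

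The same weight cannot be used near $\HH$. There $w^\flat=-2\TT_{-a}^2/r^7$ produces $\frac12\De w^\flat|\pr_r\psi|^2\sim-\De^{-1}|\Rhat\psi|^2$, which $\Bdegg$ does not control. The paper therefore switches to $w^\flat_1=-\frac{4m^2r_+^2}{(r_+-m)^2}\frac{\De\TT_{-a}^2}{r^{11}}$ near $r_+$, so that the $|\pr_r\psi|^2$ coefficient stays $O(mr^{-4}\De^2)$. It then rechecks positivity at $r=r_+$ directly and finds the determinant $\ge\frac{4(r_+^2+a^2)^2(r_+-m)^2}{r_+^4}\cdot\frac{m^2-a^2}{r_+^4}$. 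This uses $|a|<m$, and it is what gives the non-degenerate $\frac{m}{r^2}|\That\psi|^2$ up to the horizon. At large $r$ the paper uses $w^\flat=-2m/r^2$. Your remark that $\FF$ ``degenerates consistently with $\Rhat$'' does not address any of this, since the issue is the Lagrangian weight, not $X$.

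Finally, your cutoff in Step 2 is the right move for the wrong reason. First-order terms inside $[\rhat_1,\rhat_2]$ cannot be sent to the right-hand side, which contains only $\Rhat\psi$ and $\psi$. Localization is legitimate because derivatives of $u$ and $w^\flat$ enter only the $|\pr_r\psi|^2$ and $|\psi|^2$ coefficients, never the tangential form. Both can therefore be cut off to vanish on $[\rhat_1,\rhat_2]$, at a cost bounded by $\Bdegg$. This is needed there, because the form is indefinite even after the Lagrangian correction, since $\Th<0$ in that region.
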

        The following provides the main step in the proof of Theorem \ref{Thm:Morawetz1}. The crucial feature of this step is that
        it relies on the above non-trapping conditional Morawetz estimate \eqref{eq:Thm-Moraw3} and the commutation properties of the second order operators
        $\SS_1=\T^2, \, \SS_2=\T \Z,\,  \SS_3=\Z^2,\,  \SS_4=\OO$. More precisely  if  $\psi$ verifies  $\square_{a,m}\psi =0$  so  do   $\psi_\aund=\SS_\aund\psi$, $\aund=1,2,3,4$.
 \begin{theorem}[$\SS$-Morawetz]
            \lab{Thm:Moraw4}
            The following estimate holds true for solutions of \eqref{eq:scalarwave} in the range\footnote{See Remark \ref{remark:restriction-am} to track down where this purely technical restriction is necessary in our proof.} $|a|/m\leq 0.75$, \beq
            \lab{eq:Thm-Moraw4}
            \sum_{\aund=1}^4 B[\psi_\aund](\tau_1, \tau_2) \les\sum_{\aund=1}^4\Big(\EFdeg^2[\psia](\tau_1,\tau_2)+E[\psia](\tau_1)\Big).
            \eeq
        \end{theorem}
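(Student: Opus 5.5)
The plan is to run the Andersson--Blue invariant-operator method on the vector $\psi_{\aund}=\SS_\aund\psi$, $\aund=1,\dots,4$. Each $\psi_\aund$ again solves $\square_{a,m}\psi_\aund=0$, so Theorem \ref{Thm:Moraw3} applies to it. The multiplier is a second-order energy-momentum current $\PP_\mu=\sum_{\aund,\bund}\QQ_{\mu\nu}[\psia,\psib]X^\nu$-type expression with tensorial vectorfield $X=\FFab(r)\pr_r$, built from the symmetric matrix of functions $\FFab$ and the decomposition $|q|^2\square = \pr_r(\De\pr_r)+\frac{1}{\De}\RR^{\aund\bund}\SS_\aund\SS_\bund$ of Proposition \ref{prop:decompose-square}. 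I will take $\FFab= f(r)\, h(r)\, \pr_r\!\big(\frac{\RR^{\aund\bund}}{\De}\big)$, up to the standard lower-order modification $w$ and a Lagrangian correction, as in \cite{AB} and \cite{GKS}. The principal bulk term then has the form
\[
-\,\frac{h}{\De}\,\pr_r\Big(\frac{z}{\De}\Big)\,\Big|\sum_{\aund}\pr_r\Big(\frac{\RR^{\aund\bund}}{\De}\Big)\psi_\aund\Big|^2
\]
plus a $|\pr_r\psi_\aund|^2$ term.

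With a suitable choice of $z$ and $f$, and using $\RR^{\aund\bund}\SS_\aund\SS_\bund=-(r^2+a^2)^2T^2-2a(r^2+a^2)TZ-a^2Z^2+\De\OO$, the quantity $\pr_r(\RR^{\aund\bund}/\De)\,\SS_\aund\SS_\bund$ factors through the characteristic polynomial \eqref{eq:charact-polynomial}. The trapped part of the bulk then contains $\frac{\TT_{-a}^2\dots}{r^{k}}$-weighted squares of $T^2$-, $TZ$- and $\OO$-type terms, and degenerates exactly where $\Th\le0$. Proposition \ref{Prop:r-trapped.region} identifies this with the set $[\rhat_1,\rhat_2]$. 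Outside this set, positivity of $\Th$ makes the principal quadratic form coercive in the second derivatives. Inside it, I will not obtain coercivity of all $\psi_\aund$ derivatives. Instead I keep only the $|\Rhat\psi_\aund|^2$ and $r^{-4}|\psi_\aund|^2$ pieces, that is, $\Bdegg[\psi_\aund]$, together with a nonnegative trapping term. The nontrapped parts of $B[\psi_\aund]$ are then recovered by Theorem \ref{Thm:Moraw3} applied to each $\psi_\aund$.

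The lower-order control of $|\psi_\aund|^2$ near trapping and at low $r$ follows Stogin's idea, as in Theorem \ref{Thm:Moraw2}. The function $w$ is chosen so that the zeroth-order coefficient $-\frac12\square w$ plus the contribution of $f'$ is positive. Any deficit is absorbed by a Hardy inequality in $r$ on $[r_+,\infty)$, using the $|\Rhat\psi_\aund|^2$ term, and crucially by the \emph{positive} principal trapping term rather than just its sign. This is where the restriction $|a|/m\le0.75$ should enter: the explicit lower bound on the trapping coefficient, relative to the error in the Hardy estimate, must dominate. Finally, the boundary terms of the divergence identity are bounded by $\EFdeg^2[\psia]$ at $\tau_2$ and the horizon, and by $E[\psia](\tau_1)$. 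The quadratic cross terms $\psia\psib$ are handled by Cauchy--Schwarz with second derivatives expressed through $\SS$.

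The main obstacle is the simultaneous positivity of the mixed quadratic form in $(\pr_r\psi_\aund,\psi_\aund)$ across $[\rhat_1,\rhat_2]$ for large $a$. There the $r$-range of trapping is wide, and Stogin's axisymmetric choice of $z,h$ no longer separates the terms. I would first try $z=\TT_{-a}$-adapted (e.g. $z\propto \De/(r^2+a^2)^2$ times a correction vanishing at $\Th=0$), and check numerically in $r$ that the resulting Hardy-deficit is beaten by the trapping term for $|a|/m\le0.75$.
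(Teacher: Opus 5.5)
Your overall architecture matches the paper's: an Andersson--Blue current for the $4$-tuple $\psia=\SS_\aund\psi$ that yields $\sum_\aund\Bdegg[\psia]$ plus a nonnegative trapping term, Theorem \ref{Thm:Moraw3} applied to each $\psia$ for the non-trapped part of $B[\psia]$, a Hardy step, and use of the trapping term beyond its sign. But the step that actually produces $\sum_\aund\Bdegg[\psia]$ is left open, as your last paragraph concedes, and the mechanisms you sketch for it would not work.

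Some of the set-up is also off.
\begin{itemize}
\item One has $|q|^2\square=\pr_r(\De\pr_r)+\frac{1}{\De}\GG^\aund\SS_\aund$, where each $\SS_\aund$ is already second order; there is no sum over products $\SS_\aund\SS_\bund$.
\item The principal term is $\UU^{\a\b\aund\bund}\D_\a\psia\D_\b\psib$, a form in first derivatives of the $\psia$. It becomes the nonnegative $\frac12 h\LL^\cund S_\cund^{\a\b}\pr_\a\Psi\pr_\b\Psi$ only modulo a divergence, via $\SS_\CC\psi_\BB=\SS_\BB\psi_\CC$ (Lemma \ref{lemma:IntegrationbypartsP}). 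This works only because $f^{\aund\bund}=\UU^{(\aund}\LL^{\bund)}$ with constant $\LL$ and $S_\LL\ge0$.
\item Since this term involves only the trapping combinations $\Psi$, it is not coercive in second derivatives outside $[\rhat_1,\rhat_2]$. The $\Th$-computation belongs to the scalar Theorem \ref{Thm:Moraw3}.
\item No correction of $z$ ``vanishing at $\Th=0$'' is needed. Any $z=\De\,g(r)$ makes $\pr_r(z\GG^\aund/\De)\psia$ vanish on trapped null geodesics, because $G=\pr_rG=0$ there. The paper takes $z=\De/r^4$ ($\lz=-a$) for algebraic reasons.
\end{itemize}

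The genuine gap is the lower-order part. With the factored $f^{\aund\bund}$, the coefficient matrices $\AA^{\aund\bund}=\AA^{(\aund}\LL^{\bund)}$ and $\VV^{\aund\bund}=\VV^{(\aund}\LL^{\bund)}$ have rank at most two. They are therefore pointwise nonnegative only if $\VV^\aund$ (resp.\ $\AA^\aund$) is a nonnegative multiple of $\LL^\aund$. For $\LL=(m^2,0,1,0)$ this forces $\VV^2=\VV^4=0$, whereas the mixed term $\VV^2\,T\Zhat\psi\c\psi_\LL$ is present and indefinite. Because $w^{\aund\bund}=-z\pr_r(hf^{\aund\bund})$ is slaved to the single function $h$, Stogin's device of choosing $w$ to make the zeroth-order coefficient nonnegative has no analogue here. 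Pointwise Cauchy--Schwarz on the products $\psia\c\psib$ cannot close either: there is no $|\psi_2|^2$ term, and $TT\psi\c\Zhat\Zhat\psi$ is not a square.

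The missing ingredients in the paper are the following.
\begin{itemize}
\item[(i)] With $z=\De/r^4$ one has $\pr_r(z\GG^1/\De)=0$, so $\Psi$ contains no $TT\psi$. The paper uses $\UUwtp^\aund=(1+\de_0m^2z)\GGtp^\aund[z]-\de_0m^2\GGtp^4[z]\GG^\aund/r^4$ together with a compensating Lagrangian-type current $w_\flat^{\aund\bund}$. This keeps $P$ a sum of squares of $(mT,\Zhat)\Psi_1$ and $(mT,\Zhat)\Psi_2$.
\item[(ii)] The identity $\AA^\bund\psib\c\BB^\aund\psia=S_\BB^{\a\b}S_\AA^{\mu\nu}\pr_\mu\pr_\a\psi\,\pr_\nu\pr_\b\psi+\mathrm{div}$ (Lemma \ref{lemma:IntegrationbypartsI}) reduces positivity to positivity of the tensor $\sum_\aund\HH^\aund S_\aund^{\mu\nu}$.
\item[(iii)] The mixed term is removed by solving for $T\Zhat\psi$ in the formula for $\Psi_1$. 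The resulting $\Psi_1\c\psi_\LL$ pairs $(mT,\Zhat)\Psi_1$ with $(mT,\Zhat)\psi$ after one integration by parts, and is absorbed into $P$. The leftover $|mT\psi|^2+|\Zhat\psi|^2$ is converted into second-order terms using the Poincar\'e-type bound $\int|Z\psi|^2\ge\int|\psi|^2$ (Proposition \ref{Prop:refinedP}). That bound holds only for the non-axisymmetric part; the axisymmetric part is treated separately.
\item[(iv)] The Hardy step becomes a system of Riccati inequalities for continuous $v^\aund$ in $M^{\aund\bund}=v^{(\aund}\LL^{\bund)}\pr_r$. These are solved with piecewise affine functions checked by interval arithmetic. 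One adds $\LL^4=\sigma>0$ to recover $\OO\psi$, and tempers the log-singular $h$ near $\HH$.
\end{itemize}
The restriction $|a|/m\le0.75$ comes from the size of $\VVt^2$ entering (iii)--(iv) (Remark \ref{remark:restriction-am}).
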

 Combining Theorem \ref{Thm:Moraw2} (applied to $(T,\Zhat)^{\leq1}\psi$) with Theorem \ref{Thm:Moraw4}, together with the Lemma below, gives the estimate \eqref{eq:Thm-Morawetz1} of Theorem \ref{Thm:Morawetz1} for $s=2$.

  \begin{lemma}
        \lab{Lemma:calculuslemmaintro}
            The following estimate holds true,
            \beq
            \BEF^2[\psi](\tau_1,\tau_2)\les \sum_{\aund=1}^4  \BEF[\psi_\aund](\tau_1,\tau_2) +\BEF[(T, Z)^{\leq 1}\psi] (\tau_1, \tau_2).
            \eeq
        \end{lemma}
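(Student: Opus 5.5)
The plan is to reduce the estimate to a pointwise (or $L^2$ on spheres) elliptic statement: second derivatives $\dk^{\le 2}\psi$ are controlled by $\SS_\aund\psi$ together with $(T,Z)^{\le1}\psi$ and their first derivatives. The norms $B$, $E$, $F$ are integrals, over $\DD(\tau_1,\tau_2)$, $\Si(\tau)$ and the horizon/future boundary, of weighted squares of first derivatives $(e_3,e_4,\nab,\Rhat)\psi$ and of $\psi$ itself. Hence $\BEF^2[\psi]$ is the same collection of integrals applied to $\dk^{\le1}\psi$, with $\dk\in\{e_3,e_4,r\nab\}$ or an equivalent weighted frame. The task is therefore to bound the first-order quantities of $\dk\psi$ by those of $\psi_\aund$ and of $(T,Z)^{\le1}\psi$.

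The first step is to rewrite the frame derivatives using the symmetries. In Boyer--Lindquist coordinates the horizontal and null derivatives are combinations of $T$, $Z$, $\Rhat$ (or $\pr_r$) and $\pr_\th$. The $T$ and $Z$ pieces of $\dk(\dk\psi)$ are terms like $e_4 T\psi$ or $\nab Z\psi$, which already appear in $\BEF[(T,Z)\psi]$ since $T,Z$ are Killing and commute with the frame up to bounded coefficients. The genuinely new second derivatives are the angular ones, $\nab^2\psi$ modulo $T,Z$, and the radial ones, $\Rhat^2\psi$ and $\Rhat\nab\psi$.

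For the angular part I would use Proposition \ref{prop:decompose-square}. There $\OO$ is the Carter operator, which is, up to $T,Z$ terms, $|q|^2$ times the horizontal Laplacian: $\OO=\pr_\th^2$-type part $+\frac{1}{\sin^2\th}Z^2+2aTZ+a^2\sin^2\th T^2$ plus first order. Integrating by parts on each sphere $S(\tau,r)$ gives $\int |r^2\nab^2\psi|^2\les \int |\OO\psi|^2+|(T,Z)^{\le1}\dk^{\le1}\psi|^2+|\dk\psi|^2$. The sphere-wise elliptic estimate then controls $r\nab(r\nab\psi)$ at the appropriate first-order level, after commuting one more derivative (e.g. $r\nab$ applied to the identity, with commutator terms of lower order). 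The radial second derivatives I would recover from the wave equation itself, as in Proposition \ref{prop:decompose-square}: $|q|^2\square\psi=\pr_r(\De\pr_r\psi)+\frac{1}{\De}\RR'^{\a\b}\pr_\a\pr_\b\psi$-type terms $+\OO\psi$. This expresses $\Rhat^2\psi$ via $\OO\psi$, $T^2\psi$, $TZ\psi$, $Z^2\psi$ and first derivatives, all of which are among the $\psi_\aund$ or in $(T,Z)^{\le1}\psi$. Mixed $\Rhat\nab\psi$ terms then follow by commuting $\Rhat$ with the angular elliptic identity.

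The main obstacle is degeneracies. The weight $\De$ vanishes at the horizon, so solving for $\pr_r^2\psi$ from the equation loses a factor $\De^{-1}$ there. The $B$ norm also has trapping-degenerate parts: $(e_3,e_4)\psi$ and $\nab\psi$ are only controlled off $\DD_{trap}$. Consequently $\BEF$ controls $\Rhat\psi$ everywhere, but tangential derivatives only away from trapping. I would handle the trapping issue by noting that on $\DD_{trap}$ the $B$ norm of $\dk\psi$ only requires $|\Rhat\dk\psi|^2+|\dk\psi|^2$. The $\dk\psi$ part of this is lower order and comes from $B[\psi_\aund]$, since $\OO$, $T^2$, $TZ$, $Z^2$ control $|\dk\psi|^2$ via $L^2(S)$ ellipticity. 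The horizon issue I would handle by working with the regular derivatives $\Rhat$, $e_3$ and $e_4$ instead of $\pr_r$, and keeping only those combinations produced in the regular frame. Should an $e_3 e_3\psi$ term remain uncontrolled at $r=r_+$, I would accept it in the degenerate form, consistent with the degenerate norms used on the right-hand side.
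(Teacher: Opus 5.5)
Your elliptic-plus-equation scheme (Bochner on spheres through $\lapSS\psi=\OO\psi-a^2\sin^2\th\,TT\psi-2a\,TZ\psi$, radial derivatives from \eqref{eq:square-taucoords2}, mixed ones by integration by parts) is the one the paper uses. By itself, however, it only yields the \emph{degenerate} estimate of Lemma \ref{Lemma:calculuslemmadeg}, in which $e_3$ enters only as $r^{-2}\De e_3$.

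The statement concerns the non-degenerate $B,E,F$. Since $\rhat_1>r_+$, the horizon lies in $\DD_{\ntrap}$. So $B[e_3\psi]$ contains $\int r^{-2}|e_3e_3\psi|^2$ near $r_+$, $E[e_3\psi](\tau)$ contains $\int_{\Si(\tau)}|e_3e_3\psi|^2$, and $F[e_3\psi]$ contains $\int_{\HH}|\nab e_3\psi|^2$. Nothing on the right-hand side contains a second transversal derivative; you only have $e_3\psi_\aund$ and $e_3(T,Z)^{\le1}\psi$.

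The equation cannot produce one, because $\HH$ is characteristic. It controls $\Rhat^2\psi$, whose transversal part carries the weight $\De^2$. At $r=r_+$ the equation \eqref{eq:square-taucoords} reduces to the transport equation $\De'(r_+)\pr_{\rt}\psi+2(r_+^2+a^2)T_+\pr_{\rt}\psi=-\lapSS\psi+\dots$ along the generators. This gives $e_4e_3\psi$ on $\HH$ (that is \eqref{eq:firstorderfluxe3}), but neither $e_3e_3\psi$ nor $\nab e_3\psi$. Switching from $\pr_r$ to $\Rhat,e_3,e_4$ does not change this. Accepting $e_3e_3\psi$ ``in degenerate form'' just proves a weaker estimate than the one stated.

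The missing ingredient is the partial red-shift estimate, Proposition \ref{prop:partialredshiftestimate}, applied to $e_3\psi$ and then to $e_3(T,\nab,e_3)\psi$, as in \eqref{eq:redshift1}--\eqref{eq:redshift2}. It exploits $\De'(r_+)=2(r_+-m)>0$ to control the transversal derivatives in the near-horizon bulk, on $\Si(\tau)$, and in the horizon flux. The price is top-order data $E[(e_3,e_4,r\nab)^{\le2}\psi](\tau_1)$, which is why that term appears in the precise version, Lemma \ref{Lemma:calculuslemma}.

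Your bookkeeping is also off by one order. By definition $Q^s[\psi]=\sum_{k\le s}Q[\dk^k\psi]$, so $\BEF^2[\psi]$ contains $\BEF[\dk^2\psi]$, i.e.\ third derivatives of $\psi$, the same order as $\BEF[\psi_\aund]$. It is not just $\BEF[\dk^{\le1}\psi]$. Everything, including the red-shift step, must be carried out one order higher.

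At that order your treatment of $\DD_{trap}$ breaks down. On $\DD_{trap}$, $B[\dk^2\psi]$ contains $\int r^{-4}|\dk^2\psi|^2$ and $\int r^{-2}|\Rhat\dk^2\psi|^2$, including mixed terms such as $\nab T\psi$ and $\Rhat\nab T\psi$. Ellipticity on spheres bounds $\int_S|\nab T\psi|^2$ only through $T\lapSS\psi$, i.e.\ through $T\OO\psi$ and $T(T,Z)^2\psi$. These are not controlled by $B[\psi_\aund]$ on $\DD_{trap}$.

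The paper gets these terms by integrating by parts in $\tau$, via the term $\int h\,\pr_\tau^2 f\,\lapSS f$ with $f=(\pr_\tau,\Rhat,r\nab)^{\le1}\psi$. This trades $|\nab\pr_\tau f|^2$ for $|TTf|\,|\lapSS f|$ plus $\Si(\tau)$ boundary terms, and it is where the energy parts of the right-hand side are used.
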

 The passage from $s=2$ to $s>2$ in Theorem \ref{Thm:Morawetz1} follows from Lemma \ref{lem:higherordercalculuslemma}. In short, the idea is to apply the result for $s=2$ to $(\T, \Z) \psi$ and then use elliptic theory to estimate all other derivatives.
    The $r^p$ version of the estimates in Theorem \ref{Thm:Morawetz1}
can then be derived precisely as in section 10.3 of \cite{GKS}.


    \subsection{Main ideas in the proof of Theorems \ref{Thm:Moraw2}--\ref{Thm:Moraw4}}

    The standard way to prove a Morawetz estimate for solutions of $\square_{a, m}\psi=N$ is to apply the divergence theorem
    in $\DD(\tau_1, \tau_2)$ to an appropriate current $\PP$ obtained, as in \eqref{eq:Gen-current}, based on the energy-momentum tensor $\QQ$ of $\square_{a,m}$, a vectorfield $X$
    and scalar $w$. Thus, denoting as in \eqref{definition-EE-gen1} (with $M=0$),
$\EE[\psi][X, w]:= \D^\mu \PP_\mu[X, w] - \left(X(\psi)+\frac 1 2 w \psi\right)\c \square_{a, m} \psi$
one derives an identity of the form, see \eqref{identity:prop.Morawetz1},
\beq
\lab{eq:identityPP-intro}
|q|^2\EE[X, w]  =\AA |\pr_r\psi|^2 +P+\VV |\psi|^2, \qquad P= \UU^{\a\b}\D_\a \psi \D_\b \psi.
    \eeq
    The goal then is to choose an appropriate vectorfield $X$, proportional to $\pr_r$ and scalar $ w$ so that, modulo other divergence terms, $\AA, \VV$ are strictly positive and $P $ non-negative.
    The expectation is that $P$ degenerates on the trapping set. Another important consideration is that the boundary terms, generated by the divergence theorem, be, at least in principle, commensurate with
    the energy flux generated by the causal vector-field $\That$. This imposes serious restrictions on the asymptotic behavior of $X$ and $w$ as $r\to \infty$.
    For convenience, we refer to these two properties as necessary to achieve \textit{partial coerciveness}.

    We typically choose $X= -zfh \pr_r$ and $w=-z\pr_r(fh)$ in which case, see section \ref{sect:Basicscalarident},
    \beq
\lab{eq:coeeficientsUUAAVV-intro}
\bsplit
\AA&=-z^{1/2}\Delta^{3/2} \partial_r(h \frac{ z^{1/2}  f }{\Delta^{1/2}} ),   \\
\UU^{\a\b}&=   \frac{ 1}{2}  h f \pr_r( \frac z \De\GG^{\a\b}),\\
\VV&=   \frac 1 4\pr_r\Big(\De \pr_r \big(
z \pr_r ( h f )  \big)  \Big)
\end{split}
\eeq
where see Lemma \ref{lemma:inversemetricexpressioninKerr},
\beq
\lab{eq:GG-intro}
\GG^{\a\b}=\GG^{\aund}S_{\aund}^{\a\b}= -(r^2+a^2) ^2\That^\a \That^\b  +\De O^{\a\b}.
\eeq
with $
\GG^1=-(r^2+a^2)^2, \quad \GG^2 = -2a(r^2+a^2), \quad \GG^3 =-a^2, \quad \GG^4=\De.
$

It can be shown, even in the far simpler case when $a=0$, or in the case of axially symmetric solutions, that partial coerciveness cannot be achieved, unless one either gives up on the smoothness of $X$
    and $w$, i.e. allow them to blow up at $r=r_+$, or one relies on a modified $\PP=\PP[X, w, M]$, depending on an additional $1$-form $M$, as in \eqref{eq:Gen-current},
     which is used to rebalance the integrals of $I=\AA |\pr_r\psi|^2$ and $ J= \VV |\psi|^2$.
     Indeed choosing $M = v(r,\th) \pr_r$, for some function $v=v(r,\th)$
     the new divergence identity takes the form, see \eqref{expression-Div-M-I},
     \beq\lab{expression-Div-M-I-intro}
          \bsplit
     &|q|^2\EE[X, w, M] \\
     &\quad =\AA |\pr_r\psi|^2 +P+\VV |\psi|^2+\frac 1 4 |q|^2\Big( 2 v(r,\th)\psi\c \nab_r \psi + (\pr_r v+ \frac{2r}{|q|^2} v) |\psi|^2 \Big).
     \end{split}
     \eeq
     One can then look for a function $v$ which achieves the balancing. We refer to this choice of $v$ as
      a \textit{Hardy type} procedure.

     In the general case, even for small $|a|/m$, it is also known (see \cite{A}) that no choice of $f, z, h$ or $v$ can achieve the positivity of $P$. It is for this reason that Theorem \ref{Thm:Moraw2} can only be conditional.

    \subsubsection{The case of  axial symmetry}
    \lab{sec:intro-axisym}
      In the special case when $\psi$ is axially symmetric, or $a=0$, when the trapped set reduces to the surface $\TT=0$, with
      $\TT= r^3 - 3 mr^2 +ra^2 + m a^2$, one can choose\footnote{ Following Stogin, we call $z$ a ``trapped geodesic potential" in view of the fact that $\pr_rz$ vanishes along \textit{some} trapped null geodesic.} $z=\frac{\De}{(r^2+a^2)^2}$ and $ f=\pr_r z=-\frac{2\TT}{ (r^2+a^2)^3}$
    such that $P\ge 0$, vanishing only on the trapped set. Indeed, for that choice of $z$ and $f$
    \beq
 \lab{eq:P-firstMorawetz}
P = \frac 1 2 h f  \left(  -\frac{2\TT}{ (r^2+a^2)^3} O^{\a\b} \D_\a \psi\c\D_\b\psi+ \frac{4ar}{(r^2+a^2)^2} \That\psi \c\Z \psi\right).
 \eeq
and, since $\Z\psi=0$ and $O^{\a\b}\D_\a\psi \D_\b\psi$ is positive definite (see Proposition \ref{prop:decompose-square}\,), one can choose $h$  positive  to derive the  non-negativity  of $P$.

     That choice, together with an appropriate choice of $h$, also works to ensure the positivity of $\AA$. The difficulty then is to also ensure, for some choice of $v$, the positivity of the quadratic form
    \[
    \AA |\pr_r\psi|^2 + \VV |\psi|^2+\frac 1 4 |q|^2\Big( 2 v(r,\th)\psi\c \nab_r \psi + (\pr_r v+ \frac{2r}{|q|^2} v) |\psi|^2 \Big).
    \]
    This is already non-trivial for $|a|/ m\ll 1$, see chapter 7 in \cite{GKS}, but much harder in the full sub-extremal case. The main cause of the difficulty is the complicated form of the expression $ \VV= \frac 1 4\pr_r\Big(\De \pr_r \big(
z \pr_r ( h f ) \big) \Big)$ which involves up to 4 derivatives of $z$.

     In his thesis Stogin introduced a different approach based on the idea of ensuring first the non-negativity of $\VV$, by choosing
     $u=h f$ appropriately so that $z\pr_r{u}$ is constant near the event horizon. The choice ensures the positivity of $\AA$ and non-negativity of $\VV$, at the price that $u$ blows up logarithmically near $r=r_+$. To also control the crucial term $P$, we need to choose $u$ such that it has the same sign as $-\TT$ and this can easily be arranged by choosing an appropriate integration constant.
      Remarkably Stogin's method works for the entire sub-extremal case and does not require the Hardy procedure\footnote{However a simpler local Hardy estimate is needed to fix the logarithmic divergence of the vectorfield $X$ at the horizon, see Lemma \ref{lem:localHardy}. We note that \cite{MMTT} also works with such vectorfields.}.

     \subsubsection{Comments on the proof of Theorem \ref{Thm:Moraw2}}
    \lab{section-Thm:Moraw2-intro}

In our proof of Theorem \ref{Thm:Moraw2} we follow Stogin's procedure but make a different choice
of the function $z$. Choosing $z=\frac{\De}{r^4}$ and $f=\pr_r z=-2\frac{\TT_{-a}}{ r^6}$, $P$ takes the form
\beq
\lab{eq:P-introd}
P=hf\Big( -\frac{2\TT_{-a}}{ r^6}   O^{\a\b}\D_\a\psi\D_\b \psi + \frac{4a}{r^3} \frac{r^2+a^2}{r^2}\That \psi\c \Zhat \psi\Big).
\eeq
To derive the estimate \eqref{eq:Thm-Moraw2} one needs to modify the choice of $w$ in the definition of the current $\PP$ to allow the presence of $|\That \psi|^2 $ in the above formula for $P$, that is by making an appropriate \textit{Lagrangian modification}\footnote{ Modifications of $\PP$ by a multiple of the Lagrangian does not affect the trapped set. Indeed, in the high frequency approximation null Lagrangians correspond to \textit{null} geodesics. } of the original $\PP$.
One can then apply Cauchy--Schwarz for the integral of the term in $\That \psi\c \Zhat \psi$
and absorb the term in $\That \psi$ to the left.
\begin{remark}
The reason we prefer the choice $z=\frac{\De}{r^4}$, rather than the old choice,
is intimately tied to the polynomial $\TT_{-a}$ which appears both in the derivative of $z$ and the characteristic polynomial $\Th$, see \eqref{eq:charact-polynomial}.

\end{remark}

     \subsubsection{Comments on the Proof of Theorem \ref{Thm:Moraw3}}

     The positivity of the quadratic term in $\That\psi, \nab\psi$ in the Morawetz bulk on the left hand side of the estimate \eqref{eq:Thm-Moraw3} corresponds exactly to the positivity of $\Theta$ (see \eqref{eq:charact-polynomial})
away from the trapping set. The proof, given in section \ref{section:Largemodified}, requires an appropriate Lagrangian modification of the current $\PP$, to introduce $| \T\psi|^2$
 in the formula for $P$ and a \textit{completing the square} argument
 for the corresponding quadratic form, based on the positivity of $\Th $ in the complement of the trapping set.
  See remark \ref{rem:choice-a} for further details.

    \subsubsection{Comments on the proof of Theorem \ref{Thm:Moraw4} (high frequency)}

    The principal conceptual difficulty for deriving a coercive bulk Morawetz estimate based on the approach discussed in
     the proof of Theorem \ref{Thm:Moraw2} is due to the form \eqref{eq:P-introd} of $P$ which fails to have a favorable sign in the trapping set $[\rhat_1, \rhat_2]$.

      The crucial observation in the approach initiated in \cite{AB} is that, modulo a divergence term, $P$ can be rewritten in the form
     \[
     P=- h f r^{-3} \Big(  -\frac{2\TT_{-a}}{ r^3} \OO\psi+ 4 a \frac{r^2+a^2}{r^2}\That \Zhat \psi\Big)\psi= h f r^{-3}\Psi \c \psi
     \]
     with
     \beq
     \lab{eq:defPsi-intro}
     \Psi=\Big( -\frac{2\TT_{-a}}{ r^3} \OO+ 4 a \frac{r^2+a^2}{r^2}\That \Zhat \Big)\psi
     \eeq
     vanishing on the trapping set. We refer the reader to Remark \ref{remark:TT_lz-TT_a} for a proof of this important fact.

     The above form of $P$ suggests the possibility that a coercive bulk estimate can be derived for the second order derivatives
     $\psi_\aund=\SS_\aund \psi$, in such a way that the new expression for $P$ is quadratic in $\pr \Psi$ rather than $\pr \psi$, and non-negative.
     To achieve that one makes use of the generalized current $\PP_\mu$, associated with the $4$-tuple $(\psi_1, \psi_2, \psi_3, \psi_4)$, described in section \ref{section:SS-valuedidentity}, a family of vector-fields
       $X^{\aund\bund}=\FF^{\aund\bund}\pr_r$ and scalars $w^{\aund\bund} =- z\pr_r (h f^{\aund\bund})$, with $z=\frac{\De}{r^4}$ as before. The identities
       \eqref{eq:identityPP-intro} and \eqref{eq:coeeficientsUUAAVV-intro} are then replaced by
       \beq
       \lab{eq:EE[X, w]-intro}
|q|^2\EE[X, w]   =\AA^{\aund\bund}  \pr_r\psia\c \pr_r\psib + \UU^{\a\b\aund\bund} \, \D_\a \psia \c \D_\b \psib  +\VV^{\aund\bund} \psia\c\psib
\eeq
where
\begin{align*}
\AA^{\aund\bund}&=-z^{1/2}\Delta^{3/2} \partial_r(h \frac{ z^{1/2}  f^{\aund\bund} }{\Delta^{1/2}}  ),\\
\UU^{\a\b\aund\bund}&=   \frac{ 1}{2}  h f^{\aund\bund} \pr_r( \frac z \De\GG^{\a\b})=\frac{ 1}{2}  h f^{\aund\bund} \pr_r( \frac z \De\GG^\cund)S_\cund^{\a\b},\\
\VV^{\aund\bund}&=\frac 1 4 \pr_r \Big(\De \pr_r \big( z \pr_r( h  f^{\aund\bund} ) \big)\Big).
\end{align*}
Because of the term $\pr_r\left(\frac{z}{\Delta}\GG^{\cund}\right)
$ in $\UU^{\a\b\aund\bund}$, we choose $ f^{\aund\bund} $, roughly of the form,
\[
f^{\underline{a}\underline{b}}= \UU^{(\underline{a}} \LL^{\underline{b})}= \frac 1 2 \big(\UU^{\aund}\LL^{\bund}+\UU^{\bund}\LL^{\aund} \big),\quad \UU^{\aund} =\pr_r\left(\frac{z}{\Delta}\GG^{\aund}\right)
\]
where $\LL^{\aund} $ are non-negative constants, such as $\LL=(m^2, 0, 1, 0)$. We note that $\UU^\aund\psi_\aund=\Psi$ where $\Psi$ is precisely the trapping quantity in \eqref{eq:defPsi-intro}. Therefore, the above choice of $f^{\aund\bund}$ (which is proportional to $\UU^\aund$) ensures that, modulo a divergence term,
\[
 \UU^{\a\b\aund\bund} \, \D_\a \psia \c \D_\b \psib=\frac12h\UU^\aund\LL^\bund\UU^\cund S_\cund^{\a\b}\D_\a\psia\c\D_\b\psib=\frac12h\LL^\bund S_\bund^{\a\b}\D_\a(\UU^\aund\psi_\aund)\c\D_\b(\UU^\cund\psi_\cund)
 \]
is indeed a positive quadratic form in $\pr\Psi$ (provided that $h$ is positive).
The above choice of $f^{\aund\bund}$ also ensures that the computation of the coefficients of $\AA^{\aund\bund}, \VV^{\aund\bund}$
follows the same pattern as in the scalar case, with $hf$ replaced by $h\UU^\aund$.

We still have freedom in choosing the positive function $h$. We can choose it appropriately (e.g. $h=r^5$) such that $\AA^{\aund\bund}\pr_r\psia\c\pr_r\psib$ is positive. More specifically
\[
\AA^{\aund\bund}\pr_r\psia\c\pr_r\psib=\AA^\aund\pr_r\psia\c\LL^\bund\psib,\quad \AA^\aund\psia=\frac{\De^2}{r^4}(\frac{8a^2}{r}\Zhat\Zhat\psi+\frac{2(3mr-4a^2)}{r}\OO\psi).
\]

\subsubsection{Comments on the proof of Theorem \ref{Thm:Moraw4} (low frequency)}
\lab{subsection:lowfreq-intro}

We are left with the most technical part of the proof, which consists of estimating the lower order term
 $\VV^{\aund\bund} \psia\c\psib$. Unfortunately, with the above choice $f^{\aund\bund}, h$, the term $\VV^{\aund\bund} \psia\c\psib$ is not necessarily positive. In fact, we have\footnote{From here it is easy to derive control of the very low (near stationary) time frequencies.}
\[
\VV^{\aund\bund} \psia\c\psib=\VV^\aund\psia\c\LL^\bund\psib,\quad \VV^\aund\psia=\VV^2T\Zhat\psi+\VV^4\OO\psi.
\]
Due to the presence of the term $\VV^2T\Zhat\psi$, we see that $\VV^{\aund\bund} \psia\c\psib$ has an indefinite sign (even when $\VV^4$ is positive). To handle the term $\VV^2T\Zhat\psi$ we make use of the form of the expression of $\Psi$ (see \eqref{eq:defPsi-intro})
\[
\Psi=\frac{4a}{r^3}T\Zhat\psi+\frac{4a^2}{r^5}\Zhat\Zhat\psi+\frac{-2\TT_{-a}}{r^6}\OO\psi,
\]
which  can be controlled, in view of  the positivity   $\UU^{\a\b\aund\bund} \, \D_\a \psia \c \D_\b \psib$.
Remarkably, the indefinite expression $\Psi$ provides important information even in the trapping set\footnote{Typically one uses only the non-negativity of the trapping term, while here we make full use of its structure.}.
 Accordingly we can rewrite $\VV^2T\Zhat\psi$ in terms of $\Psi$ and the more manageable terms $ \Zhat\Zhat\psi, \OO\psi$, see Proposition \ref{Prop:refinedP}.

  With these observations, multiple integration by parts and a lot of straightforward algebra, we can reduce to a situation similar to that of the scalar case discussed above around \eqref{expression-Div-M-I-intro}.
   More precisely,
\[
I=\AA|\pr_r\widetilde{\psi}|^2,\quad J=\VV|\widetilde{\psi}|^2,\quad \widetilde{\psi}=m\T\Zhat\psi, \ \Zhat\Zhat\psi, \ m\T|q|\nab\psi, \ \Zhat|q|\nab\psi
\]
where $\AA\geq 0$ and $\VV$ is only negative in a compact region. To handle the negativity of $\VV$, we adopt the \textit{Hardy type} procedure discussed in the scalar case. That is, we rely on a modified $\PP=\PP[X, w, M]$, depending on an additional $1$-form $M=v^{\aund\bund}\pr_r$, as in \eqref{eq:generalizedcurrent}, to rebalance the integrals of $I=\AA |\pr_r\widetilde{\psi}|^2$ and $ J= \VV |\widetilde{\psi}|^2$. Finally, to recover control of the component $TT\psi$, we need a Lagrangian correction to \eqref{eq:EE[X, w]-intro}.

The detailed proof of Theorem \ref{Thm:Moraw4}, given in sections \ref{section-proof-mor-2} and \ref{sec:proofofSSMor}, is technically more involved. This is in fact the only place in the paper we
rely on Mathematica to deal with the algebraic complexity of the calculations involved.


     \subsection{Acknowledgements}

     Both authors are currently funded by the NSF grant DMS-2453843. S. Klainerman
     was previously funded by the NSF grant DMS-220103. L. He was supported by the Shiing-Shen Chern Foundation for Mathematical Research Fund during her stay at IAS in the 2024-25 academic year. Both authors are grateful for the hospitality of IHES in France
     and Tsinghua University in Beijing.


    \section{Preliminaries}\lab{section:preliminaries}


\subsection{Basic facts about the Kerr metric}
\lab{section:Kerr-metric}
The Kerr metric $\g=\g_{a,m}$ in Boyer-Lindquist coordinates
is given by
\beq
\lab{eq:coordsBL}
\g =    -\frac{\left(\Delta-a^2\sin^2\theta\right)}{|q|^2}dt^2-
\frac{ 4 amr }{|q|^2}   \sin^2\theta     dt  d\phi+\frac{|q|^2}{\Delta}dr^2+ |q|^2 d\theta^2+
\frac{ \Si^2}{|q|^2}\sin^2\theta
d\phi^2
\eeq
where $q=r+ i a \cos\th,$
and
\[
\begin{cases}
    \Delta &= r^2-2mr+a^2,\\
    |q|^2 &= r^2+a^2(\cos\theta)^2,\\
    \Sigma^2 &= (r^2+a^2)|q|^2+2mra^2(\sin\theta)^2=(r^2+a^2)^2-a^2(\sin\theta)^2\Delta.
\end{cases}
\]
Note that $\g_{tt} \g_{\phi \phi}- \g_{t\phi}^2 =-\De \sin^2 \th$ and that
the non-vanishing components of the inverse metric are given by
\beq
\lab{eq:inversemetric-Kerr}
\bsplit
\g^{00}&=-\frac{\Si^2}{|q|^2 \De}, \qquad
\g^{0\phi}=-\frac{2 a mr }{|q|^2\De},\qquad
\g^{\phi\phi} =\frac{\De- a^2 \sin^2\th}{|q|^2 \De \sin^2 \th}, \\
\g^{rr}&=\frac{\De}{|q|^2}, \qquad \g^{\th\th}=\frac{1}{|q|^2}.
\end{split}
\eeq
The volume element $d\mu$ of $\g$ is given by
\[
d\mu=|q|^2 \sin\th dt dr d\th d\phi, \qquad \sqrt{|g|} =|q|^2 \sin\th.
\]
The roots of the quadratic polynomial $\De $ are $r_{-}< r_+$ with $r=r_+=m+\sqrt{m^2-a^2}$
denoting the event horizon $\HH$.

The Killing vectorfield $T=\pr_t$ is causal in the region for which $ \frac{\left(\Delta-a^2\sin^2\theta\right)}{|q|^2}\ge 0$
and spacelike in its complement called ergoregion. Following Chapter 3 in \cite{GKS}, we also introduce
the vectorfields
\beq
\lab{define:That-Rhat}
\That=\T+\frac{a}{r^2+a^2} \Z, \qquad \Rhat=\frac{\De}{r^2+a^2}\pr_r.
\eeq
Note that $\Rhat$ is regular at the horizon, as opposed to $\partial_r$. Unlike $\T$, which is spacelike in the ergoregion,
the vectorfield $\That$ is null along the horizon and time-like in the domain of outer communication. Along the horizon we have, with $\om_{+}=\frac{a}{r_+^2+a^2}$,
\beq
\lab{def:T_{om_+}}
\That|_{\HH} = T_+:=T +\om_{+} Z
\eeq

Besides the Killing vectorfields $T=\pr_t$ and $Z=\pr_\phi$ the Kerr metric
also admits a higher order symmetry connected to the Carter tensor. This can be best seen from the form of the inverse metric.

\begin{lemma}
\lab{lemma:inversemetricexpressioninKerr}
The inverse Kerr metric can be written in the form
\beq\lab{inverse-metric}
|q|^2 \g^{\a\b}=\De \pr_r^\a \pr_r^\b +\frac{1}{\De}\GG^{\a\b}
\eeq
with
\beq
\lab{eq:expressionRR-O}
\bsplit
\GG^{\a\b}&= -(r^2+a^2) ^2\pr_t^\a\pr_t^\b- 2a(r^2+a^2) \pr_t^{(\a} \pr_\phi^{\b)}- a^2 \pr_\phi^\a \pr_\phi^\b  +\De O^{\a\b},\\
&= -(r^2+a^2) ^2\That^\a \That^\b  +\De O^{\a\b}
\end{split}
\eeq
where
\beq
\lab{equation:def-O}
O^{\a\b}= \pr_\th^\a \pr_\th^\b +\frac{1}{\sin^2\th} \pr_\phi^\a \pr_\phi^\b +  2 a \pr_t^{(\a} \pr_\phi^{\b)}+ a^2\sin^2 \th \pr_t^\a \pr_t^\b.
\eeq
Thus, also
\beq
\lab{inverse-metric-vfs}
|q|^2 \g^{\a\b}=\frac{(r^2+a^2)^2}{\De} \big( -\That^\a\That^\b +      \Rhat^\a \Rhat^\b\big) + O^{\a\b}.
\eeq
\end{lemma}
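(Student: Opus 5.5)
The plan is a direct computation of the inverse metric from the Boyer--Lindquist components \eqref{eq:inversemetric-Kerr}, followed by regrouping. First I would write
\[
|q|^2\g^{\a\b}=\De\,\pr_r^\a\pr_r^\b+\pr_\th^\a\pr_\th^\b-\frac{\Si^2}{\De}\pr_t^\a\pr_t^\b-\frac{4amr}{\De}\pr_t^{(\a}\pr_\phi^{\b)}+\frac{\De-a^2\sin^2\th}{\De\sin^2\th}\pr_\phi^\a\pr_\phi^\b .
\]
The $rr$ term is already in the required form. To prove \eqref{inverse-metric}, I would multiply the remaining terms by $\De$ and compare them, coefficient by coefficient, with $\GG^{\a\b}$ in \eqref{eq:expressionRR-O}.

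There are four coefficients to check.
\begin{itemize}
\item $\th\th$: both sides give $\De$.
\item $tt$: I need $-\Si^2=-(r^2+a^2)^2+\De a^2\sin^2\th$. This is exactly the second expression for $\Si^2$ recorded after \eqref{eq:coordsBL}.
\item $t\phi$: I need $-4amr=-2a(r^2+a^2)+2a\De$. This holds because $\De-(r^2+a^2)=-2mr$.
\item $\phi\phi$: I need $\frac{\De}{\sin^2\th}-a^2=-a^2+\frac{\De}{\sin^2\th}$, which is immediate.
\end{itemize}
This proves the first line of \eqref{eq:expressionRR-O}.

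For the second line, I would expand $(r^2+a^2)^2\That^\a\That^\b$ using $\That=\T+\frac{a}{r^2+a^2}\Z$. This gives
\[
(r^2+a^2)^2\pr_t\pr_t+2a(r^2+a^2)\pr_t^{(\a}\pr_\phi^{\b)}+a^2\pr_\phi\pr_\phi,
\]
with the normalization convention for the symmetrized product matching the one used in the first line. So the first three terms of the first line are exactly $-(r^2+a^2)^2\That^\a\That^\b$.

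Finally, \eqref{inverse-metric-vfs} follows by substituting $\Rhat=\frac{\De}{r^2+a^2}\pr_r$. Then $\De\,\pr_r^\a\pr_r^\b=\frac{(r^2+a^2)^2}{\De}\Rhat^\a\Rhat^\b$. Dividing the $\GG$ expression by $\De$ gives $-\frac{(r^2+a^2)^2}{\De}\That^\a\That^\b+O^{\a\b}$.

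The main obstacle is purely bookkeeping: keeping the symmetrization convention $\pr_t^{(\a}\pr_\phi^{\b)}$ consistent, so that the factor $2$ in the cross terms is not lost or doubled, and using the identity for $\Si^2$ correctly. There is no analytic difficulty.
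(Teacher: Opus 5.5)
Your proof is correct and is the paper's own argument: the paper just says the lemma follows immediately from the Boyer--Lindquist components \eqref{eq:inversemetric-Kerr} and the definitions of $\That$, $\Rhat$ and $O$. Your coefficient-by-coefficient check is exactly the verification left implicit there, using $\Si^2=(r^2+a^2)^2-a^2\sin^2\th\,\De$, $r^2+a^2-\De=2mr$, and the half-symmetrization convention consistently.
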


\begin{proof}
Follows immediately from \eqref{eq:inversemetric-Kerr} and the definitions of $\That, \Rhat$ and $O$.
\end{proof}

\begin{proposition}
\lab{prop:decompose-square}
The following statements hold true.
\begin{enumerate}
\item
In BL coordinates the scalar wave operator in $\KK(a, m)$ takes the form
\beq
\lab{eq:decompose-square}
|q|^2 \square =\RR+\OO
\eeq
where $\OO, \RR$ are the second order operators
\beq
\lab{eq:RR-OO}
\bsplit
\RR&=
\pr_r\big(\De \pr_r \big) - \frac{1}{\De}\Big((r^2+a^2)^2 \pr_t^2 +2  a(r^2+a^2)  \pr_t\pr_\phi+a^2 \pr_\phi^2\Big),\\
\OO&=  a^2 \sin^2 \th \pr_t^2+ 2 a \pr_t\pr _\phi +\frac{1}{\sin^2\th} \pr_\phi^2 + \frac{1}{\sin\th}  \pr_\th \big(\sin\th \pr_\th \big)\\
&=\frac{1}{\sin \th}\pr_\a(\sin \th O^{\a\b}\pr_\b).
\end{split}
\eeq
\item $\OO$ commutes with $\RR$ and therefore also with $|q|^2 \square$.
\item Denoting by $\D$ the covariant derivative operator in $\KK(a,m)$,
\beq
\lab{eq:OO-tensorial}
\OO\psi= |q|^2 \D_\a\big( |q|^{-2} O^{\a\b} \D_\b \psi\big).
\eeq
\end{enumerate}
\end{proposition}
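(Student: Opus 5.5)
We prove the three statements of Proposition \ref{prop:decompose-square} in order, starting from Lemma \ref{lemma:inversemetricexpressioninKerr} and the coordinate formula $\square\psi=\frac{1}{\sqrt{|g|}}\pr_\a\big(\sqrt{|g|}\,\g^{\a\b}\pr_\b\psi\big)$, with $\sqrt{|g|}=|q|^2\sin\th$.

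\textbf{Part 1.} The key point is that the factor $|q|^2$ in $\sqrt{|g|}$ cancels against the one in \eqref{inverse-metric}. Indeed,
\[
|q|^2\square\psi=\frac{1}{\sin\th}\pr_\a\big(\sin\th\, |q|^2\g^{\a\b}\pr_\b\psi\big)
=\frac{1}{\sin\th}\pr_\a\Big(\sin\th\big(\De\pr_r^\a\pr_r^\b+\De^{-1}\GG^{\a\b}\big)\pr_\b\psi\Big).
\]
We split $\De^{-1}\GG^{\a\b}$ as $-\De^{-1}\big[(r^2+a^2)^2\pr_t\pr_t+2a(r^2+a^2)\pr_t\pr_\phi+a^2\pr_\phi\pr_\phi\big]+O^{\a\b}$, using \eqref{eq:expressionRR-O}, and treat the resulting terms separately.
\begin{itemize}
\item The $\pr_r$ term gives $\pr_r(\De\pr_r\psi)$, since $\sin\th$ does not depend on $r$.
\item In the $t,\phi$ terms the coefficients depend only on $r$, while the derivatives are $\pr_t,\pr_\phi$. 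These terms therefore pass through unchanged and give the second part of $\RR$.
\item The $O^{\a\b}$ term gives $\frac{1}{\sin\th}\pr_\a(\sin\th\,O^{\a\b}\pr_\b\psi)$. Since the coefficients of $O$ depend only on $\th$, only the $\pr_\th\pr_\th$ component involves a nontrivial derivative of a coefficient. This yields $\frac1{\sin\th}\pr_\th(\sin\th\pr_\th\psi)$ plus the remaining terms with constant-in-$(t,\phi)$ coefficients, which is exactly $\OO$ in \eqref{eq:RR-OO}.
\end{itemize}

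\textbf{Part 2.} We use that $\RR$ involves only $r$-dependent coefficients and the derivatives $\pr_r,\pr_t,\pr_\phi$, while $\OO$ involves only $\th$-dependent coefficients and the derivatives $\pr_\th,\pr_t,\pr_\phi$. Since $\pr_t,\pr_\phi$ commute with everything here, the two operators act on independent variables, and $[\RR,\OO]=0$. It then follows that $[\OO,|q|^2\square]=[\OO,\RR+\OO]=0$.

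\textbf{Part 3.} We need the identity $|q|^2\D_\a(|q|^{-2}O^{\a\b}\D_\b\psi)=\frac1{\sin\th}\pr_\a(\sin\th\,O^{\a\b}\pr_\b\psi)$. For a vectorfield $V$ we have $\D_\a V^\a=\frac{1}{|q|^2\sin\th}\pr_\a(|q|^2\sin\th\,V^\a)$. We apply this with $V^\a=|q|^{-2}O^{\a\b}\pr_\b\psi$. Then $|q|^2 V^\a=O^{\a\b}\pr_\b\psi$, and multiplying the divergence by $|q|^2$ gives exactly $\frac1{\sin\th}\pr_\a(\sin\th\,O^{\a\b}\pr_\b\psi)=\OO\psi$. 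This is precisely why the weight $|q|^{-2}$ appears in \eqref{eq:OO-tensorial}.

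None of these steps is a real obstacle; the computations are direct. The only points needing care are the exact cancellation of the $|q|^2$ factors in parts 1 and 3, and checking in part 1 that the symmetrized mixed terms $2a\,\pr_t^{(\a}\pr_\phi^{\b)}$ in $O^{\a\b}$ produce the coefficient $2a\,\pr_t\pr_\phi$ in $\OO$.
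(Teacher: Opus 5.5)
Your proof is correct. It differs from the paper's appendix argument mainly in where the algebra is done.

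\textbf{Part 1.} The paper writes $\square\psi$ from the explicit components \eqref{eq:inversemetric-Kerr} and multiplies by $|q|^2$. It then regroups the $\pr_t^2$, $\pr_t\pr_\phi$, $\pr_\phi^2$ coefficients by hand, e.g.\ $-4amr/\De=-2a(r^2+a^2)/\De+2a$ and $(\De-a^2\sin^2\th)/(\De\sin^2\th)=1/\sin^2\th-a^2/\De$, to separate $\RR$ from $\OO$. You instead feed the already-split form of Lemma \ref{lemma:inversemetricexpressioninKerr} into the divergence formula, so the split is read off term by term. The second expression $\frac{1}{\sin\th}\pr_\a(\sin\th\, O^{\a\b}\pr_\b)$ for $\OO$ then comes out directly, which the paper's appendix never checks explicitly. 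The regrouping has not disappeared, though: it is exactly the content of the lemma you invoke.

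\textbf{Part 3.} The paper goes through $\D_\a(|q|^2\g^{\a\b}\D_\b\psi)-\D_\a(|q|^2)\g^{\a\b}\D_\b\psi$ and splits with \eqref{inverse-metric-vfs}. It then identifies the $O$-piece plus a $\pr_\th(|q|^2)$ correction with $\OO\psi$ by matching against the remainder. Your use of $\D_\a V^\a=(|q|^2\sin\th)^{-1}\pr_\a(|q|^2\sin\th\, V^\a)$ with $V^\a=|q|^{-2}O^{\a\b}\pr_\b\psi$ is shorter. It avoids identifying the $\RR$-remainder and makes clear why the weight $|q|^{-2}$ is needed.

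\textbf{Part 2.} The paper gives no argument beyond ``straightforward verification''. Your observation supplies one correctly: $\RR$ has only $r$-dependent coefficients with $\pr_r,\pr_t,\pr_\phi$, while $\OO$ has only $\th$-dependent coefficients with $\pr_\th,\pr_t,\pr_\phi$.
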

\begin{proof}
The proof is a straightforward verification. We give details in appendix \ref{appendix:decompose-square}.
\end{proof}

The importance of the decomposition \eqref{inverse-metric} is due to the fact that $\GG^{\a\b}$ can be written in terms of the tensors $\partial_t^\a \partial_t^\b$, $\partial_t^{(\a} \partial_\phi^{\b)}$, $\partial_\phi^\a \partial_\phi^\b$ and $O^{\a\b}$.

\begin{definition}\lab{definition:tensors-S}
We define the following symmetric spacetime 2-tensors
\beq
\lab{def:S_aund}
\bsplit
S_1^{\a\b}:= \T^\a \T^\b,\quad
S_2^{\a\b}:=  \T^{(\a} \Z^{\b)}, \quad
S_3^{\a\b}:= \Z^\a \Z^\b,\quad
S_4^{\a\b}:=O^{\a\b}.
\end{split}
\eeq
We denote the set of the above tensors as $S_\aund$, for $\aund=1,2,3,4$. We also define the associated second order operators,
\beq
\lab{def:SS_aund}
\SS_\aund\psi:=|q|^2 \D_\mu \big(|q|^{-2}  S^{\mu\nu}_\aund \D_\nu\psi\big)
\eeq
Clearly $\big[|q|^2\square_{a,m}, \SS_\aund\big]=0$ and $ [\SS_\aund, \SS_\bund]=0$.
Moreover, for $\aund=1,2,3$, we have $\SS_\aund\psi= S^{\mu\nu}_\aund\pr_\mu \pr_\nu \psi. $
\end{definition}

With the above definition, from \eqref{eq:expressionRR-O} we write
\[
\GG^{\a\b}= -(r^2+a^2) ^2S_1^{\a\b}- 2a(r^2+a^2)S_2^{\a\b}- a^2S_3^{\a\b}  +\De S_4^{\a\b}.
\]
More compactly, using the repetition in $\aund$ to signify summation over $\aund=1,2,3,4$,
\[
\GG^{\a\b} =\GG^\aund  \Sa^{\a\b},
\]
with $\GG^\aund$, $\aund=1,2,3,4$, given by
\beq\lab{components-RR-aund}
\GG^1=-(r^2+a^2)^2, \quad \GG^2 = -2a(r^2+a^2), \quad \GG^3 =-a^2, \quad \GG^4=\De.
\eeq
Thus, in view of \eqref{eq:expressionRR-O}
\beq
\lab{eq:RR-Sa}
\GG^{\a\b}=\GG^\aund  \Sa^{\a\b}= -(r^2+a^2) ^2\That^\a \That^\b  +\De O^{\a\b}.
\eeq
\begin{remark}
Note that $\SS_1, \SS_2, \SS_3$ commute with $|q|^2$.
\end{remark}

\subsection{Ingoing PN frame}
\lab{section:ingoingPNframe}


The ingoing PN frame (with $\D_3 e_3=0$), regular towards the future for all $r>0$, is given by (see section 3.3 in \cite{GKS})
\beq
\lab{eq:null-pair-in}
e_4=\frac{r^2+a^2}{|q|^2} \pr_t +\frac{\De}{|q|^2} \pr_r +\frac{a}{|q|^2} \pr_\phi, \qquad
e_3   =\frac{r^2+a^2}{\De} \pr_t -\pr_r +\frac{a}{\De} \pr_\phi.
\eeq
We complete the PN frame with the following specific choice
of horizontal frames $e_1, e_2$,
\beq
\lab{eq:canonicalHorizBasisKerr}
e_1=\frac{1}{|q|}\pr_\th,\qquad e_2=\frac{a\sin\th}{|q|}\pr_t+\frac{1}{|q|\sin\th}\pr_\phi.
\eeq
We denote by $\nab$ the induced horizontal derivative operator.

Note that the operators $\That,\Rhat$ introduced in \eqref{define:That-Rhat} verify
\bea
\lab{eq:formulasThat-Rhat}
\That =\frac 1 2 \left( \frac{|q|^2}{r^2+a^2} e_4+\frac{\De}{r^2+a^2}  e_3\right), \quad
\Rhat = \frac 1 2 \left( \frac{|q|^2}{r^2+a^2} e_4-\frac{\De}{r^2+a^2}  e_3\right).
\eea

The Killing vectorfields $\T, \Z$ take the form (see section 4.3 in \cite{GKS})
\beq
\lab{eq:TZ-frame}
\bsplit
\T &:= \frac{1}{2}\left(e_4+\frac{\Delta}{|q|^2}e_3 -2a\Re(\Jk)^be_b\right)\\
\Z &:= \frac 1 2 \left(2(r^2+a^2)\Re(\Jk)^be_b -a(\sin\th)^2 e_4 -\frac{a(\sin\th)^2\De}{ |q|^2} e_3\right)\\
&= (r^2+a^2)\Re(\Jk)^be_b-\frac 1 2  a(\sin\th)^2 \big( e_4+\frac{\De}{|q|^2} e_3\big).
\end{split}
\eeq
where $\Jk$ is the complex $1$ form (see section 3.4.2 in \cite{GKS}) whose horizontal components in $e_1, e_2$ are given by
$\Jk_1=\frac{i\sin\th}{|q|}, \,  \Jk_2=\frac{\sin\th}{|q|}$. Also
\beaa
Z&=& -a\sin^2\th T+|q|\sin\th e_2.
\eeaa
and
\beq
\lab{eq:O-frame}
O^{\a\b}= |q|^2 \big( e_1^\a e_1^\b+ e_2^\a e_2^\b\big).
\eeq
As a corollary we derive
\begin{lemma}
The following identity holds true
\beq
\lab{eq:identityZhat}
\Zhat:=Z+aT=a\cos^2\th \T+ |q|\sin\th  e_2.
\eeq
In particular $\Zhat=Z+aT=|q|e_2=re_2$ when $\sin\th=1$
\end{lemma}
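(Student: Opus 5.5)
The identity $\Zhat=a\cos^2\th\,\T+|q|\sin\th\, e_2$ is a direct algebraic consequence of the frame formula for $\Z$ recorded just above. The plan is to start from the expression
\[
\Z=-a\sin^2\th\,\T+|q|\sin\th\, e_2,
\]
check it independently from \eqref{eq:canonicalHorizBasisKerr}, and then add $a\T$ to both sides.

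To check this expression for $\Z$ without relying on the $\Jk$ formalism of \eqref{eq:TZ-frame}, multiply the definition of $e_2$ in \eqref{eq:canonicalHorizBasisKerr} by $|q|\sin\th$:
\[
|q|\sin\th\, e_2 = a\sin^2\th\,\pr_t+\pr_\phi = a\sin^2\th\,\T+\Z .
\]
Solving for $\Z$ recovers the displayed formula. Adding $a\T$ then gives
\[
\Zhat=\Z+a\T=a(1-\sin^2\th)\T+|q|\sin\th\, e_2=a\cos^2\th\,\T+|q|\sin\th\, e_2,
\]
which is \eqref{eq:identityZhat}.

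For the particular case, take $\sin\th=1$. Then $\cos\th=0$, so the $\T$ term drops out and $\Zhat=|q|e_2$. Moreover $|q|^2=r^2+a^2\cos^2\th=r^2$ there, so $|q|=r$ and $\Zhat=re_2$.

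There is no genuine obstacle here. The only point that needs a check is consistency with the alternative form of $\Z$ in \eqref{eq:TZ-frame}. This follows by substituting $\Re(\Jk)^b e_b=\frac{\sin\th}{|q|}e_2$ (since $\Jk_1$ is purely imaginary and $\Jk_2=\sin\th/|q|$ is real) together with $\frac12\big(e_4+\frac{\De}{|q|^2}e_3\big)=\T+a\Re(\Jk)^b e_b$. Both facts are read off from the formulas for $\T$ in \eqref{eq:TZ-frame}.
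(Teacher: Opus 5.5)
Your proof is correct and is essentially the paper's ``straightforward verification''. Multiplying the definition of $e_2$ by $|q|\sin\theta$ gives $Z=-a\sin^2\theta\,T+|q|\sin\theta\,e_2$, which is the formula the paper records just before the lemma; adding $aT$ yields the identity, and on the equator $|q|=r$ (your consistency check against the $\mathfrak{J}$-form of $Z$ is also correct, though not needed).
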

\begin{proof}
Straightforward verification.
\end{proof}
\subsection{Regular  time function  $\tau$ and adapted coordinates}

\begin{definition}
Given a function $f$ we define its spacetime gradient $\gradf= \g^{\mu\nu} \pr_\mu f \pr_\nu$.
In particular,
\beq
\lab{eq:gtadt-gradr}
\bsplit
\gradt&=\g^{tt}  T+\g^{t\phi} Z=-\frac{\Si^2}{|q|^2 \De} \big( T+\frac{2amr}{\Si^2} Z\big) \\
\gradr&= g^{34} e_3(r)  e_4+ g^{43} e_4(r)  e_3=\frac 1 2 \big( e_4 -\frac{\De}{|q|^2} e_3\big)
\end{split}
\eeq
\end{definition}

\begin{lemma}
\lab{lemma:TtRtandZ}
The following statements hold true.
\begin{enumerate}
\item
The vectorfields $\big\{\Tt:=-\frac{|q|^2\De}{\Si^2}\gradt, \Rt =-\gradr, e_1, Z\big\}$ form an orthogonal frame in $\DD$ with
\[
\g(\Tt,\Tt)=-\frac{\Delta|q|^2}{\Sigma^2}, \quad   \g(\Rt, \Rt)= \frac{\De}{|q|^2}, \quad \g(\Z, \Z)=\frac{\Si^2 \sin^2\th}{|q|^2}.
\]
\item
Relative to this frame the Lagrangian $\LL[\psi]=\g^{\a\b}\pr_\a\psi\pr_\b\psi$ of a scalar function $\psi$ takes the form
\bea
\lab{eq:Lagr-psi}
\bsplit
\g^{\a\b}\pr_\a\psi\pr_\b\psi&=\frac{1}{\g(\widetilde{T}, \widetilde{T})}|\widetilde{T}\psi|^2+\frac{1}{\g(\widetilde{R}, \widetilde{R})}|\widetilde{R}\psi|^2+|e_1\psi|^2+\frac{1}{\g(Z, Z)}|Z\psi|^2\\
&=-\frac{\Sigma^2}{\Delta|q|^2}|\widetilde{\T}\psi|^2+\frac{|q|^2}{\Delta}|\widetilde{R}\psi|^2+|e_1\psi|^2+\frac{|q|^2}{\Sigma^2\sin^2\th}|Z\psi|^2.
\end{split}
\eea
\end{enumerate}
\end{lemma}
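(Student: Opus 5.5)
We first check that the four vectorfields are mutually orthogonal and compute their norms; the formula for the Lagrangian is then immediate from the inverse of a diagonal metric. The natural route is to use the expressions for $\gradt$ and $\gradr$ in \eqref{eq:gtadt-gradr}.

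\textbf{Orthogonality and norms.} For any function $f$ and vector $X$ we have $\g(\gradf, X)=X(f)$. This gives at once
\[
\g(\Tt,\Rt)=\frac{|q|^2\De}{\Si^2}\,\gradt(r)=\frac{|q|^2\De}{\Si^2}\,\g^{tr}=0,\qquad \g(\Tt,e_1)=-\frac{|q|^2\De}{\Si^2}\, e_1(t)=0,
\]
and similarly $\g(\Tt,Z)\propto Z(t)=0$, $\g(\Rt,e_1)\propto e_1(r)=0$, $\g(\Rt,Z)\propto Z(r)=0$. The remaining orthogonality is $\g(e_1,Z)=0$, which holds because $\g_{\th\phi}=0$. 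For the norms:
\[
\g(\Tt,\Tt)=\Big(\frac{|q|^2\De}{\Si^2}\Big)^2\g^{tt}=-\frac{|q|^2\De}{\Si^2},\qquad \g(\Rt,\Rt)=\g^{rr}=\frac{\De}{|q|^2},
\]
and $\g(Z,Z)=\g_{\phi\phi}=\Si^2\sin^2\th/|q|^2$ read directly from \eqref{eq:coordsBL}. Finally $e_1$ is unit since $\g_{\th\th}=|q|^2$.

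\textbf{Frame property.} In $\DD$, away from the axis and the horizon, the four vectors are nonzero with nonvanishing norms. Since they are mutually orthogonal, they are linearly independent and hence form an orthogonal frame.

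\textbf{The Lagrangian.} For an orthogonal frame $\{E_i\}$ the inverse metric is $\g^{\a\b}=\sum_i E_i^\a E_i^\b/\g(E_i,E_i)$. Contracting with $\pr\psi\otimes\pr\psi$ gives the first line of \eqref{eq:Lagr-psi}. Substituting the computed norms gives the second line.

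The only mildly delicate point is $\g(\Tt,\Tt)$, which rests on the value $\g^{tt}=-\Si^2/(|q|^2\De)$ from \eqref{eq:inversemetric-Kerr}. Everything else is routine.
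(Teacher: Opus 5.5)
Your proof is correct and follows the same route as the paper. Orthogonality and the norms come from $\g(\gradf, X)=X(f)$ and $\g(\gradf,\gradf)=\g^{\mu\nu}\pr_\mu f\pr_\nu f$, and the Lagrangian is then read off from the inverse metric in an orthogonal frame; you simply spell out checks the paper leaves implicit, such as $\g(e_1,Z)=0$ and the restriction to $r>r_+$, $\sin\th\neq 0$, where the frame is nondegenerate.
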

\begin{proof}
Note that  $\g\big(\, ^{(f_1)} \mbox{grad}, \, ^{(f_1)} \mbox{grad}\big)=\g^{\mu\nu}\pr_\mu f_1\pr_\nu f_1$
and
$ \g(\gradt, \Z)= Z( t) =0.
$
The identity \eqref{eq:Lagr-psi} is an immediate consequence of the orthogonality statement.
\end{proof}
\subsubsection{Time function $\tau$}
\lab{subs:taufunction}


Given the degenerate behavior of the function $t$ near the horizon we introduce the function $\tau$ as follows
\beq
\lab{def:tau}
\tau=t+ f(r), \qquad f'(r)= \big(\frac{r^2+a^2}{\Delta}-\frac{m^2}{r^2} \big) \chi
\eeq
with $\chi\ge 0$ a smooth cut-off function of $r$ defined as follows:
\beq
\lab{def-cutoff}
\chi=\begin{cases}  1, \qquad r \le r_0 \\
0, \qquad  r\ge  2r_0
\end{cases}
\eeq
for any fixed $r_0 \gg r_+$.

\begin{lemma}
\lab{lemma:N_Si}
The function $\tau$ as defined in \eqref{def:tau} and \eqref{def-cutoff} is timelike, coincides with $t$ for $r\ge 2r_0$ and verifies the following properties.
\begin{enumerate}
\item The spacetime gradient of $\tau$ given by $\gradtau$ is given by the formula
\beq
\lab{eq:N_Si}
\gradtau=   - \frac{\Si^2}{|q|^2 \De}\Tt -    f'(r) \Rt.
\eeq
\item We have
\beq
\lab{eq:g(N_Si,N_Si)}
\g(\gradtau,\gradtau)=-\frac{\Si^2}{|q|^2 \De}+\frac{\De}{|q|^2} f'(r) f'(r) \sim -1.
\eeq
\item Also\footnote{Here $O_+(1)$ denotes a $O(1)$ positive quantity.}
\beq
\lab{eq:N_Si-frame}
\gradtau=\begin{cases} -\frac 12\big( \frac{m^2}{r^2}  e_4+ O_+(1)  e_3\big)+  \frac{a\sin\th}{|q|} e_2, \quad  r\le r_0\\
- O_+(1)\big(e_3+e_4) +O(ar^{-1})\nab,  \quad  r\ge 2r_0
\end{cases}.
\eeq

\end{enumerate}

\end{lemma}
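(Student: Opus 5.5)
The proof is a direct computation: expand $\gradtau$ along $\gradt$ and $\gradr$ for item 1, use the orthogonality of Lemma \ref{lemma:TtRtandZ} for item 2, and read off the null-frame components from $e_3(\tau),e_4(\tau),e_a(\tau)$ for item 3. That $\tau=t$ for $r\ge 2r_0$ is immediate from \eqref{def-cutoff}, since then $f'=0$.

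\emph{Item 1.} Since $\tau=t+f(r)$, we have $\gradtau=\gradt+f'(r)\gradr$. By definition $\Tt=-\frac{|q|^2\De}{\Si^2}\gradt$, so $\gradt=-\frac{\Si^2}{|q|^2\De}\Tt$. Also $\Rt=-\gradr$. Substituting gives \eqref{eq:N_Si}.

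\emph{Item 2.} By the orthogonality of $\Tt$ and $\Rt$ in Lemma \ref{lemma:TtRtandZ},
\[
\g(\gradtau,\gradtau)=\Big(\frac{\Si^2}{|q|^2\De}\Big)^2\g(\Tt,\Tt)+f'^2\g(\Rt,\Rt)=-\frac{\Si^2}{|q|^2\De}+\frac{\De}{|q|^2}f'^2 .
\]
Write $f'=\chi F$ with $F=\frac{r^2+a^2}{\De}-\frac{m^2}{r^2}$. Since $0\le\chi\le1$, the expression splits as
\[
(1-\chi^2)\Big(-\frac{\Si^2}{|q|^2\De}\Big)+\chi^2\Big(-\frac{\Si^2}{|q|^2\De}+\frac{\De}{|q|^2}F^2\Big).
\]
So it suffices to show that each bracket is negative, with bounds depending only on $m,a,r_0$.

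The first bracket is $\approx-1$ away from the horizon. It is needed only where $\chi<1$, that is for $r\ge r_0$, and there it is bounded above and below by negative constants.

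For the second bracket, use $\Si^2=(r^2+a^2)^2-a^2\sin^2\th\De$. The singular terms $(r^2+a^2)^2/\De$ cancel exactly, leaving
\[
|q|^{-2}\Big(a^2\sin^2\th-\frac{2m^2(r^2+a^2)}{r^2}+\frac{m^4\De}{r^4}\Big).
\]
Since $\De\le r^2$ and $r\ge r_+\ge m$, we have $m^4\De/r^4\le m^2$. Hence the bracket in parentheses is at most $a^2+m^2-2m^2=-(m^2-a^2)<0$, uniformly for subextremal $a$.

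Combining the two brackets, $\g(\gradtau,\gradtau)$ is negative and bounded in absolute value from above and below on $\{r\le 2r_0\}$, with constants depending on $r_0$. For $r\ge 2r_0$ it equals $g^{tt}\approx-1$. Thus $\g(\gradtau,\gradtau)\sim-1$ and $\tau$ is timelike.

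\emph{Item 3.} Rather than converting $\Tt,\Rt$ through \eqref{eq:TZ-frame}, I would use the general identity $\gradtau=-\frac12 e_3(\tau)e_4-\frac12 e_4(\tau)e_3+\sum_a e_a(\tau)e_a$. This holds because $\g(e_3,e_4)=-2$ and $e_a$ is orthonormal and orthogonal to $e_3,e_4$. The components follow from \eqref{eq:null-pair-in} and \eqref{eq:canonicalHorizBasisKerr}, using $e_3(r)=-1$, $e_4(r)=\De/|q|^2$, $e_2(t)=a\sin\th/|q|$ and $e_1(\tau)=0$:

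\begin{itemize}
\item For $r\le r_0$: $e_3(\tau)=\frac{r^2+a^2}{\De}-f'=\frac{m^2}{r^2}$.
\item For $r\le r_0$: $e_4(\tau)=\frac{r^2+a^2}{|q|^2}+\frac{\De}{|q|^2}f'=\frac{2(r^2+a^2)-m^2\De r^{-2}}{|q|^2}=O_+(1)$.
\item For $r\le r_0$: $e_2(\tau)=a\sin\th/|q|$.
\item For $r\ge 2r_0$: $e_3(t)=\frac{r^2+a^2}{\De}$ and $e_4(t)=\frac{r^2+a^2}{|q|^2}$ are both $O_+(1)$, and the horizontal part is $O(ar^{-1})$.
\end{itemize}

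These give the two cases of \eqref{eq:N_Si-frame}.

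\emph{Main obstacle.} The only delicate point is the uniform negativity in item 2 near the horizon. There the two terms are separately singular like $\De^{-1}$, and one must verify the exact cancellation and the sign of the remainder. This is precisely what the choice of the correction $-m^2/r^2$ in $f'$ is designed to achieve. The cutoff region $r_0\le r\le2r_0$ is handled by the convex-combination splitting above.
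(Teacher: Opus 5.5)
Your proposal is correct and follows the paper's proof essentially step for step. It uses the same splitting $\gradtau=\gradt+f'(r)\gradr$, the same exact cancellation of the $(r^2+a^2)^2/\De$ terms in $\g(\gradtau,\gradtau)$ after writing $f'=\chi F$, and the same null-frame expansion $\gradtau=-\frac12 e_3(\tau)e_4-\frac12 e_4(\tau)e_3+e_2(\tau)e_2$ for item 3. Your convex-combination argument for the cutoff region $r_0\le r\le 2r_0$ makes explicit what the paper leaves implicit, since it checks only $\chi=1$ and $\chi=0$; and your value $e_2(\tau)=a\sin\th/|q|$, without the factor $\chi$ the paper writes, is the correct one.
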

\begin{proof}
See appendix \ref{Appendix:Lemma:N_Si}.
\end{proof}


\subsubsection{Adapted coordinates}
\lab{section:adapted-taucoordinates}

We perform the
change of coordinates $(t, r, \th, \phi) \to (\tau, \tilde{r}, \tilde{\th}, \tilde{\phi})$
where
\[
\tau = t+f(r), \quad  \tilde r = r, \quad \tilde \th=\th, \quad \tilde \phi=\phi+g(r),\quad g'(r)=\frac{a}{\Delta}\chi(r).
\]
The inverse metric $\gt$ in the new coordinates is given by $\gt^{\a\b}=\g^{\mu\nu}\frac{\pr \tilde{x}^\a}{\pr x^\mu} \frac{\pr \tilde{x}^\b}{\pr x^\nu}$.
Therefore
\begin{align*}
\gt^{\tau\tau} &=\g^{tt} +\g^{rr}   (f'(r))^2=-\frac{1}{|q|^2}\Big(\frac{(r^2+a^2)^2}{\Delta}-a^2\sin^2\th-\Delta(f'(r))^2\Big)\sim -1 ,\\
 \gt^{\tau \tilde  r}&=\g^{rr} \pr_r \tau \pr_ r r=\g^{rr} f'(r)=\frac{\De}{|q|^2} f'(r)= \frac{\De}{|q|^2}\big(\frac{r^2+a^2}{\De} -\frac{m^2}{r^2} \big)\chi,\\
\gt^{\tau\tilde\phi}&=\g^{t\phi}+\g^{rr}f'(r)g'(r)=-\frac{1}{|q|^2}\Big(\frac{2amr}{\Delta}-\Delta f'(r)g'(r)\Big)=\begin{cases}
\frac{a}{|q|^2}(1-\frac{m^2}{r^2})    ,\quad &r\leq r_0\\
-\frac{2amr}{|q|^2\De},\quad &r\geq 2r_0
\end{cases} ,\\
\gt^{\tilde\phi\tilde\phi}&=\g^{\phi\phi}+\g^{rr}(g'(r))^2=\frac{1}{|q|^2}\Big(\frac{\Delta-a^2\sin^2\th}{\Delta\sin^2\th}+\Delta(g'(r))^2\Big)=\begin{cases}
\frac{1}{|q|^2\sin^2\th},\quad &r\leq r_0\\
\frac{\Delta-a^2\sin^2\th}{|q|^2\Delta\sin^2\th},\quad &r\geq 2r_0
\end{cases} ,\\
\gt^{\tilde r\tilde\phi}&=\g^{rr}\pr_r\tilde\phi\pr_r\tilde r=\g^{rr}g'(r)=\frac{\De}{|q|^2}\frac{a}{\De}\chi=\frac{a}{|q|^2}\chi.
\end{align*}
All other components remain unchanged, that is,
\[
\gt^{\tilde r\tilde r}=\g^{rr}, \quad \gt^{\tilde\th\tilde\th}=\g^{\th\th}.
\]
Note also that $\pr_\tau=\pr_t=T, \quad \pr_{\tilde r}= \pr_r - f'(r) \pr_t-g'(r)\pr_\phi, \quad \pr_{\tilde \th}=\pr_\th,\quad \pr_{\phit}=\pr_\phi=Z$.
The wave operator $\square=\square_{a,m}$ takes the form
\beq
\lab{eq:square-tau}
\begin{split}
\square &=\Big( \gt^{\tau\tau}  \pr^2_\tau   + 2 \gt^{\tau\tilde\phi}\pr_\tau\pr_\phi +  \gt^{\tau \tilde r} \pr_\tau \pr_{\tilde r} +  \gt^{ \tilde r\tilde\phi} \pr_\phi \pr_{\tilde r}+\gt^{\tilde\phi\tilde\phi }\pr_\phi^2 \Big)  \psi +\frac{1}{|q|^2 } \pr_{\tilde r}    \big(  |q|^2  \gt^{\tilde r\tau } \pr_\tau  \psi\big)\\
&+ \frac{1}{|q|^2 } \pr_{\tilde r}    \big(  |q|^2  \gt^{\tilde r\tilde\phi } \pr_\phi  \psi\big)+  \frac{1}{|q|^2 } \Big(\pr_{\tilde r}\big(\De \pr_{\tilde r} \psi\big)+ \frac{1}{\sin \th} \pr_\th \big(\sin\th \pr_\th \psi\big)\Big).
\end{split}
\eeq
Using \eqref{eq:square-tau}, we derive the following useful lemma.
\begin{lemma}
\lab{lemma:square-taucoords}
 In the new adapted coordinates the scalar wave operator $\square=\square_{a,m}$ in $\KK(a, m)$ takes the form
\beq
\lab{eq:square-taucoords}
\bsplit
&\square \psi=\frac{1}{|q|^2} \Big(\pr_{\rt}\big(\De \pr_{\rt}  \psi\big) +2 (r^2+a^2)\That\pr_{\rt}\psi+\frac{1}{ \sin \th} \pr_\th \big(  \sin \th \pr_\th \psi\big)
+\frac{1}{\sin^2\th}  \pr_{\phi}^2\psi  \Big)
+\err_1[\psi](\T, \Z),\\
&\err_1[\psi](\T, \Z)\les  \Big( \big| (T, r^{-2}\De\pr_{\rt}, r^{-1}Z)(\T, r^{-1} \Z)   \psi \big|
+ r^{-1}\big| (\T, r^{-1} \Z)\psi \big| \Big).
\end{split}
\eeq
Alternatively, we write
\beq
\lab{eq:square-taucoords2}
\bsplit
&\frac{\De}{r^2+a^2}\square \psi=\frac{1}{|q|^2(r^2+a^2)} \Big((r^2+a^2)^2\Rhat^2\psi+ \frac{\De}{ \sin \th} \pr_\th \big(  \sin \th \pr_\th \psi\big)
+\frac{\De}{\sin^2\th}  \pr_{\phi}^2\psi  \Big)
+\err_2[\psi],\\
&\err_2[\psi]\les  \Big( \big| (T, r^{-1}Z)^2 \psi \big|
+ r^{-1}\big| \Rhat\psi \big| \Big).
\end{split}
\eeq
\end{lemma}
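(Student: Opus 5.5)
The plan is to start from the coordinate expression \eqref{eq:square-tau} and rewrite each term in the variables $\That$, $\pr_{\rt}$, $\pr_\th$, $\pr_\phi$, keeping exactly the principal part displayed in \eqref{eq:square-taucoords} and sending everything else into $\err_1$.

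\textbf{Step 1: the angular and $\pr_\phi^2$ terms.} The terms $\pr_{\rt}(\De\pr_{\rt}\psi)$ and $\frac{1}{\sin\th}\pr_\th(\sin\th\pr_\th\psi)$ already appear in \eqref{eq:square-tau} with the prefactor $|q|^{-2}$. For $r\le r_0$ we have $\gt^{\tilde\phi\tilde\phi}=\frac{1}{|q|^2\sin^2\th}$, so the $\pr_\phi^2$ term is exact there. For $r\ge 2r_0$ and in the transition region, the difference $\gt^{\tilde\phi\tilde\phi}-\frac{1}{|q|^2\sin^2\th}$ is $O(r^{-4})$ once $\frac{a^2\sin^2\th}{\De}$ is used, with the $\sin^{-2}\th$ factors cancelling. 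It therefore contributes $r^{-2}|r^{-1}Z\,r^{-1}Z\psi|$, which is acceptable in $\err_1$.

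\textbf{Step 2: the cross terms, which is where the main work lies.} I would collect $\gt^{\tau\tilde r}\pr_\tau\pr_{\rt}$ and $\gt^{\tilde r\tilde\phi}\pr_\phi\pr_{\rt}$ together with the $\pr_{\rt}$-derivatives hitting the coefficients in the two divergence terms. The aim is to show that these combine to $\frac{2(r^2+a^2)}{|q|^2}\That\pr_{\rt}\psi$ plus error.
\begin{itemize}
\item For $r\le r_0$, with $\chi=1$: $|q|^2\gt^{\tau\tilde r}=(r^2+a^2)-\frac{m^2\De}{r^2}$ and $|q|^2\gt^{\tilde r\tilde\phi}=a$. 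The total cross coefficient is therefore $2(r^2+a^2)T\pr_{\rt}+2aZ\pr_{\rt}-\frac{2m^2\De}{r^2}T\pr_{\rt}$, and the first two pieces give exactly $2(r^2+a^2)\That\pr_{\rt}$. The remainder $\frac{m^2\De}{r^2}T\pr_{\rt}$ is of the form $r^{-2}\De\pr_{\rt}T$ (using $|q|^2\sim r^2$), as allowed in $\err_1$. The derivatives landing on the coefficients produce first-order terms in $T$ and $Z$ with bounded, or $r^{-1}$-decaying, coefficients.
\item For $r\ge 2r_0$, where $\chi=0$: there are no cross terms, so I must check that $2(r^2+a^2)\That\pr_{\rt}/|q|^2$ can itself be moved into $\err_1$. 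This fails, because $\That\pr_r$ is not of the allowed error form. So in this region I must reorganise $\gt^{\tau\tau}\pr_\tau^2+\frac{1}{|q|^2}\pr_r(\De\pr_r)$. The natural approach is to notice that the statement is really meant with $\pr_{\rt}$ relative to the coordinates; for $r\ge 2r_0$ I would instead absorb $-\frac{(r^2+a^2)^2}{|q|^2\De}T^2$ by the identity $\De\pr_r^2-\frac{(r^2+a^2)^2}{\De}T^2\approx$ (null factorisation) at leading order. I expect this to leave a residual requiring care. This region, together with the bound $\gt^{\tau\tau}\sim -1$, is where I expect the main obstacle: deciding precisely which $T^2$ and $T\pr_r$ terms are allowed in $\err_1$ (they are, since $|T\,T\psi|$ is in the error list) and checking that nothing of type $\pr_r^2$ is lost.
\end{itemize}

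\textbf{Step 3: the remaining terms.} All remaining terms, namely $\gt^{\tau\tau}T^2$, $\gt^{\tau\tilde\phi}TZ$, and the coefficient derivatives, are bounded by $|(T,r^{-1}Z)(T,r^{-1}Z)\psi|+r^{-1}|(T,r^{-1}Z)\psi|$, using the explicit decay of each metric component listed above.

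\textbf{Second form \eqref{eq:square-taucoords2}.} I would multiply by $\frac{\De}{r^2+a^2}$ and use $\Rhat=\frac{\De}{r^2+a^2}\pr_r$, with $\pr_r=\pr_{\rt}+f'T+g'Z$. Expanding $(r^2+a^2)^2\Rhat^2$ recovers $\De\pr_r(\De\pr_r)$ up to $r^{-1}\Rhat$ terms. The key cancellation is that $\frac{\De}{r^2+a^2}\cdot2(r^2+a^2)\That\pr_{\rt}$, combined with the $T^2$ and $TZ$ parts, completes $\Rhat^2$ via $\pr_{\rt}=\pr_r-f'T-g'Z$. The leftover terms are $(T,r^{-1}Z)^2\psi$, which lie in $\err_2$.
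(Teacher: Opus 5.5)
Your computation for $r\le r_0$ (total cross coefficient $2(r^2+a^2)\That\pr_{\rt}-\frac{2m^2\De}{r^2}T\pr_{\rt}$) and your handling of the $\pr_\phi^2$ term are correct and match the paper. The argument breaks down in the second bullet of Step 2, because of a misreading of the error class.

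\textbf{The gap.} You assert that for $r\ge 2r_0$ the term $\frac{2(r^2+a^2)}{|q|^2}\That\pr_{\rt}\psi$ cannot be moved into $\err_1$. You then replace it by an unspecified ``null factorisation'' of $\gt^{\tau\tau}T^2$ against $\pr_r(\De\pr_r)$, which by your own account leaves an uncontrolled residual. The cross terms in the transition region $r_0\le r\le 2r_0$ are also never treated. So the first identity is left unproved for $r\ge r_0$.

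In fact nothing needs reorganising there. The error class contains $r^{-2}\De\,\pr_{\rt}(T,r^{-1}Z)\psi$, and the weight $r^{-2}\De$ degenerates only at the horizon. For $r\ge r_0\gg r_+$ one has $r^{-2}\De\simeq 1$, hence
\[
|T\pr_{\rt}\psi|\les|r^{-2}\De\,\pr_{\rt}T\psi|,\qquad r^{-2}|Z\pr_{\rt}\psi|\les r^{-1}\big(|r^{-2}\De\,\pr_{\rt}(r^{-1}Z\psi)|+|r^{-1}Z\psi|\big).
\]
The term $2(r^2+a^2)\That\pr_{\rt}$ is displayed in the principal part only because it cannot be absorbed near $\HH$. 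Likewise $\gt^{\tau\tau}T^2\psi=O(1)\,TT\psi$ is already admissible, as you note in Step 3. For $r\ge 2r_0$ one has $\pr_{\rt}=\pr_r$, so $\pr_{\rt}(\De\pr_{\rt}\psi)$ is exactly the displayed term.

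\textbf{The uniform fix.} The paper's remark that $\gt^{\tau\tilde r},\gt^{\tilde r\phit}$ are compactly supported amounts to comparing coefficients for all $r$:
\[
2|q|^2\gt^{\tau\tilde r}-2(r^2+a^2)=-2(1-\chi)(r^2+a^2)-2\chi\frac{m^2\De}{r^2},\qquad 2|q|^2\gt^{\tilde r\phit}-2a=-2a(1-\chi).
\]
Since $1-\chi$ is supported in $\{r\ge r_0\}$, where $\De\simeq r^2$, these are $O(\De)$ and $O(|a|\De r^{-2})$ on all of $\{r\ge r_+\}$. So after dividing by $|q|^2$, the mismatch with the displayed cross term is bounded by $r^{-2}\De\,|\pr_{\rt}(T,r^{-1}Z)\psi|$ plus first order terms. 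Your $r\le r_0$ computation is the case $\chi=1$.

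\textbf{Second identity.} For \eqref{eq:square-taucoords2}, the detour through the adapted coordinates is also not justified as written. The bound on $\err_1$ contains $r^{-2}\De\,\pr_{\rt}T\psi$, and $\frac{\De}{r^2+a^2}$ times it is essentially $r^{-2}\De\,T\Rhat\psi$, which is not of a type allowed in $\err_2$. Your claim that only $(T,r^{-1}Z)^2\psi$ is left over therefore needs $\err_1$ kept exactly. For instance, for $r\le r_0$ the contribution of the piece $-\frac{2m^2\De}{r^2|q|^2}T\pr_{\rt}\psi$ of $\err_1$ must cancel a term $\frac{2m^2\De}{r^2|q|^2}T\Rhat\psi$ produced when the principal part is rewritten through $\Rhat$.

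The paper avoids this by working directly in Boyer--Lindquist coordinates from $|q|^2\square=\RR+\OO$. It uses three identities:
\[
\frac{\De}{r^2+a^2}\pr_r(\De\pr_r\psi)=(r^2+a^2)\Rhat^2\psi+\frac{2r\De}{r^2+a^2}\Rhat\psi,\qquad \frac{\De}{r^2+a^2}\cdot\frac{(r^2+a^2)^2}{\De}\That^2\psi=(r^2+a^2)\That^2\psi,
\]
\[
\OO\psi=\frac{1}{\sin\th}\pr_\th(\sin\th\,\pr_\th\psi)+\frac{1}{\sin^2\th}\pr_\phi^2\psi+a^2\sin^2\th\,T^2\psi+2aTZ\psi.
\]
With these, every remainder is visibly in $\err_2$.
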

\begin{proof}
The proof of \eqref{eq:square-taucoords} follows from \eqref{eq:square-tau} and the behavior of the metric components $\gt^{\a\b}$
\begin{align*}
\gt^{\tau\tau}\sim -1,\quad \gt^{\tau\phit}=O(r^{-3}),\quad  \gt^{\phit\phit}=\frac{1}{    |q|^2\sin^2\th}+O(r^{-4}),
\end{align*}
and $\gt^{\tau\tilde r}, \gt^{\tilde r\phit}$ are compactly supported. The proof of \eqref{eq:square-taucoords2} follows from \eqref{eq:decompose-square}, \eqref{eq:RR-OO} and \eqref{define:That-Rhat}.
\end{proof}
\subsection{Main Integral quantities}

\lab{section:BEFquantities}
\subsubsection{Domains of interest }
\begin{definition}
\lab{def:DDtau}
Let $\de_{trap}$ denote a sufficiently small positive number.
We define the following regions in $\KK(a,m)$:
\begin{itemize}
\item $\DD$ is the region in $\KK(a,m)$ for which $r\ge r_+$
with null boundary
\[
\HH=\big\{ r=r_+\big\}.
\]
\item $\DD_{trap}$ denotes the trapping region
\[
\DD_{trap}=\big\{ \rhat_1\le r\le \rhat_2\big\}.
\]
with $\rhat_1, \rhat_2$ the roots of the polynomial $\Theta(r):=-Q_r(1)=(\TT+a^2(r-m))^2-4a^2r^2\Delta$, as
defined in \eqref{eq:PolynTheta}, whose values were explicitly given in Lemma \ref{Lemma:rangeofrfortrappednullgeodesics}.
\item $\DD_{\ntrap}$ is the complement of the trapped region $\DDtrap$. More precisely, for\footnote{This constant appears
in the estimates of Theorem \ref{Thm:Moraw3}.} $\de_{trap}>0$,
\[
\DD_{\ntrap}= \big\{ r< \rhat_1-\de_{trap}\big\} \cup \big\{  r> \rhat_2+\de_{trap}\big\}.
\]
\end{itemize}
Given $\tau$ a time function\footnote{Whose level surfaces are strictly spacelike everywhere.} in $\DD$ we define the causal domains
$\DD(\tau_1, \tau_2) =\DD\cap \big\{ \tau_1\le \tau \le \tau_2\big\}$ with boundaries $\pr\DD(\tau_1,\tau_2)=\HH(\tau_1,\tau_2) \cup \Si(\tau_1) \cup \Si(\tau_2)$
where $\Si(\tau_1), \Si(\tau_2)$ are level surfaces of $\tau$.
\end{definition}

\begin{remark}

A specific time function $\tau$ was defined in section \ref{subs:taufunction}.
\end{remark}

\subsubsection{Main degenerate quantities}
We introduce below the main integral quantities used in our estimates.
\begin{definition}[Degenerate quantities]
Define
\beq
\lab{def:Bdegg}
\Bdegg[\psi](\tau_1, \tau_2):=\int_{\DD(\tau_1, \tau_2) }  r^{-2}  | \Rhat  \psi|^2  +r^{-4} |\psi|^2
\eeq
and
\beq
\lab{eq:Bdegnorm}
\bsplit
\Bdeg[\psi](\tau_1, \tau_2)&:=\Bdegg[\psi](\tau_1, \tau_2) + \int_{\DD_{\ntrap}(\tau_1, \tau_2)} \left(  r^{-2} |\That \psi|^2        + r^{-1}  |\nab  \psi|^2\right)\\
\Fdeg[\psi](\tau_1,\tau_2)&:=\int_{\HH(\tau_1,\tau_2)} |e_4\psi|^2  \\
\Edeg[\psi](\tau)&:= \int_{\Si(\tau)} | e_4\psi|^2+\frac{|\De|}{r^2}   |e_3 \psi|^2 + |\nab\psi|^2+r^{-2}|\psi| ^2 .
\end{split}
\eeq

\end{definition}

\subsubsection{Weighted  version of the main  quantities}
\begin{definition}[Weighted  quantities]
For a given $r_0$ sufficiently large we define
\beq
\lab{eq:Bnorm-p-deg}
\bsplit
\Bdeg_p[\psi](\tau_1, \tau_2)&:=\Bdeg[\psi](\tau_1, \tau_2)+\int_{\DD(\tau_1, \tau_2)\cap\{r\ge r_0\}} r^{p-3}\big(|\dk\psi|^2 + |\psi|^2\big)
\\
\Edeg_p [\psi](\tau)&:=    \Edeg[\psi](\tau)+       \int_{\Si(\tau)\cap\{ r\ge r_0\}}   r^p   \big(  r^{-1} |\nab_4 (r\psi)|^2+ |\nab\psi|^2 +     r^{-2}|\psi|^2 \big)\\
\Fdeg_p [\psi](\tau)&:= \Fdeg[\psi].
\end{split}
\eeq
where, as in \cite{GKS}, $\dk\psi =\big( r(e_4, \nab)\psi, e_3\psi\big)$.
\end{definition}

\subsubsection{Main non-degenerate quantities}
\lab{section:main-nondeg}

\begin{definition}[Non-degenerate   quantities]\lab{def:nondegeneratequantities}
Define\footnote{Note that $\Rhat$ degenerates at the horizon but the contribution from the non-trapped integral is non-degenerate.}
\beq
\lab{eq:Bnorm}
\bsplit
B[\psi](\tau_1, \tau_2)&:=\int_{\DD(\tau_1, \tau_2) }  r^{-2}  | \Rhat  \psi|^2  +r^{-4} |\psi|^2+ \int_{\DD_{\ntrap}(\tau_1, \tau_2)} \left(  r^{-2} |e_{3, 4}\psi|^2        + r^{-1}  |\nab  \psi|^2\right)\\
F[\psi](\tau_1,\tau_2)&:=\int_{\HH(\tau_1,\tau_2)} |e_4\psi|^2
+|\nab\psi|^2+|\psi|^2\\
E[\psi](\tau)&:= \int_{\Si(\tau)} | e_4\psi|^2+ |e_3 \psi|^2 + |\nab\psi|^2+r^{-2}|\psi|^2.
\end{split}
\eeq
and their $r^p $ weighted version
\beq
\lab{eq:Bnorm-p}
\bsplit
B_p[\psi](\tau_1, \tau_2)&:=B[\psi](\tau_1, \tau_2)+\int_{\DD(\tau_1, \tau_2)\cap\{r\ge r_0\}} r^{p-3}\big(|\dk\psi|^2 + |\psi|^2\big)
\\
E_p [\psi](\tau)&:=    E[\psi](\tau)+       \int_{\Si(\tau)\cap\{ r\ge r_0\}}   r^p   \big(  r^{-2} |\nab_4 (r\psi)|^2+ |\nab\psi|^2 +     r^{-2}|\psi|^2 \big)\\
F_p [\psi](\tau)&:= F[\psi].
\end{split}
\eeq
\end{definition}

\subsubsection{Combined quantities}
We define the combined quantities
\beq
\bsplit
\EFdeg[\psi](\tau_1, \tau_2)&=\Fdeg[\psi](\tau_1, \tau_2) + \Edeg[\psi](\tau_1)+\Edeg[\psi](\tau_2)\\
\BEFdeg[\psi](\tau_1, \tau_2)&=  \Bdeg[\psi](\tau_1, \tau_2)+ \EFdeg[\psi](\tau_1, \tau_2).
\end{split}
\eeq
The corresponding non-degenerate quantities are defined in the same manner.

\subsubsection{Higher order quantities}

\begin{definition}
We define the higher derivative quantities
\beq
Q^s[\psi]=\sum_{k\le s} Q[\dk^k\psi].
\eeq
where, as in \cite{GKS}, $\dk\psi =\big( r(e_4, \nab)\psi, e_3\psi\big)$,
and $Q$ is any of the quantities $B,E,F$ and $\Bdeg, \Edeg, \Fdeg$. 
\end{definition}


\subsubsection{Restricted quantities}
\lab{section:restrictedQ}
Given any quantity $Q$ as above we denote by $Q_I$ the restriction of the integral defining $Q$ to an
interval $I$ in $r$.

\subsubsection{Quantities  for the inhomogeneous terms}
\lab{section:NN-quantities}
In the case of the inhomogeneous wave equation $\square_{a,m} \psi=N$ we also need the following quantities.
\begin{definition}[Basic $\NN$ quantity] Define
\beq
\lab{eq:NN-quantity}
\bsplit
\NN^s[ N](\tau_1, \tau_2) :=\sum_{k\leq s}\int_{\DD(\tau_1, \tau_2)}|\dk^kN|^2 .
\end{split}
\eeq
and
\beq
\lab{eq:NN-psi-quantity}
\bsplit
\NN^s[\psi,  N](\tau_1, \tau_2) =&\NN^s[N](\tau_1,\tau_2)+\sum_{k\leq s}\left|\int_{\Dtrap{(\tau_1, \tau_2)}} (\T, \Z)\dk^k\psi \c\dk^kN\right|.
\end{split}
\eeq
\end{definition}

\begin{definition}[Weighted $\NN$ norm]
For $0<p<2$ and $r_0$ sufficiently large, we define
\beq
\NN^s_p[\psi, N](\tau_1, \tau_2) = \NN^s[ \psi, N](\tau_1, \tau_2)+\sum_{k\leq s}\int_{\DD_{r\geq r_0}(\tau_1,\tau_2)}  r^{p+1}  \,  |\dk^kN|^2.
\eeq
\end{definition}

\subsection{Partial  red shift estimate}
\lab{section:PRS}
We rely on a partial red shift estimate, weaker than the standard such estimates in the literature.\footnote{See for example \cite{DR1} or Proposition 9.4.1 in \cite{GKS}. This partial version of the red shift appears first in \cite{MMTT}.}
\begin{proposition}\lab{prop:partialredshiftestimate}
Let $\psi$ {be} a solution to the equation $\square_{a,m}\psi=N$. Then, for $|a|<m$, there exists a small enough constant $0<\de_{red}\ll1-\frac{|a|}{m}$ such that the following estimate holds true.
\begin{align*}
&\int_{\DD_{< r_+(1+\de_{red})}(\tau_1, \tau_2)} |\nab_3 \psi|^2 +   \int_{\Si (\tau_2){< r_+(1+\de_{red})} }   \big(|e_3 \psi|^2  +|\nab \psi|^2 \big)+  \int_{\HH(\tau_1,\tau_2)}|\nab \psi|^2\\
     &\les  \int_{\DD_{r_+(1+2\de_{red})}(\tau_1, \tau_2)}  \frac{\De^2}{ r^4} |\pr_r \psi|^2+ \int_{\DD_{r_+(1+\de_{red} )\le r \le r_+(1+2\de_{red})}(\tau_1, \tau_2)} |\nab\psi|^2+ E_{r_+(1+2\de_{red})}(\tau_1)\\
     &+\int_{\DD_{< r_+(1+2\de_{red})}(\tau_1, \tau_2)}|N|^2.
\end{align*}

\end{proposition}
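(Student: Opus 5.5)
The plan is the standard red shift vectorfield method (\cite{DR1}), in the partial form of \cite{MMTT}. We only need control of $e_3\psi$ near the horizon, since $e_4\psi$ and $\nab\psi$ are assumed bounded on the right-hand side through the $\frac{\De^2}{r^4}|\pr_r\psi|^2$ and $|\nab\psi|^2$ bulk terms and the energy. Concretely, we apply the divergence theorem in $\DD_{<r_+(1+2\de_{red})}(\tau_1,\tau_2)$ to the current $\PP_\mu[Y,w]=\QQ_{\mu\nu}Y^\nu+\frac12 w\psi\D_\mu\psi-\frac14\pr_\mu w|\psi|^2$. Here $Y=\chi_{red}(r)\,e_3$, or the regular combination $Y=\chi_{red}(e_3+c\,e_4)$ with small $c\ge0$, and $\chi_{red}$ is a cutoff equal to $1$ for $r\le r_+(1+\de_{red})$ and vanishing for $r\ge r_+(1+2\de_{red})$. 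We then compute $\D^\mu\PP_\mu=\frac12\QQ\c\,^{(Y)}\pi+\dots$ in the ingoing PN frame.

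The key computation is the deformation tensor of $e_3$ at the horizon. Since $\D_3e_3=0$, we have $^{(e_3)}\pi_{33}=0$. The components $^{(e_3)}\pi_{34}$ and $^{(e_3)}\pi_{ab}$ are governed by $\om$, $\tr\chib$ and $\etab$-type Ricci coefficients. The crucial sign comes from the surface gravity $\kappa=\frac{r_+-m}{r_+^2+a^2}>0$ (positive precisely because $|a|<m$). It produces, at $r=r_+$, a term $\frac12\QQ_{33}\,{}^{(e_3)}\pi_{44}$-type contribution $\gtrsim\kappa|e_3\psi|^2$. We can check this by writing $\D_4e_3$ from \eqref{eq:null-pair-in}: the $\pr_r$ coefficient of $e_4$ acting on $\frac{r^2+a^2}{\De}$ produces the factor $\pr_r\De/|q|^2$ evaluated at $r_+$. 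The remaining quadratic terms are of the form $|e_3\psi||\nab\psi|$, $|e_3\psi||e_4\psi|$, $|\nab\psi|^2$ and $|e_4\psi|^2$, with coefficients that are bounded near $r_+$. By Cauchy–Schwarz with a small parameter we absorb the cross terms into $\kappa|e_3\psi|^2$. This yields bulk positivity $\gtrsim|\nab_3\psi|^2$ in $r\le r_+(1+\de_{red})$, modulo $|\nab\psi|^2+|e_4\psi|^2$. Near $r_+$ we have $|e_4\psi|\lesssim|\That\psi|+\frac{\De}{r^2}|\pr_r\psi|$ by \eqref{eq:formulasThat-Rhat}. The $\That\psi$ part is handled by the choice $Y=e_3+c\,e_4$, whose $e_4$ component gives positivity in $|e_4\psi|^2+|\nab\psi|^2$ on the horizon region up to terms $O(\de_{red})$.

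We must also handle the small region $r_+(1+\de_{red})\le r\le r_+(1+2\de_{red})$, where $\chi_{red}'$ generates uncontrolled-sign terms. These are bounded by $|e_3\psi|^2\lesssim |\That\psi|^2+\frac{\De^2}{r^4}|\pr_r\psi|^2$ (since $\De\sim\de_{red}$ there). They are therefore controlled by the bulk terms on the right, together with $|\nab\psi|^2$ and, for $\That\psi$, the wave equation via a Lagrangian correction $w$. The boundary terms are treated as follows. On $\Si(\tau_2)$, $\QQ(Y,\gradtau)\gtrsim|e_3\psi|^2+|\nab\psi|^2$ since $Y$ is future causal (indeed timelike for $c>0$). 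On $\Si(\tau_1)$ the term is bounded by $E_{r_+(1+2\de_{red})}(\tau_1)$. On $\HH$, $\QQ(Y,e_4)\gtrsim|\nab\psi|^2+c|e_4\psi|^2$, which has a good sign and yields the horizon flux of $\nab\psi$. The zero-order terms from $w$ and $\psi\c N$ are bounded using a local Hardy inequality and $|N|^2$.

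I expect the main obstacle to be the precise verification that the $\kappa$-term dominates uniformly for all $|a|<m$. The sign of the cross coefficients of $e_3\psi\c e_4\psi$ must be arranged (through the choice of $c$ and $w$) so that no uncontrolled $|e_4\psi|^2$ with a bad sign survives. The smallness $\de_{red}\ll1-|a|/m$ is exactly what is needed to keep $\kappa$ dominant throughout the support of $\chi_{red}$.
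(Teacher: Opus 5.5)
Your overall strategy (a cut-off multiple of $e_3$, with positivity coming from $\pr_r(\De/|q|^2)>0$ at $r_+$) is the paper's, but there is a genuine gap in how you dispose of $e_4\psi$. For $Y=d(r)e_3$ one has
\[
|q|^2\EE[Y,0,0]=\tfrac12\big(|q|^2d\,\pr_r(\De/|q|^2)-\De d'\big)|e_3\psi|^2+\tfrac12|q|^2d'|\nab\psi|^2-2rd\,e_3\psi\c e_4\psi+\tfrac{ar\sin\th}{|q|}d\,e_4\psi\c e_2\psi .
\]
After Cauchy--Schwarz you must therefore pay for $|e_4\psi|^2$, with an $O(1)$ coefficient, in $r<r_+(1+\de_{red})$, and this term is not on the right-hand side.

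You bound $|e_4\psi|$ by $|\That\psi|+\frac{\De}{r^2}|\pr_r\psi|$ and then propose to recover $\That\psi$ from the $e_4$-component of $Y=\chi_{red}(e_3+ce_4)$. This fails. The coefficient of $|e_4\psi|^2$ in $\frac12\QQ\c{}^{(Y)}\pi$ is $\frac18{}^{(Y)}\pi_{33}$, and ${}^{(e_4)}\pi_{33}=2\g(\D_3e_4,e_3)=-2\g(e_4,\D_3e_3)=0$. So the $e_4$-part contributes only $\frac12c\chi_{red}'|e_4\psi|^2\le0$. Moreover, since $\D_4e_4=2\frac{r^2+a^2}{|q|^2}\kappa\,e_4$ on $\HH$, its $|\nab\psi|^2$ contribution near $r_+$ is negative.

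A Lagrangian correction does not help either. Near $\HH$, $|\That\psi|^2-|\Rhat\psi|^2=\frac{\De|q|^2}{(r^2+a^2)^2}e_3\psi\c e_4\psi$ is just an indefinite cross term. In addition, any $w\neq0$ generates zero-order terms, while the right-hand side contains no bulk $|\psi|^2$.

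The missing observation is that $\That\psi$ is not an independent quantity at the horizon. By \eqref{eq:formulasThat-Rhat}, $|q|^2e_4=2\De\pr_r+\De e_3$, hence $|e_4\psi|^2\lesssim\frac{\De^2}{r^4}|\pr_r\psi|^2+\frac{\De^2}{r^4}|e_3\psi|^2$. The last term is absorbed by the red-shift term $\frac18(r-m)d|e_3\psi|^2$, because $\De=O(\de_{red})$ on the support of $Y$. This is the paper's final step, and it is why the weight $\frac{\De^2}{r^4}$ appears on the right.

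Two further points.

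First, you need a positive $|\nab\psi|^2$ bulk term in $r<r_+(1+\de_{red})$ to absorb the $|\nab\psi|^2$ part of the cross term $e_4\psi\c e_2\psi$. With $\chi_{red}\equiv1$ there, the $e_3$-part gives none, and the $e_4$-part gives the wrong sign. The paper chooses $d=\chi\big(\frac{r-r_+}{r_+\de_{red}}\big)\big(1+\frac{r-r_+}{r_+}\big)$, which is increasing near $\HH$, so that $d'\ge\chi/r_+$ yields $\frac1{2r_+}\chi\big||q|\nab\psi\big|^2$.

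Second, your accounting of the cut-off shell is inverted. There $d'<0$, so $-\frac12\De d'|e_3\psi|^2\ge0$ is favourable. The only bad term is $\frac12|q|^2d'|\nab\psi|^2$, which is exactly what the shell integral of $|\nab\psi|^2$ on the right pays for. With $w=0$ no Hardy inequality is needed, and $|Y\psi\c N|\le\ep|e_3\psi|^2+\ep^{-1}|N|^2$ handles the inhomogeneity.
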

\begin{proof}
The proof is based on the construction of a vectorfield $Y$ proportional to $e_3$ of the form
\[
Y=d(r)e_3=d(r)(-\pr_r+\frac{r^2+a^2}{\De}T+\frac{a}{\De}Z)
\]
 with $d(r)\geq0$ defined as follows
\beq\lab{eq:choicedr}
d(r)=\chi(\frac{r-r_+}{r_+\de_{red}})\Big(1+\frac{r-r_+}{r_+}\Big)
\eeq
where $\chi(s)=1$ for $s\leq 1$ and $\chi(s)=0$ for $s\geq 2$.
The proof is then an immediate consequence of the following Lemma.
\end{proof}
\begin{lemma}
\lab{lemma:PRS}
Let $Y$ be defined as above. Then if $\de_{red}$ is sufficiently small, we have
\beq
|q|^2\EE[Y,0,0]\geq\frac18(r-m)d(r)|e_3\psi|^2+\chi'\frac{1}{r_+\de_{red}}\big||q|\nab\psi\big|^2-\frac{40d(r)}{r^2(r-m)}|\De\pr_r\psi|^2.
\eeq
Also
\beq
\begin{split}
\QQ(Y, N_{\Si})&\gtrsim d(r)(|e_3 \psi|^2+|\nab \psi|^2 ),\\
\QQ(Y, N_{\HH})&=\QQ(e_3, \frac12e_4)=\frac12|\nab\psi|^2.\end{split}
\eeq
where $\QQ$ is the energy momentum tensor of $\square_{a,m}$ and, with the notation of
 section \ref{sect:Basicscalarident}, $\EE[Y,0,0]=\Div\big(\QQ \c Y)- Y(\psi) \square \psi$.
\end{lemma}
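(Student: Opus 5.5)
We prove Lemma \ref{lemma:PRS} by a direct computation of the deformation tensor of $Y=d(r)e_3$ in the ingoing PN frame. Recall $\EE[Y,0,0]=\D^\mu(\QQ_{\mu\nu}Y^\nu)-Y\psi\,\square\psi=\frac12\QQ^{\mu\nu}\,{}^{(Y)}\pi_{\mu\nu}$. Since $\D_3e_3=0$ for the ingoing frame, we have $\D_\mu Y_\nu=\pr_\mu d\,(e_3)_\nu+d\,\D_\mu(e_3)_\nu$. The needed connection coefficients of Kerr in this frame come from \cite{GKS}, chapter 3: $\omb=0$, $\xib=0$, $\tr\chib=-\frac{2r}{|q|^2}$, $\tr\chi=\frac{2r\De}{|q|^4}$, $\om=\frac12\pr_r(\frac{\De}{|q|^2})$, together with $\eta,\etab,\ze$ of size $O(a/r^2)$. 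The derivative $e_\mu(d)$ is nonzero only in the $e_3,e_4$ directions, since $d=d(r)$, $e_3(r)=-1$ and $e_4(r)=\De/|q|^2$.

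The first step is to expand $\frac12\QQ\c{}^{(Y)}\pi$ into its null components:
\[
\QQ_{33}=|e_3\psi|^2,\qquad \QQ_{34}=|\nab\psi|^2,\qquad \QQ_{44}=|e_4\psi|^2,\qquad \QQ_{ab}=\de_{ab}\big(-\tfrac12\g(\pr\psi,\pr\psi)\big)+e_a\psi e_b\psi,
\]
with the mixed components handled in the same way.

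The positive term comes from $\QQ_{33}$ multiplied by $\pi_{44}$-type coefficients. It is the sum of $d\,\om$ (which at the horizon equals $d\,(r_+-m)/|q|^2$, the surface gravity) and $-\frac12 e_4(d)$. The latter is $O(\De)$ and hence small near the horizon. This should produce $\frac18(r-m)d|e_3\psi|^2$ after multiplying by $|q|^2$, provided $\de_{red}$ is small enough that $\om$ stays comparable to its horizon value on $\mathrm{supp}\,d$.

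The term with $\pr_r d$ in the $e_3$ direction contributes to $\QQ_{34}=|\nab\psi|^2$ with coefficient $-\frac12 e_3(d)=\frac12 d'$. Where $\chi'\neq0$ this coefficient is negative, which gives the displayed $\chi'$ term. Where $\chi'=0$, $d'=1/r_+>0$ contributes a good sign, which we drop.

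The remaining terms are cross terms $e_3\psi\c e_4\psi$, $e_3\psi\c\nab\psi$ and $|\nab\psi|^2\tr\chib$-type terms. We rewrite $e_4=\frac{\De}{|q|^2}(e_3+\dots)$ through $\De\pr_r=\frac{|q|^2}{2}(e_4-\frac{\De}{|q|^2}e_3)$. Then $e_4\psi$ can be expressed via $\De\pr_r\psi$ and $\frac{\De}{|q|^2}e_3\psi$. Cauchy--Schwarz, absorbing into $\frac18(r-m)d|e_3\psi|^2$, produces the term $-\frac{40d}{r^2(r-m)}|\De\pr_r\psi|^2$. The $|\nab\psi|^2$ terms carrying a factor $d\,\tr\chib$ require care. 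In the $\QQ_{ab}$ trace, $-\frac12 g(\pr\psi,\pr\psi)$ contains $+\frac12 e_3\psi e_4\psi$ and $-\frac12|\nab\psi|^2$; combined with $\tr\chib<0$ the signs should come out favorable, or at worst be absorbed. I expect this sign and constant bookkeeping — especially getting exactly $\frac18$ and $40$, and checking that the $\eta,\ze$ cross terms are absorbable with constants uniform in $|a|<m$ — to be the main obstacle. If the $|\nab\psi|^2$ coefficient is not nonnegative, I would add a small multiple of $e_4$ to $Y$ or allow such a term on the right-hand side supported in the cutoff region.

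For the boundary terms, $N_\HH=\frac12e_4$ on the horizon, and $\QQ(e_3,e_4)=|\nab\psi|^2$ with $d(r_+)=1$ gives $\QQ(Y,N_\HH)=\frac12|\nab\psi|^2$. On $\Si$, Lemma \ref{lemma:N_Si} gives, for $r\le r_0$, $-\gradtau=\frac12(\frac{m^2}{r^2}e_4+O_+(1)e_3)-\frac{a\sin\th}{|q|}e_2$, which is timelike. The quantity $\QQ(e_3,-\gradtau)$ then contains $\frac{m^2}{4r^2}|e_3\psi|^2+O_+(1)\cdot\frac12|\nab\psi|^2$ plus an $e_3\psi\,e_2\psi$ cross term. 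The dominant-energy property for the pair of causal vectors $e_3$ and $-\gradtau$, with $-\gradtau$ uniformly timelike by \eqref{eq:g(N_Si,N_Si)}, gives coercivity in $e_3\psi$ and $\nab\psi$. This yields $\QQ(Y,N_\Si)\gtrsim d(|e_3\psi|^2+|\nab\psi|^2)$.
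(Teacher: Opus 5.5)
\paragraph{A gap in your treatment of the mixed angular term.}
Your plan follows the paper's proof. You expand $\frac12\QQ\c\piY$ with $\piY_{\mu\nu}=d\,\pithree_{\mu\nu}+d'\big(\pr_\mu r\,(e_3)_\nu+\pr_\nu r\,(e_3)_\mu\big)$ in the ingoing frame, take the red-shift term $\frac12|q|^2d\,\pr_r(\De/|q|^2)|e_3\psi|^2$ from $\QQ^{44}\pithree_{44}$, and trade $e_4\psi$ for $\De\pr_r\psi$.

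The $|\nab\psi|^2\,\mathrm{tr}\,\underline{\chi}$ terms you worry about do not occur. One has $\de^{ab}\QQ_{ab}=e_3\psi\,e_4\psi$ exactly, so $\pithree_{ab}=-\frac{2r}{|q|^2}\de_{ab}$ only produces $-rd\,e_3\psi\,e_4\psi$.

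The genuine gap is the mixed angular term. Your identification of it as $e_3\psi\c\nab\psi$ is correct: since $\pithree_{3a}=0$, the only mixed contribution is $\QQ^{4a}\pithree_{4a}$ with $\QQ^{4a}=g^{43}\QQ_{3a}=-\frac12 e_3\psi\,e_a\psi$. So the term is $c\,\frac{ar\sin\th}{|q|}d\,e_3\psi\,e_2\psi$ for some constant $c$.

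The paper's displayed identity writes this term as $\frac{ar\sin\th}{|q|}d\,e_4\psi\c e_2\psi$. It then absorbs it via $|e_4\psi|^2$ (i.e. $|\De\pr_r\psi|^2$) plus $\frac{1}{8r_+}\big||q|e_2\psi\big|^2$. With the correct index, that shortcut is unavailable.

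A term $e_3\psi\c e_2\psi$ cannot be absorbed by $|e_3\psi|^2$ alone. Cauchy--Schwarz leaves $-C|\nab\psi|^2$ on all of $\mathrm{supp}\,d$ down to $r=r_+$. Neither $|\De\pr_r\psi|^2$ nor the $\chi'$ term (supported in $r_+(1+\de_{red})\le r\le r_+(1+2\de_{red})$) can pay for it.

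The only positive $|\nab\psi|^2$ available is $\frac12|q|^2d'|\nab\psi|^2$. Its $\chi$-part, $\frac{1}{2r_+}\chi\big||q|\nab\psi\big|^2$, is exactly what you drop and what the paper keeps.

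Your fallbacks do not address this. The term is not confined to the cutoff region. Adding $\ep e_4$ to $Y$ contributes $\ep\,\QQ^{34}\,^{(e_4)}\pi_{34}=-\frac{\ep}{2}\pr_r(\De/|q|^2)|\nab\psi|^2<0$ near $\HH$ (blue shift), and nothing of the form $e_3\psi\,e_2\psi$.

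\paragraph{Even keeping that term, the slope of $d$ matters.}
Absorption is a discriminant condition, and this is where your worry about uniformity in $|a|<m$ bites. At $r=r_+$, $\th=\pi/2$ (so $d=1$, $\De=0$), the relevant part is
\[
(r_+-m)|e_3\psi|^2+\tfrac12 d'\big||q|e_2\psi\big|^2+\tfrac{ca}{r_+}\,e_3\psi\,|q|e_2\psi .
\]
You therefore need $2(r_+-m)d'\ge c^2a^2/r_+^2$. With the fixed slope $d'=1/r_+$ this reads $2(r_+-m)r_+\ge c^2a^2$, which degenerates as $|a|\to m$.

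The constant $c$ can be computed directly. Let $W=(r^2+a^2)\pr_t+a\pr_\phi$, which satisfies $W\perp e_2$. Then $[e_3,e_4]=(\cdots)W-\frac{4r}{|q|^2}\pr_t-\pr_r\big(\frac{\De}{|q|^2}\big)\pr_r$ and $[e_3,e_2]=\frac{r}{|q|^2}e_2$. These give $\pithree_{42}=-\frac{4ar\sin\th}{|q|^3}$, i.e. $c=2$.

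Now choose $e_1\psi=e_4\psi=0$ at a point on the equator of $\HH$. The stated inequality, with its factor $\frac18$, then requires $\frac{7}{16}(r_+-m)r_+\ge a^2$. This fails for $|a|/m\gtrsim 0.72$.

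\paragraph{Remedy.}
Steepen $d$, for example
\[
d=\chi\Big(\tfrac{r-r_+}{r_+\de_{red}}\Big)\Big(1+K\tfrac{r-r_+}{r_+}\Big),\qquad K=K(a/m)\ \text{large}.
\]
Then take $\de_{red}$ small depending on $K$, so that $\frac12\De d'|e_3\psi|^2\lesssim K(r_+-m)\de_{red}|e_3\psi|^2$ remains a small fraction of $(r-m)d|e_3\psi|^2$. The constants in the statement change accordingly. This is harmless, since $\de_{red}$ may depend on $a$.

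A minor point on the boundary term: in $\QQ(e_3,-\gradtau)$, the coefficient $\frac{m^2}{2r^2}$ of $e_4$ goes with $\QQ_{34}=|\nab\psi|^2$, and the $O_+(1)$ coefficient of $e_3$ goes with $|e_3\psi|^2$. Your dominant-energy conclusion on $\Si$ is unaffected.
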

\begin{proof}
We delay the proof of the Lemma to the appendix \ref{appendix:Lemma-PRS}.
\end{proof}

\subsection{Norm equivalence lemmas}
\lab{section:normEqivLemmas}


In this section we make use of Lemma \ref{lemma:square-taucoords} to establish several norm equivalence Lemmas.
\begin{lemma}
\lab{Lemma:calculuslemmadeg}
For the inhomogeneous wave equation $\square_{a,m} \psi=N$, the following estimates hold true.
\begin{enumerate}
\item We have
\beq\lab{eq:seconddegenergy}
\bsplit
\Edeg[(r^{-2}\De e_3, e_4, r\nab)^{\leq 2}\psi](\tau)&\les  \Edeg[\OO\psi](\tau)+\Edeg[(T, Z)^{\leq 2}\psi](\tau)\\
&+\NN[(r^{-2}\De e_3, e_4, r\nab)^{\leq 2}N](\tau_1,\tau_2) .
\end{split}
\eeq
\item We also have
\beq\lab{eq:seconddegbulk}
\bsplit
&\Bdeg[(r^{-2}\De e_3, e_4, r\nab)^{\leq 2}\psi](\tau_1, \tau_2)\\
&\quad\les \Bdeg[\OO\psi](\tau_1,\tau_2)+\Bdeg[(T,Z)^{\leq 2}\psi](\tau_1,\tau_2)\\
&\quad+\Edeg[(r^{-2}\De e_3, e_4, r\nab)^{\leq 2}\psi](\tau_1, \tau_2)+\NN[(r^{-2}\De e_3, e_4, r\nab)^{\leq 1}N](\tau_1,\tau_2).
\end{split}
\eeq
\end{enumerate}
Therefore
\beq\lab{eq:seconddegbulkenergy}
\bsplit
&\BEdeg[(r^{-2}\De e_3, e_4, r\nab)^{\leq 2}\psi](\tau_1,\tau_2)\\
&\quad\les\BEdeg[\OO\psi](\tau_1,\tau_2)+\BEdeg[(T, Z)^{\leq 2}\psi](\tau_1,\tau_2)\\
&\quad+\NN[(r^{-2}\De e_3, e_4, r\nab)^{\leq 2}N](\tau_1,\tau_2).
\end{split}
\eeq
\end{lemma}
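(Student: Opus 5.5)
The proof is an elliptic-type argument carried out on each slice $\Si(\tau)$, and then integrated in $\tau$ for the bulk version. The starting point is the second identity of Lemma \ref{lemma:square-taucoords}. In adapted coordinates,
\[
(r^2+a^2)^2\Rhat^2\psi+\De\Big(\frac{1}{\sin\th}\pr_\th(\sin\th\pr_\th\psi)+\frac{1}{\sin^2\th}\pr_\phi^2\psi\Big)=|q|^2\De\, N+O\big(r^2|(T,r^{-1}Z)^2\psi|+r|\Rhat\psi|\big).
\]
By \eqref{eq:RR-OO}, the angular operator on the left equals $\OO\psi-a^2\sin^2\th\,T^2\psi-2aTZ\psi$. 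We therefore get an identity expressing $(r^2+a^2)^2\Rhat^2\psi$ in terms of $\De\OO\psi$, $(T,Z)^{\le 2}\psi$, $\Rhat\psi$ and $|q|^2\De N$.

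The second step is to express the frame derivatives through this family of operators. By \eqref{eq:formulasThat-Rhat}, $\frac{|q|^2}{r^2+a^2}e_4=\That+\Rhat$ and $\frac{\De}{r^2+a^2}e_3=\That-\Rhat$. By \eqref{eq:O-frame}, $|r\nab\psi|^2\sim O^{\a\b}\pr_\a\psi\pr_\b\psi$. Hence every second derivative in $(r^{-2}\De e_3,e_4,r\nab)^{\le 2}$ is a combination, with bounded coefficients, of the following:
\begin{itemize}
\item $\Rhat^2\psi$;
\item mixed terms $\Rhat(T,Z)\psi$ and $(T,Z)^2\psi$;
\item angular second derivatives, controlled by $\OO\psi$ and $(T,Z)^{\le 2}\psi$ through standard elliptic estimates for $\OO$ on the spheres (equivalently for the round Laplacian $\De_{\SSS^2}=\OO-a^2\sin^2\th T^2-2aTZ$);
\item mixed terms $\Rhat r\nab\psi$;
\item lower order terms.
\end{itemize}

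For the energy estimate \eqref{eq:seconddegenergy}, we need the $\Edeg$-norm, i.e. one further derivative on $\Si(\tau)$, of these quantities. The terms $\Edeg[(T,Z)^{\le2}\psi]$ and $\Edeg[\OO\psi]$ are given directly. For the angular derivatives, we commute $\De_{\SSS^2}$ with a tangential derivative and apply the elliptic estimate for $\De_{\SSS^2}$ to get the $H^2$ angular control from $\OO\psi$, $T^2\psi$ and $TZ\psi$. The $\Rhat^2\psi$ part (and its derivative) is controlled from the identity above, at the cost of $N$-terms. The $N$-terms are pointwise on a slice, so they should be turned into a spacetime integral over $(\tau_1,\tau_2)$ via a trace/fundamental theorem in $\tau$; this is how $\NN$ enters.

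The mixed terms $\Rhat\nab\psi$ are handled by integrating by parts on $\Si(\tau)$. Writing $\int|\Rhat r\nab\psi|^2$ and moving $\nab$ across, we bound it by $\int|\Rhat^2\psi|\,|\nab^2\psi|$ plus lower order terms, and then use Cauchy--Schwarz. The degenerate weights $\De/r^2$ attached to $e_3$ are consistent with $\Rhat=\frac{\De}{r^2+a^2}\pr_r$, so the identity is used exactly in the degenerate form.

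For the bulk estimate \eqref{eq:seconddegbulk}, I would repeat the argument with spacetime integrals. In the trapped region the $\Bdeg$-norm only controls $\Rhat$ and zeroth order terms, so there only $\Rhat(\cdot)$ derivatives are required. Away from trapping, the full derivatives are recovered as above, with cutoffs whose derivatives produce terms bounded by the energy terms $\Edeg$ appearing on the right. Finally \eqref{eq:seconddegbulkenergy} follows by adding the two estimates.

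The main obstacle is the behaviour at the horizon and the mixed $\Rhat\nab$ terms. The elliptic operator is degenerate there (the coefficient of $\De_{\SSS^2}$ is $\De$), so no gain in $\pr_r$ is available. One has to check carefully that only the weighted quantities $\Rhat$ and $r^{-2}\De e_3$ are required, and that the integrations by parts do not produce boundary terms at $\HH$ carrying non-degenerate $e_3$ derivatives.
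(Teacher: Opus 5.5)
Your energy argument for \eqref{eq:seconddegenergy} is essentially the paper's. It expresses the radial second derivatives through the wave equation in adapted coordinates. It gets the angular ones from $\lapSS=\OO-a^2\sin^2\th\,T^2-2aTZ$ via the sphere identity $\int|\lapSS f|^2=\int|\nabSS\nabSS f|^2+|\nabSS f|^2$. It treats the mixed terms by integrating by parts on $\Si(\tau)$, moving one $\nab$ and one radial derivative across.

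The gap is in the bulk estimate \eqref{eq:seconddegbulk}, on the trapping region $\DDtrap$. There the right-hand side $\Bdeg[(T,Z)^{\le2}\psi]+\Bdeg[\OO\psi]$ gives only $|(T,Z)^{\le 2}\psi|$, $|\OO\psi|$ and their $\Rhat$-derivatives. But $\Bdeg[(r^{-2}\De e_3,e_4,r\nab)^{\le 2}\psi]$ requires, on $\DDtrap$, \emph{all} second derivatives together with their $\Rhat$-derivatives.

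Your list of second derivatives omits the mixed terms $T\,r\nab\psi$. These occur because $e_4=\frac{r^2+a^2}{|q|^2}(\That+\Rhat)$. On each sphere $\int|\nabSS T\psi|^2=-\int T\psi\,\lapSS T\psi$, with $\lapSS T\psi=T\OO\psi-a^2\sin^2\th\,T^3\psi-2aT^2Z\psi$. None of these terms is controlled on $\DDtrap$.

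Similarly, $\Bdeg[\Rhat\That\psi]$ contains $\int_{\DDtrap}|\Rhat^2T\psi|^2$. Your identity converts this into $T\That^2\psi$ and $T\OO\psi$, which are again unavailable there. No slice-wise elliptic estimate, radial integration by parts, or spatial cutoff can produce these quantities, because a $T$-derivative must be moved onto the other factor.

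The missing step is integration by parts in $\tau$. For $f=\psi$ and $f=\Rhat\psi$ one has
$\int|\nabSS\pr_\tau f|^2=\int\pr_\tau^2 f\,\lapSS f-\big[\int_{\Si}\pr_\tau f\,\lapSS f\big]_{\tau_1}^{\tau_2}$; the paper inserts radial weights $r^2|q|^{-2}h(r)$. The spacetime term is bounded by $\Bdeg[(T,Z)^{\le2}\psi]+\Bdeg[\OO\psi]$, using $\lapSS\Rhat\psi=\Rhat\lapSS\psi$.

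Likewise, moving $\pr_\tau$ and one $\Rhat$ across gives $\int|\pr_\tau\Rhat^2\psi|^2\les\int|\Rhat\pr_\tau^2\psi|\,|\Rhat^3\psi|$ plus boundary terms. Here $\Rhat^3\psi$ is controlled by the $\Rhat$-differentiated wave equation.

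The boundary terms on $\Si(\tau_1)$ and $\Si(\tau_2)$ are exactly the source of $\Edeg[(r^{-2}\De e_3,e_4,r\nab)^{\le 2}\psi](\tau_1,\tau_2)$ on the right of \eqref{eq:seconddegbulk}. They do not come from derivatives of cutoffs, as you suggest. Without this time integration by parts, your bulk argument does not close in the trapping region.
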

\begin{proof}
See appendix \ref{appendix:Lemma-calculuslemmadeg}.
\end{proof}
We then derive the following non-degenerate estimates.
\begin{lemma}
\lab{Lemma:calculuslemma}
For the inhomogeneous wave equation $\square_{a,m} \psi=N$, the following estimate holds true.
\beq
\bsplit
&\BEF[(e_3,e_4,r\nab)^{\leq2}\psi](\tau_1,\tau_2)\\
&\quad\les  \BEF[\OO\psi](\tau_1,\tau_2)+\BEF[(T, Z)^{\leq 2}\psi](\tau_1,\tau_2)\\
&\quad+\NN[(e_3,e_4,r\nab)^{\leq2}N](\tau_1,\tau_2)+E[(e_3,e_4,r\nab)^{\leq2}\psi](\tau_1).
\end{split}
\eeq
\end{lemma}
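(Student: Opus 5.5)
The plan is to deduce Lemma \ref{Lemma:calculuslemma} from its degenerate counterpart, Lemma \ref{Lemma:calculuslemmadeg}, in two stages. First we remove the degeneracy of the bulk and energy quantities near the horizon with the partial red shift estimate, Proposition \ref{prop:partialredshiftestimate}. Then we add the horizon flux terms.

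\textbf{Degenerate input.} Apply \eqref{eq:seconddegbulkenergy} to $\psi$. The degenerate norms of $\OO\psi$ and $(T,Z)^{\le2}\psi$ on its right-hand side are bounded by the non-degenerate ones, since $\Bdeg\le B$ and $\Edeg\le E$ pointwise. This gives control of
\[
\BEdeg[(r^{-2}\De e_3, e_4, r\nab)^{\le 2}\psi](\tau_1,\tau_2)
\]
by the right-hand side of the lemma. What is still missing is the undamped $e_3$ derivatives near $r=r_+$, together with the flux along $\HH$.

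\textbf{Red shift for derivatives.} The missing quantities are $|e_3 (e_3,e_4,r\nab)^{\le1}\psi|^2$ in the bulk and on $\Si(\tau_2)$ for $r<r_+(1+\de_{red})$; away from the horizon $\De/r^2\sim1$ and there is nothing to prove. I would handle them by commuting. Apply Proposition \ref{prop:partialredshiftestimate} to $\psi$ itself and to $\psi'=(T,Z,e_3)\psi$, and for the angular derivatives use the operator $\OO$ together with the elliptic relation
\[
|q|^2\square=\RR+\OO.
\]
Writing the equation in the form \eqref{eq:square-taucoords2} near the horizon expresses $e_3e_3\psi$ through $e_4 e_3\psi$, $\nab^2\psi$, $(T,Z)$-derivatives and $N$. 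The commutator $[\square,e_3]$ carries the good red-shift sign at leading order, with a coefficient proportional to the surface gravity. The standard argument therefore still closes: the error terms involve either $(T,Z)$ derivatives, which are controlled by $B[(T,Z)^{\le2}\psi]$, or the commuted quantity with a small factor absorbed on the left, or degenerate terms supported in $r\ge r_+(1+\de_{red})$, where degeneracy is absent. The energy term $E_{r_+(1+2\de_{red})}(\tau_1)$ of the red shift estimate is accounted for by the $E[(e_3,e_4,r\nab)^{\le2}\psi](\tau_1)$ term on the right of the lemma.

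\textbf{Horizon flux.} For $\psi$ itself, $\int_\HH|\nab\psi|^2$ comes from the red shift estimate. The term $\int_\HH|e_4\psi|^2$ is controlled because $e_4\propto T_+=T+\om_+Z$ on $\HH$, so it is bounded by $F[(T,Z)\psi]$. The zeroth-order term $\int_\HH|\psi|^2$ is bounded by a trace or Hardy argument along $\HH$ using $\Fdeg$ and the energies. For the second derivatives, the tangential ones $(e_4,\nab)^2\psi$ on $\HH$ are again expressed through $(T,Z)$ and $\OO$. The $e_3$ derivatives on $\HH$ come from the commuted red shift flux term $\int_\HH|\nab\psi'|^2$ together with the transport equation obtained by restricting the wave equation to $\HH$, which relates $e_4e_3\psi$ on $\HH$ to horizontal terms.

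I expect the main obstacle to be this second-order red shift step. Commuting with $e_3$ generates terms involving $e_3\nab\psi$ and mixed derivatives, which must be shown either to have the good sign or to be small enough to absorb. Quantifying the commutator $[\square_{a,m},e_3]$ in the frame and checking its leading sign is where the real work lies.
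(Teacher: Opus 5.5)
Your route is the paper's: start from Lemma \ref{Lemma:calculuslemmadeg}, remove the horizon degeneracy with the partial red shift applied to $e_3$-commuted quantities, and get the horizon flux from the commuted red shift plus the wave equation on $\HH$ for the mixed derivative $e_4e_3$. However, your derivative count is one short, so as written the plan only proves the analogous bound for $\BEF[\dk^{\le 1}\psi]$.

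Each of $B$, $E$, $F$ is already a first-derivative quantity, so $\BEF[(e_3,e_4,r\nab)^{\le 2}\psi]$ involves third derivatives. A neighborhood of $r=r_+$ lies in $\DD_{\ntrap}$, so:
\begin{itemize}
\item the bulk contains $\int|e_3e_3e_3\psi|^2$ and $\int|e_3e_3\nab\psi|^2$ there;
\item the energy on $\Si(\tau_2)$ contains the same terms;
\item the flux contains $\int_\HH|\nab e_3e_3\psi|^2$ and $\int_\HH|e_4e_3e_3\psi|^2$.
\end{itemize}
The terms you list as missing, $|e_3(e_3,e_4,r\nab)^{\le1}\psi|^2$, are exactly what $\BEF[\dk^{\le1}\psi]$ needs. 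Applying Proposition \ref{prop:partialredshiftestimate} to $\psi$ and $(T,Z,e_3)\psi$ yields nothing beyond second order; your horizon term $\int_\HH|\nab\psi'|^2$ gives $\nab e_3\psi$, not $\nab e_3e_3\psi$.

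You have to apply the red shift once more, to $e_3(T,Z)\psi$, $e_3\nab\psi$ and $e_3e_3\psi$. This is the paper's \eqref{eq:redshift2}. It supplies the missing bulk and energy terms and, via \eqref{eq:redshiftforflux}, the flux term $\int_\HH|\nab e_3(T,\nab,e_3)^{\le1}\psi|^2$. The remaining flux term $\int_\HH|e_4e_3(e_3,T,Z)\psi|^2$ comes from the commuted equation, as in \eqref{eq:firstorderfluxe3e3}. The favorable commutator sign must then be used at this second level as well. Conversely, the first-order horizon terms $\int_\HH|\psi|^2$, $\int_\HH|\nab\psi|^2$, $\int_\HH|e_4\psi|^2$ need no trace or Hardy argument, since $F[\psi]$ is already part of $\BEF[(T,Z)^{\le2}\psi]$ on the right.

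Second, your claim that writing the equation as \eqref{eq:square-taucoords2} expresses $e_3e_3\psi$ through $e_4e_3\psi$, $\nab^2\psi$, $(T,Z)$-derivatives and $N$ is false. At $r=r_+$ one has $\De=0$ and $\Rhat=T_+$, tangent to $\HH$, so that identity contains no transversal derivative at all. Near $\HH$ it contains $e_3e_3\psi$ only with the weight $\De^2$. In the null frame the principal part is $-e_4e_3\psi+\lap\psi$, so the equation determines only the mixed derivative $e_4e_3\psi$. The pure transversal derivatives $e_3e_3\psi$ and $e_3e_3e_3\psi$ can come only from the red shift, through the good-sign term created by commuting with $e_3$. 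Drop that sentence and use the equation only for $e_4e_3$, as you already do for the flux.

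You are right that this sign is where the real work lies. The paper simply applies Proposition \ref{prop:partialredshiftestimate} to $e_3\psi$ and its derivatives, but that proposition has $\int|N|^2$ on the right. Treating $[\square,e_3]\psi$ as a source would put $\int|e_3e_3\psi|^2$ on the right, with a non-small coefficient (the surface gravity), over a region containing the one on the left. So the sign of that term is genuinely needed.
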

\begin{proof}
See appendix \ref{appendix:Lemma-calculuslemma}.
\end{proof}
We also have the following norm equivalence lemma.
\begin{lemma}
\lab{lem:higherordercalculuslemma}
For the inhomogeneous wave equation $\square_{a,m} \psi=N$ and $s\geq 1$, the following estimate holds true.
\beq\lab{eq:normequiv}
\bsplit
&\BEF[(e_3,e_4,\nab)^{\leq s+1}\psi](\tau_1,\tau_2)\\
&\quad\les  \BEF[(e_3,e_4, \nab)^{\leq s}(T,Z)^{\leq1}\psi]+E[(e_3, e_4,\nab)^{\leq s+1}\psi](\tau_1)\\
&\quad+\NN[(e_3, e_4, \nab)^{\leq s+1}N](\tau_1,\tau_2).
\end{split}
\eeq
Moreover, the $r^p$ version of the norm equivalence is given as follows.
\beq\lab{eq:rpnormequiv}
\BEF_p^{s+1}[\psi](\tau_1,\tau_2)\les  \BEF^s_p[(T,Z)^{\leq1}\psi]+E_p^{s+1}[\psi](\tau_1)+\NN_p^{s+1}[\psi,N](\tau_1,\tau_2).
\eeq
\end{lemma}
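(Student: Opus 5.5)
The plan is to argue by induction on $s$, using the wave equation as an elliptic equation to recover all transversal derivatives from the commuted quantities $(T,Z)^{\le1}\psi$. The operators $(e_3,e_4,\nab)$ are not Killing, but $T$ and $Z$ commute with $\square_{a,m}$. We therefore need to show that, modulo lower order terms, every $(s+1)$-th order combination of $(e_3,e_4,\nab)$ acting on $\psi$ is controlled by $s$-th order combinations acting on $T\psi$, $Z\psi$ and $\psi$, together with derivatives of $N$. The base mechanism is exactly the one used for Lemmas \ref{Lemma:calculuslemmadeg} and \ref{Lemma:calculuslemma}, which handle the case $s=1$ with $\OO\psi$ in place of an elliptic estimate. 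We will redo this argument with $\OO$ replaced by the equation itself.

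\textbf{Step 1 (expressing derivatives).} Write $e_3,e_4$ and $\nab$ in terms of $T$, $Z$, $\Rhat$ (or $\pr_{\rt}$ in the adapted coordinates) and $\pr_\th$, $\frac1{\sin\th}\pr_\phi$, using \eqref{eq:null-pair-in}, \eqref{eq:canonicalHorizBasisKerr} and \eqref{eq:formulasThat-Rhat}. Any $(s+1)$-fold product of $(e_3,e_4,\nab)$ containing at least one factor $T$ or $Z$ (after commuting, with commutator errors of order $\le s$ and smooth bounded coefficients) is then bounded by $(e_3,e_4,\nab)^{\le s}(T,Z)\psi$ plus lower order terms. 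The remaining terms involve only $\pr_{\rt}$ and the angular derivatives.

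\textbf{Step 2 (elliptic recovery).} For these purely spatial terms, use Lemma \ref{lemma:square-taucoords} in the form \eqref{eq:square-taucoords}. This expresses
\[
\pr_{\rt}(\De\pr_{\rt}\psi)+\frac{1}{\sin\th}\pr_\th(\sin\th\pr_\th\psi)+\frac{1}{\sin^2\th}\pr_\phi^2\psi
\]
as $|q|^2N$ minus $2(r^2+a^2)\That\pr_{\rt}\psi$ minus $\err_1$, and all of these carry at least one $T$ or $Z$. Hence, on each $\Si(\tau)$, the operator on the left is a degenerate elliptic operator in $(\rt,\th,\phi)$, and standard elliptic estimates (commuting with $(e_3,e_4,\nab)^{s-1}$) give control of the second spatial derivatives of $\dk^{s-1}\psi$. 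This control is in terms of $(e_3,e_4,\nab)^{\le s}(T,Z)\psi$, $(e_3,e_4,\nab)^{\le s}\psi$ and $(e_3,e_4,\nab)^{\le s-1}N$. Applied pointwise in $\tau$ and integrated with the bulk or energy weights, this bounds the $B$ and $E$ parts.

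\textbf{Step 3 (the horizon).} Near the horizon, $\De$ degenerates, so the elliptic estimate loses control of $e_3$ derivatives there. For this we use the partial red shift estimate, Proposition \ref{prop:partialredshiftestimate}, applied to $\dk^{s}\psi$. It controls $e_3$ derivatives in $r<r_+(1+\de_{red})$ and the flux $F$ on $\HH$ in terms of nondegenerate quantities away from the horizon, the initial energy $E[(e_3,e_4,\nab)^{\le s+1}\psi](\tau_1)$, and the commuted inhomogeneity. Commuting $e_3$ with $\square$ produces favorable-sign terms at the horizon (the red shift), so the errors are of order $\le s+1$ but with a good sign, or of order $\le s$.

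\textbf{Step 4 (the $r^p$ version).} For \eqref{eq:rpnormequiv}, in the region $r\ge r_0$ we use the weighted derivatives $\dk=(r e_4,r\nab,e_3)$. The wave equation gives $e_3e_4(r\psi)$ in terms of $r\Delta\psi$ and $N$, which allows the $r^p$-weighted quantities at order $s+1$ to be recovered from order $s$ of $(T,Z)\psi$ as in section 10.3 of \cite{GKS}.

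The main obstacle I expect is Step 3, the degeneration of the elliptic operator at $r=r_+$ together with the bookkeeping of commutator terms, which must be either lower order or have good red-shift sign. The interior steps are routine.
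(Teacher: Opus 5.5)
Your treatment of the bulk and energy parts follows the paper's own argument. You use \eqref{eq:square-N} in the $\tau$-adapted coordinates to trade $\pr_{\rt}(\De\pr_{\rt}\cdot)+\lapSS$ for terms carrying at least one $T$ or $Z$. You recover the angular derivatives through the spherical identity \eqref{eq:lapSSsquaredeg} and integration by parts. You remove the degeneracy at $r=r_+$ with the commuted partial red shift \eqref{eq:redshift2}.

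The gap is in the flux part of your Step~3. Proposition~\ref{prop:partialredshiftestimate} does not control $F$. Its multiplier $Y=d(r)e_3$ is null, so on $\HH$ it only produces $\QQ(Y,N_\HH)=\QQ(e_3,\tfrac12 e_4)=\tfrac12|\nab\phi|^2$ (Lemma~\ref{lemma:PRS}). Hence, for $\phi=e_3^{s+1}\psi$, you never obtain the components $\int_{\HH}|e_4e_3^{s+1}\psi|^2$ and $\int_{\HH}|e_3^{s+1}\psi|^2$ of $F[(e_3,e_4,\nab)^{\le s+1}\psi]$. Your elliptic estimates live on the slices $\Si(\tau)$ and give no top-order information on $\HH$. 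Nor can the term be moved to the right-hand side by commuting. On $\HH$, $e_4$ is a multiple of $T_+$, so $e_4e_3^{s+1}\psi$ is, modulo lower order, $e_3\big(e_3^{s}(T,Z)\psi\big)$. That is a transversal derivative of the commuted field, whereas $F[(e_3,e_4,\nab)^{\le s}(T,Z)\psi]$ contains only the $e_4$, $\nab$ and zeroth-order components. A trace estimate from the bulk would cost one derivative too many.

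The missing ingredient, used in the paper in \eqref{eq:firstorderfluxe3}, \eqref{eq:firstorderfluxe3e3} and \eqref{eq:degfluxstos+1}, is the wave equation restricted to the horizon. Apply \eqref{eq:square-N} to $\phi=e_3^{s}\psi$ and evaluate at $r=r_+$, where $\De=0$, $\pr_{\rt}(\De\pr_{\rt}\phi)=2(r_+-m)\pr_{\rt}\phi$ and $\That=T_+$. On $\HH$ this gives
\[
2(r_+^2+a^2)\,T_+\pr_{\rt}\phi=-2(r_+-m)\pr_{\rt}\phi-\lapSS\phi+O\big(|(T,Z)^{\le 2}\phi|\big)+|q|^2\,e_3^{s}N+\text{(lower order commutator terms)},
\]
which expresses $e_4e_3\phi$ through $e_3\phi$, $\nab\nab\phi$, $(T,Z)$-derivatives and $N$. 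The remaining terms are then controlled as follows:
\begin{itemize}
\item the $\nab\nab\phi$ term by the red-shift horizon flux \eqref{eq:redshiftforflux};
\item the $(T,Z)$ terms by $F[(e_3,e_4,\nab)^{\le s}(T,Z)\psi]$;
\item $\int_{\HH}|e_3^{s+1}\psi|^2$ by a trace bound from the non-degenerate bulk you have already established.
\end{itemize}
An alternative fix is a $T_+$-energy estimate for $e_3^{s+1}\psi$ cut off near $\HH$. By \eqref{eq:Festimates} it yields $|T_+e_3^{s+1}\psi|^2$ on $\HH$, with bulk errors controlled by the non-degenerate bulk near the horizon. Without one of these, your argument proves the $B$ and $E$ parts of \eqref{eq:normequiv} but not the $F$ part.
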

\begin{proof}
See appendix \ref{appendix:higherordercalculuslemma}.
\end{proof}


\subsection{Identities for the energy-momentum  tensor $\QQ$}
\lab{section:identities-QQ}

Let $\QQ$ be the energy-momentum tensor for $\square=\square_{a,m}$. We apply the divergence Lemma
in $\DD(\tau_1, \tau_2) $ with $N_\HH=-\Rt=\frac 12 e_4$ and $N_\Si=c_{\Si} \Nt=-c_\Si \gradtau $ where
$c_\Si=\frac{1}{e_3(\tau)} $ so that $\g(N_\HH, e_3)=\g(N_\Si, e_3)=-1$. In what follows
$\Tplus=T+\frac{a}{r_+^2+a^2}Z$, see \eqref{def:T_{om_+}}.
\begin{lemma}
    \lab{lemma:Energy1}
    The following statements hold true.
    \begin{itemize}
        \item Along the horizon $r=r_{+}$, the normal $N_{\HH}=-\Rt=\frac 1 2 e_4= \frac{r^2+a^2}{|q|^2}T_+$ and
        \beq
        \lab{eq:Festimates}
        \bsplit
        \QQ(T, N_\HH)&=   \frac{r^2+a^2}{|q|^2}T\psi \c  T_+\psi.
        \\
        \QQ(T_+, N_\HH)&=  \frac{r^2+a^2}{|q|^2}|T_+\psi|^2.
    \end{split}
    \eeq
    \item Along $\Si(\tau)$, for $r\ge r_+$, some small $c_0>0$ and large $C$,
    \bea
\lab{eq:EFestimates1}
\QQ(\T, N_\Si)\ge c_0\Big( | e_4\psi|^2+\frac{|\De|}{r^2} |e_3 \psi|^2 + |\nab\psi|^2\Big) - C\frac{a^2m^2 }{r^6 }  |\Z\psi|^2
\eea
\end{itemize}
\end{lemma}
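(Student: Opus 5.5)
The plan is a direct pointwise computation, treating the horizon identity and the bound along $\Si(\tau)$ separately. Throughout we use $\QQ(X,Y)=X\psi\c Y\psi-\frac12\g(X,Y)\,\g^{\a\b}\pr_\a\psi\pr_\b\psi$.

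\textbf{Horizon identity.} On $\HH$ we have $\De=0$. By \eqref{eq:null-pair-in}, $e_4=\frac{r^2+a^2}{|q|^2}(T+\frac{a}{r^2+a^2}Z)$ at $r=r_+$, so $N_\HH=\frac12 e_4=\frac{r^2+a^2}{|q|^2}T_+$ up to the normalization fixed by $\g(N_\HH,e_3)=-1$. This is consistent with \eqref{eq:gtadt-gradr}, since $\gradr=\frac12 e_4$ there.

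Since $\HH$ is null and $N_\HH$ is its null generator, $\g(X,N_\HH)\,\g^{\a\b}\pr_\a\psi\pr_\b\psi$ does not in general vanish. The cleaner route is to note that $\g(T_+,T_+)=0$ on $\HH$ and $\g(T,T_+)=0$ on $\HH$. The latter holds because $T_+$ is normal to $\HH$ and $T$ is tangent to $\HH$, since $T(r)=0$. Hence $\QQ(T,N_\HH)=T\psi\c N_\HH\psi$ and $\QQ(T_+,N_\HH)=T_+\psi\c N_\HH\psi$, which gives \eqref{eq:Festimates} immediately.

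\textbf{Bound along $\Si(\tau)$.} Write $N_\Si=c_\Si(-\gradtau)$ and use \eqref{eq:N_Si} together with the orthogonal frame of Lemma \ref{lemma:TtRtandZ}. In that frame,
\[
\QQ(X,Y)=\tfrac12\textstyle\sum_{\text{frame}}(\cdots)
\]
reduces, for $X,Y$ both in the span of $\Tt,\Rt$ (and $Z$), to a standard quadratic form.

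A cleaner way to organize this is to decompose $T=\That-\frac{a}{r^2+a^2}Z$. Since $\That$ and $N_\Si$ are both causal and future directed, $\QQ(\That,N_\Si)$ controls the full energy density, with the degenerate weight $\De/r^2$ on $|e_3\psi|^2$ and the $m^2/r^2$-weighted $e_4$ component near the horizon coming from \eqref{eq:N_Si-frame}. One checks this with the frame formulas \eqref{eq:formulasThat-Rhat}: $\That$ is proportional to $|q|^2e_4+\De e_3$, and $N_\Si$ is proportional to $\frac{m^2}{r^2}e_4+O_+(1)e_3$ plus a horizontal term $\frac{a\sin\th}{|q|}e_2$. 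The latter must be absorbed using that $N_\Si$ is uniformly timelike by \eqref{eq:g(N_Si,N_Si)}.

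The remaining term is $-\frac{a}{r^2+a^2}\QQ(Z,N_\Si)$. It equals $-\frac{a}{r^2+a^2}\big(Z\psi\,N_\Si\psi-\frac12\g(Z,N_\Si)\LL[\psi]\big)$, and $\g(Z,N_\Si)=-c_\Si Z(\tau)=0$. Cauchy--Schwarz then gives
\[
\Big|\tfrac{a}{r^2+a^2}Z\psi\,N_\Si\psi\Big|\le \ep|N_\Si\psi|^2+C\ep^{-1}\tfrac{a^2}{r^4}|Z\psi|^2 .
\]

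\textbf{Main obstacle.} The hard step is matching the weights. We need $r^{-6}a^2m^2$ rather than $r^{-4}$, so for large $r$ we instead use $T$ directly. For $r\ge 2r_0$, $T$ is timelike with $\g(T,T)\sim-1$, so $\QQ(T,N_\Si)$ is already coercive and no $Z$ error is needed. Only the compact region $r\le 2r_0$ requires the splitting, and there $r^{-4}\sim r^{-6}m^2$ up to constants depending on $r_0$.

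The one genuinely delicate point is near the horizon. There we must check that $\ep|N_\Si\psi|^2$ is absorbed by the coercive part of $\QQ(\That,N_\Si)$ even though the $e_3$ weight degenerates. This works because $N_\Si\psi$ contains $e_3\psi$ only with an $O(1)$ coefficient. I would verify this by explicitly expanding $\QQ(\That,N_\Si)$ in the null frame, where it equals a positive combination of $|e_4\psi|^2$, $\De r^{-2}|e_3\psi|^2$ and $|\nab\psi|^2$ plus $e_2$ cross terms, and choosing $\ep\ll c_0$.
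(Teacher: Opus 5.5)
Your horizon computation is fine, and it is what the paper means by ``straightforward'': $\g(T,T_+)=\g(T_+,T_+)=0$ on $\HH$ removes the Lagrangian term. There is one small slip. $\frac{\De}{|q|^2}\pr_r$ does not vanish on $\HH$, because the Boyer--Lindquist $\pr_r$ is singular there. In fact $e_4|_{\HH}=\frac{2(r^2+a^2)}{|q|^2}T_+$, so $N_\HH=\frac12 e_4=\frac{r^2+a^2}{|q|^2}T_+$ holds exactly and no renormalization is involved.

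Away from the horizon your route is a legitimate alternative to the paper's. The paper writes $T=\Tt-\frac{2amr}{\Si^2}Z$ with $\Tt=-\frac{|q|^2\De}{\Si^2}\gradt$. It computes $\QQ(\Tt,\Nt)$ as an explicit sum of squares in the orthogonal frame $\{\Tt,\Rt,e_1,Z\}$, and gets the weight $\frac{a^2m^2}{r^6}$ for free because the $Z$-coefficient is $O(amr^{-3})$. Your $\That$-splitting only produces $a^2r^{-4}$. That is why you need the cut at $r=2r_0$ and $r_0$-dependent constants.

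\textbf{The gap.} The step you single out near $r=r_+$ fails, and the reason you give for it working is backwards.

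\begin{itemize}
\item Since $\QQ(e_4,\cdot)$ never contains $e_3\psi$ and $\That=\frac12\big(\frac{|q|^2}{r^2+a^2}e_4+\frac{\De}{r^2+a^2}e_3\big)$, the quantity $\QQ(\That,N_\Si)$ sees $e_3\psi$ only with $O(\De)$ coefficients.
\item By contrast, $\ep|N_\Si\psi|^2$ contains $e_3\psi$ with a nondegenerate weight, precisely because the $e_3$-coefficient of $N_\Si$ is $O(1)$.
\item Absorbing it would force $\ep\les\De/r^2$, and then $\ep^{-1}|Z\psi|^2$ is unbounded.
\end{itemize}

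No better splitting can repair this, because the inequality itself fails on $\HH$ when $a\neq 0$. At $r=r_+$ one has:
\begin{itemize}
\item $c_\Si=r^2/m^2$ and $N_\Si=\frac12 e_4+\frac{r^2(r^2+a^2)}{m^2|q|^2}e_3-\frac{r^2a\sin\th}{m^2|q|}e_2$;
\item $T=T_+-\om_+Z$ with $T_+$ proportional to $e_4$;
\item $Z\psi$ contains no $e_3\psi$, and $\g(Z,N_\Si)=0$.
\end{itemize}
Hence the entire $e_3\psi$-dependence of $\QQ(T,N_\Si)$ there is the single term $-\frac{ar_+^2}{m^2|q|^2}Z\psi\, e_3\psi$, coming from $-\om_+ Z\psi\c N_\Si\psi$. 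The right-hand side of \eqref{eq:EFestimates1} does not involve $e_3\psi$ at $r=r_+$. Fix $Z\psi\neq 0$ at a point with $a\sin\th\neq0$ and let $e_3\psi\to\pm\infty$; the bound is violated. For any fixed $C$ it also fails for $r$ close to $r_+$, since the $|e_3\psi|^2$ coefficient is $O(\De)$ while the mixed coefficient stays bounded away from zero.

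The paper's proof has the same defect. Its Cauchy--Schwarz bound for $\frac{2amr}{\Si^2}Z\psi\c N_\Si\psi$ cannot hold near $\HH$, where $\Tt=T_+$ and $\frac{2amr}{\Si^2}=\om_+$.

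What is actually true is one of the following:
\begin{itemize}
\item the estimate with an extra nondegenerate error $\ep|e_3\psi|^2$ near $\HH$, to be paid for by the red-shift estimate; or
\item the estimate with $T$ replaced by $T_+$ near the horizon. Writing $T_+=\That+\big(\om_+-\frac{a}{r^2+a^2}\big)Z$, the $Z$-coefficient is $O(r-r_+)$. Your Cauchy--Schwarz then closes, because $(r-r_+)^2\le\De$.
\end{itemize}
The second option is in effect how the energy is used in the proof of Proposition \ref{Prop.1.6-Intro}, where $\Tring=T_+$ near $\HH$.
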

\begin{proof}
See appendix \ref{appendix:identities-QQ}.
\end{proof}





\section{Null geodesics in Kerr}\lab{section:geodesics-Kerr}


In this section, we analyze the behavior of the trapped null geodesics in $\KK(a,m)$.
The characterization of the $r$-range of  the  trapped set  given in Proposition \ref{Prop:r-trapped.region}\,
plays an essential role in the proof of Theorem \ref{Thm:Morawetz1}.


\subsection{The constants of motion for geodesics}


Let $\ga(\la)$ be a null geodesic in Kerr. Using the expression for the inverse of the metric given by \eqref{inverse-metric}, along $\ga(\la)$, since $\g(\gadot, \gadot)=0$ we have, with $\gadot_r=\pr_r^\a \gadot_\a$, $\gadot_t=\pr_t^\a \gadot_\a$, $\gadot_\vphi=\pr_\vphi^\a \gadot_\a$
\beq
\lab{eq:identity-null}
0= |q|^2 \g^{\a\b} \gadot_\a \gadot_\b=\left( \De \pr_r^\a \pr_r^\b +\frac{1}{\De}\GG^{\a\b}\right)\gadot_\a \gadot_\b=
\De\gadot_r \gadot_r+ \frac{1}{\De} \GG^{\a\b}\gadot_\a \gadot_\b
\eeq
with
\[
\GG^{\a\b}\gadot_\a \gadot_\b= -(r^2+a^2) ^2\gadot_t \gadot_t- 2a(r^2+a^2) \gadot_t \gadot_\vphi- a^2  \gadot_\vphi \gadot_\vphi  +\De O^{\a\b} \gadot_\a \gadot_\b.
\]

Since $\pr_t=T$ and $\pr_\vphi=Z$ are Killing vectorfields, we deduce that $\gadot_t=\g(\gadot, T)$ and $\gadot_\vphi= \g(\gadot, Z)$ are constants of the motion, i.e. constants along $\ga$, and respectively called the energy and the azimuthal angular momentum. We write,
\[
\e:=-\g(\gadot, T) , \qquad \lz :=-\g(\gadot, Z).
\]
We also define\footnote{Observe that $\k^2$ is a positive constant of motion.}
\[
\k^2:=K^{\a\b} \gadot_\a\gadot_\b
\]
for the Carter tensor $K$ in Kerr. Since $K$ is Killing, $\k^2$ is also a constant of motion. Indeed, we have
\[
\frac{d}{d\lambda} \k^2\big(\ga(\lambda) \big)=\D_\la K_{\a\b}\,\gadot^\la \gadot^\a\gadot^\b=0.
\]

Since $K= -(a^2\cos^2\th) \g +O$ (see section 3.7 in \cite{GKS}) and {since}
$\ga$ is null we deduce, with $\gadot_a=\g(\gadot, e_a)$
\[
\k^2= O^{\a\b} \gadot_\a \gadot_\b=|q|^2  \big( e_1^\a e_1^\b+ e_2^\a e_2^\b\big) \gadot_\a \gadot_\b =|q|^2 \big(|\gadot_1|^2+|\gadot_2|^2 \big).
\]

We summarize the result in the following.
\begin{proposition}
The quantities
\[
\e=-\g(\gadot, T) , \qquad \lz=-\g(\gadot, Z), \qquad \k^2=K^{\a\b} \gadot_\a\gadot_\b
\]
are constants along null geodesics. Moreover
\[
\k^2=|q|^2 \big(|\gadot_1|^2+|\gadot_2|^2 \big).
\]
\end{proposition}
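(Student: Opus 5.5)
The plan is to treat the conservation statements and the identity for $\k^2$ separately. All three conservation laws follow from the same mechanism: differentiate along $\ga$ and combine the geodesic equation $\D_{\gadot}\gadot=0$ with a Killing-type symmetry.

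For $\e$ and $\lz$, let $X\in\{T,Z\}$. Then
\[
\frac{d}{d\la}\g(\gadot,X)=\g(\D_{\gadot}\gadot,X)+\g(\gadot,\D_{\gadot}X)=0+\D_\a X_\b\,\gadot^\a\gadot^\b .
\]
The last term vanishes because $\D_{(\a}X_{\b)}=0$ for a Killing field, and $T=\pr_t$, $Z=\pr_\phi$ are Killing since the metric coefficients \eqref{eq:coordsBL} are independent of $t,\phi$. For $\k^2$, the input is that the Carter tensor $K$ is a symmetric Killing tensor, $\D_{(\la}K_{\a\b)}=0$ (section 3.7 in \cite{GKS}). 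Then
\[
\frac{d}{d\la}\big(K_{\a\b}\gadot^\a\gadot^\b\big)=\gadot^\la\D_\la K_{\a\b}\,\gadot^\a\gadot^\b+2K_{\a\b}\gadot^\a\D_{\gadot}\gadot^\b ,
\]
where the first term is the full symmetrization and so vanishes, and the second vanishes by the geodesic equation. If one prefers not to quote \cite{GKS} for the Killing property of $K$, it can be recovered from Proposition \ref{prop:decompose-square}: the commutation $[\OO,|q|^2\square]=0$ at the principal-symbol level says that the Poisson bracket of $O^{\a\b}\xi_\a\xi_\b$ and $|q|^2\g^{\a\b}\xi_\a\xi_\b$ vanishes. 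Restricted to the null cone, where $|q|^2\g^{\a\b}\xi_\a\xi_\b=0$, this gives conservation of $O^{\a\b}\gadot_\a\gadot_\b$ along the Hamiltonian flow of $|q|^2\g^{-1}$. That flow is a reparametrization of the null geodesic flow, so the quantity is conserved along null geodesics.

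For the identity, insert $K=-(a^2\cos^2\th)\g+O$ and use nullity, $\g^{\a\b}\gadot_\a\gadot_\b=0$, to get $\k^2=O^{\a\b}\gadot_\a\gadot_\b$. Then the frame expression \eqref{eq:O-frame}, $O^{\a\b}=|q|^2(e_1^\a e_1^\b+e_2^\a e_2^\b)$, gives $\k^2=|q|^2(|\gadot_1|^2+|\gadot_2|^2)$ with $\gadot_a=\g(\gadot,e_a)$. This also shows $\k^2\ge0$.

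The only step that is not immediate is the Killing tensor property of $K$, or equivalently identifying $O^{\a\b}\gadot_\a\gadot_\b$ as the conserved quantity. I would quote it from \cite{GKS}, with the symbol-level commutation argument above as a self-contained backup. I would also check \eqref{eq:O-frame} directly against \eqref{equation:def-O} and \eqref{eq:canonicalHorizBasisKerr}. From $e_2=\frac{a\sin\th}{|q|}\pr_t+\frac{1}{|q|\sin\th}\pr_\phi$ we get $|q|^2e_2e_2=a^2\sin^2\th\,\pr_t\pr_t+2a\,\pr_t^{(\cdot}\pr_\phi^{\cdot)}+\sin^{-2}\th\,\pr_\phi\pr_\phi$, and $|q|^2e_1e_1=\pr_\th\pr_\th$; their sum is exactly \eqref{equation:def-O}.
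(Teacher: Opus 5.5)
Your proposal is correct and follows the paper's argument. Conservation of $\e$ and $\lz$ comes from $T,Z$ being Killing; conservation of $K^{\a\b}\gadot_\a\gadot_\b$ comes from the Killing tensor property of $K$ (quoted from \cite{GKS}) together with the geodesic equation; and the identity comes from $K=-(a^2\cos^2\th)\g+O$, nullity of $\gadot$, and the frame form \eqref{eq:O-frame} of $O$. Your Poisson-bracket backup, which derives conservation of $O^{\a\b}\gadot_\a\gadot_\b$ along null geodesics from $[\OO,|q|^2\square]=0$ in Proposition \ref{prop:decompose-square}, is also sound and makes the argument self-contained, which the paper does not attempt.
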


With these constants we have, relative to the BL coordinates $(t, r, \th, \vphi)$,
\begin{align*}
\GG^{\a\b}\gadot_\a \gadot_\b&= -(r^2+a^2) ^2\gadot_t \gadot_t- 2a(r^2+a^2) \gadot_t \gadot_\vphi- a^2  \gadot_\vphi \gadot_\vphi  +\De O^{\a\b} \gadot_\a \gadot_\b\\
&= -(r^2+a^2) ^2 \e^2- 2a(r^2+a^2) \e \, \lz - a^2  \lz^2  +\De \k^2
\end{align*}
which is only a function of $r$ along any fixed null geodesic.
We introduce the notation
\[
G(r; a, m, \e,\lz, \k):=  -(r^2+a^2)^2 \e^2- 2a(r^2+a^2) \e \, \lz - a^2  \lz^2  +\De \k^2.
\]
Note that we can rewrite
\beq
\lab{identity:RR(r,a,m,e,lz,q}
-G(r; a, m, \e,\lz, \k)=\big(( r^2+a^2)\e + a \lz\big)^2-\De\k^2.
\eeq
In view of \eqref{eq:identity-null}, we infer that
\[
0=
\De\gadot_r \gadot_r+ \frac{1}{\De} \GG^{\a\b}\gadot_\a \gadot_\b =\De\gadot_r \gadot_r+ \frac{1}{\De}G(r; a, m, \e,\lz, \k).
\]
Differentiating $ r= r(\ga^\a(\la)), \th=\th(\ga^\a(\la))$ with respect to $\la$,
\begin{align*}
\frac{dr}{d\la} &=\gadot^r=\g^{rr}\gadot_r=\frac{\De}{|q|^2}\gadot_r\\
\frac{d\th}{d\la}&=\gadot^\th=\g^{\th\th}\gadot_\th=\frac{1}{|q|^2}\gadot_\th.
\end{align*}
Since, see \eqref{eq:canonicalHorizBasisKerr},
\[\
e_1=\frac{1}{|q|}\pr_\th,\qquad e_2=\frac{a\sin\th}{|q|}\pr_t+\frac{1}{|q|\sin\th}\pr_\phi.
\]
we also deduce
\[
\k^2=|q|^2 \big(\g(\gadot, e_1)^2 +\g(\gadot, e_2)^2 \big)= \gadot_\th^2+\frac{1}{\sin^2\th}(\lz+a\sin^2\th\e)^2.
\]
We finally obtain
\beq
\lab{eq:eqnfornullgeor}
\bsplit
|q|^4\Big(\frac{dr}{d\la} \Big)^2&=\De^2\gadot_r^2=-G(r; a, m, \e,\lz, \k)\\
|q|^4\Big(\frac{d\th}{d\la}\Big)^2&=\gadot_\th^2=\k^2-\frac{1}{\sin^2\th}(\lz+a\sin^2\th\e)^2.
\end{split}
\eeq
which are the equations for a null geodesic with constants of motion $\e$, $\lz$, $\k$.


\subsection{Trapped null geodesics}
\lab{section:trapped-nullg}


\begin{definition}
Null geodesics $\ga(\la)$ which are restricted to a bounded region $[r_1, r_2]$ of $r$ with $r_+<r_1<r_2<+\infty$ for all values of $\la$
 are called trapped. Trapped null geodesics for which $r$ remains constant are called orbital.
\end{definition}

It is known that all trapped null geodesics are in fact orbital, see for example Proposition 2 in \cite{CeJa}. Thus, from now on, we do not distinguish between trapped and orbital null geodesics.

If $r$ is constant we deduce from the first equation in \eqref{eq:eqnfornullgeor},
\[
-\partial_rG(r; a, m, \e,\lz, \k)=\partial_r(|q|^4)\Big(\frac{dr}{d\la} \Big)^2+2 |q|^4 \frac{dr}{d\la} \partial_r \Big(\frac{dr}{d\la} \Big)=0.
\]
The $r$ values for which such solutions are possible must then verify the equations
\[
G(r; a, m, \e,\lz, \k)=\pr_rG(r; a, m, \e,\lz, \k)=0.
\]
Thus, introducing
\beq
\lab{def;Pi-ell_z}
\Pi:=( r^2+a^2)\e + a \lz,
\eeq
we write from \eqref{identity:RR(r,a,m,e,lz,q}
\beq
\lab{eq:trappe-nullgeod}
\bsplit
-G(r; a, m, \e,\lz, \k)&=\Pi^2-\De \k^2=0\\
-\pr_r G(r; a, m, \e,\lz, \k)&=2\Pi (\pr_r \Pi) -(\pr_r \De)\k^2=0.
\end{split}
\eeq
From the second equation, we deduce
\beq
\lab{eq:orbittingnullgeod0}
\k^2=2\Pi \frac{\pr_r\Pi }{\pr_r \De}.
\eeq
Multiplying the first equation by $\pr_r \De$ and the second by $\De$ and subtracting we derive
\[
\Pi\big(\pr_r \De \Pi- 2 \De \pr_r \Pi\big)=0,
\]
or, assuming\footnote{ Note that if $\Pi=0$ we deduce, from the second equation in \eqref{eq:trappe-nullgeod}, $ 2(r-m) \k^2=0$. Since $r=m$ is excluded from the domain of outer communication for all sub-extremal Kerr spacetimes we infer that
$
\k^2= \gadot_\th^2+\frac{1}{\sin^2\th}(\lz+a\sin^2\th\e)^2=0$, i.e.
$ \gadot_\th$ also vanishes. } $\Pi\neq0$,
\beq
\lab{eq:orbittingnullgeod1}
\Pi\pr_r \De - 2 \De \pr_r \Pi=0.
\eeq

We make use of the following calculation.
\begin{lemma}
\lab{lemma:calculationPiDe}
We have the identity\footnote{i.e., for $\Pi\neq 0$, $ \pr_r\Big(\frac{\De}{\Pi^2}\Big)=-2 \frac{ \TT_{\e, \lz}}{\Pi^3}$.}
\beq\label{-2T-Pi-De}
\Pi(\pr_r\De)-2 (\pr_r \Pi) \De = - 2 \TT_{\e, \lz}
\eeq
where
\[
\TT_{\e, \lz}:= \big( r^3-3mr^2 + r a^2+ma^2\big)\e-  (r-m) a\lz.
\]
\end{lemma}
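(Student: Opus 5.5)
This identity is a direct polynomial computation, so the plan is simply to expand both sides in powers of $r$ and compare, grouping terms by the constants of motion $\e$ and $\lz$.

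First we record the ingredients: $\Pi=(r^2+a^2)\e+a\lz$, $\pr_r\Pi=2r\e$, $\De=r^2-2mr+a^2$ and $\pr_r\De=2(r-m)$. Since both sides of \eqref{-2T-Pi-De} are linear in $(\e,\lz)$, it suffices to match the coefficients of $\e$ and of $\lz$ separately.

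For the $\lz$ coefficient only $\Pi(\pr_r\De)$ contributes, because $\pr_r\Pi$ does not involve $\lz$. This gives $2a(r-m)\lz$, which equals $-2\cdot\big(-(r-m)a\lz\big)$, as required.

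For the $\e$ coefficient we compute
\[
2(r-m)(r^2+a^2)-4r(r^2-2mr+a^2).
\]
Expanding the first product gives $2r^3-2mr^2+2ra^2-2ma^2$. The second product is $4r^3-8mr^2+4ra^2$. Subtracting, we obtain
\[
-2r^3+6mr^2-2ra^2-2ma^2=-2\big(r^3-3mr^2+ra^2+ma^2\big).
\]
This is exactly $-2$ times the $\e$-part of $\TT_{\e,\lz}$, which completes the check.

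The footnote's form then follows from the quotient rule:
\[
\pr_r(\De\Pi^{-2})=\Pi^{-3}\big(\Pi\,\pr_r\De-2\De\,\pr_r\Pi\big).
\]

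There is no real obstacle here. The only care needed is sign bookkeeping in the $\e$ coefficient.
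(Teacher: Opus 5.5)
Your proposal is correct and follows essentially the same route as the paper: a direct expansion of $2(r-m)\Pi-4r\De\,\e$ using $\pr_r\Pi=2r\e$ and $\pr_r\De=2(r-m)$. The paper simplifies everything in one chain, whereas you separate the $\e$ and $\lz$ coefficients, and your quotient-rule check of the footnote is also correct.
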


\begin{proof}
We have
\begin{align*}
(\pr_r \De) \Pi- 2\De (\pr_r\Pi) &= 2(r-m) \big( (r^2+a^2) \e + a \lz\big)- 4 r  \big( r^2+ a^2- 2rm\big) \e\\
&=2\Big(  (r-m) \big(( r^2+a^2)\e+  a \lz\big) - 2 r  \big( r^2+ a^2- 2rm\big)\e\Big)\\
&=2\Big( \big(-r^3+3mr^2 - r a^2-ma^2\big) \e+  (r-m) a\lz\Big)\\
&= - 2\TT_{\e, \lz}
\end{align*}
as stated.
\end{proof}
As a consequence, we deduce that all nontrivial orbital null geodesics are given by the equation
\beq
\TT_{\e, \lz}= \big( r^3-3mr^2 + r a^2+ma^2\big)\e-  (r-m) a\lz=0.
\eeq

\begin{corollary}
\lab{Cor:Trapping}
The following hold true.
\begin{enumerate}
\item There are no trapped null geodesics perpendicular to $T=\pr_t$ (i.e. $\e=0$) in the exterior of a sub-extremal Kerr.

\item The values of $r$ for which trapped null geodesics exist depend on the ratio $\lz/ \e$. More precisely, along trapped null geodesics, we have
\[
\frac{r^3-3mr^2 + r a^2+ma^2}{r-m}=\frac{ a\lz}{\e}.
\]
In particular, for $\lz=0$, the trapped null geodesics are given by the equation
\beq\lab{definition-TT}
\TT:= r^3- 3m r^2+ a^2r+a^2m=0.
\eeq
\item Along a trapped null geodesic we have
\beq
\lab{eq:k^2-trapped}
\k^2 =\frac{\Pi^2}{\De}.
\eeq

\end{enumerate}
\end{corollary}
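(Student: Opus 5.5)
The three statements of Corollary \ref{Cor:Trapping} follow directly from Lemma \ref{lemma:calculationPiDe} combined with the two orbital equations \eqref{eq:trappe-nullgeod}; I will take them in order. Throughout I assume $\Pi\neq0$, since the footnote preceding \eqref{eq:orbittingnullgeod1} shows that $\Pi=0$ forces $\k^2=0$. In that case $\gadot_\th=0$ and $\lz+a\sin^2\th\,\e=0$, so the geodesic is trivial and is excluded.

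\textbf{Step 1: the orbit equation.} By Lemma \ref{lemma:calculationPiDe}, equation \eqref{eq:orbittingnullgeod1} becomes $-2\TT_{\e,\lz}=0$. So every nontrivial orbital null geodesic satisfies
\[
\big(r^3-3mr^2+ra^2+ma^2\big)\e=(r-m)a\lz .
\]

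\textbf{Step 2: part (1).} Suppose $\e=0$. The equation above gives $(r-m)a\lz=0$. In the domain of outer communication we have $r\ge r_+>m$, so $r-m\neq0$, hence $a\lz=0$.

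If $a\neq0$, then $\lz=0$, so $\Pi=(r^2+a^2)\e+a\lz=0$. This is the excluded case.

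If $a=0$, the case $\e=0$ gives $\Pi=0$ immediately.

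Alternatively, one can argue directly: with $\e=0$, the first equation of \eqref{eq:trappe-nullgeod} reads $a^2\lz^2=\De\k^2$, and the second reads $0=2(r-m)\k^2$. Together these force $\k^2=0$ and then $a\lz=0$, so the geodesic is trivial.

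Either way, there is no trapped null geodesic with $\e=0$. This step requires the most care, but only in bookkeeping the degenerate case $\Pi=0$.

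\textbf{Step 3: part (2).} Since $\e\neq0$ and $r-m>0$, divide the orbit equation by $(r-m)\e$ to get
\[
\frac{r^3-3mr^2+ra^2+ma^2}{r-m}=\frac{a\lz}{\e}.
\]
Setting $\lz=0$ and using $\e\neq0$ yields $\TT=r^3-3mr^2+a^2r+a^2m=0$, which is \eqref{definition-TT}.

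\textbf{Step 4: part (3).} Formula \eqref{eq:k^2-trapped} is the first equation of \eqref{eq:trappe-nullgeod}, $\Pi^2-\De\k^2=0$, divided by $\De$. This is legitimate because $\De>0$ for $r>r_+$, and trapped geodesics lie in $r_1\ge r>r_+$.
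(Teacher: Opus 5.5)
Your proof is correct and is essentially the paper's argument. You use Lemma~\ref{lemma:calculationPiDe} to turn the orbital condition into $\TT_{\e,\lz}=0$, note that $\e=0$ forces $a\lz=0$ and hence $\Pi=0$ (the excluded degenerate case), and read off parts (2) and (3) from the orbit equation and from the first orbital equation divided by $\De$.

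Your treatment of the subcase $a=0$, where simply $\Pi=r^2\e=0$, is cleaner than the paper's aside that $r=3m$ there. The only slip is the inequality in Step 4, which should read $r\ge r_1>r_+$.
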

\begin{proof}
If $\e=0$ we deduce $(r-m) a\lz=0$ i.e. either $a=0$, in which case $r=3m$, or $\lz=0$. In this latter case it follows that $\Pi=( r^2+a^2)\e + a \lz=0$ which we have already excluded.
The other statements are immediate consequences of what was discussed above.
\end{proof}

\begin{remark}
\lab{remark:TT_lz-TT_a}
We will also make use of the notation
\beq
\lab{defin:TT_{lz}}
\TT_{\lz}:=\TT_{e=1,\lz} = \TT-  (r-m) a\lz=2r\De-(r^2+a^2+a\lz)(r-m).
\eeq
Note that, in the particular case when $\e=1$ and $\lz=-a$ the characteristic polynomial $\TT_{\e,\lz}$ takes the form
\beq
\lab{eq:defineTT_a}
\TT_{-a}:=\TT_{e=1,\lz=-a}=  \TT+  (r-m) a^2= r\big(r^2-  3mr+2a^2\big).
\eeq
\end{remark}

\begin{remark}
The scalar $\Psi$ introduced in \eqref{eq:defPsi-intro} vanishes on the trapping set.
\end{remark}
\begin{proof}
Using the correspondence $T\psi=\e\psi, Z\psi=\lz\psi, \OO\psi=\k^2\psi$, we rewrite
\[
\Psi=-\frac{2\TT_{-a}}{ r^3} \OO\psi+4a \Zhat \That \psi=\big(-\frac{2\TT_{-a}}{r^3}\k^2+ \frac{4}{r^2}(a\lz+a^2\e)\Pi\big)\psi.
\]
Since along the trapped null geodesics, we have (see Corollary \ref{Cor:Trapping})
\[
\k^2=\frac{\Pi^2}{\De},\quad \frac{a\lz}{\e}=\frac{\TT}{r-m},
\]
 we derive
\[
(a\lz+a^2\e)=\frac{a^2+\frac{a\lz}{\e}}{r^2+a^2+\frac{a\lz}{\e}}\Pi=\frac{a^2(r-m)+\TT}{(r^2+a^2)(r-m)+\TT}\Pi=\frac{\TT_{-a}}{2r\De}\Pi
\]
and thus
\[
\Psi=-\frac{2\TT_{-a}}{r^3}\k^2+ \frac{4}{r^2}(a\lz+a^2\e)\Pi=-\frac{2\TT_{-a}}{r^3}\frac{\Pi^2}{\De}+\frac{4}{r^2}\frac{\TT_{-a}}{2r\De}\Pi^2=0.
\]
\end{proof}

\subsubsection{Instability of trapped null geodesics}

For completeness we repeat below the well-known instability argument for the trapped null geodesics, i.e. that\footnote{Note that this instability is related to the control of the radial derivative in a Morawetz estimate. } $\pr_r^2 G(r; a, m, \e,\lz, \k) \leq 0$.
We have
\[
\Pi=( r^2+a^2) \e + a \lz, \quad
\pr_r \Pi =2r \e,\quad
\pr_r^2 \Pi=2 \e,
\]
and using \eqref{eq:orbittingnullgeod0} to write $\k^2=4r \frac{\Pi }{\pr_r \De} \e$, we have
\begin{align*}
-\pr^2_r G(r; a, m, \e,\lz, \k)&=2(\pr_r \Pi)^2+2\Pi (\pr^2_r \Pi) -2\k^2=8r^2 \e^2+4\Pi \e-8r \frac{\Pi }{\pr_r \De} \e\\
&=\frac{4}{\partial_r \De} \left( 2r^2 \pr_r\De \e^2+(\pr_r\De) \Pi \e-2r \Pi  \e\right)
=\frac{4\e}{\partial_r \De} \left( 4r^2 (r-m) \e-2m \Pi \right).
\end{align*}
Using \eqref{eq:orbittingnullgeod1} to write $\Pi=\frac{4r\De \e }{\pr_r\De}$
we deduce
\begin{align*}
-\pr^2_r G(r; a, m, \e,\lz, \k)&=\frac{4\e}{\partial_r \De} \left( 4r^2 (r-m) \e-2m\frac{4r\De   \e }{\pr_r\De} \right)=\frac{32r\e^2}{(\partial_r \De)^2} \left( r (r-m)^2 -m \De  \right)\\
&= \frac{8r}{(r-m)^2}\e^2\Big( r(r-m)^2 -m (r^2+a^2- 2mr)\Big)\\
&= \frac{8r}{(r-m)^2}\e^2\Big((r-m)^3 +m( m^2-a^2)\Big),
\end{align*}
which is positive since $r> m$ and $|a|< m$.


\subsection{The $r$-range of trapped null geodesics}
\lab{section-range.trapped.ngeod}

To specify the values\footnote{Following the analysis in \cite{Teo}.} of $r$ for which trapped null geodesics exist we analyze the second equation in \eqref{eq:eqnfornullgeor}, i.e.
\[
|q|^4\Big(\frac{d\th}{d\la}\Big)^2=\k^2-\frac{1}{\sin^2\th}(\lz+a\sin^2\th\e)^2
\]
By the previous analysis on the radial equation, we find that at the trapped null geodesics
\[
\frac{\lz}{\e}=\frac{\TT}{a(r-m)}=\frac{r^3-3mr^2 + r a^2+ma^2}{a(r-m)}
=\frac{2r\Delta-(r-m)(r^2+a^2)}{a(r-m)}.
\]
Then
\begin{align*}
\k^2&=\frac{\Pi^2}{\Delta}=\frac{\Big((r^2+a^2)\e+a\lz\Big)^2}{\Delta}=\frac{\e^2}{\Delta}\Big((r^2+a^2)+\frac{2r\Delta-(r^2+a^2)(r-m)}{r-m}\Big)^2\\
&=\e^2\frac{4r^2\Delta}{(r-m)^2}.
\end{align*}
Therefore, at the trapped null geodesics, we have
\begin{align*}
|q|^4\Big(\frac{d\th}{d\la}\Big)^2&=\k^2-\frac{1}{\sin^2\th}(\lz+a\sin^2\th\e)^2=\e^2\Big(\frac{\k^2}{\e^2}-\frac{1}{\sin^2\th}(\frac{\lz}{\e}+a\sin^2\th)^2\Big)\\
&=\e^2\Big(\frac{4r^2\Delta}{(r-m)^2}-\frac{1}{\sin^2\th}\big(\frac{\TT}{a(r-m)}+a\sin^2\th\big)^2\Big)\\
&=\frac{\e^2}{a^2(r-m)^2}\Big(4a^2r^2\Delta-\frac{1}{\sin^2\th}\big(\TT+a^2(r-m)\sin^2\th\big)^2\Big).
\end{align*}
We rewrite in the form,
\[
\frac{a^2(r-m)^2}{\e^2}|q|^4\Big(\frac{d\th}{d\la}\Big)^2\sin^2\th= 4a^2r^2\Delta\sin^2\th-\big(\TT+a^2(r-m)\sin^2\th\big)^2.
\]
Setting $u=\cos\th$, and then with $x=\sin^2 \th$ we have\footnote{ In fact
$\frac{a^2(r-m)^2}{\e^2}|q|^4\Big(\frac{du}{d\la}\Big)^2=Q_{r(\la)}(x\big(\la)\big)$.}
\beq
\lab{eq:Q_r(x)}
\bsplit
&\frac{a^2(r-m)^2}{\e^2}|q|^4\Big(\frac{du}{d\la}\Big)^2=Q_r(x), \qquad
Q_r(x):= 4a^2r^2\Delta x -\big(\TT+a^2(r-m) x \big)^2.
\end{split}
\eeq
Therefore, along the geodesic $r(\la)$, we must have $Q_{r(\la)}(x\big(\la)\big)\ge 0$. In particular for every $ r$ in the range of $r(\la)$, $Q_r(x)\ge 0$ for some $x\in[0, 1]$.
We show below, see Lemma \ref{lemma:Q_r-increases}, that for every $r>r_{+}$ the polynomial $Q_r(x)$ is increasing in the interval $x\in [0,1]$. Therefore for every $ r$ in the range of $r(\la)$ we must have $Q_r(1)\ge 0$.
Consider the polynomial
\beq
\lab{eq:PolynTheta}
\Theta(r):=-Q_r(1)=(\TT+a^2(r-m))^2-4a^2r^2\Delta.
\eeq

\begin{remark}
\lab{remark-Theta}
The sixth-degree polynomial $\Theta(r):=(\TT+a^2(r-m))^2-4a^2r^2\Delta$
can be re-expressed in the form
\[
\Theta(r)=r^3(r^3-6mr^2+9m^2r-4ma^2).
\]
\end{remark}
We need to determine the $r$-interval, $r\ge r_+$, for which $\Th(r)\le 0 $.
\begin{proposition}
\lab{Prop:r-trapped.region}
Let $r_\pm$ be the roots of $\De=0$, with $r=r_+$ the event horizon, and $r_1<r_2 $ the non-vanishing roots of the polynomial
\[
\TT+a^2(r-m)=r\big( r^2 - 3mr +2a^2\big)= r(r-r_1)(r-r_2)=0.
\]
Then, for all $0<|a|<m$,
\begin{enumerate}
\item We have the sequence of inequalities
$-\infty< r_{-}< r_1<r_{+}< r_2< \infty$.
\item
The polynomial   $\Tht(r)=r^{-2}\Th=r(r^3-6mr^2+9m^2r-4ma^2)$ verifies  $\Tht(\pm \infty)=\infty,\,\, \Tht(0)=0$, $(r^{-1} \Tht)(0)<0$, as well as
\begin{align*}
\Tht(r_-)&=\frac{(\TT+a^2(r-m))^2}{r^2}(r_-)= (r_{-}-r_1)^2(r_{-}-r_2)^2>0\\
\Tht(r_1)&=- 4a^2\Delta(r_1)=-4a^2 (r_1- r_{-}) (r_1- r_{+})>0,\\
\Tht(r_+)&=\frac{(\TT+a^2(r-m))^2}{r^2}(r_+)= (r_{+}-r_1)^2(r_{+}-r_2)^2>0,\\
\Tht(r_2)&=- 4a^2\Delta(r_2)=-4 a ^2(r_2- r_{-}) (r_2- r_{+}) <0.
\end{align*}
\item The polynomial $\Th(r)$ has two real roots $\rhat_1< \rhat_2$ such that
\[
r_{+} <\rhat_1<r_2 <\rhat_2<\infty.
\]
Moreover $\Th(r)\le 0$ in the interval $[\rhat_1, \rhat_2]$.
\end{enumerate}
Therefore the range\footnote{In view of Lemma \ref{Lemma:rangeofrfortrappednullgeodesics}, $r=\rhat_1, \rhat_2$ are in fact the equatorial trapped null geodesics (i.e. the null geodesics on the equator $\frac{d\th}{d\la}=0,\th=\frac{\pi}{2}$.)} of values of $r$ for which null trapped geodesics exist\footnote{In the case $a=0$, excluded above, we have $\rhat_1=\rhat_2=3m$.} is included in the interval $[\rhat_1, \rhat_2]$, i.e.
$r_{trap}\in [ \rhat_1, \rhat_2]$.
\end{proposition}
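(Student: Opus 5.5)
The plan is to reduce everything to sign information at the special points $r_-, r_1, r_+, r_2$, using the two factorizations
\[
\Th = (\TT+a^2(r-m))^2 - 4a^2r^2\De, \qquad \TT+a^2(r-m)=r(r-r_1)(r-r_2),
\]
and then to read off the roots of $\Tht=r^{-2}\Th$ by the intermediate value theorem.

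\textbf{Step 1: the ordering $r_-<r_1<r_+<r_2$.} Set $p(r):=r^2-3mr+2a^2$, so that $r_1,r_2$ are the roots of $p$. I would evaluate $p$ at $r_\pm$ using $r_\pm^2=2mr_\pm-a^2$:
\[
p(r_\pm)=-mr_\pm+a^2 .
\]
Since $r_+>m>|a|$, we have $mr_+>a^2$, hence $p(r_+)<0$. This places $r_+$ strictly between $r_1$ and $r_2$.

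For $r_-$, note $r_-r_+=a^2$, so $mr_-<r_+r_-=a^2$ and therefore $p(r_-)>0$. Hence $r_-$ lies outside $[r_1,r_2]$. Because $r_-<m<3m/2$ and $3m/2$ is the midpoint of $r_1,r_2$, we get $r_-<r_1$. (Here $a\neq0$ gives $r_->0$, but only $r_-<m$ is used.)

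\textbf{Step 2: the listed values of $\Tht$.} I would verify the quartic identity of Remark~\ref{remark-Theta} by expanding $r^2p^2-4a^2r^2\De$; the $a^4$ terms cancel. This gives $\Tht = p(r)^2-4a^2\De$ directly, and also the behaviour at $0$ and $\pm\infty$.

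At $r_\pm$ the term $\De$ vanishes, so $\Tht(r_\pm)=p(r_\pm)^2=(r_\pm-r_1)^2(r_\pm-r_2)^2>0$, where strict positivity comes from Step 1. At $r_{1,2}$ the term $p$ vanishes, so $\Tht(r_i)=-4a^2\De(r_i)$. The sign of $\De(r_i)$ then follows from Step 1: $\De(r_1)<0$ because $r_-<r_1<r_+$, and $\De(r_2)>0$ because $r_2>r_+$.

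\textbf{Step 3: roots and sign.} On $r>0$, $\Th$ and $\Tht$ have the same sign. Since $\Tht(r_+)>0>\Tht(r_2)$ and $\Tht(+\infty)=+\infty$, the intermediate value theorem gives roots $\rhat_1\in(r_+,r_2)$ and $\rhat_2\in(r_2,\infty)$.

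The main obstacle is to show these are the only roots with $r>r_+$, so that $\Th\le0$ exactly on $[\rhat_1,\rhat_2]$. I would count roots of the cubic $c(r)=r^3-6mr^2+9m^2r-4ma^2$, since $\Tht=rc$:
\begin{itemize}
\item $c(0)<0$, $c(r_1)<0$ and $c(r_-)>0$ force one root in $(0,r_-)$.
\item $c(r_+)>0>c(r_2)$ and $c(+\infty)=+\infty$ then account for the other two roots.
\end{itemize}
The cubic therefore has exactly three real roots, exactly one in each of $(r_+,r_2)$ and $(r_2,\infty)$. By continuity, $\Th\le 0$ on $[\rhat_1,\rhat_2]$ and $\Th>0$ on $[r_+,\infty)\setminus[\rhat_1,\rhat_2]$.

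\textbf{Step 4: the conclusion about trapped geodesics.} The final statement follows by combining Step 3 with the preceding discussion. For $r$ in the range of a trapped geodesic we need $Q_r(x)\ge0$ for some $x\in[0,1]$. Monotonicity of $Q_r$ on $[0,1]$ (Lemma~\ref{lemma:Q_r-increases}, stated later) then gives $Q_r(1)=-\Th(r)\ge0$. Hence $r\in[\rhat_1,\rhat_2]$.
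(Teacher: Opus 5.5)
Your proof is correct. Its overall structure matches the paper's: sign evaluations of $\Tht$ at $r_\pm, r_1, r_2$, followed by an intermediate-value count of the real roots of $\Tht = r\,c(r)$. The paper dismisses parts 2 and 3 as ``immediate'' and records exactly your root configuration, $0<\rhat_{-1}<r_-$ and $r_+<\rhat_1<r_2<\rhat_2$.

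The genuinely different step is part 1. The paper writes $r_\pm = m\pm\sqrt{m^2-a^2}$ and $r_{1,2}=\frac{3m\pm\sqrt{9m^2-8a^2}}{2}$ explicitly and reduces each inequality by squaring. You instead compute $p(r_\pm)=a^2-mr_\pm$ from $r_\pm^2=2mr_\pm-a^2$ and use $r_+r_-=a^2$. This avoids radicals and shows exactly where $a\neq 0$ enters: in the strict inequality $mr_-<r_+r_-=a^2$, which needs $r_->0$. (For $a=0$ one has $r_-=r_1=0$.) Your parenthetical remark that ``only $r_-<m$ is used'' is therefore misleading.

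There is also a sign slip in Step 3. By your own Step 2, $c(r_1)=\Tht(r_1)/r_1=-4a^2\De(r_1)/r_1>0$, not $<0$. The claim is not needed, since the root in $(0,r_-)$ follows from $c(0)<0<c(r_-)$ alone. If it were true, it would force a fourth real root of the cubic in $(r_-,r_1)$, so simply delete it.
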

\begin{proof}
To prove the first part of the proposition we check the sequence of inequalities
\[
r_{-}\, \le r_1\,<  r_{+}\,  \le 2m <\, r_2
\]
where
\[
r_{\pm}=m\pm \sqrt{m^2-a^2}, \qquad r_{1,2}=\frac{3m\pm \sqrt{9m^2-8a^2}}{2}.
\]
The first inequality $r_{-} \le r_1$ is equivalent to, with $ \hat{a}=\frac{a}{m}\in[0,1)$, $2\sqrt{1-\hat{a}^2} +1 \ge \sqrt{9- 8 \hat{a}^2},$
which, after squaring both sides, reduces to $\sqrt{1-\hat{a}^2} \ge 1-\hat{a}^2$.
The second inequality $r_1\le r_+ $ becomes, with $ \hat{a}=\frac{a}{m}$, $1- 2 \sqrt{1-\hat{a}^2} < \sqrt{9-8\hat{a}^2}
$
which after squaring both sides becomes $\hat{a}^2-1< \sqrt{1-\hat{a}^2}$. The other two inequalities are immediate.
The second part of the Lemma requires only an immediate verification.

The last part is an immediate consequence of the second part. Note that in fact $\Tht$ has four roots $0=\rhat_{-2}< \rhat_{-1} <\rhat_1<\rhat_2$
with $0=\rhat_{-2}< \rhat_{-1} < r_{-}$, of no interest to us,
and
\[
r_{+} <\rhat_1<r_2 <\rhat_2<\infty.
\]
\end{proof}

\begin{lemma}\lab{Lemma:rangeofrfortrappednullgeodesics}
In general one can show, see \cite{Teo},
\beq
\bsplit
\rhat_1 &:=2m\left(1+\cos\left(\frac{2}{3}\arccos\left(-\frac{|a|}{m}\right)\right)\right),\\
\rhat_2 &:=2m\left(1+\cos\left(\frac{2}{3}\arccos\left(\frac{|a|}{m}\right)\right)\right).
\end{split}
\eeq
\end{lemma}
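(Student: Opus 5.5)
The plan is to reduce the statement to the explicit trigonometric solution of a depressed cubic. By Remark \ref{remark-Theta} and Proposition \ref{Prop:r-trapped.region}, $\Th(r)=r^3 P(r)$ with
\[
P(r):=r^3-6mr^2+9m^2r-4ma^2 .
\]
So the nonzero roots of $\Th$ are exactly the roots of the cubic $P$. Proposition \ref{Prop:r-trapped.region} already tells us that $P$ has three real roots $\rhat_{-1}<\rhat_1<\rhat_2$ with $0<\rhat_{-1}<r_-$ and $r_+<\rhat_1<r_2<\rhat_2$. It therefore suffices to compute the three roots of $P$ explicitly and match them with this ordering. Since everything depends only on $a^2$, we may assume $a\ge 0$ and write $a$ for $|a|$.

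\emph{Step 1: depress the cubic.} Substitute $r=2m+y$. The $y^2$ terms cancel, and a direct expansion gives
\[
P(2m+y)=y^3-3m^2y+2m^3-4ma^2 .
\]
\emph{Step 2: Vi\`ete's trigonometric substitution.} Put $y=2m\cos\vphi$ and use $4\cos^3\vphi-3\cos\vphi=\cos 3\vphi$. The equation $P=0$ becomes
\[
2m^3\cos 3\vphi+2m^3-4ma^2=0,
\qquad\text{i.e.}\qquad
\cos3\vphi=2\frac{a^2}{m^2}-1=\cos(2\al),
\]
where $\al:=\arccos(a/m)\in(0,\pi/2]$ for $0\le a<m$. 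Hence the three roots are
\[
r^{(k)}=2m\Big(1+\cos\big(\tfrac{2\al}{3}+\tfrac{2\pi k}{3}\big)\Big),\qquad k=0,\pm1 .
\]
These three values are distinct for $a>0$, so they exhaust the roots of $P$.

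\emph{Step 3: identify the roots.}
\begin{itemize}
\item For $k=0$ the angle $2\al/3$ lies in $(0,\pi/3]$, so $r^{(0)}\in[3m,4m)$. This is the largest root, hence $r^{(0)}=\rhat_2$, which is the stated formula.
\item For $k=-1$, use $\arccos(-a/m)=\pi-\al$ and the evenness of the cosine: $\cos\big(\tfrac{2\al}{3}-\tfrac{2\pi}{3}\big)=\cos\big(\tfrac{2}{3}(\pi-\al)\big)=\cos\big(\tfrac23\arccos(-a/m)\big)$. So $r^{(-1)}$ is exactly the claimed expression for $\rhat_1$. Its angle lies in $[\pi/3,2\pi/3)$, so $r^{(-1)}\in(m,3m]$.
\item For $k=1$ the angle $\tfrac{2\al}{3}+\tfrac{2\pi}{3}$ lies in $(2\pi/3,\pi]$, so $r^{(1)}\in[0,m)$. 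This is below $r_+>m$, so it must be the spurious root $\rhat_{-1}$ of no interest.
\end{itemize}
Since Proposition \ref{Prop:r-trapped.region} guarantees exactly two roots of $\Th$ above $r_+$, they must be $r^{(-1)}<r^{(0)}$. This gives the formulas for $\rhat_1$ and $\rhat_2$.

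\emph{Degenerate cases.} For $a=0$ the formulas give $\rhat_1=\rhat_2=3m$, consistent with the footnote. For $0<a<m$ no angle lands on a boundary case, so no separate treatment is needed.

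\emph{Main obstacle.} The only delicate point is Step 3, the bookkeeping of which branch of the cube root corresponds to which root. This is settled by the angle intervals above together with the ordering from Proposition \ref{Prop:r-trapped.region}. The algebra in Steps 1 and 2 is routine.
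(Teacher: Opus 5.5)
Your argument is correct, but it takes a somewhat different route from the paper.

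The paper's appendix proof is a pure verification. It sets $y=\cos(\frac13\arccos\frac{a}{m})$ and uses $\cos 2\theta=2\cos^2\theta-1$ to write the proposed value as $\rhat_2=4my^2$. It then uses $4y^3-3y=\frac am$ to eliminate $y^3$ and checks directly that $r^3-6mr^2+9m^2r-4ma^2$ vanishes at $4my^2$. The same check is done at $4mz^2$ with $z=\cos(\frac13\arccos(-\frac am))$.

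You instead derive the roots. You depress the cubic via $r=2m+y$ and apply Vi\`ete's substitution at the angle $\varphi=\frac23\arccos\frac am$, which is twice the paper's angle. So the triple-angle identity enters for you as $\cos 3\varphi=2\frac{a^2}{m^2}-1$, rather than as $\cos3\theta=\frac am$.

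The paper's route is shorter once the answer is taken from Teo. However, it only shows that the two expressions are roots of $\Theta$. It leaves implicit that the $\arccos(-|a|/m)$ expression is the root in $(r_+,r_2)$ and not the spurious root $\rhat_{-1}<r_-$.

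Your Step 3 supplies exactly this identification. It places the three branches in $(m,3m]$, $[3m,4m)$ and $[0,m)$, and then uses the ordering from Proposition~\ref{Prop:r-trapped.region}. As a by-product it also gives an explicit formula for the discarded root.
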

\begin{proof}
See appendix \ref{sec:rangeofrfortrappednullgeodesics}.
\end{proof}

In section \ref{section:EnergyEstim} we make use of the following calculations for the first and second derivatives of $\rhat_1, \rhat_2$
with respect to $\mu=\frac{a^2}{m^2}$.
\begin{lemma}
\lab{Lemma:Deriv.yhat}
The following identities  hold true for $\yhat_1= m^{-1} \rhat_1, \, \yhat_2= m^{-1} \rhat_2$.
\beq
\lab{eq:1^stDeriv-yhat}
\bsplit
\frac{d}{d\mu}\yhat_1&=-  \frac 1 3 \frac{1}{\sqrt{\mu(1-\mu)}}  \sqrt{\yhat_1(4-\yhat_1)},\\
\frac{d}{d\mu}\yhat_2&= \frac  1 3   \frac{1}{\sqrt{\mu(1-\mu)}} \sqrt{\yhat_2(4-\yhat_2)}.
\end{split}
\eeq
and
\beq
\lab{ eq:2^ndDeriv-yhat}
\bsplit
\frac{d}{d\mu}\Big(\sqrt{\mu(1-\mu)} \yhat_1' \Big)&= \frac{1}{3}  \frac{1}{\sqrt{\mu(1-\mu) }} \big(\yhat_1-2\big), \\
\frac{d}{d\mu}\Big(\sqrt{\mu(1-\mu)} \yhat_2' \Big)&=-\frac{1}{3}  \frac{1}{\sqrt{\mu(1-\mu) }} \big(\yhat_2-2\big).
\end{split}
\eeq
\end{lemma}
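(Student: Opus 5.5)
The plan is to work entirely with the cubic satisfied by $\yhat_1,\yhat_2$ and differentiate it implicitly; no use of the explicit trigonometric formulas of Lemma \ref{Lemma:rangeofrfortrappednullgeodesics} is needed except to fix the signs. By Remark \ref{remark-Theta} and Proposition \ref{Prop:r-trapped.region}, $\rhat_1,\rhat_2$ are roots of $r^3-6mr^2+9m^2r-4ma^2$. Setting $r=my$ and $\mu=a^2/m^2$, the rescaled roots $y=\yhat_{1,2}$ satisfy
\[
P(y,\mu):=y^3-6y^2+9y-4\mu=0, \qquad\text{i.e.}\qquad 4\mu=y(y-3)^2 .
\]
The two key algebraic facts are this identity and its companion
\[
4(1-\mu)=4-y(y-3)^2=-(y^3-6y^2+9y-4)=(y-1)^2(4-y).
\]
Multiplying them gives $16\,\mu(1-\mu)=y(4-y)(y-1)^2(y-3)^2$.

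\textbf{First derivatives.} Differentiating $P(y(\mu),\mu)=0$ in $\mu$ gives $3(y-1)(y-3)\,y'=4$. Taking square roots in the product identity yields
\[
\sqrt{\mu(1-\mu)}=\tfrac14\sqrt{y(4-y)}\,|y-1|\,|y-3|,
\]
and therefore
\[
y'=\pm\frac{1}{3}\frac{\sqrt{y(4-y)}}{\sqrt{\mu(1-\mu)}} .
\]
The sign is determined by the location of the roots, which I would take from Proposition \ref{Prop:r-trapped.region}. For $\yhat_1$ we have $1\le r_+/m<\yhat_1<r_2/m<3$, so $(y-1)(y-3)<0$ and $\yhat_1'<0$. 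For $\yhat_2$ we have $3<\yhat_2<4$ (for instance from $\Tht(4m)>0$ and $\Tht(3m)<0$ when $0<\mu<1$), so $\yhat_2'>0$. This gives \eqref{eq:1^stDeriv-yhat}. It also confirms that $4-y>0$ and $y>0$, so all the square roots are real.

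\textbf{Second derivatives.} By the first step, $\sqrt{\mu(1-\mu)}\,\yhat_i'=\mp\frac13\sqrt{\yhat_i(4-\yhat_i)}$. I would differentiate this expression in $\mu$ by the chain rule, using $\frac{d}{dy}\sqrt{y(4-y)}=\frac{2-y}{\sqrt{y(4-y)}}$, and then substitute $\yhat_i'$ from the first step. The factor $\sqrt{y(4-y)}$ cancels, and the result is a multiple of $(y-2)/\sqrt{\mu(1-\mu)}$.

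The only step requiring care is tracking the numerical constant and the sign in this last substitution. A naive bookkeeping gives the factor $\frac19(2-\yhat_i)$ for both roots, since the sign choices $\mp$ appearing in the two substitutions multiply to $+$. This does not match the stated coefficients $\pm\frac13(\yhat_i-2)$. I would therefore recheck this step, in particular whether the intended left-hand side carries a different normalization of $\yhat_i'$. Whatever the correct normalization turns out to be, the structural conclusion that will be used later is robust: the second-derivative quantity is a multiple of $(\yhat_i-2)/\sqrt{\mu(1-\mu)}$, with a sign fixed by $\yhat_i>2$.
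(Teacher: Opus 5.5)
Your proof of the first-order identities is correct, and it takes a different route from the paper. The paper differentiates the closed form $\yhat_2=2\big(1+\cos(\tfrac23\arccos\mu^{1/2})\big)$ by the chain rule. It then rewrites $\sin(\tfrac23\arccos\mu^{1/2})=\tfrac12\sqrt{\yhat_2(4-\yhat_2)}$, with the sign coming from the branch of $\arccos$. You instead differentiate $4\mu=y(y-3)^2$ implicitly, use $4(1-\mu)=(y-1)^2(4-y)$, and read the sign off from $\yhat_1\in(1,3)$ and $\yhat_2\in(3,4)$. Your route needs no trigonometry beyond identifying $\yhat_{1,2}$ as roots of the cubic. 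It also gives the rational formula $\yhat_i'=\frac{4}{3(\yhat_i-1)(\yhat_i-3)}$, which is the natural starting point for anything second order.

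For the second-order identities your bookkeeping is right. The gap is that you stop short of the conclusion it forces: the identities as stated are false, for both roots, so no proof of them exists. You should say so instead of looking for a different normalization; the left-hand side $\frac{d}{d\mu}\big(\sqrt{\mu(1-\mu)}\,\yhat_i'\big)$ is unambiguous. The correct statement is
\[
\frac{d}{d\mu}\Big(\sqrt{\mu(1-\mu)}\,\yhat_i'\Big)=-\frac19\,\frac{\yhat_i-2}{\sqrt{\mu(1-\mu)}},\qquad i=1,2 .
\]
You can confirm this from the paper's own starting point. With $\th=\arccos\mu^{1/2}$ one has $\sqrt{\mu(1-\mu)}\,\yhat_2'=\frac23\sin(\frac23\th)$ and $\frac{d\th}{d\mu}=-\frac{1}{2\sqrt{\mu(1-\mu)}}$. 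The derivative is therefore $-\frac29\cos(\frac23\th)/\sqrt{\mu(1-\mu)}$; the appendix has $-\frac23$ in place of $-\frac29$ at this step.

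For $\yhat_1$, both the prefactor and $\frac{d}{d\mu}\arccos(-\mu^{1/2})=+\frac{1}{2\sqrt{\mu(1-\mu)}}$ change sign. The two changes cancel, so $\yhat_1$ obeys the same identity as $\yhat_2$, not the one with the opposite sign. A numerical check at $\mu=\frac14$, where $\yhat_1\approx 2.347$, gives about $-0.089$ for the left side, consistent with $-\frac19$ and not with $+\frac13$. Your closing remark also needs correcting. Since $\yhat_1=2$ exactly at $\mu=\frac12$, the sign of this quantity for $\yhat_1$ is not fixed on $(0,1)$: it is negative for $\mu<\frac12$ and positive for $\mu>\frac12$.

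What the paper actually uses later, in the proof of Lemma \ref{Lemma:ontrappingset}, is the sign of $\yhat_i''$ on $[0,\frac12]$, and your route gives it directly. Differentiating $\yhat_i'=\frac{4}{3(\yhat_i-1)(\yhat_i-3)}$ once more yields
\[
\yhat_i''=-\frac{32\,(\yhat_i-2)}{9\big((\yhat_i-1)(\yhat_i-3)\big)^3},\qquad 0<\mu<1 .
\]
Hence $\yhat_2''<0$ on $(0,1)$, and $\yhat_1''\ge 0$ exactly when $\yhat_1\ge 2$, that is, when $\mu\le\frac12$. These are the two facts invoked there, so that argument survives. However, its justification of $\yhat_1''\ge0$ cannot rest on the stated identity. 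With the correct identity, the two terms in $\mu(1-\mu)\yhat_1''=-\frac19(\yhat_1-2)-\frac{1-2\mu}{2}\yhat_1'$ have opposite signs on $[0,\frac12]$, so the formula above (or an equivalent estimate) is needed.
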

\begin{proof}
See appendix \ref{sec:Deriv.yhat}.
\end{proof}


\subsection{Monotonicity of $Q_r(x)$}


\begin{lemma}
\lab{lemma:Q_r-increases}
For every $r>r_{+}$ the polynomial\footnote{Recall that $Q_r(1)=-\Th_r$.} $Q_r(x)$ is increasing in $x\in [0,1]$.
\end{lemma}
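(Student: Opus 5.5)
Since $Q_r(x)=4a^2r^2\De x-(\TT+a^2(r-m)x)^2$ is a concave quadratic in $x$, the plan is to show that its derivative is nonnegative at the right endpoint $x=1$. The derivative is decreasing in $x$, so this gives $Q_r'(x)\ge Q_r'(1)\ge 0$ on $[0,1]$. Differentiating,
\[
Q_r'(x)=4a^2r^2\De-2a^2(r-m)\big(\TT+a^2(r-m)x\big),
\]
and hence
\[
Q_r'(1)=2a^2\Big(2r^2\De-(r-m)\big(\TT+a^2(r-m)\big)\Big)=2a^2\Big(2r^2\De-(r-m)\,r\big(r^2-3mr+2a^2\big)\Big),
\]
where we used the factorization $\TT+a^2(r-m)=\TT_{-a}=r(r^2-3mr+2a^2)$ from Remark \ref{remark:TT_lz-TT_a}. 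For $a=0$ the statement is trivial ($Q_r$ is constant), so assume $a\ne0$.

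It remains to check the sign of $r^{-1}\big(2r\De-(r-m)(r^2-3mr+2a^2)\big)$ for $r>r_+$. Expanding,
\[
2r\De=2r^3-4mr^2+2a^2r,\qquad (r-m)(r^2-3mr+2a^2)=r^3-4mr^2+(3m^2+2a^2)r-2ma^2,
\]
so the difference is
\[
r^3-3m^2r+2ma^2.
\]
To show this is positive for $r>r_+$, I would note that its derivative $3(r^2-m^2)$ is positive for $r>m$, so the cubic is increasing on $(m,\infty)$. Since $r_+\ge m$, it suffices to check the value at $r=m$:
\[
m^3-3m^3+2ma^2=-2m(m^2-a^2)\le 0.
\]
That is not enough, so I would instead evaluate at $r_+$. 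Using $r_+^2=2mr_+-a^2$, we get $r_+^3=2mr_+^2-a^2r_+=2m(2mr_+-a^2)-a^2r_+=4m^2r_+-2ma^2-a^2r_+$. Therefore
\[
r_+^3-3m^2r_++2ma^2=m^2r_+-a^2r_+=r_+(m^2-a^2)>0.
\]
Monotonicity on $(m,\infty)$ then gives positivity for all $r\ge r_+$, hence $Q_r'(1)>0$, and concavity finishes the argument.

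The only real obstacle is spotting that one should evaluate at the horizon rather than at $r=m$; everything else is a routine algebraic expansion, which I would double-check term by term.
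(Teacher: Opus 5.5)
Your proof is correct and is essentially the paper's argument. Both use that $Q_r'$ is affine and nonincreasing in $x$, so everything reduces to $Q_r'(1)>0$; the paper phrases this as the vertex $\frac{2mr\Delta+(r-m)^2(r^2+a^2)}{a^2(r-m)^2}$ of the parabola lying beyond $x=1$.

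The only difference is how positivity is verified. The paper uses $\TT=2r\Delta-(r-m)(r^2+a^2)$ to make the coefficient manifestly positive. In the same way, your cubic satisfies $r^3-3m^2r+2ma^2=r(r-m)^2+2m\Delta$, which gives positivity for $r>r_+$ immediately, without the detour through monotonicity and evaluation at $r_+$.
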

\begin{proof}
We rewrite
\begin{align*}
Q_r(x)&= 4a^2r^2\Delta x -\big(\TT+a^2(r-m) x \big)^2\\
&=-a^4(r-m)^2 x^2 +\big(4a^2r^2\Delta-2a^2(r-m)\TT\big) x -\TT^2\\
&=-a^4(r-m)^2 x^2+ 2a^2\big(2 r^2\Delta-(r-m)\TT\big) x -\TT^2.
\end{align*}
In view of \eqref{defin:TT_{lz}}, i.e. $\TT= 2 r \De-(r-m)(r^2+a^2)$,
\[
2 r^2\Delta-(r-m)\TT= 2 r^2\Delta-(r-m) \big( 2 r \De-(r-m)(r^2+a^2)\big)= 2mr \De+ (r-m)^2 (r^2+a^2).
\]
Therefore,
\[
Q_r(x)=-a^4(r-m)^2 x^2 +2a^2\big(2mr\Delta+(r-m)^2(r^2+a^2)\big) x -\TT^2
\]
and consequently,
\begin{align*}
Q_r'(x) &= -2 a^4(r-m)^2 x  +2a^2\big(2mr\Delta+(r-m)^2(r^2+a^2)\big)\\
&= 2a^4(r-m)^2\Big(-x+\frac{\big(2mr\Delta+(r-m)^2(r^2+a^2)\big)}{a^2(r-m)^2}\Big)
\end{align*}
Note that for $r>r_+$, i.e. $\De>0$,
\[
\frac{\big(2mr\Delta+(r-m)^2(r^2+a^2)\big)}{a^2(r-m)^2}>\frac{r^2+a^2}{a^2}>1,
\]
Thus $Q_r'(x)> 2a^4(r-m)^2(1-x) $
which implies, as stated, that $Q_r(x)$ is increasing on $x\in[0,1]$ for all $r>r_+$.

\end{proof}


\subsection{Separation  results  for  the trapped  set}
\lab{section:Separation}
For the sake of completeness we show below that i) the ergoregion and trapping region are strictly separated in physical space
if $\frac{|a|}{m} <\frac{\sqrt{2}}{2}$ and that ii) superradiant frequencies are not trapped, even though we
 will not make use of any of these two statements in this paper. Similar less precise statements appear first in Lemma 4.8 of \cite{FKSY}
  and in \cite{DRS}, where it played an essential role in the derivation of their Morawetz estimate.

\subsubsection{Separation of the trapped set from the ergoregion}

\begin{lemma}
\lab{Le:separation}
If $\frac{|a|}{m} <\frac{\sqrt{2}}{2}$ then $2m< \rhat_1$, i.e. the ergoregion and trapping set are separated.
\end{lemma}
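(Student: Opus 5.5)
The plan is to reduce the claim to a sign check of $\Th$ at $r=2m$. The ergoregion is the set where $\De-a^2\sin^2\th<0$, i.e. $r^2-2mr+a^2\cos^2\th<0$. Its largest radius is attained on the equator, where $r<2m$. So the ergoregion lies in $\{r<2m\}$, and it suffices to show $\rhat_1>2m$.

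By Proposition \ref{Prop:r-trapped.region}, $\Th$ is positive on $(r_+,\rhat_1)$ and nonpositive on $[\rhat_1,\rhat_2]$, and $r_+<2m<r_2<\rhat_2$. Hence $2m<\rhat_1$ is equivalent to $\Th(2m)>0$, or equivalently $\Tht(2m)>0$. This uses the fact that $\Tht$ has exactly the roots $\rhat_1,\rhat_2$ in $(r_+,\infty)$ and is positive at $r_+$.

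Using Remark \ref{remark-Theta}, we compute
\[
\Th(2m)=(2m)^3\big(8m^3-24m^3+18m^3-4ma^2\big)=(2m)^3\cdot 2m\,(m^2-2a^2).
\]
This is positive exactly when $a^2<m^2/2$, i.e. $|a|/m<\sqrt2/2$.

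Since $2m\in(r_+,r_2)$ and $\Th(2m)>0$, the point $2m$ cannot lie in $[\rhat_1,\rhat_2]$. As $2m<r_2<\rhat_2$, we conclude $2m<\rhat_1$.

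As a cross-check, I would also use the explicit formula of Lemma \ref{Lemma:rangeofrfortrappednullgeodesics}. The condition $\rhat_1>2m$ means $\cos(\frac23\arccos(-|a|/m))>0$, i.e. $\arccos(-|a|/m)<3\pi/4$, which is again $|a|/m<\sqrt2/2$.

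The only delicate step is justifying that the sign of $\Th$ on $(r_+,r_2)$ is determined by its roots. This follows from item 3 of Proposition \ref{Prop:r-trapped.region} together with the sign $\Tht(r_+)>0$.
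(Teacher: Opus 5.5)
Your proposal is correct and follows essentially the same route as the paper: both reduce the claim to $\Th(2m)>0$ and evaluate it to obtain the factor $m^2-2a^2$. Your computation via the factored form $\Th=r^3(r^3-6mr^2+9m^2r-4ma^2)$ agrees with the paper's evaluation of $r^{-2}\Th=(r^2-3mr+2a^2)^2-4a^2\De$ at $r=2m$. You also spell out the step from $\Th(2m)>0$ to $2m<\rhat_1$ via Proposition~\ref{Prop:r-trapped.region}, which the paper leaves implicit, and your cross-check with the explicit formula for $\rhat_1$ is correct.
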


\begin{proof}
Note first that the maximal value of $r$ in the ergoregion $ \frac{\left(\Delta-a^2\sin^2\theta\right)}{|q|^2}<0$ is given by $r=2m$.
It thus suffices to show that the $ \Th(2m)>0$ if $\frac{|a|}{m} <\frac{\sqrt{2}}{2}$.

Recall that $\TT+a^2(r-m)=r\big( r^2 - 3mr +2a^2\big)$ and hence
\[
\frac{\Th(r)}{r^2}=\frac{\big(\TT+a^2(r-m)\big)^2- 4 a^2r^2\Delta}{r^2}=\big(r^2-3mr+2a^2\big)^2- 4 a^2\Delta.
\]
To have $\rhat_1>2m$, it suffices to have $(r^{-2}\Th)(2m)>0$.
We thus calculate
\[
(r^{-2}\Th)(2m)=(2a^2-2m^2)^2-  4    a^2 a^2 =4m^2(m^2-2a^2)
\]
from which the desired statement follows.
\end{proof}

 \subsubsection{Superradiant frequencies  are  not trapped}

\begin{definition}
We define the flux quantity, with $\Tplus=\T+\frac{a}{r_+^2+a^2} \Z= \T+\frac{a}{2m r_+} \Z $ tangent to the generators of the horizon $\HH$
\beq
\FF[\psi](\tau_1, \tau_2)=\int_{\HH(\tau_1, \tau_2)}\Re\big( T\psi \ov{\Tplus\psi} \big)
\eeq
\end{definition}
\begin{definition}
We say that    a solution   $\psi$ of $\square \psi=0$ is a $ (\om, \ell) $ mode solution if  $T\psi= -i\omega\psi, \, Z\psi=  i\ell\psi$.
\end{definition}
Given such mode solution $\psi$ we have
\begin{align*}
\FF[\psi](\tau_1, \tau_2)&= \int_{\HH(\tau_1, \tau_2)} \Re\Big(-i\om\psi \c \ov{\big( -i \om+i\ell\frac{a}{2m r_+} \big)\psi}\Big)\\
&=\int _{\HH(\tau_1, \tau_2)}\om^2 \big(1 -\frac{\ell}{\om}\frac{a}{2mr_+} \big)|\psi|^2
\end{align*}
\begin{definition}
We say that the frequencies $(\om, \ell)$ are superradiant if $\frac{a\ell}{\om} > 2m r_+$. In that case, for any $\tau_1<\tau_2$, we have $ \FF[\psi](\tau_1, \tau_2)<0$.

\end{definition}
 \begin{proposition}
  Given a superradiant $(\om, \ell)$-frequency, the null geodesic with $\e=\om$ and $\lz=-\ell$ is not trapped.
 \end{proposition}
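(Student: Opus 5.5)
The plan is to compare the condition characterizing trapped null geodesics, Corollary \ref{Cor:Trapping}, directly with the superradiance inequality. Since superradiance gives $\frac{a\ell}{\om}>2mr_+$, we have $\om\neq 0$, so $\e=\om\neq0$. Suppose, for contradiction, that the null geodesic with $\e=\om$, $\lz=-\ell$ is trapped. By Corollary \ref{Cor:Trapping}, the radius $r$ of this orbital geodesic satisfies $r>r_+$ (indeed $r\in[\rhat_1,\rhat_2]$ by Proposition \ref{Prop:r-trapped.region}) and
\[
\frac{a\lz}{\e}=\frac{\TT}{r-m}.
\]
The left-hand side equals $-\frac{a\ell}{\om}$, and superradiance gives $-\frac{a\ell}{\om}<-2mr_+$. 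It therefore suffices to prove the lower bound
\[
g(r):=\frac{\TT}{r-m}>-2mr_+ \qquad \text{for all } r>r_+ .
\]

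To prove this bound, I will use the identity \eqref{defin:TT_{lz}}, $\TT=2r\De-(r-m)(r^2+a^2)$, to rewrite
\[
g(r)=\frac{2r\De}{r-m}-(r^2+a^2).
\]
Since $\De(r_+)=0$, the value at the horizon is $g(r_+)=-(r_+^2+a^2)=-2mr_+$, which is exactly the superradiant threshold. It remains to show that $g$ is strictly increasing on $(r_+,\infty)$. A direct differentiation gives
\[
g'(r)=\frac{-2m\De+4r(r-m)^2}{(r-m)^2}-2r=\frac{2\big(r(r-m)^2-m\De\big)}{(r-m)^2}=\frac{2\big((r-m)^3+m(m^2-a^2)\big)}{(r-m)^2}.
\]
This is the same combination that appeared in the instability computation for trapped geodesics, and it is strictly positive for $r>m$ and $|a|<m$. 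Hence $g(r)>g(r_+)=-2mr_+$ for every $r>r_+$. This contradicts $\frac{a\lz}{\e}<-2mr_+$, so the geodesic is not trapped.

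The only delicate points are bookkeeping, and I do not expect a real obstacle. First, the sign convention $\lz=-\ell$ must be handled consistently with $Z\psi=i\ell\psi$ and $T\psi=-i\om\psi$. Second, the ratio $\frac{a\lz}{\e}$ is invariant under rescaling, so the sign of $\om$ plays no role. Third, the degenerate case $\Pi=0$ was already excluded in deriving Corollary \ref{Cor:Trapping}: it forces $\k^2=0$, and then no trapped geodesic arises in the exterior. The main computational step is the derivative of $g$, which reduces to the factorization $r(r-m)^2-m\De=(r-m)^3+m(m^2-a^2)$ already established in the paper.
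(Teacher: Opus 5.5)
Your proof is correct. It takes a more direct route than the paper's.

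The paper proceeds in two stages. First it uses the explicit trapping range $[\rhat_1,\rhat_2]$, together with the monotonicity of $\TT/(r-m)$ in $r$, to confine trapped ratios $a\lz/\e$ to an interval with lower endpoint $\TT(\rhat_1)/(\rhat_1-m)$. It then proves Lemma \ref{lemma:trapping-superradiant}, $-2mr_+\le \TT(\rhat_1)/(\rhat_1-m)$, by differentiating this endpoint in $a$ via Lemma \ref{Lemma:Deriv.yhat} and comparing with its value at $|a|=m$.

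You instead observe that $g=\TT/(r-m)$ takes exactly the value $-(r_+^2+a^2)=-2mr_+$ at the horizon. The same $r$-monotonicity then gives $g>-2mr_+$ on all of $(r_+,\infty)$. This needs neither the values of $\rhat_1,\rhat_2$ nor any analysis in $a$. It also gives a clean interpretation: the superradiant threshold is precisely the trapping ratio attained at the horizon.

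Your route also avoids a weak point in the paper's lemma. The paper's $a$-derivative of $\TT(\rhat_1)/(\rhat_1-m)$ keeps only the chain-rule term through $\rhat_1$. It drops the explicit $a$-dependence of $\TT$, which contributes $2a(r+m)/(r-m)>0$.

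In fact this endpoint is not monotone in $|a|$. It equals $0$ at $a=0$, is about $-2.05m^2$ at $|a|=0.5m$, and returns to $-2m^2$ at $|a|=m$. So the paper's final inequality comparing with the extremal value fails. Your identity $\TT(\rhat_1)/(\rhat_1-m)=g(\rhat_1)>g(r_+)=-2mr_+$, valid since $\rhat_1>r_+$, proves the lemma's inequality directly.
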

 \begin{proof}

 Recall that the values of $r$ for which trapped null geodesics exist depend on the ratio $\lz/ \e$. More precisely, along trapped null geodesics, we have
\beq\lab{eq:trappingfrequency}
\frac{\TT}{r-m}=\frac{r^3-3mr^2 + r a^2+ma^2}{r-m}=\frac{ a\lz}{\e}.
\eeq
Recall that the range of values of $r$ for which null trapped geodesics exist is included in the interval $[\rhat_1, \rhat_2]$ where $r_+<\rhat_1< \rhat_2$ are the two roots to the polynomial \beq\Th(r)=r^3-6mr^2+9m^2r-4ma^2.
\eeq
Using \eqref{eq:trappingfrequency} and the information on $\rhat_1, \rhat_2$, we further determine the range of values of $\frac{a\lz}{\e}$ for which null trapped geodesics exist. Since for $r\in [r_+, 4m]\supset [\rhat_1, \rhat_2]$,
\begin{align*}
\frac{d}{dr}\Big(\frac{\TT}{r-m}\Big)&=\frac{d}{dr}\Big(\frac{r^3-3mr^2+ra^2+ma^2}{r-m}\Big)= \frac{2(r^3-3mr^2+3m^2r-ma^2)}{(r-m)^2}\\
&=2 \frac{(r-m)^3 + m(m^2-a^2)}{(r-m)^2}\ge 0,
\end{align*}
we infer that $\frac{\TT}{r-m}$ is increasing for $r\ge r_+$.
According to Proposition \ref{Prop:r-trapped.region}, $\rhat_1< r_2 <\rhat _2$
where $r_2>r_+$ is the larger root of the polynomial $ \TT+a^2(r-m)=r\big( r^2 - 3mr +2a^2\big)= r(r-r_1)(r-r_2)$. Thus $\frac{\TT(r_2)}{r_2-m}=- a^2$ and therefore,
\[
\frac{ \TT(\rhat_1)}{\rhat_1-m}< \frac{\TT(r_2)}{r_2-m}=-a^2<0
\]
On the other hand using Lemma \ref{Lemma:rangeofrfortrappednullgeodesics} we check\footnote{See also Remark \ref{rem:monotonicity-rhat}.} that $\rhat_1< 3 m < \rhat_2$. Therefore
\[
\frac{ \TT(\rhat_2)}{\rhat_2-m}> \frac{\TT(3m)}{3m-m}=\frac{4ma^2}{2m}=2 a^2.
\]
Hence
\[
\frac{\TT(\rhat_1)}{(\rhat_1-m)} < 0< \frac{\TT(\rhat_2)}{(\rhat_2-m)}
\]
and therefore for any trapped null geodesic with energy $\e$ and angular momentum $\lz$ we have
\beq
\lab{eq:trappingfreq}
\frac{a\lz}{\e}  \in\Big(\frac{\TT(\rhat_1)}{(\rhat_1-m)}, \frac{\TT(\rhat_2)}{(\rhat_2-m)}\Big).
\eeq
We now make use of the following lemma.

\begin{lemma}\lab{lemma:trapping-superradiant}
    We have
    \beq
    -2mr_+\leq \frac{\TT(\rhat_1)}{(\rhat_1-m)}
    \eeq
    where the equality holds true if and only if $a=m$.
\end{lemma}
\begin{proof}
Observe that
\beaa
\frac{d}{da}\Big(\frac{\TT(\rhat_1)}{(\rhat_1-m)}\Big)=\left[\frac{d}{dr}\Big|_{r=\rhat_1}\Big(\frac{\TT(r)}{r-m}\Big)\right]\c\frac{d\rhat_1}{da}<0.
\eeaa
Indeed, in view of Lemma \ref{Lemma:Deriv.yhat}, $\rhat_1$ is decreasing in $a$, i.e. $\frac{d\rhat_1}{da}< 0$
 and $\frac{d}{dr}\Big(\frac{\TT}{r-m}\Big)>0$. We deduce
 \beaa
 -2mr_+\leq -2mr_+|_{a=m}=-2m=\frac{\TT(\rhat_1)}{(\rhat_1-m)}\Big|_{a=m}\leq \frac{\TT(\rhat_1)}{(\rhat_1-m)}.
 \eeaa
as stated.
\end{proof}
As a consequence of the Lemma and \eqref{eq:trappingfreq} we infer that
any trapped null geodesic with energy $\e=\om$ and angular
momentum $\lz=-\ell$ is such that $- 2 m r_+ <- \frac{a\ell}{\om}$, i.e.
\[
 \frac{a\ell}{\om}< 2m r_+
\]
i.e. the frequency pair $(\om, \ell)$ is not superradiant.
This ends the proof of the proposition.
\end{proof}



\section{Proof of the conditional Morawetz estimate in the scalar case.}
\lab{sect:section5}

\subsection{Basic scalar spacetime identity}
\lab{sect:Basicscalarident}

We recall the following general identity\footnote{This holds true for an arbitrary Lorentzian manifold $(\DD, \g)$, see \cite{GKS}.}
\beq
\lab{eq:Gen-identityXwM}
\bsplit
\D^\mu  \PP_\mu[X, w, M] &= \frac 1 2 \QQ  \c\piX 
+\frac 12  w \LL[\psi] -\frac 1 4|\psi|^2   \square_\g  w  + \frac 1 4  \Div(|\psi|^2 M)\\
&+ \left(X( \psi )+\frac 1 2   w \psi\right)\c \square \psi.
\end{split}
\eeq
where $\PP_\mu$ is the generalized current
\beq
\lab{eq:Gen-current}
\PP_\mu[X, w, M]:=\QQ_{\mu\nu} X^\nu +\frac 1 2  w \psi \c \D_\mu \psi -\frac 1 4|\psi|^2   \pr_\mu w +\frac 1 4 |\psi|^2 M_\mu.
\eeq
Here
\[
\QQ_{\mu\nu}= \D_\mu\psi \D_\nu \psi-\frac 12 \g_{\mu\nu} \LL[\psi], \qquad \LL[\psi]=\g^{\a\b}\D_\a\psi \D_\b\psi
,
\]
$ X$ is an arbitrary vectorfield with deformation tensor
\beq\label{eq:definitionofdeformationtensor}
\piX_{\mu\nu}:=\D_\mu X_\nu+\D_\nu X_\mu,
\eeq
$w$ is a scalar function and $M$ is an arbitrary one form.
We denote
\beq \lab{definition-EE-gen1}
\EE[\psi][X, w, M] := \D^\mu  \PP_\mu[X, w, M] -  \left(X(\psi)+\frac 1 2   w \psi\right)\c  \square \psi
\eeq
We write $\EE[\psi][X, w] =\EE[\psi][X, w, 0] $ in the case $M=0$.
We apply the general identity \eqref{eq:Gen-identityXwM} to the case of the Kerr metric in the BL coordinates
and a vectorfield of the form $X=\FF \pr_r$. The result is a slightly simplified version of 
Proposition 7.1.5 in \cite{GKS}.
\begin{proposition}
\lab{proposition:Morawetz1}
Let $\FF, w_{red}$ {be} given functions of $r$.
With the choice of vectorfield $X=\FF \pr_r$ and scalar function $w= |q|^2 \D_\a\big( |q|^{-2} X^\a\big)-w_{red}$, the generalized current $\EE[X, w, M]=\EE[\psi][X, w, M]$ defined in \eqref{definition-EE-gen1}
verifies
\beq
\lab{identity:prop.Morawetz1}
\bsplit
|q|^2\EE[X, w, M]  &=\AA |\pr_r\psi|^2 + \UU^{\a\b}\D_\a \psi \D_\b \psi +\VV |\psi|^2
+\frac 1 4 |q|^2  \D^\mu (|\psi|^2 M_\mu)
\end{split}
\eeq
where
\begin{align*}
\AA&= \De \pr_r \FF- \frac 1 2 \FF\pr_r \De-\frac 1 2 \De w_{red},\\
\UU^{\a\b}&=  -\frac 1 2  \FF\pr_r \left(\frac 1 \De\GG^{\a\b}\right)-\frac 1 2   w_{red}\frac 1 \De \GG^{\a\b},\\
\VV&= -\frac 14 |q|^2 \square_\g  w .
\end{align*}
If in addition we choose, for fixed functions $z, f, h$ depending on $r$,
\beq
\lab{def-w-red-in-fun-FF-00-wave}
\FF=-z h f, \qquad               w_{red}=  \FF  z^{-1}\partial_r z = - (\partial_r z ) h  f , \qquad w =- z \pr_r \big( h  f  \big),
\eeq
then
\beq
\lab{eq:coeeficientsUUAAVV}
\bsplit
\AA&=-z^{1/2}\Delta^{3/2} \partial_r\left(h \frac{ z^{1/2}  f }{\Delta^{1/2}}  \right),   \\
\UU^{\a\b}&=   \frac{ 1}{2}  h f \pr_r\left( \frac z \De\GG^{\a\b}\right),\\
\VV&=   \frac 1 4\pr_r\Bigg(\De \pr_r \Big(
z \pr_r \big( h f \big)  \Big)  \Bigg)
\end{split}
\eeq
Moreover if $M = v(r,\th) \pr_r$, for some function $v=v(r,\th)$, we have
\beq\lab{expression-Div-M-I}
\frac 1 4 |q|^2 \Div(|\psi|^2 M)= \frac 1 4 |q|^2\left( 2 v(r,\th)\psi\c \nab_r \psi + \left(\pr_r v+ \frac{2r}{|q|^2} v\right) |\psi|^2 \right).
\eeq
\end{proposition}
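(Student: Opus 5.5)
Starting from \eqref{eq:Gen-identityXwM} with $\square\psi$ terms removed, we have $|q|^2\EE[X,w,M]=\frac12|q|^2\QQ\c\piX+\frac12|q|^2 w\LL[\psi]-\frac14|q|^2|\psi|^2\square w+\frac14|q|^2\Div(|\psi|^2M)$. The plan is to compute $\frac12\QQ\c\piX$ for $X=\FF\pr_r$ directly in BL coordinates using the inverse metric decomposition \eqref{inverse-metric}. Writing $\QQ\c\piX=2\QQ^{\a\b}\D_\a X_\b$ and using the standard formula $\frac12\QQ\c\piX=\D^\a\psi\D^\b\psi\,\D_\a X_\b-\frac12\mathrm{div}X\,\LL[\psi]$, together with $\D^{(\a}X^{\b)}=-\frac12\mathcal{L}_X\g^{\a\b}$ raised appropriately, one gets $\frac12\QQ\c\piX=-\frac12(\mathcal L_X \g^{\a\b})\pr_\a\psi\pr_\b\psi\cdot\frac{1}{1}\ldots$, more concretely $\pi^{\a\b}=-X(\g^{\a\b})+\g^{\a\mu}\pr_\mu X^\b+\g^{\b\mu}\pr_\mu X^\a$. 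Since $X=\FF(r)\pr_r$, only $\pr_r X^r=\FF'$ is nonzero, so $|q|^2\pi^{\a\b}\pr_\a\psi\pr_\b\psi=-\FF\pr_r(|q|^2\g^{\a\b})\pr_\a\psi\pr_\b\psi+\FF\frac{\pr_r|q|^2}{|q|^2}|q|^2\g^{\a\b}\pr_\a\psi\pr_\b\psi+2\De\FF'|\pr_r\psi|^2$. Inserting $|q|^2\g^{\a\b}=\De\pr_r\pr_r+\De^{-1}\GG^{\a\b}$, the $\pr_r|q|^2$ term combines with $\mathrm{div}X=|q|^{-2}\pr_r(|q|^2\FF)$ exactly so that the choice $w=|q|^2\D_\a(|q|^{-2}X^\a)-w_{red}$ cancels all $\th$-dependent contributions (this is why the $|q|^{-2}$ weight appears in $w$). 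What remains is $\frac12(2\De\FF'-\FF\De')|\pr_r\psi|^2-\frac12\FF\pr_r(\De^{-1}\GG^{\a\b})\pr_\a\psi\pr_\b\psi$ plus $-\frac12 w_{red}(\De|\pr_r\psi|^2+\De^{-1}\GG^{\a\b}\pr_\a\psi\pr_\b\psi)$, which is precisely $\AA|\pr_r\psi|^2+\UU^{\a\b}\D_\a\psi\D_\b\psi$, while $\VV=-\frac14|q|^2\square w$ is immediate.

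For the specialized choice \eqref{def-w-red-in-fun-FF-00-wave}, the formula for $\UU$ follows from $-\frac12\FF\pr_r(\De^{-1}\GG)-\frac12 w_{red}\De^{-1}\GG=\frac12 hf\big(z\pr_r(\De^{-1}\GG)+\pr_r z\,\De^{-1}\GG\big)=\frac12hf\pr_r(\frac z\De\GG)$. For $\AA$, expand $\De\pr_r(-zhf)+\frac12 zhf\De'+\frac12\De z'hf$ and compare with $-z^{1/2}\De^{3/2}\pr_r(hz^{1/2}f\De^{-1/2})$; both equal $-\De z(hf)'-\frac12\De z'hf+\frac12 zhf\De'$. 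For $w$: $|q|^2\D_\a(|q|^{-2}X^\a)=\frac{1}{\sin\th}\pr_r(\sin\th\FF)=\pr_r\FF$ since $\sqrt{|g|}|q|^{-2}=\sin\th$, hence $w=-\pr_r(zhf)+z'hf=-z\pr_r(hf)$. Then $w$ depends only on $r$, so $|q|^2\square w=\pr_r(\De\pr_r w)$ by \eqref{eq:decompose-square}, giving the $\VV$ formula.

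Finally \eqref{expression-Div-M-I} is a direct divergence computation: $\Div(|\psi|^2 v\pr_r)=\frac{1}{|q|^2\sin\th}\pr_r(|q|^2\sin\th\, v|\psi|^2)=2v\psi\pr_r\psi+(\pr_r v+\frac{2r}{|q|^2}v)|\psi|^2$. The main obstacle is the bookkeeping in the first step, namely verifying that the $\th$-dependence from $\pr_r|q|^2$ cancels exactly against the $\mathrm{div}X$ piece of $w$; I would check this by working with the conformally rescaled metric $|q|^2\g^{\a\b}$, for which the cancellation is transparent.
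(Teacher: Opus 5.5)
Your proposal is correct and follows essentially the paper's argument. Your computation of $|q|^2\QQ\c\piX$ from $\pi^{\a\b}=-X(\g^{\a\b})+\g^{\a\mu}\pr_\mu X^\b+\g^{\b\mu}\pr_\mu X^\a$ reproduces exactly the identity from Lemma 7.1.3 of \cite{GKS} that the paper quotes, and the cancellation $X(|q|^2)-|q|^2\Div_\g X+|q|^2 w=-|q|^2 w_{red}$, together with the splitting of $|q|^2\LL[\psi]$ via \eqref{inverse-metric}, is the same bookkeeping; the only difference is that you verify the specialized coefficients \eqref{eq:coeeficientsUUAAVV} and the formula \eqref{expression-Div-M-I} by hand, and these computations are correct, whereas the paper refers to Proposition 7.1.5 of \cite{GKS} for them.
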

\begin{proof}

According to Lemma 7.1.3 in \cite{GKS}\footnote{We repeat the proof of the first part of Proposition 7.1.5 for the convenience of the reader.},
\begin{align*}
|q|^2  (\QQ  \c\piX +w\LL[\psi] &=  \big(2 \De \pr_r \FF- \FF\pr_r \De\big)|\pr_r\psi|^2
- \FF\pr_r\left(\frac 1 \De\GG^{\a\b}\right)  \D_\a \psi\D_\b \psi \\
&+\Big(X\big( |q|^2\big)-|q|^2\Div_\g X+|q|^2 w\Big)\LL[\psi] 
\end{align*}
Setting $ w_{red}= |q|^2 \D_\a\big( |q|^{-2} X^\a\big)- w= \Div_\g X-|q|^{-2} X(|q|^2) -w$,
\begin{align*}
|q|^2\EE[X, w, M]      &=  \left( \De \pr_r \FF- \frac 1 2 \FF\pr_r \De\right)|\pr_r\psi|^2
-\frac 1 2  \FF\pr_r\left(\frac 1 \De\GG^{\a\b}\right)  \D_\a \psi \D_\b \psi \\
&-\frac 1 2 |q|^2w_{red} \LL[\psi] -\frac 1 4 (|q|^2 \square   w )|\psi|^2+ \frac 1 4 |q|^2 \Div(|\psi|^2 M).
\end{align*}
Also, writing,
\begin{align*}
|q|^2  \LL[\psi]&=|q|^2 \g^{\a\b}\D_\a\psi\D_\b\psi=\left( \De \pr_r^\a \pr_r^\b +\frac{1}{\De}\GG^{\a\b}\right) \D_\a\psi\D_\b\psi\\
&=\De | \pr_r\psi|^2+\frac 1 \De  \GG^{\a\b}\D_\a\psi\D_\b\psi
\end{align*}
we derive
\begin{align*}
|q|^2\EE[X, w, M]      &=  \left( \De \pr_r \FF- \frac 1 2 \FF\pr_r \De-\frac 1 2 \De w_{red}\right)|\pr_r\psi|^2\\
&-\frac 1 2 \left( \FF \pr_r\left(\frac 1 \De \GG^{\a\b}\right) + w_{red} \frac{1}{\De} \GG^{\a\b} \right)  \D_\a \psi \D_\b \psi\\
&-\frac14(|q|^2 \square  w )|\psi|^2+ \frac 1 4 |q|^2 \Div(|\psi|^2 M).
\end{align*}
as stated.
Finally we refer the reader to Proposition 7.1.5 in \cite{GKS} for the proofs of \eqref{eq:coeeficientsUUAAVV}, \eqref{expression-Div-M-I}.
\end{proof}

\subsection{Choice of $z,f$ and $h$}
\label{section:choicezhf}
In this section, we present choices for the functions $z, h, f$. The ultimate goal is to choose them so that the generalized current $\EE[X,w,M]$ as given in Proposition \ref{proposition:Morawetz1} is positive definite. According to \eqref{identity:prop.Morawetz1}, we have
\[
|q|^2\EE[X,w,M]=\AA|\pr_r\psi|^2+\UU^{\a\b}\D_\a\psi\D_\b\psi+\VV|\psi|^2+\frac14|q|^2\D^\mu(|\psi|^2M_\mu).
\]
We start by looking at the \textit{principal term}, quadratic in derivatives of $\psi$,
\[
P=\UU^{\a\b}\D_\a\psi\D_\b\psi=\frac12 hf\RRtp^{\a\b}\D_\a\psi\D_\b\psi
\]
where, recalling \eqref{eq:RR-Sa}, \eqref{components-RR-aund},
\[
\RRtp^{\a\b}=\pr_r\left(\frac{z}{\Delta}\GG^{\a\b}\right)=\pr_r\left(\frac{z}{\Delta}\GG^{\aund}S_{\aund}^{\a\b}\right)=\pr_r\left(\frac{z}{\Delta}\GG^{\aund}\right)S_{\aund}^{\a\b}.
\]
We define
\beq\lab{eq:defofRRtp}
\RRtp^\aund[z]:=\pr_r\left(\frac{z}{\Delta}\GG^{\aund}\right).
\eeq
Then we write
\beq
\label{eq:principaltermP}
P:=\UU^{\a\b}\D_\a\psi\D_\b\psi=\frac12 hf\RRtp^{\aund}[z]S_{\aund}^{\a\b}\D_\a\psi\D_\b\psi.
\eeq
We make the crucial choice for $z$ (see Lemma \ref{lemma:calculationPiDe} for motivation)
\beq
\lab{eq:z_lz}
z:=\frac{\De}{\Pi^2}, \qquad
\Pi=( r^2+a^2)  + a \lz.
\eeq


\begin{lemma}
\lab{Lemma:compute-RRtp-z_lz}
The following hold true for $z$ defined in \eqref{eq:z_lz}.
\begin{enumerate}
\item We have, with $\TT=r^3-3mr^2+a^2r+ma^2$ and $\TT_{\lz}=\TT-a\lz(r-m)$
\beq
\lab{eq:compute-RRtp-z_lz}
\bsplit
\RRtp^{\underline{1}}[z]&=\frac{-4a r(r^2+a^2)\lz}{\Pi^3},\quad
\RRtp^2[z]=\frac{4ar(r^2+a^2)-4a^2r\lz}{\Pi^3},\quad
\RRtp^3[z]=\frac{4a^2r}{\Pi^3}\\
\RRtp^4[z]&=\frac{-2\TT_{\lz}}{\Pi^3}.
\end{split}
\eeq
\item Notice the identity
\bea
\lab{eq:identityRRtp}
\RRtp^1[z]+\lz\RRtp^2[z]+\lz^2\RRtp^3[z]=0.
\eea
\item Setting $Z_{\lz}:=Z-\lz T$,
\beq
\RRtp^{\cund}[z]S_{\cund}^{\a\b}
=\frac{4ar(r^2+a^2)}{\Pi^3}\That^{(\a}
Z_{\lz}^{\b)}-\frac{2\TT_{\lz}}{\Pi^3}O^{\a\b}.
\eeq
\item We have
\beq
\lab{eq:Pterm}
P=  \frac12hf\bigg(\frac{4ar(r^2+a^2)}{\Pi^3}\That\psi
\c Z_{\lz}\psi-\frac{2\TT_{\lz}}{\Pi^3}\OO^{\a\b}\D_\a\psi\D_\b\psi\bigg).
\eeq

\end{enumerate}

\end{lemma}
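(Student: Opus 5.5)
The plan is a direct computation from the definition \eqref{eq:defofRRtp}, $\RRtp^\aund[z]=\pr_r\big(\frac{z}{\De}\GG^\aund\big)$, with $z=\De/\Pi^2$. The key simplification is that $\frac{z}{\De}=\Pi^{-2}$, so the factor $\De$ cancels and
\[
\RRtp^\aund[z]=\pr_r\Big(\frac{\GG^\aund}{\Pi^2}\Big)=\frac{\Pi\,\pr_r\GG^\aund-2(\pr_r\Pi)\GG^\aund}{\Pi^3},\qquad \pr_r\Pi=2r.
\]
For part 1 we take the values of $\GG^\aund$ from \eqref{components-RR-aund}.
\begin{itemize}
\item For $\aund=3$: $\GG^3=-a^2$, so the numerator is $4ra^2$.
\item For $\aund=2$: $\GG^2=-2a(r^2+a^2)$, so the numerator is $-4ar\Pi+8ar(r^2+a^2)=4ar(r^2+a^2)-4a^2r\lz$.
\item For $\aund=1$: $\GG^1=-(r^2+a^2)^2$, so the numerator is $-4r(r^2+a^2)\Pi+4r(r^2+a^2)^2=-4ar(r^2+a^2)\lz$.
\item For $\aund=4$: $\GG^4=\De$, so the numerator is $\Pi\pr_r\De-2(\pr_r\Pi)\De$. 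By Lemma \ref{lemma:calculationPiDe} with $\e=1$ this equals $-2\TT_{1,\lz}=-2\TT_\lz$.
\end{itemize}

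Part 2 is a direct check on the numerators: $-4ar(r^2+a^2)\lz+\lz\big(4ar(r^2+a^2)-4a^2r\lz\big)+4a^2r\lz^2=0$. Conceptually, $\sum_{\aund\le3}\GG^\aund S_\aund$ evaluated on the covector $(\e,\lz)=(1,\lz)$ is $-\Pi^2$, so $\GG^1+\lz\GG^2+\lz^2\GG^3=-\Pi^2$ and $\pr_r(-\Pi^2/\Pi^2)=0$.

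For part 3, write $S_1=T\otimes T$, $S_2=T^{(\a}Z^{\b)}$, $S_3=Z\otimes Z$, and expand
\[
\That^{(\a}Z_\lz^{\b)}=\Big(T+\tfrac{a}{r^2+a^2}Z\Big)^{(\a}\big(Z-\lz T\big)^{\b)}.
\]
Its coefficients are $-\lz$ on $S_1$, $1-\frac{a\lz}{r^2+a^2}$ on $S_2$, and $\frac{a}{r^2+a^2}$ on $S_3$. Multiplying by $\frac{4ar(r^2+a^2)}{\Pi^3}$ reproduces $\RRtp^1,\RRtp^2,\RRtp^3$ exactly. The only point needing care is the symmetrization convention $S_2=T^{(\a}Z^{\b)}$, which must match the convention used in defining $\GG^2=-2a(r^2+a^2)$ in \eqref{eq:expressionRR-O}. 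Adding $\RRtp^4 O^{\a\b}$ then gives the formula.

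Part 4 follows by contracting part 3 with $\D_\a\psi\D_\b\psi$ and inserting the result into \eqref{eq:principaltermP}. The symmetrized tensor term becomes $\That\psi\c Z_\lz\psi$, and the $O$ term is unchanged (it appears as $\OO^{\a\b}$ in the statement, meaning $O^{\a\b}$).

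There is no real obstacle here. The only steps where errors are likely are the sign and normalization bookkeeping for the symmetrized product in part 3, and the use of $\e=1$ in Lemma \ref{lemma:calculationPiDe} for part 1.
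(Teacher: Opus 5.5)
Your proof is correct and is essentially the paper's direct computation. The differences are cosmetic. You obtain $\RRtp^4[z]$ by citing Lemma~\ref{lemma:calculationPiDe} with $\e=1$, whereas the paper redoes that computation inline. In part 3 you verify the formula by expanding $\That^{(\a}Z_{\lz}^{\b)}=-\lz S_1+\big(1-\frac{a\lz}{r^2+a^2}\big)S_2+\frac{a}{r^2+a^2}S_3$; the normalized symmetrization you flagged is indeed forced by the equality of the two lines of \eqref{eq:expressionRR-O}. The paper instead derives part 3 by using \eqref{eq:identityRRtp} to eliminate the $T^\a T^\b$ component and then checking that the coefficient of $Z_{\lz}^\a Z_{\lz}^\b$ vanishes. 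Your route is shorter and does not need part 2, while the paper's explains why the answer takes this form.
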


\begin{proof}
The first three relations in \eqref{eq:compute-RRtp-z_lz} are immediate and so is also the identity \eqref{eq:identityRRtp}.
To check the last we compute, using \eqref{components-RR-aund},
\begin{align*}
\RRtp^4[z]&=\pr_r(\frac{z}{\Delta}\GG^{\underline{4}})=\pr_r(\frac{\Delta}{\Pi^2})=\frac{\pr_r\Delta\Pi^2-4r\Pi\Delta}{\Pi^4}\\
&=\frac{2(r-m)\big((r^2+a^2)+a\lz\big)-4r(r^2-2mr+a^2)}{\Pi^3}\\
&=\frac{(-2r^3+6mr^2-2a^2r-2ma^2)+2a(r-m)\lz}{\Pi^3}\\
&=\frac{-2\big(\TT-a\lz(r-m)\big)}{\Pi^3}=\frac{-2\TT_{\lz}}{\Pi^3}.
\end{align*}
Using the identity $\RRtp^1[z]+\lz\RRtp^2[z]+\lz^2\RRtp^3[z]=0$, we further compute
\begin{align*}
\RRtp^{\cund}[z]S_{\cund}^{\a\b}
&=\RRtp^1[z]T^\a T^\b+\RRtp^2[z]T^{(\a}Z^{\b)}+\RRtp^3[z]Z^\a Z^\b+\RRtp^4[z]O^{\a\b}\\
&=\Big(\RRtp^1[z]+\lz\RRtp^2[z]+\lz^2\RRtp^3[z]\Big)T^\a T^\b+\RRtp^2[z](T^{(\a}Z^{\b)}-\lz T^\a T^\b)\\
&\quad+\RRtp^3[z](Z^\a Z^\b-\lz^2 T^\a T^\b)+\RRtp^4[z]O^{\a\b}\\
&=\RRtp^2[z]T^{(\a} Z_{\lz}^{\b)}+\RRtp^3[z](Z+\lz T)^{(\a} Z_{\lz}^{\b)}+\RRtp^4[z]O^{\a\b}\\
&=\RRtp^2[z]T^{(\a} Z_{\lz}^{\b)}+\RRtp^3[z](Z_{\lz}+2\lz T)^{(\a}Z_{\lz}^{\b)}+\RRtp^4[z]O^{\a\b}\\
&=(\RRtp^2[z]+2\lz\RRtp^3[z])T^{(\a} Z_{\lz}^{\b)}+\RRtp^3[z]Z_{\lz}^\a Z_{\lz}^\b+\RRtp^4[z]O^{\a\b}
\end{align*}
Making use of the definition of $\Pi=r^2+a^2 +a\lz$, $Z_{\lz} =Z-\lz T$,
$\That= T+\frac{a}{r^2+a^2} Z $, we continue as follows
\beq\label{eq:coefficientsRRtp23}
\bsplit
\RRtp^{\cund}[z]S_{\cund}^{\a\b}
&=\Big(\frac{\RRtp^2[z]+2\lz\RRtp^3[z]}{\Pi}\Big)\big(\Pi T+aZ_{\lz}\big)^{(\a} Z_{\lz}^{\b)}\\
&+\Big(\RRtp^3[z]-\frac{a}{\Pi}(\RRtp^2[z]+2\lz\RRtp^3[z])\Big)Z_{\lz}^\a Z_{\lz}^\b+\RRtp^4[z]O^{\a\b}\\
&=\Big(\RRtp^2[z]+2\lz\RRtp^3[z]\Big)\frac{r^2+a^2}{\Pi}\That^{(\a}Z_{\lz}^{\b)}\\
&+\Big(\RRtp^3[z]-\frac{a}{\Pi}(\RRtp^2[z]+2\lz\RRtp^3[z])\Big)Z_{\lz}^\a Z_{\lz}^\b+\RRtp^4[z]O^{\a\b}.
\end{split}
\eeq
According to the formulas for $\RRtp^1[z], \RRtp^2[z], \RRtp^3[z], \RRtp^4[z]$, we finally obtain
\[
\RRtp^{\cund}[z]S_{\cund}^{\a\b}
=\frac{4ar(r^2+a^2)}{\Pi^3}\That^{(\a}
Z_{\lz}^{\b)}-\frac{2\TT_{\lz}}{\Pi^3}O^{\a\b}.
\]
Therefore, back to \eqref{eq:Pterm},
\[
P=  \frac12hf\bigg(\frac{4ar(r^2+a^2)}{\Pi^3}\nab_{\That}\psi
\nab_{Z_{\lz}}\psi-\frac{2\TT_{\lz}}{\Pi^3}O^{\a\b}\D_\a\psi\D_\b\psi\bigg)
\]
as stated.
\end{proof}
In order to make, at least, the coefficient of the term $\OO^{\a\b}\D_\a\psi\D_\b\psi$ in \eqref{eq:Pterm} positive, we need to choose $hf$ such that it has the same sign as $-\TT_{\lz}$.
\subsubsection{The case $\lz=-a$}
\lab{section:lz=-a}


From now on we set $\lz=-a$ in which case
\[
\Pi=r^2,\qquad z=\frac{\Delta}{r^4},\qquad \TT_{-a}=r(r^2-3mr+2a^2).
\]
We define
\beq
u:=hf.
\eeq
  Let $r_2=\frac{3m+\sqrt{9m^2-8a^2}}{2}$ be the largest root of $\TT_{-a}=r(r^2-3mr+2a^2)=0$ and $r_3=2m+\sqrt{4m^2-3a^2}> r_2>r_+$ be the largest root\footnote{The largest zero of the polynomial $r^2-4mr+3a^2$ is chosen to ensure
   the $C^1$ property of $w=-z\pr_ru$, see \eqref{eq:w-is-C^1}.} of $r^2-4mr+3a^2=0$. Following Stogin's construction \cite{St}, we choose
\beq\label{eq:choiceofderifh}
\pr_ru=\begin{cases}
-\frac{r_3z(r_3)}{z(r)},\quad &r\leq r_3\\
-r,\quad &r\geq r_3
\end{cases}
\eeq
and
\beq\lab{eq:choiceoffh}
u=\int_{r_2}^r\pr_ru\,dr=\begin{cases}
\int_{r_2}^r-\frac{r_3z(r_3)}{z(r)}\,dr\quad &r\leq r_3\\
-\frac12r^2+\frac12r_3^2+u(r_3)\quad &r\geq r_3
\end{cases}.
\eeq
Therefore, $u\geq 0$ for $r\leq r_2$ and $u\leq 0$ for $r\geq r_2$. In particular, $u\sim -m^2\ln\frac{r-r_+}{m}$ near $r=r_+$.

We also have
    \beq
    \lab{eq:w-is-C^1}
    \pr_r(z\pr_ru)=\begin{cases}
        0\quad &r\leq r_3\\
        -\pr_r\big(rz\big)=\frac{r^2-4mr+3a^2}{r^4}\quad &r\geq r_3
    \end{cases}
    \eeq
    which implies that $w:=-z\pr_ru$ is $C^1$ and piecewise smooth on $[r_+,\infty)$.

    Then we compute
\beq\label{eq:expressionsfoAA-a}
\bsplit
\AA&=-z^{1/2}\Delta^{3/2}\pr_r\left(\frac{z^{1/2}}{\Delta^{1/2}}u\right)=-\frac{\De^2}{r^2}\pr_r\left(\frac{u}{r^2}\right)\\
&=\begin{cases}
\frac{\De^2}{r^4}\big(\frac{2u}{r}+\frac{r_3z(r_3)}{z(r)}\big)\quad &r\leq r_3\\
\frac{\De^2}{r^5}(r_3^2+2u(r_3))\quad &r\geq r_3
\end{cases}
\end{split}.
\eeq
Since $u\geq0$ for $r\leq r_2$, it is easy to check that $\AA>0$ for $r\in(r_+, r_2]$. Note also that $\AA\sim r^{-1}\De$ near $r=r_+$. Then it remains to check the positivity of $\AA$ for $r_2<r\leq r_3$ (as the positivity of $\AA$ for $r\geq r_3$ follows from its positivity at $r=r_3$). We write for $r_2<r\leq r_3$,
\begin{align*}
\frac{2u}{r}&= \frac{2}{r}\int_{r_2}^{r}-\frac{r_3z(r_3)}{z(s)}\,ds\geq -\frac{2r_3z(r_3)}{rz(r)}\int_{r_2}^{r}1\,ds=-\frac{r_3z(r_3)}{z(r)}(2-\frac{2r_2}{r}) \\
&\geq -\frac{r_3z(r_3)}{z(r)}\frac{2(r_3-r_2)}{r_3}\geq -\frac{r_3z(r_3)}{z(r)}\end{align*}
where we use the facts that $z$ is decreasing on $[r_2,r_3]$ in the second inequality and $r_3\leq 2r_2$ in the last one. This proves that $\AA>0$ for $r>r_+$.

Finally, we compute
\beq\label{eq:expressionsfoVV-a}
\bsplit
\VV&= \frac 1 4\pr_r\Big(\De \pr_r \big(
z \pr_ru  \big)  \Big) =\begin{cases}
0\quad &r\leq r_3\\
-\frac 1 4\pr_r\Big(\De \pr_r \big(rz \big)  \Big)\quad &r\geq r_3\end{cases}\\
&=\begin{cases}
0 \qquad & r\leq r_3\\
\frac14\pr_r\Big((1-\frac{2m}{r}+\frac{a^2}{r^2})(1-\frac{4m}{r}+\frac{3a^2}{r^2})\Big)\qquad &r\geq r_3
\end{cases}
\end{split}
\eeq
Since for $r\geq r_3=2m+\sqrt{4m^2-3a^2}$,
\[
1-\frac{2m}{r}+\frac{a^2}{r^2},\quad \pr_r(1-\frac{2m}{r}+\frac{a^2}{r^2}),\quad \pr_r(1-\frac{4m}{r}+\frac{3a^2}{r^2})\ >0,\quad 1-\frac{4m}{r}+\frac{3a^2}{r^2}\ \geq 0,
\]
we conclude that $\VV>0$ for $r\geq r_3$ and $\VV=0$ for $r\leq r_3$.
\begin{remark}
We note that the choice made in both \cite{AB} and \cite{GKS} was $\lz=0$. In that case $\Pi=r^2+a^2, z=\frac{\Delta}{(r^2+a^2)^2}$. Moreover, the choices of $u$ in both \cite{AB} and \cite{GKS} are regular at $r=r_+$ while the choice of $u$ in this work, based on Stogin's method, has a logarithmic singularity\footnote{ To our knowledge the use of singular vectorfields to derive Morawetz estimates was first considered in \cite{MMTT}.}.
\end{remark}

We summarize the results above in the following.
\begin{proposition}
\lab{prop:Choice-zhf}
The coefficients of the current $ |q|^2\EE[X, w] =|q|^2\EE[X, w, M=0]$, see \eqref{identity:prop.Morawetz1} of Proposition \ref{proposition:Morawetz1},
\[
|q|^2\EE[X, w, 0]  =\AA |\pr_r\psi|^2 + \UU^{\a\b}\D_\a \psi \D_\b \psi +\VV |\psi|^2
\]
with the choices
\[
X=-zu\pr_r, \quad
w =- z \pr_ru, \quad  z=\frac{\De}{r^4},
\]
 and $u$ defined as in \eqref{eq:choiceofderifh} and \eqref{eq:choiceoffh} (with $u\sim -m^2\ln(r-r_+)$ near $r=r_+$), satisfies the following
\begin{enumerate}

\item The principal term $ P=\UU^{\a\b}\D_\a\psi \D_\b \psi$ is given by
\beq\lab{eq:principal-term}
P=\frac12u\left( - \frac{2\TT_{-a}}{ r^6}   O^{\a\b}\D_\a\psi\D_\b \psi + \frac{4ar(r^2+a^2)}{r^6} \That \psi\c \Zhat \psi\right).
\eeq
where $\TT_{-a}=r(r^2-3mr+2a^2)$.

\item The coefficient $ \AA $ in equation \eqref{eq:coeeficientsUUAAVV} is given in \eqref{eq:expressionsfoAA-a}. In particular $\AA>0$ for $r>r_+$ for $0\le \frac{|a|}{m} < 1$ and $\AA\sim r^{-1}\De$ near $r=r_+$.
\item The coefficient $\VV$ in equation \eqref{eq:coeeficientsUUAAVV} is given in \eqref{eq:expressionsfoVV-a}. Moreover, $\VV>0$ for $r\geq r_3$ and $\VV=0$ for $r\leq r_3$ where $r_3=2m+\sqrt{4m^2-3a^2}$.
\end{enumerate}
\end{proposition}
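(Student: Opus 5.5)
The plan is to specialize Proposition \ref{proposition:Morawetz1} and Lemma \ref{Lemma:compute-RRtp-z_lz} to $\lz=-a$, and then verify the sign properties of $\AA$ and $\VV$ directly from the explicit choice \eqref{eq:choiceofderifh}--\eqref{eq:choiceoffh} of $u=hf$. Since $\UU^{\a\b}$, $\AA$ and $\VV$ in \eqref{eq:coeeficientsUUAAVV} depend on $h,f$ only through the product $u$, the pair $X=-zu\pr_r$, $w=-z\pr_r u$ is exactly the choice \eqref{def-w-red-in-fun-FF-00-wave}.

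\textbf{Step 1: the principal term.} For $\lz=-a$ we have
\[
\Pi=r^2+a^2-a^2=r^2,\qquad z=\frac{\De}{r^4},\qquad Z_{\lz}=Z+aT=\Zhat,\qquad \TT_{\lz}=\TT+a^2(r-m)=\TT_{-a}.
\]
Substituting these into \eqref{eq:Pterm}, with $\Pi^3=r^6$, gives \eqref{eq:principal-term} immediately.

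\textbf{Step 2: the coefficient $\AA$.} Since $z^{1/2}\De^{-1/2}=r^{-2}$, we get $\AA=-\frac{\De^2}{r^2}\pr_r(u/r^2)$. Inserting $\pr_r u$ from \eqref{eq:choiceofderifh} yields the two-case formula \eqref{eq:expressionsfoAA-a}. The two branches agree at $r=r_3$, because $\pr_r u$ is continuous there: $r_3z(r_3)/z(r_3)=r_3$. Positivity is then checked on three ranges.
\begin{itemize}
\item On $(r_+,r_2]$: here $u\ge 0$ by the choice of integration constant, and $r_3z(r_3)/z>0$, so $\AA>0$.
\item On $(r_2,r_3]$: here $\pr_r z=-2\TT_{-a}/r^6<0$, so $z$ is decreasing. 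This gives $u(r)\ge -(r-r_2)\,r_3z(r_3)/z(r)$, hence $2u/r\ge -\frac{r_3z(r_3)}{z(r)}\cdot\frac{2(r_3-r_2)}{r_3}$. This is $\ge -r_3z(r_3)/z(r)$ provided $r_3\le 2r_2$.
\item On $[r_3,\infty)$: $\AA=\frac{\De^2}{r^5}\big(r_3^2+2u(r_3)\big)$ has the same sign as $\AA(r_3)>0$.
\end{itemize}
Near the horizon $u\sim -m^2\ln(r-r_+)$, so $\De^2 u/r^5\to 0$. The dominant term is then $\De^2\frac{r_3z(r_3)}{r^4z}=r_3z(r_3)\,\De$, which gives $\AA\sim r^{-1}\De$.

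\textbf{Step 3: the coefficient $\VV$.} For $r\le r_3$, $z\pr_r u\equiv -r_3z(r_3)$ is constant, so $\VV=0$. For $r\ge r_3$, $z\pr_r u=-rz=-\De/r^3$. A direct computation gives $\De\,\pr_r(\De/r^3)=-(1-\frac{2m}{r}+\frac{a^2}{r^2})(1-\frac{4m}{r}+\frac{3a^2}{r^2})$, which leads to \eqref{eq:expressionsfoVV-a}. Both factors are nonnegative and strictly increasing for $r\ge r_3$: each derivative is $r^{-3}$ times a positive linear expression, because $r_3>2m$ and $r_3>3a^2/(2m)$. The product is therefore strictly increasing, so $\VV>0$ there. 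The same computation gives \eqref{eq:w-is-C^1} and the $C^1$ property of $w$ at $r_3$, since $r^2-4mr+3a^2$ vanishes at $r_3$.

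\textbf{Expected main obstacle.} The main difficulty is the middle range $(r_2,r_3]$ in Step 2. The inequality $r_3\le 2r_2$ has to be checked for all $|a|<m$. I would verify it by noting $r_3\le 2m+2m=4m$, while $r_2\ge\frac{3m+m}{2}=2m$ because $9m^2-8a^2\ge m^2$. Hence $r_3\le 4m\le 2r_2$. All remaining steps are routine algebra.
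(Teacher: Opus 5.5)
Your proposal is correct and follows the paper's argument essentially step for step. You obtain $P$ by specializing Lemma~\ref{Lemma:compute-RRtp-z_lz} to $\lz=-a$, use the same three-range positivity argument for $\AA$ (monotonicity of $z$ on $[r_2,r_3]$ plus $r_3\le 2r_2$), and use the same factorization of $\VV$ for $r\ge r_3$, while also supplying the check $r_3\le 4m\le 2r_2$ that the paper only asserts.
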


\subsubsection{Main difficulties in the proof of Theorem \ref{Thm:Moraw2}}
\lab{section:maindifficulties}

As described in the introduction there are various problems with the choices made above
in Proposition \ref{prop:Choice-zhf}.
\begin{enumerate}
\item The vectorfield $X$ behaves like $\ln(r-r_+)\Rhat $ near the event horizon $r=r_+$. To deal with this singular behavior we modify
$X, w$ (as in \cite{MMTT} and \cite{St}) in section \ref{section:temperMorawetz}, by introducing the modified family
 $X_{\ep}, w_{\ep}$ with $\ep >0$ sufficiently small. This introduces
 a non-positive error term of the form $\VV_{{error}}|\psi|^2$ to the current $|q|^2\EE$.
Note however that the positivity of $\AA$ and the sign of $P$ are not affected by this procedure.

\item The coefficient $\VV$ of $\psi^2$ vanishes for $r_+\leq r\leq r_3$. In fact, given the error term $\VV_{error}$ introduced by the modification of $X,w$ above, it is even slightly negative, proportional to the small parameter $\ep$. This issue is dealt with by using a local Hardy type estimate in section \ref{sec:controloflowerterm}.
\item The control of $|\That\psi|^2$ is missing from the form of $P$. This, as explained in the introduction, is easily fixed by adding a Lagrangian term in the Morawetz estimate (see section \ref{sec:Smalltrapmodified}).
\end{enumerate}

\subsection{The tempered pair $(X_{\ep}, w_\ep)$}
\lab{section:temperMorawetz}
In this section, we address the first issue discussed in section \ref{section:maindifficulties}.

We define $X_{\ep}$ and $w_{\ep}$ by changing $u$ so that it is bounded near the event horizon. Let $F: \mathbb{R} \to \mathbb{R}$ be a smooth function satisfying $F(x) = x$ for $x \leq 1$ and $F(x) = 2$ for $x \geq 3$. We use $F(x)$ to temper $u=hf$ by defining
\beq\lab{eq:defofhftemper}
u_{\ep} = \ep^{-1} F(\ep u),
\eeq
where $\ep> 0$ is a small constant to be chosen later. Then we define
\beq\lab{eq:defofXwtemper}
X_{\ep}= -zu_{\ep}\partial_r,\quad w_{\ep}=-z\pr_ru_{\ep}.
\eeq
We note that $X_{\ep}, w_{\ep}$ agree with $X, w$ in the region where $u<\ep^{-1}$. So $X_{\ep}, w_{\ep}$ only differ from $X, w$ in the region where $u>\ep^{-1}$, which is a small neighborhood of the event horizon.

Since $F(x) \le 2$, it follows that $u_{\ep}\leq 2\ep^{-1}$, which means that for all $\ep> 0$, the vectorfield $X_{\ep}$ is  regular up to the event horizon. Furthermore, since $u_{\ep}=2\ep^{-1}$ near the event horizon, the function $w_{\ep}$ is  zero in a neighborhood of the event horizon, which implies that $w_{\ep}$ has no contribution to the boundary term at the event horizon (when integrating \eqref{eq:EE_ep}\,).

Recall that
\beq
\lab{eq:EE_ep}
    |q|^2\EE[X_{\ep}, w_{\ep},0] =   \AA_{\ep} |\pr_r\psi|^2 + \UU^{\a\b}_{\ep}\D_\a \psi \D_\b \psi +\VV_{\ep} |\psi|^2
  \eeq
    where
   \begin{align*}
\AA_{\ep}&=-z^{1/2}\Delta^{3/2} \partial_r\left(\frac{ z^{1/2}  u_{\ep} }{\Delta^{1/2}}  \right),   \\
\UU^{\a\b}_{\ep}&=   \frac{ 1}{2}  u_{\ep} \pr_r\left( \frac z \De\GG^{\a\b}\right),\\
\VV_{\ep}&=   \frac 1 4\pr_r\Big(\De \pr_r \big(
z \pr_r u_{\ep}  \big)  \Big).
\end{align*}
First, $u_{\ep}$ also satisfies the property that $u_{\ep}\geq 0$ for $r_+\leq r\leq r_2$ and $u_{\ep}\leq 0$ for $r\geq r_2$.

Second,
\beq\lab{eq:AAepsilontemper}
\AA_{\ep}=\begin{cases}
\AA\quad &u\leq \ep^{-1}\\
\frac{\De^2}{r^4}\Big(\frac{2u_{\ep}}{r}-\pr_ruF'(\ep u)\Big)\geq0\quad &u\geq \ep^{-1}
\end{cases}.
\eeq
This proves the positivity of $\AA_{\ep}$ for $r>r_+$. In particular, $\AA_{\ep}\sim \ep^{-1}r^{-5}\De^2$ near the event horizon $r=r_+$.

Finally, we have
\[
\VV_{\ep}=\begin{cases}
\VV\quad &u\leq \ep^{-1}\\
\frac 1 4\pr_r\Big(\De \pr_r \big(
z \pr_r u_{\ep} \big)  \Big)\quad &u\geq \ep^{-1}
\end{cases}.
\]
A straightforward calculation yields for $u\geq \ep^{-1}$
\begin{align*}
\frac 1 4\pr_r\Big(\De \pr_r \big(
z \pr_ru_{\ep}  \big)  \Big)
&=\frac 1 4\pr_r\Big(\De \pr_r \big(
F'(\ep u)z\pr_ru  \big)  \Big)\\
&=\frac14\pr_r\Big(\De \ep\pr_ru \c F''(\ep u)\c z\pr_ru\Big)\\
&=\frac14\epsilon^2z\De(\pr_ru)^3F'''+\frac14\ep\pr_r\Big(\De \pr_ru\c z\pr_ru\Big)F''
\end{align*}
which implies that in the region where $u\geq \epsilon^{-1}$
\[
|\VV_{\ep}|\les \ep^2 r^4(r-r_+)^{-1}\chi_{\ep}(r)+\ep r\chi_{\ep}(r)
\]
where $\chi_{\ep}$ is some bounded function supported where $u\in [\ep^{-1}, 3\ep^{-1}]$. Since $u \sim -m^2\ln (r-r_+)$, $\chi_{\ep}$ is supported in an interval of the form $(r-r_+)\in[e^{-\frac{c_2}{m^2\ep}}, e^{-\frac{c_1}{m^2\ep}}]$. Thus,
\[
   \int_{u\geq \epsilon^{-1}} |\VV_{\ep}|\,dr\les \epsilon(e^{-c_1/\epsilon} + c_2 - c_1)\les \ep.
\]
Therefore
\beq\lab{eq:VVepsilontemper}
\VV_{\ep}=\begin{cases}
\VV\quad &u\leq \ep^{-1}\\
\VV_{error}\quad &u\geq \ep^{-1}
\end{cases}\quad \text{with}\quad \lVert\VV_{{error}}\rVert_{L^1(r)}\les \ep.
\eeq
We summarize the results above in the following.
\begin{proposition}
\lab{prop:Choice-zhftemper}
The coefficients of the current
\[
|q|^2\EE[X_{\ep}, w_{\ep}, 0]  =\AA_{\ep} |\pr_r\psi|^2 + \UU_{\ep}^{\a\b}\D_\a \psi \D_\b \psi +\VV_{\ep} |\psi|^2
\]
with the choices
\[
X_{\ep}=-zu_{\ep} \pr_r 
,  \quad
w_{\ep} =- z \pr_r  u_{\ep} 
\]
and $z=\frac{\De}{r^4}$ and $u_{\ep}$ defined as in \eqref{eq:defofhftemper}, satisfies the following
\begin{enumerate}

\item The principal term $ P=\UU_{\ep}^{\a\b}\D_\a\psi \D_\b \psi$ is given by
\beq\lab{eq:principal-termtemper-redshift}
P=\frac12u_{\ep}\left(  -\frac{2\TT_{-a}}{ r^6}   O^{\a\b}\D_\a\psi\D_\b \psi +\frac{4ar(r^2+a^2)}{r^6} \That \psi\c \Zhat \psi\right).
\eeq
where $\TT_{-a}=r(r^2-3mr+2a^2)$ and $u_{\ep}\geq 0$ for $r_+\leq r\leq r_2$ and $u_{\ep}\leq 0$ for $r\geq r_2$.

\item The coefficient $ \AA_{\ep} $ is given in \eqref{eq:AAepsilontemper}. In particular $\AA_{\ep}>0$ for $r>r_+$ for $0\le \frac{|a|}{m} < 1$ and $\AA_{\ep}\sim \epsilon^{-1}r^{-5}\De^2$ near $r=r_+$.
\item The coefficient $\VV_{\ep}$ is given in \eqref{eq:VVepsilontemper}. Moreover, $\VV_{\ep}=\VV+\VV_{error}$ where $\VV_{{error}}$ is supported in a small neighborhood of the event horizon with $\lVert\VV_{{error}}\rVert_{L^1(r)}\les \ep$.\end{enumerate}
\end{proposition}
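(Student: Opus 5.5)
The statement to be proved is Proposition \ref{prop:Choice-zhftemper}. The plan is to recover each item from the untempered computations of Proposition \ref{prop:Choice-zhf}, replacing $u$ by $u_\ep=\ep^{-1}F(\ep u)$ throughout.

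The starting point is that the formulas \eqref{eq:coeeficientsUUAAVV} of Proposition \ref{proposition:Morawetz1} hold for any choice of $u=hf$. Only the product $hf$ enters, and $w=-z\pr_r(hf)$. So we may substitute $u_\ep$ directly.

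\textbf{Principal term.} For item (1), the computation of Lemma \ref{Lemma:compute-RRtp-z_lz} with $\lz=-a$ uses only $z=\De/r^4$, so $P$ equals $\frac12 u_\ep$ times the same bracket as in \eqref{eq:principal-term}. For the sign of $u_\ep$, note that $F(x)=x$ for $x\le1$; we choose $F$ increasing with $F(x)>0$ for $x>0$. Then $\operatorname{sign}F(\ep u)=\operatorname{sign} u$. Since $u\ge0$ on $[r_+,r_2]$ and $u\le 0$ on $[r_2,\infty)$, the same holds for $u_\ep$.

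\textbf{Radial coefficient $\AA_\ep$.} Where $u\le \ep^{-1}$ we have $u_\ep=u$, so $\AA_\ep=\AA>0$ by Proposition \ref{prop:Choice-zhf}. Where $u\ge\ep^{-1}$ we are close to $r_+$, with $r<r_2$. Using \eqref{eq:expressionsfoAA-a}, we compute
\[
\AA_\ep=-\frac{\De^2}{r^2}\pr_r\Big(\frac{u_\ep}{r^2}\Big)=\frac{\De^2}{r^4}\Big(\frac{2u_\ep}{r}-F'(\ep u)\pr_r u\Big).
\]
Here $u_\ep>0$ and $\pr_r u<0$. With $F'\ge0$, both terms are nonnegative, and the first is strictly positive. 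Near $r_+$ we have $u_\ep=2\ep^{-1}$ and $F'=0$, which gives $\AA_\ep\sim\ep^{-1}r^{-5}\De^2$.

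\textbf{Zeroth-order coefficient $\VV_\ep$.} Where $u\le\ep^{-1}$ we have $\VV_\ep=\VV$. In the complement, write $z\pr_r u_\ep=F'(\ep u)\,z\pr_r u$. For $r\le r_3$ the quantity $z\pr_r u=-r_3z(r_3)$ is constant by \eqref{eq:w-is-C^1}. Differentiating then gives
\[
\VV_\ep=\tfrac14\ep^2 z\De(\pr_r u)^3F'''+\tfrac14\ep\,\pr_r\big(\De z(\pr_r u)^2\big)F''.
\]
This is supported where $\ep u\in[1,3]$. Near $r_+$ we have $u\sim -m^2\ln(r-r_+)$, $\pr_r u\sim (r-r_+)^{-1}$ and $z\De\sim (r-r_+)^2$, which yields $|\VV_\ep|\les \ep^2 (r-r_+)^{-1}+\ep$ there.

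Integrating in $r$ over the support, which lies in $\ln(r-r_+)\in[-c_2/(m^2\ep),-c_1/(m^2\ep)]$, the $(r-r_+)^{-1}$ factor produces $\ep^2\cdot O(\ep^{-1})$. The remaining term is $\ep$ times an exponentially small length. Altogether $\|\VV_{error}\|_{L^1(r)}\les\ep$.

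\textbf{Main obstacle.} The only delicate point is this $L^1$ bound. It hinges on the fact that the logarithmic support interval has length $O(\ep^{-1})$ in $\ln(r-r_+)$, which exactly compensates the $\ep^2$ prefactor.
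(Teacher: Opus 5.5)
Your proposal is correct and follows essentially the same route as the paper: you substitute $u_\ep$ into the general formulas \eqref{eq:coeeficientsUUAAVV} and read off the signs of $P$ and $\AA_\ep$ from the sign of $u_\ep$ and from $F'\ge 0$. You then use that $z\pr_r u$ is constant for $r\le r_3$, so $\VV_\ep$ reduces to the $F''$ and $F'''$ terms, whose $L^1(r)$ norm over the support $\{\ep u\in[1,3]\}$ is $O(\ep)$. Your explicit choice of $F$ nondecreasing, which gives $F'\ge0$ and $0\le F\le 2$ on $[0,\infty)$, makes precise a property the paper uses tacitly for the sign of $u_\ep$ and the positivity of $\AA_\ep$.
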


\subsection{Control of the lower order term}
\lab{sec:controloflowerterm}
In this section we will address the second issue discussed in section \ref{section:maindifficulties} concerning the control of the term $\psi^2$ in the region $r \in [r_+, r_3]$. We fix it by borrowing\footnote{With the help of Lemma \ref{lem:localHardy}.} from the $|\psi|^2$ term in the region $r \in [r_3, \infty)$ to control $|\psi|^2$ in $[r_+,r_3]$ at the additional cost of an additional term in $|\partial_r \psi|^2$. To achieve this, we need the control of $|e_3\psi|^2$ near the event horizon (since $\AA_{\ep}|\pr_r\psi|^2\sim r^{-3}\De^2|\pr_r\psi|^2$ only controls $|(\De e_3-e_4)\psi|^2$ near $r=r_+$). This can be achieved by a partial redshift estimate.

\subsubsection{Non-degenerate version of Proposition \ref{prop:Choice-zhftemper}}
\lab{sec:partialredshift}
We combine Proposition \ref{prop:Choice-zhftemper} with the partial red shift estimates of section \ref{section:PRS} to derive the following.

\begin{proposition}\lab{prop:partialredshift}
Let $Y=d(r)e_3$ with $d(r)$ defined as in \eqref{eq:choicedr}. Then there exist small constants $\delta_{red}>0, c_0>0$ such that
\begin{align*}
|q|^2\EE[X_{\ep}+c_0Y, w_{\ep},0]\geq \AA_{\ep}|\pr_r\psi|^2+\UU_{\ep}^{\a\b}\D_\a\psi\D_\b\psi+\VV_{\ep}|\psi|^2+\WW|e_3\psi|^2
\end{align*}
The coefficients $\AA_{\ep}, \UU^{\a\b}_{\ep},\VV_{\ep}, \WW$ satisfy the following
\begin{enumerate}

\item The principal term $ P=\UU_{\ep}^{\a\b}\D_\a\psi \D_\b \psi$ is given by
\beq\lab{eq:principal-termtemper-summary}
P=\frac12u_{\ep}\left(  -\frac{2\TT_{-a}}{ r^6}   O^{\a\b}\D_\a\psi\D_\b \psi + \frac{4ar(r^2+a^2)}{r^6} \That \psi\c \Zhat \psi\right).
\eeq
where $\TT_{-a}=r(r^2-3mr+2a^2)$ and $u_{\ep}\geq 0$ for $r_+\leq r\leq r_2$ and $u_{\ep}\leq 0$ for $r\geq r_2$.

\item $\AA_{\ep}>0$ for $r>r_+$ for $0\le \frac{|a|}{m} < 1$ and $\AA_{\ep}\sim \epsilon^{-1}r^{-5}\De^2$ near $r=r_+$.
\item The coefficient $\WW=\frac{c_0}{8}(r-m)d(r)$ satisfies that $\WW\gtrsim m1_{r_+\leq r\leq r_+(1+\de_{red})}$.
\item The coefficient $\VV_{\ep}$ is given in \eqref{eq:VVepsilontemper}. Moreover, $\VV_{\ep}=\VV+\VV_{{error}}$ where $\VV_{{error}}$ is supported in a small neighborhood of the event horizon with $\lVert\VV_{{error}}\rVert_{L^1(r)}\les \ep$.\end{enumerate}
\end{proposition}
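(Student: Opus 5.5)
By linearity of the current in the vectorfield, we have $\EE[X_\ep+c_0Y,w_\ep,0]=\EE[X_\ep,w_\ep,0]+c_0\EE[Y,0,0]$. The plan is to add the identity of Proposition \ref{prop:Choice-zhftemper} to $c_0$ times the lower bound of Lemma \ref{lemma:PRS}. Items 1, 2 and 4 are unchanged properties of $\AA_\ep,\UU^{\a\b}_\ep,\VV_\ep$, already proved in Proposition \ref{prop:Choice-zhftemper}. What has to be shown is that the negative contributions coming from $c_0\EE[Y,0,0]$ can be absorbed. These contributions are
\[
-\frac{40c_0d(r)}{r^2(r-m)}|\De\pr_r\psi|^2
\quad\text{and}\quad
c_0\chi'\frac{1}{r_+\de_{red}}\big||q|\nab\psi\big|^2 .
\]
The second is nonpositive, since $\chi'\le 0$, and is supported in $r_+(1+\de_{red})\le r\le r_+(1+2\de_{red})$. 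Absorbing both into $\AA_\ep|\pr_r\psi|^2$ and $P$ leaves the stated form, with the positive term $\frac{c_0}{8}(r-m)d(r)|e_3\psi|^2=\WW|e_3\psi|^2$. Since $d\ge 1$ on $r\le r_+(1+\de_{red})$ and $r-m\ge r_+-m>0$ for $|a|<m$, we get $\WW\gtrsim m1_{r_+\le r\le r_+(1+\de_{red})}$.

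\textbf{Absorbing the $\pr_r$ error.} On $\mathrm{supp}\, d\subset[r_+,r_+(1+2\de_{red})]$ this error is bounded by $C c_0\De^2|\pr_r\psi|^2$. Item 2 gives $\AA_\ep\gtrsim \ep^{-1}\De^2$ near the horizon in the region $u\ge\ep^{-1}$. On the remaining part of the support, $\AA_\ep=\AA\gtrsim \De^2$ by \eqref{eq:expressionsfoAA-a}, because $u\ge0$ there and $z(r_3)/z(r)$ is bounded below. So in all cases $\AA_\ep\ge c\De^2$ on $\mathrm{supp}\,d$, with $c$ independent of $\ep\le1$. Choosing $c_0$ small, we can replace $\AA_\ep$ by $\frac12\AA_\ep$, which keeps it positive and with the same asymptotics. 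Renaming the coefficient (or keeping the stated inequality with $\AA_\ep$ on the right and a slightly larger $\AA_\ep$ on the left) gives item 2.

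\textbf{Absorbing the angular error (main obstacle).} The $\nab\psi$ term lives in $[r_+(1+\de_{red}),r_+(1+2\de_{red})]\subset(r_+,r_2)$, where $u_\ep>0$ and $-\TT_{-a}>0$. By \eqref{eq:O-frame}, $P$ contains $\frac{u_\ep}{2}\frac{-2\TT_{-a}}{r^6}|q|^2|\nab\psi|^2$, which is bounded below there by a constant $c(\de_{red})>0$ times $|\nab\psi|^2$. The cross term in $P$ is
\[
\frac{u_\ep}{2}\,\frac{4ar(r^2+a^2)}{r^6}\,\That\psi\c\Zhat\psi ,
\]
and it is not sign-definite, so a pointwise absorption into the current is not available. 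This cross term is, however, exactly what Proposition \ref{prop:partialredshift} keeps intact in $P$; only $|\nab\psi|^2$ needs to be absorbed. We therefore fix $\de_{red}$ first. The error is of size $c_0\de_{red}^{-1}|\nab\psi|^2$, which is absorbed into half of the $O$-part of $P$ once $c_0\ll c(\de_{red})\de_{red}$. The statement is then meant as an inequality in which the $O$-coefficient in $P$ is reduced by a harmless factor $1/2$ on that interval. Equivalently, we keep $P$ as stated and regard the $\nab\psi$ error as dominated by it, noting that $u_\ep\sim -m^2\ln(r-r_+)$ is large and positive there.

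Finally, the error $c_0\chi'\ldots$ is independent of $\ep$, while $u_\ep\ge u$ on this fixed compact interval. Hence the choice of $c_0$ depends only on $\de_{red}$ and $a,m$, and is uniform in $\ep$.
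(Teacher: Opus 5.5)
Your proposal is correct and is the argument the paper leaves implicit (it only says to combine Proposition~\ref{prop:Choice-zhftemper} with the partial red shift estimate). You add $c_0$ times Lemma~\ref{lemma:PRS}, absorb the $\De^2|\pr_r\psi|^2$ error into $\AA_\ep$, and absorb the $\chi'$ angular error into the $O$-part of $P$ with $\de_{red}$ fixed before $c_0$; as you rightly note, this means the statement must be read with $\AA_\ep$ and the $O$-coefficient of $P$ reduced by harmless factors. One small correction: $u_\ep\ge u$ fails in general because $F$ saturates at $2$, but on $[r_+(1+\de_{red}),r_+(1+2\de_{red})]$ one has $u_\ep=u$ once $\ep$ is small compared with $\de_{red}$, which is the regime the paper uses, so your uniformity in $\ep$ is unaffected.
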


\subsubsection{Local Hardy estimate}
\lab{sec:localHardy}
 Since we finally have good control of $|\pr_r \psi|^2$, we may essentially apply the fundamental theorem of Calculus to ``borrow'' from the $|\psi|^2$ term in the region $r \in [r_3, \infty)$ to control $|\psi|^2$ in this region at the additional cost of some of the $|\partial_r \psi|^2$ term. This technique, which we refer to as the \textit{local Hardy estimate}, is made precise by the lemma below.

\begin{lemma}\lab{lem:localHardy}
\textit{Let $V_1$ be a non-negative smooth function of $r$ supported on the interval $[r_0, r_3]$ with $r_0\geq r_+$ and let $V_2$ be any non-negative smooth function supported on an interval $[r_3, R]$ (for any $R> r_3$) and satisfying $\lVert V_2\rVert_{L^1(r)} = 2\lVert V_1\rVert_{L^1(r)}$. Then there exists a function $A_{12}$ depending on $V_1$ and $V_2$ such that}
\[
    \int V_1 \psi^2 dr \leq \int A_{12}(\partial_r \psi)^2 + V_2 \psi^2 dr.
\]
Furthermore, $A_{12}$ is supported on the interval $[r_+, R]$ and satisfies
\[
    \|A_{12}\|_{L^\infty} \le 2(R - r_3) \|V_1\|_{L^1(r)}.
\]
\end{lemma}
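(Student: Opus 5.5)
The plan is to compare values of $\psi$ at points of $[r_0,r_3]$ with values at points of $[r_3,R]$ using the fundamental theorem of calculus, and then to average against the two weights $V_1$ and $V_2$. Everything is one-dimensional in $r$: the angular and time variables are frozen, and the estimate is later integrated over them.

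\textbf{Step 1 (pointwise comparison).} Fix $r\in[r_0,r_3]$ and $s\in[r_3,R]$. Write $\psi(r)=\psi(s)-\int_r^s\pr_r\psi$. Cauchy--Schwarz then gives
\[
|\psi(r)|^2\le 2|\psi(s)|^2+2(s-r)\int_r^s|\pr_r\psi|^2\,dr' .
\]
A sharper version uses a parameter $\eta$: $|\psi(r)|^2\le(1+\eta)|\psi(s)|^2+(1+\eta^{-1})(s-r)\int_r^s|\pr_r\psi|^2$. The crude version with constants $2$ is matched to the normalization $\lVert V_2\rVert_{L^1}=2\lVert V_1\rVert_{L^1}$.

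\textbf{Step 2 (averaging).} Multiply by $V_1(r)\,V_2(s)/\lVert V_2\rVert_{L^1}$ and integrate in $r\in[r_0,r_3]$ and $s\in[r_3,R]$. The first term produces
\[
2\frac{\lVert V_1\rVert_{L^1}}{\lVert V_2\rVert_{L^1}}\int V_2|\psi|^2=\int V_2|\psi|^2 ,
\]
which is exactly where the normalization of $V_2$ is used. For the second term, exchange the order of integration. This gives $\int A_{12}(r')|\pr_r\psi(r')|^2\,dr'$ with
\[
A_{12}(r')=\frac{2}{\lVert V_2\rVert_{L^1}}\iint_{r\le r'\le s} V_1(r)V_2(s)(s-r)\,dr\,ds .
\]
By construction $A_{12}$ is nonnegative and supported in $[r_0,R]\subset[r_+,R]$.

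\textbf{Step 3 (the $L^\infty$ bound).} Bounding $(s-r)\le R-r_0$ gives $\lVert A_{12}\rVert_{L^\infty}\le 2(R-r_0)\lVert V_1\rVert_{L^1}$. This is the stated bound with $R-r_0$ in place of $R-r_3$.

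The only delicate point is recovering the precise constant $2(R-r_3)\lVert V_1\rVert_{L^1}$ rather than $2(R-r_0)\lVert V_1\rVert_{L^1}$. In the application this is harmless, since $r_0\ge r_+$ and $r_3$ are fixed and $R$ is taken large. If the sharp form is needed, I would first try to split the FTC integral at $r_3$. I would then control the piece over $[r,r_3]$ by a separate one-sided Hardy argument inside $[r_0,r_3]$, and put the piece over $[r_3,s]$, of length at most $R-r_3$, into $A_{12}$.

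Alternatively, one can use the divergence form $\pr_r(g|\psi|^2)=g'|\psi|^2+2g\psi\pr_r\psi$ with $g(r)=\int_{r_+}^r(V_1-\tfrac12V_2)$. This $g$ vanishes at both $r_+$ and $R$ because of the normalization, so there are no boundary terms. However, absorbing the cross term there is awkward where $V_1$ is small, so the averaging argument of Steps 1--2 is the main route.
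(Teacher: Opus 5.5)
Your argument is correct and complete for the bound it actually yields. Averaging the pointwise FTC inequality against $V_1(r)V_2(s)/\lVert V_2\rVert_{L^1}$ reproduces exactly $\int V_2\psi^2$ because of the normalization. Fubini then gives a nonnegative $A_{12}$ supported in $[r_0,R]\subset[r_+,R]$ with $\lVert A_{12}\rVert_{L^\infty}\le 2(R-r_0)\lVert V_1\rVert_{L^1}$. This is the standard proof of such a local Hardy estimate; the paper gives no argument of its own and only refers to Lemma 4.2.18 of \cite{St}.

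Drop your last paragraph. The printed constant $2(R-r_3)\lVert V_1\rVert_{L^1}$ cannot hold for any choice of $A_{12}$ once $r_0<r_3$, so the repair by splitting the FTC integral at $r_3$ cannot work. To see this, take $\delta>0$ small with $2\delta<r_3-r_0$ and $\delta<R-r_3$. Let $V_1$ be supported in $[r_0,r_0+\delta]$ with $\lVert V_1\rVert_{L^1}=1$, let $V_2$ be supported in $[R-\delta,R]$ with $\lVert V_2\rVert_{L^1}=2$, and suppose $|A_{12}|\le K$. Test with $\psi=1$ on $(-\infty,r_0+\delta]$, $\psi=b$ on $[R-\delta,\infty)$, and $\psi$ linear in between, with corners smoothed at arbitrarily small cost. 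This gives
\[
1\le \frac{K(1-b)^2}{R-r_0-2\delta}+2b^2 \qquad \text{for every } b\in\mathbb{R}.
\]
Set $\kappa=K/(R-r_0-2\delta)$. The right-hand side is minimized at $b=\kappa/(\kappa+2)$, where it equals $2\kappa/(\kappa+2)$, so $\kappa\ge 2$, i.e. $K\ge 2(R-r_0-2\delta)>2(R-r_3)$. Hence the lemma must be read with $2(R-r_0)\lVert V_1\rVert_{L^1}$, or the larger $2(R-r_+)\lVert V_1\rVert_{L^1}$. Letting $\delta\to0$ also shows that your constant is sharp. As you note, in the application in Section \ref{sec:controloflowerterm} this only changes a constant depending on the fixed radii $r_+$, $r_3$, $R$.
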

\begin{proof}
See Lemma 4.2.18 in \cite{St}.
\end{proof}

\subsubsection{Summary of our results so far}
Combining Proposition \ref{prop:partialredshift} with Lemma \ref{lem:localHardy}, we obtain the following proposition.
\begin{proposition}\lab{prop:morawetzwithoutThat}
Let $\ep$ be a sufficiently small constant such that $\ep\ll \de_{red}, c_0$ where $\de_{red}, c_0$ are the constants chosen in proposition \ref{prop:partialredshift}. With the choices
\[
X_{\ep}=-zu_{\ep} \pr_r,  \quad
w_{\ep} =- z \pr_r u_{\ep} ,\quad Y=d(r)e_3
\]
and $z=\frac{\De}{r^4}$ and $u_{\ep}$ defined as in \eqref{eq:defofhftemper} and $d(r)$ defined as in \eqref{eq:choicedr}, then we have
\begin{align*}
\int \EE[X_{\ep}+c_0Y, w_{\ep},0]
&\quad\geq \int |q|^{-2}\Big(\AA_{\ep}|\pr_r\psi|^2+\UU_{\ep}^{\a\b}\D_\a\psi\D_\b\psi+\VV_{\ep}|\psi|^2+\WW|e_3\psi|^2\Big).
\end{align*}
The coefficients $\AA_{\ep}, \UU^{\a\b}_{\ep},\VV_{\ep}, \WW$ satisfy the following
\begin{enumerate}

\item The principal term $ P=\UU_{\ep}^{\a\b}\D_\a\psi \D_\b \psi$ is given by
\beq\lab{eq:principal-termtemper}
P=\frac12u_{\ep}\left(  -\frac{2\TT_{-a}}{ r^6}   O^{\a\b}\D_\a\psi\D_\b \psi + \frac{4ar(r^2+a^2)}{r^6} \That \psi\c \Zhat \psi\right).
\eeq
where $\TT_{-a}=r(r^2-3mr+2a^2)$ and $u_{\ep}\geq 0$ for $r_+\leq r\leq r_2$ and $u_{\ep}\leq 0$ for $r\geq r_2$.

\item $\AA_{\ep}>0$ for $r>r_+$ for $0\le \frac{|a|}{m} < 1$ and $\AA_{\ep}\sim \epsilon^{-1}r^{-5}\De^2$ near $r=r_+$.
\item The coefficient $\WW$ satisfies that $\WW\gtrsim m1_{r_+\leq r\leq r_+(1+\de_{red})}$.
\item The coefficient $\VV_{\ep}$ satisfies that $\VV_{\ep}>0$ for $r\geq r_+$ and $\VV_{\ep}\sim mr^{-2}$ when $r\to\infty$.
 \end{enumerate}
\end{proposition}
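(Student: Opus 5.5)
The plan is to start from Proposition \ref{prop:partialredshift}, which already supplies items 1–3 (the form of $P$, the positivity of $\AA_\ep$, and the bound $\WW\gtrsim m 1_{r_+\le r\le r_+(1+\de_{red})}$). What remains is item 4: we must modify the current so that the net coefficient of $|\psi|^2$ becomes strictly positive on all of $[r_+,\infty)$. By Proposition \ref{prop:Choice-zhftemper} we have $\VV_\ep=\VV+\VV_{error}$. Here $\VV=0$ on $[r_+,r_3]$ and $\VV>0$ for $r>r_3$, with $\VV\sim m r^{-2}$ at infinity (this follows from the explicit formula \eqref{eq:expressionsfoVV-a}). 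The error term is supported where $u\ge \ep^{-1}$ and satisfies $\|\VV_{error}\|_{L^1(r)}\les\ep$. The approach is to give up a small, quantified part of $\VV$ on an interval $[r_3,R]$ and redistribute it, via Lemma \ref{lem:localHardy}, into a positive $|\psi|^2$ weight on $[r_+,r_3]$ plus an absorption of $|\VV_{error}|$. The price is a term $A_{12}|\pr_r\psi|^2$, which we pay for with $\AA_\ep|\pr_r\psi|^2$ and, near the horizon, with $\WW|e_3\psi|^2$.

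\textbf{Step 1.} Fix $R=r_3+m$. Choose $V_1\ge0$ smooth on $[r_+,r_3]$ with $V_1\ge \de_1 m r^{-2}$ there and $V_1\ge|\VV_{error}|$. Choose $V_2=\eta\,\VV\,\chi_{[r_3,R]}$ (smoothed), with $\eta$ determined by $\|V_2\|_{L^1}=2\|V_1\|_{L^1}$. Since $\VV>0$ on $(r_3,R]$, this $\eta$ is small as soon as $\de_1$ and $\ep$ are small.

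\textbf{Step 2.} Apply Lemma \ref{lem:localHardy} pointwise in $(\tau,\th,\phi)$ along the $r$-lines in the adapted coordinates, and then integrate. This gives $\int V_1\psi^2\le\int A_{12}|\pr_r\psi|^2+V_2\psi^2$ with $\|A_{12}\|_{L^\infty}\les (R-r_3)\|V_1\|_{L^1}\les \de_1+\ep$. The integration measure in the spacetime integral is $|q|^{-2}\,d\mu\sim \sin\th\,dt\,dr\,d\th\,d\phi$, with bounded weights on the compact range, so the Lemma applies up to harmless constants. One point needs care: near $r_+$ the derivative should be $\pr_{\rt}$ rather than the BL $\pr_r$. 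We write $\pr_{\rt}=\pr_r-f'T-g'Z$ near the horizon, where $\pr_{\rt}$ is a combination of $e_3$ and $\Rhat$-type derivatives. Accordingly, $|\pr_{\rt}\psi|^2\les |e_3\psi|^2+r^{-4}\De^2|\pr_r\psi|^2$ on $r\le r_+(1+\de_{red})$.

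\textbf{Step 3.} Absorb the new derivative term. Away from the horizon, $\AA_\ep\ges_{\de_{red}} \De^2 r^{-5}$ is bounded below on $[r_+(1+\de_{red}),R]$, so $A_{12}|\pr_r\psi|^2\le \frac12\AA_\ep|\pr_r\psi|^2$ once $\de_1,\ep$ are small relative to $\de_{red}$. Near the horizon, the $e_3$ piece is absorbed by $\WW|e_3\psi|^2\ges m|e_3\psi|^2$, and the $\Rhat$ piece again by $\AA_\ep\sim \ep^{-1}r^{-5}\De^2$. This is exactly where the hypothesis $\ep\ll\de_{red},c_0$ enters.

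\textbf{Step 4.} Conclude. The new net potential is $\VV_\ep+V_1-V_2$, which is at least $\de_1 m r^{-2}$ on $[r_+,r_3]$, at least $(1-\eta)\VV>0$ on $(r_3,R]$, and equal to $\VV\sim mr^{-2}$ beyond $R$. We rename this net potential $\VV_\ep$, and rename the halved $\AA_\ep$ and $\WW$ (their stated properties survive). The result is the claimed integrated inequality.

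The main obstacle is Step 3 near $r=r_+$: matching the Hardy derivative to the degenerate $\AA_\ep$ and the partial red-shift weight $\WW$ while respecting the ordering $\ep\ll\de_{red}$. A secondary point is that $\VV$ vanishes at $r_3$ itself, which is why $V_2$ is supported on $[r_3,R]$ with $\eta$ small instead of using all of $\VV$.
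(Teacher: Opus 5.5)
Your overall strategy is the paper's. The paper obtains this proposition simply by combining Proposition \ref{prop:partialredshift} with Lemma \ref{lem:localHardy}: it borrows part of $\VV>0$ on $[r_3,R]$ to make the $|\psi|^2$ coefficient positive on $[r_+,r_3]$ and to absorb $\VV_{error}$, and pays with a derivative term controlled by $\AA_\ep$ and, near $\HH$, by $\WW|e_3\psi|^2$.

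\textbf{The gap: Steps 2--3 misidentify the derivative that the one-dimensional lemma produces.} In the $\tau$-adapted coordinates, $\chi\equiv1$ for $r\le r_0$ and $r_0\gg r_+$. So on the \emph{whole} interval $[r_+,R]$, not just near $\HH$, you have
\[
\pr_{\rt}=\pr_r-f'(r)T-g'(r)Z=-e_3+\frac{m^2}{r^2}T,\qquad e_3=-\pr_r+\frac{r^2+a^2}{\De}\That .
\]
This has two consequences.
\begin{enumerate}
\item Since $T=\frac12\big(e_4+\frac{\De}{|q|^2}e_3\big)-\frac{a\sin\theta}{|q|}e_2$, your bound $|\pr_{\rt}\psi|^2\les|e_3\psi|^2+r^{-4}\De^2|\pr_r\psi|^2$ already misses an $a^2|\nab\psi|^2$ term near $\HH$.
\item More seriously, $[r_+(1+2\de_{red}),R]$ contains the whole trapping region $[\rhat_1,\rhat_2]$, and there $A_{12}|\pr_{\rt}\psi|^2$ contains $|\That\psi|^2$. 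Nothing on the right-hand side controls it: $\AA_\ep$ sees only the Boyer--Lindquist $\pr_r$, $\WW$ is supported near $\HH$, and $P$ has no $|\That\psi|^2$ term at all, only the indefinite product $\That\psi\cdot\Zhat\psi$.
\end{enumerate}
In Step 3 you tacitly switch to Boyer--Lindquist $\pr_r$ away from the horizon, but that is a different family of curves. Those curves cannot be used near $\HH$: at fixed $t$ one has $\tau=t+f(r)\to-\infty$ as $r\to r_+$, so they leave $\DD(\tau_1,\tau_2)$ through $\Si(\tau_1)$ before reaching the horizon.

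\textbf{The repair: use a radial field that interpolates between the two regimes.} Let $\chi_1=\chi\big(\frac{r-r_+}{r_+\de_{red}}\big)$ be the cut-off of \eqref{eq:choicedr}, and set
\[
V:=\pr_r-\chi_1\frac{r^2+a^2}{\De}\That=(1-\chi_1)\pr_r-\chi_1e_3 .
\]
This is regular, with $V(r)=1$ and $|V\psi|^2\le2(1-\chi_1)^2|\pr_r\psi|^2+2\chi_1^2|e_3\psi|^2$. The first piece lives where $r\ge r_+(1+\de_{red})$, where $\AA_\ep$ is bounded below by a constant depending on $\de_{red}$. The second piece is controlled by $\WW\ges c_0m\chi_1$.

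The cleanest implementation is the divergence form of Hardy, using the current $\frac14|\psi|^2M$ with $M=v(r)V$. The $\That$-part of $V$ is an $r$-dependent multiple of Killing fields tangent to $\{r=\mathrm{const}\}$, so $\mathrm{Div}\,V=\mathrm{Div}\,\pr_r$. Hence, exactly as in \eqref{expression-Div-M-I},
\[
\tfrac14|q|^2\mathrm{Div}(|\psi|^2vV)=\tfrac14|q|^2\Big(2v\,\psi\, V\psi+\big(\pr_rv+\tfrac{2r}{|q|^2}v\big)|\psi|^2\Big).
\]
Choose $v$ increasing on $[r_+,r_3]$, steeply across the support of $\VV_{error}$, and decreasing on $[r_3,R]$ against $\VV$. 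All boundary terms on $\Si(\tau_i)$ and $\HH$ are then zeroth order in $\psi$.

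Applying Lemma \ref{lem:localHardy} along integral curves of $V$ also works. Those curves are not tangent to $\Si(\tau)$, though, so you must additionally handle the curves that exit the slab, at the cost of local-in-time energy near $\Si(\tau_1)$ and $\Si(\tau_2)$.

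With $v$ (respectively $A_{12}$) small compared with $c_0m$ and with $\inf_{[r_+(1+\de_{red}),R]}\AA_\ep$, your Steps 3--4 then go through.

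A minor point: since $\VV(r_3)=0$, let the supports of $V_1$ and $V_2$ overlap slightly so that the new potential is strictly positive at $r_3$.
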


To finish the proof of Theorem \ref{Thm:Moraw2} it remains to control the $ \That\psi \Zhat\psi$ term in $P$.


\subsection{Proof of  Theorem  \ref{Thm:Moraw2}}
\lab{section:smallmodified}

In this section, we address the third issue discussed in section \ref{section:maindifficulties} concerning the control of $|\That\psi|^2$. This can be remedied with the help of a new
current of the form
\[
\PP^{\flat}_\mu=\frac 1 2  w^{\flat} \psi  \D_\mu \psi -\frac 1 4\psi^2   \pr_\mu w^{\flat}
\]
which corresponds to a current associated to a zero vector field $X=0$ and a scalar function $w^{\flat}$ to be chosen.
In view of Proposition \ref{proposition:Morawetz1}, for $X=0$, $w=w^{\flat}, w_{red}=-w^{\flat}$, we
derive
\[
|q|^2 \EE[0, w^{\flat}, 0] =\frac 1 2 \De  \, w^{\flat}     |\pr_r\psi|^2+\frac 1 2   w^{\flat} \frac 1 \De \GG^{\a\b} \, \D_\a \psi \D_\b \psi-\frac 1 4 |q|^2 \square_\g  w^{\flat}|\psi|^2.
\]
with
\[
\GG^{\a\b} \, \D_\a \psi  \D_\b \psi  =  -(r^2+a^2) ^2 |\That\psi|^2 + \De O^{\a\b}\D_\a \psi \D_\b \psi.
\]
Thus, we have
\begin{align*}
|q|^2 \EE[0, w^{\flat}, 0] =\frac 1 2 \De  \, w^{\flat}     |\pr_r\psi|^2-\frac{w^{\flat}(r^2+a^2)^2}{ 2 \De}|\That \psi|^2 +\frac 1 2  w^{\flat} O^{\a\b}\D_\a\psi \D_\b \psi  -\frac 1 4|q|^2 \square_\g  w^{\flat} |\psi|^2.
\end{align*}

\subsubsection{Small  modified trapping term $P_\delta$}
\lab{sec:Smalltrapmodified}
For some sufficiently small constant $\delta>0$, we choose
\beq\lab{eq:definitionofwdelta}
w^{\flat}=w_\delta=-\delta\frac{2m\De\TT_{-a}^2}{r^{10}}
\eeq
where $\TT_{-a}=r(r^2-3mr+2a^2)$. Then adding $|q|^2\EE[0,w^{\flat},0]$ to the current $|q|^2\EE[X_{\ep}+c_0Y, w_{\ep},0]$ in Proposition \ref{prop:morawetzwithoutThat} yields
\begin{align*}
\int \Big( \EE[X_{\ep}+c_0Y, w_{\ep},0]+  \EE[0,w^{\flat},0] \Big)
&\geq \int|q|^{-2}\Big(\AA_\delta|\pr_r\psi|^2 +\PP_\delta+ \VV_\delta|\psi|^2 +\WW|e_3\psi|^2\Big)
\end{align*}
where
\beq
\lab{eq:PdeltaAAdeltaVVdelta}
\bsplit
P_\delta&=\delta m\frac{\TT^2_{-a}}{r^6}\frac{(r^2+a^2)^2}{r^4}|\That\psi|^2+\frac{\TT_{-a}}{2r^6}\Big(-u_{\ep}-\delta\frac{2m\De\TT_{-a}}{r^4}\Big)O^{\a\b}\D_\a\psi\D_\b\psi\\
&\quad+ \frac{2ar(r^2+a^2)}{r^6}u_{\ep}  \That\psi \c \Zhat\psi\\
\AA_\delta&=\AA_{\ep}+\frac 1 2\De w_\delta=\AA_{\ep}+O(\delta m\De^2r^{-4})\\
\VV_\delta&=\VV_{\ep} - \frac 1 4|q|^2 \square_\g w_\delta= \VV_{\ep}+O(\delta mr^{-2}).
\end{split}
\eeq
By choosing $\delta$ sufficiently small, it is easy to check that
\[
\AA_\delta\gtrsim m\De^2r^{-4}, \quad  \VV_{\de}\gtrsim mr^{-2}.
\]
By the Cauchy-Schwarz inequality we derive
\[
P_\delta\gtrsim_\delta \frac{\TT_{-a}^2}{r^{7}}\Big(O^{\a\b}\D_\a\psi\D_\b\psi+mr|\That\psi|^2\Big)-\frac{a^2}{mr^2}|\Zhat\psi|^2.
\]

We summarize the result in the following
\begin{proposition}
\lab{proposition:smallmodifiedtrapping}
Consider the current $\EE_\delta=\EE+\EE^{\flat}$ where $\EE=\EE[X_{\ep}+c_0Y, w_{\ep},0]$ as in Proposition \ref{prop:morawetzwithoutThat} and $\EE^{\flat}=\EE[0,w_\delta, 0]$ with $w_\delta$ defined in \eqref{eq:definitionofwdelta}. Then,
\beq
\lab{eq:identity-forEEdelta}
\int \EE_\delta\geq \int |q|^{-2}\Big(\AA_\delta|\pr_r\psi|^2 +P_\delta+\VV_\delta|\psi|^2 + \WW|e_3\psi|^2\Big)
\eeq
with $P_\delta,\AA_\delta, \VV_\delta$ given by \eqref{eq:PdeltaAAdeltaVVdelta} and $\WW\gtrsim m1_{r_+\leq r\leq r_+(1+\de_{red})}$. Moreover, we have
\beq
\bsplit
\int \EE_\delta&\gtrsim_\delta \int \frac{m \De^2}{r^6} \big|\pr_r\psi|^2 +\frac{\TT_{-a}^2}{r^9}\Big(O^{\a\b}\D_\a\psi\D_\b\psi+mr|\That\psi|^2\Big)+m1_{r_+\leq r\leq r_+(1+\de_{red})}|e_3\psi|^2\\
&+ \int \frac{m}{r^4}|\psi|^2-\frac{a^2}{mr^4}|\Zhat\psi|^2
\end{split}.
\eeq
\end{proposition}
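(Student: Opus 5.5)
The plan is to add the two currents, read off the coefficients using the identities already established, and then check positivity coefficient by coefficient, using the smallness of $\de$ and a Cauchy--Schwarz argument for the mixed term.

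\textbf{Identity \eqref{eq:identity-forEEdelta}.} Applying Proposition \ref{proposition:Morawetz1} with $X=0$, $w=w^\flat=w_\de$ (so $w_{red}=-w_\de$) gives the formula for $|q|^2\EE[0,w_\de,0]$ stated above, using the decomposition $\GG^{\a\b}\D_\a\psi\D_\b\psi=-(r^2+a^2)^2|\That\psi|^2+\De O^{\a\b}\D_\a\psi\D_\b\psi$. I add this pointwise to the lower bound of Proposition \ref{prop:morawetzwithoutThat}. The contributions sort as follows.
\begin{itemize}
\item The $|\pr_r\psi|^2$ terms give $\AA_\de=\AA_\ep+\frac12\De w_\de$.
\item Since $w_\de<0$, the $|\That\psi|^2$ term is $+\de m\frac{\TT_{-a}^2(r^2+a^2)^2}{r^{10}}|\That\psi|^2$.
\item The $O^{\a\b}$ terms combine into $\frac{\TT_{-a}}{2r^6}\big(-u_\ep-\de\frac{2m\De\TT_{-a}}{r^4}\big)$.
\item The zeroth order terms give $\VV_\de=\VV_\ep-\frac14|q|^2\square_\g w_\de$.
\end{itemize}
This is exactly \eqref{eq:PdeltaAAdeltaVVdelta}. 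No boundary issue arises at this stage, since the statement is about the integrated bulk.

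\textbf{Coefficients $\AA_\de$ and $\VV_\de$.} Since $w_\de=O(\de m\De r^{-4})$, we have $\frac12\De w_\de=O(\de m\De^2r^{-4})$. Because $|q|^2\square_\g w_\de$ reduces, for a function of $r$, to $\pr_r(\De\pr_r w_\de)$, a direct computation gives $-\frac14|q|^2\square_\g w_\de=O(\de m r^{-2})$ uniformly on $[r_+,\infty)$. Proposition \ref{prop:morawetzwithoutThat} gives $\AA_\ep\gtrsim m\De^2r^{-4}$ with an $\ep$-independent constant: near $r_+$ it even has the factor $\ep^{-1}$, and at infinity it is $\sim r^{-1}$. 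It also gives $\VV_\ep\gtrsim mr^{-2}$. Hence for $\de$ small, depending only on $a,m$, both lower bounds persist.

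\textbf{The principal part $P_\de$; main obstacle.} There are two points here.
\begin{itemize}
\item \emph{The $O^{\a\b}$ coefficient.} Its first piece $-u_\ep\TT_{-a}/(2r^6)$ is nonnegative, because $u_\ep$ and $-\TT_{-a}$ share the same sign change at $r_2$. The second piece $-\de m\De\TT_{-a}^2/r^{10}$ is negative. So I need the comparability
\[
-u_\ep\,\TT_{-a}\ges \frac{m\De\,\TT_{-a}^2}{r^{4}}\cdot\frac{1}{m^2}\quad\text{uniformly in }\ep .
\]
This holds because $u$ and $\TT_{-a}$ both vanish simply at $r_2$ (with $\pr_r u\neq0$ there). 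Near $r_+$, $u_\ep$ is bounded below by a positive constant independent of $\ep$ while $\De\to0$. At infinity $-u\TT_{-a}\sim r^5$, which dominates $\de m\De\TT_{-a}^2r^{-4}\sim\de m r^4$. Then $\de$ is chosen small so that the combined coefficient is $\ges\TT_{-a}^2r^{-7}$.
\item \emph{The mixed term.} I apply Cauchy--Schwarz to $\frac{2ar(r^2+a^2)}{r^6}u_\ep\That\psi\c\Zhat\psi$. Half of the good term $\de m\TT_{-a}^2(r^2+a^2)^2r^{-10}|\That\psi|^2$ absorbs the $\That$ part, at the cost of
\[
C\de^{-1}\frac{a^2u_\ep^2}{m\,\TT_{-a}^2}\,r^{2}|\Zhat\psi|^2 .
\]
The ratio $u_\ep/\TT_{-a}$ is bounded: it is regular at $r_2$ by the simple vanishing, behaves like $r^{-1}$ at infinity, and is bounded near $r_+$ because $|u_\ep|\le 2\ep^{-1}$ while $\TT_{-a}(r_+)\neq0$. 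This gives the claimed error $\frac{a^2}{mr^2}|\Zhat\psi|^2$ after dividing by $|q|^2$.
\end{itemize}
The delicate point, which I expect to be the main obstacle, is that near the horizon this bound on $u_\ep/\TT_{-a}$ a priori depends on $\ep$. I would handle it by fixing $\ep$ first, as in Proposition \ref{prop:morawetzwithoutThat}, and allowing the implicit constant in $\gtrsim_\de$ to depend on that fixed $\ep$. Collecting the three pieces, and using $\WW\ges m1_{r_+\le r\le r_+(1+\de_{red})}$ unchanged, yields the stated lower bound.
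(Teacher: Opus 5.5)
Your route is the same as the paper's: add $\EE[0,w_\de,0]$, take $\de$ small to control $\AA_\de$ and $\VV_\de$, and apply Cauchy--Schwarz to the cross term. The gap is the step ``$\AA_\ep\ges m\De^2r^{-4}$, hence $\AA_\de\ges m\De^2r^{-4}$ for small $\de$'', which fails at large $r$. The paper's ``it is easy to check'' has the same defect.

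Proposition~\ref{prop:morawetzwithoutThat} gives only $\AA_\ep>0$, plus its behaviour near $r_+$. Your own remark that $\AA_\ep\sim r^{-1}$ at infinity contradicts the bound you attribute to it, since $m\De^2r^{-4}\to m$. Precisely, for $r\ge r_3$ we have $u\le 0$, so $\AA_\ep=\AA$, and by \eqref{eq:expressionsfoAA-a}
\[
\AA_\ep=\frac{\De^2}{r^5}\big(r_3^2+2u(r_3)\big)\sim \frac{m^2}{r},\qquad\text{while}\qquad \frac12\De w_\de=-\de\,\frac{m\De^2\TT_{-a}^2}{r^{10}}\longrightarrow -\de m .
\]
The perturbation is indeed $O(\de m\De^2r^{-4})$, but at infinity it is not small relative to $\AA_\ep$. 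Consequently $\AA_\de<0$ for $r\ges m/\de$, however small $\de>0$ is.

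Moreover, even at $\de=0$ this current yields only $\sim m^2r^{-3}|\pr_r\psi|^2$ after dividing by $|q|^2$. So the weight $\frac{m\De^2}{r^6}\sim\frac{m}{r^2}$ in the second display of the statement cannot be reached by this argument at large $r$. What your argument does prove is the pointwise lower bound on $\{r\le R\}$, provided $\de\ll m/R$.

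A global statement requires rebalancing the far region. One option is to change $u$ for large $r$ so that $\AA_\ep\ges m$ there; taking $\pr_r u=-r+m$ asymptotically gives $\AA\sim m$ while keeping $\VV\sim 2mr^{-2}>0$. Another option is to let $w_\de$ decay faster and accept weaker weights at infinity, recovering them from the $r^p$-estimates.

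There are also two smaller points.

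\begin{itemize}
\item Your Cauchy--Schwarz cost carries a spurious factor $r^2$. Absorbing the cross term into half of $\de m\frac{\TT_{-a}^2(r^2+a^2)^2}{r^{10}}|\That\psi|^2$ costs exactly $\frac{2a^2u_\ep^2}{\de m\TT_{-a}^2}|\Zhat\psi|^2\les_{\de,\ep}\frac{a^2}{mr^2}|\Zhat\psi|^2$. After dividing by $|q|^2$ this is the stated $\frac{a^2}{mr^4}|\Zhat\psi|^2$. As you wrote it, you would get only $\frac{a^2}{mr^2}$ after division, which is weaker than claimed.
\item Your comparability display is dimensionally inconsistent. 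What is actually needed is $-u_\ep\TT_{-a}\ges\TT_{-a}^2/r$ uniformly in $\ep$, together with $m\De\les r^3$. Your analysis of the sign, of the simple zero at $r_2$, of the region near $r_+$ and of the behaviour at infinity does establish exactly this.
\end{itemize}

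Fixing $\ep$ before $\de$ and letting the constants depend on it matches how the paper uses Proposition~\ref{prop:morawetzwithoutThat}.
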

To finish the proof of Theorem \ref{Thm:Moraw2} we integrate $\EE_\delta$ and apply the divergence Lemma in $\DD(\tau_1,\tau_2 )$. This will generate the boundary terms which we discuss in the following.

\subsubsection{Control of the boundary terms}
\lab{section:boundary}
Recall that $\EE_\delta=\EE[X_{\ep}+c_0Y, w_{\ep},0]+\EE[0,w_\delta, 0]$, with $\EE, \PP$ defined in \eqref{definition-EE-gen1} and \eqref{eq:Gen-current}. By applying the divergence Lemma and using the partial red shift estimates\footnote{Note that the red shift estimate in \ref{section:PRS} generates positive boundary terms on $\pr^+\DD(\tau_1, \tau_2)$.} of section \ref{section:PRS}, we conclude that $\int_{\DD(\tau_1,\tau_2)}\EE_\delta$ can be estimated in terms of the boundary terms at $\pr\DD(\tau_1,\tau_2)$ generated by $X_{\ep}, w_{\ep}, w_{\de}$ and that at $\Si_{\tau_1}$ generated by $c_0Y$. Since
\[
X_{\ep}=O(1)\Rhat,\quad  w_{\ep}, w_{\de}=O(\frac{\De}{r^3}),\quad Y=O(1)e_3,
\]
and $\Rhat=\That$ at $\HH_+$, see \eqref{eq:formulasThat-Rhat},
the boundary terms can be controlled by
\beq
\EFdeg[\psi](\tau_1,\tau_2)+E[\psi](\tau_1).
\eeq
 This ends the proof of Theorem \ref{Thm:Moraw2}.


\subsection{Proof of Theorem \ref{Thm:Moraw3} }


In this section, we will prove Theorem \ref{Thm:Moraw3}. In order to achieve this, we make use of the generalized current
\[
|q|^2(\EE[X=-zhf\pr_r, w=-z\pr_r(hf), 0]+\EE[0, w^{\flat}, 0])
\]
with appropriate choices of $z, h, f, w^{\flat}$. More specifically, we again choose $z=\frac{\De}{(r^2+a^2+a\lz)^2}=\frac{\De}{r^4}$ with $\lz=-a$.

\begin{remark}\lab{rem:choice-a}
There are at least two reasons for making the choice $\lz=-a$.
\begin{enumerate}

\item The main reason for the choice $\lz=-a$ is that the trapping polynomial
\bea
\lab{eq:TT_{a}}
\TT_{-a} :=\TT +a^2 (r-m)
\eea
is closely related to the determination of the $r$-range of trapped null geodesics in section
\ref{section-range.trapped.ngeod}.

More specifically, consider the trapping term (see \eqref{eq:Pterm} in Lemma \ref{Lemma:compute-RRtp-z_lz}) in the generalized current $|q|^2\EE[-zhf\pr_r, -z\pr_r(hf), 0]$ with choice $z=\frac{\De}{(r^2+a^2+a\lz)^2}, f=\pr_rz$
\[
h\frac{\TT_{\lz}}{\Pi^3}\Big( \frac{2\TT_{\lz}}{ \Pi^3}   O^{\a\b}\D_\a\psi\D_\b \psi - \frac{4ar(r^2+a^2)}{\Pi^3} \That \psi\c Z_{\lz} \psi\Big).
\]
We will need to modify this by adding a multiple of the Lagrangian $\g^{\mu\nu}\pr_\mu\psi\pr_\nu \psi$ so that we also get
a $| \That\psi|^2$ term. In fact, the multiple has to be chosen as $-h\frac{\TT^2_{\lz}}{\Pi^6}$ such that we obtain
\[
\frac{P[\lz]}{h}:=\frac{\TT_{\lz}}{\Pi^3}\Big(\frac{\TT_{\lz}}{\Pi^3}O^{\a\b}\D_\a\psi\D_\b\psi- \frac{4ar(r^2+a^2)}{\Pi^3}\That\psi\c Z_{\lz}\psi+\frac{\TT_{\lz}}{\Pi^3}\frac{(r^2+a^2)^2}{\Delta}|\That\psi|^2\Big).
\]
Note that for $\lz=-a$,
\[
\frac{P[-a]}{h}=\frac{\TT^2_{-a}}{r^{12}}\Big(|q|^2 |\nab_{e_1} \psi|^2+ |q|^2|\nab_{e_2} \psi|^2\Big)- \frac{4ar(r^2+a^2)\TT_{-a}}{r^{12}}\That\psi\c \Zhat \psi+\frac{\TT^2_{-a}}{r^{12}}\frac{(r^2+a^2)^2}{\Delta}|\That\psi|^2.
\]
When\footnote{Recall that the same choice was used in \eqref{eq:PolynTheta}.} $\sin\th=1$, we have $\Zhat=|q| e_2$, see \eqref{eq:identityZhat}, and therefore
\[
\frac{P[-a]}{h}=\frac{\TT^2_{-a}}{r^{12}}|q|^2|\nab_{e_1}\psi|^2+\begin{pmatrix}
\That\psi& |q|\nab_{e_2}\psi
\end{pmatrix}
\begin{pmatrix}
\frac{\TT^2_{-a}}{r^{12}}\frac{(r^2+a^2)^2}{\Delta}&-\frac{2ar(r^2+a^2)\TT_{-a}}{r^{12}}\\
-\frac{2ar(r^2+a^2)\TT_{-a}}{r^{12}} &\frac{\TT^2_{-a}}{r^{12}}
\end{pmatrix}
\begin{pmatrix}
\That\psi \\
|q|\nab_{e_2}\psi
\end{pmatrix}.
\]
Then the determinant of the above matrix is given by $\frac{\TT_{-a}^2(r^2+a^2)^2}{r^{24}\Delta}\big(\TT_{-a}^2-4a^2r^2\Delta\big)$ where $\TT_{-a}^2-4a^2r^2\Delta$ is precisely the polynomial $\Th(r)$ in \eqref{eq:PolynTheta}. Therefore in the nontrapping region, $P[-a]$ gives control of $\That\psi, |q|\nab\psi$, consistent with the result of Theorem \ref{Thm:Moraw3}.
\item Compared to other choices of $\lz$, $\Zhat=Z_{-a}=Z+aT$ can be written as a special linear combination of $\That$ and $|q|e_2$, see \eqref{eq:identityZhat} in Lemma \ref{lemma:TtRtandZ}. This fact is very useful in analyzing the positivity of various quadratic forms in what follows.
\end{enumerate}
\end{remark}


\subsubsection{Large modified trapping term $\Pt$}
\lab{section:Largemodified}

We choose $z=\frac{\Delta}{r^4}, f=\frac{-2\TT_{-a}}{r^6}, h=r^5$ in $\EE[-zhf\pr_r, -z\pr_r(hf), 0]$ and $w^{\flat}$ in $\EE[0, w^{\flat}, 0]$ to be a smooth function satisfying 
\beq\lab{eq:definitionofw'}
w^{\flat}=\begin{cases}
w^{\flat}_1=-\frac{4m^2r_+^2}{(r_+-m)^2}\frac{\De\TT_{-a}^2}{r^{11}}\quad &r\leq r_++\delta_1\\
w^{\flat}_2=- \frac{2\TT_{-a}^2}{r^7}\quad &r_++2\delta_1\leq r\leq R\\
w^{\flat}_3=-\frac{2m}{r^2}\quad &r\geq R+1
\end{cases}
\eeq
where $\delta_1$ is a sufficiently small constant depending on $a/m$ and $R$ is a sufficiently large constant to be chosen later. Then we have
\begin{align*}
|q|^2\Big( \EE[-zhf\pr_r, -z\pr_r(hf),0]+  \EE[0, w^{\flat},0] \Big)
&=\AAt|\pr_r\psi|^2 +\PPt+ \widetilde{\VV}|\psi|^2 
\end{align*}
where
\beq
\lab{eq:PTAAtVVt}
\bsplit
\Pt&=\begin{cases}
&-\frac{w^\flat_1 (r^2+a^2)^2}{ 2 \De}|\That\psi |^2 +\left(\frac{2\TT_{-a}^2}{r^7} +\frac 1 2 w^\flat_1\right)O^{\a\b} \D_\a\psi \D_\b\psi+ \frac{4ar(r^2+a^2)\TT_{-a}}{r^7}  \That\psi \c \Zhat\psi    \quad r\leq r_++\delta_1\\
&\frac{\TT_{-a}^2 (r^2+a^2)^2}{ \Delta r^7}|\That\psi |^2 +\frac{\TT_{-a}^2}{r^7} O^{\a\b} \D_\a\psi\D_\b\psi -\frac{4ar(r^2+a^2)\TT_{-a}}{r^7}  \That\psi\c \Zhat\psi\quad  r_++2\delta_1\leq r\leq R\\
&\frac{2m (r^2+a^2)^2}{ 2r^2 \De}|\That\psi |^2 +\left(\frac{2\TT_{-a}^2}{r^7} -\frac{m}{r^2}\right)O^{\a\b} \D_\a\psi \D_\b\psi+ \frac{4ar(r^2+a^2)\TT_{-a}}{r^7}  \That\psi \c \Zhat\psi\quad r\geq R+1
\end{cases}\\
\AAt&
=O(mr^{-4}\Delta^2),\quad
\widetilde{\VV}
= O(mr^{-2}).
\end{split}
\eeq

\begin{remark}
Note that we do not need to derive positive lower bounds here for $\widetilde{\AA}, \widetilde{\VV}$.
The desired bounds will follow from the consideration in the next section in the context of the $\SS$-derivative estimates.
\end{remark}


We summarize the result in the following
\begin{proposition}
\lab{proposition:largemodifiedtrapping}
Consider the current $\widetilde{\EE}=\EE+\EE^{\flat}$ where $\EE=\EE[X, w, M]$ as in Proposition \ref{proposition:Morawetz1} (with $z=\frac{\Delta}{r^4}, f=\frac{-2\TT_{-a}}{r^6}, h=r^5, M_{\mu}=0$)
and $\EE^{\flat}=\EE[0,w^{\flat}, 0]$ with $w^{\flat} $ defined in \eqref{eq:definitionofw'}. Then,
\beq
\lab{eq:identity-forEEt}
|q|^2\widetilde{\EE}=\AAt|\pr_r\psi|^2 +\PPt+ \widetilde{\VV}|\psi|^2
\eeq
with $\Pt,\AAt, \widetilde{\VV}$ given by \eqref{eq:PTAAtVVt}.
\end{proposition}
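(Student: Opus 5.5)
The identity \eqref{eq:identity-forEEt} is a direct computation: add the two currents coefficient by coefficient, using Proposition \ref{proposition:Morawetz1} for each. For the first current I use \eqref{eq:coeeficientsUUAAVV} with $z=\De r^{-4}$, $f=\pr_r z=-2\TT_{-a}r^{-6}$ and $h=r^5$, so that $u=hf=-2\TT_{-a}r^{-1}$. For the second current, with $X=0$, I use the formula for $|q|^2\EE[0,w^{\flat},0]$ derived in section \ref{section:smallmodified}, together with $\GG^{\a\b}\D_\a\psi\D_\b\psi=-(r^2+a^2)^2|\That\psi|^2+\De O^{\a\b}\D_\a\psi\D_\b\psi$. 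Both currents produce a $|\pr_r\psi|^2$ term, a quadratic form in $(\That\psi,\Zhat\psi,\nab\psi)$, and a $|\psi|^2$ term. Collecting these gives $\AAt$, $\Pt$ and $\VVt$.

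\emph{Principal part.} By Lemma \ref{Lemma:compute-RRtp-z_lz} (case $\lz=-a$, so $\Pi=r^2$), or directly by \eqref{eq:principal-term},
\[
\UU^{\a\b}\D_\a\psi\D_\b\psi=\frac12 u\Big(-\frac{2\TT_{-a}}{r^6}O^{\a\b}\D_\a\psi\D_\b\psi+\frac{4ar(r^2+a^2)}{r^6}\That\psi\c\Zhat\psi\Big).
\]
Substituting $u=-2\TT_{-a}/r$ gives
\[
\frac{2\TT_{-a}^2}{r^7}O^{\a\b}\D_\a\psi\D_\b\psi-\frac{4ar(r^2+a^2)\TT_{-a}}{r^7}\That\psi\c\Zhat\psi .
\]
The term $\EE[0,w^\flat,0]$ adds $-\frac{w^\flat(r^2+a^2)^2}{2\De}|\That\psi|^2+\frac12 w^\flat O^{\a\b}\D_\a\psi\D_\b\psi$. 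I then insert the three expressions $w_1^\flat$, $w_2^\flat$, $w_3^\flat$ region by region. In the middle region, $w_2^\flat=-2\TT_{-a}^2r^{-7}$ cancels half of the $O$-coefficient and produces the $\TT_{-a}^2(r^2+a^2)^2/(\De r^7)$ coefficient of $|\That\psi|^2$, which is the middle line of \eqref{eq:PTAAtVVt}. The other two lines follow in the same way. I would also check the signs of the cross term against \eqref{eq:PTAAtVVt}, since with $u<0$ beyond $r_2$ the sign should be uniform across regions.

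\emph{Lower-order coefficients.} Proposition \ref{proposition:Morawetz1} gives
\[
\AAt=-z^{1/2}\De^{3/2}\pr_r\big(hz^{1/2}f\De^{-1/2}\big)+\tfrac12\De w^\flat=-\frac{\De^2}{r^2}\pr_r\Big(\frac{u}{r^2}\Big)+\tfrac12\De w^\flat .
\]
Since $u/r^2=-2(r^2-3mr+2a^2)/r^2$ is a smooth function with derivative $O(m r^{-2})$, and $\De w^\flat=O(m\De^2r^{-4})$ in every region (near the horizon $w_1^\flat$ carries an extra factor $\De$; for large $r$, $\De w_3^\flat\sim m$ while $\De^2/r^4\sim 1$), I obtain $\AAt=O(mr^{-4}\De^2)$. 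For the potential,
\[
\VVt=\frac14\pr_r\big(\De\pr_r(z\pr_r u)\big)-\frac14|q|^2\square_\g w^\flat .
\]
The first term is a rational expression decaying like $mr^{-2}$ (indeed $z\pr_r u=O(1/r)$ and its derivative is $O(m r^{-2})$ after cancellation of the leading constant). For the second term, $w^\flat$ depends only on $r$, so $|q|^2\square w^\flat=\pr_r(\De\pr_r w^\flat)$, which is $O(mr^{-2})$ from $w_3^\flat=-2m/r^2$ and bounded on compact sets.

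\emph{Main obstacle.} The only delicate point is the bookkeeping at large $r$. One has to verify that the leading $r$-growth of $\De\pr_r(z\pr_r u)$ cancels, since $z\pr_r u\to$ const$/r$, so that $\VVt=O(mr^{-2})$ rather than $O(1)$. I would check this by expanding in powers of $m/r$ and $a^2/r^2$. The transition regions $[r_++\de_1,r_++2\de_1]$ and $[R,R+1]$ only contribute bounded smooth coefficients, which are harmless for the $O(\cdot)$ statements.
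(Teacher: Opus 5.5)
Your proposal is correct and follows the paper's route. The proposition is just the sum of two pieces, evaluated region by region:
\begin{itemize}
\item the coefficient formulas of Proposition \ref{proposition:Morawetz1} for $(X,w)$ with $u=hf=-2\TT_{-a}/r$;
\item the $X=0$ formula for $\EE[0,w^\flat,0]$ from section \ref{section:smallmodified}.
\end{itemize}

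Your suspicion about the cross term is justified. Since $w^\flat$ contributes nothing to it, the cross term equals $-\frac{4ar(r^2+a^2)\TT_{-a}}{r^7}\That\psi\c\Zhat\psi$ in all three regions. The $+$ signs in the first and third lines of \eqref{eq:PTAAtVVt} are therefore typos; the subsequent proof of Proposition \ref{proposition:coercivitynontrap} does use the minus sign through its $F_1,F_2$.
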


\subsubsection{Coercivity of  $\Pt$ away from the trapping region}

\begin{proposition}\lab{proposition:coercivitynontrap}
The quadratic form $\Pt$ is positive in the non-trapping region $[r_+, \rhat_1)\cup(\rhat_2, \infty)$.
Moreover $\Pt$ controls the derivatives $|\That\psi|^2, |\nab \psi|^2$ outside a small neighborhood of the trapping region
$[\rhat_1, \rhat_2]$.
\end{proposition}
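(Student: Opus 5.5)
The plan is to reduce $\Pt$, pointwise in $(r,\th)$, to a quadratic form in the three quantities $|q|\nab_{e_1}\psi$, $|q|\nab_{e_2}\psi$ and $\That\psi$, and then to check positivity through a $2\times 2$ determinant. By \eqref{eq:O-frame}, $O^{\a\b}\D_\a\psi\D_\b\psi=|q|^2(|e_1\psi|^2+|e_2\psi|^2)$, so the $e_1$ part decouples and has a good sign whenever the coefficient of $O^{\a\b}$ is positive. To treat the cross term, I will eliminate $T$ and $\Zhat$ in favour of $\That$ and $e_2$. From $Z=\Zhat-aT$ and $\That=T+\frac{a}{r^2+a^2}Z$ we get $T=r^{-2}\big((r^2+a^2)\That-a\Zhat\big)$. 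Inserting this into \eqref{eq:identityZhat} gives
\[
\Zhat=\frac{a\cos^2\th\,(r^2+a^2)}{|q|^2}\That+\frac{r^2\sin\th}{|q|}\,|q|e_2 .
\]
Hence $\That\psi\c\Zhat\psi$ splits into a multiple of $|\That\psi|^2$, with coefficient proportional to $a^2\cos^2\th$, plus a genuine cross term $\That\psi\c|q|e_2\psi$ weighted by $\sin\th$. This leaves, in each region, a $2\times2$ matrix in $(\That\psi,|q|e_2\psi)$, and the task is to show that its diagonal entries and its determinant are positive.

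In the main region $r_++2\de_1\le r\le R$, I will compute the determinant explicitly. Up to the positive factor $\TT_{-a}^2(r^2+a^2)^2r^{-14}$, it should be a polynomial expression in $x=\sin^2\th$. At $x=1$ it reduces to $\Th(r)\c(\text{positive})/\De$, exactly as in Remark \ref{rem:choice-a}. The key point is to recognise that, after multiplication by $\De|q|^2$, this expression is (up to positive factors and a harmless remainder) $-Q_r(x)$ from \eqref{eq:Q_r(x)}. Lemma \ref{lemma:Q_r-increases} then gives $-Q_r(x)\ge -Q_r(1)=\Th(r)$. By Proposition \ref{Prop:r-trapped.region}, $\Th>0$ exactly outside $[\rhat_1,\rhat_2]$, with a quantitative lower bound outside a $\de_{trap}$-neighbourhood. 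This yields positivity of $\Pt$ on the non-trapping set and coercive control of $|\That\psi|^2$ and $|\nab\psi|^2$ away from the trapping set. A separate check is needed at the zero $r=r_2$ of $\TT_{-a}$, where the prefactor vanishes; but $r_2\in(\rhat_1,\rhat_2)$, so this point lies inside the excluded set.

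Near the horizon, $r\le r_++\de_1$, we have $\TT_{-a}<0$ since $r_+<r_2$. The $|\That\psi|^2$ coefficient $-w_1^\flat(r^2+a^2)^2/(2\De)$ is a fixed positive multiple of $\TT_{-a}^2$, with the large constant $\frac{4m^2r_+^2}{(r_+-m)^2}$. The $O$-coefficient stays $\approx\frac{2\TT_{-a}^2}{r^7}$ because $w_1^\flat=O(\De)$. The cross term is then absorbed by Cauchy--Schwarz; the constant in $w_1^\flat$ is presumably chosen precisely so that the determinant is positive at $r=r_+$, and then for $\de_1$ small by continuity. For $r\ge R+1$, the $|\That\psi|^2$ coefficient is $\sim m$ and the $O$-coefficient is $\sim 2/r-m/r^2\sim 1/r$. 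The cross coefficient is $O(a/r)$ with respect to $(\That\psi,\,r^{-1}|q|e_2\psi)$, so the form is positive for $R$ large. On the transition intervals $[r_++\de_1,r_++2\de_1]$ and $[R,R+1]$, I take $w^\flat$ to be a convex combination $\chi w_i^\flat+(1-\chi)w_{i+1}^\flat$. Since $\Pt$ is affine in $w^\flat$, it is then the convex combination of two forms, each positive there by the analysis above, provided $\de_1$ is small and $R$ large enough that both neighbouring formulas are valid on the overlap.

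The main obstacle is the algebraic identification in the middle region. I need to show that the $\th$-dependent determinant, including the extra $a^2\cos^2\th$ contribution to the $|\That\psi|^2$ coefficient, is bounded below by a positive multiple of $\Th(r)$ uniformly in $\th$. If the exact match with $Q_r(x)$ fails, the fallback is to prove directly that the determinant is monotone in $x\in[0,1]$, mimicking the proof of Lemma \ref{lemma:Q_r-increases}, so that its minimum is attained at the equator $x=1$, where it equals $\Th$.
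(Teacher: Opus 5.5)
Your plan is correct in outline and matches the paper. You eliminate $\Zhat$ in favour of $\That$ and $|q|e_2$, so that $|q|e_1\psi$ decouples. You then check the diagonal entry and the determinant of the remaining $2\times 2$ form in $(\That\psi,|q|e_2\psi)$ separately near $r_+$ (with $w_1^\flat$), in the bulk (with $w_2^\flat$) and for large $r$ (with $w_3^\flat$).

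There is a small slip: the correct identity is $\Zhat=\frac{a\cos^2\th(r^2+a^2)}{|q|^2}\That+\frac{r^2\sin\th}{|q|^2}\,|q|e_2$. Your second coefficient has $|q|$ instead of $|q|^2$ in the denominator.

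Your guess about $w_1^\flat$ is right. Using $\TT_{-a}(r_+)=-r_+^2(r_+-m)$, one finds at $r=r_+$ that $F_1F_3-F_2^2\ge 4(r_+^2+a^2)^2(r_+-m)^2(m^2-a^2)r_+^{-8}>0$, and this uses $|a|<m$.

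The step you single out as the key point does not hold. With $w^\flat=w_2^\flat$ the determinant is
\[
D=\frac{\TT_{-a}^2(r^2+a^2)^2}{\De\, r^{14}}\Big(\TT_{-a}^2-\frac{4a^2r\TT_{-a}\De\cos^2\th}{|q|^2}-\frac{4a^2r^6\De\sin^2\th}{|q|^4}\Big).
\]
The bracket coincides with $-Q_r(\sin^2\th)$ only at $\sin^2\th=1$; it is not a positive multiple of $-Q_r$ plus a harmless remainder. For instance, at $\th=0$ the bracket equals $\TT_{-a}\big(\TT_{-a}-\frac{4a^2r\De}{r^2+a^2}\big)$. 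This is negative for $r$ slightly above $r_2$, while $-Q_r(0)=\TT^2>0$ there. Lemma \ref{lemma:Q_r-increases} concerns the geodesic constraint, which involves $\TT+a^2(r-m)x$ rather than $\TT_{-a}$, so it cannot be invoked here.

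Your fallback is what actually works, and it is short. With $y=\cos^2\th$ the bracket is $\TT_{-a}^2-4a^2\De\,g(y)$, where
\[
g(y)=\frac{r\TT_{-a}\,y}{r^2+a^2y}+\frac{r^6(1-y)}{(r^2+a^2y)^2},\qquad g'(y)=\frac{r^3\big[(\TT_{-a}-r^3)(r^2+a^2y)-2a^2r^3(1-y)\big]}{(r^2+a^2y)^3}<0,
\]
because $\TT_{-a}-r^3=r(2a^2-3mr)<0$ for $r\ge r_+$. Hence $g\le g(0)=r^2$, and the bracket is $\ge\Th(r)$ uniformly in $\th$.

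The paper reaches the same bound slightly more crudely. It uses $|q|^2\ge r^2$ and drops the nonnegative middle term when $\TT_{-a}\le 0$. When $\TT_{-a}\ge 0$, it notes that the resulting lower bound is affine in $\sin^2\th$ with slope $4a^2\De(\TT_{-a}-r^3)/r<0$, hence minimized at the equator.

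Your treatment of the transition intervals makes explicit a step the paper leaves implicit. The paper only takes $w^\flat$ to be some smooth function equal to $w_i^\flat$ on the three ranges and sets $\de_1=\de'/2$, $R=R'$. Your observation is that $\Pt$ is affine in $w^\flat$. Hence a convex interpolation $\chi w_i^\flat+(1-\chi)w_{i+1}^\flat$ preserves positivity wherever both neighbouring forms are positive, which is exactly what is needed on $[r_++\de_1,r_++2\de_1]$ and $[R,R+1]$.
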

\begin{proof}
Before starting, we recall that the trapping region for the Kerr metric $\g_{a, m}$ has been identified in Proposition \ref{Prop:r-trapped.region} to be included in the interval $[\rhat_1, \rhat_2]$ where the two $r$-values are roots of the polynomial $\Th(r)$ introduced in \eqref{eq:PolynTheta}.

We analyze the new trapping term in \eqref{eq:PTAAtVVt}
\begin{align*}
\Pt&:=-\frac{w^{\flat} (r^2+a^2)^2}{ 2 \De}|\That\psi |^2 +\left(\frac{2\TT_{-a}^2}{r^7} +\frac 1 2 w^{\flat}\right)O^{\a\b} \D_\a\psi \D_\b\psi+ \frac{4ar(r^2+a^2)\TT_{-a}}{r^7}  \That\psi \c \Zhat\psi
\end{align*}
where $w^{\flat}$ is the smooth function introduced in \eqref{eq:definitionofw'}. Since according to \eqref{eq:identityZhat}
\[
\Zhat=Z+aT=\frac{(r^2+a^2)a\cos^2\th}{|q|^2}\That+\frac{r^2\sin\th}{|q|^2}|q|e_2,
\]
we rewrite
\[
\Pt=F_1|\That\psi |^2+2F_2\That\psi\c|q|e_2\psi+F_3\Big(|q|^2|e_2(\psi)|^2+|q|^2|e_1(\psi)|^2\Big)
\]
where
\begin{align*}
F_1&=-\frac{w^{\flat} (r^2+a^2)^2}{ 2 \De}-\frac{\TT_{-a} (r^2+a^2)^2}{r^7}\frac{4ra^2\cos^2\th}{|q|^2},\\
F_2&=-\frac{2a(r^2+a^2)\TT_{-a}\sin\th}{r^4|q|^2},\quad
F_3=\frac{2\TT_{-a}^2}{r^7} +\frac 1 2 w^{\flat}.
\end{align*}
Determining the region where the quadratic form $\Pt$ is positive definite is equivalent to identifying the region where
\beq\label{eq:F1anddeterminant}
F_1(\textrm{ or } F_3)>0\quad\textrm{and}\quad D=\det\begin{pmatrix}
F_1& F_2\\
F_2&F_3
\end{pmatrix}>0.
\eeq
According to the definition of $w^{\flat}$ in \eqref{eq:definitionofw'}, it suffices to prove the following statements
\begin{enumerate}
\item[(1)] When $w^{\flat}=w_3^{\flat}$, \eqref{eq:F1anddeterminant} holds true for $r\geq R'$ where $R'$ is a sufficiently large constant.
\item[(2)] When $w^{\flat}=w_1^{\flat}$, \eqref{eq:F1anddeterminant} holds true at $r=r_+$, and thus on $[r_+, r_++\delta']$ for some $\delta'>0$.
\item[(3)] When $w^{\flat}=w_2^{\flat}$, \eqref{eq:F1anddeterminant} holds true on $[r_+, \rhat_1)\cup (\rhat_2, \infty)$
\end{enumerate}
and then choose $\delta_1=\delta'/2, R=R'$.

For part (1), since $w_3^{\flat}=\frac{-2m}{r^2}$, we have for $r\gg1$
\[
F_1\gtrsim m,\quad F_2\sim ar^{-1},\quad F_3\gtrsim r^{-1}
\]
and thus $F_1>0$ and $D=F_1F_3-F_2^2\gtrsim mr^{-1}>0$.

For part (2), when $w^{\flat}=w^{\flat}_1=-\frac{4m^2r_+^2}{(r_+-m)^2}\frac{\Delta\TT_{-a}^2}{r^{11}}$, using the fact $\TT_{-a}|_{r=r_+}=(2r\De-r^2(r-m))|_{r=r_+}=-r_+^2(r_+-m)$ we compute
\begin{align*}
F_1|_{r=r_+}&\geq -\frac{w^{\flat} (r^2+a^2)^2}{ 2 \De}|_{r=r_+}=\frac{2m^2(r_+^2+a^2)^2}{r_+^5}>0\\
F_2|_{r=r_+}&=\frac{2a(r_+^2+a^2)(r_+-m)\sin\th}{r_+^2|q|^2},\quad F_3|_{r=r_+}=\frac{2(r_+-m)^2}{r_+^3}
\end{align*}
and thus
\begin{align*}
D|_{r=r_+}&=(F_1F_3-F_2^2)|_{r=r_+}=\frac{4(r_+^2+a^2)^2(r_+-m)^2}{r_+^4}\left(\frac{m^2}{r_+^4}-\frac{a^2\sin^2\th}{|q|^4}\right)\\
&\geq\frac{4(r_+^2+a^2)^2(r_+-m)^2}{r_+^4}\c  \frac{m^2-a^2}{r_+^4}>0.
\end{align*}
This proves part (2).

For part (3), when $w^{\flat}=w^{\flat}_2=\frac{-2\TT_{-a}}{r^7}$, we have
\[
F_1=\frac{\TT_{-a} (r^2+a^2)^2}{  r^7}\Big(\frac{\TT_{-a}}{\De}-\frac{4ra^2\cos^2\th}{|q|^2}\Big),\quad
F_2=-\frac{2a(r^2+a^2)\TT_{-a}\sin\th}{r^4|q|^2},\quad
F_3=\frac{\TT_{-a}^2}{r^7}
\]
Then the determinant is given by
\begin{align*}
D&=F_1F_3-F_2^2=\frac{\TT^3_{-a} (r^2+a^2)^2}{  r^{14}}\Big(\frac{\TT_{-a}}{\De}-\frac{4ra^2\cos^2\th}{|q|^2}\Big)-\frac{4a^2r^6(r^2+a^2)^2\TT_{-a}^2\sin^2\th}{r^{14}|q|^4}\\
&=\frac{\TT^2_{-a} (r^2+a^2)^2}{  \Delta r^{14}}\Big(\TT_{-a}^2-\frac{4a^2r\TT_{-a}\Delta\cos^2\th}{|q|^2}-\frac{4a^2r^6\Delta\sin^2\th}{|q|^4}\Big).
\end{align*}
As in Proposition \ref{Prop:r-trapped.region} let $r_2$ be the largest root of
\beaa
r^{-1}\TT_{-a}= r^{-1}\big( \TT +a^2 (r-m)\big)=\big( r^2 - 3mr +2a^2\big)= (r-r_1)(r-r_2)
\eeaa
so that \footnote{Because $r_1\le r_+$.} $\TT_{-a}(r)\le 0$ for $r\in [r_+, r_{2}]$ and $\TT_{-a}(r)\geq 0$ for $[r_2,\infty)$.

{\bf Case $r\in [r_+, r_2]$.} Since $\TT_{-a}(r)\leq 0$,
\begin{align*}
D&\geq \frac{\TT^2_{-a}(r^2+a^2)^2}{  \Delta r^{14}}\Big(\TT_{-a}^2-\frac{4a^2r^6\Delta\sin^2\th}{|q|^4}\Big)\geq \frac{\TT^2_{-a} (r^2+a^2)^2}{  \Delta r^{14}}\Big(\TT_{-a}^2-\frac{4a^2r^6\Delta}{r^4}\Big)\\
&=\frac{\TT^2_{-a} (r^2+a^2)^2}{  \Delta r^{14}}\Big(\TT_{-a}^2-4a^2r^2\Delta \Big)=\frac{\TT^2_{-a} (r^2+a^2)^2}{  \Delta r^{14}}\Theta(r)
\end{align*}
with $\Th(r)$ the polynomial defined in Remark \ref{remark-Theta}. According to Proposition \ref{Prop:r-trapped.region} we know that $\Theta(r)>0$ and $H_3(r)>0$ on $[r_+, \rhat_1)$ where $\rhat_1$ is the lower bound of the trapping region $[\rhat_1, \rhat_2]$.

{\bf Case $r\in [r_2, \infty)$.} Since $\TT_{-a}(r)\geq 0$, we have
\begin{align*}
D&=\frac{\TT^2_{-a} (r^2+a^2)^2}{  \Delta r^{14}}\Big(\TT_{-a}^2-\frac{4a^2r\TT_{-a}\Delta\cos^2\th}{|q|^2}-\frac{4a^2r^6\Delta\sin^2\th}{|q|^4}\Big)\\
&\geq\frac{\TT^2_{-a} (r^2+a^2)^2}{  \Delta r^{14}}\Big(\TT_{-a}^2-\frac{4a^2\TT_{-a}\Delta\cos^2\th}{r}-4a^2r^2\Delta\sin^2\th\Big)\\
&=\frac{\TT^2_{-a} (r^2+a^2)^2}{  \Delta r^{14}}\Big(\TT_{-a}^2-\frac{4a^2\TT_{-a}\Delta}{r}+\frac{4a^2\Delta\big(\TT_{-a}-r^3\big)}{r}\sin^2\th\Big).
\end{align*}
Since $\TT_{-a}-r^3=-3mr^2+2a^2r<0$ on $[r_+,\infty)$, we derive
\begin{align*}
D&\geq \frac{\TT^2_{-a} (r^2+a^2)^2}{  \Delta r^{14}}\Big(\TT_{-a}^2-\frac{4a^2\TT_{-a}\Delta}{r}+\frac{4a^2\Delta\big(\TT_{-a}-r^3\big)}{r}\sin^2\th\Big)\Big|_{\sin^2\th=1}\\
&=\frac{\TT^2_{-a} (r^2+a^2)^2}{  \Delta r^{14}}\Big(\TT_{-a}^2-4a^2r^2\Delta \Big)=\frac{\TT^2_{-a} (r^2+a^2)^2}{  \Delta r^{14}}\Theta(r).
\end{align*}
Appealing to Proposition \ref{Prop:r-trapped.region} again, we see that $\Theta(r)>0$ and $F_3(r)>0$ on $(\rhat_2, \infty)$.

Therefore, with the choice of $w^{\flat}$ in \eqref{eq:definitionofw'} we conclude that
\[
\widetilde{P}=F_1|\That\psi |^2+2F_2\That\psi\c|q|e_2\psi+F_3\big(|q|^2|e_1(\psi)|^2+|q|^2|e_2(\psi)|^2\big)
\]
is positive definite on $[r_+, \rhat_1)\cup(\rhat_2, \infty)$. Moreover, $\widetilde{P}$ controls the derivatives $|\That\psi|^2, |\nab\psi|^2$ outside the trapping region $[\rhat_1, \rhat_2]$.
\end{proof}

The proof of Theorem \ref{Thm:Moraw3} follows now by combining Propositions \ref{proposition:largemodifiedtrapping} and \ref{proposition:coercivitynontrap}, integrating over $\DD(\tau_1,\tau_2)$, and controlling the boundary terms as in section \ref{section:boundary}.

\section{ $\SS$-derivative Morawetz estimates}\lab{section-proof-mor-2}


In this section we prove $\SS$-derivative Morawetz estimates stated in Theorem
            \eqref{Thm:Moraw4}.


\subsection{Basic spacetime  $\SS$-valued  identity}
\lab{section:SS-valuedidentity}


Recall, see \eqref{def:SS_aund}, the definition of the second order symmetry operators
\[
\SS_1\psi= T  T \psi,  \qquad \SS_2 \psi=  T Z \psi, \qquad \SS_3 \psi= ZZ  \psi, \qquad \SS_4\psi= \OO \psi
\]
commuting with $|q|^2 \square$.

\begin{definition}[Generalized Current]
\lab{def:generalizedcurrent} Let $\X$ be a double-indexed collection of vector fields $\X=\{ X^{\underline{a} \underline{b}}\}$, $\bold{w}$ be a double-indexed collection of functions $\bold{w}=\{ w^{\underline{ab}} \}$, and $\M=\{M^{\aund\bund} \}$ a double-indexed collection of $1$-forms, all symmetric in the indices $\aund, \bund$.

Consider a solution $\psi$ of the equation $\square\psi=N$ and let $\psi_\aund=\SSa\psi$ verify $\square \psia=|q|^{-2}\SS_\aund |q|^2N$. The generalized current
$\PP_\mu = \PP_\mu^{(\bold{X}, \bold{w}, \M)} [\psi] $ associated to $\psia, \psib$ is given by
\beq
\lab{eq:generalizedcurrent}
\begin{split}
\PP_\mu&= \QQ[\psi]_{\underline{ab} \mu \nu} X^{\underline{ab} \nu}+\frac 1 4  {w^{\underline{ab}} \, \big( \D_\mu  \psia\c  \psib+\D_\mu  \psib\c  \psia  \big)}\\
&-\frac 1 4 (\partial_\mu w^{\underline{ab}})\psia\c\psib  +\frac 1 4 M^{\aund\bund}_\mu\psi_\aund\c \psi_\bund.
\end{split}
\eeq
where $\QQ[\psi]_{\underline{ab} \mu \nu} =\QQ_{\mu\nu}(\psi_\aund, \psi_\bund)$ and
\begin{align*}
\QQ_{\mu\nu}(\psi_\aund, \psi_\bund)
&=  {\frac 1 2 \big( \D_\mu  \psi_\aund \c \D_\nu \psi _\bund+\D_\mu  \psi_\bund \c \D_\nu \psi _\aund\big)}    -\frac 12 \g_{\mu\nu}  \LL[\psi_\aund, \psi_\bund], \\
\LL[\psi_\aund, \psi_\bund]&=\g^{\a\b} \D_\a \psi_\aund\c\D_\b  \psi _\bund
.
\end{align*}
\end{definition}

We also define, in analogy with \eqref{definition-EE-gen1}, the modified divergence
\beq\lab{definition-EE-gen-SSvalued}
\begin{split}
\EE[\X, \w, \M] &:= \D^\mu  \PP_\mu[\X, \w, \M]- \NN[\X, \w], \\
\NN[\X, \w]&:={\frac12\Big( X^{\aund\bund}\psia+\frac 1 2   w^{\aund\bund} \psia\Big)\c \square\psib }+{\frac12\Big(X^{\aund\bund}\psib+\frac 1 2   w^{\aund\bund} \psib\Big)\c \square\psia }.
\end{split}
\eeq

Using the above notation we derive the following analogue of Proposition \ref{proposition:Morawetz1}.
\begin{proposition}
\lab{proposition:Morawetz3}
The following hold true
\begin{enumerate}
\item If we choose
\[
X^{\aund\bund} =\FF^{\aund\bund} \pr_r, \qquad w^{\aund\bund}=  |q|^2 \Div \big( |q|^{-2}  X ^{\aund\bund} \big)-w_{red}  ^{\aund\bund},
\]
then the generalized current defined in \eqref{definition-EE-gen-SSvalued} verifies the identity
\beq\label{generalized-current-operator}
\begin{split}
|q|^2\EE[\bold{X}, \bold{w}, \bold{M}]   &=\AA^{\aund\bund}  \pr_r\psia\c \pr_r\psib + \UU^{\a\b\aund\bund} \, \D_\a \psia \c \D_\b \psib  +\VV^{\aund\bund} \psia\c\psib \\
& +\frac 1 4 |q|^2 \D^\mu\Big(M^{\aund\bund}_\mu\psi_\aund\c \psi_\bund \Big)
\end{split}
\eeq
where, recalling the definition of $\GG^{\a\b} $ in \eqref{eq:expressionRR-O} (see also
 \eqref{eq:RR-Sa}, \eqref{components-RR-aund}),
\bea
\lab{eq:AAUUVV}
\bsplit
\AA^{\aund\bund}&= \De \pr_r \FF^{\aund\bund}- \frac 1 2 \FF^{\aund\bund}\pr_r \De-\frac 1 2 \De w^{\aund\bund}_{red},\\
\UU^{\a\b\aund\bund}&=  -\frac 1 2  \FF^{\aund\bund}\pr_r \left(\frac 1 \De\GG^{\a\b}\right)-\frac 1 2   w^{\aund\bund}_{red}\frac 1 \De \GG^{\a\b},\\
\VV^{\aund\bund}&= -\frac 1 4|q|^2 \square_\g  w^{\aund\bund}.
\end{split}
\eea

\item If in addition we choose, for functions $z$ and $h$, and a double-indexed function $f^{\aund\bund}$
\beq\lab{choice-FF-w-operator}
\FF^{\aund\bund}=- z h f^{\aund\bund}, \qquad   w^{\aund\bund} =- z \pr_r \big( h  f^{\aund\bund}  \big), \qquad            w^{\aund\bund} _{red}=  \FF^{\aund\bund}  z^{-1}\partial_r z,
\eeq
then, recalling \eqref{eq:RR-Sa}, 

\beq
\lab{eq:AAUUVV-again}
\bsplit
\UU^{\a\b\aund\bund}&=   \frac{ 1}{2}  h f^{\aund\bund} \pr_r\left( \frac z \De\GG^{\a\b}\right),\\
\AA^{\aund\bund}&=-z^{1/2}\Delta^{3/2} \partial_r\left(h \frac{ z^{1/2}  f^{\aund\bund} }{\Delta^{1/2}}  \right),
\\
\VV^{\aund\bund}&=\frac 1 4 \pr_r \left(\De \pr_r \Big( z \pr_r \big( h  f^{\aund\bund} \big) \Big)\right).
\end{split}
\eeq

\item If $M^{\aund\bund} = v^{\aund\bund}(r,\th) \pr_r$, for a double-indexed function $v=v^{\aund\bund}(r,\th)$, we have
\beq\label{expression-Div-M}
\bsplit
&\quad\frac 1 4 |q|^2 \Div\big( (\psia\c\psib)M^{\aund\bund} \big)\\
&= \frac 1 4 |q|^2\left( 2 v^{\aund\bund}(r)\psia\c \pr_r \psib + \left(\pr_r v^{\aund\bund}+ \frac{2r}{|q|^2} v^{\aund\bund}\right) \psia\c\psib \right).
\end{split}
\eeq
\end{enumerate}
\end{proposition}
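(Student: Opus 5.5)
The plan is to reduce everything to the scalar computation of Proposition \ref{proposition:Morawetz1}, using bilinearity and symmetry in $(\aund,\bund)$. First I would observe that the generalized current \eqref{eq:generalizedcurrent} is exactly the polarization of the scalar current \eqref{eq:Gen-current}. For a fixed symmetric pair, set $\PP_\mu[\psi_\aund,\psi_\bund;X,w,M]$ to be the symmetric bilinear form whose diagonal is $\PP_\mu[X,w,M]$ evaluated on $\psi$. Then \eqref{eq:generalizedcurrent} reads $\sum_{\aund,\bund}\PP_\mu[\psia,\psib;X^{\aund\bund},w^{\aund\bund},M^{\aund\bund}]$, and $\NN[\X,\w]$ is the polarization of $(X\psi+\frac12 w\psi)\c\square\psi$. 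The general identity \eqref{eq:Gen-identityXwM} is a pointwise algebraic identity, quadratic in $\psi$. It holds for every $\psi$ (not only solutions), since the $\square\psi$ term appears explicitly. Its polarization therefore holds with the same coefficients for each pair $(\psia,\psib)$. So $\EE[\X,\w,\M]$ is the sum over $\aund,\bund$ of the polarized scalar quantities $\EE[\psia,\psib][X^{\aund\bund},w^{\aund\bund},M^{\aund\bund}]$.

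Next I would apply Proposition \ref{proposition:Morawetz1} to each fixed pair with $X=\FF^{\aund\bund}\pr_r$ and $w=|q|^2\D_\a(|q|^{-2}X^\a)-w_{red}^{\aund\bund}$. Its proof uses only the following ingredients, all of which are linear in $(\FF,w_{red})$ and quadratic in the derivatives of $\psi$:
\begin{itemize}
\item Lemma 7.1.3 of \cite{GKS} for $|q|^2(\QQ\c\piX+w\LL)$;
\item the decomposition \eqref{inverse-metric} of $|q|^2\g^{\a\b}$;
\item the definition of $w_{red}$.
\end{itemize}
Hence polarizing yields, for each pair, $\AA\,\pr_r\psia\c\pr_r\psib+\UU^{\a\b}\D_\a\psia\c\D_\b\psib+\VV\psia\c\psib$, with coefficients given by the same formulas with $\FF,w_{red},w$ replaced by their indexed versions. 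Summing over $\aund,\bund$ gives \eqref{generalized-current-operator} with \eqref{eq:AAUUVV}.

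Part 2 follows by substituting \eqref{choice-FF-w-operator} into \eqref{eq:AAUUVV}. The passage from \eqref{eq:AAUUVV} to \eqref{eq:AAUUVV-again} is the same purely algebraic computation in $r$ as the one leading to \eqref{eq:coeeficientsUUAAVV}, with $f$ replaced by $f^{\aund\bund}$; it is linear in $hf$. For $\UU$, note that $-\frac12\FF\pr_r(\De^{-1}\GG)-\frac12 w_{red}\De^{-1}\GG=\frac12 hf\,(z\pr_r(\De^{-1}\GG)+\pr_r z\,\De^{-1}\GG)=\frac12 hf\,\pr_r(z\De^{-1}\GG)$. For $\AA$ and $\VV$, I would invoke the scalar result of Proposition 7.1.5 in \cite{GKS} verbatim.

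For part 3, since $\psia\c\psib$ is a scalar, I would compute $\Div(\phi\, v\pr_r)$ with $\phi=\psia\c\psib$ using $\sqrt{|g|}=|q|^2\sin\th$. This gives $v\pr_r\phi+\big(\pr_r v+\frac{2r}{|q|^2}v\big)\phi$, where $\pr_r\phi=\psia\c\pr_r\psib+\psib\c\pr_r\psia$. The term written in \eqref{expression-Div-M} as $2v^{\aund\bund}\psia\c\pr_r\psib$ is then obtained after summing over the symmetric indices.

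I expect the main obstacle to be bookkeeping rather than anything conceptual. One must check carefully that no step of the scalar derivation used the equation $\square\psi=N$ or the diagonal structure, which would break under polarization. One must also check that the symmetrization conventions, namely the $\frac12$ and $\frac14$ factors in $\QQ(\psia,\psib)$, in $w$, and in $\NN$, match the scalar normalizations exactly.
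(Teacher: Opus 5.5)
Your proposal is correct and follows the paper's proof, which simply says the argument repeats the steps of the scalar Proposition \ref{proposition:Morawetz1}; your polarization argument makes this precise, since \eqref{eq:Gen-identityXwM} is a pointwise identity valid for arbitrary $\psi$. Your observation that the term $2v^{\aund\bund}\psia\c\pr_r\psib$ in \eqref{expression-Div-M} only appears after summing over the symmetric indices $\aund,\bund$ is also accurate.
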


\begin{proof}
The proof follows exactly the same steps as in the proof of Proposition \ref{proposition:Morawetz1}. 
\end{proof}

\subsubsection{First choice $(\bold{X}, \bold{w},0)$}
\lab{sect:X,w.}
Following \eqref{choice-FF-w-operator}
\beq
\lab{Choice-X,w}
\bold{X}^{\aund\bund}=\FF^{\aund\bund}\pr_r,\quad\FF^{\aund\bund}=-zhf^{\aund\bund},\quad w^{\aund\bund}=-z\pr_r(hf^{\aund\bund}), \quad  w^{\aund\bund} _{red}=  \FF^{\aund\bund} z^{-1}\partial_r z.
\eeq
Then we have\footnote{Recall also the decomposition \eqref{eq:RR-Sa} and the definition of $ \RRtp^\cund[z]=\pr_r\Big( \frac{z}{\De} \GG^\cund\Big)$. }
\beq\lab{eq:EEboldXw}
|q|^2\EE[\bold{X}, \bold{w},0]=\frac12hf^{\aund\bund}\RRtp^{\cund}[z]S_{\cund}^{\a\b}\D_\a\psia\c\D_\b\psib+\AA^{\aund\bund}\pr_r\psia\c\pr_r\psib+\VV^{\aund\bund}\psia\c\psib
\eeq
where
\begin{align*}
\AA^{\aund\bund}&=-z^{1/2}\Delta^{3/2} \partial_r\left(h \frac{ z^{1/2}  f^{\aund\bund} }{\Delta^{1/2}}  \right),
\\
\VV^{\aund\bund}&=\frac 1 4 \pr_r \left(\De \pr_r \Big( z \pr_r \big( h  f^{\aund\bund} \big) \Big)\right).
\end{align*}

\subsubsection{Second choice       $(0, \bold{w}_{\flat}, 0)$    }
\lab{sect:w'}
We next consider a new general current of the form
\[
\PP^{\flat}_\mu[0,\bold{w}_{\flat},0]=\frac14w_{\flat}^{\aund\bund}(\D_\mu\psia\c\psib+\D_\mu\psib\c\psia)-\frac14(\pr_\mu w_{\flat}^{\aund\bund})\psia\c\psib.
\]
Then we have
\beq
\lab{eq:DivPP'}
|q|^2\EE[0,\bold{w}_{\flat},0]=\frac12w_{\flat}^{\aund\bund}\frac{1}{\De}\GG^{\cund}S_{\cund}^{\a\b}\D_\a\psia\c\D_\b\psib+\frac12\De w_{\flat}^{\aund\bund}\pr_r\psia\c\pr_r\psib-\frac14|q|^2(\square_{\g}w_{\flat}^{\aund\bund})\psia\c\psib.
\eeq
By summing the above to \eqref{eq:EEboldXw} we obtain
\beq\lab{eq:EEwt}
\bsplit
|q|^2\widetilde{\EE}&=|q|^2(\EE[\bold{X},\bold{w},0]+\EE[0,\bold{w}_{\flat},0])\\
&=\widetilde{\UU}^{\a\b\aund\bund}\D_\a\psia\c\D_{\b}\psib+\widetilde{\AA}^{\aund\bund}\pr_r\psia\c\pr_r\psib+\widetilde{\VV}^{\aund\bund}\psia\c\psib
\end{split}
\eeq
where
\beq
\lab{eq:coefficientsofEEwt}
\bsplit
\widetilde{\UU}^{\a\b\aund\bund}&=\frac12 hf^{\aund\bund}\GGtp^{\cund}[z]S_{\cund}^{\a\b}+\frac12\frac{w_{\flat}^{\aund\bund}}{\De}\GG^{\cund}S_{\cund}^{\a\b},\\
\widetilde{\AA}^{\aund\bund}&=\AA^{\aund\bund}+\frac12\De w_{\flat}^{\aund\bund}=-z^{1/2}\Delta^{3/2} \partial_r\left(h \frac{ z^{1/2}  f^{\aund\bund} }{\Delta^{1/2}}  \right)+\frac12\De w_{\flat}^{\aund\bund},\\
\widetilde{\VV}^{\aund\bund}&=\VV^{\aund\bund}-\frac14\pr_r \left(\De \pr_r w_{\flat}^{\aund\bund}\right)=\frac 1 4 \pr_r \left(\De \pr_r \Big( z \pr_r \big( h  f^{\aund\bund} \big)\Big)\right)-\frac14\pr_r \left(\De \pr_r w_{\flat}^{\aund\bund}\right).
\end{split}
\eeq

\subsubsection{Choice of $f^{\aund\bund}$ and $w^{\aund\bund}_{\flat}$.}

In what follows, we choose the double-indexed function $f^{\aund\bund}$ and $w_{\flat}^{\aund\bund}$
to ensure the positivity of the generalized current \eqref{eq:EEwt}.
More precisely\footnote{In order to obtain control of $\psi_1=TT\psi$ and $\psi_4=\OO\psi$, we define $\UUwtp^\aund$ as a combination of $\GGtp^\aund[z]$ and $\GG^\aund$ (because $\GGtp^1=0$ gives no information on $\psi_1$). Once we fix the choice of $\UUwtp^\aund$, the choice of $w_{\flat}^{\aund\bund}$ is naturally made to guarantee that $\widetilde{\UU}^{\a\b\aund\bund}\D_\a\psia\c\D_{\b}\psib$ is nonnegative.}
\beq\lab{choice-f-operator}
\bsplit
f^{\underline{a}\underline{b}}&= \UUwtp^{(\underline{a}} \LL^{\underline{b})}= \frac 1 2 \big(\UUwtp^{\aund}\LL^{\bund}+\UUwtp^{\bund}\LL^{\aund} \big),\quad \UUwtp^{\aund}=(1+\de_0 m^2z)\GGtp^{\aund}[z]-\delta_0m^2\GGtp^4[z]\frac{\GG^{\aund}}{r^4},\\
w_{\flat}^{\aund\bund}&=h\De(\de_0 m^2\frac{\GGtp^4[z]}{2r^4})^2
\GG^{(\aund}\LL^{\bund)}=h\De(\de_0 m^2\frac{\GGtp^4[z]}{2r^4})^2
\c\frac12\big(\GG^{\aund}\LL^{\bund}+\GG^{\bund}\LL^{\aund} \big)
\end{split}
\eeq
where $\de_0>0$ is a (not necessarily small\footnote{In fact $\de_0=10$ see \eqref{eq:de0value}.}) constant to be chosen later and $\LL^{\aund} $ are constant coefficients (to be chosen later) of a given 2-tensor of the form
\beq
\lab{eq:operato-LL2}
L^{\a\b} = \LL^{\underline{a}} S^{\a\b}_{\underline{a}}.
\eeq

We make the choices (as in section \ref{section:choicezhf} with $\lz=-a$)
\beq
\lab{eq:Pi-a}
z=\frac{\De}{r^4}.
\eeq
and\footnote{We note that the choice of $h$ here is inspired, but different from that in the scalar case in section \ref{section:choicezhf}, which was algebraically more involved. }
\beq
\lab{eq:defofhSS}
h=\frac12r^5(1-\frac{m^4}{r^4}\ln(\frac{\De}{m^2})).
\eeq

 Motivated by the choice of $z=\frac{\De}{(r^2+a^2+a\lz)^2}=\frac{\De}{r^4}$ with $\lz=-a$, we introduce another set of symmetric Killing $2$-tensors
\beq
\lab{eq:newStensor}
S_1^{\a\b}=T^\a T^\b,\quad S_2^{\a\b}=T^{(\a}\Zhat^{\b)},\quad S_3^{\a\b}=\Zhat^\a\Zhat^\b,\quad S_4^{\a\b}=O^{\a\b}
\eeq
and second order symmetry operators (see \eqref{eq:identityZhat} for the definition of $\Zhat$)
\beq
\lab{eq:newSS}
\SS_1\psi=TT\psi,\quad \SS_2\psi=T\Zhat\psi,\quad\SS_3\psi=\Zhat\Zhat\psi,\quad \SS_4\psi=\OO\psi
\eeq
commuting with $|q|^2\square$.

By abuse of notation, from now on $S_{\aund}^{\a\b}$ and $\SS_\aund$ will denote the above symmetric tensors and second order symmetry operators respectively. Then we rewrite
\beq
\lab{eq:rewriteGG}
\GG^{\a\b} =\GG^\aund  \Sa^{\a\b},
\eeq
with $\GG^\aund$, $\aund=1,2,3,4$, given by
\beq\lab{components-RR-aund_new}
\GG^1=-r^4, \quad \GG^2 = -2ar^2, \quad \GG^3 =-a^2, \quad \GG^4=\De.
\eeq



\subsection{Coefficients of generalized current $\widetilde{\EE}$}

\subsubsection{Main coefficients for $\EE[\X, \w,0]$}

In order to compute the coefficients $ \widetilde{\AA}^{\aund\bund}, \widetilde{\VV}^{\aund\bund}$ of $\widetilde{\EE}$, we first calculate the coefficients $\AA^{\aund\bund},\VV^{\aund\bund}$ of $\EE[\X, \w,0]$.
\begin{lemma}\label{lemma:coeff-z}
With the choice $z=\frac{\De}{r^4} $ (as in section \ref{section:lz=-a} with $\lz=-a$), 
we have the following.
\begin{enumerate}
\item The coefficients $\RRtp^\aund[z]= \pr_r\Big( \frac{z}{\De} \RRa\Big)=\pr_r\Big(\frac{1}{r^4}\GG^{\aund}\Big)$ are given by
\beq
\lab{eq:valuesforRRtpz}
\bsplit
\RRtp^1[z]&=0, \quad \RRtp^2[z]=  \frac{4a}{r^3} , \quad
\RRtp^3[z]=\frac{4a^2}{r^5},\\
\RRtp^4[z]&= \frac{-2\TT_{-a}}{r^6}= -\frac{2\big(\TT+a^2(r-m)\big)}{ r^6}=-\frac{2r(r^2-3mr+2a^2)}{r^6}.
\end{split}
\eeq
\item The coefficients $\UUwtp^\aund= (1+\de_0 m^2z)\GGtp^{\aund}[z]-\delta_0m^2\GGtp^4[z]\frac{\GG^{\cund}}{r^4}$ are given by
\beq
\lab{eq:valuesforUUwtpz}
\bsplit
\UUwtp^1&=\de_0 m^2\frac{-2\TT_{-a}}{r^6}, \qquad \UUwtp^2
=(1+\de_0 m^2\frac{mr-a^2}{r^4})\frac{4a}{r^3}, \\
\UUwtp^3&
=(1+\de_0 m^2\frac{r-m}{2r^3})\frac{4a^2}{r^5},\qquad
\UUwtp^4= \frac{-2\TT_{-a}}{r^6}.
\end{split}
\eeq
\item We write $\AA^{\aund\bund}=\AA^{(\aund}[z]\LL^{\bund)}$ where
\beq \label{eq:valuesforRRtppz}
 \AA^\aund[z]
 =-\frac{\De^2}{r^2}\pr_r \Big(  \frac{ h   \UUwtp^{\aund}  }{r^2} \Big).
 \eeq

\item We write $\VV^{\aund\bund}=\VV^{(\aund}[z]\LL^{\bund)}$ where the coefficients
\beq
\VV^{\aund}[z]=\frac14\pr_r(\De\pr_r(z\pr_r(h\UUwtp^{\aund}))).
\eeq
\end{enumerate}
\end{lemma}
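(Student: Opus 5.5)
The lemma is a direct computation, and I would carry it out in the order of its four items, using the new coefficients \eqref{components-RR-aund_new} throughout: $\GG^1=-r^4$, $\GG^2=-2ar^2$, $\GG^3=-a^2$, $\GG^4=\De$.

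\textbf{Step 1: the coefficients $\RRtp^\aund[z]$.} With $z=\De/r^4$ we have $\frac{z}{\De}\GG^\aund=r^{-4}\GG^\aund$, which gives the four functions
\[
-1,\qquad -\frac{2a}{r^2},\qquad -\frac{a^2}{r^4},\qquad \frac{\De}{r^4}.
\]
Differentiating in $r$ gives $0$, $4a/r^3$ and $4a^2/r^5$ at once. For the last one,
\[
\pr_r\Big(\frac{\De}{r^4}\Big)=\frac{2r(r-m)-4\De}{r^5}=\frac{-2r^2+6mr-4a^2}{r^5}=-\frac{2\TT_{-a}}{r^6},
\]
using $\TT_{-a}=r(r^2-3mr+2a^2)$. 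This proves \eqref{eq:valuesforRRtpz}.

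\textbf{Step 2: the coefficients $\UUwtp^\aund$.} I substitute Step 1 into
\[
\UUwtp^\aund=(1+\de_0m^2z)\GGtp^\aund-\de_0m^2\GGtp^4\frac{\GG^\aund}{r^4}.
\]
\begin{itemize}
\item For $\aund=1$: since $\GGtp^1=0$ and $\GG^1/r^4=-1$, only $\de_0m^2\GGtp^4$ survives.
\item For $\aund=4$: the two $\de_0$ terms, $\de_0m^2\frac{\De}{r^4}\GGtp^4$ and $-\de_0m^2\GGtp^4\frac{\De}{r^4}$, cancel exactly, leaving $\GGtp^4$.
\item For $\aund=2$: I factor out $4a/r^3$. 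The bracket becomes $1+\de_0m^2 r^{-4}\big(\De-(r^2-3mr+2a^2)\big)$, and $\De-(r^2-3mr+2a^2)=mr-a^2$.
\item For $\aund=3$: I factor out $4a^2/r^5$. The bracket becomes $1+\de_0m^2r^{-4}\big(\De-\tfrac12(r^2-3mr+2a^2)\big)$, and $\De-\tfrac12(r^2-3mr+2a^2)=\tfrac12 r(r-m)$, which yields the factor $\frac{r-m}{2r^3}$.
\end{itemize}
This proves \eqref{eq:valuesforUUwtpz}.

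\textbf{Steps 3 and 4: reduction to single-index coefficients.} Here I use that $f^{\aund\bund}=\UUwtp^{(\aund}\LL^{\bund)}$ with the $\LL^\bund$ constant, and that the formulas \eqref{eq:AAUUVV-again} for $\AA^{\aund\bund}$ and $\VV^{\aund\bund}$ are linear in $f^{\aund\bund}$ with $r$-dependent coefficients. The constants $\LL^\bund$ therefore pass through the derivatives, and symmetrization commutes with these operations. This gives $\AA^{\aund\bund}=\AA^{(\aund}[z]\LL^{\bund)}$ and $\VV^{\aund\bund}=\VV^{(\aund}[z]\LL^{\bund)}$, with $hf^{\aund\bund}$ replaced by $h\UUwtp^\aund$. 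For $\AA^\aund[z]$ I also simplify the prefactors using $z^{1/2}\De^{-1/2}=r^{-2}$ and $z^{1/2}\De^{3/2}=\De^2/r^2$, which gives \eqref{eq:valuesforRRtppz}.

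There is no real obstacle in this argument. The only place an error could creep in is the algebra in Step 2 for $\aund=2,3$, where the $\de_0$ contributions from the two terms must be combined over a common denominator before the factors $mr-a^2$ and $\frac{r-m}{2r^3}$ appear.
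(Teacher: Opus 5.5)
Your proposal is correct and follows the paper's route. The paper only records that the values of $\RRtp^\aund[z]$ and $\UUwtp^\aund$ follow by direct calculation from \eqref{components-RR-aund_new}. Your algebra, including the combinations $\De-(r^2-3mr+2a^2)=mr-a^2$ and $\De-\frac12(r^2-3mr+2a^2)=\frac12 r(r-m)$, supplies exactly that computation, and items 3 and 4 are, as you say, linearity in $f^{\aund\bund}$ together with $z^{1/2}\De^{-1/2}=r^{-2}$.
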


\begin{proof}
The expressions for $\RRtp^\aund[z]$ follow from a straightforward calculation.
The expressions for $\UUwtp^\aund$ follow from \eqref{eq:valuesforRRtpz} and \eqref{components-RR-aund_new}. 
\end{proof}

\subsubsection{Computation for coefficients of $\widetilde{\EE}$}

We recall, see \eqref{eq:EEwt}, \eqref{eq:coefficientsofEEwt} and \eqref{choice-f-operator}, that
\begin{align*}
|q|^2\widetilde{\EE}&=|q|^2(\EE[\bold{X},\bold{w},0]+\EE[0,\bold{w}^{\flat},0])\\
&=\widetilde{\UU}^{\a\b\aund\bund}\D_\a\psia\c\D_{\b}\psib+\widetilde{\AA}^{\aund\bund}\pr_r\psia\c\pr_r\psib+\widetilde{\VV}^{\aund\bund}\psia\c\psib
\end{align*}
where
\beq
\lab{eq:detailcoeffofEEwt}
\bsplit
\widetilde{\UU}^{\a\b\aund\bund}&=\frac12 h\Big(\UUwtp^{(\aund}\LL^{\bund)}\GGtp^{\cund}[z]+(\de_0 m^2\frac{\RRtp^4[z]}{2r^4})^2\GG^{(\aund}\LL^{\bund)}\GG^{\cund}\Big)S_{\cund}^{\a\b},\\
 \UUwtp^\aund&=(1+\de_0m^2z)\RRtp^\aund[z]-\de_0 m^2\RRtp^4[z]\frac{\GG^\aund}{r^4},\\
\widetilde{\AA}^{\aund\bund}&=\AA^{\aund\bund}+\frac12\De w_{\flat}^{\aund\bund}=-z^{1/2}\Delta^{3/2} \partial_r\left(h \frac{ z^{1/2}  \UUwtp^{(\aund}\LL^{\bund)} }{\Delta^{1/2}}  \right)+\frac12h(\de_0 m^2\frac{\De\RRtp^4[z]}{2r^4})^2 \GG^{(\aund}\LL^{\bund)},\\
\widetilde{\VV}^{\aund\bund}&=\VV^{\aund\bund}-\frac14\pr_r \left(\De \pr_r w_{\flat}^{\aund\bund}\right)\\
&=\frac 1 4 \pr_r \left(\De \pr_r \Big( z \pr_r \big( h \UUwtp^{(\aund}\LL^{\bund)}\big)\Big)\right)-\frac14\pr_r \left(\De \pr_r \Big(h\De(\de m^2\frac{\RRtp^4[z]}{2r^4})^2\GG^{(\aund}\LL^{\bund)}\Big)\right),
\end{split}
\eeq
where, for convenience, we choose
\beq\lab{eq:de0value}
\de_0=10.
\eeq

\begin{proposition}
\lab{Prop:summary-6.3}
Equations \eqref{eq:EEwt} (including a double indexed $1$-form $\bold{M}^{\aund\bund}$) can be written in the form 
\beq\lab{eq:separation-EE-I-J-K}
\begin{split}
&    |q|^2\widetilde{\EE}[\bold{X}, \bold{w}+\bold{w'}, \bold{M}]   =\Pt+\It+\Jt+\Kt \\
&\Pt:= \widetilde{\UU}^{\a\b\aund\bund} \, \D_\a \psia \c \D_\b \psib,\\
&\It:= \widetilde{\AA}^{\aund\bund}  \pr_r\psia\c \pr_r\psib, \\
&\Jt:= \widetilde{\VV}^{\aund\bund} \psia\c\psib,\\
&\Kt:=\frac 1 4 |q|^2  \D^\mu ((\psia\c\psib) M^{\aund\bund}_\mu).
\end{split}
\eeq
\begin{enumerate}
\item The quadratic form $\It$ can be written as
\beq\label{eq:expr-I}
\It= \big(\widetilde{\AA}^{\aund}\pr_r \psia\big)\c \big( \LL^{\aund}\pr_r \psia\big) ,
\eeq
where
\beq\lab{eq:AAtz-identity}
\widetilde{\AA}^{\aund}=\AA^\aund[z]+\frac12h(\de_0 m^2\frac{\De\RRtp^4[z]}{2r^4})^2 \GG^{\aund}.
\eeq
In particular\footnote{The detailed expressions of $\AAt^\aund$ are not important. The important facts are that $\AAt^1, \AAt^3, \AAt^4$ are nonnegative and the behavior of $\AAt^1,\AAt^4$ is the same as that of $\AA$ in the scalar case, that is, $\AAt^1, \AAt^4\sim \De$ near $r=r_+$ and $\AAt^1, \AAt^4\sim 1$ near $r=\infty$.}, for $|a|/m\leq 0.9$ we have
\[\AAt^1\geq 0, \qquad \AA^3[z]\geq \AAt^3\geq 0, \qquad \AAt^4\geq \AA^4[z]\geq 0
\]
 for $r\geq r_+$. Moreover, $\AA^2[z]\geq \widetilde{\AA}^2\geq 0$ for $|a|/m\leq 0.9$ and $r\geq r_+$.
\item The quadratic form $\Jt$ can be written as
\beq\label{eq:expr-J}
\Jt= \big(\widetilde{\VV}^{\aund} \psia\big)\c \big( \LL^{\aund} \psia\big),
\eeq
where\footnote{Again, the detailed expressions are not important. The important facts are that $\VVt^1,\VVt^4\sim \ln\De$ near $r=r_+$ and $\VVt^1, \VVt^4\sim r^{-2}$ near $r=\infty$.}
\beq
 \lab{eq:VVtz-identity}
 \VVt^\aund=\VV^\aund[z]-\frac14\pr_r \left(\De \pr_r \Big(h\De(\de m^2\frac{\RRtp^4[z]}{2r^4})^2\GG^{\aund}\Big)\right).
  \eeq
\end{enumerate}
\end{proposition}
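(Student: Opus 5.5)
The plan is to exploit the rank-one (in $\aund,\bund$) structure of all the coefficients, and then to check the sign claims by explicit one-variable computations in $r$.

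\textbf{Step 1: factorization.} By \eqref{choice-f-operator}, $f^{\aund\bund}=\UUwtp^{(\aund}\LL^{\bund)}$ and $w_{\flat}^{\aund\bund}$ is proportional to $\GG^{(\aund}\LL^{\bund)}$, with the $\LL^\aund$ constant. The maps $f\mapsto -z^{1/2}\De^{3/2}\pr_r(hz^{1/2}f\De^{-1/2})$ and $f\mapsto \frac14\pr_r(\De\pr_r(z\pr_r(hf)))$ are linear and act only in $r$. They therefore commute with symmetrization against the constant vector $\LL$, which gives $\widetilde{\AA}^{\aund\bund}=\widetilde{\AA}^{(\aund}\LL^{\bund)}$ and $\widetilde{\VV}^{\aund\bund}=\widetilde{\VV}^{(\aund}\LL^{\bund)}$. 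Here $\widetilde{\AA}^\aund$ and $\widetilde{\VV}^\aund$ are exactly \eqref{eq:AAtz-identity} and \eqref{eq:VVtz-identity}; for the former I use $z^{1/2}\De^{-1/2}=r^{-2}$, which yields \eqref{eq:valuesforRRtppz}. Contracting a symmetrized coefficient against the symmetric expression $\pr_r\psia\c\pr_r\psib$ (resp.\ $\psia\c\psib$) gives $(\widetilde{\AA}^\aund\pr_r\psia)\c(\LL^\bund\pr_r\psib)$, which is \eqref{eq:expr-I} and \eqref{eq:expr-J}. The decomposition \eqref{eq:separation-EE-I-J-K} itself is just Proposition \ref{proposition:Morawetz3}, \eqref{eq:EEwt} and \eqref{expression-Div-M} added together.

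\textbf{Step 2: explicit formulas.} Insert the values \eqref{eq:valuesforUUwtpz}, $\GG^\aund$ from \eqref{components-RR-aund_new}, $h=\frac12 r^5(1-\frac{m^4}{r^4}\ln(\De/m^2))$, $\de_0=10$ and $\RRtp^4[z]=-2\TT_{-a}/r^6$. This produces each $\widetilde{\AA}^\aund=-\frac{\De^2}{r^2}\pr_r(h\UUwtp^\aund/r^2)+\frac12 h(5m^2\De\RRtp^4/r^4)^2\GG^\aund$ as an explicit function of $r$, containing a $\ln\De$ term. The orderings then follow from the sign of $\GG^\aund$ alone, since the correction term is $h\times(\text{square})\times\GG^\aund$ and $h>0$ (because $-\ln(\De/m^2)\ge0$ near $r_+$, and for large $r$ the $r^5$ term dominates; this needs a check in between). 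Since $\GG^2,\GG^3<0$ and $\GG^4>0$, we get $\AA^2[z]\ge\widetilde{\AA}^2$, $\AA^3[z]\ge\widetilde{\AA}^3$ and $\widetilde{\AA}^4\ge\AA^4[z]$. It remains to show the nonnegativity of $\widetilde{\AA}^1,\widetilde{\AA}^2,\widetilde{\AA}^3$ and of $\AA^4[z]$.

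\textbf{Step 3: the positivity checks, which are the main obstacle.} For $\AA^4[z]$ I expect an argument similar to the scalar one in \eqref{eq:expressionsfoAA-a}. The key quantity is $-\pr_r(h\TT_{-a}/r^8)$. Near $r_+$ one has $h\sim -\frac{m^4}{2r}\ln\De$ and $\TT_{-a}<0$, so the logarithmic blow-up produces the positive term $\De^2\cdot\De^{-1}\sim\De$, consistent with the footnote. For large $r$, $h\TT_{-a}/r^8\sim \frac12(1-3m/r)$ is increasing, so its negative derivative gives a positive sign. For $\widetilde{\AA}^2$ and $\widetilde{\AA}^3$, the leading parts $-\pr_r(h\,r^{-5})$ and $-\pr_r(h\,r^{-7})$ are positive for large $r$, and they must dominate the small $\de_0$-corrections, which carry extra factors $m^2/r^2$ and $a^2$. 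For $\widetilde{\AA}^1$, the positive leading term $-\pr_r(h\UUwtp^1/r^2)$ (with $\UUwtp^1\propto -\TT_{-a}$) has to beat the negative correction $-\frac12 h(\cdots)^2r^4$.

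Away from the two asymptotic regimes, all of these reduce to sign conditions on explicit expressions in $x=r/m\in[r_+/m,\infty)$, $\hat a=a/m\in[0,0.9]$, plus one $\ln$ term. I would verify them by rewriting each as a polynomial in $x$ with coefficients in $\hat a^2$, plus a multiple of $\ln((x^2-2x+\hat a^2))$. I would bound the logarithm using $\ln y\le y-1$, or, after separating $\ln$ monotonicity, by cases, and then check the resulting polynomial inequality on the compact range, which I expect to require computer-algebra assistance (Mathematica, as the paper indicates). The asymptotic statements for $\widetilde{\VV}^1$ and $\widetilde{\VV}^4$ ($\sim\ln\De$ near $r_+$ and $\sim r^{-2}$ at infinity) follow by differentiating the explicit formulas and tracking the leading terms.
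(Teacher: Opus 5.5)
Your overall route coincides with the paper's. You get the factorization $\AAt^{\aund\bund}=\AAt^{(\aund}\LL^{\bund)}$, $\VVt^{\aund\bund}=\VVt^{(\aund}\LL^{\bund)}$ by linearity in $f^{\aund\bund}$ and $w_\flat^{\aund\bund}$ with $\LL$ constant. The three orderings follow from the signs of $\GG^\aund$ together with $h>0$. Positivity of $\AAt^1,\AAt^2,\AAt^3,\AA^4[z]$ is then checked by explicit, computer-assisted computations in $r$. No intermediate check is needed for $h>0$: for $r\ge r_+$ one has $\ln(\De/m^2)\le 2\ln(r/m)<r^4/m^4$.

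Two points in your Step 3 are wrong, and the second would make the large-$r$ part of the argument fail as written.

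\textbf{The sign of the key quantity for $\AA^4[z]$.} Since $\UUwtp^4=-2\TT_{-a}/r^6$, you have $\AA^4[z]=\frac{2\De^2}{r^2}\,\pr_r\big(h\TT_{-a}r^{-8}\big)$. The relevant quantity is therefore $+\pr_r(h\TT_{-a}r^{-8})$. With the sign you wrote, both of your regimes would actually give a negative answer; you do use the correct sign for $\AAt^1$.

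\textbf{The mechanism for $\AAt^2$ at large $r$.} It is not true that $-\pr_r(hr^{-5})$ is positive there. For large $r$, $hr^{-5}=\frac12\big(1-\frac{m^4}{r^4}\ln(\De/m^2)\big)$ increases to $\frac12$, so $-\pr_r(hr^{-5})\approx-\frac{m^4}{r^5}\big(4\ln(r/m)-1\big)<0$. Positivity of $\AA^2[z]$ at infinity is carried entirely by the $\de_0$-term in $\UUwtp^2=\frac{4a}{r^3}\big(1+\de_0m^2\frac{mr-a^2}{r^4}\big)$.

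Indeed, $h\UUwtp^2r^{-2}=2a\big(1-\frac{m^4}{r^4}\ln(\De/m^2)\big)\big(1+\de_0m^2\frac{mr-a^2}{r^4}\big)$ is nearly constant. Its derivative is dominated by $-6\de_0am^3r^{-4}$, whence $\AA^2[z]\approx6\de_0am^3r^{-2}=60am^3r^{-2}$. This beats both the $-O(am^4r^{-3}\ln r)$ contribution of the logarithm in $h$ and the correction $\frac12h(\cdots)^2\GG^2\approx-50am^4r^{-3}$.

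So the $\de_0$-correction dominates your ``leading part'', not the other way round. With $\de_0=0$ one would have $\AA^2[z]<0$ for large $r$, so the large-$r$ argument for $\AAt^2$ must be built around this term. Your asymptotics for $\AAt^1$ and $\AAt^3$ are fine.

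A minor point: near $r_+$, $h\approx-\frac12 rm^4\ln(\De/m^2)$, not $-\frac{m^4}{2r}\ln\De$. This does not affect any sign.
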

\begin{proof}
The proof of \eqref{eq:expr-I} and \eqref{eq:AAtz-identity} is a straightforward calculation from \eqref{eq:detailcoeffofEEwt}. Since $\GG^2=-2ar\leq0, \GG^3=-a^2\leq0$ and $\GG^4=\De\geq 0$, it follows that $\AA^2[z]\geq \AA^2,\ \AA^3[z]\geq \AAt^3$ and $\AAt^4\geq \AA^4[z]$. Finally, the positivity of $\AAt^1, \AAt^3,\AA^4[z], \AAt^2$ can be checked easily using Mathematica.

The proof of \eqref{eq:expr-J} and \eqref{eq:VVtz-identity} is a straightforward calculation from \eqref{eq:detailcoeffofEEwt}.
\end{proof}
\begin{remark}
 The strategy we pursue below, in the remainder of this section, is to replace the expressions $\Pt, \It$ with positive definite expressions $P, I$ plus divergence terms. The lower order terms $\Jt, \Kt$ are treated in section \ref{sec:proofofSSMor}.
\end{remark}

\subsection{Principal term $\Pt$}


We now consider the principal  term $\Pt= \widetilde{\UU}^{\a\b\aund\bund} \, \D_\a \psia \c \D_\b \psib$.

\begin{lemma}[Main Lemma 1]
\lab{lemma:IntegrationbypartsP}
Consider expressions of the form
\[
\CC^\cund  S^{\mu\nu}_\cund (\AA^\aund\pr_\mu \psia) (\BB^\bund\pr_\nu \psib)
\]
where $(\AA^\aund, \BB^\aund, \CC^\aund)_{\aund=1,\ldots,4}$ are $4$-tuples depending only on $r$.
We set, for $\YY\in\{\AA,\BB,\CC\} $
\[
\psi_\YY=\YY^\aund\psi_\aund, \qquad S^{\mu\nu}_\YY=\YY^\cund  S^{\mu\nu}_\cund
\]
With these notations we have
\beq
\CC^\cund  S^{\mu\nu}_\cund (\AA^\aund\pr_\mu \psia) (\BB^\bund\pr_\nu \psib)=   S^{\mu\nu}_\CC \pr_\mu  \psi_\AA  \pr_\nu \psi _\BB
\eeq
and
\beq
\bsplit
S^{\mu\nu}_\CC \pr_\mu  \psi_\AA  \pr_\nu \psi _\BB &=  S^{\mu\nu}_\BB  \pr_\nu \psi_\CC\pr_\mu   \psi_\AA+|q|^2\Div\big[\psi;\AA,\BB,\CC\big]\\
\Div\big[\psi;\AA,\BB,\CC\big]
&=\D_\mu \Big[  |q|^{-2} \psi_\AA  \Big( S_{\CC}^{\mu\nu}    \D_\nu \psi _\BB-  S_{\BB}^{\mu\nu}  \D_\nu \psi_\CC\Big)\Big]
\end{split}
\eeq

\end{lemma}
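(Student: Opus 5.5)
The plan has two steps. The first identity is pure linear algebra, and the second is an integration by parts that uses the commutation of the operators $\SS_\aund$. Throughout I use the notation of the lemma: $\psi_\YY=\YY^\aund\psi_\aund$ and $S^{\mu\nu}_\YY=\YY^\cund S^{\mu\nu}_\cund$.

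\textbf{Step 1: the first identity.} The tuples $\AA^\aund,\BB^\bund,\CC^\cund$ depend only on $r$. Moreover, $S^{\mu\nu}_\cund\pr_\mu$ has no $\pr_r$ component, since every $S_\cund$ is built from $T$, $\Zhat$ and $O^{\a\b}$, which involve only $\pr_t,\pr_\phi,\pr_\th$. Hence contracting with $S^{\mu\nu}_\cund$ kills any $r$-derivative of the coefficients. Concretely, $S^{\mu\nu}_\CC\,\AA^\aund\pr_\mu\psia=S^{\mu\nu}_\CC\,\pr_\mu\psi_\AA$, and likewise for $\BB$. Summing over the indices then gives the first identity directly.

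\textbf{Step 2: swapping $\CC$ and $\BB$.} I expand $\D_\mu\big[|q|^{-2}\psi_\AA(S_\CC^{\mu\nu}\D_\nu\psi_\BB-S_\BB^{\mu\nu}\D_\nu\psi_\CC)\big]$ by the Leibniz rule. When the derivative falls on $\psi_\AA$, it produces exactly $|q|^{-2}\big(S^{\mu\nu}_\CC\pr_\mu\psi_\AA\pr_\nu\psi_\BB-S^{\mu\nu}_\BB\pr_\mu\psi_\AA\pr_\nu\psi_\CC\big)$, using the symmetry of $S$. The remaining terms are
\[
|q|^{-2}\psi_\AA\Big(|q|^2\D_\mu\big(|q|^{-2}S^{\mu\nu}_\CC\D_\nu\psi_\BB\big)-|q|^2\D_\mu\big(|q|^{-2}S^{\mu\nu}_\BB\D_\nu\psi_\CC\big)\Big).
\]
These must vanish. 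Again because $\CC^\cund$ and $\BB^\bund$ depend only on $r$ and $\D_\mu S^{\mu\nu}_\cund$ carries no $r$-component, the coefficients pass through the divergence. The bracket therefore becomes $\CC^\cund\BB^\bund\big(\SS_\cund\psi_\bund-\SS_\bund\psi_\cund\big)=\CC^\cund\BB^\bund\big(\SS_\cund\SS_\bund-\SS_\bund\SS_\cund\big)\psi$. This vanishes by $[\SS_\aund,\SS_\bund]=0$ from Definition \ref{definition:tensors-S}, which remains true for the modified operators \eqref{eq:newSS} since $\Zhat=Z+aT$ is Killing. Multiplying through by $|q|^2$ yields the stated identity.

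\textbf{Main obstacle.} The delicate point is the claim that $r$-dependent coefficients commute with the divergence $|q|^2\D_\mu(|q|^{-2}S^{\mu\nu}\cdot)$. I would verify it in BL coordinates using $\sqrt{|g|}=|q|^2\sin\th$. This gives $|q|^2\D_\mu(|q|^{-2}V^\mu)=\frac{1}{\sin\th}\pr_\mu(\sin\th V^\mu)$ for vectorfields $V$. Since $V^r=S^{r\nu}\pr_\nu(\cdot)=0$ for all four tensors, no $\pr_r$ ever hits $\CC^\cund(r)$.
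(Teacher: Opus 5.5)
Your proposal is correct and takes essentially the same route as the paper. Both arguments rest on $S^{r\mu}_\cund=0$, which lets the $r$-dependent coefficients pass through $\pr_\mu$ and through $\SS_\aund$, and on the commutation identity $\SS_\CC\psi_\BB=\SS_\BB\psi_\CC$. The only difference is cosmetic: the paper derives the divergence term by two successive integrations by parts, whereas you verify it by expanding the claimed divergence with the Leibniz rule.
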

\begin{proof}
Recall that $\psia:=\SS_\aund\psi=|q|^2\D_\mu(|q|^{-2} S^{\mu\nu}_\aund\D_\nu \psi)$ with the tenors $S_{\aund}^{\mu\nu}$ introduced in Definition
\ref{definition:tensors-S}. Since $S^{r\mu}_\cund=0$ we have, for any function $f$ which depends only on $r$,
\[
\SS_\aund( f(r)\psi)= f(r) \SS_\aund \psi.
\]
Since $\AA^{\aund}, \BB^{\aund}$ depend only on $r$
we can write $ S^{\mu\nu}_\cund (\AA^\aund\pr_\mu \psia) (\BB^\bund\pr_\nu \psib)= S^{\mu\nu}_\cund\pr_\mu \psi_\AA\pr_\nu \psi_\BB$. Thus $\CC^\cund S^{\mu\nu}_\cund (\AA^\aund\pr_\mu \psia) (\BB^\bund\pr_\nu \psib)=S_\CC^{\mu\nu }\pr_\mu \psi_\AA \pr_\nu \psi_\BB$ as stated in the first identity.

Since the $\SS$ operators commute we easily check that
\[
\SS_\CC \psi_\BB=\CC^\cund \SS_\cund (\SS_\BB\psi)=
\CC^\cund \SS_\cund \big(\BB^\dund \SS_\dund \psi\big)=  \CC^\cund \BB^\dund  \SS_\cund\SS_\dund\psi=  \CC^\cund \BB^\dund  \SS_\dund\SS_\cund\psi=\SS_\BB \psi_\CC.
\]
Using these remarks we write
\begin{align*}
|q|^{-2} S_{\CC}^{\mu\nu} \pr_\mu  \psi_\AA  \pr_\nu \psi _\BB
&=-
\psi_\AA \D_\mu\Big(  |q|^{-2} S_{\CC}^{\mu\nu}   \D_\nu \psi _\BB\Big)+ \D_\mu \Big( |q|^{-2} S_{\CC}^{\mu\nu}   \psi_\AA  \D_\nu \psi _\BB\Big)\\
&=- |q|^{-2}\psi_\AA \SS_\CC\psi_\BB+  \D_\mu \Big( |q|^{-2} S_{\CC}^{\mu\nu}   \psi_\AA  \pr_\nu \psi _\BB\Big)\\
&=- |q|^{-2}\psi_\AA \SS_\BB\psi_\CC+  \D_\mu \Big( |q|^{-2} S_{\CC}^{\mu\nu}   \psi_\AA  \pr_\nu \psi _\BB\Big)
\end{align*}
Similarly,
\[
|q|^{-2}\psi_\AA\SS_\BB \psi_\CC=\psi_\AA  \D_\mu\Big(  |q|^{-2} S_{\BB}^{\mu\nu}   \D_\nu \psi_\CC \Big)=- |q|^{-2} S_{\BB}^{\mu\nu}   \pr_\nu \psi_\CC\pr_\mu   \psi_\AA+\D_\mu \Big(|q|^{-2} \psi_\AA S_{\BB}^{\mu\nu}   \pr_\nu \psi_\CC\Big).
\]
Therefore,
\[
|q|^{-2} S_{\CC}^{\mu\nu} \pr_\mu  \psi_\AA  \pr_\nu \psi _\BB= |q|^{-2} S_{\BB}^{\mu\nu}   \pr_\nu \psi_\CC\pr_\mu   \psi_\AA+\Div[\psi;\AA,\BB,\CC]
\]
where
\begin{align*}
\Div[\psi;\AA,\BB,\CC]&=  \D_\mu \Big( |q|^{-2} S_{\CC}^{\mu\nu}   \psi_\AA  \pr_\nu \psi _\BB\Big)- \D_\mu \Big(|q|^{-2} \psi_\AA S_{\BB}^{\mu\nu}   \pr_\nu \psi_\CC\Big)\\
&=\D_\mu \Big[  |q|^{-2}   \psi_\AA\Big( S_{\CC}^{\mu\nu}   \pr_\nu \psi _\BB-  S_{\BB}^{\mu\nu}   \pr_\nu \psi_\CC\Big)\Big]
\end{align*}
as stated.
\end{proof}

As a corollary of the above Lemma we deduce that our principal trapping term
\[
\Pt= \frac 12 h \RRtp^\cund[z]S^{\a\b}_\cund ( \UUwtp^\aund   \D_\a \psia)  \c  (\LL^\bund \D_\b  \psib)+\frac12h\c(\frac12\de_0 m^2\RRtp^4[z])^2\frac{\GG^{\cund}}{r^4}S_{\cund}^{\a\b}(\frac{\GG^{\aund}}{r^4}\D_\a\psia)\c(\LL^{\bund}\D_\b\psib)
\]
can be  written in the form $ \Pt=P+|q|^2 \Div\, \Bk_P$ where $P$ is nonnegative definite.
\begin{definition}
\lab{Def:acceptable}
A quadratic expression $\BB^\mu $ in $\pr^{\le 1} \psi_\aund$ is called an acceptable boundary term if $
\BB^\mu\les\sum|\ln(r-r_+)||(e_4,\De e_3, \nab)^{\leq 1}\psia|^2$ near $r=r_+$ and $
\BB^\mu\les\sum|(e_4, e_4,\nab)^{\leq 1}\psia|^2$ near $r=\infty$. 
 By abuse of language
we will also say that the spacetime divergence $\Div \BB$ is acceptable.
\end{definition}\begin{proposition}
\lab{Prop:positivityP}
The principal trapping term can be written in the form
\beq
\lab{eq:expression1-P}
\Pt=\frac 1 2h \LL^\cund S_\cund^{\a\b}\pr_\a  \Psi_1\pr_\b\Psi_1 +\frac 1 2\de_0 m^2 zh \LL^\cund S_\cund^{\a\b}\pr_\a  \Psi_2\pr_\b\Psi_2+|q|^2 \Div \Bk_P,
\eeq
where
\begin{enumerate}
\item The scalars $\Psi _1,\Psi_2$ are given by, see \eqref{components-RR-aund_new} and \eqref{eq:valuesforRRtpz},
\beq
\lab{eq:DefPsi-P}
\bsplit
\Psi_1&:=(\RRtp^\aund[z]-\frac12\delta_0 m^2\RRtp^4[z]\frac{\GG^{\aund}}{r^4})\psia\\
&\ =-\de_0 m^2\frac{\TT_{-a}}{r^6}TT\psi+\frac{4a}{r^3}(1-\de_0m^2\frac{\TT_{-a}}{2r^5})T\Zhat\psi+\frac{4a^2}{r^5}(1-\de_0m^2\frac{\TT_{-a}}{4r^5})\Zhat\Zhat\psi\\
&\ -(1-\frac12\de_0 m^2z)\frac{2\TT_{-a}}{r^6}\OO\psi,\\
\Psi_2&:=\RRtp^\aund[z]\psia=\frac{4a}{r^3}T\Zhat\psi+\frac{4a}{r^5}\Zhat\Zhat\psi-\frac{2\TT_{-a}}{r^6}\OO\psi.
\end{split}
\eeq
\item The boundary term $\Bk_P$ is acceptable in the sense of Definition \ref{Def:acceptable}.
\item Choosing,
\beq
\lab{eq:remark:choiceofLL-weak}
\LL^{1} =m^2, \qquad    \LL^{2}=0,  \qquad \LL^3=1, \qquad \LL^4=0,
\eeq
we  rewrite \eqref{eq:expression1-P} in the form  $\Pt=P+|q|^2  \Div \,\Bk_P$ where
\beq
\lab{eq:expression2-PS}
\bsplit
P&=\frac 1 2  h \Big( \big|  m T  \Psi_1 \big|^2 +|\Zhat\Psi_1|^2+\de_0m^2z\big|  m T  \Psi_2 \big|^2 +\de_0m^2z|\Zhat\Psi_2|^2\Big).
\end{split}
\eeq
\end{enumerate}
\end{proposition}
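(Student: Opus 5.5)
The plan is to apply Main Lemma 1 (Lemma \ref{lemma:IntegrationbypartsP}) twice, once to each of the two pieces of $\Pt$. First I record the algebra needed to isolate $\Psi_1$ and $\Psi_2$. From \eqref{eq:detailcoeffofEEwt} and $\UUwtp^\aund=(1+\de_0m^2z)\RRtp^\aund[z]-\de_0m^2\RRtp^4[z]\GG^\aund r^{-4}$, the coefficient tensor of $\Pt$ is
\[
\tfrac12 h\Big(\UUwtp^{(\aund}\LL^{\bund)}\RRtp^\cund[z]+\big(\tfrac12\de_0m^2\RRtp^4[z]\big)^2 r^{-8}\GG^{(\aund}\LL^{\bund)}\GG^\cund\Big)S_\cund^{\a\b}.
\]
The key identity is that, as a symmetric bilinear expression in the $4$-tuples $\RRtp[z]$ and $\GG r^{-4}$,
\[
\UUwtp^\aund\RRtp^\cund+(\tfrac12\de_0m^2\RRtp^4)^2r^{-8}\GG^\aund\GG^\cund=\Big(\RRtp^\aund-\tfrac12\de_0m^2\RRtp^4\tfrac{\GG^\aund}{r^4}\Big)\Big(\RRtp^\cund-\tfrac12\de_0m^2\RRtp^4\tfrac{\GG^\cund}{r^4}\Big)+\de_0m^2z\,\RRtp^\aund\RRtp^\cund
\]
plus an antisymmetric combination in $(\aund,\cund)$ of the form $\tfrac12\de_0m^2\RRtp^4 r^{-4}(\RRtp^\aund\GG^\cund-\GG^\aund\RRtp^\cund)$. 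Expanding, the cross terms $-\tfrac12\de_0m^2\RRtp^4r^{-4}(\RRtp^\aund\GG^\cund+\GG^\aund\RRtp^\cund)$ together with this antisymmetric piece reproduce $-\de_0m^2\RRtp^4r^{-4}\GG^\aund\RRtp^\cund$. I would check this directly; it is the reason $\UUwtp$ and $w_\flat$ were defined as they are.

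Next I dispose of the antisymmetric part. Swapping the roles of $\aund$ and $\cund$ is exactly what Main Lemma 1 permits modulo a divergence. Taking $\AA=\RRtp[z]$ or $\GG r^{-4}$, $\BB=\LL$ and $\CC$ the other tuple, the lemma gives $S_\CC\pr\psi_\AA\pr\psi_\LL=S_\LL\pr\psi_\CC\pr\psi_\AA+|q|^2\Div$. For $\AA=\CC$ this turns $\tfrac12 h\,\CC^\aund\CC^\cund\LL^\bund S_\cund\pr\psia\pr\psib$ into $\tfrac12 h\LL^\cund S_\cund\pr\Psi\pr\Psi$ with $\Psi=\CC^\aund\psia$. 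The antisymmetric piece yields $S_\LL(\pr\psi_\CC\pr\psi_\AA-\pr\psi_\AA\pr\psi_\CC)=0$ after the same swap, leaving only divergences. In doing this the $r$-dependent weights $h$, $\RRtp^4$ and $z$ must be pulled inside the divergence. Since $S_\cund^{r\mu}=0$, derivatives of radial weights contract to zero against $S$, so the weights pass through freely; I would include them in the lemma's $4$-tuples, where only $r$-dependence is required. With $\CC=\RRtp-\tfrac12\de_0m^2\RRtp^4\GG r^{-4}$ this gives the $\Psi_1$ term, and the $\de_0m^2z\RRtp\RRtp$ term gives the $\Psi_2$ term. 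The explicit formulas \eqref{eq:DefPsi-P} then follow by substituting \eqref{eq:valuesforRRtpz} and \eqref{components-RR-aund_new}.

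For part 2, $\Bk_P$ has the form $h\times$(radial weights)$\times|q|^{-2}\psi_\AA S^{\mu\nu}\pr_\nu\psi_\BB$. Here $S^{\mu\nu}$ contains only $T,\Zhat$ and $O$ directions, which are combinations of $e_4,\De e_3$ (through $T$) and $\nab$. Near $r_+$ the weights are bounded, since $h\sim\ln\De$ because of the log in \eqref{eq:defofhSS}, so $\Bk_P\les|\ln(r-r_+)||(e_4,\De e_3,\nab)^{\le1}\psia|^2$. At infinity I would verify that the powers of $r$ in $h\,\RRtp\,\RRtp$ ($\sim r^5r^{-6}r^{-3}$, etc.) are nonpositive. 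For part 3, plugging $\LL=(m^2,0,1,0)$ into $\LL^\cund S_\cund^{\a\b}$ gives $m^2T^\a T^\b+\Zhat^\a\Zhat^\b$, which yields \eqref{eq:expression2-PS}; nonnegativity of $P$ is then immediate, with $h>0$ for $r>r_+$ and $z\ge0$.

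The main obstacle is the first step: verifying that the cross terms in $\UUwtp\otimes\RRtp$ and the $w_\flat$ contribution combine exactly into two perfect squares up to antisymmetric pieces. I would do this bookkeeping first, writing everything in terms of $\RRtp^\aund$ and $\GG^\aund r^{-4}$ and treating $\de_0m^2\RRtp^4$ as a radial scalar.
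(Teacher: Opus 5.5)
Your proposal is correct and is essentially the paper's argument: Lemma \ref{lemma:IntegrationbypartsP} moves the $S$-index onto the $\LL$ slot, and then you complete the square in $\psi_{\RRtp}$ and $\psi_{\GG}$. The only difference is the order of steps: the paper applies the lemma first (with $\AA=\UUwtp$, $\CC=\RRtp$, $\BB=\frac12 h\LL$, and with $\AA=\CC=\frac12\de_0m^2\RRtp^4[z]\GG r^{-4}$), after which the form $S_\LL^{\mu\nu}\pr_\mu(\cdot)\pr_\nu(\cdot)$ is already symmetric and your antisymmetric remainder never appears.

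Two small remarks. Substituting $\RRtp^3[z]=4a^2r^{-5}$ gives $\frac{4a^2}{r^5}\Zhat\Zhat\psi$ in $\Psi_2$, so the $\frac{4a}{r^5}$ in the statement is a typo. Also, near $r_+$ the weight $h$ is not bounded but $O(|\ln(r-r_+)|)$, which is precisely what Definition \ref{Def:acceptable} permits.
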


\begin{proof}
Recall that
\[
\Pt=\frac 12 h \RRtp^\cund\SS^{\a\b}_\cund ( \UUwtp^\aund   \D_\a \psia)  \c  (\LL^\bund \D_\b  \psib)+\frac12h\c(\frac12\de_0 m^2\RRtp^4[z])^2\frac{\GG^{\cund}}{r^4}S_{\cund}^{\a\b}(\frac{\GG^{\aund}}{r^4}\D_\a\psia)\c(\LL^{\bund}\D_\b\psib).
\]
Then applying Lemma \ref{lemma:IntegrationbypartsP} to $\AA^\aund=\UUwtp^\aund, \CC^\aund=\RRtp^\aund[z], \BB^\aund=\frac12h\LL^\aund$ and $\AA^\aund=\CC^\aund=\frac12\de_0 m^2\RRtp^4[z]\frac{\GG^\aund}{r^4}, \BB^\aund=\frac12h\LL^\aund$ for the first and the second term above respectively, we obtain
\[
\Pt=\frac12h\LL^\cund S_{\cund}^{\a\b}\D_\a\psi_{\RRtp}\D_\b\psi_{\UUwtp}+\frac12 h\LL^{\cund}S_{\cund}^{\a\b}\D_\a\psi_{(\frac{\de_0 m^2\RRtp^4\GG}{2r^4})}\D_\b\psi_{(\frac{\de_0 m^2\RRtp^4\GG}{2r^4})}+|q|^2\Div\Bk_P
\]
We recall that $\UUwtp^\aund= (1+\de_0 m^2z)\GGtp^{\aund}[z]-\delta_0m^2\GGtp^4[z]\frac{\GG^{\cund}}{r^4}$. Therefore
\begin{align*}
\Pt&=\frac12h\LL^\cund S_{\cund}^{\a\b}\D_\a\psi_{\RRtp}\D_\b\psi_{\RRtp}-\frac12h\c 2\LL^\cund S_{\cund}^{\a\b}\D_\a\psi_{\RRtp}\D_\b\psi_{(\frac{\de_0 m^2\RRtp^4\GG}{2r^4})}\\
&+\frac12 h\LL^{\cund}S_{\cund}^{\a\b}\D_\a\psi_{(\frac{\de_0 m^2\RRtp^4\GG}{2r^4})}\D_\b\psi_{(\frac{\de_0 m^2\RRtp^4\GG}{2r^4})}
+\frac12\de_0 m^2 zh\LL^\cund S_{\cund}^{\a\b}\D_\a\psi_{\RRtp}\D_\b\psi_{\RRtp}+|q|^2\Div\Bk_P\\
&=\frac12 h\LL^{\cund}S_{\cund}^{\a\b}\D_\a\psi_{(\RRtp-\frac{\de_0 m^2\RRtp^4\GG}{2r^4})}\D_\b\psi_{(\RRtp-\frac{\de_0 m^2\RRtp^4\GG}{2r^4})}
+\frac12\de_0 m^2 zh\LL^\cund S_{\cund}^{\a\b}\D_\a\psi_{\RRtp}\D_\b\psi_{\RRtp}+|q|^2\Div\Bk_P
\end{align*}
where $\Bk_P$ is acceptable in the sense of Definition \ref{Def:acceptable}.
This proves \eqref{eq:expression1-P} and \eqref{eq:DefPsi-P}.
\end{proof}

Later on, in the analysis of the lower order term, we will need the following more precise characterization of $P$.
\begin{proposition}
\lab{Prop:refinedP}
Assume that\footnote{After subtracting the axisymmetric part of $\psi$, the non-axisymmetric part of $\psi$ satisfies this assumption. As discussed in section \ref{sec:intro-axisym} in the introduction, the Morawetz-Energy estimate can be derived for the axisymmetric part of $\psi$ for the full sub-extremal case.}
\begin{align*}
&\int_{\DD(\tau_1,\tau_2)}\frac{f(r)}{|q|^2}|Z\psi|^2\geq \int_{\DD(\tau_1,\tau_2)} \frac{f(r)}{|q|^2}|\psi|^2,\\ &\int_{\DD(\tau_1,\tau_2)}\frac{f(r)}{|q|^2}(|\pr_\th\psi|^2+\frac{1}{\sin^2\th}|Z\psi|^2)\geq \int_{\DD(\tau_1,\tau_2)} \frac{2f(r)}{|q|^2}|\psi|^2
\end{align*}
 where $f(r)$ is any nonnegative function depending only on $r$.
 \begin{enumerate}
\item For $|a|/m<1$ and for all $r\geq r_+$, we have
\beq\lab{eq:TZhatPsi}
\bsplit
\int_{\DD(\tau_1,\tau_2)}\frac{1}{|q|^2}(\widetilde{\VV}^2T\Zhat\psi\c\psi_{\LL}+P)+\Div \Bk_{P_{c}}
\geq \int_{\DD(\tau_1,\tau_2)}\frac{1}{|q|^2}(\JtP_{c}^\aund\psia\c\psi_{\LL})
\end{split}
\eeq
where $\JtP_{c}^2=0$
\beq\lab{eq:JtPc}
\bsplit
\JtP_{c}^1&=\de_0m^2\frac{r^2-3mr+2a^2}{4ar^2}\frac{1}{1-\de_0m^2\frac{\TT_{-a}}{2r^5}}\widetilde{\VV}^2-\frac{(r^3\widetilde{\VV}^2)^2}{32(1-\de_0m^2\frac{\TT_{-a}}{2r^5})^2h},\\
\JtP_{c}^3&=-\frac{a}{r^2}\frac{1-\de_0m^2\frac{\TT_{-a}}{4r^5}}{1-\de_0m^2\frac{\TT_{-a}}{2r^5}}\widetilde{\VV}^2,\\
\JtP_{c}^4&=\frac{r^2-3mr+2a^2}{2ar^2}\frac{1-\frac12\de_0m^2z}{1-\de_0m^2\frac{\TT_{-a}}{2r^5}}\widetilde{\VV}^2-\frac{(r^3\widetilde{\VV}^2)^2}{32a^2(1-\de_0m^2\frac{\TT_{-a}}{2r^5})^2h}
\end{split}
\eeq
and $\Bk_{P_{c}}$ is acceptable in the sense of Definition \ref{Def:acceptable}.
\item For $|a|/m\leq 0.75$ and\footnote{Note that this region is outside the trapping set where we expect some positive coercivity for $P$.} $r\geq r_++3.5m$, we have
\beq\lab{eq:Psi1larger}
\bsplit
\int_{\DD(\tau_1,\tau_2)}\frac{1}{|q|^2}P+\Div \Bk_{P_{f}}&\geq \int_{\DD(\tau_1,\tau_2)}\JtP_{f}^\aund\psia\c\psi_{\LL}
\end{split}
\eeq
where $\JtP_{f}^2=\JtP_{f}^3=0$ and
\beq\lab{eq:JtPf}\bsplit
\JtP^1_{f}&=\frac12h(\de_0-\frac{5ar^3}{m\TT_{-a}(1-\frac{5am}{r^2})})\Big((1-\frac12\de_0m^2z)-\frac{2ar^3}{5m\TT_{-a}}\Big)\frac{\TT^2_{-a}}{r^{12}},\\
\JtP_{f}^4&=\frac14h\big((1+\sqrt{7})(1-\frac12\de_0m^2z)-(3+\sqrt{7})\frac{2ar^3}{5m\TT_{-a}}\big)\big((1-\frac12\de_0m^2z)-\frac{2ar^3}{5m\TT_{-a}}\big)\frac{\TT^2_{-a}}{r^{12}}.\end{split}
\eeq
and $\Bk_{P_{f}}$ is acceptable in the sense of Definition \ref{Def:acceptable}.
\end{enumerate}
\end{proposition}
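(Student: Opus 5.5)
Both parts rest on one mechanism: the $\Psi_1$-square in $P$ already carries a $T\Zhat\psi$ component, so the indefinite cross term can be traded against it. The leftover pieces are then handled with the commuted identity of Lemma \ref{lemma:IntegrationbypartsP} and the angular Poincaré-type assumptions.

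\textbf{Part 1.} By \eqref{eq:DefPsi-P},
\[
T\Zhat\psi=\frac{r^3}{4a\,c_2}\Big(\Psi_1+\de_0m^2\frac{\TT_{-a}}{r^6}TT\psi-\frac{4a^2}{r^5}c_3\Zhat\Zhat\psi+\frac{2\TT_{-a}}{r^6}c_4\OO\psi\Big),
\]
where $c_2=1-\de_0m^2\frac{\TT_{-a}}{2r^5}$, $c_3=1-\de_0m^2\frac{\TT_{-a}}{4r^5}$ and $c_4=1-\frac12\de_0m^2z$. I first check that $c_2>0$ for $r\ge r_+$; this is elementary, since $\TT_{-a}/r^5\le 1/r^2$. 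Substituting into $\VVt^2 T\Zhat\psi\c\psi_\LL$, with $\psi_\LL=m^2TT\psi+\Zhat\Zhat\psi$, produces three contributions:
\begin{enumerate}
\item the term $\frac{r^3\VVt^2}{4ac_2}\Psi_1\c\psi_\LL$;
\item terms in $TT\psi$, $\Zhat\Zhat\psi$, $\OO\psi$ paired with $\psi_\LL$, whose coefficients are exactly the linear parts of $\JtP_c^1$, $\JtP_c^3$, $\JtP_c^4$;
\item nothing in the $\aund=2$ slot, which gives $\JtP_c^2=0$.
\end{enumerate}
For the $\Psi_1$ term I integrate by parts in $T$ and $\Zhat$. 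Since the coefficients depend only on $r$ and $T,\Zhat$ are Killing, $\Psi_1\c TT\psi$ becomes $-T\Psi_1\c T\psi$ plus a divergence, and similarly for $\Zhat$. These boundary terms are acceptable. I then absorb with Cauchy--Schwarz against $\frac12h\big(|mT\Psi_1|^2+|\Zhat\Psi_1|^2\big)$ from $P$, paying
\[
\frac{(r^3\VVt^2)^2}{32c_2^2h}\Big(m^2|T\psi|^2+|\Zhat\psi|^2\Big).
\]
This is where the Poincaré-type hypotheses enter. Applied in the form $|Z\psi|\ge|\psi|$ at the level of $\Zhat$-derivatives, and integrating by parts back, these first-order squares are converted into the pairings $-\frac{(r^3\VVt^2)^2}{32c_2^2h}\,TT\psi\c\psi_\LL$ and the analogous $\OO\psi$ term. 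The $1/a^2$ in $\JtP_c^4$ comes from relating $|\Zhat\psi|^2$ to $|\nab\psi|^2$ and hence to $-\OO\psi\c\psi$.

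\textbf{Part 2.} For $r\ge r_++3.5m$ I have $\TT_{-a}>0$ and of size $\sim r^3$. Here I rewrite $\Psi_2$ as a combination of $\Psi_1$, $TT\psi$, $\Zhat\Zhat\psi$ and $\OO\psi$, so that the two squares in $P$ are expanded in the basis $(TT\psi,\OO\psi)$ plus the $a$-weighted cross terms. I apply Young's inequality with a weight tuned to produce the constants $\de_0-\frac{5ar^3}{m\TT_{-a}(\ldots)}$ and $(1+\sqrt7)$, $(3+\sqrt7)$. I expect these constants arise from optimizing a $2\times2$ quadratic form whose discriminant involves $\sqrt7$. The $\Zhat\Zhat$ pieces are discarded using their nonnegative sign after integration by parts, which gives $\JtP_f^2=\JtP_f^3=0$. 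The restriction $|a|/m\le0.75$ is used to keep $\frac{2ar^3}{5m\TT_{-a}}<c_4$ in this range, so that the coefficients are positive.

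\textbf{Main obstacle.} The hard step is the bookkeeping in the Cauchy--Schwarz and Young steps: producing exactly the stated coefficients and checking the positivity of $c_2$ and of the weights over the whole $r$-range. I would verify these sign conditions symbolically, as the paper does elsewhere with Mathematica, rather than by hand.
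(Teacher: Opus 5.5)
\textbf{Part 2 has a genuine gap.} Your $P=\frac12h\big(|mT\Psi_1|^2+|\Zhat\Psi_1|^2\big)+\frac12\de_0m^2zh\big(|mT\Psi_2|^2+|\Zhat\Psi_2|^2\big)$ is built from first derivatives of the second-order quantities $\Psi_1,\Psi_2$, so it is third order in $\psi$. The target $\JtP_{f}^{\aund}\psia\c\psi_\LL$ is second order times second order. Re-expanding the squares of $P$ in the basis $(TT\psi,\OO\psi)$ and then applying Young cannot bridge this gap; a derivative has to be dropped first. The paper does this using $|a|\le m$ and the first angular hypothesis applied to $\Psi_1$:
\[
\int\frac{P}{|q|^2}\ \ge\ \int\frac{h}{2|q|^2}\big(|mT\Psi_1|^2+|\Zhat\Psi_1|^2\big)\ \ge\ \int\frac{h}{4|q|^2}|Z\Psi_1|^2\ \ge\ \int\frac{h}{4|q|^2}|\Psi_1|^2 .
\]
This is where the prefactor $\frac14h$ in $\JtP_{f}^{1},\JtP_{f}^{4}$ comes from; the $\Psi_2$-square is simply discarded.

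Even at the level of $|\Psi_1|^2$, you give no mechanism for removing the $T\Zhat\psi$ component, although $\JtP_{f}^{2}=0$. The cross terms $T\Zhat\psi\c TT\psi$, $T\Zhat\psi\c\Zhat\Zhat\psi$, $T\Zhat\psi\c\OO\psi$ are sign-indefinite, and neither Young nor a sign argument disposes of them. The paper splits
\[
\Psi_1=-\Big(\de_0-\frac{5ar^3}{m\TT_{-a}(1-\frac{5am}{r^2})}\Big)\frac{\TT_{-a}}{r^6}m^2TT\psi-\Big(\big(1-\tfrac12\de_0m^2z\big)\frac{\TT_{-a}}{r^6}-\frac{2a}{5mr^3}\Big)\big(\Zhat\Zhat\psi+\OO\psi\big)+\Psi_3 ,
\]
where $\Psi_3=\Psi_3^{\aund}\psia$ carries the $T\Zhat\psi$ term together with a compensating $-\frac{5am}{r^3(1-5am/r^2)}TT\psi$. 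It then checks that the tensor $-\Psi_3^{\aund}S_{\aund}^{\mu\nu}$ is nonnegative. This is a $2\times 2$ determinant check after writing $O^{\mu\nu}$ through $T$, $\Zhat$, $|q|e_1$. That check, together with positivity of the two weights, is where $r\ge r_++3.5m$, $|a|/m\le 0.75$ and $\de_0=10$ enter.

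The cross term with $\Psi_3$ is then nonnegative modulo an acceptable divergence by Lemmas \ref{lemma:IntegrationbypartsI} and \ref{Lem:sumofsquares}. A second split, $\Psi_3=-\big((1-\frac12\de_0m^2z)\frac{\TT_{-a}}{r^6}+\frac{2a}{5mr^3}\big)(\OO\psi-\Zhat\Zhat\psi)+\Psi_4$ with $-\Psi_4^{\aund}S_{\aund}^{\mu\nu}\ge0$, handles the remaining square. Only after that does $\frac74x^2+y^2\ge\sqrt7\,xy$ produce the factors $1+\sqrt7$ and $3+\sqrt7$.

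\textbf{Part 1} follows the paper's route: solve for $T\Zhat\psi$ through $\Psi_1$, integrate by parts in $T,\Zhat$, absorb into $\frac12h(|mT\Psi_1|^2+|\Zhat\Psi_1|^2)\le P$, then use an angular Poincar\'e step. Your linear parts of $\JtP_{c}^{1},\JtP_{c}^{3},\JtP_{c}^{4}$ are correct, but the quadratic parts are not derived correctly.

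The coefficient of $\Psi_1\c\psi_\LL$ is $\frac{r^3\VVt^2}{4ac_2}$, so Cauchy--Schwarz costs $\frac{(r^3\VVt^2)^2}{32a^2c_2^2h}\big(|mT\psi|^2+|\Zhat\psi|^2\big)$; you dropped the $a^{-2}$. The conversion uses the \emph{second} hypothesis for $f\in\{mT\psi,\Zhat\psi\}$, with $\pr_\th=|q|e_1$ and $\frac{Z}{\sin\th}=|q|e_2-a\sin\th\,T$:
\[
\int\frac{C(r)}{|q|^2}|f|^2\le\int\frac{C(r)}{|q|^2}\Big(O^{\a\b}\D_\a f\D_\b f+a^2|Tf|^2\Big).
\]
After integration by parts this yields $-\frac{(r^3\VVt^2)^2}{32a^2c_2^2h}\big(a^2TT\psi+\OO\psi\big)\c\psi_\LL$. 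So the $a^2$ from the $T$-component of $Z/\sin\th$ cancels the $a^{-2}$ in $\JtP_{c}^{1}$, while the $a^{-2}$ survives in $\JtP_{c}^{4}$. The $a^{-2}$ is not produced by relating $|\Zhat\psi|^2$ to $|\nab\psi|^2$, and both $|mT\psi|^2$ and $|\Zhat\psi|^2$ feed the $\OO$ term.

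Using the first hypothesis $\int|Zf|^2\ge\int|f|^2$ instead costs a factor $2$, coming from the cross term in $Z=-a\sin^2\th\,T+|q|\sin\th\,e_2$. It therefore does not reproduce the stated coefficients, which matter because they feed the Riccati inequalities.

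Finally, $\TT_{-a}/r^5\le r^{-2}$ only gives $c_2\ge1-5m^2/r^2$, which is negative for $r<\sqrt5\,m$. Use instead $\TT_{-a}\le0$ on $[r_+,r_2]$ and $\TT_{-a}\le r(r-m)(r-2m)$ beyond; in fact only $c_2\neq0$ is needed.
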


\begin{proof}
See appendix \ref{sec:refinedP}.
\end{proof}



\subsection{The quadratic form $\It$}
Recall that the term $\It$ in the decomposition \eqref{eq:separation-EE-I-J-K} has the form
\[
\It= \widetilde{\AA}^{\aund\bund}  \pr_r\psia\c \pr_r\psib=\big(\widetilde{\AA}^\aund\pr_r \psia\big)\c \big( \LL^{\aund}\pr_r \psia\big) .
\]

\begin{lemma}[Main Lemma 2]
\lab{lemma:IntegrationbypartsI}
Consider expression of the form
\[
\AA^\bund\psib\c\BB^\aund\psia
\]
where $(\AA^\aund, \BB^\aund)_{\aund=1,\ldots,4}$ are $4$-tuples depending only on $r$ and $\BB^4=0$. With the notations as in Lemma \ref{lemma:IntegrationbypartsP}, we have
\beq
\bsplit
\AA^\bund\psib\c\BB^\aund\psia&= S_\BB^{\a\b} S_\AA^{\mu\nu}\pr_\mu\pr_\a\psi\c \pr_\nu\pr_\b\psi+|q|^2\Div[\psi;\AA,\BB]\\
\Div[\psi;\AA,\BB]&=\D_\mu\left[|q|^{-2}\Big(\psi_\BB S_{\AA}^{\mu\nu}\pr_\nu\psi-\frac12S_{\BB}^{\mu\nu}\D_{\nu}\big(S_{\AA}^{\a\b}\pr_\a\psi\pr_\b\psi\big)\Big)\right].
\end{split}
\eeq
\end{lemma}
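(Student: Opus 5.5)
The plan is to use the Killing structure. When $\BB^4=0$, the operator $\SS_\BB$ is built only from the commuting Killing fields $T$ and $\Zhat$. Because of this, every integration by parts in the $\BB$-variables is an exact divergence with no error terms. The $\SS_\AA$ factor is the only one that involves $\OO$, and it is handled by a single integration by parts.

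\emph{Step 1: integrate by parts in $\SS_\AA$.} As in the proof of Lemma~\ref{lemma:IntegrationbypartsP}, $S^{r\mu}_\cund=0$ and $\AA^\aund$ depends only on $r$. Hence
\[
\psi_\AA=\AA^\aund\SS_\aund\psi=|q|^2\D_\mu\big(|q|^{-2}S_\AA^{\mu\nu}\pr_\nu\psi\big).
\]
Multiplying by $\psi_\BB$ and applying the Leibniz rule gives
\[
\psi_\BB\,\psi_\AA=|q|^2\D_\mu\big(|q|^{-2}\psi_\BB S_\AA^{\mu\nu}\pr_\nu\psi\big)-S_\AA^{\mu\nu}\pr_\mu\psi_\BB\,\pr_\nu\psi .
\]
The first term on the right is the first divergence term in the statement.

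\emph{Step 2: commute derivatives in $\psi_\BB$.} Since $\BB^4=0$, we can write $S_\BB^{\a\b}=\sum \BB^{\cund}(r)\,X_\cund^{(\a}Y_\cund^{\b)}$ with $X_\cund,Y_\cund\in\{T,\Zhat\}$. Both $T$ and $\Zhat=\pr_\phi+a\pr_t$ are constant-coefficient coordinate fields in BL coordinates, so $\psi_\BB=S_\BB^{\a\b}\pr_\a\pr_\b\psi$. Moreover $S_\AA^{\mu\nu}$ has no $r$-components, so the $r$-dependence of $\BB^\cund$ is invisible to $\pr_\mu$ in the contraction. This gives
\[
S_\AA^{\mu\nu}\pr_\mu\psi_\BB\,\pr_\nu\psi=S_\BB^{\a\b}S_\AA^{\mu\nu}\,\pr_\mu\pr_\a\pr_\b\psi\,\pr_\nu\psi .
\]
Now set $Q:=S_\AA^{\a\b}\pr_\a\psi\pr_\b\psi$. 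The coefficients of $S_\AA$ are independent of $t,\phi$, so $X(S_\AA^{\mu\nu})=0$ for $X\in\{T,\Zhat\}$. Differentiating $Q$ twice along the Killing directions and using the symmetry of $S_\AA$ yields
\[
\tfrac12 S_\BB^{\a\b}\pr_\a\pr_\b Q=S_\BB^{\a\b}S_\AA^{\mu\nu}\pr_\mu\pr_\a\pr_\b\psi\,\pr_\nu\psi+S_\BB^{\a\b}S_\AA^{\mu\nu}\pr_\mu\pr_\a\psi\,\pr_\nu\pr_\b\psi .
\]
Substituting this into Step~1 produces exactly the claimed principal term $S_\BB^{\a\b}S_\AA^{\mu\nu}\pr_\mu\pr_\a\psi\,\pr_\nu\pr_\b\psi$, together with the extra term $-\tfrac12 S_\BB^{\a\b}\pr_\a\pr_\b Q$.

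\emph{Step 3: recognize $\tfrac12 S_\BB^{\a\b}\pr_\a\pr_\b Q$ as a divergence.} It remains to show
\[
\tfrac12 S_\BB^{\a\b}\pr_\a\pr_\b Q=\tfrac12|q|^2\D_\mu\big(|q|^{-2}S_\BB^{\mu\nu}\D_\nu Q\big),
\]
which is the form appearing in the statement. The vectors contracting $\D_\mu$ are $T$ and $\Zhat$. They are divergence-free Killing fields, and $|q|^2$ and $\BB^\cund(r)$ are invariant along them. Hence each term $\D_\mu\big(|q|^{-2}\BB^\cund X^\mu\, Y(Q)\big)$ equals $|q|^{-2}\BB^\cund XY(Q)$, with no lower-order remainders.

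I expect this last step to be the only delicate point. One has to check carefully that $\BB^\cund(r)$ and $|q|^{-2}(\th)$ pass through the divergence. This uses precisely the absence of $\pr_r$ and $\pr_\th$ components in $S_\BB$, which is where the hypothesis $\BB^4=0$ enters, since $S_4=O$ contains $\pr_\th$.
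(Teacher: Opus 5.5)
Your proof is correct and follows essentially the same route as the paper: the same first integration by parts via $\psi_\AA=|q|^2\D_\mu(|q|^{-2}S_\AA^{\mu\nu}\pr_\nu\psi)$, and then the same treatment of the third-order term $S_\BB^{\a\b}S_\AA^{\mu\nu}\pr_\mu\pr_\a\pr_\b\psi\,\pr_\nu\psi$, using $S_\BB^{\a\b}\pr_\b\pr_\mu\psi\,S_\AA^{\mu\nu}\pr_\nu\psi=\tfrac12 S_\BB^{\a\b}\pr_\b Q$ and one further integration by parts along the Killing directions. The only difference is bookkeeping: the paper integrates $\pr_\a$ off that term directly, while you expand $\tfrac12 S_\BB^{\a\b}\pr_\a\pr_\b Q$ with the Leibniz rule and read the same identity backwards.
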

\begin{proof}
With the same notations as in Lemma \ref{lemma:IntegrationbypartsP}, we derive
\begin{align*}
|q|^{-2}    \AA^\bund\psib\c\BB^\aund\psia&=|q|^{-2}\psi_{\BB}\AA^\aund\psia=|q|^{-2}\psi_{\BB}\AA^\aund\SS_\aund\psi=\psi_{\BB}\AA^\aund\D_\mu(|q|^{-2} S^{\mu\nu}_\aund\D_\nu \psi)\\
&=\D_\mu(\psi_{\BB}|q|^{-2} S^{\mu\nu}_{\AA}\D_\nu \psi)-|q|^{-2}S_\aund^{\mu\nu}\pr_\mu(\psi_\BB\AA^\aund)\pr_{\nu}\psi\\
&=\D_\mu(\psi_{\BB}|q|^{-2} S^{\mu\nu}_{\AA}\pr_\nu \psi)-|q|^{-2}S_\AA^{\mu\nu}\pr_\mu\psi_\BB\pr_{\nu}\psi\
\end{align*}
where we use $S_\aund^{\mu\nu}\pr_\mu(f(r))=0$ in the last step. Since $\BB^4=0$, we have
\[
\psi_\BB=\BB^\aund\phia=\BB^{\aund}|q|^2\D_\mu(|q|^{-2}S_{\aund}^{\mu\nu}\D_\nu\psi)=\BB^\aund S_\aund^{\mu\nu}\pr_\mu\pr_\nu\psi=S^{\mu\nu}_{\BB}\pr_\mu\pr_\nu\psi
\]
and thus (using the fact that $S_\BB^{\a\b}$ only depend on $r$)
\[
S_{\AA}^{\mu\nu}\pr_\mu\psi_{\BB}=S_{\AA}^{\mu\nu}\pr_\mu(S_{\BB}^{\a\b}\pr_\a\pr_\b\psi)=S_{\AA}^{\mu\nu}S_{\BB}^{\a\b}\pr_\a\pr_\b\pr_\mu\psi.
\]
Similarly, using the facts that $S_\BB^{\th\b}=S_{\BB}^{r\b}=0$ and $S_{\AA}^{\a\b}, S_{\BB}^{\a\b}$ depend only on $r,\th$, we deduce
\begin{align*}
&    |q|^{-2}S_{\AA}^{\mu\nu}\pr_\mu\psi_{\BB}\pr_\nu\psi=|q|^{-2}S_{\AA}^{\mu\nu}S^{\a\b}_{\BB}\pr_\a\pr_\b\pr_\mu\psi\c\pr_\nu\psi\\
&=\pr_\a\Big((S_{\BB}^{\a\b}\pr_\b\pr_\mu\psi)|q|^{-2}S_{\AA}^{\mu\nu}\pr_\nu\psi\Big)-|q|^{-2}S_{\AA}^{\mu\nu}S_{\BB}^{\a\b}\pr_\mu\pr_\b\psi\c\pr_\nu\pr_\a\psi\\
&=\pr_\a\Big(\frac12|q|^{-2}S_{\BB}^{\a\b}\pr_\b\big(S_{\AA}^{\mu\nu}\pr_\nu\phi\pr_\mu\psi\big)\Big)-|q|^{-2}S_{\AA}^{\mu\nu}S_{\BB}^{\a\b}\pr_\mu\pr_\b\psi\c\pr_\nu\pr_\a\psi\\
&=\D_\a\Big(\frac12|q|^{-2}S_{\BB}^{\a\b}\D_\be\big(S_{\AA}^{\mu\nu}\pr_\nu\phi\pr_\mu\psi\big)\Big)-|q|^{-2}S_{\AA}^{\mu\nu}S_{\BB}^{\a\b}\pr_\mu\pr_\b\psi\c\pr_\nu\pr_\a\psi.
\end{align*}
Therefore
\[
|q|^{-2}\AA^\aund\psia\c\BB^\bund\psib=  |q|^{-2}S_\AA^{\mu\nu}S_\BB^{\a\b}\pr_\mu\pr_\a\psi\c \pr_\nu\pr_\b\psi+\Div[\psi;\AA,\BB]
\]
where
\[
\Div[\psi;\AA,\BB]=\D_\mu\left[|q|^{-2}\Big(\phi_\BB S_{\AA}^{\mu\nu}\pr_\nu\psi-S_{\BB}^{\mu\nu}S_{\AA}^{\a\b}\pr_{\nu}\pr_\a\phi\pr_\b\psi\Big)\right]
\]
as stated.
\end{proof}

Using Lemma \ref{lemma:IntegrationbypartsI}, we find (modulo spacetime divergences) a lower bound of the quadratic term $\It$. In order to obtain a sum of squares in the proof, we appeal to the following general Lemma.
\begin{lemma}\lab{Lem:sumofsquares}
Let $(\AA^\aund)_{\aund=1,\ldots,4}$ is a $4$-tuple depending only on $r$. Choosing $\LL^{\aund}=(m^2, 0, 1, 0)$, we have with $\mathfrak{X}=\{mT,\Zhat 
\}$
\beq
S_{\AA}^{\mu\nu}  S^{\a\b} _\LL\pr_\mu \pr_\a  \phi  \pr_\nu \pr_\b  \phi
=
\sum_{X\in \mathfrak{X}} S_{\AA}^{\mu\nu}\pr_\mu X\phi\c\pr_\nu X\phi.
\eeq
Moreover, if $S_{\AA}^{\mu\nu}=\AA^{\aund} S_{\aund}^{\mu\nu}$ is nonnegative, so is the above expression.
\end{lemma}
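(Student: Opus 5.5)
The identity is pure linear algebra once $S_\LL$ is written out, so I would expand $S_\LL$ explicitly and substitute. With $\LL=(m^2,0,1,0)$ and the tensors of \eqref{eq:newStensor}, we have
\[
S_\LL^{\a\b}=m^2 T^\a T^\b+\Zhat^\a\Zhat^\b=\sum_{X\in\mathfrak{X}} X^\a X^\b .
\]
The point is that $\LL^4=0$, so the Carter tensor $O^{\a\b}$, whose coefficients depend on $\th$, does not enter. Also $\LL^2=0$, so no mixed $T^{(\a}\Zhat^{\b)}$ term appears.

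Substituting, the left-hand side becomes $\sum_{X\in\mathfrak{X}} S_\AA^{\mu\nu}X^\a X^\b\pr_\mu\pr_\a\phi\,\pr_\nu\pr_\b\phi$. The key step is to write $X^\a\pr_\mu\pr_\a\phi=\pr_\mu(X\phi)$ for $X=mT$ and $X=\Zhat$. In Boyer–Lindquist coordinates, $T=\pr_t$ and $\Zhat=\pr_\phi+a\pr_t$ have constant components, so $X^\a$ passes through $\pr_\mu$ with no commutator term.

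This gives $\sum_X S_\AA^{\mu\nu}\pr_\mu(X\phi)\,\pr_\nu(X\phi)$, which is the claimed identity. The symmetry of $S_\AA$ means the placement of indices on the two factors does not matter.

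For the positivity statement, each summand is $S_\AA$ applied to the covector $d(X\phi)$. If $S_\AA^{\mu\nu}=\AA^\aund S_\aund^{\mu\nu}$ is nonnegative as a quadratic form, every summand is nonnegative, and so is the sum.

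I do not expect a real obstacle. The only care needed is the convention that the $\pr$'s here are coordinate partials, not covariant derivatives, which matches how the expression is produced in Lemma \ref{lemma:IntegrationbypartsI}. With covariant second derivatives one would still argue the same way, since $T$ and $\Zhat$ are Killing, but checking the coordinate version suffices.
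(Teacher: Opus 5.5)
Your proof is correct and follows the paper's argument. The paper's proof consists of the single observation that $S_\LL^{\mu\nu}=m^2T^\mu T^\nu+\Zhat^\mu\Zhat^\nu$. You have also supplied the step it leaves implicit: $T$ and $\Zhat$ have constant coordinate components, so $X^\a\pr_\mu\pr_\a\phi=\pr_\mu(X\phi)$, and nonnegativity then follows summand by summand.
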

\begin{proof}
The proof follows immediately from the fact that $S_{\LL}^{\mu\nu}=m^2T^\mu T^\nu+\Zhat^\mu\Zhat^\nu$ for $\LL^{\aund}=(m^2, 0, 1, 0)$.
\end{proof}

\begin{proposition}\lab{Prop:positivityI}
Choosing $\LL^{\aund}=(m^2, 0, 1, 0)$, we have for $|a|/m\leq 0.75$
\beq\lab{eq:lowerboundofIts}
\It\geq I+|q|^2 \Div\, \Bk_{I}
\eeq
where
\begin{enumerate}
\item The term $I$ is given by
\beq\label{eq:expressionofIS}
I=\II^\aund(\pr_r\psi)_\aund\c\LL^\bund(\pr_r\psi)_\bund
\eeq
where\footnote{In fact, we will choose $f(r), g(r), h(r)$ such that $\II^1, \II^3, \II^4\geq0$ for $r\geq r_+$} 
\beq\label{eq:IIhatSlargea}
\bsplit
\II^1&=\AAt^1-mf(r)\AA^2[z],\quad
\II^2=0,\\
 \II^3&=\AA^3[z]-\frac{g(r)}{m}\AA^2[z],\quad \II^4=\AA^4[z]-\frac{h(r)}{m}\AA^2[z]
  \end{split}
\eeq
and\footnote{In fact, we will choose $f(r), g(r), h(r)$ to be step functions, see \eqref{eq:IIhatSlargeastep}.} for $r\geq r_+$
\beq
\frac{a^4}{4r^4}\leq f(r)\c\frac{am}{2r^2}\leq \frac{1}{4}, \quad g(r), h(r)\geq 0, \quad g(r)+h(r)\geq\frac{1}{4f(r)}.
\eeq
\item The boundary term $\Bk_{I}$ 
is acceptable in the sense of Definition \ref{Def:acceptable}.
\end{enumerate}
\end{proposition}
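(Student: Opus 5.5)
The plan is to reduce the whole statement to the two algebraic lemmas just proved, Lemma \ref{lemma:IntegrationbypartsI} and Lemma \ref{Lem:sumofsquares}, applied to the auxiliary scalar $\phi:=\pr_r\psi$. The indefinite cross term coming from $\AAt^2$ is then absorbed by Cauchy--Schwarz, with the weights $f,g,h$.

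\textbf{Step 1 (passing $\pr_r$ through the symmetry operators).} The tensors $S_\aund^{\a\b}$ of \eqref{eq:newStensor} have coefficients independent of $r$, and $S^{r\mu}_\aund=0$. Hence $\pr_r$ commutes with $\SS_1,\SS_2,\SS_3$, and also with $\OO$, whose coefficients depend only on $\th$. Therefore $\pr_r\psia=(\pr_r\psi)_\aund=\SS_\aund\phi$, and
\[
\It=\sum_{\aund}\AAt^\aund\,\phi_\aund\c\phi_\LL,\qquad \phi_\LL=m^2TT\phi+\Zhat\Zhat\phi .
\]

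\textbf{Step 2 (integration by parts term by term).} For each fixed $\aund$ I apply Lemma \ref{lemma:IntegrationbypartsI} with the tuple $\AA=\AAt^\aund\,\delta^{\aund}$ and with $\BB=\LL$. This is admissible because $\LL^4=0$. It gives, modulo $|q|^2\Div[\phi;\AA,\LL]$,
\[
S^{\mu\nu}_\AA S^{\a\b}_\LL\pr_\mu\pr_\a\phi\,\pr_\nu\pr_\b\phi ,
\]
which by Lemma \ref{Lem:sumofsquares} equals $\AAt^\aund\sum_{X\in\{mT,\Zhat\}}S_\aund^{\mu\nu}\pr_\mu X\phi\,\pr_\nu X\phi$. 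The individual contributions are as follows.
\begin{itemize}
\item For $\aund=1$ this is $\AAt^1\sum_X|TX\phi|^2\ge 0$.
\item For $\aund=3$ this is $\AAt^3\sum_X|\Zhat X\phi|^2$.
\item For $\aund=4$ this is $\AAt^4\sum_X O(\pr X\phi,\pr X\phi)\ge0$, using $\AAt^4\geq\AA^4[z]$.
\item For $\aund=2$ the only bad contribution is $\AAt^2\sum_X TX\phi\c\Zhat X\phi$, which is indefinite.
\end{itemize}
I then use $\AA^3[z]\ge\AAt^3$, $\AAt^4\ge\AA^4[z]$ and $0\le\AAt^2\le\AA^2[z]$ from Proposition \ref{Prop:summary-6.3}. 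This replaces the $\aund=3$ and $\aund=4$ weights by $\AA^3[z]$ and $\AA^4[z]$ (up to the exact bookkeeping of which side of each inequality is used), and bounds the bad term by $\AA^2[z]$ times $|TX\phi\c\Zhat X\phi|$.

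\textbf{Step 3 (absorbing the $T\Zhat$ cross term; main obstacle).} Using Cauchy--Schwarz with weight $mf$,
\[
|TX\phi\c\Zhat X\phi|\le mf|TX\phi|^2+\frac{1}{4mf}|\Zhat X\phi|^2 .
\]
The first piece is paid by the $\AAt^1$ term, which produces $\II^1=\AAt^1-mf\AA^2[z]$. The second piece is split according to $\frac1{4f}\le g+h$.
\begin{itemize}
\item The $g$-portion is paid by the $\Zhat$-square coming from $\aund=3$, producing $\II^3$.
\item The $h$-portion must be controlled by the Carter form $O(\pr X\phi,\pr X\phi)$. Here I use $\Zhat=a\cos^2\th\,T+|q|\sin\th\, e_2$ from \eqref{eq:identityZhat} together with \eqref{eq:O-frame}, which give
\[
|\Zhat Y|^2\le O(\pr Y,\pr Y)+\frac{a^2}{r^2}\cdot\frac{a^2 r^2}{|q|^2}|TY|^2+\text{(cross term)} .
\]
\end{itemize}
This $h$-portion is the delicate point. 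The extra $|TX\phi|^2$ pieces produced by the identity above, and the $T$--$e_2$ cross term, must also be returned to the $\II^1$ budget. This is exactly where the hypotheses $\frac{a^4}{4r^4}\le f\frac{am}{2r^2}\le\frac14$ enter.

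I would first try a crude pointwise Cauchy--Schwarz in $\th$, taking $\sin\th$ and $\cos\th$ to their worst values. I would then verify the remaining one-variable inequalities in $r$, with $f,g,h$ chosen as step functions and $|a|/m\le0.75$, by direct (Mathematica-assisted) checking, as in Proposition \ref{Prop:summary-6.3}.

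\textbf{Step 4 (boundary terms).} Finally, $\Bk_I$ is the sum of the divergences $\Div[\phi;\cdot,\LL]$ from Step 2. Their coefficients carry the weights $\AAt^\aund$, which behave like $\De$ (times at most $\ln\De$ through $h$) near $r_+$ and are bounded at infinity. Since $\De\pr_r$ is regular at the horizon, these boundary terms satisfy the bounds of Definition \ref{Def:acceptable}, so $\Bk_I$ is acceptable.
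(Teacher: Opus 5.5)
Your reduction is the same as the paper's. By linearity of Lemma \ref{lemma:IntegrationbypartsI} (the paper applies it once, to the tuple $\AAt^\aund-\II^\aund$ with $\BB=\LL$) and Lemma \ref{Lem:sumofsquares}, everything comes down to pointwise nonnegativity of $Q(\xi)=(\AAt^\aund-\II^\aund)S_\aund^{\mu\nu}\xi_\mu\xi_\nu$.

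The gap is in how you evaluate $Q$. $\II^3$ is built from $\AA^3[z]$, not $\AAt^3$, and
\[
\AAt^3-\AA^3[z]=\frac{a}{2r^2}\big(\AAt^2-\AA^2[z]\big)\le 0 .
\]
So the $|\Zhat\xi|^2$ coefficient of $Q$ from $\aund=3$ is $\frac gm\AA^2[z]-\frac{a}{2r^2}(\AA^2[z]-\AAt^2)$, not $\frac gm\AA^2[z]$. Replacing $\AAt^3$ by the larger $\AA^3[z]$ in a lower bound goes the wrong way; your ``up to bookkeeping'' hides exactly the crux.

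In Step 3 you then bound $\AAt^2$ by $\AA^2[z]$ and split with weight $mf$, which requires $\frac{1}{4mf}\AA^2[z]$ of $|\Zhat\xi|^2$. Since you lowered the $\aund=4$ weight to $\AA^4[z]$, at $\th=\pi/2$ the Carter form supplies only $\frac hm\AA^2[z]|\Zhat\xi|^2$ plus an $e_1$-square. Your scheme therefore needs
\[
4f(g+h)\,\AA^2[z]-\frac{2amf}{r^2}\big(\AA^2[z]-\AAt^2\big)\ge\AA^2[z].
\]
This fails whenever $4f(g+h)=1$, which the hypotheses allow and which is exactly what the step functions of \eqref{eq:IIhatSlargeastep} satisfy. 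Keeping the surplus $\AAt^4-\AA^4[z]$ does not help near the horizon: it equals $\frac{\De}{a^2}(\AA^3[z]-\AAt^3)$, which is smaller than the deficit for $r<2m$. Since the failure occurs at the equator, no angular Cauchy--Schwarz can rescue it.

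The missing idea is that the slack $\AA^2[z]-\AAt^2$ you discard in the cross term is precisely what pays for this $\Zhat\Zhat$ deficit. Keep $\AAt^2$. At the equator the $2\times 2$ determinant in $(T\xi,\Zhat\xi)$ satisfies
\[
mf\AA^2[z]\Big(\frac{g+h}{m}\AA^2[z]-\frac{a}{2r^2}\big(\AA^2[z]-\AAt^2\big)\Big)-\frac14(\AAt^2)^2\ \ge\ \big(\AA^2[z]-\AAt^2\big)\Big(\frac14\AAt^2+\Big(\frac14-\frac{amf}{2r^2}\Big)\AA^2[z]\Big)\ \ge 0 .
\]
This is where $f\frac{am}{2r^2}\le\frac14$ is actually used, not in the treatment of the Carter form.

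For general $\th$, write $O(\xi,\xi)=|(|q|e_1)\xi|^2+\sin^{-2}\th\,(\Zhat\xi-a\cos^2\th\,T\xi)^2$. The $T$--$\Zhat$ cross term of the $O$-part is then part of a perfect square, and need not be ``returned to the $\II^1$ budget'' (which your weight-$mf$ Cauchy--Schwarz has already exhausted). The determinant picks up the extra term
\[
\frac{\cos^2\th}{\sin^2\th}\,\frac hm\AA^2[z]\Big(mf\AA^2[z]+a\AAt^2+a^2\cos^2\th\big(\AAt^3-\AA^3[z]+\tfrac gm\AA^2[z]\big)\Big).
\]
It is nonnegative because $mf\ge\frac{a^3}{2r^2}$, i.e.\ the lower hypothesis $\frac{a^4}{4r^4}\le f\frac{am}{2r^2}$.
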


\begin{proof}
According to \eqref{eq:IIhatSlargea}, we have
\[
\It-I=[(\widetilde{\AA}^\aund-\II^{\aund})(\pr_r\psi)_\aund]\c\LL^\bund(\pr_r\psi)_\bund
\]
where
\beq\lab{eq:AAhat-IIhatSlargea}
\bsplit
\widetilde{\AA}^1-\II^1&=mf(r)\AA^2[z]\geq 0,\quad \widetilde{\AA}^2-\II^2=\widetilde{\AA}^2,\\    \widetilde{\AA}^3-\II^3&=\AAt^3-\AA^3[z]+\frac{g(r)}{m}\AA^2[z],\quad
\widetilde{\AA}^4-\II^4\geq \AA^4[z]-\II^4=\frac{h(r)}{m}\AA^2[z]\geq 0.
\end{split}
\eeq
By applying Lemma \ref{lemma:IntegrationbypartsI} to $ \AA^\aund=\widetilde{\AA}^\aund-\II^{\aund}, \BB^\aund=\LL^\aund,\pr_r\psi$, we obtain
\begin{align*}
\It-I-|q|^2\Div[\pr_r\psi;\widetilde{\AA}-\II,\LL]&=[(\widetilde{\AA}^\aund-\II^{\aund})(\pr_r\psi)_\aund]\c\LL^\bund(\pr_r\psi)_\bund\\
&=(\widetilde{\AA}^\aund-\II^{\aund})S_\aund^{\a\b}S_{\LL}^{\mu\nu}\pr_\al\pr_\mu\pr_r\psi\c\pr_\b\pr_{\nu}\pr_r\psi.
\end{align*}
According to Lemma \ref{Lem:sumofsquares}, it suffices to prove that $(\widetilde{\AA}^\aund-\II^{\aund})S_\aund^{\a\b}$ is nonnegative. Since $|q|e_2=\frac{\Zhat}{\sin\th}-\frac{a\cos^2\th}{\sin\th} T$, we write
\begin{align*}
&(\widetilde{\AA}^\aund-\II^{\aund})S_\aund^{\a\b}\geq mf(r)\AA^2[z]T^\mu T^\nu+\widetilde{\AA}^2T^{(\mu}\Zhat^{\nu)}+(\AAt^3-\AA^3[z]+\frac{g(r)}{m}\AA^2[z])\Zhat^\mu\Zhat^\nu+\frac{h(r)}{m}\AA^2[z]O^{\mu\nu}\\
&\quad=\begin{pmatrix}
mf(r)\AA^2[z]+\frac{a^2\cos^4\th}{\sin^2\th}\frac{h(r)}{m}\AA^2[z]\\
\frac{1}{2}\widetilde{\AA}^2-\frac{a\cos^2\th}{\sin^2\th}\frac{h(r)}{m}\AA^2[z]\\
\AAt^3-\AA^3[z]+\frac{g(r)}{m}\AA^2[z]+\frac{1}{\sin^2\th}\frac{h(r)}{m}\AA^2[z]\\
\frac{h(r)}{m}\AA^2[z]\end{pmatrix}^T\begin{pmatrix}
T^\mu T^\nu\\
2T^{(\mu}\Zhat^{\nu)}\\
\Zhat^\mu \Zhat^\nu\\
(|q|e_1)^\mu(|q|e_1)^\nu
\end{pmatrix}
\end{align*}
Therefore, it is equivalent to proving that the determinant of
\[
\begin{pmatrix}
mf(r)\AA^2[z]+\frac{a^2\cos^4\th}{\sin^2\th}\frac{h(r)}{m}\AA^2[z]&    \frac{1}{2}\widetilde{\AA}^2-\frac{a\cos^2\th}{\sin^2\th}\frac{h(r)}{m}\AA^2[z]\\
\frac{1}{2}\widetilde{\AA}^2-\frac{a\cos^2\th}{\sin^2\th}\frac{h(r)}{m}\AA^2[z]    &
\AAt^3-\AA^3[z]+\frac{g(r)}{m}\AA^2[z]+\frac{1}{\sin^2\th}\frac{h(r)}{m}\AA^2[z]\end{pmatrix}
\]
 is nonnegative.

 In fact, the determinant is given by
\begin{align*}
&mf(r)\AA^2[z]\big(\frac{g(r)+h(r)}{m}\AA^2[z]+\AAt^3-\AA^3[z]\big)-\frac14(\AAt^2)^2\\
&\quad+\frac{\cos^2\th}{\sin^2\th}\frac{h(r)}{m}\AA^2[z]\Big(mf(r)\AA^2[z]+a\AAt^2+a^2\cos^2\th(\AAt^3-\AA^3[z]+\frac{g(r)}{m}\AA^2[z])\Big)\\
&\quad\geq  \frac14(\AA^2[z]+\AAt^2)(\AA^2[z]-\AAt^2)-mf(r)\frac{a}{2r^2}\AA^2[z](\AA^2[z]-\AAt^2)\\
&\quad+\frac{\cos^2\th}{\sin^2\th}\frac{h(r)}{m}\AA^2[z]\Big(mf(r)\AA^2[z]+a\AAt^2-a^2\cos^2\th \frac{a}{2r^2}(\AA^2[z]-\AAt^2)+a^2\cos^2\th\frac{g(r)}{m}\AA^2[z])\Big)\\
&\quad\geq \frac14(\AA^2[z]+\AAt^2)(\AA^2[z]-\AAt^2)-mf(r)\frac{a}{2r^2}\AA^2[z](\AA^2[z]-\AAt^2)\geq \frac14\AAt^2(\AA^2[z]-\AAt^2)\geq 0
\end{align*}
where we use $\AA^2[z]\geq 0,\ g(r)+h(r)\geq \frac{1}{4f(r)},\ \AAt^3-\AA^3[z]=\frac{a}{2r^2}(\AAt^2-\AA^2[z])$ in the first step and $\AA^2[z]\geq \AAt^2\geq 0,\ mf(r)-\frac{a^3}{2r^2}\geq 0$ for $r\geq r_+$ in the second step, and $f(r)\c\frac{am}{2r^2}\leq \frac{1}{4}$ in the third step. This finishes the proof.
\end{proof}

\section{Proof of Theorem  \eqref{Thm:Moraw4}}
\lab{sec:proofofSSMor}
 We recall that $|q|^2\EEt=\Pt+\It+\Jt+\Kt$, see \eqref{eq:separation-EE-I-J-K}. Using Propositions \ref{Prop:positivityP} and \ref{Prop:positivityI}, we see that, modulo acceptable spacetime divergence, for $|a|/m\leq0.75$
\[
\Pt+\It\geq 0.
\]
with appropriate choices of $f(r), g(r), h(r)$ in Proposition \ref{Prop:positivityI} such that $\II^1,\II^3,\II^4\geq 0$. In order to prove Theorem \eqref{Thm:Moraw4} we still need to control the lower order term $\Jt$.
We summarize what we need to do in the following results.
\begin{proposition}
\lab{prop:I+Kt-short}
Under the assumption $|a|/m\le 0.75$,
    for an appropriate, \textit{continuous}, choice of $\Kt=\frac 1 4 |q|^2 \Div ((\psia\c\psib) M^{\aund\bund})$, modulo acceptable boundary terms, 
    we have
        \beq
        \lab{eq:prop-I+Kt-short}
        \int_{\DD(\tau_1, \tau_2)}\EEt\gtrsim \sum_{\aund=1}^4\int_{\DD(\tau_1, \tau_2) }|q|^{-2}P+ \De^{-1}  | \Rhat  \psia|^2  +(r^{-4} +r^{-5}|\ln(\De)|)|\psia|^2.
        \eeq
\end{proposition}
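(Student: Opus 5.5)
The plan is to combine the nonnegative decompositions already obtained for $\Pt$ and $\It$ with a careful treatment of the lower order term $\Jt$, using the Hardy type current $\Kt$ to rebalance it against $\It$. Concretely, by Propositions \ref{Prop:positivityP} and \ref{Prop:positivityI}, modulo acceptable divergences we have $\Pt+\It\ge P+I$ with $I=\II^\aund(\pr_r\psi)_\aund\c\LL^\bund(\pr_r\psi)_\bund$. We then split $P=\theta P+(1-\theta)P$ for a small fixed $\theta$. The piece $(1-\theta)P$ is used, through Proposition \ref{Prop:refinedP}, to absorb the indefinite contribution $\VVt^2T\Zhat\psi\c\psi_\LL$ of $\Jt$ and, for $r\ge r_++3.5m$, to generate the additional positive lower order coefficients $\JtP_f^\aund$. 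The remaining $\theta P$ survives in \eqref{eq:prop-I+Kt-short}.

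After this step $\Jt$ is replaced by a diagonal-type form $\JtP^\aund\psia\c\psi_\LL$ with $\aund\in\{1,3,4\}$, where $\JtP^\aund=\VVt^\aund+\JtP^\aund_c$ (resp. $+\JtP^\aund_f$ far out). Since $\psi_\LL=m^2TT\psi+\Zhat\Zhat\psi$, each term is rewritten via Lemma \ref{lemma:IntegrationbypartsI} and Lemma \ref{Lem:sumofsquares} as a sum of quadratic forms in $\widetilde\psi\in\{mT\Zhat\psi,\ \Zhat\Zhat\psi,\ mT|q|\nab\psi,\ \Zhat|q|\nab\psi\}$, up to acceptable divergences. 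The same reduction turns $I$ into $\sum\II|\pr_r\widetilde\psi|^2$ with $\II^1,\II^3,\II^4\ge0$, using the step-function choice of $f,g,h$. We are thus reduced, for each $\widetilde\psi$, to scalar one-dimensional quadratic forms $\II(r)|\pr_r\widetilde\psi|^2+\VV(r)|\widetilde\psi|^2$. Here $\VV$ behaves like $|\ln\De|$ (positive) near $r_+$ and like $r^{-2}$ at infinity, but is negative on a compact $r$-interval.

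To handle this negativity we choose $M^{\aund\bund}=v^{\aund\bund}(r)\pr_r$, reduced in the same way to a scalar $v$ for each $\widetilde\psi$, so that by \eqref{expression-Div-M} the form
\[
\II|\pr_r\widetilde\psi|^2+\tfrac12|q|^2v\,\widetilde\psi\pr_r\widetilde\psi+\Big(\VV+\tfrac14|q|^2\big(\pr_rv+\tfrac{2r}{|q|^2}v\big)\Big)|\widetilde\psi|^2
\]
is positive definite. This requires the discriminant condition
\[
|q|^4v^2<16\,\II\Big(\VV+\tfrac14\pr_r(|q|^2v)\Big),
\]
which is a Riccati-type inequality for $v$. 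We take $v$ continuous and piecewise defined: zero near $r_+$, where $\VV\sim|\ln\De|$ dominates, and zero near infinity. On the bad compact interval $v$ increases, so that $\pr_r(|q|^2 v)$ compensates the negativity of $\VV$; it is then brought back to zero where $\VV$ is large. The coefficients would be verified numerically (Mathematica) for $|a|/m\le0.75$.

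The remaining steps are bookkeeping. Positivity of $\theta P$ together with the Hardy-modified forms gives control of $|\widetilde\psi|^2$ and $|\pr_r\widetilde\psi|^2$. We then recover $TT\psi$ and $\OO\psi$ themselves by a small Lagrangian correction, in the same way $w_\flat$ was used, together with the expressions of $\Psi_1,\Psi_2$, which contain $TT\psi$ and $\OO\psi$ with nondegenerate coefficients away from $\TT_{-a}=0$. This yields the weights $\De^{-1}|\Rhat\psia|^2$ and $(r^{-4}+r^{-5}|\ln\De|)|\psia|^2$. Finally, the $|\ln(r-r_+)|$-growth of all boundary currents is checked to fit Definition \ref{Def:acceptable}.

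The main obstacle is the Hardy step, namely finding a continuous $v$ satisfying the Riccati inequality on the region where $\VVt^\aund+\JtP^\aund$ is negative. This is exactly where the restriction $|a|/m\le0.75$ should enter, since the negative lobe of $\VV$ must be small compared to $\II$ there. If a single $v$ fails, I would try first to enlarge the region where $\JtP_f$ is used, or to adjust $\de_0$, before modifying $h$.
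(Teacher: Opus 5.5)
Your overall architecture matches the paper: positivity of $P+I$ from Propositions \ref{Prop:positivityP}--\ref{Prop:positivityI}, absorption of the mixed term $\VVt^2T\Zhat\psi\c\psi_\LL$ through Proposition \ref{Prop:refinedP}, and a Hardy current $M^{\aund\bund}=v^{(\aund}\LL^{\bund)}\pr_r$ defined by Riccati inequalities. However, the Hardy step as you set it up fails at the horizon.

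\textbf{The horizon sign.} You assume the potential behaves like $+|\ln\De|$ near $r_+$ and take $v=0$ there, but it is the opposite. With $h=\frac12 r^5\bigl(1-\frac{m^4}{r^4}\ln(\De/m^2)\bigr)$, write $h\UUwtp^\aund=A^\aund+B^\aund\ln(\De/m^2)$ with $B^\aund=-\frac12 rm^4\UUwtp^\aund$. The leading singular part of $\VV^\aund[z]=\frac14\pr_r(\De\pr_r(z\pr_r(h\UUwtp^\aund)))$ is then $\frac{(\pr_r\De)^2}{4r^4}(\pr_rB^\aund)\ln(\De/m^2)$. For $\aund=4$ one has $B^4=m^4(r^2-3mr+2a^2)/r^4$ and $\pr_rB^4(r_+)=m^4(5mr_+-6a^2)/r_+^5>0$, so $\VVt^4\to-\infty$. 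Since $\UUwtp^1=\de_0m^2\UUwtp^4$, the same holds for $\VVt^1$; the $w_\flat$ term offsets only a fraction of it. The paper's own construction confirms this: there $|q|^2v^\aund\sim-(r-r_+)\ln(\De/m^2)>0$ near $r_+$.

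So with $v\equiv0$ near $r_+$ your one-dimensional forms are negative in a neighborhood of the horizon. In particular, the $r^{-5}|\ln\De||\psia|^2$ weight in \eqref{eq:prop-I+Kt-short} cannot come from the potential itself. The missing idea is to start the Hardy weight at the horizon, using the degeneracy $\II^\aund\sim\De$. Take $\widetilde v=\frac14|q|^2v\approx\frac{11}{10}c\,(r-r_+)\ln(\De/m^2)$ with $c=(-\WW^\aund/\ln(\De/m^2))|_{r=r_+}$. Then $\widetilde v^2/\II^\aund=O((r-r_+)\ln^2\De)\to0$, while $\pr_r\widetilde v\approx\frac{11}{10}|\WW^\aund|$. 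The net coefficient is therefore $\approx\frac1{10}|\WW^\aund|\sim|\ln\De|$, which is exactly where the logarithmic weight comes from. It is also why $v^\aund$ is log-singular at $r_+$ and must be tempered afterwards (section \ref{sec:SStemper}).

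\textbf{Recovering $\OO\psi$.} This is a second, smaller gap. With $\LL=(m^2,0,1,0)$, Lemma \ref{Lem:sumofsquares} turns everything into forms in $X\psi$ with $X\in\{mT,\Zhat\}$. This controls $\psi_1,\psi_2,\psi_3$; $TT\psi$ is already available because the $\de_0$, $w_\flat$ modification makes $\UUwtp^1\neq0$. It never controls $\psi_4=\OO\psi$ at zeroth order, and your proposed fix does not supply it:
\begin{itemize}
\item A Lagrangian correction produces first-order terms, not $|\OO\psi|^2$.
\item Extracting $\OO\psi$ from $\Psi_1,\Psi_2$ breaks down on $\{\TT_{-a}=0\}$, i.e. at $r=r_2\in(\rhat_1,\rhat_2)$, where both reduce to combinations of $T\Zhat\psi$ and $\Zhat\Zhat\psi$. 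In any case $P$ only controls $T\Psi_i$ and $\Zhat\Psi_i$.
\end{itemize}
The paper instead replaces $\LL$ by $(m^2,0,1,\sigma)$ with $\sigma$ small. This adds $\sigma\AAt^4|\pr_r\OO\psi|^2+\sigma\VVt^4|\OO\psi|^2$ together with the $v^4$ Hardy terms. These complete the square exactly as for the other components and yield $\sigma\bigl(r^{-2}\De|\Rhat\OO\psi|^2+(r^{-2}+r^{-3}|\ln\De|)|\OO\psi|^2\bigr)$. The new cross terms are absorbed by $P$ (the $\sigma h|aT\Psi_1|^2$ term) and by the $\aund=1,2,3$ bounds once $\sigma$ is small.

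A minor point: the decoupling into scalar forms per $\widetilde\psi$ is not exact. The Riccati inequalities are per $\aund\in\{1,3,4\}$, and one must still check $2\times2$ determinants coming from the $\th$-dependent mixing of $T$ and $\Zhat$ in $O^{\a\b}$.
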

Combining Proposition \ref{prop:I+Kt-short} with Theorem \ref{Thm:Moraw3} applied to $\psia$, we immediately derive
\begin{corollary}
\lab{cor:I+Kt-Nontrap-short}
Under the assumption $|a|/m\le 0.75$,
    for an appropriate, \textit{continuous}, choice of $\Kt=\frac 1 4 |q|^2 \Div ((\psia\c\psib) M^{\aund\bund})$, modulo acceptable boundary terms, 
    we have
        \beq
        \lab{eq:prop-I+Kt-Nontrap-short}
        \bsplit
        &\sum_{\aund=1}^4\int_{\DD(\tau_1, \tau_2) } |q|^{-2}P+\De^{-1}  | \Rhat  \psia|^2  +(r^{-4} +r^{-5}|\ln(\De)|)|\psia|^2\\
        &\quad+\sum_{\aund=1}^4\int_{\DD_{\ntrap}(\tau_1, \tau_2)}  \left(\frac{m}{r^2} |\That \psia|^2 + r^{-1}|\nab\psia|^2\right)\\
                \quad&\les\int_{\DD(\tau_1, \tau_2)}\EEt+\sum_{\aund=1}^4\Big(\EFdeg[\psia](\tau_1,\tau_2)+    \Edeg[\psia](\tau_1)\Big).
\end{split}
        \eeq
\end{corollary}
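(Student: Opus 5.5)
The corollary should follow by adding two estimates and absorbing a small multiple of one into the other; no new multiplier is needed. Start from Proposition \ref{prop:I+Kt-short}. Modulo acceptable boundary terms, it gives
\[
\sum_{\aund}\int_{\DD(\tau_1,\tau_2)}|q|^{-2}P+\De^{-1}|\Rhat\psia|^2+(r^{-4}+r^{-5}|\ln\De|)|\psia|^2\les\int_{\DD(\tau_1,\tau_2)}\EEt .
\]
Since $\De^{-1}|\Rhat\psia|^2\gtrsim r^{-2}|\Rhat\psia|^2$ and $r^{-4}|\psia|^2$ sits among the lower order terms, the right-hand side controls $\Bdegg[\psia](\tau_1,\tau_2)$ for each $\aund=1,\dots,4$.

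Next use the fact that each $\psia=\SS_\aund\psi$ is again a solution, because $\SS_\aund$ commutes with $|q|^2\square_{a,m}$. In the homogeneous setting of Theorem \ref{Thm:Moraw4} this gives $\square\psia=0$. So Theorem \ref{Thm:Moraw3} applies to each $\psia$:
\[
\int_{\DD_{\ntrap}}\Big(\frac{m}{r^2}|\That\psia|^2+r^{-1}|\nab\psia|^2\Big)\les\Bdegg[\psia]+\EFdeg[\psia]+\Edeg[\psia](\tau_1).
\]
Substitute the bound for $\Bdegg[\psia]$ from the first step and sum over $\aund$. Then add the first-step estimate once more to keep $P$ and the weighted lower order terms on the left.

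The only care needed is bookkeeping of the boundary terms. The acceptable boundary terms produced in Proposition \ref{prop:I+Kt-short} are absorbed into the convention ``modulo acceptable boundary terms'', so on the right they appear only through $\int\EEt$ as stated. The boundary terms from Theorem \ref{Thm:Moraw3} are already written as $\EFdeg[\psia](\tau_1,\tau_2)+\Edeg[\psia](\tau_1)$. Since all constants are uniform, nothing needs absorbing beyond fixing implicit constants.

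The only potential obstacle is compatibility of the weights. The $\Bdegg$ in Theorem \ref{Thm:Moraw3} carries weights $r^{-2}$ and $r^{-4}$, and these are dominated by the $\De^{-1}$ and $r^{-4}$ weights in \eqref{eq:prop-I+Kt-short}. Near $r_+$ the $\De^{-1}$ factor only helps, and at infinity $\De^{-1}\sim r^{-2}$. So the combination closes and yields \eqref{eq:prop-I+Kt-Nontrap-short}.
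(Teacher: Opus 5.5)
Your proposal is correct and follows the paper's own argument: Proposition \ref{prop:I+Kt-short} bounds $\Bdegg[\psia]$ by $\int\EEt$ modulo acceptable boundary terms, since $\De^{-1}\ge r^{-2}$ for $r\ge r_+$. Theorem \ref{Thm:Moraw3}, applied to each $\psia=\SS_\aund\psi$ (again a solution, because $\SS_\aund$ commutes with $|q|^2\square$), then supplies the non-trapping terms. No absorption is actually needed: the two estimates simply chain, as you yourself end up observing.
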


Corollary \ref{eq:prop-I+Kt-Nontrap-short} establishes the validity of the irregular version (due to the presence of the singular behavior of the left hand side of \eqref{eq:prop-I+Kt-Nontrap-short}) of Theorem
\ref{Thm:Moraw4}. It only remains to pass from the irregular to the regular version by using a method similar to that used in the scalar case in sections \ref{section:temperMorawetz} and \ref{sec:controloflowerterm}. The details are given in section \ref{sec:SStemper}.


\subsection{Proof of Proposition \ref{prop:I+Kt-short}}
\lab{section:Hardy}


\subsubsection{Strategy of the proof of Proposition \ref{prop:I+Kt-short}}
\lab{section:Proof-{prop:I+Kt-short}}

We will control the lower order term $\Jt$ by making use of a Hardy inequality. That amounts to finding an appropriate, \textit{continuous}, choice of $\Kt=\frac 1 4 |q|^2 \Div ((\psia\c\psib) M^{\aund\bund})$ such that
\bea
\lab{eq:positiv-EEt}
\int_{\DD(\tau_1,\tau_2)}\EEt=\int_{\DD(\tau_1,\tau_2)}\frac{1}{|q|^2}(\Pt+\It+\Jt+\Kt)\geq 0,
\eea
from which Proposition \ref{prop:I+Kt-short} follows.

The main steps in the proof of \eqref{eq:positiv-EEt} are as follows:

{\bf Step 1.} Complete the squares in the expression $I+\Kt$. More specifically, we introduce the corresponding corrections $\Kt=\frac 1 4 |q|^2 \Div ((\psia\c\psib) M^{\aund\bund})$ with $M^{\aund\bund}=v^{(\aund}\LL^{\bund)}\pr_r$ and then we complete the square to obtain the following
\beq\label{eq:K-Jt14short}
\bsplit
&\II^\aund\SS_\aund\pr_r\psi\c\LL^\bund(\pr_r\psi)_\bund+
\frac 1 4 |q|^2  \Div ((\psia\c\psib) M^{\aund\bund})-|q|^2\Div\Bk_{\widetilde{K}}\\
& \geq \Big(\frac14\pr_r\big(|q|^2v^\aund\big)-\frac{1}{16}\frac{(|q|^2v^\aund)^2}{\II^\aund}\Big)\SS_\aund\psi\c\psi_{\LL}
\end{split}
\eeq
where $\Bk_{\widetilde{K}}$ is an acceptable boundary term. We note that since $\II^\aund=(\II^1, 0, \II^3, \II^4)$ we choose $v^\aund=(v^1, 0, v^3, v^4)$. The details are given in section \ref{sec:I+Kt}.

{\bf Step 2.} From \eqref{eq:K-Jt14short}
   we set up the Riccati inequalities for $v^1, v^3, v^4 $
   \begin{align*}
&\frac14\pr_r\big(|q|^2v^1\big)-\frac{1}{16}\frac{(|q|^2v^1)^2}{\II^1} +\VVt^1+ \text{additional term}\ge 0\\
&\frac14\pr_r\big(|q|^2v^3\big)-\frac{1}{16}\frac{(|q|^2v^3)^2}{\II^1} +\VVt^3+ \text{additional term}\ge 0\\
&\frac14\pr_r\big(|q|^2v^4\big)-\frac{1}{16}\frac{(|q|^2v^4)^2}{\II^4} +\VVt^4 + \text{additional term} \ge 0
\end{align*}
where the additional term is mainly due
to the mixed term $\VVt^2$.
\begin{remark}
\lab{remark:restriction-am} It is in fact the presence of this mixed term $\VVt^2$ that requires us to restrict the range of our result to $|a| /m \le 0.75$.
\end{remark}
To deal with this mixed term $\VVt^2$ we proceed as follows:
\begin{itemize}
\item We make use of the trapping quantity $P$ in two different ways (see Proposition \ref{Prop:refinedP}). In the trapping region, we
 rewrite $\VVt^2T\Zhat\psi$ in terms of $P$ and the more manageable terms $ TT\psi, \Zhat\Zhat\psi, \OO\psi$. Away from the trapping region, we use $P$ to gain more positivity of $TT\psi, \OO\psi$. See Propositions \ref{Prop:finalHardysmalla} and \ref{Prop:finalHardylargea} in section \ref{section:Riccati}.

 \item When $|a|/m$ is relativity small, the way we handle $\VVt^3\Zhat\Zhat\psi$ (with $a^2$ in $\VVt^3$) is to write $\Zhat$ in terms of $T, |q|e_2$ and then focus on the analysis for $TT\psi, \OO\psi$. When $|a|/m$ is relatively large, the term $\VVt^3\Zhat\Zhat\psi$ is not negligible and we need to treat it separately. See Propositions \ref{Prop:finalHardysmalla} and \ref{Prop:finalHardylargea} in section \ref{section:Riccati}.
  \end{itemize}

{\bf Step 3.} Finally, we solve the above Riccati inequalties for continuous $v^1, v^3, v^4$. The details are given in section \ref{sec:solveRiccati}.

{\bf Step 4.} Since we choose $\LL=(m^2,0,1,0)$ (see \eqref{eq:remark:choiceofLL-weak}), the control of $|\Rhat^{\leq 1}\OO\psi|^2$ is missing. To handle this issue, we choose instead $\LL=(m^2,0,1,\sigma)$ where $\sigma$ is a sufficiently small constant. The details are given in section \ref{sec:OOpsi}.

\subsection{Step 1. Proof of \eqref{eq:K-Jt14short}}
 \lab{sec:I+Kt}
\begin{proposition}\lab{prop:I+Kt}
 Let $M^{\aund\bund}= v^{\aund\bund}\pr_r=v^{(\aund}\LL^{\bund)}\pr_r$ with
  \[
\LL^{\aund}=(m^2, 0, 1, 0) \quad \text{and}\quad |q|^2v^\aund=|q|^2(v^1, 0, v^3, v^4)
  \]
depending only on $r$. Then we have
\beq\label{eq:K-Jt14}
\bsplit
&I+\Kt-|q|^2\Div\Bk_{\widetilde{K}}\\
&\quad=\II^\aund\SS_\aund\pr_r\psi\c\LL^\bund(\pr_r\psi)_\bund+
\frac 1 4 |q|^2  \Div ((\psia\c\psib) M^{\aund\bund})-|q|^2\Div\Bk_{\widetilde{K}}\\
& \quad\geq \Big(\frac14\pr_r\big(|q|^2v^\aund\big)-\frac{1}{16}\frac{(|q|^2v^\aund)^2}{\II^\aund}\Big)\SS_\aund\psi\c\psi_{\LL}
\end{split}
\eeq
where $\Bk_{\widetilde{K}}$ are acceptable boundary terms in the sense of Definition \ref{Def:acceptable}.
\end{proposition}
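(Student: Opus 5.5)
The plan is to write both $I$ and $\Kt$ as sums of scalar expressions of the form (quadratic in a second-order $\SS$-derivative) $\times\LL$, reduce each to a one-dimensional completing-the-square problem in $r$, and then absorb everything through Lemma \ref{lemma:IntegrationbypartsI} together with Lemma \ref{Lem:sumofsquares}.

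\textbf{Step 1: expand $\Kt$.} Since $v^{\aund\bund}=v^{(\aund}\LL^{\bund)}$ depends only on $r$, formula \eqref{expression-Div-M} gives
\[
\Kt=\tfrac14\Big(2|q|^2 v^{(\aund}\LL^{\bund)}\psia\c\pr_r\psib+\pr_r\big(|q|^2v^{(\aund}\LL^{\bund)}\big)\psia\c\psib\Big).
\]
Here I use $|q|^2\pr_r v+2rv=\pr_r(|q|^2 v)$ for $v=v(r)$, because $\pr_r|q|^2=2r$. The operators $\SS_\aund$ commute with $\pr_r$ and with functions of $r$, since $S_\aund^{r\mu}=0$ and the coefficients of $S_\aund$ are independent of $r$. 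Using this, both $I$ and $\Kt$ can be written as sums, over $\aund\in\{1,3,4\}$, of terms of the form $C(r)\,\SS_\aund\chi\c\psi_\LL$ with $\chi\in\{\psi,\pr_r\psi\}$, plus the cross terms $\tfrac12|q|^2v^\aund(\SS_\aund\psi\c\pr_r\psi_\LL+\pr_r\psi_\aund\c\psi_\LL)$.

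\textbf{Step 2: turn each $\aund$-block into quadratic forms.} I apply Lemma \ref{lemma:IntegrationbypartsI} with $\BB=\LL$ (allowed, since $\LL^4=0$) and $\AA^\bund=C(r)\delta^{\bund}_{\aund}$. Combined with Lemma \ref{Lem:sumofsquares}, this yields, modulo divergences,
\[
\SS_\aund(C\chi_1)\c\LL^\bund\SS_\bund\chi_2=\sum_{X\in\{mT,\Zhat\}} C\,S_\aund^{\mu\nu}\pr_\mu X\chi_1\,\pr_\nu X\chi_2 .
\]
The same polarization of Lemma \ref{lemma:IntegrationbypartsI} handles the mixed pair $(\psi,\pr_r\psi)$. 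In this way the $\aund$-block of $I+\Kt$ becomes, for each $X$, the form
\[
S_\aund^{\mu\nu}\Big[\II^\aund\,\pr_\mu\pr_r X\psi\,\pr_\nu\pr_r X\psi+\tfrac12|q|^2v^\aund\,\pr_\mu X\psi\,\pr_\nu\pr_r X\psi+\tfrac14\pr_r(|q|^2v^\aund)\,\pr_\mu X\psi\,\pr_\nu X\psi\Big].
\]
$S_\aund$ is a nonnegative tensor for $\aund=1,3,4$ (it equals $TT$, $\Zhat\Zhat$, $O$), and $\II^\aund\ge0$. So I complete the square in the "variable" $\pr_r$:
\[
\II\,|\pr\pr_rX\psi|^2_{S}+\tfrac12|q|^2v\,\langle\pr X\psi,\pr\pr_rX\psi\rangle_S\ \ge\ -\frac{(|q|^2v)^2}{16\,\II}\,|\pr X\psi|_S^2 .
\]

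\textbf{Step 3: reassemble.} Reading Lemma \ref{lemma:IntegrationbypartsI} backwards turns the remaining $|\pr X\psi|^2_{S_\aund}$ terms into $\SS_\aund\psi\c\psi_\LL$, again modulo divergences. This gives \eqref{eq:K-Jt14} with $\Bk_{\widetilde K}$ collecting all boundary currents produced along the way.

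\textbf{Expected obstacle: acceptability of the boundary terms.} The delicate point is checking that these boundary terms are acceptable in the sense of Definition \ref{Def:acceptable}, near $r_+$ (with the $|\ln(r-r_+)|$ loss) and at infinity. I would check this using $h\sim r^5$, the logarithmic behavior of $h$ at the horizon, and the fact that the coefficients $\II^\aund$ inherit the $\De$ degeneracy of $\AAt^\aund$, so that $\pr_r$ appears only in the combination $\Rhat$. A further check is that the division by $\II^\aund$ is harmless wherever $\II^\aund$ vanishes; for this I would require $v^\aund$ to vanish at least as fast there.
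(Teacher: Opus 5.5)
Your argument is correct and is essentially the paper's, only reordered. The paper completes the square directly at the $\SS$-level: it writes $I+\Kt=\Ik+\Kk$ with $\Phi_i=\pr_r\psi+\frac{|q|^2v^i}{4\II^i}\psi$ and $\Ik=\sum_{i=1,3,4}\II^i\,\SS_i\Phi_i\c(\Phi_i)_\LL$, so that $\Kk$ is already the right-hand side of \eqref{eq:K-Jt14}. Lemmas \ref{lemma:IntegrationbypartsI} and \ref{Lem:sumofsquares} are then used only once, to see that $\Ik\ge 0$ modulo an acceptable divergence, which spares you the round trip through quadratic forms and back.

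There are two small slips.

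The hypothesis is that $|q|^2v^\aund$, not $v^\aund$, depends only on $r$. This is what you actually use, and it is needed both for $\SS_4=\OO$ to commute with the coefficients and for Lemma \ref{lemma:IntegrationbypartsI} to apply.

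Since $v^{(\aund}\LL^{\bund)}$ carries a factor $\frac12$, the cross terms in your Step 1 are $\frac14|q|^2v^\aund(\SS_\aund\psi\c\pr_r\psi_\LL+\pr_r\psi_\aund\c\psi_\LL)$, not $\frac12|q|^2v^\aund(\cdots)$. Your Step 2 coefficient $\frac12|q|^2v^\aund$, and hence the constant $\frac1{16}$, is the correct one.
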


\begin{proof}
Since $M^{\aund\bund}= v^{(\aund}\LL^{\bund)}\pr_r$ with $\LL^\bund$ being constants, using $[\pr_r,\SS_\aund]=0$ we compute
\begin{equation}\lab{eq:divofM11}
\bsplit
&\frac 1 4 |q|^2  \Div ((\psia\c\psib) M^{\aund\bund})\\
&\quad=\frac14|q|^2\Div\left[\Big(\big(v^\aund\psia\big)\c\big(\LL^\bund\psib\big)\Big)\pr_r\right]=\frac14\pr_r\Big(\big(|q|^2v^\aund\psia\big)\c\big(\LL^\bund\psib\big)\Big)\\
&\quad=\frac14|q|^2\Big(\big(v^\aund\SS_\aund\pr_r\psi\big)\c\big(\LL^\bund\psib\big)+\big(v^\aund\psia\big)\c\big(\LL^\bund\SS_\bund\pr_r\psi\big)\Big)+\frac14\psi_{\LL}\pr_r(|q|^2v^\aund)\psia.
\end{split}
\end{equation}
Choosing $v^\aund=(v^1,0,v^3,v^4)$, we write
\beq
\lab{eq:ISplusDivM11}
\II^\aund\SS_\aund\pr_r\psi\c\LL^\bund\SS_\bund\pr_r\psi    +\frac 1 4 |q|^2  \Div ((\psia\c\psib) M^{\aund\bund})=\Ik+\Kk
\eeq
where
\beq\lab{eq:BkIK}
\bsplit
\Ik&:=\II^\aund\Big(\SS_\aund\pr_r\psi+\frac14\frac{|q|^2v^\aund}{\II^\aund}\SS_\aund\psi\Big)\c\Big((\pr_r\psi)_{\LL}+\frac14\frac{|q|^2v^\aund}{\II^\aund}\psi_{\LL}\Big),\\
\Kk&:=\Big(\frac14\pr_r(|q|^2v^\aund)-\frac{(|q|^2v^\aund)^2}{16\II^\aund}\Big)\SS_\aund\psi\c\psi_\LL.
\end{split}
\eeq
As for the term $\Ik$, by applying Lemma \ref{lemma:IntegrationbypartsI} to $\AA^\aund=\SS_i^\aund, \BB^\aund=\LL^\aund, \pr_r\psi+\frac14\frac{|q|^2v^i}{\II^i}\psi$ for $i=1,3,4$, 
we derive by using the fact that $\II^1,\II^3,\II^4\geq 0$
\beq\lab{eq:controlofIk1}
\bsplit
\Ik-|q|^2\Div\Bk_{\widetilde{K}}&=\sum_{i=1,3,4}\II^iS_i^{\mu\nu}S_{\LL}^{\a\b}\pr_\a\pr_\mu(\pr_r\psi+\frac{1}{4}\frac{|q|^2v^i}{\II^i}\psi)\pr_\b\pr_\nu(\pr_r\psi+\frac{1}{4}\frac{|q|^2v^i}{\II^i}\psi)\\
&=\sum_{i=1,3,4}\II^iS_i^{\mu\nu}S_{\LL}^{\a\b}\pr_\a\pr_\mu\Phi_i\pr_\b\pr_\nu\Phi_i\geq 0
\end{split}
\eeq
where $\Phi_i=\pr_r\psi+\frac{1}{4}\frac{|q|^2v^i}{\II^i}\psi$ 
and $\Bk_{\widetilde{K}}$ is an acceptable boundary term. Putting \eqref{eq:ISplusDivM11}, \eqref{eq:BkIK} and \eqref{eq:controlofIk1} together finishes the proof of \eqref{eq:K-Jt14}.
\end{proof}


\subsection{Step 2. Setup of Riccati inequalities}
\lab{section:Riccati}

Before deriving our Riccati inequalties we summarize the results obtained so far.

We recall that $|q|^2\EE=\Pt+\It+\Jt+\Kt$, see \eqref{eq:separation-EE-I-J-K}. Using Propositions \ref{Prop:positivityP} and \ref{Prop:positivityI}, we see that
\beq\lab{eq:defofH}
\int_{\DD(\tau_1,\tau_2)}\Big(\EE+\Div \Bk \Big)\geq\int_{\DD(\tau_1,\tau_2)}\frac{1}{|q|^2}(P+H),\quad H:=I+\Kt+\Jt.
\eeq
where $\Bk$ is an acceptable boundary term in the sense of Definition \ref{Def:acceptable}.

Using Proposition \ref{prop:I+Kt}, we obtain that modulo acceptable spacetime divergences
\beq\lab{eq:controlofH}
H \geq \HH^\aund\SS_\aund\psi\c\psi_{\LL}
\eeq
where, with $v^\aund=(v^1, 0,v^3,v^4)$ defined in Proposition \ref{prop:I+Kt},
\beq\lab{eq:defofHHhat}
\bsplit
\HH^{1}&=\frac14\pr_r\big(|q|^2v^1\big)-\frac{1}{16}\frac{(|q|^2v^1)^2}{\II^1}+\widetilde{\VV}^{1}\\
\HH^{3}&=\frac14\pr_r\big(|q|^2v^3\big)-\frac{1}{16}\frac{(|q|^2v^3)^2}{\II^3}+\widetilde{\VV}^{3}\\\HH^{4}&=\frac14\pr_r\big(|q|^2v^4\big)-\frac{1}{16}\frac{(|q|^2v^4)^2}{\II^4}+\widetilde{\VV}^{4}\\\HH^{2}&=\widetilde{\VV}^2.
\end{split}
\eeq
Therefore, to show the positivity of $\int_{\DD(\tau_1,\tau_2)}\EEt$, it suffices to show the positivity of
$ \int_{\DD(\tau_1,\tau_2)}|q|^{-2}(P+\HH^\aund\SS_\aund\psi\c\psi_{\LL})$. 
In the next two propositions we show that the positivity follows from the following Riccati inequalitites.

\begin{proposition}
\lab{Prop:finalHardysmalla}
Let $\HH^\aund$ be defined as above in \eqref{eq:defofHHhat}. For $|a|/m\leq 0.5$, let $v^\aund=(v^1, 0, 0, v^4)$ be continuous functions (i.e. $\HH^3=\VVt^3$). If for $r\in[r_+, r_++0.01m]\cup[r_++0.7m,r_++2.4m]$
\beq
\lab{eq:Prop-finalHardy0.5cpt}
\HH^1+\JtP_{c}^1+a^2(\HH^3+\JtP_{c}^3)_-\geq0 \quad \text{and}\quad \HH^4+\JtP_{c}^4+(\HH^3+\JtP_{c}^3)_-\geq0,
\eeq
where $x_-=\min\{x, 0\}$, and for $r\in[r_++0.01m, r_++0.7m]\cup [r_++2.4m, \infty)$
\beq
\lab{eq:Prop-finalHardy0.5large}
\HH^1-\frac{3m}{2}|\HH^2|+a^2\HH^3_-+\JtP^1_{f}\geq 0\quad \text{and}\quad \HH^4-\frac{1}{6m}|\HH^2|+\HH^3_-+\JtP^4_{f}\geq0 ,
\eeq
then we have
\[
\int_{\DD(\tau_1,\tau_2)}\frac{1}{|q|^2}(P+\HH^\aund\SS_\aund\psi\c\psi_{\LL}-|q|^2\Div{\Bk_{\HH}})\geq 0
\]
where $\Bk_{\HH}$ is an acceptable boundary term in the sense of Definition \ref{Def:acceptable}.
\end{proposition}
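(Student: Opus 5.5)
The plan is to localize in $r$ and apply the two parts of Proposition \ref{Prop:refinedP} on complementary regions. Set $K_c:=[r_+, r_++0.01m]\cup[r_++0.7m, r_++2.4m]$, which contains the trapping interval $[\rhat_1,\rhat_2]$ for $|a|/m\le 0.5$, and let $K_f$ be its complement in $[r_+,\infty)$.

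\textbf{Step 1: reduction to pointwise coefficient inequalities.} Split $P=\mathbf{1}_{K_c}P+\mathbf{1}_{K_f}P$. The identities behind Proposition \ref{Prop:refinedP} are pointwise in $r$ modulo acceptable divergences, since the integrations by parts are only in the $T,\Zhat,\th$ directions. So they can be applied with the weight $f(r)$ replaced by the indicators. On $K_c$, part 1 absorbs the mixed term $\HH^2T\Zhat\psi\c\psi_\LL=\widetilde{\VV}^2T\Zhat\psi\c\psi_\LL$ into $P$. The remainder is $(\HH^\aund+\JtP_{c}^\aund)\SS_\aund\psi\c\psi_\LL$ with no $\aund=2$ component, because $\JtP_c^2=0$ cancels $\HH^2$. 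On $K_f$ I would keep $\HH^2$ and use part 2, which supplies $\JtP_f^1$ and $\JtP_f^4$. Part 2 is stated only for $r\ge r_++3.5m$, so I would need to check that the same completing-the-square argument in its proof still gives $\JtP_f\ge$ what is needed on $[r_++0.01m,r_++0.7m]\cup[r_++2.4m,r_++3.5m]$ when $|a|/m\le0.5$. There the quantity $\TT_{-a}$ stays away from zero and the same structure should persist; I expect this to be a Mathematica check.

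\textbf{Step 2: turning each $\SS_\aund\psi\c\psi_\LL$ into squares.} Since $\psi_\LL=m^2TT\psi+\Zhat\Zhat\psi$, I would integrate by parts in $T$ and $\Zhat$, which commute and whose coefficients depend only on $r$. Modulo acceptable divergences this gives
\[
TT\psi\c\psi_\LL\simeq m^2|TT\psi|^2+|T\Zhat\psi|^2,\qquad \OO\psi\c\psi_\LL\simeq -\big(m^2 O^{\a\b}\pr_\a T\psi\,\pr_\b T\psi+O^{\a\b}\pr_\a \Zhat\psi\,\pr_\b\Zhat\psi\big).
\]
For the $\OO$ term I would use the sign convention of $\OO$, i.e.\ write $-\OO$ as a positive form via Proposition \ref{prop:decompose-square} and integrate by parts once more. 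Here the assumption on the non-axisymmetric part enters: it is what makes the $\OO$-form dominate the zeroth order terms. Similarly
\[
\Zhat\Zhat\psi\c\psi_\LL\simeq m^2|T\Zhat\psi|^2+|\Zhat\Zhat\psi|^2,\qquad T\Zhat\psi\c\psi_\LL\simeq m^2\, T T\psi\c T\Zhat\psi+\Zhat T\psi\c\Zhat\Zhat\psi .
\]
For $\aund=3$, if $\HH^3+\JtP_c^3\ge0$ the term is simply dropped. Otherwise I would use \eqref{eq:identityZhat} to write $\Zhat=a\cos^2\th\,T+|q|\sin\th\, e_2$ and bound $|\Zhat X|^2\le 2a^2|TX|^2+2|q|^2|e_2X|^2$, with constants tuned so the stated factors come out. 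This splits the negative part between the $TT$-square, weighted by $a^2$, and the $O$-square. That produces exactly the combinations $\HH^1+\JtP_c^1+a^2(\HH^3+\JtP_c^3)_-$ and $\HH^4+\JtP_c^4+(\HH^3+\JtP_c^3)_-$, which are nonnegative on $K_c$ by \eqref{eq:Prop-finalHardy0.5cpt}.

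\textbf{Step 3: the mixed term on $K_f$.} Here I would apply Cauchy--Schwarz with weight $\lambda=\tfrac{3m}{2}$:
\[
|\HH^2|\,|TT\psi\c T\Zhat\psi|\,m^2\le |\HH^2|\Big(\tfrac{\lambda}{2}m^2|TT\psi|^2+\tfrac{1}{2\lambda}m^2|T\Zhat\psi|^2\Big),
\]
and then bound $|T\Zhat\psi|^2$ by the $O$-square plus $a^2|TT\psi|^2$ through the same $\Zhat$ decomposition. The analogous step handles $\Zhat T\psi\c\Zhat\Zhat\psi$. Choosing the weights to land exactly on the constants $\tfrac{3m}{2}$ and $\tfrac{1}{6m}$ of \eqref{eq:Prop-finalHardy0.5large} is where the bookkeeping must be done carefully, and this is the step I expect to be the main obstacle. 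The leftover terms, $|T\Zhat\psi|^2$ and $|\Zhat\Zhat\psi|^2$ with nonnegative coefficients, are discarded. Summing the two regions, all the boundary contributions are acceptable in the sense of Definition \ref{Def:acceptable}; together they form $\Bk_\HH$.
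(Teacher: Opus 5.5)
Your architecture matches the paper's. You localize in $r$, absorb $\HH^2T\Zhat\psi\c\psi_\LL$ into $P$ via \eqref{eq:TZhatPsi} on $K_c$, and use \eqref{eq:Psi1larger} on $K_f$. Lemmas \ref{lemma:IntegrationbypartsI} and \ref{Lem:sumofsquares} then reduce everything to a pointwise quadratic form in $\pr X\psi$, $X\in\{mT,\Zhat\}$.

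\textbf{The gap is in Step 3.} You flagged it as the main obstacle, and the route you describe cannot land on the stated constants. For each $X$ set $u=TX\psi$, $z=\Zhat X\psi$, $w=|q|e_2X\psi$, $e=|q|e_1X\psi$. By \eqref{eq:identityZhat}, $z=a\cos^2\th\,u+\sin\th\,w$, and on $K_f$ the form is
\[
Au^2+\HH^2uz+\HH^3z^2+D(w^2+e^2),\qquad A=\HH^1+\JtP^1_f,\quad D=\HH^4+\JtP^4_f .
\]
Suppose you split $|\HH^2uz|\le|\HH^2|(\frac\lambda2u^2+\frac1{2\lambda}z^2)$ with a constant $\lambda$ and only afterwards use $z^2\le a^2u^2+w^2$. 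This costs $\frac1{2\lambda}|\HH^2|$ on $w^2$ and $(\frac\lambda2+\frac{a^2}{2\lambda})|\HH^2|$ on $u^2$. Matching $\frac1{6m}$ forces $\lambda\ge3m$, hence a $u^2$-loss of at least $(\frac{3m}{2}+\frac{a^2}{6m})|\HH^2|$. In the admissible boundary case $\HH^3=0$, $A=\frac{3m}{2}|\HH^2|$, $D=\frac{1}{6m}|\HH^2|$ (with $a\neq0$), no constant $\lambda$ leaves both coefficients nonnegative. So you would only prove the proposition under a strictly stronger hypothesis.

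\textbf{Fix.} Decompose $z$ inside the cross term first, and split with a $\th$-dependent weight:
\[
\HH^2uz\ge-|\HH^2|\Big(|a|\cos^2\th+\frac{3m}{2}\sin^2\th\Big)u^2-\frac{|\HH^2|}{6m}w^2\ge-\frac{3m}{2}|\HH^2|u^2-\frac{|\HH^2|}{6m}w^2 ,
\]
using $|a|\le\frac{3m}{2}$. Combine this with $\HH^3z^2\ge\HH^3_-(a^2u^2+w^2)$, which follows from $z^2\le a^2\cos^2\th\,u^2+w^2$ (Cauchy--Schwarz with weights $\cos^2\th,\sin^2\th$). The form is then at least
\[
\Big(A-\tfrac{3m}{2}|\HH^2|+a^2\HH^3_-\Big)u^2+\Big(D-\tfrac{1}{6m}|\HH^2|+\HH^3_-\Big)w^2+De^2,
\]
which is exactly \eqref{eq:Prop-finalHardy0.5large}. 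On $K_c$ the same computation without the cross term, with the $\JtP_c$ coefficients, gives \eqref{eq:Prop-finalHardy0.5cpt}.

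Your Step 2 bound with factors $2$ is likewise too lossy there, and no $\th$-independent weights give factor $1$ on both pieces. With these corrections, your Cauchy--Schwarz route works in the variables $u,w,e$ rather than the paper's $u,z,e$. It is a valid and somewhat more transparent substitute for the paper's test $F_1,F_3\ge0$, $F_1F_3-F_2^2\ge0$, since it avoids the $\sin^{-2}\th$ coefficients.

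\textbf{Two further points.}

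First, the sign in Step 2 is wrong. Lemma \ref{lemma:IntegrationbypartsI} with $\AA=(0,0,0,1)$, $\BB=\LL$, followed by Lemma \ref{Lem:sumofsquares}, gives
\[
\OO\psi\c\psi_\LL\simeq+\sum_X O^{\mu\nu}\pr_\mu X\psi\,\pr_\nu X\psi\ge0 .
\]
So there is nothing further to integrate by parts. The non-axisymmetric hypothesis also plays no role in this proposition: no lower-order terms remain after the reduction, and that hypothesis enters only through Proposition \ref{Prop:refinedP}.

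Second, your caution about the range of \eqref{eq:Psi1larger} is justified. Proposition \ref{Prop:refinedP}(2) is established only for $r\ge r_++3.5m$, and its proof uses $r>5m$. Yet the paper's proof invokes \eqref{eq:Psi1larger} on all of $[r_++0.01m,r_++0.7m]\cup[r_++2.4m,\infty)$ without further justification. Treat this as an input you must take as given, not a routine check. On $[r_++0.01m,r_++0.7m]$ one has $\TT_{-a}<0$. Also, contrary to your remark, $K_c$ need not contain $[\rhat_1,\rhat_2]$: for $|a|=0.5m$, $\rhat_1\approx2.35m<r_++0.7m\approx2.57m$.
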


\begin{proposition}
\lab{Prop:finalHardylargea}
Let $\HH^\aund$ be defined as above in \eqref{eq:defofHHhat}. For $0.5\leq |a|/m\leq 0.75$, let $v^\aund=(v^1, 0, v^3, v^4)$ be continuous functions. If for $r\in[r_+,r_++0.01m]\cup[r_++0.2m, r_++2.4m]$
\beq
\lab{eq:Prop-finalHardy0.8cpt}
\HH^1+\JtP_{c}^1\geq0,  \qquad  \HH^3+\JtP_{c}^3\geq0, \qquad \HH^4+\JtP_{c}^4\geq0,
 \eeq
 and for $r\in[r_++0.01m, r_++0.2m]$
 \beq\lab{eq:Prop-finalHardy0.8extra}
 \bsplit
 &\HH^1-mf(r)|\VVt^2|\geq 0,\qquad \HH^3-\frac{g(r)}{m}|\VVt^2|\geq 0,\qquad \HH^4-\frac{h(r)}{m}|\VVt^2|\geq0,
 \end{split}
 \eeq
where $f(r)\geq |a|/m, \ g(r), h(r)\geq 0$ and $f\c(g+h) \geq 1/4$, and for $r\in[r_++2.4m, \infty)$
\beq
\lab{eq:Prop-finalHardy0.8large}
\bsplit
&\HH^1+\JtP^1_{f}-2m|\HH^2|\geq 0,\\
&\HH^3-\VVt^3+\frac{3}{10}\big(\VVt^4-\frac{1}{8m}|\HH^2|+\VVt^3_-\big)\geq 0,\\
&\HH^4+\JtP_{f}^4-\VVt^4+\frac{7}{10}\big(\VVt^4-\frac{1}{8m}|\HH^2|+\VVt^3_-\big)\geq0 ,
\end{split}
\eeq
then we have
\[
\int_{\DD(\tau_1,\tau_2)}\frac{1}{|q|^2}(P+\HH^\aund\SS_\aund\psi\c\psi_{\LL}-|q|^2\Div{\Bk_{\HH}})\geq0
\]
where $\Bk_{\HH}$ is an acceptable boundary term in the sense of Definition \ref{Def:acceptable}.
\end{proposition}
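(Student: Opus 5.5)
The strategy is to reduce everything, modulo acceptable divergences, to a pointwise-in-$r$ nonnegative quadratic form in the second derivatives $\pr Xpsi$, $X\in\{mT,\Zhat\}$. With $\LL=(m^2,0,1,0)$ we have $\psi_\LL=S_\LL^{\mu\nu}\pr_\mu\pr_\nu\psi$ and $S_\LL^{\mu\nu}=m^2T^\mu T^\nu+\Zhat^\mu\Zhat^\nu$. Applying Lemma \ref{lemma:IntegrationbypartsI} (with $\BB=\LL$, so $\BB^4=0$) and then Lemma \ref{Lem:sumofsquares}, we can write, modulo an acceptable divergence $|q|^2\Div\Bk$,
\[
\HH^\aund\SS_\aund\psi\c\psi_\LL=\sum_{X\in\{mT,\Zhat\}}\Big(\HH^1|TX\psi|^2+\HH^2\, TX\psi\c\Zhat X\psi+\HH^3|\Zhat X\psi|^2+\HH^4 O^{\a\b}\pr_\a X\psi\,\pr_\b X\psi\Big).
\]
The same representation holds for the terms $\JtP^\aund\psia\c\psi_\LL$ produced by Proposition \ref{Prop:refinedP}. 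A key structural point is that all the integrations by parts (in Lemmas \ref{lemma:IntegrationbypartsP} and \ref{lemma:IntegrationbypartsI}, and in the proof of Proposition \ref{Prop:refinedP}) act only in directions tangent to the level sets of $r$, since $S_\aund^{r\mu}=0$. Consequently these identities can be multiplied by any function of $r$ without generating error terms. I will therefore take a partition of unity in $r$ adapted to the three regions of the statement (compact/trapping, intermediate $[r_++0.01m,r_++0.2m]$, far $[r_++2.4m,\infty)$). On each piece I will use the corresponding localized version of Proposition \ref{Prop:refinedP}, and I will split $P=\chi_cP+\chi_eP+\chi_fP$ with each piece nonnegative.

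\emph{Compact region.} Here I will use the localized part 1 of Proposition \ref{Prop:refinedP}: $\chi_c(P+\VVt^2T\Zhat\psi\c\psi_\LL)\ge\chi_c\JtP_c^\aund\psia\c\psi_\LL$, with $\JtP_c^2=0$. Since $\HH^2=\VVt^2$, the mixed term disappears. What remains is the diagonal form with coefficients $\HH^1+\JtP_c^1$, $\HH^3+\JtP_c^3$, $\HH^4+\JtP_c^4$ on $|TX\psi|^2$, $|\Zhat X\psi|^2$ and $O(\pr X\psi,\pr X\psi)$, which is nonnegative by \eqref{eq:Prop-finalHardy0.8cpt}.

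\emph{Intermediate region.} There is no trapping, and I simply drop $P\ge0$. I will absorb the mixed term $\VVt^2\,TX\psi\c\Zhat X\psi$ exactly as in the proof of Proposition \ref{Prop:positivityI}. Writing $|q|e_2=\sin^{-1}\th\,\Zhat-a\cos^2\th\sin^{-1}\th\,T$ and $O=|q|^2(e_1e_1+e_2e_2)$, the coefficient matrix in $(TX\psi,\Zhat X\psi)$ has diagonal entries bounded below by $mf|\VVt^2|+\frac{a^2\cos^4\th}{\sin^2\th}\frac{h}{m}|\VVt^2|$ and $\frac{g}{m}|\VVt^2|+\frac{1}{\sin^2\th}\frac hm|\VVt^2|$, using \eqref{eq:Prop-finalHardy0.8extra}. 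Its determinant is then nonnegative by $f(g+h)\ge\frac14$. The $\cos^2\th$ cross terms are controlled using $f\ge|a|/m$, in the same way as the corresponding step in the proof of Proposition \ref{Prop:positivityI}.

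\emph{Far region.} I will use the localized part 2 of Proposition \ref{Prop:refinedP}, which converts $\chi_fP$ into the positive gains $\JtP_f^1$ and $\JtP_f^4$. The mixed term is then treated by Cauchy--Schwarz, $|\HH^2|\,|TX\psi||\Zhat X\psi|\le m|\HH^2||TX\psi|^2+\frac{1}{4m}|\HH^2||\Zhat X\psi|^2$ (with slack, hence the constant $2m$ in the first inequality of \eqref{eq:Prop-finalHardy0.8large}). The $|\Zhat X\psi|^2$ contribution, together with a possibly negative $\VVt^3$, must be paid for by the $O$-term. For this I use $|\Zhat X\psi|^2\le 2a^2\cos^4\th|TX\psi|^2+2|q|^2|e_2X\psi|^2$, which lets the $O$-coefficient control $\Zhat$ at the cost of some $T$-control. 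I then distribute the resulting quantity $\VVt^4-\frac1{8m}|\HH^2|+\VVt^3_-$ between the $\Zhat\Zhat$ and $O$ slots with weights $\frac3{10}$ and $\frac7{10}$. Nonnegativity of each diagonal entry is exactly \eqref{eq:Prop-finalHardy0.8large}.

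\emph{Main obstacle.} I expect the far-region bookkeeping to be the hardest step: checking that the $\frac3{10}/\frac7{10}$ redistribution together with the $T$-loss from the $\cos^4\th$ term is consistent. A secondary, more technical check is that localizing Proposition \ref{Prop:refinedP} by $r$-cutoffs is legitimate. The argument for that is that its proof consists of pointwise completion of squares plus $S$-direction divergences, whose boundary contributions are acceptable in the sense of Definition \ref{Def:acceptable}. One also needs that the Poincaré-type hypotheses of Proposition \ref{Prop:refinedP} survive multiplication by nonnegative functions of $r$; this holds because they are assumed for arbitrary nonnegative $f(r)$.
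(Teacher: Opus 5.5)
Your architecture matches the paper's. You localize in $r$, which is legitimate because every integration by parts runs along the $S_\aund$-directions; use sharp indicators of the three intervals rather than an overlapping smooth partition, since the hypotheses are only available on those intervals. You absorb $\VVt^2T\Zhat\psi\cdot\psi_\LL$ into $P$ via part 1 of Proposition \ref{Prop:refinedP} on the compact region, drop $P$ on $[r_++0.01m,r_++0.2m]$, and use part 2 on the far region. You then reduce, via Lemmas \ref{lemma:IntegrationbypartsI} and \ref{Lem:sumofsquares}, to pointwise nonnegativity of $\sum_\aund(\HH^\aund+\JtP^\aund)S_\aund^{\mu\nu}$ in the frame $(T,\Zhat,|q|e_1)$. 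The compact and intermediate regions are fine.

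The far-region step, however, fails as you set it up. Adding the last two inequalities of \eqref{eq:Prop-finalHardy0.8large} gives only $\HH^3+\HH^4+\JtP_f^4\ge\max\{\VVt^3,0\}+\frac{1}{8m}|\HH^2|$. Combined with the first one, this yields $(\HH^1+\JtP_f^1)(\HH^3+\HH^4+\JtP_f^4)\ge 2m\cdot\frac{1}{8m}|\HH^2|^2=\frac14|\HH^2|^2$. That is exactly the discriminant needed at $\th=\pi/2$, where $|q|e_2=\Zhat$, with zero margin. Hence the Cauchy--Schwarz weights must be exactly $2m$ and $\frac{1}{8m}$, and the entire $O$-coefficient must go into the $\Zhat$ slot. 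Your split with weights $m$ and $\frac{1}{4m}$ demands twice the available $\Zhat$ budget. Your bound $|\Zhat X\psi|^2\le 2a^2\cos^4\th|TX\psi|^2+2|q|^2|e_2X\psi|^2$ halves the contribution of $\HH^4+\JtP_f^4$. When the second and third hypotheses are saturated and $\VVt^3\le 0$, either loss alone makes your $\Zhat$ coefficient negative. The ``slack'' you read into the constant $2m$ does not exist.

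The paper instead computes the $(T,\Zhat)$ determinant exactly, substituting $|q|e_2=\frac{\Zhat}{\sin\th}-\frac{a\cos^2\th}{\sin\th}T$. With $H_1=\HH^1+\JtP_f^1$ and $H_4=\HH^4+\JtP_f^4$ it equals
\[
H_1\big(\HH^3+H_4\big)-\tfrac14(\HH^2)^2+\frac{\cos^2\th}{\sin^2\th}\,H_4\big(H_1+a^2\cos^2\th\,\HH^3+a\HH^2\big).
\]
The first group is nonnegative by the computation above. The second needs $H_4\ge 0$ and $H_1+a^2\cos^2\th\,\HH^3+a\HH^2\ge m|\HH^2|+a^2\big(\frac{7}{10}\VVt^3_--\frac{3}{10}\VVt^4\big)\ge 0$. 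For this the paper invokes two properties of the specific coefficients that are \emph{not} among the hypotheses: $\VVt^4>0$ and $m|\VVt^2|\ge 0.3a^2\VVt^4-0.7a^2\VVt^3_-$ on $r\ge r_++2.4m$, both checked with Mathematica. They are genuinely needed for the pointwise argument. If all three hypotheses are saturated, $\VVt^3\le 0$ and $a^2\VVt^4\gg m|\VVt^2|$, then the first group vanishes while the second is negative away from the equator. This is precisely the ``$T$-loss from the $\cos^4\th$ term'' you flagged, and your proposal supplies nothing to pay for it.

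Two side remarks. The same exact determinant settles the intermediate region, where $f(g+h)\ge\frac14$ and $f\ge|a|/m$ do suffice. Part 2 of Proposition \ref{Prop:refinedP} is stated only for $r\ge r_++3.5m$, so using $\JtP_f^\aund$ on $[r_++2.4m,r_++3.5m]$, as both you and the paper do, needs a separate check.
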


\subsubsection{Proof of Propositions \ref{Prop:finalHardysmalla} and \ref{Prop:finalHardylargea}}
{ \bf Proof under the condition \eqref{eq:Prop-finalHardy0.5cpt} (or \eqref{eq:Prop-finalHardy0.8cpt})
} We only prove for the condition \eqref{eq:Prop-finalHardy0.5cpt} as the other case can be proved in the same way. Using \eqref{eq:TZhatPsi} in Proposition \ref{Prop:refinedP}, we obtain (modulo acceptable spacetime divergence)
\begin{align*}
\int_{\DD(\tau_1,\tau_2)}\frac{1}{|q|^2}(P+\HH^2\psi_2\c\psi_{\LL})=\int_{\DD(\tau_1,\tau_2)}\frac{1}{|q|^2}(P+\VVt^2T\Zhat\psi\c\psi_{\LL})\geq \int_{\DD(\tau_1,\tau_2)}\frac{1}{|q|^2}(\JtP_{c}^\aund\psia\c\psi_{\LL})
\end{align*}
and thus
\begin{align*}
&\int_{\DD(\tau_1,\tau_2)}\frac{1}{|q|^2}(P+\HH^\aund\psia\c\psi_{\LL})\\
&\quad\geq \int_{\DD(\tau_1,\tau_2)}\frac{1}{|q|^2}\Big((\HH^1+\JtP^1)TT\psi+(\HH^3+\JtP^3)\Zhat\Zhat\psi+(\HH^4+\JtP^4)\OO\psi\Big)\c\psi_{\LL}.
\end{align*}
By applying Lemma \ref{lemma:IntegrationbypartsI} to $\AA^\aund=(\HH^1+\JtP^1_c,0, \HH^3+\JtP^3_c, \HH^4+\JtP^4_c), \BB^\aund=\LL^\aund, \psi$, we have
\[
P+\HH^\aund\psia\c\psi_{\LL}-|q|^2\Div{\Bk_{\HH}}\geq\sum_{\aund=1,3,4}(\HH^\aund+\JtP_c^\aund) S_{\aund}^{\a\b}S_{\LL}^{\mu\nu}\pr_\a\pr_{\mu}\psi\pr_{\b}\pr_{\nu}\psi.
\]
By Lemma \ref{Lem:sumofsquares}, we see that the positivity of $\sum_{\aund=1,3,4}(\HH^\aund+\JtP_c^\aund)S_{\aund}^{\a\b}S_{\LL}^{\mu\nu}\pr_\a\pr_{\mu}\psi\pr_{\b}\pr_{\nu}\psi$ is equivalent to that of $\sum_{\aund=1,3,4}(\HH^\aund+\JtP_c^\aund)S_{\aund}^{\a\b}$. Since $|q|e_2=\frac{\Zhat}{\sin\th}-\frac{a\cos^2\th}{\sin\th} T$, we write
\begin{align*}
\sum_{\aund=1,3,4}(\HH^\aund+\JtP_c^\aund) S_\aund^{\mu\nu}=\begin{pmatrix}
F_1
&F_2&F_3& F_4
\end{pmatrix}\begin{pmatrix}
T^\mu T^\nu\\
2T^{(\mu}\Zhat^{\nu)}\\
\Zhat^\mu\Zhat^\nu\\
(|q|e_1)^\mu(|q|e_1)^\nu
\end{pmatrix}
\end{align*}
where
\begin{align*}
F_1&=    \HH^1+\JtP_c^1+\frac{a^2\cos^4\th}{\sin^2\th}(\HH^4+\JtP_c^4),\quad F_2=-\frac{a\cos^2\th}{\sin^2\th}(\HH^4+\JtP_c^4),\\
 F_3&=\HH^3+\JtP^3_c+\frac{1}{\sin^2\th}(\HH^4+\JtP_c^4),\quad F_4=\HH^4+\JtP_c^4.
\end{align*}
It is clear that the second condition in \eqref{eq:Prop-finalHardy0.5cpt} gives $F_4\geq0$ and $F_3\geq0$. We now check that the two conditions in \eqref{eq:Prop-finalHardy0.5cpt} imply the following
\begin{align*}
F_1F_3-F_2^2&=(\HH^1+\JtP_c^1)(\HH^3+\JtP^3_c+\HH^4+\JtP_c^4)\\
&+\frac{\cos^2\th}{\sin^2\th}(\HH^4+\JtP_c^4)\big(\HH^1+\JtP_c^1+a^2\cos^2\th(\HH^3+\JtP^3_c)\big)\geq 0.
\end{align*}
Therefore, the nonnegativity of $\sum_{\aund=1,3,4}(\HH^\aund+\JtP_c^\aund)S_{\aund}^{\a\b}$ follows.

{ \bf Proof under the condition \eqref{eq:Prop-finalHardy0.8large} (or \eqref{eq:Prop-finalHardy0.8extra} or \eqref{eq:Prop-finalHardy0.5large})} We only prove for the condition \eqref{eq:Prop-finalHardy0.8large} as the other two cases can be proved in the same way. Using \eqref{eq:Psi1larger}
 in Proposition \ref{Prop:refinedP}, we obtain
\begin{align*}
\int_{\DD(\tau_1,\tau_2)}\frac{1}{|q|^2}(P+\HH^\aund\psia\c\psi_{\LL})\geq \int_{\DD(\tau_1,\tau_2)}\frac{1}{|q|^2}((\HH^\aund+\JtP_f^\aund)\psia\c\psi_{\LL})\quad \text{with}\quad \JtP_f^2=\JtP_f^3=0.
\end{align*}
Again, it suffices to prove that $ (\HH^\aund+\JtP_f^\aund)S_\aund^{\mu\nu}$ is nonnegative. Since $|q|e_2=\frac{\Zhat}{\sin\th}-\frac{a\cos^2\th}{\sin\th} T$, we write
\begin{align*}
(\HH^\aund+\JtP_f^\aund)S_\aund^{\mu\nu}=\begin{pmatrix}
F_1
&F_2&F_3& F_4
\end{pmatrix}\begin{pmatrix}
T^\mu T^\nu\\
2T^{(\mu}\Zhat^{\nu)}\\
\Zhat^\mu\Zhat^\nu\\
(|q|e_1)^\mu(|q|e_1)^\nu
\end{pmatrix}
\end{align*}
where
\begin{align*}
F_1&=    \HH^1+\JtP^1_f+\frac{a^2\cos^4\th}{\sin^2\th}(\HH^4+\JtP_f^4),\quad F_2=\frac12\HH^2-\frac{a\cos^2\th}{\sin^2\th}(\HH^4+\JtP_f^4),\\
 F_3&=\HH^3+\frac{1}{\sin^2\th}(\HH^4+\JtP_f^4),\quad F_4=\HH^4+\JtP_f^4.
\end{align*}
It is clear that the first and the third conditions in \eqref{eq:Prop-finalHardy0.8large} (combined with $\VVt^4>0$ for $r\geq r_++2.4m$) give $F_4\geq0$ and $F_1\geq0$. We now check that the three conditions in \eqref{eq:Prop-finalHardy0.8large} imply the following
\begin{align*}
F_1F_3-F_2^2&=(\HH^1+\JtP_f^1)(\HH^3+\HH^4+\JtP_f^4)-\frac14(\HH^2)^2\\
&+\frac{\cos^2\th}{\sin^2\th}(\HH^4+\JtP_f^4)\big(\HH^1+\JtP_f^1+a^2\cos^2\th\HH^3+a\HH^2\big)\\
&\geq \frac{\cos^2\th}{\sin^2\th}(\HH^4+\JtP_f^4)\Big(2m|\HH^2|+a^2\cos^2\th(\frac{7}{10}\VVt^3_--\frac3{10}\VVt^4)+a\HH^2\Big)\\
&\geq\frac{\cos^2\th}{\sin^2\th}(\HH^4+\JtP_f^4)(m|\HH^2|+a^2(\frac{7}{10}\VVt^3_--\frac3{10}\VVt^4))\geq 0
\end{align*}
where we use the three conditions in \eqref{eq:Prop-finalHardy0.8large} in the second step and the facts $m|\HH^2|=m|\VVt^2|\geq 0.3a^2\VVt^4-0.7a^2\VVt^3_-$ (which can be verified easily using Mathematica) for $r\geq r_++2.4m$ in the last step. This proves the nonnegativity of $ (\HH^\aund+\JtP_f^\aund)S_\aund^{\mu\nu}$ and thus the nonnegativity of $\sum_{\aund=1,3,4}(\HH^\aund+\JtP_f^\aund)S_{\aund}^{\a\b}$ follows.
\subsection{Step 3. Solve the Riccati inequalities}\lab{sec:solveRiccati}
In this section, we will solve the Riccati inequalities derived in Propositions \ref{Prop:finalHardysmalla} and \ref{Prop:finalHardylargea} for \textit{continuous} $v^\aund=(v^1, 0, 0,v^4)$ resp. $v^\aund=(v^1, 0, v^3,v^4)$ for $|a|/m\leq 0.5$ resp. $0.5\leq |a|/m\leq 0.75$.

\subsubsection{Case $0\leq |a|/m\leq 0.5$}
We solve the following Riccati ineuqalities for continuous $\widetilde{v}^1=\frac14|q|^2v^1$ and $\widetilde{v}^4=\frac14|q|^2v^4$
\beq\lab{eq:Hardyforv1v4leq0.5}
\pr_r\widetilde{v}^1-\frac{(\widetilde{v}^1)^2}{\II^1}+\WW^1\geq0,\qquad \pr_r\widetilde{v}^4-\frac{(\widetilde{v}^1)^2}{\II^4}+\WW^4\geq0
\eeq
where (by choosing $f(r)=\frac53,\ g(r)=\frac{m\AAt^3}{\AA^2[z]},\ h(r)=\frac3{20}$ in Proposition \ref{Prop:positivityI})
\[
\II^1=\AAt^1[z]-\frac{5m}{3}\AA^2[z],\quad \II^4=\AA^4[z]-\frac{3}{20m}\AA^2[z]
\]
and (see \eqref{eq:Prop-finalHardy0.5cpt} and \eqref{eq:Prop-finalHardy0.5large})
\begin{align*}
\WW^1&=\begin{cases}
\VVt^1+\JtP^1_c+a^2(\VVt^3+\JtP^3_c)_-\quad &r\in[r_+, r_++0.01m]\cup[r_++0.7m, r_++2.4m]\\
\VVt^1-\frac{3m}{2}|\VVt^2|+a^2\VVt^3_-+\JtP^1_f\quad &r\in[r_++0.01m, r_++0.7m]\cup[r_++2.4m,\infty)\end{cases}\\
 \WW^4&=\begin{cases}
 \VVt^4+\JtP^4_c+(\VVt^3+\JtP^3_c)_-\quad &r\in[r_+, r_++0.01m]\cup[r_++0.7m, r_++2.4m]\\
\VVt^4-\frac{1}{6m}|\VVt^2|+\VVt^3_-+\JtP^4_f\quad &r\in[r_++0.01m, r_++0.7m]\cup[r_++2.4m,\infty)\end{cases}.
\end{align*}
Introducing the translated variable $x=r-r_+$, we write the coefficients $\II^1, \II^4, \WW^1, \WW^4$ in terms of $x$. Our strategy is as follows.

\begin{enumerate}
\item {\bf Solve the Riccati equations for $x\in[0,0.01m]$.} Since $\II^1,\II^4=O(\De)$ and $\WW^1, \WW^4=O(\ln(\frac{\De}{m^2}))$ near the event horizon $r=r_+$, we set $\widetilde{v}^1(x)=\widetilde{v}^1_{1}(x)$ and $\widetilde{v}^4(x)=\widetilde{v}^4_{1}(x)$ where
\beq\lab{eq:wtv11v41}
\bsplit
\widetilde{v}^1_1(x)&\geq \frac{11}{10}c_1(r-r_+)\ln(\frac{\De}{m^2}),\quad c_1=(\frac{-\WW^1}{\ln(\frac{\De}{m^2})})|_{r=r_+},\\
\widetilde{v}^4_1(x)&\geq \frac{11}{10}d_1(r-r_+)\ln(\frac{\De}{m^2}),\quad d_1=(\frac{-\WW^4}{\ln(\frac{\De}{m^2})})|_{r=r_+}.
\end{split}
\eeq
We can check by using Mathematica that for $r\in[r_+, r_1^1=r_++\frac{1}{100}]$
\beq
\pr_r\widetilde{v}_1^1-\frac{(\widetilde{v}_1^1)^2}{\II^1}+\WW^1>0,\qquad \pr_r\widetilde{v}_1^4-\frac{(\widetilde{v}_1^4)^2}{\II^4}+\WW^4>0.
\eeq

\item {\bf Solve the Riccati equations for $x\geq 0.01m$.} We then solve the inhomogeneous Riccati inequalitites below, with the initial data at $r=r_1^1=r_++\frac{1}{100}$.
\beq\lab{eq:RiccatiIVP}
\begin{cases}
&\pr_r\widetilde{v}^1\geq\frac{(\widetilde{v}^1)^2}{\II^1}-\WW^1\\
&\widetilde{v}^1(r_1^1)=\widetilde{v}^1_1(r_++\frac{1}{100})
\end{cases},\qquad
\begin{cases}
&\pr_r\widetilde{v}^4\geq\frac{(\widetilde{v}^4)^2}{\II^4}-\WW^4\\
&\widetilde{v}^4(r_1^4)=\widetilde{v}^4_1(r_++\frac{1}{100})
\end{cases}.
\eeq
Our strategy to find \textit{continuous} solutions to the above two initial value probelms is:
\begin{itemize}
\item First partition the interval of $x$ into a convenient set of subintervals.
 \item On each subinterval, we define $\widetilde{v}^1,\ \widetilde{v}^4$ as affine functions.
 \item We verify that the functions $\widetilde{v}^1,\ \widetilde{v}^4$, such defined, satisfy the Riccati inequalities \eqref{eq:RiccatiIVP}, by using interval arithmetic.
 \item We continue the process until we reach an interval where there is a point $r_2^1$ (resp. $r_2^4$) at which the constructed affine functions are zero.
\end{itemize}

\item {\bf Obtain a compactly supported continuous solution.} We finally define
\beq
\widetilde{v}^1(r)=\begin{cases}
\widetilde{v}_1^1(r)\quad &r_+\leq r\leq r_1^1=r_++\frac{1}{100}\\
\widetilde{v}^1_2(r)\quad &r_1^1\leq r\leq r^1_2\\
0\quad &r\geq r_2^1
\end{cases},
\eeq
and
\beq
\widetilde{v}^4(r)=\begin{cases}
\widetilde{v}_1^4(r)\quad &r_+\leq r\leq r_1^4=r_++\frac{1}{100}\\
\widetilde{v}^4_2(r)\quad &r_1^4\leq r\leq r^4_2\\
0\quad &r\geq r_2^4
\end{cases},
\eeq
where $\widetilde{v}^1_1(x),\ \widetilde{v}^4_1(x)$ are defined as in \eqref{eq:wtv11v41} and $\widetilde{v}^1_2(x),\ \widetilde{v}^4_2(x)$ are solutions of the initial value problems
\eqref{eq:RiccatiIVP}
with $\widetilde{v}^1_2(r_2^1)=\widetilde{v}^4_2(r_2^4)=0$.
\end{enumerate}

\subsubsection{Case $0.5\leq |a|/m\leq 0.75$}
We solve the following Riccati inequalities for continuous $\widetilde{v}^1=\frac14|q|^2v^1$, $\widetilde{v}^3=\frac14|q|^2v^3$ and $\widetilde{v}^4=\frac14|q|^2v^4$
\beq\lab{eq:Hardyforv1v4leq0.75}
\pr_r\widetilde{v}^1-\frac{(\widetilde{v}^1)^2}{\II^1}+\WW^1=0,\qquad \pr_r\widetilde{v}^3-\frac{(\widetilde{v}^3)^2}{\II^3}+\WW^3=0,\qquad \pr_r\widetilde{v}^4-\frac{(\widetilde{v}^1)^2}{\II^4}+\WW^4=0
\eeq
where (choosing $f(r)=\frac{25}{24}, g(r)=\frac{4a^2}{15m^2}, h(r)=\frac{6}{25}-\frac{4a^2}{15m^2}$ for $r_+\leq r\leq r_++2.4m$ and $f(r)=2, g(r)=0, h(r)=\frac18$ for $r\geq r_++2.4m$ in Proposition \ref{Prop:positivityI})
\beq\lab{eq:IIhatSlargeastep}
\bsplit
&\II^1=\begin{cases}
\AAt^1[z]-\frac{25m}{24}\AA^2[z] \qquad &r\in[r_+, r_++2.4m]\\
\AAt^1[z]-2m\AA^2[z]\qquad &r\in[r_++2.4m, \infty)
\end{cases},\\
& \II^3=\begin{cases}
\AA^3-\frac{4a^2}{15m^3}\AA^2[z]\qquad &r\in[r_+, r_++2.4m]\\
\AA^3[z]\qquad &r\in[r_++2.4m, \infty)
\end{cases} ,\\
&\II^4=\begin{cases}
\AA^4[z]-\frac{\frac{6}{25}-\frac{4a^2}{15m^2}}{m}\AA^2[z]\qquad &r\in[r_+, r_++2.4m]\\
\AA^4[z]-\frac{1}{8m}\AA^2[z]\qquad &r\in[r_++2.4m, \infty)
\end{cases},
\end{split}
\eeq
and (see \eqref{eq:Prop-finalHardy0.8cpt}, \eqref{eq:Prop-finalHardy0.8extra} and \eqref{eq:Prop-finalHardy0.8large} and we choose $f(r)=\frac{25m}{44a}, g(r)=\frac{a}{3m}, h(r)=\frac{8a}{75m}$ in \eqref{eq:Prop-finalHardy0.8extra})
\begin{align*}
\WW^1&=\begin{cases}
\VVt^1+\JtP^1_c\quad &r\in[r_+, r_++0.01m]\cup[r_++0.2m, r_++2.4m]\\
\VVt^1-\frac{25m}{44a}m|\VVt^2|\quad & r\in[r_++0.01m, r_++0.2m]\\
\VVt^1-2m|\VVt^2|+\JtP^1_f\quad &r\in[r_++2.4m, \infty)\end{cases},\\
\WW^3&=\begin{cases}
\VVt^3+\JtP^3_c\quad &r\in[r_+, r_++0.01m]\cup[r_++0.2m, r_++2.4m]\\
\VVt^3-\frac{a}{3m^2}|\VVt^2|\quad & r\in[r_++0.01m, r_++0.2m]\\
\frac{3}{10}(\VVt^4-\frac{1}{8m}|\VVt^2|+\VVt^3_-)\quad &r\in[r_++2.4m,\infty)\end{cases},\\
 \WW^4&=\begin{cases}
 \VVt^4+\JtP^4_c\quad &r\in[r_+, r_++0.01m]\cup[r_++0.2m, r_++2.4m]\\
 \VVt^4-\frac{8a}{75m^2}|\VVt^2|\quad & r\in[r_++0.01m, r_++0.2m]\\
\frac{7}{10}(\VVt^4-\frac{1}{8m}|\VVt^2|+\VVt^3_-)+\JtP^4_f\quad &r\in[r_++2.4m,\infty)\end{cases}.
\end{align*}Then we follows the same strategy as in the case $|a|/m\leq 0.5$.

This finishes the proof of \eqref{eq:prop-I+Kt-short} in Proposition \ref{prop:I+Kt-short} with $\aund$ summed over $1,2,3$.

\subsection{Step 4. Control of $\OO\psi$}
 \lab{sec:OOpsi}
To recover the control of $|\Rhat^{\leq 1}\OO\psi|^2$ in Proposition \ref{prop:I+Kt-short}, we choose instead $\LL=(m^2,0,1,\sigma)$ where $\sigma$ is a sufficiently small constant. Now we analyze the term $\EEt$ with the choice $\LL=(m^2,0,1,\sigma)$. Compared to the choice of $\LL=(m^2,0,1,0)$, the following additional terms (modulo acceptable spacetime divergence) are generated
\beq\lab{eq:additionalEEt}
\bsplit
&\Pt:\quad \frac 1 2  \sigma h \Big(O^{\a\b}\D_\a \Psi_1 \D_\b\Psi_1 +\de_0m^2zO^{\a\b}\D_\a \Psi_2 \D_\b\Psi_2\Big)\\
&\qquad\geq \frac 1 4  \sigma h \Big(|\pr_\th\Psi_1|^2+\frac{1}{\sin^2\th}|Z\Psi_1|^2\Big)-\frac 1 2  \sigma h|aT\Psi_1|^2,\\
&\It: \quad (\AAt^\aund\pr_r\psia)\c\sigma\pr_r\OO\psi=\sigma\AAt^4|\pr_r\OO\psi|^2+\sigma \sum_{\aund=1,2,3}O(r^{2}\De^{-1}|\Rhat\psia\c\Rhat\OO\psi|),\\
&\Jt: \quad (\VVt^\aund\psia)\c\sigma\OO\psi=\sigma\VVt^4|\OO\psi|^2+\sigma\VVt^2T\Zhat\psi\c\OO\psi+\sigma \sum_{\aund=1,3}O((r^{-2}+r^{-3}|\ln\De|)|\psia\c\OO\psi|),\\
&\Kt: \quad \frac14|q|^2\Big(\big(v^\aund\SS_\aund\pr_r\psi\big)\c\sigma\OO\psi+\big(v^\aund\psia\big)\c\big(\sigma\OO\pr_r\psi\big)\Big)+\frac14\sigma\OO\psi\pr_r(|q|^2v^\aund)\psia\\
&\qquad=\frac12\sigma|q|^2v^4\OO\pr_r\psi\c\OO\psi+\frac14\sigma\pr_r(|q|^2v^4)|\OO\psi|^2+\sigma \sum_{\aund=1,2,3}O(r^{-3}|\ln\De||\psia\c\OO\psi|)\\
&\qquad+\sigma \sum_{\aund=1,2,3}O\Big(r^{-3}|\ln\De|(|\Rhat\psia\c\OO\psi|+|\psia\c\Rhat\OO\psi|)\Big).
\end{split}
\eeq
Using
\begin{align*}
&\AAt^4|\pr_r\OO\psi|^2+\VVt^4|\OO\psi|^2+\frac12|q|^2v^4\OO\pr_r\psi\c\OO\psi+\frac14\pr_r(|q|^2v^4)|\OO\psi|^2\\
&=\AAt^4\big|\pr_r\OO\psi+\frac{|q|^2v^4}{4\AAt^4}\OO\psi\big|^2+\big(\frac14\pr_r(|q|^2v^4)-\frac{(|q|^2v^4)^2}{16\AAt^4}+\VVt^4\big)|\OO\psi|^2\\
&\geq \big(\frac14\pr_r(|q|^2v^4)-\frac{(|q|^2v^4)^2}{16\AAt^4}+\VVt^4\big)|\OO\psi|^2
\end{align*}
and applying Proposition \ref{Prop:refinedP} with $\psi_{\LL}$ there replaced by $\sigma\OO\psi$, we derive
\beq\lab{eq:controlofOOpsi}
\bsplit
&\int_{\DD(\tau_1,\tau_2)}\frac{\sigma}{|q|^2}\Big(\AAt^4|\pr_r\OO\psi|^2+\VVt^4|\OO\psi|^2+\frac12|q|^2v^4\OO\pr_r\psi\c\OO\psi+\frac14\pr_r(|q|^2v^4)|\OO\psi|^2\Big)\\
&\quad+\int_{\DD(\tau_1,\tau_2)}\frac{1}{|q|^2}\Big(\sigma\VVt^2T\Zhat\psi\c\OO\psi+\Pt\Big)\\
&\quad\geq \int_{\DD(\tau_1,\tau_2)}\frac{\sigma}{|q|^2} \big(\frac14\pr_r(|q|^2v^4)-\frac{(|q|^2v^4)^2}{16\AAt^4}+\WW^4\big)|\OO\psi|^2\\
&\quad-O(\sigma)\int_{\DD(\tau_1,\tau_2)}\frac{1}{|q|^2} \big(h|aT\Psi_1|^2+|T\Zhat\psi|\c|\OO\psi|\big)\\
&\quad\gtrsim \int_{\DD(\tau_1,\tau_2)}\frac{\sigma}{|q|^2} \big(r^{-2}\De|\Rhat\OO\psi|^2+(r^{-2}+r^{-3}|\ln\De|)|\OO\psi|^2)\big)\\
&\quad-O(\sigma)\int_{\DD(\tau_1,\tau_2)}\frac{1}{|q|^2} \big(h|aT\Psi_1|^2+|T\Zhat\psi|\c|\OO\psi|\big)\end{split}
\eeq
where we use $\AAt^2\geq \II^4$ in the last step. Putting \eqref{eq:additionalEEt} and \eqref{eq:controlofOOpsi} together, we conclude that the additional terms generated satisfy
\begin{align*}
&\int_{\DD(\tau_1,\tau_2)}\frac{1}{|q|^2}\c \text{additional terms}\\
&\quad\gtrsim \int_{\DD(\tau_1,\tau_2)}\frac{1}{|q|^2}\sigma\big(r^{-2}\De|\Rhat\OO\psi|^2+(r^{-2}+r^{-3}|\ln\De|)|\OO\psi|^2)\big)\\
&\quad-\int_{\DD(\tau_1,\tau_2)}\frac{1}{|q|^2}\Big(\sigma|aT\Psi_1|^2+\sum_{\aund=1,2,3}\sigma^2\big(r^{-2}\De|\Rhat\psia|^2+(r^{-2}+r^{-3}|\ln\De|)|\psia|^2)\big)\Big).\end{align*}
Combining this with \eqref{eq:prop-I+Kt-short} with $\aund$ summed over $1,2,3$, and choosing $\sigma$ sufficiently small, we finish the proof of Proposition \ref{prop:I+Kt-short}.

\subsection{End of the proof of Theorem \ref{Thm:Moraw4}}
\lab{sec:SStemper}

Proposition \ref{prop:I+Kt-short} and its Corollary \ref{cor:I+Kt-Nontrap-short} establishes the validity of the irregular version of Theorem
\ref{Thm:Moraw4}. It only remains to pass from the irregular to the regular version by using a method similar to that used in the scalar case.

We define $h_{\ep}$ by changing $h=\frac12r^5(1-\frac{m^4}{r^4}\ln(\frac{\De}{m^2}))$ so that it is bounded near the event horizon. Let $F: \mathbb{R} \to \mathbb{R}$ be a smooth function satisfying $F(x) = x$ for $x \leq 1$ and $F(x) = 2$ for $x \geq 3$. We use $F(x)$ to temper $h$ by defining
\beq\lab{eq:defofhtemperSS}
h_{\ep} = \begin{cases}
\ep^{-1} F(\ep h)\quad &r\leq r_++\ep\\
h\quad &r\geq r_++\ep
\end{cases}.
\eeq
where $\ep> 0$ is a small constant to be chosen later. Then $\X_{\ep}, \w_{\ep}, \w_{\flat,\ep}$ can be defined correspondingly. We note that $\X_{\ep}, \w_{\ep}, \w_{\flat,\ep}$ agree with $\X, \w, \w_{\flat}$ in the region where $h<\ep^{-1}$ or $r\geq r_++\ep$. So $\X_{\ep}, \w_{\ep}, \w_{\flat,\ep}$ only differ from $\X, \w, \w_{\flat}$ in the region $\{h>\ep^{-1}\}\cap\{r\leq r_++\ep\}$, which is a small neighborhood of the event horizon. Since $F(x) \le 2$, it follows that $h_{\ep}\leq 2\ep^{-1}$ in the region $r\leq r_++\ep$, which means that for all small $\ep> 0$, the vectorfield $\X_{\ep}$ and the functions $\w_{\ep}, \w_{\flat, \ep}$ are regular up to the event horizon.

Since $|q|^2v^{\aund} \sim -(r-r_+)\ln(\frac{\De}{m^2})$ for $r\in[r_+, r_++0.01m]$, we temper $v^\aund$ by defining
\beq
\lab{eq:defofvaundtemper}
v^\aund_{\ep} = \begin{cases}
\ep^{-1} \frac{r-r_+}{|q|^2}F(\ep \frac{|q|^2v^\aund}{r-r_+})\quad &r\leq r_++\ep\\
v^{\aund} \quad &r\geq r_++\ep
\end{cases}.\eeq

Recall that
\beq\lab{eq:separation-EE-I-J-K-ep}
    |q|^2\EEt_{\ep}=|q|^2\widetilde{\EE}[\bold{X}_{\ep}, \bold{w}_{\ep}+\w_{\flat,\ep}, \M_{\ep}]   =\Pt_{\ep}+\It_{\ep}+\Jt_{\ep}+\Kt_{\ep}
    \eeq
where $\Pt_{\ep}$ takes the same form as $\Pt$ with $h$ replaced by $h_{\ep}$ and
\begin{align*}
\It_{\ep}&= \big(\widetilde{\AA}_{\ep}^{\aund}\pr_r \psia\big)\c \big( \LL^{\aund}\pr_r \psia\big),\quad
\widetilde{\AA}_{\ep}^{\aund}=-\frac{\De^2}{r^2}\pr_r \Big(  \frac{ h_{\ep}   \UUwtp^{\aund}  }{r^2} \Big)+\frac12h_{\ep}(\de_0 m^2\frac{\De\RRtp^4[z]}{2r^4})^2 \GG^{\aund},\\
\Jt_{\ep}&= \big(\widetilde{\VV}_{\ep}^{\aund} \psia\big)\c \big( \LL^{\aund} \psia\big),\quad
 \VVt_{\ep}^\aund=\frac14\pr_r(\De\pr_r(z\pr_r(h_{\ep}\UUwtp^{\aund})))-\frac14\pr_r \left(\De \pr_r \Big(h_{\ep}\De(\de m^2\frac{\RRtp^4[z]}{2r^4})^2\GG^{\aund}\Big)\right),\\
 \Kt_{\ep}&=\frac 1 4 |q|^2  \Div ((\psia\c\psib) v_{\ep}^{(\aund}\LL^{\bund)}\pr_{r}).
   \end{align*}

First, $h_{\ep}$ also satisfies the property that $h_{\ep}\geq 0$ for $r\geq r_+$ and thus $\Pt_{\ep}$ is (modulo acceptable spacetime divergence\footnote{\lab{fn:acceptable}Since the multipliers are regular, the spacetime divergence is acceptable in the sense of Definition \ref{Def:acceptable} where $\BB^\mu\les\sum||(e_4,\De e_3, \nab)^{\leq 1}\psia|^2$ near $r=r_+$.}) nonnegative.

Second, for $h\leq \ep^{-1}$ or $ r\geq r_++\ep$, we have $ \AAt^{\aund}_{\ep}=\AAt^\aund$. In the region $\{h_\ep>\ep^{-1} \}\cap \{r\leq r_++\ep\}$, we compute
\begin{align*}
\AA_{\ep}^{\aund}&=h_{\ep}\frac{\De^2}{r^2}\Big(-\pr_r\big(  \frac{  \UUwtp^{\aund}  }{r^2} \big)+\frac18(\de_0 m^2\frac{\RRtp^4[z]}{r^3})^2 \GG^{\aund}\Big)+\frac{\De^2}{r^2}F'(\ep h)\c(-\pr_rh)\c\frac{\UUwtp^\aund}{r^2}\\
&:=h_{\ep}\frac{\De^2}{r^2}\BB^\aund+\frac{\De^2}{r^2}F'(\ep h)\c(-\pr_rh)\CC^\aund.
\end{align*}
Since $h_{\ep}>0, F'(\ep h)\c(-\pr_rh)\geq 0$, we now prove that $\BB^\aund S_{\aund}^{\mu\nu}$ and $\CC^{\aund} S_{\aund}^{\mu\nu}$ are positive definite in the region $\{h_\ep>\ep^{-1} \}\cap \{r\leq r_++\ep\}$, from which the positivity of $\It_{\ep}$ there follows. It is clear that the positivity of $\BB^\aund S_{\aund}^{\mu\nu}$ and $\CC^{\aund} S_{\aund}^{\mu\nu}$ follows from the following
\[
\BB^4, \BB^1>0,\quad \BB^1\BB^3-\frac14(\BB^2)^2>0;\qquad \CC^4, \CC^1>0,\quad \CC^1\CC^3-\frac14(\CC^2)^2>0.
\]
Since the region $\{h_\ep>\ep^{-1} \}\cap \{r\leq r_++\ep\}$ is a small neighborhood of the event horizon, it suffices to check the above inequalities at $r=r_+$ for $|a|/m\leq 0.75$. The verification of the above inequalities at $r+=r_+$ can be done easily using Mathematica.

Third, for $h\leq \ep^{-1}$ or $ r\geq r_++\ep$, we have $ \VVt^{\aund}_{\ep}=\VVt^\aund$. In the region $\{h_\ep>\ep^{-1} \}\cap \{r\leq r_++\ep\}$, a direct calculation implies that
\begin{align*}
|\VVt^\aund_{\ep}|&\les h_{\ep}+F'+|F''|\ep+|F'''|\ep^2\De^{-1}\\
&\les h_{\ep}+ \ep^2 (r-r_+)^{-1}\chi_{\ep}(r)
\end{align*}
where $\chi_{\ep}$ is some bounded function supported where $-\ln(\frac{r-r_+}{m})\in [c_1\ep^{-1}, c_2\ep^{-1}]$, that is, $(r-r_+)\in[me^{-\frac{c_2}{\ep}}, me^{-\frac{c_1}{\ep}}]$. Therefore
\[
   \int |\VVt^\aund_{\ep}|\,dr\les \epsilon(\ep^{-2}e^{-c_1/\epsilon} + c_2 - c_1)\les \ep.
\]
Finally, for $h\leq \ep^{-1}$ or $ r\geq r_++\ep$, we have $ \Kt_{\ep}=\Kt$. In the region $\{h_\ep>\ep^{-1} \}\cap \{r\leq r_++\ep\}$, a direct calculation implies that
\begin{align*}
\Kt_{\ep}&=\frac 1 4 |q|^2  \Div ((\psia\c\psib) v_{\ep}^{(\aund}\LL^{\bund)}\pr_{r})\\
&=\frac14\Big(\big(|q|^2v_{\ep}^\aund\SS_\aund\pr_r\psi\big)\c\big(\LL^\bund\psib\big)+\big(|q|^2v_{\ep}^\aund\psia\big)\c\big(\LL^\bund\SS_\bund\pr_r\psi\big)\Big)+\frac14\psi_{\LL}\pr_r(|q|^2v_{\ep}^\aund)\psia
\end{align*}
and thus
\begin{align*}
|\Kt_{\ep}|&\les \ep^{-1}F\sum_{\aund=1}^4(|\Rhat\psia\c\psi_{\LL}|+|\Rhat\psi_{\LL}\c\psia|)+(\ep^{-1}F+F')\sum_{\aund=1}^4|\psia\c\psi_{\LL}|\\
&\les\ep\It_{\ep}+(\ep^{-2}+\ep^{-1}F+F')\sum_{\aund=1}^4|\psia\c\psi_{\LL}|
\end{align*}
where
\[
\int_{\{h_\ep>\ep^{-1} \}\cap \{r\leq r_++\ep\}}(\ep^{-2}+\ep^{-1}F+F')\les \ep.
\]
To summarize, for $h\leq \ep^{-1}$ or $ r\geq r_++\ep$, we have
\[
\Pt_{\ep}+\It_{\ep}+\Jt_{\ep}+\Kt_{\ep}=\Pt+\It+\Jt+\Kt.
\]
In the region $\{h_\ep>\ep^{-1} \}\cap \{r\leq r_++\ep\}$
\[
\Pt_{\ep},\It_{\ep}\geq 0,\quad \frac12\It_{\ep}+\Jt_{\ep}+\Kt_{\ep}\gtrsim \sum_{\aund=1}^4Err^\aund|\psia\c\psi_{\LL}|
\]
where
\beq
\lab{eq:SSerror}
\int_{\{h_\ep>\ep^{-1} \}\cap \{r\leq r_++\ep\}}|Err^\aund|\les \ep.
\eeq
Combining the estimate of $\EEt_{\ep}$ with Theorem \ref{Thm:Moraw3} applied to $\psia$ (localized in the region $h\leq \ep^{-1}$ or $ r\geq r_++\ep$), we obtain the control of $|(\That,\nab)\psia|^2$ in the intersection of the nontrapping region and the region $\{h\leq \ep^{-1}\}\cup \{r\geq r_++\ep\}$. Then the term \eqref{eq:SSerror} can be handled in the same way as in the scalar case (see section \ref{sec:controloflowerterm}).

To finish the proof of Theorem \eqref{Thm:Moraw4}, we integrate $\EEt_{\ep}$ and apply the divergence Lemma in $\DD(\tau_1,\tau_2)$. This will generate the boundary terms which are either acceptable, in the sense of definition \ref{Def:acceptable} (see also footnote \ref{fn:acceptable}), or satisfy the same property discussed in section \ref{section:boundary}. Therefore, all the boundary terms can be controlled by
\beq
\sum_{\aund=1}^4\Big(\EFdeg[\psia](\tau_1,\tau_2)+E[\psia](\tau_1)\Big).
\eeq
 This ends the proof of Theorem \eqref{Thm:Moraw4}.

\section{Proof of Theorems  \ref{Thm:Flux-estimate}  and \ref{Thm:Morawetz2}}
\lab{section:Proof of Thm-Morawetz2}


In this section we provide the proofs for Theorem \ref{Thm:Flux-estimate}, Proposition \ref{Prop.1.6-Intro} and the conditional Morawetz-energy estimate of
 Theorem \ref{Thm:Morawetz2}.

\subsection{Proof of Theorem \ref{Thm:Flux-estimate}}
\lab{sec:Flux-estimate}

In this section we provide a proof of Theorem \ref{Thm:Flux-estimate} which we recall below

\begin{theorem}
\lab{Thm:Flux-estimate'}
The following estimate holds true for solutions of $\square_{a,m} \psi =N$ and
for $r_0$ sufficiently large and $\ep_0$ sufficiently small,
\beq
\lab{eq:Flux-estimate'}
\bsplit
 \Fdeg[(T,Z)\T\psi](\tau_0, \tau) & \les
\int_{\tau-1}^{\tau} \int_{\Si(s)\cap\{r\leq r_0\}}|\dk^{\le 2}\Z\psi|^2\, ds+  \int_{\Si(\tau)\cap\{r\leq r_0\}}|\dk^2\Z\psi|^2\\
 & +\ep_0 \Edeg^2[\psi](\tau)\!+\!\Edeg^2[\psi](\tau_0)+\Big|\int_{\DD(\tau_0, \tau)} \T\dk ^{\le 2 }\psi\c  \dk ^{\le 2 }N\Big|.
 \end{split}
\eeq
\end{theorem}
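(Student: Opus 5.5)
We apply the divergence theorem in $\DD(\tau_0,\tau)$ to a second-order energy current built from Killing multipliers acting on the commuted solutions $\psi_{11}=TT\psi$, $\psi_{12}=TZ\psi$, $\psi_{22}=ZZ\psi$. These again solve $\square\psi_{\aund\bund}=\SS N$-type equations, since $T,Z$ commute with $\square_{a,m}$. The idea is that the $T$-energy identity has the indefinite horizon flux $\QQ(T,N_\HH)=\frac{r^2+a^2}{|q|^2}T\psi\c T_+\psi$. Applying it to $T\psi$ and to $Z\psi$ with different multipliers gives a $2\times2$ quadratic form in $(TT\psi, TZ\psi)$ restricted to $\HH$. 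We choose constant coefficients $c_{ij}$ and a current
\[
\PP_\mu=\sum c_{ij}\,\QQ_{\mu\nu}(X_i T\psi, X_jT\psi)\,T^\nu \quad(\text{or with }Z^\nu),\qquad X_i\in\{T,Z\},
\]
which is divergence free up to $N$-terms, because $T,Z$ are Killing. Its horizon flux is then $\frac{r^2+a^2}{|q|^2}\big(\alpha|T_+T\psi|^2+\beta |T_+ Z\psi|^2\big)$ plus cross terms. Concretely, we aim for multipliers built as in the AB method from $\SS_1=TT,\SS_2=TZ,\SS_3=ZZ$, so that the horizon integrand becomes a positive combination of $|T_+ T\psi|^2$ and $|T_+ Z\psi|^2$. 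Using $T_+=T+\om_+Z$ one can rewrite $TZ\psi\c T_+\psi$-type terms. The key algebraic identity is that for a Killing $Y=T+\om Z$ the flux $\QQ(Y,N_\HH)\propto Y\psi\c T_+\psi$. A combination $\sum_k \lambda_k (Y_k\psi)(T_+\psi)$ with the $Y_k$ applied to $T\psi$ and $Z\psi$, and cross terms cancelled by an integration by parts along the horizon generators, yields $|T_+ T\psi|^2+|T_+Z\psi|^2 \gtrsim \Fdeg[(T,Z)T\psi]$, since $e_4\propto T_+$ on $\HH$.

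Next we bound the boundary terms on $\Si(\tau_0)$ and $\Si(\tau)$. Killing-type currents are controlled on spacelike slices by the second-order energy, but they are not coercive because of the ergoregion and the indefinite cross terms. On $\Si(\tau_0)$ we simply bound by $\Edeg^2[\psi](\tau_0)$. On $\Si(\tau)$ we split $r\le r_0$ and $r\ge r_0$. For $r\ge r_0$, $T$ is uniformly timelike and the $Z$-components carry decaying weights $a/r$. Absorbing these gives $\ep_0\Edeg^2[\psi](\tau)$, with $\ep_0\sim r_0^{-1}$. For $r\le r_0$, every term contains at least one factor $Z(\cdot)\psi$ or is bounded by $|\dk^2 Z\psi|\,|\dk^2\psi|$. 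We arrange the current so that each term on $\Si(\tau)\cap\{r\le r_0\}$ is either signed favourably or carries a $Z$ derivative. Cauchy--Schwarz then yields $\int_{\Si(\tau)\cap\{r\le r_0\}}|\dk^2 Z\psi|^2+\ep_0\Edeg^2[\psi](\tau)$. We also invoke Theorem \ref{thm:flux-ind-energy}, applied to $(T,Z)T\psi$, to replace the remaining $\Edeg[T\psi](\tau)$ terms by the time-averaged local $Z$-integral $\int_{\tau-1}^\tau\int_{\Si(s)\cap\{r\le r_0\}}|\dk^{\le2}Z\psi|^2$, plus $\Edeg(\tau_0)$ and the $T\dk^{\le2}\psi\c\dk^{\le2}N$ spacetime term. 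The $N$ contributions from the divergence are exactly of that form.

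The main obstacle is the first step. One must find a multiplier combination whose horizon flux is genuinely coercive in both $T_+T\psi$ and $T_+Z\psi$, even though each single Killing flux $T\psi\c T_+\psi$ is indefinite for superradiant data. I would first attempt the ansatz of multiplier $T_+$ applied to $T\psi$, plus $\lambda$ times multiplier $T$ applied to $T_+\psi$. The goal is a combination of the fluxes of $T_+$ applied to $T\psi$ and $Z\psi$ whose flux is manifestly $|T_+(\cdot)|^2$, while $T_+$ is not timelike on the slices. The resulting non-coercive slice terms are of the form $|Z(\cdot)|^2$ localized to $r\le r_0$, which are exactly what appears on the right of \eqref{eq:Flux-estimate'}.
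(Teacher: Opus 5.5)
\paragraph{Genuine gap: the coercive horizon flux is never constructed.} You call this step the main obstacle, and it is the heart of the theorem; the proposal leaves it open. Worse, the concrete family you write down cannot close it. Take the multiplier $T$ with both slots in $\mathrm{span}\{TT\psi,TZ\psi\}$. On a mode with $T\to -i\om$, $Z\to i\ell$, the horizon integrand $\frac12(Tf\,T_+g+Tg\,T_+f)$ equals $\om^3(\om-\om_+\ell)\,q(\om,\ell)$ for some quadratic form $q$. By contrast, $|T_+ZT\psi|^2$ corresponds to $\om^2(\om-\om_+\ell)^2\ell^2$. The former cannot dominate the latter as $\om\to 0$. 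Integrations by parts along $\HH$ only create sphere terms; they do not change this symbol.

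The paper keeps the multiplier $T$ but puts $ZZ\psi$ into one slot, taking $\psi_\CC=TT_+\psi$ and $\psi_\BB=(TT+ZZ)\psi$. Lemma \ref{lemma:IntegrationbypartsP}, which rests on the commutation $\SS_\CC\psi_\BB=\SS_\BB\psi_\CC$, then turns the horizon flux into $\frac{r^2+a^2}{|q|^2}|T_+(T,Z)T\psi|^2+\Div\Bk$, with $\Bk$ tangent to $\HH$. The divergence integrates to $\int_{S(\tau)}\Bk^v-\int_{S(\tau_0)}\Bk^v$, where $\Bk^v=\frac{r^2+a^2}{|q|^2}\,TT_+\psi\,T_+ZZ\psi$. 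This sphere term is extended into $\Si(\tau)\cap\{r\le r_0\}$ by a cutoff and integrated by parts in $\phi$. That is what produces $\ep_0^{-1}\int_{\Si(\tau)\cap\{r\le r_0\}}|\dk^2\Z\psi|^2+\ep_0\Edeg^2[\psi](\tau)$. The $\Si(\tau)$ flux itself is simply bounded by $\Edeg[(T,Z)^{\le 2}\psi](\tau)$ and then by Theorem \ref{thm:flux-ind-energy}. Your proposal has no sphere terms and attributes these two terms to the slice flux, which is not where they come from.

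\paragraph{The slice estimates for your $T_+$ ansatz are wrong.} The only Killing fields available are $T+\la Z$ with $\la$ constant. A coefficient $a/(r^2+a^2)$ in front of $Z$, giving your "decaying weight $a/r$", belongs to $\That$, which is not Killing. For $T_+$ the $Z$-part $\om_+Z$ has length $\sim \om_+ r\sin\th$, so $T_+$ is spacelike on the unbounded set $\{r\sin\th\gtrsim \om_+^{-1}\}$. Hence $\om_+Zf\c N_\Si f$ is neither small nor localized in $\{r\le r_0\}$. Your expression $|T_+T\psi|^2+|T_+Z\psi|^2$ is one derivative short of $\Fdeg[(T,Z)T\psi]$. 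At the level $f\in\{T\psi,Z\psi\}$ it suggests, that far-field term carries an uncompensated factor of $r$.

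The ansatz can be completed, but by a different mechanism. Apply the $T_+$-identity to $f=T\phi$ with $\phi\in\{T\psi,Z\psi\}$.
\begin{itemize}
\item The horizon integrand is then pointwise $\frac{r^2+a^2}{|q|^2}|T_+f|^2$.
\item Near $\HH$ one has $T_+=\frac12(1-\om_+a\sin^2\th)\big(e_4+\frac{\De}{|q|^2}e_3\big)+O(r-r_+)e_2$, so $\QQ[f](T_+,N_\Si)$ is bounded by the degenerate energy density.
\item Far out, $Zf=T(Z\phi)$, so $\int_{\Si(\tau)}|Zf|^2\les \Edeg[Z\phi](\tau)$.
\end{itemize}
Hence the $\Si(\tau)$ term is $\les\Edeg[(T,Z)^{\le 2}\psi](\tau)$. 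This is a full energy, controlled only by Theorem \ref{thm:flux-ind-energy}. That, not an $\ep_0$-absorption, is why the extra $T$ in $\Fdeg[(T,Z)T\psi]$ is needed. Done this way, your route would be shorter than the paper's and would not generate the two $\Si(\tau)$-localized terms. As written, however, the coercivity step is missing and the far-field bound you state is false.
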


The proof makes use of the following crucial estimate derived in \cite{H-K2} and stated in Theorem \ref{thm:flux-ind-energy}
    \[
      \Edeg[\psi](\tau)\les \int_{\tau-1}^{\tau} \int_{\Si(s)\cap\{r\leq r_0\}}|\Z\psi|^2\, ds+ \Edeg[\psi](\tau_0)+\Big|\int_{\DD(\tau_0, \tau)} \T\psi\c N\Big|.
        \]
          In view of the commutation properties of $\square_{a,m}$ with $\T, \Z, \OO$, the same estimate holds true by replacing $\psi$ with $(\T, \Z)^i \OO^{j}\psi$.


To prove Theorem \ref{Thm:Flux-estimate} we apply the basic spacetime, $\SS$--valued identity in section \ref{section:SS-valuedidentity}
\[
 \D^\mu  \PP_\mu[\X,0,0]= X^{\aund\bund} \psi_\aund  N_{\bund},
\]
corresponding to
\[
X^{\aund\bund}=\FF^{\aund\bund}T,\quad w^{\aund\bund}=0, \quad M^{\aund\bund}=0, \qquad
\PP_\mu[\FF\T, 0, 0]= \QQ[\psi]_{\underline{ab} \mu \nu} X^{\underline{ab} \nu}
\]

where, $\psi_\aund =\SS_\aund\psi$,  \, $N_\bund=\SS_\bund N$ and $\FF^{\aund\bund}$ are constants. Then, by integration in the domain  $\DD(\tau_0, \tau)$
\begin{align*}
&\int_{\HH(\tau_0,\tau)} \FF^{\aund\bund} \QQ[\psi_\aund, \psi_\bund]\big(\T, N_{\HH})+\int_{\Si(\tau)} \FF^{\aund\bund} \QQ[\psi_\aund, \psi_\bund]\big(\T, N_\Si)\\
&\quad=\int_{\Si(\tau_0)} \FF^{\aund\bund} \QQ[\psi_\aund, \psi_\bund]\big(\T, N_\Si)
- \int_{\DD(\tau_0, \tau)} \FF^{\aund\bund} \T \psi_\aund  N_{\bund}.
\end{align*}
For brevity we ignore the inhomogeneous term $ \int_{\DD(\tau_0, \tau)} \FF^{\aund\bund} \T \psi_\aund N_{\bund} $ as it will be carried on unchanged through our derivation.
Choosing
\beq\lab{eq:FFaundbund}
\FF^{\aund\bund}=\AA^{(\aund}\BB^{\bund)}=\frac12(\AA^{\aund}\BB^{\bund}+\AA^{\bund}\BB^{\aund})
\eeq
and setting $\psi_\AA=\AA^\aund \psi_\aund, \,  \psi_\BB=\BB^\aund \psi_\aund$ we deduce
\beq
\lab{eq:SS-energyflux}
\int_{\HH(\tau_0,\tau)}\QQ[\psi_\AA,\psi_\BB] \big(\T, N_{\HH})+\int_{\Si(\tau)}\QQ[\psi_\AA,\psi_\BB] \big(\T, N_\Si)=\int_{\Si(\tau_0)}\QQ[\psi_\AA,\psi_\BB] \big(\T, N_\Si).
\eeq
The goal here is to choose the constants $\AA, \BB$ to make the main contribution of the flux term positive definite.
 Proceeding as in the derivation of \eqref{eq:Festimates}
  we first derive, along $\HH_+$,
  Recalling that, see Lemma \ref{lemma:Energy1}, $N_\HH=\frac{r^2+a^2}{|q|^2} T_+$
   \beq
   \bsplit
    \QQ[\psi_\AA,\psi_\BB] \big(\T, N_{\HH})&=\frac 1 2  \frac{r^2+a^2}{|q|^2} \big(T\psi_\AA \c T_+ \psi_\BB+ T\psi_\BB \c T_+ \psi_\AA\big)\\
    &=
   \frac{r^2+a^2}{|q|^2}  \CC^\cund S_\cund^{\a\b} \D_\a  \psi_\AA \D_\b  \psi_\BB
   \end{split}
   \eeq
    where $\CC^\aund =( 1,\om_+, 0, 0)$ and $\om_+=\frac{a}{r_+^2+ a^2}=\frac{a}{2mr_+}$ along $\HH$.


   We can then appeal to Lemma \ref{lemma:IntegrationbypartsP}, with $\AA=\CC$ to deduce
   \beq
   \bsplit
 \QQ[\psi_\CC,\psi_\BB] \big(\T, N_{\HH})&= \frac{r^2+a^2}{|q|^2}
     S_\CC^{\mu\nu} \D_\mu  \psi_\CC \D_\nu  \psi_\BB = \frac{r^2+a^2}{|q|^2}S_\BB^{\mu\nu} \D_\mu  \psi_\CC \D_\nu  \psi_\CC+\D_\mu\Bk^\mu.\qquad
     \end{split}
     \eeq
     where $\Bk$ is the boundary term
     \[
     \Bk^\mu  =\frac{r^2+a^2}{|q|^2} \psi_\CC  \Big( S_{\CC}^{\mu\nu}   \D_\nu \psi _\BB-  S_{\BB}^{\mu\nu} \D_\nu \psi_\CC\Big).
   \]
   \begin{remark}
  Note that, relative to the ingoing coordinates $(v, r, \th, \phi_+)$, $\Bk^r=\Bk^\th=0$. Thus $\D_\mu\Bk^\mu$ contains only the derivatives $\T $ and $\Z$, tangent to the horizon $\HH$.
  \end{remark}
      Note that $\psi_\CC=\T\T \psi+ \om_+ \T\Z= \T \T_+ \psi$. Thus, with $\BB=(1,0, 1, 0)$, and $\AA=\CC=(1,\om_+, 0, 0)$, we infer that
\begin{align*}
   \QQ[\psi_\CC,\psi_\BB] \big(\T, N_{\HH})&= \frac{r^2+a^2}{|q|^2}\big( |\T\psi_\CC|^2 +  |\Z\psi_\CC|^2\big)+\Div \,\Bk\\
&=  \frac{r^2+a^2}{|q|^2}\big|\T_+(\T, \Z)\T  \psi\big|^2 +\Div \,\Bk.
\end{align*}
Back to \eqref{eq:SS-energyflux} we deduce
\begin{align*}
\int_{\HH(\tau_0, \tau) }  \frac{r^2+a^2}{|q|^2} \big|\T_+(\T, \Z)\T  \psi\big|^2 &=- \int_{\Si(\tau)}\QQ[\psi_\AA,\psi_\BB] \big(\T, N_\Si) -\int_{\HH(\tau_0, \tau) } \Div \Bk\\
&\quad+\int_{\Si(\tau_0)}\QQ[\psi_\AA,\psi_\BB] \big(\T, N_\Si)\\
&\les \Edeg[(T,Z)^{\leq 2}\psi](\tau) + \Edeg^2[\psi](\tau_0)  +\Big|\int_{\HH(\tau_0, \tau) } \Div \Bk\Big|
\end{align*}
In the incoming coordinates $(v, r, \th, \phi_+)$, we calculate
\[
\D_\mu \Bk^\mu =\frac{1} {\sqrt{|\g|} }\pr_\mu\big(  \sqrt{|\g|} \Bk^\mu\big)= \pr_v \Bk^v + \pr_{\phi_+}\Bk^{\phi_+}.
\]
Integrating over $\HH$ we derive
\[
\int_{\HH(\tau_0, \tau) }\Div\Bk= \int_{\HH(\tau_0, \tau) }  \pr_v \Bk^v= \int_{S(\tau)} \Bk^v-  \int_{S(\tau_0)} \Bk^v.
\]
Since
\begin{align*}
 \Bk^v  &=\frac{r^2+a^2}{|q|^2} \psi_\CC  \Big( S_{\CC}^{v \nu}    \D_\nu \psi _\BB-  S_{\BB}^{v\nu}  \D_\nu \psi_\CC\Big)= \frac{r^2+a^2}{|q|^2} \psi_\CC\Big( T_+\psi_\BB-\T \psi_\CC\Big)\\
 &=  \frac{r^2+a^2}{|q|^2}\T\T_+ \psi\Big( \T_+\big(\T\T\psi+\Z\Z\psi \big)- \T \T\T_+\psi \Big)\\
 &=  \frac{r^2+a^2}{|q|^2} \T\T_+ \psi\T_+\Z\Z \psi,
 \end{align*}
we infer that
\beq
 \int_{\HH(\tau_0, \tau) } \Div \Bk=  \frac{r^2+a^2}{|q|^2} \Big( \int_{S(\tau)}  \T\T_+ \psi\T_+\Z\Z \psi -
 \int_{S(\tau_0)}  \T\T_+ \psi\T_+\Z\Z \psi \Big)
\eeq
Using the $\tau$-adapted coordinates introduced in section \ref{section:adapted-taucoordinates}, integrating along $\Si(\tau)$,
\begin{align*}
&  \int_{S(\tau)}  \T\T_+ \psi\T_+\Z\Z \psi=-  \int_{S(\tau) }   Z \T\T_+ \psi\T_+\Z \psi
  =\int_{\Si(\tau) } \frac{d}{dr} \Big( \chi(r)  Z \T\T_+ \psi\T_+\Z \psi\Big)\\
  &\quad =\int_{\Si(\tau) }  \Big( \chi(r) \nab_r  Z \T\T_+ \psi\T_+\Z \psi + \chi(r) Z \T\T_+ \psi\nab_r \T_+ \Z\psi +
    \chi'(r)     Z \T\T_+ \psi\T_+\Z \psi\Big)\\
    &\quad= \int_{\Si(\tau) }  \Big( - \chi(r) \nab_r   \T\T_+ \psi\T_+\Z^2  \psi+ \chi(r) Z \T\T_+ \psi\nab_r \T_+ \Z\psi +  \chi'(r)     Z \T\T_+ \psi\T_+\Z \psi\Big)\\
  &\quad\les \ep_0^{-1} \int_{\Si(\tau)\cap\{r\leq r_0\}}|\dk^2\Z\psi|^2 +\ep_0 \Edeg^2[\psi](\tau)
    \end{align*}
 where $\chi(r)=1 $ at $r=r_+$, supported for $r\le r_0$, we deduce
 \[
 \Big|\int_{S(\tau)}  \T\T_+ \psi\T_+\Z\Z \psi\Big|\les \ep_0^{-1} \int_{\Si(\tau)\cap\{r\leq r_0\}}|\dk^2\Z\psi|^2 +\ep_0 \Edeg^2[\psi](\tau).
   \]
 Therefore,
 \begin{align*}
 \int_{\HH(\tau_0, \tau) }   \big|\T_+(\T, \Z)\T  \psi\big|^2 &\les \Edeg[(T,Z)^{\leq2}\psi](\tau)  + \ep_0^{-1} \int_{\Si(\tau)\cap\{r\leq r_0\}}|\dk^2\Z\psi|^2\\
 & +\ep_0 \Edeg^2[\psi](\tau)+\Edeg^2[\psi](\tau_0).
 \end{align*}
  We now appeal to Theorem \ref{thm:flux-ind-energy} to estimate the energy term on the right. According to it we have
   \[
   \Edeg[\psi](\tau)\les \int_{\tau-1}^{\tau} \int_{\Si(s)\cap\{r\leq r_0\}}|\Z\psi|^2\, ds+ \Edeg[\psi](\tau_0)+|\int_{\DD(\tau_0, \tau)} \T\psi\c N|.
          \]
          Commuting with $(\T, \Z) $ we also
          have
          \[
           \Edeg[(T,Z)^{\leq2}\psi](\tau)\les \int_{\tau-1}^{\tau} \int_{\Si(s)\cap\{r\leq r_0\}}|\dk^{\le 2 } \Z\psi|^2\, ds+ \Edeg^2[\psi](\tau_0)+\Big|\int_{\DD(\tau_0, \tau)} \T\dk ^{\le 2 }\psi\c  \dk ^{\le 2 }N\Big|.
          \]
          Therefore
          \begin{align*}
 \int_{\HH(1, \tau) }   \big|\T_+(\T, \Z)\T  \psi\big|^2& \les
\int_{\tau-1}^{\tau} \int_{\Si(s)\cap\{r\leq r_0\}}|\dk^{\le 2}\Z\psi|^2\, ds+ \ep_0^{-1} \int_{\Si(\tau)\cap\{r\leq r_0\}}|\dk^2\Z\psi|^2\\
 & +\ep_0 \Edeg^2[\psi](\tau)+\Edeg^2[\psi](\tau_0)+\Big|\int_{\DD(\tau_0, \tau)} \T\dk ^{\le 2 }\psi\c  \dk ^{\le 2 }N\Big|.
 \end{align*}
that is,
         \begin{align*}
       \Fdeg[(T,Z)\T\psi](\tau_0, \tau) & \les
\int_{\tau-1}^{\tau} \int_{\Si(s)\cap\{r\leq r_0\}}|\dk^{\le 2}\Z\psi|^2\, ds+ \ep_0^{-1} \int_{\Si(\tau)\cap\{r\leq r_0\}}|\dk^2\Z\psi|^2\\
 & +\ep_0 \Edeg^2[\psi](\tau)+\Edeg^2[\psi](\tau_0)+\Big|\int_{\DD(\tau_0, \tau)} \T\dk ^{\le 2 }\psi\c  \dk ^{\le 2 }N\Big|.
         \end{align*}


\subsection{Proof of Proposition \ref{Prop.1.6-Intro}}

\lab{section:Proof-Prop.1.6-Intro}
We prove the following more precise estimate
\beq
\lab{eq:Proof-Prop.1.6-Intro}
\bsplit
\EFdeg^s_p[\psi](\tau_0, \tau) &\les    \int_{\Si(\tau)\cap\{r\leq r_0\}}|\dk^{s} Z\psi|^2  + r_0^{1-p} \Bdeg^s_p[\psi](\tau_0, \tau) \\
&+ r_0^{3-p} \Bdeg_p^{s-1}[\T\psi] (\tau_0, \tau)+\Edeg^s[\psi](\tau_0)+ \NN_p^s[\psi,N](\tau_0, \tau).
\end{split}
\eeq
To carry fewer terms we prove below the result for the homogenous wave equation $\square_{a,m}\psi=0$, the
 general case requires no modifications.

We apply the standard energy estimate to $\square_{a,m}\psi=0$ with the vectorfield
\beq
 \lab{def:Tring}
\Tring=\chi(r)\T_+    +(1-\chi(r))T=T+\chi(r)\frac{a}{r_+^2+a^2}Z.
\eeq
 where $0\leq \chi(r)\leq 1$ is a smooth nonnegative cutoff function satisfying
 \beq
  \lab{def:Tring-chit}
  \chi(r) = \begin{cases}  1\qquad  &r_+ \le r\leq   r_0/ 2  \\
0\qquad &r\geq      r_0
  \end{cases}
 \eeq
  where $r_0\ge \rhat_2$ so that $\Tring=\T_{\om_+} =\T+\frac{a}{r_+^2+a^2} \Z$ is Killing on the trapping set.
  Moreover we have $\chi'(r)=O(r_0^{-1}) $. Note however that $\Tring $ could become spacelike for values of $r$ for which $\Tring=\T_{\om_+}$ is not timelike.

We deduce, with $\QQ=\QQ[\psi]$,
 \beq
 \lab{eq:EFE2-Tring}
\bsplit
\int_{\HH(\tau_0, \tau)}  \QQ(\Tring, N_{\HH} )+\int_{\Si(\tau)} \QQ(\Tring, N_\Si)=\int_{\Si(\tau_0)}  \QQ(\Tring, N_\Si)- \frac 12 \int_{\DD(\tau_0,\tau)} \QQ^{\a\b}\piring_{\a\b} .
\end{split}
\eeq
where $\piring$ is the deformation tensor of $\Tring$.
Note that $ \QQ[\psi](\Tring, N_{\HH} )= |\nab_4 \psi|^2$
thus the flux term has the form $ \int_{\HH} |\nab_4 \psi|^2 $.
On the other hand, since $\Tring$ can become spacelike in the region $r\leq r_0$,
\[
\int_{\Si(\tau)} \QQ(\Tring, N_\Si)\ge \Edeg[\psi](\tau) - \int_{\Si(\tau)\cap\{r\leq r_0\}}|Z\psi|^2.
\]
  Hence
  \[
   \int_{\HH(\tau_0,\tau)}  |\That  \psi|^2 + \Edeg[\psi](\tau) \les  \Edeg[\psi](\tau_0)+\int_{\Si(\tau)\cap\{r\leq r_0\}}|Z\psi|^2+|\int_{\DD(\tau_0,\tau)} \QQ^{\a\b}\piring_{\a\b} |
  \]
  We next consider bulk term $\int_{\DD(\tau_0,\tau)} \ \QQ^{\a\b}\piring_{\a\b}$ where, writing $\Tring=\T+h\Z$,
$h= \chi \frac{a}{r_+^2+a^2}$,
\[
\piring_{\a\b}=\Lie_{\Tring } \g_{\a\b}= \Lie_\T\g_{\a\b}+\Lie_{ (h\Z)}\g_{\a\b}= \D_\a h \Z_\b+\D_\b h\Z_\a
\]
  Thus, recalling that $e_4=\frac{r^2+a^2}{|q|^2} \pr_t +\frac{\De}{|q|^2} \pr_r +\frac{a}{|q|^2} \pr_\phi$,
 \begin{align*}
 \QQ\c \piring&=2\QQ^{\a\b}\D_\a h \Z_\b=2\Big(\D^\a\psi \D^\b \psi -\frac 12 \g^{\a\b} \LL[\psi]\Big)\D_\a h \Z_\b\\
 &=(\D^\a\psi \D_\a h )\Z\psi=   \frac{a}{r_+^2+a^2}   \frac{\De}{|q|^2}    \chi' \pr_r \psi \Z\psi\\
 &=    \frac{a}{r_+^2+a^2}   \chi'  \big( e_4\psi  - \frac{r^2+a^2}{|q|^2} \T\psi- \frac{a}{|q|^2} \Z\psi\big)  \Z\psi.
  \end{align*}
   Recalling the definition of the $r^p$ weighted norms, see \eqref{eq:Bnorm-p-deg}, we deduce
  \begin{align*}
  \int_{\DD(\tau_0,\tau)} \QQ^{\a\b}\piring_{\a\b} &\les   \int_{\DD_{\frac{r_0}{2} \le r \le r_0}(\tau_0,\tau)} r^{-1}  |\T\psi| |r\nab\psi| +    r^{-2}  |re_4 \psi | |r\nab\psi| +  r_0^{-3}   |r\nab\psi|^2\\
  &\les  r_0 ^{2-p}  \int_{\DD_{\frac{r_0}{2} \le r \le r_0}(\tau_0,\tau)} r^{p-3}  |\T\psi| |r\nab\psi|
  +r_0^{1-p} \int_{\DD_{\frac{r_0}{2} \le r \le r_0}(\tau_0,\tau)} r^{p-3}| \dk\psi|^2\\
  &\les  r_0^{2-p}\Big( \la^{-1} \Bdeg_p[\psi](\tau_0,\tau) +  \la  \int_{\DD_{\frac{r_0}{2} \le r \le r_0}(\tau_0,\tau)}r^{p-3}|\T\psi|^2 \Big)+r_0^{1-p} \Bdeg_p[\psi](\tau_0,\tau).
    \end{align*}
      Hence, for $1<p<2-\de$, choosing $\la=r_0$ sufficiently large,
      \begin{align*}
   \int_{\HH(\tau_0,\tau)}  |\That  \psi|^2 + \Edeg[\psi](\tau) &\les  \int_{\Si(\tau)\cap\{r\leq r_0\}}|Z\psi|^2
   +r_0^{3-p} \int_{\DD_{\frac{r_0}{2} \le r \le r_0}(\tau_0,\tau)}r^{p-3}|\T\psi|^2 \\
   &+  r_0^{1-p}    \Bdeg_p[\psi](\tau_0,\tau)+\Edeg[\psi](\tau_0).
  \end{align*}

 Commutating the equation with $\T, \Z, \OO$ and using Lemmas \ref{Lemma:calculuslemma} and \ref{lem:higherordercalculuslemma}, we deduce for $s\geq 2$
  \begin{align*}
  \EFdeg^s[\psi](\tau_0,\tau)  & \les  \int_{\Si(\tau)\cap\{r\leq r_0\}}|\dk^{s} Z\psi|^2+
   r_0^{3-p}  \Bdeg^{s-1}_p[\T\psi] (\tau_),\tau)\\
   &+  r_0^{1-p}  \Bdeg^s_p[\psi](\tau_0,\tau)+\Edeg^s[\psi](\tau_0).
  \end{align*}
   The $r^p$ version of the estimate \eqref{eq:Proof-Prop.1.6-Intro} (the control of $\Edeg_p[\psi](\tau)$) follows then by a standard procedure, see
  for example section 10.3 of \cite{GKS}.

\subsection{Proof of  Theorem \ref{Thm:Morawetz2}}
The goal of this section is to prove Theorem \ref{Thm:Morawetz2}, which we recast in the following more precise and useful form.
\begin{theorem}
\lab{Thm:Morawetz2'}
The following estimates holds true for solutions of $\square_{a,m} \psi=N$ with $|a|/m\leq 0.75$, for any $s\ge 4$ and $1+\de<p< 2-\de$ and a fixed $r_0$ sufficiently large.
\beq
\lab{eq:Thm-Morawetz2'}
\bsplit
\BEF_p^s[\psi](\tau_0,\tau)&\les  \int_{\Si(\tau)\cap\{r\le      r_0\}}|\dk^s\Z\psi|^2+\int_{\tau-1}^{\tau} \int_{\Si(s)\cap\{r\leq r_0\}}|\dk^{\le s}Z\psi|^2 ds\\
&+ E_p^s[\psi](\tau) + \NN^s_p[\psi, N](\tau_0, \tau)
\end{split}
\eeq
\end{theorem}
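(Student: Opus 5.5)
We combine the conditional Morawetz estimate of Theorem \ref{Thm:Morawetz1} with the flux estimate of Theorem \ref{Thm:Flux-estimate'}, the energy bound of Theorem \ref{thm:flux-ind-energy}, and Proposition \ref{Prop.1.6-Intro} (in the form \eqref{eq:Proof-Prop.1.6-Intro}), and close by absorption using the smallness of $r_0^{1-p}$ and $\ep_0$. The first step is to apply Theorem \ref{Thm:Morawetz1} with $s$ replaced by $s$:
\[
\BEF_p^s[\psi](\tau_0,\tau)\les \EFdeg[(e_3,e_4,r\nab)^{\le s}\psi](\tau_0,\tau)+E_p^s[\psi](\tau_0)+\NN_p^s[N](\tau_0,\tau).
\]
It thus suffices to bound $\EFdeg^s[\psi](\tau_0,\tau)$ by the right-hand side of \eqref{eq:Thm-Morawetz2'} plus a small multiple of $\BEF_p^s[\psi](\tau_0,\tau)$.

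To bound $\EFdeg^s$ we use \eqref{eq:Proof-Prop.1.6-Intro}. The term $r_0^{1-p}\Bdeg^s_p[\psi]$ is absorbed into the left side once $r_0$ is large, since $p>1+\de$. The remaining bulk term is $r_0^{3-p}\Bdeg^{s-1}_p[\T\psi]$. For this we apply Theorem \ref{Thm:Morawetz1} again, now to $\T\psi$ at level $s-1$, which bounds $\Bdeg_p^{s-1}[\T\psi]$ by $\EFdeg^{s-1}[\T\psi]+E_p^{s-1}[\T\psi](\tau_0)+\NN$. The energy part of $\EFdeg^{s-1}[\T\psi]$ at time $\tau$ is controlled by Theorem \ref{thm:flux-ind-energy}, commuted with $(\T,\Z,\OO)$ and upgraded via the norm equivalence Lemmas \ref{Lemma:calculuslemmadeg} and \ref{lem:higherordercalculuslemma}. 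This yields
\[
\int_{\tau-1}^\tau\int_{\Si(s)\cap\{r\le r_0\}}|\dk^{\le s}\Z\psi|^2+\Edeg^{s}[\psi](\tau_0)+\NN.
\]
The horizon flux $\Fdeg^{s-1}[\T\psi]$ is handled by Theorem \ref{Thm:Flux-estimate'} applied to $\dk^{\le s-2}$-commuted versions of $\psi$, that is, to $(\T,\Z)^i\OO^j\psi$ followed by the elliptic Lemma \ref{lem:higherordercalculuslemma}. This gives the two local $\Z$-terms plus $\ep_0\Edeg^s[\psi](\tau)+\Edeg^s[\psi](\tau_0)$. Here $\Fdeg[(T,Z)\T\psi]$ together with the $\T$-energy recovers the full $\Fdeg[\T\dk^{\le 1}\psi]$ at the top order needed.

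The main obstacle is the ordering of constants. The factor $r_0^{3-p}$ in front of $\Bdeg^{s-1}_p[\T\psi]$ is large, so it multiplies $\ep_0\Edeg^s[\psi](\tau)$. We must therefore fix $r_0$ first, large enough to absorb $r_0^{1-p}\Bdeg^s_p$, and only then choose $\ep_0\ll r_0^{p-3}$, at the cost of an $\ep_0^{-1}$ factor on $\int_{\Si(\tau)\cap\{r\le r_0\}}|\dk^s\Z\psi|^2$. This cost is harmless because that term appears on the right of \eqref{eq:Thm-Morawetz2'} with an implicit constant. After this choice, $r_0^{3-p}\ep_0\Edeg^s[\psi](\tau)\le \frac12\BEF^s_p[\psi]$ is absorbed into the left side.

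The derivative count is where the requirement $s\ge4$ enters. Theorem \ref{Thm:Flux-estimate'} costs two derivatives, applied to $\T\psi$ at level $s-1$, so it must be checked that everything is expressible in $\dk^{\le s}\Z\psi$ and $E^s_p$. Finally, the $r^p$-weighted parts are recovered as in section 10.3 of \cite{GKS}, and the degenerate-to-nondegenerate passage at the horizon uses the partial red shift of Proposition \ref{prop:partialredshiftestimate}.
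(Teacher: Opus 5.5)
Your proposal follows the same route as the paper's proof. You apply Theorem \ref{Thm:Morawetz1}, then Proposition \ref{Prop.1.6-Intro} with the $r_0^{1-p}$ term absorbed, then Theorem \ref{Thm:Morawetz1} again to $\T\psi$ at level $s-1$, controlling the energy at $\tau$ by Theorem \ref{thm:flux-ind-energy}, the horizon flux by Theorem \ref{Thm:Flux-estimate'}, and finally taking $\ep_0$ small. Your explicit ordering of constants (fix $r_0$ first, then take $\ep_0\ll r_0^{p-3}$ and pay $\ep_0^{-1}$ on the local term at time $\tau$) is exactly what the paper's ``choosing $\ep_0$ sufficiently small'' implicitly requires.
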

\begin{proof}
We sketch the proof as follows.

{\bf Step 1.}
     Start with the first conditional Morawetz estimate \eqref{eq:Thm-Morawetz1}. For $s\geq 2$
    \beq\lab{eq:oldMorawetz}
    \BEF_p^s[\psi](\tau_0, \tau)\les \EFdeg[(e_3,e_4,r\nab)^{\leq s}\psi](\tau_0,\tau)+E_p^s[\psi](\tau_0)+\NN_p^s[\psi, N](\tau_0, \tau).
    \eeq
    {\bf Step 2.}
    Combining \eqref{eq:oldMorawetz} with the estimate \eqref{eq:Proof-Prop.1.6-Intro} in Proposition \ref{Prop.1.6-Intro} and taking $r_0$ sufficiently large, we derive for $s\geq 3$
    \beq
    \lab{eq:reducetoT}
    \bsplit
    \BEF_p^s[\psi](\tau_0, \tau)&\les   r_0^{3-p} \Bdeg_p^{s-1}[\T\psi] (\tau_0, \tau)+ \int_{\Si(\tau)\cap\{r\leq r_0\}}|\dk^{s} Z\psi|^2 \\
    &+E_p^s[\psi](\tau_0)+\NN_p^s[\psi, N](\tau_0, \tau)\\
    &\les   r_0^{3-p} \EFdeg[(e_3, e_4, r\nab)^{s-1}T\psi] (\tau_0, \tau)+ \int_{\Si(\tau)\cap\{r\leq r_0\}}|\dk^{s} Z\psi|^2 \\
    &+E_p^s[\psi](\tau_0)+\NN_p^s[\psi, N](\tau_0, \tau).
    \end{split}
    \eeq
        {\bf Step 3.} To control the energy $\Edeg^s[T\psi](\tau)$, we need to use Theorem \ref{thm:flux-ind-energy}
        \beq
\lab{eq:controlofE}
      \Edeg[\psi](\tau)\les \int_{\tau-1}^{\tau} \int_{\Si(s)\cap\{r\leq r_0\}}|\Z\psi|^2\, ds+ \Edeg[\psi](\tau_0)+\Big|\int_{\DD(\tau_0, \tau)} \T\psi\c N\Big|.
\eeq
        This is a highly non trivial estimate, which we present in \cite{H-K2}, based on new physical space adaptations of the
transformation technique used by Whiting in the proof of his mode stability result \cite{W}.

{\bf Step 4.} To control the flux $\Fdeg^s[T\psi](\tau_0, \tau)$, we need to use Theorem \ref{Thm:Flux-estimate} (see \eqref{eq:Flux-estimate'})
\beq
\lab{eq:Flux}
\bsplit
 \Fdeg[(T,Z)\T\psi](\tau_0, \tau) & \les
\int_{\tau-1}^{\tau} \int_{\Si(s)\cap\{r\leq r_0\}}|\dk^{\le 2}\Z\psi|^2\, ds+  \int_{\Si(\tau)\cap\{r\leq r_0\}}|\dk^2\Z\psi|^2\\
 & +\ep_0 \Edeg^2[\psi](\tau)\!+\!\Edeg^2[\psi](\tau_0)+\Big|\int_{\DD(\tau_0, \tau)} \T\dk ^{\le 2 }\psi\c  \dk ^{\le 2 }N\Big|.
 \end{split}
\eeq

    {\bf Step 5.} The higher order estimates of \eqref{eq:controlofE} and \eqref{eq:Flux} can be derived by the standard procedure of commutation with $\T, \Z, \OO$ and the Lemmas \ref{Lemma:calculuslemma} and \ref{lem:higherordercalculuslemma}. Combining the higher order derivative versions of \eqref{eq:controlofE}, \eqref{eq:Flux} and \eqref{eq:reducetoT} and choosing $\ep_0$ sufficiently small, we derive
    \begin{align*}
    \BEF_p^4[\psi](\tau_0,\tau)&\les \int_{\tau-1}^{\tau} \int_{\Si(s)\cap\{r\leq r_0\}}|\dk^{\le 4}\Z\psi|^2\, ds+  \int_{\Si(\tau)\cap\{r\leq r_0\}}|\dk^4\Z\psi|^2\\
 & +\Edeg^4[\psi](\tau_0)+ \NN^4_p[\psi, N](\tau_0, \tau).
     \end{align*}
    The estimates for $s\geq 5$
    \begin{align*}
    \BEF_p^s[\psi](\tau_0,\tau)&\les \int_{\tau-1}^{\tau} \int_{\Si(s)\cap\{r\leq r_0\}}|\dk^{\le s}\Z\psi|^2\, ds+  \int_{\Si(\tau)\cap\{r\leq r_0\}}|\dk^s\Z\psi|^2\\
 & +\Edeg^s[\psi](\tau_0)+ \NN^s_p[\psi, N](\tau_0, \tau)
    \end{align*}
    can then be derived by the standard procedure of commutation with $\T, \Z$ and the Lemma \ref{lem:higherordercalculuslemma}.
    This finishes the proof of the conditional Morawetz-Energy estimate \eqref{eq:Thm-Morawetz2'}.
\end{proof}





\section{Proof of Theorem \ref{Thm:Morawetz3}}
\lab{section:Proof of Thm-Morawetz3}


The proof of Theorem \ref{Thm:Morawetz3} uses a
    continuity argument\footnote{A similar argument was used in \cite{DRS}, in the microlocalized setting of that paper.}, based on the expectation that the local energy quantity $ \int_{\Si(\tau)\cap\{r\ge r_0\}}|\dk^{\le s} \Z\psi|^2+\int_{\tau-1}^{\tau} \int_{\Si(\tau)\cap\{r\leq r_0\}}|\dk^{\le s}Z\psi|^2 d\tau$ on the right hand side of \eqref{eq:Thm-Morawetz2'} converges to $0$ as $\tau\to \infty$. Since the continuity argument loses derivatives we need first to refine the estimate
     \eqref{eq:Thm-Morawetz2'}. For simplicity of notation, we introduce the local (in time and space) energy quantity
     \beq
     \lab{eq:defofLE}
     \LE[\psi](\tau):=\int_{\Si(\tau)\cap\{r\ge      r_0\}}|\psi|^2+\int_{\tau-1}^{\tau} \int_{\Si(s)\cap\{r\leq r_0\}}|\psi|^2 ds     \eeq.

\begin{proposition}
    \lab{Prop:Main-quantitative-estim} For  solutions of   $\square_{a,m}\psi=N$  with  $|a|/m \le 0.75$ and $s\ge 4$, we have
    \beq
    \lab{eq:Morawetz-improved1}
    \bsplit
    \BEF_{p}^{s}[\psi](\tau_0, \tau) &\les  \LE[\dk^4 \Z^{\leq s-3}\psi](\tau)   + E_p^{s}[\psi](\tau_0)+\NN_p^s[\psi, N](\tau_0,\tau).
\end{split}
\eeq
\end{proposition}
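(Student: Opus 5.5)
Starting from Theorem \ref{Thm:Morawetz2'} (the refined form of Theorem \ref{Thm:Morawetz2}), the task is to replace the local-energy terms on its right-hand side,
\[
\int_{\Si(\tau)\cap\{r\le r_0\}}|\dk^s\Z\psi|^2+\int_{\tau-1}^{\tau}\int_{\Si(s')\cap\{r\le r_0\}}|\dk^{\le s}\Z\psi|^2\,ds',
\]
by the quantity $\LE[\dk^4\Z^{\le s-3}\psi](\tau)$. The latter carries at most four "$\dk$-type" derivatives; all the others are $\Z$ derivatives. The idea is that, because $\T,\Z,\OO$ commute with $|q|^2\square_{a,m}$, one can apply Theorem \ref{Thm:Morawetz2'} at the lowest admissible order $s=4$ to $\Z^{j}\psi$ for $j\le s-3$. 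This is legitimate since $\Z^j\psi$ solves $\square_{a,m}\Z^j\psi=\Z^jN$. The result bounds $\BEF^4_p[\Z^{j}\psi]$ by $\LE$-type terms involving only $\dk^{\le4}\Z^{j+1}\psi\subset \dk^4\Z^{\le s-3}\psi$, plus data and inhomogeneous terms. Summing over $j\le s-4$ then controls $\BEF^4_p[\Z^{\le s-4}\psi]$ in terms of $\LE[\dk^4\Z^{\le s-3}\psi](\tau)$.

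Next I would recover the remaining derivatives of $\psi$ up to order $s$. This should follow the same bootstrapping as the proof of Theorem \ref{Thm:Morawetz1} for $s>2$: treat $\T$ derivatives and angular derivatives through the norm equivalences of Lemmas \ref{Lemma:calculuslemma} and \ref{lem:higherordercalculuslemma}, which trade $\dk$ derivatives for $(\T,\Z)$ and $\OO$ derivatives, with the wave equation used elliptically. The $\T$ derivatives cost nothing in $\LE$. For these I would apply Theorem \ref{Thm:Morawetz2'} to $\T^k\Z^j\psi$, or, better, run the argument of Theorem \ref{Thm:Morawetz2'} Steps 1--5 directly. There the local energy enters only through Theorem \ref{thm:flux-ind-energy}, Theorem \ref{Thm:Flux-estimate} and Proposition \ref{Prop.1.6-Intro}, always as $\int|\dk^{\le2}\Z\,\cdot|^2$ applied to a function, and never with extra $\T$ derivatives that cannot be absorbed. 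Whenever the $\LE$ term falls on $\T^k\Z^j\psi$, I would bound it by $\BEF$-norms of lower order times a small factor, or absorb it on the left. Concretely, for $\tau$-slabs of unit length, $\int_{\tau-1}^\tau\int_{r\le r_0}|\dk^{\le4}\T\Z\psi|^2$ is dominated by the Morawetz bulk in the nontrapped region plus the trapped part, and I would handle the latter through the energy at order one higher rather than four higher.

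The induction would run on the total order $s$, with the base case $s=4$ given by Theorem \ref{Thm:Morawetz2'} applied to $\psi$; the $\Z^{\le s-3}$ structure on the right comes from peeling off only $\Z$ commutators at each step. Commuting with $\OO$ is not needed separately, since Lemma \ref{lem:higherordercalculuslemma} reduces $\BEF^{s+1}_p$ to $\BEF^{s}_p[(\T,\Z)^{\le1}\psi]$; each application therefore either adds a $\Z$ (moving into $\LE[\dk^4\Z^{\le s-3}\psi]$) or a $\T$.

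The main obstacle is the $\T$ branch. Terms like $\LE[\dk^{\le4}\Z\T^k\psi]$ are not obviously bounded by $\LE[\dk^4\Z^{\le s-3}\psi]$, since a $\T$ is not a $\dk$-derivative that can be traded back cheaply inside a local-in-time integral. My first attempt would be to use the time-averaged part of $\LE$: interpolate $\int_{\tau-1}^\tau|\T f|^2$ as $\ep\int|\T^2f|^2+C_\ep\int|f|^2$ on unit time intervals plus boundary energies, then absorb the $\ep$ term into the left-hand side $\BEF^s_p$. If that fails I would instead view $\T$ as part of $\dk$ modulo $r$-weights in $r\le r_0$, since $\T$ is a combination of $e_3,e_4,\nab$ with bounded coefficients there, so that $\T\in\dk$ and the count of four $\dk$ derivatives in $\LE$ is exactly what the base estimate at $s=4$ produces.
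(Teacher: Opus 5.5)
\textbf{The $\T$-words are not handled.} Your reduction of $\BEF^s_p[\psi]$ to $\BEF^4_p[(\T,\Z)^{\le s-4}\psi]$ via Lemma \ref{lem:higherordercalculuslemma} matches the paper. So does your use of Theorem \ref{Thm:Morawetz2'} at order $4$ on pure $\Z$-words. The gap is the step you flag yourself, the words containing $\T$, and none of the mechanisms you offer closes it.

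\begin{itemize}
\item \emph{Absorption.} The term $\LE[\dk^{\le4}\Z\T^k\psi](\tau)$ produced by Theorem \ref{Thm:Morawetz2'} has the same total number of derivatives as $\BEF^4_p[\T^k\psi]$ and comes with an $O(1)$ constant. There is no small factor or lower-order structure that would let you absorb it on the left.
\item \emph{Time interpolation.} The bound $\int|\T f|^2\le\ep\int|\T^2f|^2+C_\ep\int|f|^2$ creates, for words of maximal length $s-4$, a term with $s+2$ derivatives of $\psi$. The quantity $\BEF^s_p[\psi]$ does not control this, so the argument loses a derivative at top order.
\item \emph{Counting $\T$ among the $\dk$'s.} This turns the term into $\LE[\dk^{\le 4+k}\Z\psi]$, which has more than four $\dk$-derivatives and is not of the required form. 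The point of the proposition is to replace excess $\dk$'s by $\Z$'s, which become powers of $\ell$ for $\Z$-equivariant solutions.
\end{itemize}

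\textbf{The missing step.} Use that $\T$ is a $\dk$-derivative on $\{r\le r_0\}$ to turn the local term back into an energy of a \emph{different} function, then re-apply the theorem. Let $W$ be a word of length $k\le s-4$ containing a $\T$, and write $W=\T W'$. Since $\T$ and $\Z$ commute, and $\LE$ is localized to $r\le r_0$ and a unit time interval, a finite-in-time energy estimate gives, up to inhomogeneous terms,
\[
\LE[\dk^{\le4}\Z W\psi](\tau)=\LE[\dk^{\le4}\T\,\Z W'\psi](\tau)\les \sup_{\tau-1\le\tau'\le\tau}E^4_p[\Z W'\psi](\tau')\les\BEF^4_p[\Z W'\psi](\tau_0,\tau).
\]
Hence
\[
\BEF^4_p[W\psi]\les\BEF^4_p[\Z W'\psi]+E^s_p[\psi](\tau_0)+\NN^s_p[\psi,N](\tau_0,\tau).
\]
At the same total order, one $\T$ has been traded for a $\Z$.

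Now apply Theorem \ref{Thm:Morawetz2'} to $\Z W'\psi$ and repeat. After at most $k$ steps all $\T$'s have become $\Z$'s, and you land on
\[
\BEF^4_p[\Z^k\psi]\les\LE[\dk^{\le4}\Z^{k+1}\psi](\tau)+E^s_p[\psi](\tau_0)+\NN^s_p[\psi,N](\tau_0,\tau), \qquad k+1\le s-3.
\]
This is a finite recursion on the number of $\T$'s, with no absorption and no smallness. Your remark about treating the trapped part ``through the energy'' points the right way. However, that energy must be fed back into Theorem \ref{Thm:Morawetz2'} for the new function $\Z W'\psi$, not absorbed on the left.
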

\begin{proof}
The proof is based on the estimate \eqref{eq:Thm-Morawetz2'} of Theorem \ref{Thm:Morawetz2'} and Lemmas \ref{Lemma:calculuslemma} and \ref{lem:higherordercalculuslemma}. Using Lemmas \ref{lem:higherordercalculuslemma}, we derive
\beq\lab{eq:BEFs-s+1}
    \BEF_p^{s}[\psi](\tau_0,\tau)\les \BEF^4_p[(T,Z)^{\leq s-4}\psi](\tau_0,\tau)+E_p^{s}[\psi](\tau_0)+\NN_p^{s}[\psi,N](\tau_0,\tau)
        \eeq
According to \eqref{eq:Thm-Morawetz2'}
\[
 \BEF^4_p[(T,Z)^{\leq s-4}\psi](\tau_0,\tau)\les  \LE[\dk^{\le 4} (\T, \Z)^{\leq s-4}    Z\psi](\tau) + E^{s}_{p}[\psi](\tau_0)+\NN_p^{s}[\psi,N](\tau_0, \tau).
 \]
Using finite in time energy, we note that
 \[
 \LE[\dk^{\le 4} (\T, \Z)^{\leq s-4}    Z\psi](\tau)\les E^4_p[(T,Z)^{\leq s-5}Z\psi]\les \BEF^4_p[(T,Z)^{\leq s-5}Z\psi].
   \]
 Repeating the argument we deduce
\[
\BEF^4_p[(\T, \Z)^{s-4}\psi](\tau_0, \tau) \les  \BEF^4_p [\Z^{s-4}\psi](\tau_0, \tau)+ E^{s}_{p}[\psi](\tau_0)+\NN_p^{s}[\psi,N](\tau_0, \tau).
\]
Thus, in view of \eqref{eq:Thm-Morawetz2'},
\[
\BEF^{4}_p[(\T, \Z)^{s-4}\psi](\tau_0, \tau) \les   LE[\dk^4Z^{s-3}\psi](\tau) + E^{s}_{p}[\psi](\tau_0)+\NN_p^{s}[\psi,N](\tau_0, \tau).
\]
Finally, using \eqref{eq:BEFs-s+1} we conclude that
\[
\BEF^s_p[\psi](\tau_0, \tau) \les    LE[\dk^4Z^{s-3}\psi](\tau_2) + E^{s}_{p}[\psi](\tau_0)+\NN_p^{s}[\psi,N](\tau_0, \tau).
\]
as stated.
\end{proof}

\begin{definition}
We say that a complex solution of $ \square_{a,m}\psi=N$ is $\Z$-equivariant\footnote{Note that this implies that $F$ is also $Z$-equivariant, $ZF=-i \ell F$.} i.e. if $Z\psi=-i \ell \psi $.
\end{definition}
\begin{corollary}
Under the same assumptions as in Theorem \ref{Thm:Morawetz2'}, for any $\Z$-equivariant solution,
\beq
\lab{eq:Main-quantitative-estim2-inhom}
\BEF_{p}^{s}[\psi](\tau_0, \tau)\les  \ell^{2s-6}LE[\dk^4\psi](\tau)+E_p^{s}[\psi](\tau_0)+ \NN_p^{s} [\psi,  N](\tau_0, \tau).
\eeq
\end{corollary}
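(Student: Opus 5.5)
The plan is to specialize Proposition \ref{Prop:Main-quantitative-estim} to a $\Z$-equivariant solution. There $\Z$ acts as multiplication by $-i\ell$, so every $\Z$ derivative can be traded for a factor of $\ell$. Since $\Z$ is Killing and commutes with the frame derivatives up to the lower order structure encoded in $\dk$ (more precisely, $Z$ commutes with $\square_{a,m}$ and with $\pr_r,\pr_\th$ in the adapted coordinates), $\Z\psi=-i\ell\psi$ propagates to all derivatives. In particular $\dk^4\Z^{k}\psi$ differs from $(-i\ell)^k\dk^4\psi$ only by terms where $[\dk,\Z]$ appear. I will first check that these commutators are harmless. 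In the $\tau$-adapted coordinates, $\Z=\pr_{\phit}$ commutes exactly with $\pr_\tau,\pr_{\rt},\pr_\th,\pr_{\phit}$. The frame derivatives $e_3,e_4,\nab$ are combinations of these with coefficients depending only on $(r,\th)$. Hence $[\Z,\dk]=0$, and
\[
\dk^4\Z^{k}\psi=(-i\ell)^k\dk^4\psi .
\]

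Next I apply \eqref{eq:Morawetz-improved1} directly. The right-hand side contains $\LE[\dk^4\Z^{\le s-3}\psi](\tau)$, a sum over $0\le k\le s-3$ of $\LE[\dk^4\Z^k\psi]$. By the identity above and the quadratic structure of $\LE$ (which involves $|\cdot|^2$), this sum equals $\sum_{k\le s-3}|\ell|^{2k}\LE[\dk^4\psi](\tau)$. That is bounded by $C\,(1+|\ell|)^{2(s-3)}\LE[\dk^4\psi](\tau)$, i.e. $\ell^{2s-6}$ for $|\ell|\ge1$. The case $\ell=0$ is axisymmetric, and then only the $k=0$ term survives.

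The one point needing care is the meaning of $\ell^{2s-6}$ when $\ell=0$, together with the implicit constant. I would state the bound with $\langle\ell\rangle^{2s-6}$, or note that the axisymmetric case is handled separately, so that the constant is uniform in $\ell$. The inhomogeneous terms $E^s_p[\psi](\tau_0)$ and $\NN^s_p[\psi,N](\tau_0,\tau)$ pass through unchanged, since the proposition already holds for general $N$; the equivariance of $N$ is not even needed for these terms. Combining these steps gives \eqref{eq:Main-quantitative-estim2-inhom}. I expect no real obstacle beyond confirming $[\Z,\dk]=0$, which follows from the $\phi$-independence of all frame coefficients.
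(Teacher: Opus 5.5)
Your proposal is correct and matches the argument the paper leaves implicit. You insert $\dk^4\Z^k\psi=(-i\ell)^k\dk^4\psi$ into the $\LE$ term of Proposition \ref{Prop:Main-quantitative-estim} and carry $E^s_p$ and $\NN^s_p$ along unchanged. Your caveat is apt: $\ell^{2s-6}$ should be read as $(1+|\ell|)^{2s-6}$, with the axisymmetric case $\ell=0$ treated separately. The check $[\Z,\dk]=0$ is not actually needed, because $\Z^k\psi=(-i\ell)^k\psi$ holds as an identity of functions and $\dk^4$ is linear.
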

  We are now ready to prove the global Morawetz Theorem \ref{Thm:Morawetz3} which we rewrite below.
\begin{theorem}
    \lab{Thm:globalMorawetz}
    The following global Morawetz estimate\footnote{Recall that the inequality sign $\les$ is the same as $\le C$ with $C$ a universal constant independent of $\ell$.} holds true in the domain $\DD(\tau_1, \infty)$ for solutions of $\square_{a,m} \psi=N$, $|a|/m\le 0.75$,
     $1+\de \le p\le 2-\de $, $s\ge 8$, for which $E^8_p[\psi](\tau_0) + \NN_p^8[N] (\tau_0, \infty) <\infty$,
    \[
    B^s_p[\psi](\tau_0, \infty)\les E_p^s[\psi](\tau_0)+ \NN_p^{s} [\psi,  N](\tau_0, \infty).
    \]
\end{theorem}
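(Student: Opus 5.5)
We prove Theorem \ref{Thm:globalMorawetz} by a continuity argument in the parameter $a$, working one $\Z$-mode at a time.

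\textbf{Reduction to modes.} Decompose $\psi=\sum_\ell \psi_\ell$ into $\Z$-equivariant pieces, $\Z\psi_\ell=-i\ell\psi_\ell$. Each $\psi_\ell$ solves $\square_{a,m}\psi_\ell=N_\ell$. Since $\Z$ is Killing and the quantities $B,E,F,\NN$ are built from $\phi$-invariant integrals, the modes are orthogonal in every norm involved. It therefore suffices to prove the estimate for each $\psi_\ell$ with a constant independent of $\ell$ and then sum. The mode $\ell=0$ is axially symmetric and is treated directly, as explained in section \ref{sec:intro-axisym}.

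\textbf{The continuity set.} For fixed $\ell\ne0$, let $\mathcal{A}\subset[0,0.75m]$ be the set of $a_0$ such that, for all $|a|\le a_0$, every $\Z$-equivariant solution with finite $E^8_p+\NN^8_p$ satisfies
\[
\lim_{\tau\to\infty}\LE[\dk^4\psi](\tau)=0.
\]
\begin{itemize}
\item \emph{Nonempty.} Small $a_0$ lies in $\mathcal{A}$ by the slowly rotating results of \cite{GKS}.
\item \emph{Implies the estimate.} Fix $a\in\mathcal{A}$. Let $\tau\to\infty$ in \eqref{eq:Main-quantitative-estim2-inhom}. The $\ell^{2s-6}\LE$ term vanishes in the limit, and monotone convergence gives
\[
\BEF^s_p[\psi](\tau_0,\infty)\les E^s_p[\psi](\tau_0)+\NN^s_p[\psi,N](\tau_0,\infty),
\]
uniformly in $\ell$ because the constant is independent of $\ell$.
\item \emph{Closed.} If $a_n\to a_*$ with $a_n\in\mathcal{A}$, the uniform bound above holds for all $a_n$. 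Solutions depend continuously on $a$ on finite time intervals, so the bound passes to the limit on every $\DD(\tau_0,T)$ and hence on $\DD(\tau_0,\infty)$. Finiteness of $B^{s}_p$ with $s\ge 8$ forces $\int_{\tau_0}^\infty \LE[\dk^4\psi]\,d\tau<\infty$. Combining this with the bound on $\partial_\tau\LE$ from the energy at one more derivative, $\LE\to0$. So $a_*\in\mathcal{A}$.
\item \emph{Open.} Take $a$ near $a_0\in\mathcal{A}$. Write the solution $\psi^{(a)}$ as a solution of $\square_{a_0,m}\psi^{(a)}=N+(\square_{a_0,m}-\square_{a,m})\psi^{(a)}$. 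The new error term is bounded by $|a-a_0|\,\dk^{\le2}\psi^{(a)}$, which is compactly supported in $r$ to leading order and decays in $r$ like the metric difference.
\end{itemize}

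\textbf{Bootstrap for openness.} Apply the $a_0$ global estimate on $\DD(\tau_0,T)$ with this error placed in $\NN$. This gives
\[
\BEF^{s-2}_p[\psi^{(a)}](\tau_0,T)\le C\big(E^{s}_p+\NN^s_p\big)+C|a-a_0|\,\BEF^{s}_p[\psi^{(a)}](\tau_0,T).
\]
The extra term $\big|\int (\T,\Z)\dk^k\psi\c\dk^k(\cdot)\big|$ in $\NN^s$ is also absorbed using the bulk norm. The $\tau$-independent constant lets us close at the same regularity. Apply Proposition \ref{Prop:Main-quantitative-estim} to $\psi^{(a)}$, which controls $\BEF^s_p$ by $\LE[\dk^4\Z^{\le s-3}\psi]=\ell^{\cdot}\LE[\dk^4\psi]$ plus data. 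Bound $\LE$ at finite $T$ by the lower-order $\BEF^{s-2}_p$, which is possible since $s\ge8$ leaves room for $\dk^4$ and two derivatives of loss. For $|a-a_0|\ll_\ell 1$ this absorbs the error term. This yields finite global bulk, hence $\LE\to0$ as in the closedness step. The neighborhood size may depend on $\ell$, which is harmless because the final constant comes from \eqref{eq:Main-quantitative-estim2-inhom}.

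\textbf{Conclusion and main obstacle.} Since $\mathcal{A}$ is nonempty, open and closed, it equals $[0,0.75m]$. The main obstacle is the derivative loss in the openness step. The perturbation $\square_{a_0,m}-\square_{a,m}$ is second order, and the trapping degeneracy in the $a_0$ Morawetz estimate loses further derivatives, so the absorption must be arranged so that the $a_0$ estimate at level $s$ controls the error at level $s-2$. This is why $s\ge8$ is assumed, and where the $\ell$-dependence must be kept outside the final constant.
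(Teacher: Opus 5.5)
Your proposal is correct and follows essentially the same route as the paper. You reduce to $\Z$-equivariant modes by orthogonality and run a continuity argument in $a$ (nonempty by \cite{GKS}, closed, open), where openness comes from treating $(\square_{a_0,m}-\square_{a,m})\psi$ as an inhomogeneity and absorbing its two-derivative loss through the lower-order, $\ell$-weighted $\LE$ term in \eqref{eq:Main-quantitative-estim2-inhom}, at the price of an $\ell$-dependent neighborhood that does not affect the final constant. The differences are only bookkeeping: you propagate $\LE[\dk^4\psi]\to 0$ rather than the integrability condition $(*)$, and you absorb on finite intervals at levels $s-2\to s$ (bounding $\LE$ by $\BEF^{s-2}_p$), whereas the paper works at levels $s\to s+2$ and controls $\LE$ by $B^5_p$ plus data via integration by parts in $\tau$.
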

\begin{proof}
    By a simple orthogonality argument it suffices to prove the result for $Z$-equivariant solutions. We thus assume in what follows that $Z\psi=-i \ell \psi$ which allows us to use the estimate \eqref{eq:Main-quantitative-estim2-inhom}. We then proceed by a continuity argument, that is we assume that the statement of the Theorem holds true for a $Z$-equivariant solution of $\square_{a,m}\psi=0$, $a\le a_0$ and show that the same holds for $a>a_0$ close to $a_0$. The key
    point is that we can write
    $\square_{a,m} \psi=0$, $a>a_0$ in the form, see Lemma \ref{Lemma;RHS},
    \[\
    \square_{a_0, m}\psi=N_{a, a_0}:=\big(\square_{a,m}-\square_{a_0,m} \big)\psi \approx  (a-a_0)r^{-2}   \dk^{\le 2}\psi,
    \]
    where the loss of derivatives on the right hand side can be handled using the gain of derivative estimates of \eqref{eq:Main-quantitative-estim2-inhom}.

    According to \eqref{eq:Main-quantitative-estim2-inhom}, we have, for $Z$ equivariant solution of $\square \psi=N$, with $s\ge 4$, $\tau\ge \tau_0$, for the entire range $|a|/ m \le 0.75$,
    \beq
    \lab{eq:continuity1'}
    \bsplit
    \BEF_{p}^{s}[\psi](\tau_0, \tau)&\les
    \ell^{2s-6 } \LE[\dk^4\psi](\tau_2)   + E_p^{s}[\psi](\tau_0) + \NN_p^{s}[\psi, N](\tau_0, \tau)   .
\end{split}
\eeq
Based on \eqref{eq:continuity1'} we derive the following.
\begin{lemma}
    \lab{Lemma:*}
    Given a solution of $\square \psi=N$, with $Z\psi=-i\ell \psi$, which verifies the assumption
    \beq
    (*)  \qquad \qquad\qquad \qquad \qquad \qquad  \int_{\tau_0}^\infty  d\tau\int_{\Si_{\le   r_0}(\tau)}|\dk^4\psi|^2 <\infty \qquad \qquad \qquad \qquad\quad
    \eeq
    we have for $s\geq 4$
    \beq
    \lab{eq:continuity2-F}
    \bsplit
    \BF^{s} _p[\psi](\tau_0, \infty) \les  E^{s}_p[\psi](\tau_0)+
    \NN_p^{s}[ \psi, N](\tau_0,\infty).
\end{split}
\eeq
where $\NN_p^{s}[\psi, N](\tau_0,\infty)=\limsup_{\tau\to \infty} \NN_p^{s}[\psi,N](\tau_0,\tau)$.
\end{lemma}
\begin{proof}
In view of (*) there exists a sequence $\tau_i\to \infty$ such that $ \LE[ \dk^4\psi](\tau_i) \to 0$. Plugging this information on the right hand side of \eqref{eq:continuity1'} we deduce,
\[
\BEF^s_p[\psi](\tau_0, \tau_i) \les \ell^{2s-6 } \LE[ \dk^4\psi](\tau_i)+ E^{s}_p[\psi](\tau_0) +
\NN_p^{s}[ \psi, N](\tau_0,\tau_i).
\]
Hence,
passing to the limit, for all $s\ge 4$
\[
\BF^s_p[\psi](\tau_0, \infty ) \les  E^{s}_p[\psi](\tau_0)+
\NN_p^{s}[ \psi, N](\tau_0,\infty)
\]
as stated.
\end{proof}
To prove Theorem \ref{Thm:globalMorawetz} we show by a continuity argument that $(*)$ holds true for $\ell$-equivariant solutions of $\square_{a,m}\psi=N$ for the entire range $|a|/m \le 0.75$. This implies, in view of Lemma \ref{Lemma:*}, that the estimate
$ B^s_p[\psi](\tau_1, \infty)\les E_p^s[\psi](\tau_1) +\NN^s_p(\psi, N)$, for $s\ge 8$ also holds true.

We proceed in steps as follows,

{\bf Step 1.} We first check that the estimate $(*)$ holds true for $Z$-equivariant solutions of $\square_{a, m} \psi=N$, for small values of $|a|/m$. In view of the remark above this is a consequence of the Energy-Morawetz estimates derived in \cite{GKS}.

{\bf Step 2.} It is straightforward to check that the interval of $|a|/m$ for which $(*)$ holds true is closed.

{\bf Step 3.} We show that the interval of $|a|/m$ for which $(*)$ holds true is open.
More precisely we assume that $(*)$ holds true for all $a=a_0$ and show that it also holds true for $|a-a_0|$ sufficiently small. This is based on the following.
\begin{proposition}
\lab{prop:continuity-main}
Consider solutions $\psi $ of $\square_{a,m}\psi=N$ verifying $\Z^2\psi=-\ell^2 \psi$ for which $E^8_p[\psi](\tau_1) + \NN_p^8[N] (\tau_1, \infty) <\infty$.
Assume that $(*)$ holds true for $a\le a_0$. Then $ (*) $ holds true for all $a> a_0$ sufficiently close to $a_0$.
\end{proposition}
\begin{proof}

We write $\square_{a,m} \psi=N$, $a>a_0$ in the form
\beq
\square_{a_0, m}\psi=N+N_{a_0,a},\qquad N_{a_0, a}:=\big(\square_{a_0,m}-\square_{a,m} \big)\psi.
\eeq
Since $(*)$ is verified at $a=a_0$ we can apply
\eqref{eq:continuity2-F} to the equation
$\square_{a_0, m}\psi=N+N_{a_0, a}. $
We deduce\footnote{Note that the frame and derivatives are measured relative to the metric $g_{a_0, m}$.}, for all $s \ge 4$,
\[
\BF^s_p[\psi](\tau_0, \infty) \les  E^s_p[\psi](\tau_0)+ \NN_p^{s}[ \psi,  N_{a_0, a}](\tau_0,\infty)+  \NN_p^{s}[ \psi,  N](\tau_0,\infty)
\]
To finish the proof of Proposition \ref{prop:continuity-main} we need the estimates
\beq
\lab{eq:continuity-main}
\NN_p^{s}[ \psi, N_{a_0,a}](\tau_0,\infty)\les|a-a_0| B_{p}^{s+2 }[\psi](\tau_0, \infty).
\eeq
whose proof we delay to Lemma \ref{Lemma;RHS}.
Assuming \eqref{eq:continuity-main} we deduce
\bea
\lab{eq:continuity5}
\BF^s_p[\psi](\tau_0, \infty) \les  E^s_p[\psi](\tau_0)+|a-a_0| B_{p}^{s+2}[\psi](\tau_0, \infty)+  \NN_p^{s}[ \psi,  N](\tau_0,\infty).
\eea
On the other hand in view of the estimate
\eqref{eq:Main-quantitative-estim2-inhom},
\[
\BEF_{p}^{s+2}[\psi](\tau_0, \tau)\les  \ell^{2s-2}LE[ \dk^4\psi](\tau)+E_p^{s+2}[\psi](\tau_0)+ \NN_p^{s+2} [\psi,  N](\tau_0, \tau).
\]
Integrating by parts in $\tau$
\[
\int_{\Si(\tau)\cap\{r\leq r_0\}}|\dk^4\psi|^2\les \int_{\Si(\tau_0)\cap\{r\leq r_0\}}|\dk^4\psi|^2  + B^5_p[\psi](\tau_1, \tau_2)
\]
Hence
\[
\BEF_{p}^{s+2}[\psi](\tau_0, \tau)\les  \ell^{2s-2} B_p^5[\psi](\tau_0,\tau)+E_p^{s+2}[\psi](\tau_0)+ \NN_p^{s+2} [\psi,  N](\tau_0, \tau).
\]
Recall (see Definition \ref{eq:NN-quantity}) that the definition of $\NN^s_p [\psi, N](\tau_1, \tau_2) $ contains the term
 \[
 \int_{\Dtrap(\tau_1, \tau_2) }(T,Z) \dk^{\le s+2 } \psi\c \dk^{\le s+2} N.
 \]
  Integrating by parts and applying Cauchy-Schwartz
  we easily derive the estimate
  \beq
  \int_{\Dtrap(\tau_0, \tau) }(T ,Z)\dk^{\le s +2}\psi \c \dk^{\le s+2} N\les \ep \BEF_p^{s+2} [\psi](\tau_0,\tau)+\ep^{-1} \NN^{s+3}[N](\tau_0,\tau).
  \eeq
  Thus
  \[
\BEF_{p}^{s+2}[\psi](\tau_0, \tau)\les  \ell^{2s-2}B_p^5[\psi](\tau_0, \tau)+E_p^{s+2}[\psi](\tau_0)+ \NN_p^{s+3} [  N](\tau_0, \tau).
\]
Passing to the limit,
 \beq
  \lab{eq:continuity6}
\BEF_{p}^{s+2}[\psi](\tau_0, \infty)\les  \ell^{2s-2}B_p^5[\psi](\tau_0, \infty)+E_p^{s+2}[\psi](\tau_0)+ \NN_p^{s+3} [  N](\tau_0, \infty).
\eeq
Therefore, back to \eqref{eq:continuity5},
\[
B^s_p[\psi](\tau_0, \infty) \les  E^s_p[\psi](\tau_0)+|a-a_0|\Big( \ell^{2s-2}B_p^5[\psi](\tau_1, \infty)+E_p^{s+2}[\psi](\tau_0)+ \NN_p^{s+3} [ N](\tau_0, \infty)\Big).
\]
For fixed $\ell$ and $|a-a_0|\ell^{2s-2}$ sufficiently small and $s\ge 5 $ we deduce
\[
B^s_p[\psi](\tau_0, \infty) \les  E^s_p[\psi](\tau_0) + |a-a_0|\Big( E_p^{s+2}[\psi](\tau_0)+ \NN_p^{s+3} [ N](\tau_0, \infty)\Big).
\]
This, since by assumption $ E_p^{7}[\psi](\tau_1)+ \NN_p^{8} [ N](\tau_0, \infty)<\infty$, we deduce
\beq
 \int_{\tau_0}^\infty  d\tau\int_{\Si_{\le   r_0}(\tau)}|\dk^4\psi|^2 \le B^5_p[\psi](\tau_0, \infty)<\infty.
\eeq
which ends the proof of Proposition \ref{prop:continuity-main}
\end{proof}
Thus all three steps of the continuity argument have beed verified, which
ends the proof of Theorem \ref{Thm:globalMorawetz}.
\end{proof}


\subsection{Proof of Estimate \eqref{eq:continuity-main}}

The proof of the estimate is based on the definition of the quantity $\NN^s_p[ N]$, and the following lemma.
\begin{lemma}
\lab{Lemma;RHS}
The expression $ N_{a, a_0} =\big(\square_{a,m}-\square_{a_0, m} \big)\psi$ verifies the bound
\[
\big|N_{a, a_0} \big| \les    O(r^{-2})  |a-a_0|\,  |\dk^{\le 2}\psi|.
\]
Consequently,
\beaa
\NN^s_p[\psi, N_{a,a_0}](\tau_1, \tau_2)&\les & |a-a_0| B_p^{s+2}[\psi](\tau_1, \tau_2).
\eeaa
\end{lemma}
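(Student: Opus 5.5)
The plan is to compare the two operators directly in Boyer--Lindquist coordinates, which are the same coordinate chart for every value of $a$. By Proposition \ref{prop:decompose-square},
\[
\square_{a,m}\psi=\frac{1}{|q_a|^2}\Big(\pr_r(\De_a\pr_r\psi)-\frac{1}{\De_a}\big((r^2+a^2)^2\pr_t^2+2a(r^2+a^2)\pr_t\pr_\phi+a^2\pr_\phi^2\big)\psi+\OO_a\psi\Big).
\]
Every coefficient is a rational function of $(r,\th,a)$ that is smooth in $a$. Hence $N_{a,a_0}=\int_{a_0}^{a}\pr_{a'}\square_{a',m}\psi\,da'$, and the statement reduces to bounding $\pr_a\square_{a,m}\psi$ pointwise by $r^{-2}|\dk^{\le2}\psi|$, uniformly for $a$ in a small neighbourhood of $a_0$.

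\textbf{Far region, $r\ge 2r_0$.} Here I differentiate each coefficient of the inverse metric \eqref{eq:inversemetric-Kerr} in $a$. One finds $\pr_a\g^{rr}=O(r^{-2})$, $\pr_a(|q|^{-2}\Si^2/\De)=O(r^{-2})$, $\pr_a\g^{t\phi}=O(r^{-3})$, and $\pr_a\g^{\phi\phi}$ and $\pr_a\g^{\th\th}$ are $O(r^{-4})$ relative to their angular factors. Writing $\pr_t,\pr_r$ in terms of $e_3,e_4$ and $r^{-1}\pr_\th,(r\sin\th)^{-1}\pr_\phi$ in terms of $\nab$, each second-order term becomes $O(r^{-2})(\dk^{\le2}\psi)$ at the required weight. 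The first-order terms coming from $\pr_a\sqrt{|g|}$ are even better.

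\textbf{Near region, $r\le 2r_0$.} This is the main obstacle. The BL coefficients contain $\De_a^{-1}$, which is singular at $r=r_+(a)$, and the horizon itself moves with $a$. I therefore pass to the adapted coordinates $(\tau,\rt,\th,\phit)$ of Lemma \ref{lemma:square-taucoords}, in which $\square_{a,m}$ has coefficients that are regular up to and across the horizon. The cleanest route is to use ingoing Kerr coordinates, which are regular in $r>0$, on the region $r\ge r_+(a_0)$ and to compare the two operators there. Their coefficients are polynomial in $a$ divided by $|q|^2$, so the derivative in $a$ is bounded by $\dk^{\le2}\psi$ with bounded coefficients, and $r^{-2}\sim1$ on this compact region.

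One subtlety is that the domain for $a$ lies slightly inside the horizon of $a_0$ when $a>a_0$. I handle this by working in the common region $r\ge r_+(a_0)$, where $r_+(a)<r_+(a_0)$, and noting that the choice of $\tau$ depends on $a$ only through a smooth cutoff profile.

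\textbf{Integrated bound.} From the pointwise bound, $\NN^s_p[N_{a,a_0}]$ is controlled by $|a-a_0|^2\int r^{p+1}r^{-4}|\dk^{\le s+2}\psi|^2\les|a-a_0|^2B^{s+2}_p[\psi]$, using $p-3$ weights. The mixed term $\big|\int_{\DDtrap}(T,Z)\dk^k\psi\c\dk^kN_{a,a_0}\big|$ is supported in a compact region. There Cauchy--Schwarz bounds it by $|a-a_0|B^{s+2}[\psi]$, because the $\dk$ derivatives in the trapped region are controlled by $B^{s+1}$ through the nontrapped and $\Rhat$ components, using the elliptic Lemma \ref{lem:higherordercalculuslemma} as needed. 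Together these give the stated estimate.
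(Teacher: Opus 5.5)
Your proof is correct in substance but takes a different route from the paper's. The paper calls the region $r\le r_0$ ``immediate'' from the Boyer--Lindquist decomposition $|q|^2\square=\RR+\OO$ of Proposition \ref{prop:decompose-square}. For $r\ge r_0$ it uses the null-frame form of $\square_{a,m}$ from Lemma 4.7.5 of \cite{GKS}, together with the frame differences $e_3-\ezero_3,\, e_4-\ezero_4,\, e_a-\ezero_a=(a-a_0)O(r^{-2})\dkzero$. You instead differentiate the coefficients in $a$. At large $r$ your BL computation of $\pr_a \g^{\a\b}$ is equivalent to the paper's frame comparison: both say that $a$ enters the inverse metric at relative order $r^{-2}$.

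The real difference is near the horizon. Read literally, the paper's comparison there is in BL coordinates. In that chart the coefficients contain $\De_a^{-1}$, and $\De_a^{-2}$ after an $a$-derivative. The horizons $r_+(a)\neq r_+(a_0)$ do not match. Moreover, the future horizon of $\KK(a_0,m)$ corresponds, in $\KK(a,m)$, to $t\to+\infty$ along $r=r_+(a_0)$. With that identification the difference is not $O(|a-a_0|)$ up to $\HH$. Your approach avoids this: you work in horizon-regular coordinates, where the coefficients are polynomial in $a$ over $|q|^2$, and restrict to $r\ge r_+(a_0)>r_+(a)$, which is enough because the continuity argument only needs $a>a_0$. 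That is what actually makes the pointwise bound hold uniformly up to $r=r_+(a_0)$.

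To make this airtight, state explicitly that you identify the two spacetimes through a single global chart. The adapted coordinates $(\tau,\rt,\th,\tilde\phi)$ do this. They coincide with BL for $r\ge 2r_0$. For $r\le r_0$ they equal ingoing Kerr coordinates up to $\tau=v+m^2/r+C_a$ and a constant shift in $\phi$. Using BL in one region and ingoing coordinates in another would silently change $N_{a,a_0}$, because the two identifications differ by $a$-dependent shifts of $t$ and $\phi$. Also, $\tau$ does not depend on $a$ ``only through a smooth cutoff profile'': $f'$ contains $(r^2+a^2)/\De_a$.

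One justification in your final step is wrong, though the conclusion survives. Neither the nontrapped components of $B$ nor $\Rhat$ control $T$- or $\nab$-derivatives inside $\DDtrap$, and Lemma \ref{lem:higherordercalculuslemma} does not supply this either. The correct and simpler reason, which the paper uses, is that $B[\dk^k\psi]$ contains $\int_{\DD}r^{-4}|\dk^k\psi|^2$ over the whole domain. Hence $\int_{\DDtrap}|\dk^{\le s+2}\psi|^2\les B^{s+2}[\psi]$, at the cost of one derivative, and Cauchy--Schwarz then handles the mixed term.
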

\begin{proof}
The estimate is immediate in a compact region $r\le r_0$, where we can calculate the difference $ \square_{a,m}-\square_{a_0, m} $ using the expression \eqref{eq:decompose-square}.
For $r\ge r_0$, with $r_0$ sufficiently large we use the expressions given by Lemma 4.7.5 in \cite{GKS},
\[
\square_{a,m} \psi=-\nab_4 \nab_3 \psi  +O(r^{-1}) (\nab_4\psi+\nab_3\psi\big) +\lap  \psi+ O(r^{-1}) \nab  \psi +  O(r^{-2} )\psi.
\]
Denoting the frame of $\KK(a_0, m) $ with $\ezero_3, \ezero_4, \ezero_a$ and that of $\KK(a,m) $ with $(e_3, e_4, e_a)$ we easily check that, for $r\ge r_0$,
\begin{align*}
e_3-\ezero_3&\approx (a-a_0) r^{-2}\dkzero\\
e_4-\ezero_4&\approx ( a-a_0) r^{-2}\dkzero\\
e_a-\ezero_a &\approx ( a-a_0) r^{-2}\dkzero
\end{align*}
For example,
\begin{align*}
e_3-\ezero_3&= \frac{r^2+a^2}{\De} \pr_t -\pr_r +\frac{a}{\De} \pr_\phi- \big(\frac{r^2+a_0^2}{\De_0} \pr_t -\pr_r +\frac{a_0}{\De_0} \pr_\phi\big)\\
&=\frac{ (r^2+a^2) \De_0 - (r^2+a_0^2) \De}{\De\De_0} \pr_t + \frac{a\De_0-a_0\De}{\De\De_0} \pr_\vphi\\
&=\frac{2mr (a_0^2 -a^2) }{\De\De_0} \pr_t + \frac{(a-a_0) r^2 -2mr (a-a_0) + a a_0(a_0-a)}{\De\De_0}\pr_\vphi\\
&=\big(a-a_0\big)\Big(  O(r^{-3}) \T +  O(r^{-2}) \Z\big)
\end{align*}
On the other hand, according to
\eqref{eq:TZ-frame}, for $r \ge r_0$ sufficiently large,
\begin{align*}
\T &= \frac{1}{2}\left(e_4+\frac{\Delta}{|q|^2}e_3 -2a\Re(\Jk)^be_b\right) =  O(1) e_4+ O(1)  e_3 +O(r^{-1} )\nab  \\
\Z &= (r^2+a^2)\Re(\Jk)^be_b-\frac 1 2  a(\sin\th)^2 \big( e_4+\frac{\De}{|q|^2} e_3\big)= O(  r) \nab+  O(1) e_4+ O(1)  e_3
\end{align*}
Thus,
\[
e_3-\ezero_3= O(r^{-2}) \big( \ezero_3+\ezero_4 \big)+ O( r^{-1}) \nabzero= O(r^{-2}) \dkzero.
\]
We deduce
\begin{align*}
\big(\square_{a,m}-\square_{a_0, m_0}\big)\psi&= - \big(\nab_4 \nab_3 -    \nabzero_3 \ezero_4\big)\psi
+(\De-\Dezero )\psi
+ O( r^{-1})\big( \dk -\dkzero) \psi \\
&+(a-a_0) r^{-2} \psi= (a-a_0) r^{-2} \dkzero^{\le 2}\psi.
\end{align*}
Recalling the definition of $\NN_p^s, B_p^s $ quantities we deduce,
\[
\NN_p^{s}[ N_{a,a_0}](\tau_1,\infty)\les |a-a_0|^2  \int_{\DD(\tau_1, \infty)} r^{-4} r^{p+1} |\dk^{s+2} \psi|^2 \les  |a-a_0|^2 B_p^{s+2}[\psi] (\tau_1,\infty).
\]
Also,
\begin{align*}
\Big|\int_{\Dtrap(\tau_1, \infty)}(T,Z)\dk^{\le s}\psi \c\dk^{\le s}N_{a,a_0}\Big|&\les |a-a_0|  \int_{\Dtrap(\tau_1, \infty)}
 \big| (T,Z)\dk^{\le s}\psi \big| \big|  \dk^{\le s+2 }\psi\big|\\
 &\les  |a-a_0| B_p^{s+2}[\psi] (\tau_1,\infty)
\end{align*}
Thus,
\[
\NN_p^{s}[\psi, N_{a,a_0}](\tau_1,\infty) \les |a-a_0|  B_p^{s+2}[\psi] (\tau_1,\infty).
\]
as stated.
\end{proof}

\section{Energy estimates}
\lab{section:EnergyEstim}

The goal of this section is to prove the result stated in Theorem \ref{Thm:Morawetz4} which we restate as follows.
\begin{theorem}[Conditional Energy]
\lab{Thm:Morawetz4'}
Under the assumption $|a|/m <0.9$, $s\ge 0$, $\de<p< 2-\de$, we have
\beq\lab{eq:Thm-Morawetz4'}
E^s_{p}[\psi](\tau_2)\les  E^s_{p}[\psi](\tau_1)+ B_p^s[\psi] (\tau_1, \tau_2) +\NN_p^s[N](\tau_1, \tau_2).
\eeq
\end{theorem}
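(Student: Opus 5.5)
The plan is to apply the energy identity \eqref{eq:EFE2-Tring} with a multiplier $\Tring$ that, unlike the one in Section \ref{section:Proof-Prop.1.6-Intro}, is \emph{causal everywhere} in $\DD$. We also arrange it so that its deformation tensor vanishes on $[\rhat_1-\de_{trap},\rhat_2+\de_{trap}]$. We take $\Tring=\T+\om(r)\Z$ with $\om$ a smooth function of $r$ chosen as follows: $\om(r)=\frac{a}{r^2+a^2}$ (so $\Tring=\That$, timelike in the DOC) for $r\le \rhat_1-2\de_{trap}$; $\om\equiv\om_*$ constant on $[\rhat_1-\de_{trap},\rhat_2+\de_{trap}]$ (so $\Tring$ is Killing there); and $\om\equiv 0$ (so $\Tring=\T$) for $r$ large. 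As computed in Section \ref{section:Proof-Prop.1.6-Intro}, the deformation tensor is $\piring_{\a\b}=\D_\a\om\, \Z_\b+\D_\b\om\,\Z_\a$, giving
\[
\QQ\c\piring=\frac{\De}{|q|^2}\om'(r)\,\pr_r\psi\,\Z\psi .
\]
This term is supported where $\om'\neq0$, i.e. in $\DD_{\ntrap}$. There it is bounded by $\int_{\DD_{\ntrap}} r^{-2}|(e_3,e_4)\psi|^2+r^{-1}|\nab\psi|^2$, hence by $B[\psi](\tau_1,\tau_2)$; no derivative in the trapping region is needed. The decay of $\om'$ at infinity is arranged so that the bulk has the $r^{-2}$-type weights of $B$.

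The key step, and the main obstacle, is to show that such an $\om$ exists with $\Tring$ causal, i.e. $\g(\Tring,\Tring)\le0$, for $|a|/m<0.9$. Using \eqref{eq:coordsBL}, $\g(\T+\om\Z,\T+\om\Z)\le 0$ for all $\th$ is equivalent to $\om\in[\om_-(r,\th),\om_+(r,\th)]$, the roots of the quadratic. Since $\g_{\phi\phi}>0$, this interval is nonempty exactly because $\g_{tt}\g_{\phi\phi}-\g_{t\phi}^2=-\De\sin^2\th<0$.

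We must then find a single constant $\om_*$ lying in $\bigcap_{\th}[\om_-,\om_+]$ for every $r\in[\rhat_1-\de_{trap},\rhat_2+\de_{trap}]$. By monotonicity in $\th$, the worst case should be the equator $\sin\th=1$. There the condition reads
\[
\om_*\ \ge\ \max_{r\in[\rhat_1,\rhat_2]}\frac{2amr-r\sqrt{\De}\,\cdot(\ldots)}{\Si^2}
\]
and an analogous upper bound. Concretely, we need
\[
\max_{r\in[\rhat_1,\rhat_2]}\om_-(r,\tfrac\pi2)\ <\ \min_{r\in[\rhat_1,\rhat_2]}\om_+(r,\tfrac\pi2).
\]
We would reduce this to a one-variable inequality in $\mu=a^2/m^2$ using the explicit endpoints of Lemma \ref{Lemma:rangeofrfortrappednullgeodesics}. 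The derivative formulas of Lemma \ref{Lemma:Deriv.yhat} would then show that the gap $\min\om_+-\max\om_-$ is monotone in $\mu$ and still positive at $\mu=0.81$; numerical verification (interval arithmetic) is the fallback.

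We also need the transition regions to remain causal. On $[\rhat_1-2\de_{trap},\rhat_1-\de_{trap}]$ we interpolate between $\frac{a}{r^2+a^2}$ and $\om_*$. Both values lie in the convex interval $[\om_-,\om_+]$ (note $\frac{a}{r^2+a^2}$ is strictly inside for $r>r_+$), so any convex combination does too. Outside $\rhat_2+\de_{trap}$ we interpolate from $\om_*$ to $0$. This requires $0\in[\om_-,\om_+]$ beyond the ergoregion $r>2m$, and the interpolation must be performed only after $r>\max(2m,\rhat_2+\de_{trap})$; then convexity again gives causality.

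Given causality, $\QQ(\Tring,N_\Si)\gtrsim$ the degenerate energy near the horizon, and it is coercive away from it. The flux $\QQ(\Tring,N_\HH)=\frac{r^2+a^2}{|q|^2}|\T_+\psi|^2\ge0$ can be dropped. The missing $|e_3\psi|^2$ near $\HH$ and the $r^{-2}|\psi|^2$ term are recovered by adding the partial red shift estimate (Proposition \ref{prop:partialredshiftestimate}), whose bulk is controlled by $B$, and a Hardy inequality on $\Si(\tau_2)$. Higher $s$ follows by commuting with $\T,\Z$ and applying Lemma \ref{lem:higherordercalculuslemma}. The $r^p$ part follows from the standard $r^p$ estimates (section 10.3 of \cite{GKS}), with errors absorbed in $B_p$ and $\NN_p$; the inhomogeneous term contributes $\int \Tring\psi\, N$, bounded by $\NN_p^s$.
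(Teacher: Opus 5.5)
Your argument has the same architecture as the paper's proof of Proposition \ref{Prop:Condit-EnergyFlux}. You use a multiplier $\Tring=\T+\om(r)\Z$ that equals $\That$ near $\HH$, equals a Killing field $\T+\om_*\Z$ that is timelike on the trapping set, and equals $\T$ far out. The deformation error $\frac{\De}{|q|^2}\om'\pr_r\psi\,\Z\psi$ is absorbed by $B$, and this is followed by the partial red shift, commutation and $r^p$ steps.

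Your bookkeeping is actually slightly cleaner than the paper's. The paper interpolates on $[\rhat_1-\de,\rhat_1]$ and $[\rhat_2,\rhat_2+\de]$, which lie partly inside the $\de_{trap}$-neighbourhood where $B$ does not control $\nab\psi$. You keep $\om'$ supported in $\DD_{\ntrap}$. This only needs $\T+\om_*\Z$ to stay timelike on a slightly enlarged interval, which follows from strict timelikeness and continuity.

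The real divergence is in the key step, and there you give only a plan.

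\emph{Your endpoint reduction is unjustified as written.} Your criterion $\max_{[\rhat_1,\rhat_2]}\om_-(\cdot,\tfrac\pi2)<\min_{[\rhat_1,\rhat_2]}\om_+(\cdot,\tfrac\pi2)$ is the sharp one. But reducing it to ``a one-variable inequality in $\mu$ using the explicit endpoints'' silently assumes that these extrema occur at $\rhat_1$ and $\rhat_2$. Since $\om_\pm(r,\tfrac\pi2)=\frac{2am\pm r\sqrt{\De}}{r^3+a^2r+2ma^2}$, this is not automatic; $\om_-$, for instance, is not monotone on $[r_+,\infty)$.

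\emph{How the paper avoids this.} It fixes $\om_*=\frac{a}{a^2+r_0^2}$ with $r_0=1.92m$. It then uses the identity $H_{r_0}=a^2\Tht+rP_{r_0}$, where $\Tht\le 0$ on the trapping set and $P_{r_0}$ is a convex quadratic. Negativity of $P_{r_0}$ at $\rhat_1$ and $\rhat_2$ therefore suffices. Only after this reduction is Lemma \ref{Lemma:Deriv.yhat} used to propagate the inequality in $\mu$. Your interval-intersection formulation is optimal, since it characterizes exactly when such an $\om_*$ exists. The price is this extra localization of extrema, or else a genuine two-parameter numerical check in $(r,\mu)$.

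\emph{The margins are thin.} At $|a|/m=0.9$ the admissible interval is roughly $[0.189,\,0.202]\,m^{-1}$. Since $\om_+(r,\tfrac\pi2)\sim 1/r$, any admissible $\om_*$ makes $\T+\om_*\Z$ spacelike on the equator shortly beyond $\rhat_2$; for the paper's choice this happens already near $r\approx 3.96m$. So the cut-off from $\om_*$ to $0$ must be done in a short window right after $\rhat_2+\de_{trap}$. Requiring only $r>\max(2m,\rhat_2+\de_{trap})$ is not enough.
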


\subsection{Causal Properties  of the  Hawking vectorfield $T_+$}

We start with the following version of Lemma 3.2.3 in \cite{GKS}.
\begin{lemma}\label{lemma:inside-Hawking-vf}
    The Killing vectorfield $T_\la=\T+\la \Z$, for a scalar function $\la $, verifies
    \[
    |q|^2 \g(T_\la, T_\la) =-\De\big(1-\la  a \sin^2\th \big)^2 +\sin^2 \th   \big(a-\la (a^2+r^2)\big)^2.
    \]
\end{lemma}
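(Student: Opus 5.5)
The plan is a direct computation in Boyer--Lindquist coordinates. Here $\la$ is treated as a scalar evaluated pointwise: the Killing property is not needed for the norm identity. At each point we use bilinearity:
\[
\g(T_\la,T_\la)=\g_{tt}+2\la\,\g_{t\phi}+\la^2\g_{\phi\phi}.
\]
From \eqref{eq:coordsBL} we read off
\[
|q|^2\g_{tt}=-(\De-a^2\sin^2\th),\qquad |q|^2\g_{t\phi}=-2amr\sin^2\th,\qquad |q|^2\g_{\phi\phi}=\Si^2\sin^2\th .
\]
This gives
\[
|q|^2\g(T_\la,T_\la)=-\De+a^2\sin^2\th-4amr\la\sin^2\th+\la^2\Si^2\sin^2\th .
\]

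Next we expand the claimed right-hand side and match coefficients of $1,\la,\la^2$.

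\begin{itemize}
\item The constant term is $-\De+a^2\sin^2\th$, which matches immediately.
\item The coefficient of $\la$ is $2a\De\sin^2\th-2a(a^2+r^2)\sin^2\th$. Since $\De-(r^2+a^2)=-2mr$, this equals $-4amr\sin^2\th$, as required.
\item The coefficient of $\la^2$ is $-\De a^2\sin^4\th+(r^2+a^2)^2\sin^2\th=\sin^2\th\big((r^2+a^2)^2-a^2\sin^2\th\,\De\big)$. By the second expression for $\Si^2$ in the preliminaries, this is $\Si^2\sin^2\th$.
\end{itemize}

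This concludes the identity. There is no real obstacle beyond bookkeeping. The one point needing care is to use the identity $\Si^2=(r^2+a^2)^2-a^2\sin^2\th\,\De$ for the $\la^2$ coefficient, rather than expanding $\Si^2$ directly.
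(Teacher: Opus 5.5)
Your proof is correct and is essentially the paper's argument: a direct algebraic verification of the norm of $T+\la Z$, whose key input is $\De-(r^2+a^2)=-2mr$. The paper instead starts from the expression for $\g(T_\la,T_\la)$ quoted from Lemma 3.2.3 of \cite{GKS}, in which the $\Si^2$ identity you invoke is already built in, and regroups $a^2-4amr\la+\la^2(r^2+a^2)^2=\big(a-\la(a^2+r^2)\big)^2+2\la a\De$; you read the components off \eqref{eq:coordsBL} and match the coefficients of $1,\la,\la^2$, which makes your version self-contained.
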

\begin{proof}
    According to Lemma 3.2.3 in \cite{GKS} we have
    \begin{align*}
        \g(T_\la, T_\la) &=-\frac{\De}{|q|^2 } \Big( 1+ a^2\la^2(\sin\theta)^4 \Big) + \frac{\sin^2 \th}{|q|^2} E_\la(r),
        \\
        E_\la(r)&= a^2   - 4 am r\la         +\la ^2(r^2+a^2)^2.
    \end{align*}
    Note that
    \[
    E_\la(r)= \big(a-\la (a^2+r^2)\big)^2+ 2 \la  a (a^2+r^2)   - 4 am r\la =  \big(a-\la (a^2+r^2)\big)^2+ 2 \la a \De.
    \]
    Hence
    \begin{align*}
        |q|^2  \g(T_\la, T_\la) &=-\De\Big(  1+ a^2\la^2(\sin\theta)^4- 2 \la a \sin^2 \th \Big)+\sin^2 \th  \big(a-\la (a^2+r^2)\big)^2\\
        &= -\De(1-\la  a \sin^2\th )^2 +\sin^2 \th   \big(a-\la (a^2+r^2)\big)^2.
    \end{align*}
\end{proof}

\begin{lemma}
    \lab{Le:uniqunessofT_+}
    The function $M_+(\la):= \g(\T_\la, \T_\la)\big|_{r=r_+}$ verifies the following properties
    \[
    M_+(\om_+)=0, \quad M_+'(\om_+)=0,\quad  M_+''(\om)\geq0.
    \]
    Thus, among the family of vectorfields $T_\la$, the Hawking vectorfield $\Tplus=T_{\om_+}$ is the only one which is
    causal along the horizon.
\end{lemma}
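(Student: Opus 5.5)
The plan is to use the formula of Lemma \ref{lemma:inside-Hawking-vf} evaluated on the horizon. At $r=r_+$ we have $\De=0$, so for every $\th$
\[
|q|^2 M_+(\la)=\sin^2\th\,\big(a-\la(r_+^2+a^2)\big)^2 .
\]
Here $|q|^2=r_+^2+a^2\cos^2\th>0$. Thus $M_+(\la)=\frac{\sin^2\th}{|q|^2}\,(r_+^2+a^2)^2(\la-\om_+)^2$, with $\om_+=\frac{a}{r_+^2+a^2}$. Everything then follows by differentiating this explicit quadratic in $\la$, for fixed $\th$.

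First, $M_+(\om_+)=0$ is immediate. Second, $M_+'(\la)=\frac{2\sin^2\th}{|q|^2}(r_+^2+a^2)^2(\la-\om_+)$, which vanishes at $\la=\om_+$. Third, $M_+''(\la)=\frac{2\sin^2\th}{|q|^2}(r_+^2+a^2)^2\ge 0$ for every $\la$, and it is strictly positive away from the axis $\sin\th=0$.

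For the uniqueness statement, I will treat $\la$ as a constant, since $T_\la$ must be Killing. If $\la\neq\om_+$, then $M_+(\la)>0$ at every point of $\HH$ with $\sin\th\neq0$, so $T_\la$ is spacelike there and not causal. Conversely, $T_+=T_{\om_+}$ has $\g(T_+,T_+)=0$ on $\HH$, so it is null there, consistent with \eqref{def:T_{om_+}}. Hence $T_+$ is the only causal member of the family along the horizon.

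The only point needing care is the axis $\sin\th=0$, where $M_+\equiv 0$ for all $\la$. There uniqueness is not pointwise, but it suffices that causality must hold along the whole horizon, which contains points with $\sin\th\neq 0$. Beyond this, there is no real obstacle: the whole argument reduces to the vanishing of $\De$ at $r_+$ in the formula of Lemma \ref{lemma:inside-Hawking-vf}.
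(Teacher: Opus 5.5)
Your proof is correct, but it takes a different route from the paper's. You evaluate the closed formula of Lemma \ref{lemma:inside-Hawking-vf} at $\De=0$, which turns $M_+$ into the explicit square $\frac{\sin^2\th}{|q|^2}(r_+^2+a^2)^2(\la-\om_+)^2$. All three properties and the uniqueness can then be read off directly.

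The paper never sets $\De=0$ in that formula. Instead it differentiates $M(\la)=\g(T_\la,T_\la)$ at general $r$, which gives
\[
\frac12 M'(\la)=\g(\Z,T_\la)=\frac{\sin^2\th}{|q|^2}\big(\la\Si^2-2amr\big),\qquad M''(\la)=\frac{2\Si^2\sin^2\th}{|q|^2}.
\]
It then uses the identity $\Si^2-4m^2r^2=\De(|q|^2+2mr)$ to write $\om_+\Si^2-2amr=\frac{a}{2mr_+}\big(4m^2r(r-r_+)+\De(|q|^2+2mr)\big)$, which vanishes at $r=r_+$. The value $M_+(\om_+)=0$ is taken from the nullity of $T_+$ on $\HH$, and uniqueness is deduced from convexity in $\la$.

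The paper's computation also gives information off the horizon: convexity in $\la$ at every $r$, and, for $a\neq0$, the fact that $\om_+$ is no longer a critical point when $r>r_+$. Your route is shorter and more transparent on $\HH$ itself. It yields the strict bound $M_+(\la)>0$ for $\la\neq\om_+$ off the axis directly; the stated $M''\geq0$ alone would only give $M_+\geq0$, and the paper implicitly needs $\Si^2\sin^2\th>0$. You also correctly make explicit the degenerate axis $\sin\th=0$, where $M_+\equiv0$ for every $\la$, a point the paper passes over.
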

\begin{proof}
    Denote $M(\la)=\g(\T_\la, \T_\la)$ and observe that
    \begin{align*}
        \frac 1 2 M'(\la)&=\g( Z, T_\la)=\g(Z, T)+\la \g(Z, Z)=  -    \frac{ 2 amr }{|q|^2}   \sin^2\theta+\la  \frac{ \Si^2}{|q|^2}\sin^2\theta\\
        &=\frac{\sin^2 \th}{|q|^2}\big(- 2 amr +\la \Si^2)= \frac{\sin^2 \th}{|q|^2} N(\la)
        \\
        M'' (\la)&= 2\frac{ \Si^2}{|q|^2}\sin^2\theta\ge 0.
    \end{align*}
    Along the horizon $r=r_+$ we have $M(\om_+)=0$ and $M'(\om_+)=0$.
    Indeed\footnote{Note that, $ \Si^2 - 4m^2 r^2=( r^2+a^2)^2 - 4m^2 r^2 -a ^2 sin^2\th\De =\De\big(|q|^2 + 2mr\big)$.}
    \begin{align*}
        N(\om_+) &=- 2 amr +\om_+  \Si^2= - 2 amr +   \frac{a}{2m r_+}\Si^2=  \frac{a}{ 2m r_+} \big(-4 m^2 r r_++ \Si^2\big)\\
        &=   \frac{a}{ 2m r_+}\Big(  4m^2 r(r-r_+)- 4m^2 r^2   +\Si^2\Big) =  \frac{a}{ 2m r_+}\Big(  4m^2 r(r-r_+)+ \De \big(|q|^2 + 2m r\big)\Big)
    \end{align*}
    We deduce that, along the horizon, $M(\la)$ reaches the minimum value $0$ at $\la=\om_+$.
    Thus any change in $\la$ leads to a vectorfield $\T_\la$ which is spacelike along the horizon.
\end{proof}


\subsection{Timelike property of   $T_\la$}


In this section we look for values of $\la$, different\footnote{In view of Lemma \ref{Le:uniqunessofT_+} we do not ask that $T_\la$ be causal on the horizon.} from $\om_+$, for which the vectorfield $T_\la =T+\la Z$
is timelike on the trapping set.
We start by proving the following.
\begin{lemma}
    \lab{Le:sufficientcond}
    A sufficient condition for the vectorfield $T_{\la=\frac{a}{a^2+ r_0^2}}$ to be timelike on the trapping set $[\rhat_1, \rhat_2]$ is that the quadratic polynomial
    \beq
    \lab{eq:P(r_0,r)}
    P_{r_0}(r):= 6 a^2 m r^2  -  \big( r_0^4+2 r_0^2 a^2+ 9 a^2 m^2 \big) r + 2mr_0^4+ 4 ma^4
    \eeq
    is negative on the trapping set.
\end{lemma}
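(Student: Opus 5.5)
We start from Lemma \ref{lemma:inside-Hawking-vf}, which gives, with $\la=\frac{a}{a^2+r_0^2}$,
\[
|q|^2\g(T_\la,T_\la)=-\De\big(1-\la a\sin^2\th\big)^2+\sin^2\th\big(a-\la(a^2+r^2)\big)^2 .
\]
Our aim is to show that this expression is negative for every $\th$ and every $r\in[\rhat_1,\rhat_2]$ whenever $P_{r_0}<0$ there.

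\textbf{Step 1: reduce to $\sin^2\th=1$.} Set $x=\sin^2\th\in[0,1]$ and regard the right-hand side as a function $G(x)$.
\begin{itemize}
\item At $x=0$ we get $G(0)=-\De<0$ on the trapping set.
\item $G$ is a quadratic in $x$ with leading coefficient $-\De\la^2a^2$, so it is concave on $[0,1]$.
\end{itemize}
Concavity alone does not put the maximum at an endpoint. To avoid analysing where the vertex sits, we bound $\De(1-\la a x)^2\ge \De(1-\la a)^2$. This holds because $0\le \la a x\le \la a<1$, and $\la a=\frac{a^2}{a^2+r_0^2}<1$. It follows that
\[
G(x)\le -\De(1-\la a)^2+x\big(a-\la(a^2+r^2)\big)^2\le -\De(1-\la a)^2+\big(a-\la(a^2+r^2)\big)^2 ,
\]
which is exactly the value at $x=1$ with the first factor frozen. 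So it suffices to show
\[
\big(a-\la(a^2+r^2)\big)^2<\De(1-\la a)^2 \qquad\text{on } [\rhat_1,\rhat_2].
\]

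\textbf{Step 2: compute.} Substitute $\la=\frac{a}{a^2+r_0^2}$:
\[
a-\la(a^2+r^2)=\frac{a(r_0^2-r^2)}{a^2+r_0^2},\qquad 1-\la a=\frac{r_0^2}{a^2+r_0^2}.
\]
Clearing the common denominator, the condition in Step 1 becomes
\[
a^2(r_0^2-r^2)^2<\De\, r_0^4 .
\]
This is a quartic in $r$. The stated $P_{r_0}(r)$ is only quadratic, so it must come from a further step that uses the defining relation of the trapping set.

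\textbf{Step 3: the main obstacle.} We must connect this quartic inequality to the quadratic $P_{r_0}$. Our plan is to use that on $[\rhat_1,\rhat_2]$ we have $\Tht(r)\le0$, that is
\[
r^3-6mr^2+9m^2r-4ma^2\le 0
\]
(Proposition \ref{Prop:r-trapped.region} and Remark \ref{remark-Theta}). We would write
\[
a^2(r_0^2-r^2)^2-\De r_0^4=\big(\text{multiple of } (r^3-6mr^2+9m^2r-4ma^2)\big)+(\text{remainder}),
\]
by performing division in the variable $r$. We then check that the remainder, up to a positive factor, equals $P_{r_0}(r)$, for instance $\frac{1}{m}P_{r_0}$ or $\frac{r}{m}P_{r_0}$.

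The sign of the multiplier is what matters. We need it to be nonnegative on $[\rhat_1,\rhat_2]$, so that its product with $\Tht\le0$ is $\le0$. The multiplier will be something like $a^2 r/m$ or a linear expression in $r$; we will check its sign directly.

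If exact division does not produce $P_{r_0}$ cleanly, the fallback is different. We would use the trapping relation $4ma^2 = r^3-6mr^2+9m^2r+(\text{nonpositive})$ to eliminate the $a^2r^4$-type terms in favour of lower-degree terms, tracking inequality directions carefully.

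With the identity in hand, the conclusion is immediate: $P_{r_0}<0$ together with $\Tht\le 0$ gives $a^2(r_0^2-r^2)^2-\De r_0^4<0$. By Steps 1 and 2, $T_\la$ is then timelike on the trapping set.
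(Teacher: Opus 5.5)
Your Steps 1 and 2 are correct and match the paper's reduction. You bound $-\De(1-\la a x)^2\le-\De(1-\la a)^2$ directly, which is valid since $\De>0$ on the trapping set and $0\le\la a<1$. The paper instead checks that $\frac{d}{dx}M_\la>0$. Both arguments reduce the lemma to the negativity on $[\rhat_1,\rhat_2]$ of
\[
H_{r_0}(r):=a^2(r^2-r_0^2)^2-\De\, r_0^4=a^2r^4-(r_0^4+2a^2r_0^2)\,r^2+2mr_0^4\,r .
\]

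The gap is Step 3, which is the only place where $P_{r_0}$ enters. You leave the key identity unverified, and the mechanism you describe would not produce it. Ordinary division of the quartic $H_{r_0}$ by the cubic $r^3-6mr^2+9m^2r-4ma^2$ gives the quotient $a^2r+6ma^2$, which is harmless. But the remainder is the quadratic
\[
-(r_0^4+2a^2r_0^2-27a^2m^2)\,r^2+(2mr_0^4+4ma^4-54m^3a^2)\,r+24m^2a^4,
\]
which is not proportional to $P_{r_0}$. You would therefore arrive at a different sufficient condition from the one stated. Your fallback also has the sign of the trapping relation reversed: on $[\rhat_1,\rhat_2]$ one has $4ma^2=r^3-6mr^2+9m^2r+(\text{nonnegative})$, not $+(\text{nonpositive})$.

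What is needed is to remove only the top-order term, that is, to divide by the quartic $\Tht(r)=r^4-6mr^3+9m^2r^2-4ma^2r$, whose quotient is the constant $a^2$. A direct expansion gives the exact identity
\[
H_{r_0}(r)=a^2\,\Tht(r)+r\,P_{r_0}(r).
\]
We have $a^2\ge 0$, $\Tht\le 0$ on $[\rhat_1,\rhat_2]$ by Proposition \ref{Prop:r-trapped.region}, and $r>0$. Hence the hypothesis $P_{r_0}<0$ gives $H_{r_0}<0$ there, which is exactly the paper's argument. Your guesses of a remainder $\frac{r}{m}P_{r_0}$ and a multiplier $a^2r/m$ have the right shape. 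The normalization is wrong, though: relative to the cubic the multiplier is $a^2r$ and the leftover is $rP_{r_0}$, which is not a true remainder. Without this identity the proposal does not yet prove the lemma.
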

\begin{proof}
    Recall that
    \[
    M_\la(r):=|q|^2 \g(T_\la, T_\la)=-\De\big(1-\la  a \sin^2\th \big)^2 +\sin^2 \th   \big(a-\la (a^2+r^2)\big)^2.
    \]
    Setting $x=\sin^2 \th$ we deduce
    \begin{align*}
        \frac{d}{dx}M_\la&=\frac{d}{dx}\Big(-\De\big(1-\la  a x  \big)^2 +x   \big(a-\la (a^2+r^2)\big)^2\Big)\\
        &=  2 \la a \De\big(1-\la  a x \big) +\big(a-\la (a^2+r^2)\big)^2.
    \end{align*}
    Note that for $\la=\la_0=\frac{a}{a^2+r_0^2}$, since $0\le x\le 1$,
    \begin{align*}
        \frac{d}{dx}M_{\la_0}&=  2 \frac{a^2}{a^2+r_0^2}  \De\big(1- \frac{a^2}{a^2+r_0^2}  x \big)  +\big(a-\frac{a}{a^2+r_0^2} (a^2+r^2)\big)^2\\
        &= 2 \frac{a^2}{a^2+r_0^2}  \De\frac{a^2+r_0^2 - a^2 x}{a^2+r_0^2}    +\big(a-\frac{a}{a^2+r_0^2} (a^2+r^2)\big)^2>0.
    \end{align*}
    We deduce that $M_{\la_0}$ is increasing in $x=\sin^2\th$ and therefore it is maximal for $\sin^2\th=1$.
    We infer that to have $M_{\la_0}<0 $ on the trapping set it suffices to have $M_{\la_0, x=1}(r)$ verify that property.
    Note that
    \begin{align*}
        M_{\la_0, x=1}(r)&= -\De\big(1-\frac{a^2}{r_0^2 +a^2}    \big)^2 +   \big(a-\frac{a}{r_0^2 +a^2} (a^2+r^2)\big)^2
        \\
        &=\frac{1}{( r_0^2 +a^2)^2}\Big(-\De r_0^4 +a^2 (r_0^2-r^2)^2\Big)= \frac{1}{( r_0^2 +a^2)^2} H_{r_0}(r)
    \end{align*}
    where
    \beq
    H_{r_0}(r) :=-\De r_0^4 +a^2 (r^2-r_0^2)^2.
    \eeq
    Thus $ M_{\la_0, x=1}<0$ on the trapping set if and only if $H_{r_0}$ verifies the same property.
    Note that
    \begin{align*}
        H_{r_0}(r)&=-(r^2+a^2- 2mr)  r_0^4+a^2 r^4 - 2 a^2 r^2 r_0^2+ a^2r_0^4\\
        &=-(r^2-2mr) r_0^4  +a^2 r^2(r^2  - 2  r_0^2)\\
        &=a^2 r^4- r^2( r_0^4+2 r_0^2 a^2)+ 2mr r_0^4
    \end{align*}
    On the other hand, recalling the definition of the polynomial $ \Tht(r)$ introduced in proposition \ref{Prop:r-trapped.region},
    \[
    \Tht(r)=r^{-2}\Th(r)= r^4 -6 m r^3 +9m^2 r^2 - 4ma^2 r
    \]
    we deduce
    \begin{align*}
        H_{r_0}(r)- a^2 \Tht(r)&= - r^2( r_0^4+2 r_0^2 a^2)+ 2mr r_0^4 -a^2\big( -6 m r^3 +9m^2 r^2 - 4ma^2 r\big)\\
        &= 6 a^2 m r^3 - \big( r_0^4+2 r_0^2 a^2+ 9 a^2 m^2 \big) r^2+ \big( 2mr_0^4+ 4 ma^4 \big) r\\
        &= r\Big( 6 a^2 m r^2  -  \big( r_0^4+2 r_0^2 a^2+ 9 a^2 m^2 \big) r + 2mr_0^4+ 4 ma^4\Big)\\
        &=r P_{r_0}(r)
    \end{align*}
    as defined in \eqref{eq:P(r_0,r)}.
    Therefore
    \bea
    H_{r_0}(r)= a^2 \Tht(r)+ r P_{r_0}(r).
    \eea
    Since $\Tht(r)$ vanishes at $\rhat_1, \rhat_2$ and is strictly negative in between we deduce that
    to have $H_{r_0}$ negative on the trapping set $[\rhat_1, \rhat_2]$, the roots $\mathring{r}_1<\mathring{r}_2$ of the quadratic polynomial
    $P_{r_0}(r)$ should satisfy $\mathring{r}_1<\rhat_1<\rhat_2<\mathring{r}_2$.
\end{proof}
Introducing
\[
\rdot= r/ m, \quad \mu= a^2/m^2, \quad  \rdot_0=r_0/m
\]
we rewrite
\[
m^{-5} P_{r_0}= 6 \mu \rdot^2   -\big( \rdot^4_0+2\mu \rdot_0^2+9 \mu\big)\rdot+2 \rdot_0^4+ 4 \mu^2
\]
Replacing, for convenience, $\rdot$ with $y$, we need to pick $\rdot_0$ so that the roots $y_1<y_2$ of the reduced polynomial
\beq
\lab{eq:reducedP}
P_{(\mu, \rdot_0)}(y):=  6 \mu y^2-   \big( \rdot^4_0+2\mu \rdot_0^2+ 9 \mu\big) y+\big( 2 \rdot_0^4+ 4\mu^2\big)
\eeq
satisfy
\[
y_1<\yhat_1< \yhat_2<y_2.
\]
where, see Lemma \ref{Lemma:rangeofrfortrappednullgeodesics},
\beq
\lab{eq:yhat1yhat2}
\yhat_1=2\left(1+\cos\left(\frac{2}{3}\arccos\left(-\mu^{1/2}\right)\right)\right),\quad
\yhat_2=2\left(1+\cos\left(\frac{2}{3}\arccos\left(\mu^{1/2}\right)\right)\right).
\eeq

\subsubsection{Relations between $\mu=\frac{a^2}{m^2} $ and the trapping set}
\lab{section:mu-trapping}


\begin{remark}
\lab{rem:monotonicity-rhat}
    In view of Lemma \ref{Lemma:Deriv.yhat} we have
    \begin{itemize}
        \item $\yhat_1$ is decreasing as a function of $\mu$ from the maximum value $3$ when $\mu=0$ to
        the minimum value $1$ when $\mu=1$.
        \item $\yhat_2$ is increasing as a function of $\mu$ from the minimum value $3$ when $\mu=0$ to
        the maximum value $4$ when $\mu=1$.
    \end{itemize}
\end{remark}
Take $\frac{|a|}{m}=0.9$, $\mu=0.81, r_0=1.92m, \rdot_0=1.92$. Then
\[
P_{\mu, \rdot_0}(y)=P_{0.81, 1.92}(y)=4.86y^2-26.85151296y+29.80348992.
\]
So $P_{0.81,1.9}(y)$ is decreasing on $[-\infty, \frac{517969}{187500}]$ and increasing on $[\frac{517969}{187500},\infty)$.
For $\mu=0.81$, we have
\[
\yhat_1\in[1.55,1.56],\quad \yhat_2\in[3.91. 3.92]
\]
Then
\begin{align*}
P_{0.81,1.92}(\yhat_1)&<P_{0.81,1.92}(1.55)=-0.140205168<0,\\
P_{0.81,1.92}(\yhat_2)&<P_{0.81, 1.92}(3.92)=-0.7737368832<0.
\end{align*}

We summarize the results in the following statement.
\begin{lemma}
    \lab{lemma:0.81}
    For $\mu =0.81$ the vectorfield $T_\la$ with $\la =\frac{a}{ a^2+ r_0^2}$ and $r_0=1.92m$ is timelike on the corresponding trapping set.
\end{lemma}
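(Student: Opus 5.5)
By Lemma \ref{Le:sufficientcond}, it suffices to show that the quadratic polynomial $P_{r_0}$ of \eqref{eq:P(r_0,r)} is strictly negative on $[\rhat_1,\rhat_2]$ for $r_0=1.92m$ and $a^2=0.81m^2$. After rescaling by $m$, this amounts to showing that the reduced polynomial \eqref{eq:reducedP}, with $\mu=0.81$ and $\rdot_0=1.92$, is negative on $[\yhat_1,\yhat_2]$. Substituting these values gives
\[
P_{0.81,1.92}(y)=4.86y^2-26.85151296y+29.80348992 ,
\]
which I would compute exactly in rational arithmetic. Its leading coefficient $6\mu$ is positive, so the polynomial is convex in $y$. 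A convex function that is negative at both endpoints of an interval is negative on the whole interval. Hence it is enough to prove $P_{0.81,1.92}(\yhat_1)<0$ and $P_{0.81,1.92}(\yhat_2)<0$.

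The second step is to enclose $\yhat_1,\yhat_2$ rigorously, using only rational computations rather than the trigonometric formulas \eqref{eq:yhat1yhat2}. By Proposition \ref{Prop:r-trapped.region} and Remark \ref{remark-Theta}, $\yhat_1<\yhat_2$ are the two roots exceeding $r_+/m$ of
\[
g(y):=y^3-6y^2+9y-4\mu .
\]
Moreover, $g>0$ on $(r_+/m,\yhat_1)\cup(\yhat_2,\infty)$ and $g<0$ on $(\yhat_1,\yhat_2)$. For $\mu=0.81$ we have $r_+/m=1+\sqrt{0.19}<1.44$. I would therefore evaluate $g$ exactly at $y=1.55,\,1.56,\,3.91,\,3.92$ and check the sign pattern $g(1.55)>0$, $g(1.56)<0$, $g(3.91)<0$, $g(3.92)>0$. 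By the intermediate value theorem this gives $\yhat_1\in[1.55,1.56]$ and $\yhat_2\in[3.91,3.92]$, and uniqueness of the roots in these ranges follows from the sign structure just recalled.

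The third step uses monotonicity of the quadratic. Its vertex is at $y_*=\frac{26.85151296}{9.72}\approx 2.76$. Since $\yhat_1\ge 1.55$ and $1.55<y_*$, the polynomial is decreasing on $[1.55,y_*]$, so $P(\yhat_1)\le P(1.55)$. Since $\yhat_2\le 3.92$ and $3.92>y_*$, the polynomial is increasing on $[y_*,3.92]$, so $P(\yhat_2)\le P(3.92)$. It then remains to evaluate these two values exactly, $P(1.55)\approx-0.1402$ and $P(3.92)\approx-0.7737$, both of which are negative.

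The main obstacle is that the margin at the left endpoint is small: $P(1.55)$ is only about $-0.14$. This is why the enclosure of $\yhat_1$ must be one-sided from below, with $\yhat_1\ge 1.55$ guaranteed. The cubic sign check at $y=1.55$ must accordingly be done in exact arithmetic, not floating point. I would first carry out this check; if $g(1.55)$ turned out to be too close to $0$, I would refine the bracket (for instance to $[1.555,1.56]$) and recheck the sign of $P$ at the new lower endpoint.
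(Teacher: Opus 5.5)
Your proposal is correct and follows essentially the paper's argument: reduce via Lemma~\ref{Le:sufficientcond} to negativity of $P_{0.81,1.92}(y)=4.86y^2-26.85151296y+29.80348992$ on $[\yhat_1,\yhat_2]$, then use the vertex at $y_*\approx 2.7625$ and the enclosures $\yhat_1\in[1.55,1.56]$, $\yhat_2\in[3.91,3.92]$ to get $P(\yhat_1)<P(1.55)=-0.140205168$ and $P(\yhat_2)<P(3.92)=-0.7737368832$. Your one addition is to certify the enclosures from the sign changes of $y^3-6y^2+9y-3.24$ (exactly $g(1.55)=0.018875$, $g(1.56)=-0.005184$, $g(3.91)=-0.002129$, $g(3.92)=0.077888$) rather than from the trigonometric formulas \eqref{eq:yhat1yhat2}; this is a harmless refinement that makes rigorous what the paper simply asserts.
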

We will show next that the statement remains valid, with the same $r_0=1.92m$, for the full range $0 \le \mu \le 0.81$.
This will follow from the following Lemma.
\begin{lemma}
    \lab{Lemma:ontrappingset}
    Let $P_\mu(y)= 4\mu^2+\big(6y^2-(2\rdot_0^2+9)y\big)\mu+(2-y)\rdot_0^4$ be the polynomial
    \eqref{eq:reducedP} at given $\mu$ and $\rdot_0=1.92$. We claim that $P_\mu(\yhat_1), P_\mu(\yhat_2)<0$
    for all $\mu\in[0, 0.81]$.
\end{lemma}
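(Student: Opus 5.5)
We need to show $P_\mu(\yhat_i(\mu))<0$ for $\mu\in[0,0.81]$, $i=1,2$, with $\rdot_0=1.92$. The plan is to treat $g_i(\mu):=P_\mu(\yhat_i(\mu))$ as a function of $\mu$ alone and control it by its value at one endpoint plus a sign for its derivative or its convexity. The explicit formulas \eqref{eq:yhat1yhat2} and the derivative identities of Lemma \ref{Lemma:Deriv.yhat} are the inputs.

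\textbf{Endpoints.} At $\mu=0$ we have $\yhat_1=\yhat_2=3$, so
\[
P_0(3)=(2-3)\rdot_0^4=-\rdot_0^4<0 .
\]
At $\mu=0.81$ the computation preceding Lemma \ref{lemma:0.81} gives both values negative. We make this rigorous by evaluating $P_{0.81}$ on the certified enclosures $\yhat_1\in[1.55,1.56]$ and $\yhat_2\in[3.91,3.92]$, using that $P_{0.81}$ is decreasing on $y\le 2.76$ and increasing beyond.

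\textbf{The interior, $\yhat_2$ branch.} Differentiate using
\[
\frac{d}{d\mu}g_i=\pr_\mu P_\mu(y)\big|_{y=\yhat_i}+\pr_yP_\mu(\yhat_i)\,\yhat_i',\qquad
\pr_\mu P=8\mu+6y^2-(2\rdot_0^2+9)y,\qquad
\pr_yP=(12y-2\rdot_0^2-9)\mu-\rdot_0^4 .
\]
For $\yhat_2\in[3,4]$:
\begin{itemize}
\item $\pr_yP<0$, since $12y-16.37\le 31.6$ while $\mu\le0.81$ and $\rdot_0^4\approx 13.6$; $\yhat_2'>0$ by Lemma \ref{Lemma:Deriv.yhat}.
\item $\pr_\mu P=8\mu+y(6y-16.37)$ is positive, of size roughly $5$ to $30$.
\end{itemize}
So the two contributions compete. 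The key comparison is near $\mu=0$, where $\yhat_2'\sim \tfrac13\sqrt3/\sqrt\mu$ blows up. There $g_2'\to-\infty$, so $g_2$ initially decreases from $-\rdot_0^4$. Instead of a global monotonicity claim I would then argue by subintervals. On $[\mu_k,\mu_{k+1}]$, bound $g_2$ from above by combining the monotonicity of $\yhat_2$ with crude bounds on $P_\mu(y)$ over the rectangle
\[
[\mu_k,\mu_{k+1}]\times[\yhat_2(\mu_k),\yhat_2(\mu_{k+1})],
\]
i.e. interval arithmetic, as the paper already does elsewhere. Since $P_\mu$ is a polynomial of degree $\le 2$ in each variable, its maximum on a rectangle is explicit: at vertices or at the vertex of the parabola in $y$. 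A handful of subintervals should suffice, since the margins are of order $0.1$ to $13$.

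\textbf{The interior, $\yhat_1$ branch.} For $\yhat_1\in[1.55,3]$ the same procedure applies. Here $\yhat_1'<0$, and $\pr_yP<0$ for $y<(2\rdot_0^2+9+\rdot_0^4/\mu)/12$, which holds throughout, so $\pr_yP\,\yhat_1'>0$. Hence $g_1$ tends to increase, and the worst case is the right endpoint $\mu=0.81$. The margin there, $-0.14$, is the smallest, so the subinterval mesh must be finest near $\mu=0.81$.

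\textbf{Main obstacle.} The obstacle is exactly this near-endpoint tightness for $\yhat_1$. First I would try proving $g_1'\ge0$ on $[0.5,0.81]$ directly, by rewriting $\pr_\mu P(\yhat_1)$ via $\Th(\yhat_1)=0$ (i.e. $4\mu=\yhat_1(\yhat_1-3)^2$) to eliminate $\mu$. This reduces $g_1$ to a one-variable polynomial in $y=\yhat_1\in[1.55,3]$, whose negativity can be checked by Sturm sequences or sign counting. The same substitution applies to $\yhat_2$, giving a single-variable polynomial on $[3,3.92]$, and may be the cleanest route overall.
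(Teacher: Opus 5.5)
Your plan is sound but follows a genuinely different route from the paper, and one intermediate claim is false.

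\textbf{The paper's route.} It studies $g_i(\mu)=P_\mu(\yhat_i(\mu))$ by calculus, using the first- and second-derivative identities of Lemma~\ref{Lemma:Deriv.yhat}. On the $\yhat_2$ branch, it shows that $\pr_yP=12\mu\yhat_2-(2\rdot_0^2+9)\mu-\rdot_0^4$ changes sign exactly once, at some $\mu_0<0.5$. Hence $g_2$ first decreases and then increases. With separate estimates on $[0,0.5]$ and $[0.5,0.81]$, it shows that $g_1$ is increasing. Only the endpoint values at $\mu=0$ and $\mu=0.81$ then need checking.

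\textbf{The false claim.} Your bullet asserting $\pr_yP<0$ on the $\yhat_2$ branch is wrong, since $31.6\times0.81\approx25.6>\rdot_0^4\approx13.6$. Along that branch $\pr_yP$ turns positive at $\mu_0\approx0.48$, and it is about $+11$ at $\mu=0.81$. So the two contributions to $g_2'$ compete only on $[0,\mu_0)$. Afterwards $g_2$ is increasing, and its worst case is again $\mu=0.81$. Your mesh and rectangle argument never uses this sign, so nothing breaks. Indeed, $P$ is convex in each variable separately, so vertex values bound it on each rectangle. The heuristic picture should still be corrected.

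\textbf{Your elimination idea.} Your closing idea is the real gain, and it is cleaner than you expect. $\Th(\rhat_i)=0$ is equivalent to $4\mu=\yhat_i(\yhat_i-3)^2$, and $(y-3)^2+6y=y^2+9$. So both branches give the same expression
\[
P_\mu(\yhat_i)=\tfrac14\,\yhat_i^{\,2}(\yhat_i-3)^2\big(\yhat_i^{\,2}-2\rdot_0^2\big)-\rdot_0^4(\yhat_i-2).
\]
The map $y\mapsto y(y-3)^2$ decreases on $[1,3]$ and increases on $[3,4]$. Also $1.55\cdot1.45^2>3.24$ and $3.92\cdot0.92^2>3.24$. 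Hence $\mu\in[0,0.81]$ corresponds to $y$ in a subinterval of $[1.55,3.92]$, with no need for the $\arccos$ formulas or Lemma~\ref{Lemma:Deriv.yhat}.

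On $[2,\sqrt2\,\rdot_0]$ both terms are nonpositive and never vanish together. So only $[1.55,2]$ and $[\sqrt2\,\rdot_0,3.92]$ need a rigorous one-variable check. The margins at the outer ends are about $0.16$ and $0.1$.

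\textbf{Comparison.} Your route reduces everything to a single algebraic inequality, with no transcendental functions and no $\mu^{-1/2}$ singularity at $\mu=0$. The price is a computer-style verification (Sturm sequences or interval arithmetic). The paper's argument instead shows by hand the structural fact that $\mu=0.81$ is the binding case. It pays for this with several case splits and its own numerical estimates.
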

Together with Lemma \ref{Lemma:ontrappingset} and Lemma \ref{Le:sufficientcond} we establish the following important result.
\begin{proposition}
    \lab{Prop:Tring}
    For all $\mu \in [0, 0.81]$, the vectorfield $\Ta:=T+\frac{a}{a^2+ (1.92m)^2} Z$ is timelike on the trapping set.
\end{proposition}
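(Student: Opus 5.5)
The statement to prove is Proposition \ref{Prop:Tring}. By Lemma \ref{Le:sufficientcond}, it suffices to show that the quadratic $P_{r_0}(r)$ is negative on $[\rhat_1,\rhat_2]$ with $r_0=1.92m$. After rescaling, this means $P_\mu(\yhat_1(\mu))<0$ and $P_\mu(\yhat_2(\mu))<0$ for all $\mu\in[0,0.81]$. Since the leading coefficient $6\mu\ge 0$, $P_\mu$ is convex in $y$. Negativity at the two endpoints $\yhat_1,\yhat_2$ therefore gives negativity on the whole interval $[\yhat_1,\yhat_2]$. So Proposition \ref{Prop:Tring} reduces to Lemma \ref{Lemma:ontrappingset}, and the plan is to prove that lemma.

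\textbf{Strategy for the lemma.} I will treat $g_i(\mu):=P_\mu(\yhat_i(\mu))$, $i=1,2$, as explicit functions of one variable on $[0,0.81]$.

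\textbf{Endpoint values.} At $\mu=0$ we have $\yhat_1=\yhat_2=3$, so
\[
g_i(0)=(2-3)\rdot_0^4=-\rdot_0^4<0 .
\]
At $\mu=0.81$ negativity is exactly the content of Lemma \ref{lemma:0.81}.

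\textbf{Interior.} Between the endpoints I will control $g_i$ with the derivative formulas of Lemma \ref{Lemma:Deriv.yhat}. The chain rule gives
\[
g_i'(\mu)=\pr_\mu P_\mu(\yhat_i)+\pr_yP_\mu(\yhat_i)\,\yhat_i'(\mu),
\]
with $\pr_\mu P_\mu=8\mu+6y^2-(2\rdot_0^2+9)y$ and $\pr_yP_\mu=12\mu y-(2\rdot_0^2+9)\mu-\rdot_0^4$.

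The simplest route is to show that each $g_i$ is convex, or at least has no interior maximum exceeding its endpoint values. For that I would use the second-derivative identities \eqref{ eq:2^ndDeriv-yhat} together with the monotonicity in Remark \ref{rem:monotonicity-rhat}: $\yhat_1\in[1.55,3]$ is decreasing and $\yhat_2\in[3,3.92]$ is increasing. A convex function on an interval is bounded by its endpoint values, which are both negative.

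\textbf{Main obstacle and fallback.} The hard step is the sign control of $g_i$ near $\mu=0$. There $\yhat_i'\sim \mu^{-1/2}$ blows up, so convexity may fail, and the term $\pr_yP_\mu\cdot\yhat_i'$ is dominated by $-\rdot_0^4\,\yhat_i'$. This term has opposite signs for $i=1$ and $i=2$. If convexity fails, I would use a robust alternative. Since $y\in[1.5,4]$ on the trapping set, a partition $0=\mu_0<\dots<\mu_N=0.81$ gives interval enclosures for $\yhat_i$ on each subinterval, by monotonicity. On each piece I would bound $P_\mu(y)$ from above by maximizing over the box $[\mu_k,\mu_{k+1}]\times[\yhat_i(\mu_{k+1}),\yhat_i(\mu_k)]$ (or the reverse order for $\yhat_2$), using interval arithmetic in the same spirit as the Riccati verification. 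Near $\mu=0$ I would instead use the expansion $\yhat_{1,2}=3\mp\frac{2}{\sqrt3}\mu^{1/2}+O(\mu)$. This gives $g_i(\mu)=-\rdot_0^4+O(\mu^{1/2})$, which is clearly negative for small $\mu$. This bootstrap from Lemma \ref{lemma:0.81} and the small-$\mu$ expansion is where most of the work lies.
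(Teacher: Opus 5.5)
Your reduction is the same as the paper's. Lemma \ref{Le:sufficientcond}, together with convexity of $P_\mu$ in $y$, reduces the proposition to Lemma \ref{Lemma:ontrappingset}, and $g_i(0)=-\rdot_0^4$, $g_i(0.81)<0$ are the right anchors.

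The gap is the interior of $[0,0.81]$, which you never actually control. The mechanism you put first, convexity, cannot work for $g_1$. By your own expansion, $g_1(\mu)=-\rdot_0^4+\frac{2}{\sqrt3}\rdot_0^4\mu^{1/2}+O(\mu)$, which is strictly concave near $0$. In fact $g_1''<0$ on all of $[0,0.5]$: there $\yhat_1''\ge 0$ multiplies the negative factor $\pr_yP_\mu(\yhat_1)$, and $8+\yhat_1'(24\yhat_1-4\rdot_0^2-18+12\mu\yhat_1')<0$.

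The missing idea is monotonicity rather than convexity; this is what the paper uses.
\begin{itemize}
\item For $g_1$, show it is increasing on the whole interval. On $[0,0.5]$ this holds because $g_1'$ is decreasing with $g_1'(0.5)>0$. On $[0.5,0.81]$ it follows from the crude bounds $8\mu\ge 4$, $6\yhat_1^2-(2\rdot_0^2+9)\yhat_1>-12$ and $\yhat_1'\,\pr_yP_\mu(\yhat_1)\ge \frac43\cdot 6.52$. So the single check $g_1(0.81)<0$ suffices.
\item For $g_2$, the factor $\pr_yP_\mu(\yhat_2)=12\mu\yhat_2-(2\rdot_0^2+9)\mu-\rdot_0^4$ is increasing in $\mu$ and changes sign once, at some $\mu_0<0.5$. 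On $[\mu_0,0.81]$ every term of $g_2'$ is nonnegative and $g_2'>0$. On $[0,\mu_0]$ one has $g_2''>0$, since $\yhat_2''<0$ for $\mu\le\frac12$ multiplies a nonpositive factor. Thus $g_2$ decreases and then increases, so its maximum is at an endpoint.
\end{itemize}

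You have also misplaced the difficulty. Near $\mu=0$ there is nothing to do: $g_i\approx-\rdot_0^4\approx-13.6$. A single monotone box such as $[0,0.1]\times[\yhat_1(0.1),3]$ already gives a negative upper bound, with no expansion needed. The delicate end is $\mu=0.81$, where $g_1(0.81)\approx-0.23$ is small compared with $|\pr_\mu P|\approx 4.5$ and $|\pr_y P|\approx 12$.

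Your interval-arithmetic fallback is a legitimate alternative route and would succeed. It needs a mesh of roughly $\Delta\mu\lesssim 0.05$ near $0.81$ and rigorous enclosures of $\yhat_i$ at the nodes. None of this is carried out, so the proposal as written does not yet prove Lemma \ref{Lemma:ontrappingset}, and hence not the proposition.
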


\begin{proof}[Proof of  Lemma \ref{Lemma:ontrappingset}.]
     We compute
    \begin{align*}
        \frac{d}{d\mu} P(y(\mu))&= 8 \mu+\big(6y^2-(2\rdot_0^2+9) y\big) +\mu( 12 y y'-(2\rdot_0^2+9)y')-\rdot_0^4 y'\\
        &=8 \mu+\big(6y^2-(2\rdot_0^2+9) y\big) +  y'\big( 12 \mu y-(2\rdot_0^2+9) \mu -\rdot_0^4\big),\\
        \frac{d}{d\mu}\big( 12 \mu y-(2\rdot_0^2+9) \mu -\rdot_0^4\big)&=12y-(2\rdot_0^2+9)+12\mu y'.
    \end{align*}
    Recall that, see Lemma \ref{Lemma:Deriv.yhat}
    \begin{align*}
        \frac{d}{d\mu}\yhat_1&=-  \frac 1 3 \frac{1}{\sqrt{(\mu(1-\mu)}}  \sqrt{\yhat_1(4-\yhat_1)}<0, \qquad
        \frac{d}{d\mu}\yhat_2= \frac  1 3   \frac{1}{\sqrt{(\mu(1-\mu)}} \sqrt{\yhat_2(4-\yhat_2)}>0.
    \end{align*}
    \begin{align*}
        \frac{d}{d\mu}\big(\sqrt{(\mu(1-\mu)} \yhat_1' \big) &=\frac{1}{3}  \frac{\yhat_1-2}{\sqrt{\mu(1-\mu) }},\qquad\,
        \frac{d}{d\mu}\big(\sqrt{(\mu(1-\mu)} \yhat_2' \big) =-\frac{1}{3}  \frac{\yhat_2-2}{\sqrt{\mu(1-\mu) }}<0.
    \end{align*}
    \textbf{ Show that $P(\yhat_2(\mu))<0$}: First, we calculate
    \begin{align*}
        \frac{d}{d\mu}\big( 12 \mu \yhat_2-(2\rdot_0^2+9) \mu -\rdot_0^4\big)&=12\yhat_2-(2\rdot_0^2+9)+12\mu \yhat_2'\\
        &\geq 12\times3-(2(1.92)^2+9)>0,\\
        \big( 12 \mu \yhat_2-(2\rdot_0^2+9) \mu -\rdot_0^4\big)|_{\mu=0}&=-\rdot_0^4<0,\\
        \big( 12 \mu \yhat_2-(2\rdot_0^2+9) \mu -\rdot_0^4\big)|_{\mu=0.5}&>(24+12\sqrt{3}-2(1.92)^2-9)\times 0.5-(1.92)^4>0.
    \end{align*}
    Then we conclude that $ 12 \mu \yhat_2-(2\rdot_0^2+9) \mu -\rdot_0^4$ is negative for $\mu\in[0, \mu_0)$ and positive for $\mu\in(\mu_0, 0.81]$ for some $0<\mu_0<0.5$. Therefore, $\frac{d}{d\mu}P(\yhat_2(\mu))>0$ and thus $P(\yhat_2(\mu))$ is increasing for $\mu\in[\mu_0, 0.81]$. We further compute
    \begin{align*}
        \frac{d^2}{d\mu^2} P(\yhat_2(\mu))&= \underbrace{8+\yhat_2'(24\yhat_2-(4\rhat_0^2+18)+12\mu\yhat_2')}_{>0}+\yhat_2'' \big( 12 \mu \yhat_2-(2\rdot_0^2+9) \mu -\rdot_0^4\big).
    \end{align*}
    Since $\yhat_2''<0$ for $\mu\in[0,0.5]$ and $\mu_0<0.5$, we infer that $ \frac{d^2}{d\mu^2} P(\yhat_2(\mu))>0$ and thus $ \frac{d}{d\mu} P(\yhat_2(\mu))$ is increasing for $\mu\in[0, \mu_0]$. Since $\frac{d}{d\mu} P(\yhat_2(\mu))|_{\mu=0}<0$ and $\frac{d}{d\mu} P(\yhat_2(\mu))|_{\mu=\mu_0}>0$, we find that $\frac{d}{d\mu} P(\yhat_2(\mu))$ is negative for $\mu\in[0, \mu_1)$ and positive for $\mu\in(\mu_1, \mu_0]$ for some $0<\mu_1<\mu_0$. Therefore, $ P(\yhat_2(\mu))$ is decreasing for $\mu\in[0, \mu_1)$ and increasing for $\mu\in(\mu_1, 0.81]$ for some $0<\mu_1<\mu_0$. Combining the monotonicity of $ P(\yhat_2(\mu))$ with the facts that
    \[
    P(\yhat_2(0))=(2-\yhat(0))\rdot_0^4=-\rdot_0^4<0,\quad  P(\yhat_2(0.81))<0,
    \]
    we obtain $ P(\yhat_2(\mu))<0$ for $\mu\in[0,0.81]$.

    \textbf{ Show that $P(\yhat_1(\mu))<0$}: We recall that
    \[
        \frac{d}{d\mu} P(\yhat_1(\mu))=8 \mu+\big(6\yhat_1^2-(2\rdot_0^2+9) \yhat_1\big) +  \yhat_1'\big( 12 \mu \yhat_1-(2\rdot_0^2+9) \mu -\rdot_0^4\big).
    \]
    We further compute
    \[
        \frac{d^2}{d\mu^2} P(\yhat_1(\mu))=8 +\yhat_1'\big(24\yhat_1-(4\rdot_0^2+18)+12\mu\yhat_1'\big) +  \yhat_1''\big( 12 \mu \yhat_1-(2\rdot_0^2+9) \mu -\rdot_0^4\big).
    \]
    \begin{enumerate}
        \item $\mu\in[0,0.5]$: Since $\mu\in[0,0.5]$, we have $\yhat_1\in[2,3]$ and
     \[
    \mu\yhat_1'=-\frac{1}{3}\sqrt{\frac{\mu}{1-\mu}}\sqrt{\yhat_1(4-\yhat_1)}\geq-\frac{2}{3}.
    \]
    Then we have
    \[
    \yhat_1'\big(24\yhat_1-(4\rdot_0^2+18)+12\mu\yhat_1'\big)\leq -\frac{4}{3}\big(24\c2-(4\rdot_0^2+18)+12\c(-\frac23)\big)<-8
    \]
    and thus
    \[
    8 +\yhat_1'\big(24\yhat_1-(4\rdot_0^2+18)+12\mu\yhat_1'\big)<0.
    \]
    Since $\yhat_1''\geq 0$ and
    \[
    12 \mu \yhat_1-(2\rdot_0^2+9) \mu -\rdot_0^4\leq(12\times3-(2\rdot_0^2+9))\mu-\rdot_0^4\leq 20\mu-13\leq 10-13<0,
    \]
    we infer that
    \[
    \yhat_1''\big( 12 \mu \yhat_1-(2\rdot_0^2+9) \mu -\rdot_0^4\big)<0.
    \]
    Therefore, we obtain that $ \frac{d^2}{d\mu^2} P(\yhat_1(\mu))<0$ and thus $ \frac{d}{d\mu} P(\yhat_1(\mu))$ is decreasing for $\mu\in[0,0.5]$. Since $\frac{d}{d\mu}P(\yhat_1(\mu))|_{\mu=0.5}>0$, we conclude that $\frac{d}{d\mu}P(\yhat_1(\mu))>0$ and thus $P(\yhat_1(\mu))$ is increasing for $\mu\in[0,0.5]$.

    \item $\mu\in[0.5,0.81]$:
    Since $6\yhat_1^2-(2\rdot_0^2+9)\yhat_1$ is increasing for $\yhat_1\in[\frac{2r_0^2+9}{12}=1.3644, 2]$ and $\yhat_1\in(1.5, 2]$ for $\mu\in[0.5,0.81]$, we infer that
     \[
    6\yhat_1^2-(2\rdot_0^2+9) \yhat_1\geq (6\yhat_1^2-(2\rdot_0^2+9) \yhat_1)|_{\yhat_1=1.5}> -12.
    \]
We also have
    \[
    12 \mu \yhat_1-(2\rdot_0^2+9) \mu -\rdot_0^4\leq(12\c2-(2\rdot_0^2+9))\mu-\rdot_0^4\leq 8\mu-13\leq 8\c0.81-13=-6.52.
    \]
    and thus
    \[
     \yhat_1'\big( 12 \mu \yhat_1-(2\rdot_0^2+9) \mu -\rdot_0^4\big)\geq (-\frac43)\c(-6.52)\geq 8.69.
    \]
    Therefore
    \begin{align*}
    \frac{d}{d\mu} P(\yhat_1(\mu))&=8 \mu+\big(6\yhat_1^2-(2\rdot_0^2+9) \yhat_1\big) +  \yhat_1'\big( 12 \mu \yhat_1-(2\rdot_0^2+9) \mu -\rdot_0^4\big)\\
    &> 8\mu-12+8.69\geq 8\c0.5-12+8.69>0.
    \end{align*}
    This proves that $P(\yhat_1(\mu))$ is increasing for $\mu\in[0.5,0.81]$.
        \end{enumerate}
        As a consequence, $P(\yhat_1(\mu))$ is increasing for $\mu\in[0,0.81]$, and then the negativity of $P(\yhat_1(\mu))$ for $\mu\in [0, 0.81]$ follows from the negativity of $P(\yhat_1(\mu))|_{\mu=0.81}$.
\end{proof}




\subsection{Proof of  Theorem \ref{Thm:Morawetz4}}
\lab{section-energy-flux}


The goal of this section is to prove the estimate \eqref{eq:Thm-Morawetz4} which we rewrite as follows.
\begin{proposition}
\lab{Prop:Condit-EnergyFlux}
Given a solution of the equation $\square_{a,m}\psi=N$ with $|a|/m<0.9$, we have
\beq
\lab{eq:Thm-Morawetz4-energy-flux}
E^s_{p}[\psi](\tau_2)\les  E^s_{p}[\psi](\tau_1)+ B_p^s[\psi] (\tau_1, \tau_2) +\NN_p^s[\psi,N](\tau_1, \tau_2).
\eeq
\end{proposition}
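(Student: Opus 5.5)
The plan is to apply the standard energy identity with a carefully built causal multiplier $\Tring$, in the spirit of the intro's description and of the proof of Proposition \ref{Prop.1.6-Intro}. We take
\[
\Tring=T+\chi_1(r)\,\om_+ Z+\chi_2(r)\,\la_0 Z,\qquad \la_0=\frac{a}{a^2+(1.92m)^2},
\]
with a partition of unity in $r$ chosen as follows:
\begin{itemize}
\item near the horizon, $r\le r_++\de$, we have $\Tring=\T_+$, which is Killing and causal there by Lemma \ref{Le:uniqunessofT_+};
\item on a neighborhood $[\rhat_1-\de_{trap},\rhat_2+\de_{trap}]$ of the trapping set, $\Tring=T_{\la_0}$, which is Killing and timelike there by Proposition \ref{Prop:Tring} (valid for $\mu\le0.81$, i.e. $|a|/m\le0.9$), and hence also timelike on a slightly larger neighborhood by continuity;
\item for $r\ge R$ large, $\Tring=T$.
\end{itemize}
The intro phrases this as $\That$ before the trapping region. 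We could alternatively interpolate through $\That$, which is timelike everywhere in the DOC; then $\Tring$ is causal wherever the cutoffs switch. I will in fact use $\That$-type interpolation, $\Tring=T+\vartheta(r)Z$ with $\vartheta$ moving between $\om_+$, $\frac{a}{r^2+a^2}$ and $\la_0$. The point is that $T+\vartheta Z$ is timelike whenever $\vartheta$ lies in the timelike cone, which is an interval around $\frac{a}{r^2+a^2}$ for each $r>r_+$ by Lemma \ref{lemma:inside-Hawking-vf}. In particular, on the trapping set both $\la_0$ and $\frac{a}{r^2+a^2}$ lie in this interval, so convex combinations are timelike.

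With this multiplier, the energy identity \eqref{eq:EFE2-Tring} gives three pieces:
\begin{itemize}
\item a horizon flux $\QQ(\T_+,N_\HH)\ge0$, which can be dropped;
\item a coercive boundary term $\int_{\Si(\tau_2)}\QQ(\Tring,N_\Si)\gtrsim E[\psi](\tau_2)$ modulo degeneracy at the horizon;
\item a bulk term $\QQ\c\piring=\vartheta'(r)\frac{\De}{|q|^2}\pr_r\psi\,Z\psi$, supported only where $\vartheta'\neq0$, i.e. in the transition regions.
\end{itemize}
By construction these regions lie in $\DD_{\ntrap}$ (at distance $\ge\de_{trap}$ from $[\rhat_1,\rhat_2]$) or in $r\ge R$. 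In those regions $B[\psi]$ controls $|e_{3,4}\psi|^2$ and $r^{-1}|\nab\psi|^2$, so $|\int\QQ\c\piring|\les B[\psi](\tau_1,\tau_2)$. The far region contributes $O(r^{-2})$ weights, which $B$ absorbs.

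The degeneracy of $\QQ(\T_+,N_\Si)$ in $e_3$ at $\HH$ is removed by adding $c_0Y$ from Proposition \ref{prop:partialredshiftestimate}. Its bulk error lies in $r\le r_+(1+2\de_{red})\subset\DD_{\ntrap}$ and is again bounded by $B$. The inhomogeneous term $\int \Tring\psi\c N$ is split with Cauchy–Schwarz:
\begin{itemize}
\item on $\DD_{trap}$ it is exactly the $\NN$ term;
\item elsewhere it is bounded by $B+\NN$.
\end{itemize}
The $r^p$ part follows by adding the standard $r^p$-weighted estimates, as in section 10.3 of \cite{GKS}. Higher $s$ follows by commuting with $T,Z,\OO$ and using Lemma \ref{lem:higherordercalculuslemma}.

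The main obstacle is the geometric construction: verifying that $\Tring$ stays causal across every transition region, and specifically that the needed $\vartheta$ values admit a connecting path staying in the timelike interval for each $r$. In particular this must hold between $\om_+$ at $r_+$ and $\la_0$ at $\rhat_1$, which may lie inside the ergoregion when $|a|/m>\sqrt2/2$. I would first try the monotone path through $\frac{a}{r^2+a^2}$, checking with Lemma \ref{lemma:inside-Hawking-vf} at $\sin\th=1$ (the worst case, as in Lemma \ref{Le:sufficientcond}).
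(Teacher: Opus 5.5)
Your proposal is correct and is essentially the paper's proof: the paper also takes $\Tring=T+\vartheta(r)Z$, equal to $\That$ on $[r_+,\rhat_1-\de]$, to $T+\frac{a}{a^2+(1.92m)^2}Z$ on $[\rhat_1,\rhat_2]$ and to $T$ for $r\ge\rhat_2+\de$. It bounds the deformation term, which is supported off the trapping set, by $\de^{-1}|\Rhat\psi\c Z\psi|$ and hence by the bulk norm, and then adds the partial red shift, the $r^p$ estimates and commutations exactly as you do. Your $T_+$ cap near $\HH$ is harmless but unnecessary, since $\That|_{\HH}=T_+$ gives the same horizon flux and its deformation term there is again of the form $\Rhat\psi\c Z\psi$. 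The causality issue you flag is already settled by your own convexity remark: $|q|^2\g(T_\la,T_\la)$ is a quadratic in $\la$ with leading coefficient $\Si^2\sin^2\th\ge 0$, so at each point the $\la$ for which $T_\la$ is timelike form an interval. Since $\That$ is timelike throughout the exterior, routing $\vartheta$ through $\frac{a}{r^2+a^2}$ avoids any difficulty with the ergoregion.
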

\begin{proof}
Consider the vectorfield $\Ta:=T+\frac{a}{a^2+ (1.92m)^2} Z$ of Proposition \ref{Prop:Tring} which is both Killing
 and timelike on the trapping set $[\rhat_1, \rhat_2]$. We choose a smooth vectorfield $\Tring $ as follows
 \beq
 \Tring=\begin{cases}
 \That, \qquad &r_+\le r \le \rhat_1-\de\\
 \Ta, \qquad &\rhat_1\le r \le \rhat_2\\
 \T,\qquad   &\rhat_2+\de \le r
 \end{cases},
 \eeq
 which is everywhere timelike and Killing in the trapping set.
 We deduce
 \begin{align*}
&\int_{\Sigma_{\tau_2}}\QQ[\psi](\mathring{T}, N_\Sigma)\,+\int_{\HH(\tau_1, \tau_2)}\QQ[\psi](\Tplus, N_\HH)\\
&\quad\le \int_{\DD(\tau_1, \tau_2)}\Big(\frac12\big|\QQ_{\mu\nu}\ ^{(\mathring{T})}\pi^{\mu\nu}\big|+|\mathring{T}\psi\c N|\Big)+\int_{\Sigma_{\tau_1}}\QQ[\psi](\mathring{T}, N_\Sigma)
\end{align*}
which implies
\[
\EFdeg[\psi](\tau_1,\tau_2) \les  \int_{\DD(\tau_1, \tau_2)} |\QQ\c \,^{(\Tring)}\pi|+ \Edeg[\psi](\tau_1)+\NN_1[\psi,N](\tau_1,\tau_2).
\]
Since $\Tring$ is Killing in the trapping set and for $r\ge \rhat_2+\de$ and of the form $\Tring= T+ f(r) \Z$ everywhere else
 we deduce, wherever $\Tring$ is not Killing, with $c>0$ small,
 \[
\big|  \QQ\c \,^{(\Tring)}\pi \big| \les  \de^{-1}  \big|\Rhat \psi \c Z\psi\big|.
 \]
 Therefore,
 \[
 \EFdeg[\psi](\tau_1,\tau_2) \les  \Edeg[\psi](\tau_1) + \delta^{-1}  \Bdeg[\psi](\tau_1, \tau_2)+\NN_1[\psi,N](\tau_1,\tau_2).
 \]
 Combining this estimate with the partial red shift estimate in Proposition \ref{prop:partialredshiftestimate} and $r^p$ estimates, we deduce
 \[
 E_p[\psi](\tau_2)\les  E_p[\psi](\tau_1)+ B_p[\psi] (\tau_1, \tau_2) +\NN_p[\psi,N](\tau_1, \tau_2).
 \] Also, by  the standard procedure of taking higher derivative,
 \[
 E_p^s[\psi](\tau_1,\tau_2) \les  E_p^s[\psi](\tau_1) + \delta^{-1} B_p^s[\psi](\tau_1, \tau_2)+\NN_p^s[N](\tau_1,\tau_2).
 \]
\end{proof}

\appendix



\section{Proof of results from section \ref{section:preliminaries}}

\subsection{Proof of  Proposition \ref{prop:decompose-square}}

\lab{appendix:decompose-square}
In view of \eqref{eq:inversemetric-Kerr}
\begin{align*}
\square \psi &=\frac{1}{\sqrt{ |\g|}  } \pr_\b\Big(\sqrt{|\g|} \g^{\a\b} \pr_\a \psi\Big)=\pr_t \Big( \g^{tt}  \pr_t \psi +\g^{t\phi}\pr_\phi\psi \Big)+ \pr_\phi \Big( \g^{\phi\phi}  \pr_\phi  \psi +\g^{t\phi}\pr_t \psi \Big)\\
&\quad+ \frac{1}{|q|^2 \sin \th} \pr_r \big(  |q|^2 \sin \th \g^{rr} \pr_r \psi\big) + \frac{1}{|q|^2 \sin \th} \pr_\th \big(  |q|^2 \sin \th \g^{\th\th} \pr_\th \psi\big)\\
&=\Big( \g^{tt}  \pr^2_t   + 2 \g^{t\phi}\pr_t\pr_\phi  +\g^{\phi\phi}    \pr^2_\phi    \Big)  \psi +  \frac{1}{|q|^2 } \Big(\pr_r\big(\De \pr_r \psi\big)+ \frac{1}{\sin \th} \pr_\th \big(\sin\th \pr_\th \psi\big)\Big).
\end{align*}
Hence
\begin{align*}
|q|^2 \square \psi &=\Big(-\frac{(r^2+a^2)^2- a^2 \sin^2\th \De}{\De} \pr_t^2  -\frac{4 a mr }{\De}\pr_t\pr_ \phi +\frac{\De- a^2 \sin^2\th}{ \De \sin^2 \th}\pr_\phi^2\Big)\psi\\
&\quad+ \Big(\pr_r\big(\De \pr_r \psi\big) +\frac{1}{\sin\th} \pr_\th \big(\sin\th \pr_\th \psi\big)\Big)\\
&= \Big(-\frac{(r^2+a^2)^2}{\De} \pr_t^2  +\frac{2a(\De- r^2 - a^2) }{\De}\pr_t\pr_ \phi -\frac{a^2}{\De} \pr_\phi^2 \Big)\psi+\Big(a^2 \sin^2 \th \pr_t^2+\frac{1}{\sin^2\th} \pr_\phi^2\Big)\psi\\
&\quad+ \Big(\pr_r\big(\De \pr_r \psi\big) + \frac{1}{\sin\th}  \pr_\th \big(\sin\th \pr_\th \psi\big)\Big)\\
&= \Big(-\frac{(r^2+a^2)^2}{\De} \pr_t^2  - \frac{2a(r^2+a^2) }{\De}\pr_t\pr_ \phi -\frac{a^2}{\De} \pr_\phi^2 \Big)\psi+ \Big(a^2 \sin^2 \th \pr_t^2+ 2 a \pr_t\pr_\phi+\frac{1}{\sin^2\th} \pr_\phi^2\Big)\psi\\
&\quad+ \Big(\pr_r\big(\De \pr_r \psi\big) + \frac{1}{\sin\th}  \pr_\th \big(\sin\th \pr_\th \psi\big)\Big).
\end{align*}
We rewrite in the form
\begin{align*}
|q|^2 \square \psi &= \pr_r\big(\De \pr_r \psi\big) +  \Big(a^2 \sin^2 \th \pr_t^2+ 2 a \pr_t \pr_\phi+\frac{1}{\sin^2\th} \pr_\phi^2\Big)\psi + \frac{1}{\sin\th}  \pr_\th \big(\sin\th \pr_\th \psi\big)\\
&+\frac{1}{\De}\Big(-(r^2+a^2)^2 \pr_t^2 - 2  a(r^2+a^2)  \pr_t\pr_\phi- a^2 \pr_\phi^2\Big)\psi
\end{align*}
or, introducing the operators
\begin{align*}
\OO&=  a^2 \sin^2 \th \pr_t^2+ 2 a \pr_t\pr _\phi +\frac{1}{\sin^2\th} \pr_\phi^2 + \frac{1}{\sin\th}  \pr_\th \big(\sin\th \pr_\th \big),\\
\RR&=
\pr_r\big(\De \pr_r \big) - \frac{1}{\De}\Big((r^2+a^2)^2 \pr_t^2 +2  a(r^2+a^2)  \pr_t\pr_\phi+a^2 \pr_\phi^2\Big)\\
&= \pr_r\big(\De \pr_r \big) -\frac{(r^2+a^2)^2}{\De} \That \That ,
\end{align*}
we deduce $|q|^2\square= \RR +\OO$
as stated.

To check \eqref{eq:OO-tensorial} it is convenient to re-express $|q|^2 \square\psi$ as follows:
\begin{align*}
|q|^2 \square \psi &=|q|^2\D_\a\big(\g^{\a\b} \D_\b\psi\big)=\D_\a\big(|q|^2\g^{\a\b} \D_\b\psi\big)- \D_\a(|q|^2)  \g^{\a\b} \D_\b\psi\\
&= \D_\a\big(|q|^2\g^{\a\b} \D_\b\psi\big)- \D_\a(|q|^2)  \g^{\a\b} \D_\b\psi\\
&=  \D_\a\big(|q|^2\g^{\a\b} \D_\b\psi\big)-\frac{\De}{|q|^2} \pr_r(|q|^2)\pr_r \psi-\frac{1}{|q|^2}  \pr_\th (|q|^2)\pr_\th .
\end{align*}
In view of \eqref{inverse-metric-vfs}
\[
\D_\a\big(|q|^2\g^{\a\b} \D_\b\psi\big) = \D_\a\big(O^{\a\b}\D_\b \psi\big)+ \D_\a \Big(  \frac{(r^2+a^2)^2}{\De} \big( -\That^\a\That^\b +      \Rhat^\a \Rhat^\b\big)\D_\b \psi\Big)
\]
Hence
\begin{align*}
|q|^2 \square \psi &= \D_\a\big(O^{\a\b}\D_\b \psi\big) -\frac{1}{|q|^2}  \pr_\th (|q|^2)\pr_\th \psi+\RR\psi\\
\RR\psi&=\D_\a \Big(  \frac{(r^2+a^2)^2}{\De} \big( -\That^\a\That^\b +      \Rhat^\a \Rhat^\b\big) \D_\b\psi\Big)- \frac{\De}{|q|^2} \pr_r(|q|^2)\pr_r \psi
\end{align*}
Thus
\[
\OO\psi=\D_\a\big(O^{\a\b}\D_\b \psi\big) -\frac{1}{|q|^2}  \pr_\th (|q|^2)\pr_\th \psi=|q|^2 \D_\a\big(|q|^{-2} O^{\a\b}\D_\b \psi\big)
\]
as stated.


\subsection{Proof of Lemma \ref{lemma:N_Si}}
\lab{Appendix:Lemma:N_Si}

\begin{proof}
The first statement follows immediately from the above definitions.
To check the second we write
\begin{align*}
\g(\gradtau, \gradtau)&= -\frac{\Si^2}{|q|^2 \De}+\frac{\De}{|q|^2} f'(r) f'(r)\\
&=-\frac{1}{|q|^2}\Big(\frac{(r^2+a^2)^2}{\Delta}-a^2\sin^2\th- \De  \big(\frac{r^2+a^2}{\Delta}-\frac{m^2}{r^2}\big)^2\chi^2(r)\Big)\\
&= -\frac{1}{|q|^2}\Bigg( \frac{(r^2+a^2)^2}{\Delta}-a^2\sin^2\th- \De\chi^2 \big( \frac{(r^2+a^2)^2}{\Delta^2}- 2\frac{r^2+a^2}{\Delta} \frac{m^2}{r^2}+
\frac{m^4}{r^4}\big)\Bigg)\\
&= -\frac{1}{|q|^2}\Bigg(  \frac{(r^2+a^2)^2}{\Delta}(1-\chi^2)- a^2\sin^2\th+ \Big(2(r^2+a^2) \frac{m^2}{r^2}  -\De \frac{m^4}{r^4}\Big) \chi^2\Bigg).
\end{align*}
Note that in the region where $\chi=1$, for all values of $r$,
\begin{align*}
-|q|^2 \g(\gradtau, \gradtau)&= - a^2 \sin^2\th + 2(r^2+a^2) \frac{m^2}{r^2}  -\De \frac{m^4}{r^4}\\
&=  - a^2 \sin^2\th+ (r^2+a^2) \frac{m^2}{r^2} +\frac{m^2}{r^4} \big(r^2+a^2)r^2 -\De m^2\big)\\
&=   \frac{   m^2(r^2+a^2) - a^2 r^2 \sin^2\th} {r^2}    +  \frac{m^2}{r^4} \big( (r^2-m^2)(r^2+a^2) + 2m^3 r\big)\\
&> 0.
\end{align*}
In the region where $\chi=0$ we have
\begin{align*}
-|q|^2 \g(\gradtau,\gradtau) &= \frac{(r^2+a^2)^2}{\Delta} - a^2 \sin^2\th = \frac{ (r^2+a^2)^2- a^2 \sin^2\th\De}{\De}\\
&=\frac{(r^2+a^2) |q|^2 + 2 a^2 mr}{\De}>0
\end{align*}
which proves \eqref{eq:g(N_Si,N_Si)}.

Finally, relative to the null frame $\gradtau=-\frac 12 ( e_4(\tau) e_3 + e_3(\tau) e_4)+ e_2(\tau) e_2$.
We calculate
\begin{align*}
e_2(\tau)&=\frac{a\sin\th}{|q|} \chi \\
e_3(\tau)&=\frac{r^2+a^2}{\De}- f'(r) =\frac{r^2+a^2}{\De}-\big(\frac{r^2+a^2}{\Delta}-\frac{m^2}{r^2} \big) \chi\\
&=\big(1-\chi\big)\frac{r^2+a^2}{\De}+\chi\frac{m^2}{r^2}\\
e_4(\tau)&=\frac{r^2+a^2}{|q|^2} +\frac{\De}{|q|^2}f'(r)= \frac{r^2+a^2}{|q|^2}+\frac{\De}{|q|^2} \big(\frac{r^2+a^2}{\Delta}-\frac{m^2}{r^2} \big) \chi\\
&= \frac{r^2+a^2}{|q|^2}(1+\chi)-\frac{\De m^2}{r^2|q|^2} \chi=O_+(1)
\end{align*}
In the region $r\le r_0$, for which $\chi=1$,
we have
\[
e_3(\tau)=\frac{m^2}{r^2}, \qquad
e_4(\tau)=\frac{(2r^2-m^2)(r^2+a^2) + 2m^3 r}{r^2|q|^2}\ges  1.
\]
Thus, for $r\le r_0 $,
\[
\gradtau=-\frac 12\big( \frac{m^2}{r^2}  e_4+ O_+(1)  e_3\big)+  \frac{a\sin\th}{|q|} e_2
\]
while for $r \ge 2r_0$,
\[
\gradtau=\gradt = - \frac 12\big(  e_3(t) e_4+ e_4(t) e_3 \big)+ e_a(t) e_a=- O_+(1)  \big(e_3+e_4\big) + O(ar^{-1}) e_2.
\]
\end{proof}

\subsection{Proof of the partial red shift  Lemma  \ref{lemma:PRS}}

\lab{appendix:Lemma-PRS}
For convenience we repeat the content of the Lemma below.
\begin{lemma}
\lab{lemma:PRS'}
 Let $Y=d(r)e_3=d(r)(-\pr_r+\frac{r^2+a^2}{\De}T+\frac{a}{\De}Z)$ with $d(r)=\chi(\frac{r-r_+}{r_+\de_{red}})\Big(1+\frac{r-r_+}{r_+}\Big)$ and $\chi(s)=1$ for $s\leq 1$ and $\chi(s)=0$ for $s\geq 2$.
Then if $\de_{red}$ is sufficiently small, we have
\beq
|q|^2\EE[Y,0,0]\geq\frac18(r-m)d(r)|e_3\psi|^2+\chi'\frac{1}{r_+\de_{red}}\big||q|\nab\psi\big|^2-\frac{40d(r)}{r^2(r-m)}|\De\pr_r\psi|^2.
\eeq
Also
\beq
\begin{split}
\QQ(Y, N_{\Si})&\gtrsim d(r)(|e_3\psi|^2+|\nab\psi|^2 )  \\
\QQ(Y, N_{\HH})&=\QQ(e_3, \frac12e_4)=\frac12|\nab\psi|^2.\end{split}
\eeq
where $\QQ$ is the energy momentum tensor of $\square_{a,m}$ and, with the notation of
 section \eqref{sect:Basicscalarident}, $\EE[Y,0,0]=\Div\big(\QQ \c Y)- Y(\psi) \square \psi$.
\end{lemma}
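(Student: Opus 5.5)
The plan is to compute the bulk term $|q|^2\EE[Y,0,0]=\frac12|q|^2\QQ\c\,^{(Y)}\pi$ in the null frame, using the fact that $e_3$ is geodesic ($\D_3e_3=0$) in the ingoing PN frame. I would write $\,^{(Y)}\pi=d\,\,^{(e_3)}\pi+(\D d)\otimes e_3+e_3\otimes(\D d)$, with $\D d=d'(r)\,\gradr=\frac12 d'(e_4-\frac{\De}{|q|^2}e_3)$.

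The deformation tensor of $e_3$ is expressed through the Ricci coefficients of the ingoing frame of Kerr, as in \cite{GKS}. Since $\D_3e_3=0$, its nonvanishing components are $\pithree_{34}$, $\pithree_{44}$, $\pithree_{ab}$ and $\pithree_{4a}$. The decisive one is $\pithree_{44}=2\g(\D_4e_3,e_4)=-4\omega$-type, which is proportional to the surface gravity. At $r=r_+$ it gives a term $\frac{r_+-m}{|q|^2}|e_3\psi|^2$ with a positive sign, since $\QQ^{44}\sim\frac14|e_3\psi|^2$. This is the red shift term, and it produces $\frac18(r-m)d|e_3\psi|^2$ once one checks it stays dominant for $r\le r_+(1+2\de_{red})$.

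The remaining contributions are handled as follows.
\begin{itemize}
\item The terms with $\pithree_{ab}$ (null expansion $\mathrm{tr}\chib$ and shear $\widehat{\chib}$, both $O(r^{-1})$) pair with $\QQ^{ab}$. They give $|\nab\psi|^2$ and $e_3\psi\,e_4\psi$ terms with coefficient $O(d/r)$.
\item The terms with $\pithree_{4a}$ give $e_3\psi\,\nab\psi$ terms, which are absorbed by Cauchy--Schwarz into $\ep|e_3\psi|^2$ plus $C|\nab\psi|^2$.
\item The $d'$ terms give $d'\big(|e_4\psi|^2$-type and $\frac{\De}{|q|^2}|e_3\psi|^2$-type and $|\nab\psi|^2\big)$ pieces. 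Where $d'>0$ (the linear growth $1+\frac{r-r_+}{r_+}$) these have a favorable sign or are small. Where $\chi'$ is supported ($\chi'<0$), they yield the $\chi'\frac1{r_+\de_{red}}||q|\nab\psi|^2$ term.
\end{itemize}

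All $e_4\psi$ occurrences are rewritten using $e_4=\frac{\De}{|q|^2}e_3+\frac{2\De}{|q|^2}\pr_r$-type relations. Concretely, one uses $\frac{|q|^2}{r^2+a^2}e_4-\frac{\De}{r^2+a^2}e_3=2\Rhat$, so $e_4\psi=O(\De)e_3\psi+O(1)\De\pr_r\psi$ up to tangential terms. Cross terms are then bounded via $|e_3\psi||\De\pr_r\psi|\le\eta(r-m)|e_3\psi|^2+\eta^{-1}(r-m)^{-1}|\De\pr_r\psi|^2$, which produces the $-\frac{40d}{r^2(r-m)}|\De\pr_r\psi|^2$ term.

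The main obstacle is the $|\nab\psi|^2$ terms of size $O(d/r)$ in the region where $\chi=1$. They do not appear in the stated lower bound, so I would need to show that their coefficient is nonnegative. For this I would use the identity $\QQ^{ab}\pithree_{ab}$ with $\mathrm{tr}\chib<0$, which gives $-\frac12\mathrm{tr}\chib\,\QQ_{34}\ge0$ contributions. If that fails, I would use the $\frac{d'}{r_+}$ growth of $d$ to dominate them for small $\de_{red}$.

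The boundary identities are direct. On the horizon, $N_\HH=\frac12e_4$ and $Y=e_3$ (since $d(r_+)=1$), so $\QQ(e_3,\frac12e_4)=\frac12\QQ_{34}=\frac12|\nab\psi|^2$. On $\Si$, Lemma \ref{lemma:N_Si} expresses $N_\Si$ as a positive combination of $e_3$ and $e_4$ plus a small $e_2$ component. Then $\QQ(e_3,N_\Si)\ges|e_3\psi|^2+|\nab\psi|^2$ follows because $\QQ_{33}=|e_3\psi|^2$ and $\QQ_{34}=|\nab\psi|^2$.
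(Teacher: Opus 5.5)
Your skeleton matches the paper's proof. You split $\piY=d\,\pithree+\D d\otimes e_3+e_3\otimes\D d$, take the redshift from $\pithree_{44}=4\pr_r(\De/|q|^2)$ paired with $\QQ^{44}=\frac14|e_3\psi|^2$, get the $\chi'$ term from $d'$, and trade $e_4\psi$ for $\De\pr_r\psi$ through $2\De\pr_r=|q|^2e_4-\De e_3$.

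However, the step you single out as the main obstacle is misdiagnosed, and neither proposed fix works. In the principal frame of Kerr the shear of $e_3$ vanishes and $\pithree_{ab}=-\frac{2r}{|q|^2}\de_{ab}$. Since $\de^{ab}\QQ_{ab}=|\nab\psi|^2-\LL[\psi]=e_3\psi\,e_4\psi$, this component produces only an indefinite $e_3\psi\,e_4\psi$ cross term with coefficient $O(rd)$ (the $-2rd\,e_3\psi\c e_4\psi$ in the paper's formula), and no $|\nab\psi|^2$ at all. The positive contribution $-\frac12\mathrm{tr}\,\underline{\chi}\,\QQ_{34}$ you attribute to $\QQ^{ab}\pithree_{ab}$ does not exist, so using it would spend positivity you do not have. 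The fallback is not an argument either. On $\{\chi=1\}$ the $d'$ term is $\frac{|q|^2}{2r_+}|\nab\psi|^2$, whose size is independent of $\de_{red}$, so shrinking $\de_{red}$ cannot make it dominate an $O(rd)$ tangential error. Two smaller slips: $\pithree_{34}=0$, and the $d'$ part is exactly $\frac12 d'\big(|q|^2|\nab\psi|^2-\De|e_3\psi|^2\big)$, with no $|e_4\psi|^2$.

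What is really missing is the bookkeeping for the tangential terms you do generate. The piece $\frac{|q|^2}{2r_+}\chi|\nab\psi|^2$ is the only positive $|\nab\psi|^2$ in the bulk. It is therefore the only place the $C|\nab\psi|^2$ from your Cauchy--Schwarz on the $\pithree_{4a}$ term can go, and you never say so.

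Your pairing of that term with $e_3\psi\,e_a\psi$ is the one forced by $\QQ^{4a}=-\frac12\QQ_{3a}$. The paper's display instead writes it as $\frac{ar\sin\th}{|q|}d\,e_4\psi\c e_2\psi$. It then sends it into $|e_4\psi|^2$ (hence $|\De\pr_r\psi|^2$) and $\frac{1}{8r_+}\big||q|e_2\psi\big|^2$, which costs nothing from the redshift.

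With $e_3\psi$ the absorption is a genuine two-sided constraint. The $\ep$ must fit under the redshift coefficient, roughly $(r_+-m)d$, and $C\sim a^2/\ep$ must fit under the fixed coefficient $\frac{|q|^2}{2r_+}$. This amounts to a condition of the form $a^2\les r_+(r_+-m)$ at the equator near $r=r_+$. Small $\de_{red}$ does not provide it; it has to be verified for the given slope $1/r_+$ of $d$. Since $r_+-m=\sqrt{m^2-a^2}\to0$, the condition fails near extremality. There you must steepen $d$: take slope $K/r_+$ with $K$ large, then $\de_{red}$ small depending on $K$.
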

\begin{proof}
A straightforward calculation yields, see \eqref{eq:Gen-identityXwM} and \eqref{definition-EE-gen1}
\[
|q|^2\EE[Y,0,0]=\frac12|q|^2\QQ\c\piY=\frac{1}{2}|q|^2d(r)\QQ\c\pithree+\De d'(r)\QQ_{\a\b}\pr_r^{(\a}e_3^{\b)}.
\]
Using
 \[
\De \pr_r=\frac12(|q|^2e_4-\De e_3)
\]
and (see Lemma 9.2.18 in \cite{GKS})
\begin{align*}
\pithree_{33}&=\pithree_{3a}=\pithree_{34}=0,\quad a=1,2\\
\pithree_{44}&=4\pr_r(\frac{\De}{|q|^2}),\quad \pithree_{4a}=-\frac{2ar\sin\th}{|q|^3}\de_{a2},\quad \pithree_{ab}=-\frac{2r}{|q|^2}\de_{ab},\quad a,b=1,2,
\end{align*}
we further compute
\begin{align*}
|q|^2\EE[Y, 0, 0]&=\frac12\Big(|q|^2d(r)\pr_r(\frac{\De}{|q|^2})-\De d'(r)\Big)|e_3\psi|^2+\frac12|q|^2d'(r)|\nab\psi|^2\\
&-2rd(r)e_3\psi\c e_4\psi+\frac{ar\sin\th}{|q|}d(r)e_4\psi\c e_2\psi.
\end{align*}
 Recall that
 \[
d(r)=\chi(\frac{r-r_+}{r_+\de_{red}})\Big(1+\frac{r-r_+}{r_+}\Big).
\]
Then by choosing $\de_{red}$ sufficiently small we have
\begin{align*}
\frac12\big(|q|^2d(r)\pr_r(\frac{\De}{|q|^2})-\De d'(r)\big)&=d(r)\big((r-m)-\frac{r\De}{|q|^2}\big)-\frac{\De}{2r_+}\chi-\frac12\frac{\De}{r_+\de_{red}}(1+\frac{r-r_+}{r_+})\chi'\\
&\geq d(r)\big((r-m)-\frac{r\De}{|q|^2}\big)-\frac{\De}{2r_+}\frac{d(r)}{1+\frac{r-r_+}{r_+}}\geq \frac12 d(r)(r-m).
\end{align*}
 Therefore, for $|q|^2\EE[Y]=|q|^2\EE[Y,0,0]$ we derive
\beq\lab{eq:partialredshift}\bsplit
|q|^2\EE[Y]&\geq \frac12(r-m)d(r)|e_3\psi|^2+\frac{1}{2r_+}\chi(\frac{r-r_+}{r_+\de_{red}})\big||q|\nab\psi\big|^2+\chi'(\frac{r-r_+}{r_+\de_{red}})\frac{1}{r_+\de_{red}}\big||q|\nab\psi\big|^2\\
&-2rd(r)e_3\psi\c e_4\psi+\frac{ar\sin\th}{|q|}d(r)e_4\psi\c e_2\psi\\
&\geq \frac12(r-m)d(r)|e_3\psi|^2+\frac{1}{2r_+}\chi(\frac{r-r_+}{r_+\de_{red}})\big||q|\nab\psi\big|^2+\chi'(\frac{r-r_+}{r_+\de_{red}})\frac{1}{r_+\de_{red}}\big||q|\nab\psi\big|^2\\
&-d(r)\big(\frac14(r-m)|e_3\psi|^2+\frac{4r^2}{r-m}|e_4\psi|^2\big)-d(r)\big(\frac{2a^2}{r_+}|e_4\psi|^2+\frac{1}{8r_+}||q|e_2\psi|^2\big)\\
&\geq \frac14(r-m)d(r)|e_3\psi|^2+\chi'\frac{1}{r_+\de_{red}}\big||q|\nab\psi\big|^2-\frac{6r^2}{r-m}d(r)|e_4\psi|^2\\
&\geq\frac18(r-m)d(r)|e_3\psi|^2+\chi'\frac{1}{r_+\de_{red}}\big||q|\nab\psi\big|^2-\frac{48\De^2}{r^2(r-m)}d(r)|\pr_r\psi|^2
\end{split}
\eeq
where we use $2\De\pr_r=|q|^2e_4-\De e_3$ in the last step.
\end{proof}

\subsection{Proof  of Lemma \ref{Lemma:calculuslemmadeg}}

\lab{appendix:Lemma-calculuslemmadeg}
\begin{proof}
We recall (see \eqref{eq:square-taucoords} in Lemma \ref{lemma:square-taucoords}), that, by writing $\lapSS\psi=\frac{1}{ \sin \th} \pr_\th \big( \sin \th \pr_\th \psi\big)
+\frac{1}{\sin^2\th} \pr_{\phi}^2\psi $
\beq\lab{eq:square-N}
\bsplit
&\frac{1}{|q|^2} \Big(\pr_{\tilde r}\big(\De \pr_{\tilde r}  \psi\big)+2(r^2+a^2)\That\pr_{\tilde r}+\lapSS\psi \Big) \\
&\quad=O(1)\big| (T, r^{-2}\De\pr_{\tilde r}, r^{-1}Z)(\T, r^{-1} \Z)   \psi \big|
+O( r^{-1})\big| (\T, r^{-1} \Z)\psi \big|+N.
\end{split}
\eeq

{\bf Step 1.} We derive the control of the degenerate energy $\Edeg$ of at most two degenerate derivatives of $\psi$. By writing $d\gamma_{\mathbb{S}^2}=\sin\th d\th d\phit$ and denoting $\nabSS$ the covariant derivative on the standard sphere $\mathbb{S}^2$ in the direction of $\eSS_1=\pr_\th, \eSS_2=\frac{1}{\sin\th}\pr_{\phit}$, we derive
\beq
\lab{eq:lapSSsquaredeg}
\bsplit
\int_{\Sigma(\tau)}\frac{f(r)}{|q|^2}|\lapSS\psi|^2&=\int_{r_+
}^\infty\int_{\mathbb{S}^2}f(r)|\lapSS\psi|^2\,d\gamma_{\mathbb{S}^2}d\tilde r\\
&=\int_{r_+
}^\infty\int_{\mathbb{S}^2}f(r)\Big(|\nabSS\nabSS\psi|^2+|\nabSS\psi|^2\Big)\,d\gamma_{\mathbb{S}^2}d\tilde r\\
&= \int_{\Sigma(\tau)}\frac{f(r)}{|q|^2}\Big(|\nabSS\nabSS\psi|^2+|\nabSS\psi|^2\Big).
\end{split}
\eeq
Using the definition of $\OO$ and the calculation in \eqref{eq:lapSSsquaredeg}, we deduce
\begin{align*}
&\Edeg[\nabSS\nabSS\psi](\tau)+\Edeg[\nabSS\psi](\tau)\\
&\quad\quad\les \Edeg[\lapSS\psi](\tau)
\les \Edeg[\OO\psi](\tau)+\Edeg[(T,Z)^{\leq 2}\psi](\tau).
\end{align*}
Therefore
\beq
\lab{eq:secondorderenergynabdeg}
\bsplit
\Edeg[(r\nab)^{\leq 2}\psi](\tau)\les \Edeg[\OO\psi](\tau)+\Edeg[(T,Z)^{\leq 2}\psi](\tau).
\end{split}
\eeq
Using \eqref{eq:square-N} and \eqref{eq:secondorderenergynabdeg}, we infer
\begin{align*}
\int_{\Sigma(\tau)}r^{-2}|\De||\pr_{\tilde r}(r^{-2}\De\pr_{\tilde r})\psi|^2&\les \Edeg[(T, \nab)^{\leq 1}\psi](\tau)+\int_{\Sigma(\tau)}r^{-2}|\De||N|^2\\
&\les \Edeg[\OO \psi](\tau)+\Edeg[(T,Z)^{\leq 2}\psi](\tau)+\int_{\Sigma(\tau)}r^{-2}|\De||N|^2
\end{align*}
and thus
\beq \lab{eq:firstorderenergydeg}
\bsplit
&\Edeg[(e_4, r^{-2}\De e_3, r\nab)^{\leq 1}\psi](\tau)\\
&\quad\quad\les \Edeg[\OO\psi](\tau)+\Edeg[(T,Z)^{\leq 2}\psi](\tau)+\int_{\Sigma(\tau)}r^{-2}|\De||N|^2.
\end{split}
\eeq
Using \eqref{eq:square-N} (with $\psi$ replaced by $T\psi,Z\psi$), we derive
\beq\lab{eq:secondorderenergyrrtaudeg}
\bsplit
&\int_{\Sigma(\tau)}r^{-2}|\De||\pr_{\tilde r}(r^{-2}\De\pr_{\tilde r})(T,Z)\psi|^2\\
&\quad\les \Edeg[\OO\psi](\tau)+\Edeg[(T, Z)^{\leq 2}](\tau)+\int_{\Sigma(\tau)}r^{-2}|\De||(T,Z)N|^2.
\end{split}
\eeq
Using \eqref{eq:square-N}, \eqref{eq:firstorderenergydeg} and \eqref{eq:secondorderenergyrrtaudeg}, we deduce
\beq\lab{eq:secondorderenergyrrrdeg}
\bsplit
&\int_{\Sigma(\tau)}r^{-2}|\De||\pr_{\tilde r}(r^{-2}\De\pr_{\tilde r})^2\psi|^2\\
&\quad\les \Edeg[\OO\psi](\tau)+\Edeg[(T,Z)^{\leq 2}\psi](\tau)+\Edeg[r^{-2}\De\pr_{\tilde r}\psi](\tau)\\
&\quad+\int_{\Sigma(\tau)}r^{-2}|\De||\pr_{\tilde r}(r^{-2}\De\pr_{\tilde r})(T,Z)\psi|^2+\int_{\Sigma(\tau)}r^{-2}|\De||(r^{-2}\De\pr_{\tilde r})^{\leq 1}N|^2\\
&\quad\les \Edeg[\OO\psi](\tau)+\Edeg[(T,Z)^{\leq 2}\psi](\tau)+\int_{\Sigma(\tau)}r^{-2}|\De||(T, r^{-2}\De\pr_{\tilde r},Z)^{\leq 1}N|^2.
\end{split}
\eeq
Applying integration by parts to $\int_{\Sigma(\tau)
}|\eSS_i(r^{-2}\De\pr_{\tilde r})T\psi|^2$ and $\int_{\Sigma(\tau)
}r^{-2}\De|\eSS_i\pr_{\tilde r}(r^{-2}\De\pr_{\tilde r})\psi|^2$, and using \eqref{eq:secondorderenergyrrtaudeg}, \eqref{eq:secondorderenergyrrrdeg} and \eqref{eq:firstorderenergydeg}, we infer
\begin{align*}
\int_{\Sigma(\tau)
}|(r\nab)(r^{-2}\De\pr_{\tilde r})T\psi|^2&\les \Edeg[TT\psi](\tau)+\int_{\Sigma_\tau}|\lapSS T\psi|^2+|(r^{-2}\De\pr_{\tilde r})^{ 2}T\psi|^2\\
&\les \Edeg[\OO\psi](\tau)+\Edeg[(T, Z)^{\leq 2}\psi](\tau)+\int_{\Sigma(\tau)}r^{-2}|\De||TN|^2\\
\int_{\Sigma(\tau)}r^{-2}|\De||(r\nab)\pr_{\tilde r}(r^{-2}\De\pr_{\tilde r})\psi|^2&\les\int_{\Sigma(\tau)}r^{-2}|\De|\Big(|\pr_{\tilde r}(r^{-2}\De\pr_{\tilde r})T\psi|^2+\pr_{\tilde r}(r^{-2}\De\pr_{\tilde r})^2\psi|^2+|\pr_{\tilde r}\lapSS \psi|^2\Big)\\
&\les \Edeg[\OO\psi](\tau)+\Edeg[(T, Z)^{\leq 2}\psi](\tau)\\
&+\int_{\Sigma(\tau)}r^{-2}|\De||(T, r^{-2}\De\pr_{\tilde r},Z)^{\leq 1}N|^2
\end{align*}
and thus
\begin{align*}
\int_{\Sigma(\tau)
}|(r\nab)TT\psi|^2
&\les \Edeg[TT\psi](\tau)+\int_{\Sigma(\tau)\cap\{r\geq 10m\}}|(r\nab)TT\psi|^2\\
&\les \Edeg[TT\psi](\tau)+\int_{\Sigma(\tau)\cap\{r\geq 10m\}}|(r\nab)\pr^2_{\tilde r}\psi|^2+|(r\nab)(\pr_\tau, \nab)^{\leq 1}\nab\psi|^2\\
&+\int_{\Sigma(\tau)\cap\{r\geq 10m\}}|r^{-1}(r\nab) \pr_{\tilde r}\psi|^2+\int_{\Sigma(\tau)\cap\{r\geq 10m\}}|r\nab N|^2\\
&\les \Edeg[\OO\psi](\tau)+\Edeg[(T, Z)^{\leq 2}\psi](\tau)+\int_{\Sigma(\tau)}r^{-2}|\De||(T, r^{-2}\De\pr_{\tilde r}, r\nab)^{\leq 1}N|^2.
\end{align*}
Putting \eqref{eq:secondorderenergynabdeg}, \eqref{eq:firstorderenergydeg}, \eqref{eq:secondorderenergyrrtaudeg}, \eqref{eq:secondorderenergyrrrdeg} and the above estimates together yields
\beq\lab{eq:secondorderenergydeg}
\bsplit
&\Edeg[(r^{-2}\De e_3, e_4, r\nab)^{\leq 2}\psi](\tau)\sim\Edeg[(\pr_\tau, r^{-2}\De\pr_{\tilde r}, r\nab)^{\leq 2}\psi](\tau)\\
&\quad\quad\les
\Edeg[(r\nab)^{\leq 2}\psi](\tau)+\Edeg[TT\psi](\tau)+\int_{\Sigma(\tau)
}\Big(|r\nab TT\psi|^2+r^{-2}|\De||\pr_{\tilde r}(r\nab) \T\psi|^2\Big)\\
&\quad\quad+\int_{\Si(\tau)}r^{-2}|\De||\pr_{\tilde r}(T, r\nab)(r^{-2}\De\pr_{\tilde r})\psi|^2 +\Edeg[(r^{-2}\De\pr_{\tilde r})^2\psi](\tau)\\
&\quad\quad+\Edeg[(\pr_\tau, r^{-2}\De\pr_{\tilde r}, r\nab)^{\leq 1}\psi](\tau)\\
&\quad\quad\les \Edeg[\OO\psi](\tau)+\Edeg[(T, Z)^{\leq 2}\psi](\tau)+\int_{\Sigma(\tau)}r^{-2}|\De||( r^{-2}\De e_3, e_4, r\nab)^{\leq 1}N|^2.
\end{split}
\eeq
To control the inhomogeneous term $N$, we integrate by parts in $\tau$ and thus finish the proof of the energy part in \eqref{eq:seconddegenergy}.

{\bf Step 2.} We derive the control of the degenerate bulk term $\Bdeg$ of at most two degenerate derivatives of $\psi$. Proceeding as in the proof of estimate \eqref{eq:secondorderenergynabdeg} in step 1, we deduce
\[
\Bdeg[(\eSS_1, \eSS_2)^{\leq 2}\psi](\tau_1, \tau_2)\leq \Bdeg[\OO\psi](\tau_1,\tau_2)+\Bdeg[(T,Z)^{\leq 2}\psi](\tau_1,\tau_2).
\]
Applying integration by parts to $\int_{\DD(\tau_1,\tau_2)\cap I}r^2|q|^{-2}h(r)\pr_\tau^2f\c\lapSS f$ (where $f=(\pr_\tau, \Rhat, \eSS_i)^{\leq 1}\psi$) yields
\[
\Bdeg[(\eSS_1, \eSS_2)\pr_\tau\psi]\les \BEdeg[\OO\psi](\tau_1,\tau_2)+\BEdeg[(T,Z)^{\leq 2}\psi](\tau_1,\tau_2).
\]
Therefore
\beq\lab{eq:secondorderbulktaunabdeg}
\Bdeg[(\pr_\tau, r\nab)^{\leq 2}\psi]\les \BEdeg[\OO\psi](\tau_1,\tau_2)+\BEdeg[(T,Z)^{\leq 2}\psi](\tau_1,\tau_2).
\eeq
Using \eqref{eq:square-taucoords2} and \eqref{eq:secondorderbulktaunabdeg} (as in the proof of estimate \eqref{eq:firstorderenergydeg} in step 1), we infer
\begin{align*}
\int_{\DD(\tau_1,\tau_2)}r^{-2}|\Rhat^2\psi|^2&\les \Bdeg[(\pr_\tau, r\nab)^{\leq 2}\psi](\tau_1,\tau_2)+\int_{\DD(\tau_1,\tau_2)}r^{-6}\De^2|N|^2\\
&\les\BEdeg[\OO\psi](\tau_1,\tau_2)+\BEdeg[(T,Z)^{\leq 2}\psi](\tau_1,\tau_2)+\int_{\DD(\tau_1,\tau_2)}r^{-6}\De^2|N|^2
\end{align*}
and thus
\beq \lab{eq:firstorderbulkdeg}
\bsplit
&\Bdeg[(\pr_\tau, \Rhat, r\nab)^{\leq 1}\psi](\tau_1, \tau_2)\\
&\quad\quad\les\BEdeg[\OO\psi](\tau_1,\tau_2)+\BEdeg[(T,Z)^{\leq 2}\psi](\tau_1,\tau_2)+\int_{\DD(\tau_1,\tau_2)}r^{-6}\De^2|N|^2.
\end{split}
\eeq
Using \eqref{eq:square-taucoords2}, \eqref{eq:secondorderbulktaunabdeg} and \eqref{eq:firstorderbulkdeg} (as in the proof of estimate \eqref{eq:secondorderenergyrrrdeg} in step 1), we derive
\beq\lab{eq:secondorderbulkrrrdeg}
\bsplit
&\int_{\DD(\tau_1,\tau_2)}r^{-2}|\Rhat^3\psi|^2\\
&\quad\les \Bdeg[(\pr_\tau, r\nab)^{\leq 2}\psi](\tau_1,\tau_2)+\Bdeg[\Rhat\psi](\tau_1,\tau_2)+\int_{\DD(\tau_1,\tau_2)}r^{-6}\De^2|\Rhat^{\leq 1}N|^2\\
&\quad\les\BEdeg[\OO\psi](\tau_1,\tau_2)+\BEdeg[(T,Z)^{\leq 2}\psi](\tau_1,\tau_2)+\int_{\DD(\tau_1,\tau_2)}r^{-6}\De^2|\Rhat^{\leq 1}N|^2.
\end{split}
\eeq
Applying integration by parts to $\int_{\DD(\tau_1, \tau_2)
}|q|^{-2}|(\pr_\tau, \eSS_i)\Rhat^2\psi|^2$, and using \eqref{eq:secondorderbulktaunabdeg} and \eqref{eq:secondorderbulkrrrdeg}, we obtain
\beq\lab{eq:secondorderbulkrmixeddeg}
\bsplit
&\int_{\DD(\tau_1, \tau_2)
}r^{-2}|(\pr_\tau, r\nab)\Rhat^2\psi|^2\\
&\quad\les \Bdeg[(\pr_\tau, r\nab)^{\leq 2}\psi](\tau_1,\tau_2)+\int_{\DD(\tau_1, \tau_2)}r^{-2}|\Rhat^{3}\psi|^2\\
&\quad+\Edeg[(r^{-2}\De e_3,e_4,r\nab)^{\leq 2}\psi](\tau_1, \tau_2)\\
&\quad\les \Bdeg[\OO\psi](\tau_1,\tau_2)+\Bdeg[(T,Z)^{\leq 2}\psi](\tau_1,\tau_2)\\
&\quad+\Edeg[(r^{-2}\De e_3,e_4,r\nab)^{\leq 2}\psi](\tau_1, \tau_2)+\int_{\DD(\tau_1,\tau_2)}r^{-6}\De^2|\Rhat^{\leq 1}N|^2.
\end{split}
\eeq
Putting \eqref{eq:secondorderbulktaunabdeg}, \eqref{eq:firstorderbulkdeg}, \eqref{eq:secondorderbulkrrrdeg} and \eqref{eq:secondorderbulkrmixeddeg} together yields
\beq\lab{eq:secondorderbulkdeg}
\bsplit
&\Bdeg[(r^{-2}\De e_3, e_4, r\nab)^{\leq 2}\psi](\tau_1,\tau_2)\sim\Bdeg[(T, \Rhat, r\nab)^{\leq 2}\psi](\tau_1,\tau_2)\\
&\quad\quad\les
\Bdeg[(\pr_\tau, r\nab)^{\leq 2}\psi](\tau_1,\tau_2)+\int_{\DD(\tau_1,\tau_2)}r^{-2}|(\pr_\tau, \Rhat, r\nab)\Rhat^2\psi|\\
&\quad\quad+\Bdeg[(\pr_\tau, \Rhat, r\nab)^{\leq 1}\psi](\tau_1,\tau_2)\\
&\quad\quad\les \Bdeg[\OO\psi](\tau_1,\tau_2)+\Bdeg[(T, Z)^{\leq 2}\psi](\tau_1,\tau_2)\\
&\quad\quad+\Edeg[(r^{-2}\De e_3,e_4,r\nab)^{\leq 2}\psi](\tau_1, \tau_2)+\int_{\DD(\tau_1,\tau_2)}r^{-6}\De^2|\Rhat^{\leq 1}N|^2.
\end{split}
\eeq
This finishes the proof of the bulk part in \eqref{eq:seconddegbulk}.
\end{proof}


\subsection{Proof of Lemma  \ref{Lemma:calculuslemma}}
\lab{appendix:Lemma-calculuslemma}

\begin{proof} We proceed in steps as follows.

{\bf Step 1.} We derive the control of the bulk term $B$ of at most two derivatives of $\psi$. Proceeding as in the proof of estimate \eqref{eq:secondorderbulktaunabdeg}, we deduce
\beq\lab{eq:secondorderbulktaunab}
B[(\pr_\tau, r\nab)^{\leq 2}\psi](\tau_1,\tau_2)\les \BE[\OO\psi](\tau_1,\tau_2)+\BE[(T,Z)^{\leq 2}\psi](\tau_1,\tau_2).
\eeq
We apply partial
red shift estimate to $e_3\psi$ to obtain
\beq\lab{eq:redshift1}
\bsplit
&\int_{\DD_{< r_+(1+\de_{red})}(\tau_1, \tau_2)} |e_3 ^2\psi|^2+\sup_{\tau\in[\tau_1,\tau_2]}\int_{\Si (\tau){< r_+(1+\de_{red})} }|(e_3,\nab)e_3\psi|^2+\int_{\HH(\tau_1,\tau_2)}|\nab e_3\psi|^2\\
&\quad\leq B[(\pr_\tau,r\nab)^{\leq1}\psi](\tau_1, \tau_2)+E_{< r_+(1+2\de_{red})}^1[\psi](\tau_1)+\int_{\DD_{< r_+(1+2\de_{red})}(\tau_1, \tau_2)}|\dk^{\leq1} N|^2.
\end{split}
\eeq
Therefore
\beq\lab{eq:firstorderbulk}
\bsplit
&B[(e_3, e_4, r\nab)^{\leq 1}\psi](\tau_1,\tau_2)\\
&\quad\les B[(\pr_\tau,r\nab)^{\leq1}\psi](\tau_1, \tau_2)+\Bdeg[(r^{-2}\De e_3, e_4,r\nab)^{\leq1}\psi](\tau_1,\tau_2)\\
&\quad+\int_{\DD_{< r_+(1+\de_{red})}(\tau_1, \tau_2)} |e_3 ^2\psi|^2\\
&\quad \les \BE[\OO\psi](\tau_1,\tau_2)+\BE[(T,Z)^{\leq 2}\psi](\tau_1,\tau_2)+\Bdeg[(r^{-2}\De e_3, e_4,r\nab)^{\leq1}\psi](\tau_1,\tau_2)\\
&\quad+E_{< r_+(1+2\de_{red})}^1(\tau_1)+\int_{\DD_{< r_+(1+2\de_{red})}(\tau_1, \tau_2)}|\dk^{\leq1} N|^2.
\end{split}
\eeq
Apply partial
red shift estimate to $e_3(T,Z)\psi, e_3\nab\psi, e_3e_3\psi$ gives
\beq\lab{eq:redshift2}
\bsplit
&\int_{\DD_{< r_+(1+\de_{red})}(\tau_1, \tau_2)} |e_3 ^2(T,\nab,e_3)\psi|^2+\sup_{\tau\in[\tau_1,\tau_2]}\int_{\Si (\tau){< r_+(1+\de_{red})} }|(e_3,\nab)e_3(T,\nab,e_3)\psi|^2\\
&\quad+\int_{\HH(\tau_1,\tau_2)}|\nab e_3(T,\nab,e_3)\psi|^2\\
&\quad\leq B[(\pr_\tau,r\nab)^{\leq2}\psi](\tau_1, \tau_2)+E_{< r_+(1+2\de_{red})}^2(\tau_1)+\int_{\DD_{< r_+(1+2\de_{red})}(\tau_1, \tau_2)}|\dk^{\leq2} N|^2.
\end{split}
\eeq
Combining \eqref{eq:secondorderbulktaunab}, \eqref{eq:firstorderbulk} and \eqref{eq:redshift2}, we infer
\beq\lab{eq:secondorderbulk}
\bsplit
&B[(e_3, e_4, r\nab)^{\leq 2}\psi](\tau_1,\tau_2) \\
&\quad\les B[(e_3,e_4, r\nab)^{\leq1}\psi](\tau_1,\tau_2)+B[(\pr_\tau, r\nab)^{\leq 2}\psi](\tau)\\
&\quad+\int_{\DD_{< r_+(1+\de_{red})}(\tau_1, \tau_2)} |e_3 ^2(T,\nab,e_3)\psi|^2+\Bdeg[(r^{-2}\De e_3, e_4,r\nab)^{\leq2}\psi](\tau_1,\tau_2)\\
&\quad\les B[\OO\psi](\tau_1,\tau_2)+B[(T,Z)^{\leq 2}\psi](\tau_1,\tau_2)+\Bdeg[(r^{-2}\De e_3, e_4,r\nab)^{\leq1}\psi](\tau_1,\tau_2)\\
&\quad+E_{< r_+(1+2\de_{red})}^2(\tau_1)+\int_{\DD_{< r_+(1+2\de_{red})}(\tau_1, \tau_2)}|\dk^{\leq2} N|^2.
\end{split}
\eeq
Combining \eqref{eq:secondorderbulk} with the degenerate bulk estimate \eqref{eq:seconddegbulk}, we finish the proof of the bulk part in the statement.

{\bf Step 2.}
 Proceeding as in the proof of \eqref{eq:secondorderenergynabdeg}, 
we derive
\beq
\lab{eq:secondorderenergynab}
\bsplit
E[(r\nab)^{\leq 2}\psi](\tau)\les E[\OO\psi](\tau)+E[(T,Z)^{\leq 2}\psi](\tau).
\end{split}
\eeq
Combining \eqref{eq:secondorderenergynab} with partial redshift estimate \eqref{eq:redshift1}, we derive
\beq\lab{eq:firstorderenergy}
\bsplit
&E[(e_3, e_4, r\nab)^{\leq 1}\psi](\tau)\\
&\quad\les E[(\pr_\tau,r\nab)^{\leq1}\psi](\tau)+\Edeg[(r^{-2}\De e_3, e_4, r\nab)^{\leq 1}\psi](\tau)+ \int_{\Si (\tau){< r_+(1+\de_{red})} }|e_3^2\psi|^2\\
&\quad \les E[\OO\psi](\tau_1,\tau_2)+E[(T,Z)^{\leq 2}\psi](\tau_1,\tau_2)+\Edeg[(r^{-2}\De e_3, e_4, r\nab)^{\leq 1}\psi](\tau)\\
&\quad +B[(\pr_\tau,r\nab)^{\leq1}\psi](\tau_1, \tau_2)+E_{< r_+(1+2\de_{red})}^1(\tau_1)+\int_{\DD_{< r_+(1+2\de_{red})}(\tau_1, \tau_2)}|\dk^{\leq1} N|^2.
\end{split}
\eeq
Combining \eqref{eq:secondorderenergynab}, \eqref{eq:firstorderenergy} and \eqref{eq:redshift2}, we infer
\beq\lab{eq:secondorderenergy}
\bsplit
&E[(e_3, e_4, r\nab)^{\leq 2}\psi](\tau) \\
&\quad\les E[(e_3, e_4, r\nab)^{\leq1}\psi](\tau)+E[(r\nab)^{\leq 2}\psi](\tau)+E[TT\psi](\tau)\\
&\quad+\int_{\Si (\tau){< r_+(1+\de_{red})} }|(e_3,\nab)e_3(T,\nab,e_3)\psi|^2+\Edeg[(r^{-2}\De e_3, e_4, r\nab)^{\leq 2}\psi](\tau)\\
&\quad \les E[\OO\psi](\tau)+E[(T, Z)^{\leq 2}\psi](\tau)+\Edeg[(r^{-2}\De e_3, e_4, r\nab)^{\leq 1}\psi](\tau_1,\tau_2)\\
&\quad+B[(\pr_\tau,r\nab)^{\leq2}\psi](\tau_1, \tau_2)+E_{< r_+(1+2\de_{red})}^2[\psi](\tau_1)+\int_{\DD_{< r_+(1+2\de_{red})}(\tau_1, \tau_2)}|\dk^{\leq2} N|^2\end{split}
\eeq
Combining \eqref{eq:secondorderenergy}, the degenerate energy estimate \eqref{eq:seconddegenergy} and the estimate for the bulk term in step 1, we finish the proof of the energy part in the statement.

{\bf Step 3.} Finally, we derive the control of the flux $F$ of at most two derivatives of $\psi$. We proceed as in the proof of \eqref{eq:secondorderenergynabdeg} and obtain
\beq
F[(r\nab)^{\leq 2}\psi](\tau_1,\tau_2)\les F[\OO\psi](\tau_1,\tau_2)+F[(T,Z)^{\leq2}](\tau_1, \tau_2).
\eeq
and thus
\beq\lab{eq:secondorderfluxnabT}
F[(r\nab, T)^{\leq 2}\psi](\tau_1,\tau_2)\les F[\OO\psi](\tau_1,\tau_2)+F[(T,Z)^{\leq2}](\tau_1, \tau_2).
\eeq
Using \eqref{eq:square-N}, we derive 
\beq\lab{eq:firstorderfluxe3}
\bsplit
\int_{\HH(\tau_1,\tau_2)}|e_4e_3\psi|^2&\les F[\OO\psi](\tau_1,\tau_2)+F[(T,Z)^{\leq2}](\tau_1, \tau_2)+\int_{\HH(\tau_1,\tau_2)}|e_3\psi|^2\\
&\les F[\OO\psi](\tau_1,\tau_2)+F[(T,Z)^{\leq2}](\tau_1, \tau_2)+B[(e_3,e_4,r\nab)^{\leq1}](\tau_1,\tau_2).
\end{split}
\eeq
and
\beq\lab{eq:firstorderfluxe3e3}
\bsplit
&\int_{\HH(\tau_1,\tau_2)}|e_4e_3(e_3,T,Z)\psi|^2\\
&\quad\les F[\OO\psi](\tau_1,\tau_2)+F[(T,Z)^{\leq2}](\tau_1, \tau_2)+\int_{\HH(\tau_1,\tau_2)}|\dk^2\psi|^2\\
&\quad\les F[\OO\psi](\tau_1,\tau_2)+F[(T,Z)^{\leq2}](\tau_1, \tau_2)+B[(e_3,e_4,r\nab)^{\leq2}](\tau_1,\tau_2).
\end{split}
\eeq
Applying partial red shift estimates \eqref{eq:redshift1} and \eqref{eq:redshift2}, we obtain
\beq
\lab{eq:redshiftforflux}
\bsplit
    &\int_{\HH(\tau_1,\tau_2)}|\nab e_3(T,\nab,e_3)^{\leq1}\psi|^2\\
        &\quad\les B[(\pr_\tau,r\nab)^{\leq2}\psi](\tau_1, \tau_2)+E_{< r_+(1+2\de_{red})}^2[\psi](\tau_1)+\int_{\DD_{< r_+(1+2\de_{red})}(\tau_1, \tau_2)}|\dk^{\leq2} N|^2.
    \end{split}
\eeq
By combining \eqref{eq:secondorderfluxnabT}, \eqref{eq:firstorderfluxe3}, \eqref{eq:firstorderfluxe3e3} and \eqref{eq:redshiftforflux}
we derive
\beq\lab{eq:flux2order}
\bsplit
&F[(e_3,e_4,r\nab)^{\leq 2}\psi](\tau_1,\tau_2)\\
&\quad\les F[(r\nab,T)^{\leq 2}\psi](\tau_1,\tau_2)+\int_{\HH(\tau_1,\tau_2)}|e_4e_3(e_3,T,Z)^{\leq1}\psi|^2+\int_{\HH(\tau_1,\tau_2)}|\nab e_3(T,\nab,e_3)^{\leq1}\psi|^2\\
&\quad\les F[\OO\psi](\tau_1,\tau_2)+F[(T,Z)^{\le 2}\psi](\tau_1,\tau_2)+B[(e_3,e_4,r\nab)^{\leq2}\psi](\tau_1,\tau_2)\\
&\quad+E_{< r_+(1+2\de_{red})}^2[\psi](\tau_1)+\int_{\DD_{< r_+(1+2\de_{red})}(\tau_1, \tau_2)}|\dk^{\leq2} N|^2.
\end{split}
\eeq
Combining \eqref{eq:flux2order} with the estimate for the bulk part in step 1 finishes the proof of the flux part in the statement.
\end{proof}

\subsection{Proof of Lemma \ref{lem:higherordercalculuslemma}}
\lab{appendix:higherordercalculuslemma}
\begin{proof}
We prove below the case $s=1$ and note that the cases $s>1$ can be handled in a completely analogous manner. We proceed in steps as follows.

{\bf Step 1.} We derive the control of the bulk term $B$. Using \eqref{eq:square-N}, the calculation in \eqref{eq:lapSSsquaredeg} and applying integration by parts, we derive
\beq\lab{eq:rrnabnab}
\bsplit
&\int_{\DD(\tau_1,\tau_2)}r^{-2}|\pr_\rt(r^{-2}\De e_3,\nab)^{\leq2}\psi|^2+r^{-4}|(r^{-2}\De e_3,\nab)^{\leq2}\psi|^2\\
&\quad+\int_{\DD_{\geq 10m}(\tau_1,\tau_2)}r^{-2}|T(r^{-2}\De e_3,\nab)^{\leq2}\psi|^2+r^{-1}|\nab(r^{-2}\De e_3,\nab)^{\leq2}\psi|^2\\
&\quad\les B[(e_3,e_4, \nab)^{\leq1}(T,Z)^{\leq1}\psi]+\NN[(e_3, e_4, \nab)^{\leq1}N](\tau_1,\tau_2)
\end{split}
\eeq
and
\beq
\bsplit
&\int_{\DD_{\ntrap}(\tau_1,\tau_2)\cap\{r\leq 10m\}}|T\nab^{\leq2}\psi|^2+|\nab\nab^{\leq2}\psi|^2\\
&\quad\les  B[(e_3,e_4, \nab)^{\leq1}(T,Z)^{\leq1}\psi]+\NN[(e_3, e_4, \nab)^{\leq1}N](\tau_1,\tau_2).
\end{split}
\eeq
Therefore
\beq\lab{eq:degbulkstos+1}
B[(r^{-2}\De e_3,e_4, \nab)^{\leq2}\psi]\les  B[(e_3,e_4, \nab)^{\leq1}(T,Z)^{\leq1}\psi]+\NN[(e_3, e_4, \nab)^{\leq1}N](\tau_1,\tau_2).
\eeq
Combining \eqref{eq:degbulkstos+1} with \eqref{eq:redshift2}, we obtain
\beq
\bsplit
&B[(e_3,e_4,\nab)^{\leq 2}\psi](\tau_1,\tau_2)\\
&\quad\les B[(r^{-2}\De e_3,e_4, \nab)^{\leq2}\psi]+\int_{\DD_{< r_+(1+\de_{red})}(\tau_1, \tau_2)} |e_3 ^2(T,\nab,e_3)\psi|^2\\
&\quad\les B[(e_3,e_4, \nab)^{\leq1}(T,Z)^{\leq1}\psi]+E[(e_3, e_4,\nab)^{\leq2}\psi](\tau_1)+\NN[(e_3, e_4, \nab)^{\leq2}N](\tau_1,\tau_2)
.
\end{split}
\eeq
This finishes the proof of the bulk part in the statement.

{\bf Step 2.} Proceeding as in the proof of \eqref{eq:rrnabnab}, we derive
\beq\lab{eq:degenergystos+1}
\bsplit
&E[(r^{-2}\De e_3,e_4, \nab)^{\leq2}\psi](\tau)\\
&\quad\les E[(e_3,e_4, \nab)^{\leq1}(T,Z)^{\leq1}\psi](\tau)+\int_{\Si(\tau)}|(e_3, e_4, \nab)^{\leq1}N|^2\\
&\quad\les E[(e_3,e_4, \nab)^{\leq1}(T,Z)^{\leq1}\psi](\tau)+\NN[(e_3,e_4,\nab)^{\leq2}N](\tau_1,\tau_2).
\end{split}
\eeq
Combining \eqref{eq:degenergystos+1}, \eqref{eq:redshift2} with \eqref{eq:degbulkstos+1}, we obtain
\beq
\bsplit
&E[(e_3,e_4,\nab)^{\leq 2}\psi](\tau_1,\tau_2)\\
&\quad\les E[(r^{-2}\De e_3,e_4, \nab)^{\leq2}\psi]+\sup_{\tau\in[\tau_1,\tau_2]}\int_{\Si (\tau){< r_+(1+\de_{red})} }|(e_3,\nab)e_3(T,\nab,e_3)\psi|^2\\
&\quad\les \BE[(e_3,e_4, \nab)^{\leq1}(T,Z)^{\leq1}\psi]+E[(e_3,e_4,\nab)^{\leq2}\psi](\tau_1)+\NN[(e_3,e_4,\nab)^{\leq2}N](\tau_1,\tau_2).
\end{split}
\eeq
This finishes the proof of the energy part in the statement.

{\bf Step 3.} Using \eqref{eq:firstorderfluxe3e3}, we obtain
\beq\lab{eq:degfluxstos+1}
\bsplit
\int_{\HH(\tau_1,\tau_2)}|e_4e_3e_3\psi|^2&\les \int_{\HH(\tau_1,\tau_2)}|e_3\nab^2\psi|^2+F[(e_3, e_4,\nab)^{\leq 1}(T,Z)^{\leq1}\psi](\tau_1,\tau_2)\\
&+B[(e_3,e_4,\nab)^{\leq 2}\psi](\tau_1,\tau_2).
\end{split}
\eeq
Combining \eqref{eq:degfluxstos+1}, \eqref{eq:redshift2} and the control of the bulk term in step 1, we conclude
\beq
\bsplit
&F[(e_3,e_4,\nab)^{\leq 2}\psi](\tau_1,\tau_2)\\
&\quad\les \int_{\HH(\tau_1,\tau_2)}|e_4e_3e_3\psi|^2+\int_{\HH(\tau_1,\tau_2)}|\nab e_3(T,\nab,e_3)\psi|^2\\
&\quad+F[(e_3, e_4,\nab)^{\leq 1}(T,Z)^{\leq1}\psi](\tau_1,\tau_2)\\
&\quad\les \BEF[(e_3,e_4, \nab)^{\leq1}(T,Z)^{\leq1}\psi]+E[(e_3,e_4,\nab)^{\leq2}\psi](\tau_1)+\NN[(e_3,e_4,\nab)^{\leq2}N](\tau_1,\tau_2).
\end{split}
\eeq
This finishes the proof of the flux part in the statement.
\end{proof}
\subsection{Proof of Lemma   \ref{lemma:Energy1}}
\lab{appendix:identities-QQ}
We write a more complete version of Lemma \ref{lemma:Energy1}
\begin{lemma}
    \lab{lemma:Energy1-App}
    The following statements hold true.
    \begin{itemize}
        \item Along the horizon $r=r_{+}$, the normal $N_{\HH}=-\Rt=\frac 1 2 e_4= \frac{r^2+a^2}{|q|^2}T_+$ and
        \beq
        \lab{eq:Festimates-App}
        \bsplit
        \QQ(T, N_\HH)&=   \frac{r^2+a^2}{|q|^2}T\psi \c  T_+\psi.
        \\
        \QQ(T_+, N_\HH)&=  \frac{r^2+a^2}{|q|^2}|T_+\psi|^2.
    \end{split}
    \eeq

    \item Along $\Si(\tau)$, for $r\ge r_+$ where $N_\Si=c_{\Si} \Nt$ (with $\Nt=- \gradtau $),
    we have, for some small $ c_0>0 $ and a large $C>0$, relative to the standard frame.
    \beq
    \lab{eq:EFestimates0-App}
\QQ(\Tt,  N_\Si)\ge  c_0\big(  | e_4  \psi|^2 +\frac{|\De|}{r^2}|e_3\psi|^2 +|\nab\psi|^2 \big).
\eeq
and
\beq
\lab{eq:EFestimates1-App}
\QQ(\T, N_\Si)\ge c_0\Big( | e_4\psi|^2+\frac{|\De|}{r^2} |e_3 \psi|^2 + |\nab\psi|^2\Big) - C\frac{a^2m^2 }{r^6 }  |\Z\psi|^2
\eeq

\end{itemize}
\end{lemma}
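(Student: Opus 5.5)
The horizon identities \eqref{eq:Festimates-App} are a direct frame computation. On $\HH$ the normal $N_\HH=\frac12 e_4$ is null. Moreover \eqref{eq:null-pair-in} gives $e_4=\frac{r^2+a^2}{|q|^2}T_+$ at $r=r_+$, since $\De=0$ there. For any $X$, using $\g(N_\HH,N_\HH)=0$,
\[
\QQ(X,N_\HH)=X\psi\c N_\HH\psi-\tfrac12\g(X,N_\HH)\LL[\psi].
\]
For $X=T$ and $X=T_+$ the second term must vanish. Indeed $\g(T_+,T_+)=0$ on $\HH$ by Lemma \ref{Le:uniqunessofT_+}. Also $\g(T,T_+)=\g(T_+,T_+)-\om_+\g(Z,T_+)$, and $\g(Z,T_+)=\frac{\sin^2\th}{|q|^2}N(\om_+)=0$ on $\HH$ by the same proof. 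Substituting $N_\HH\psi=\frac{r^2+a^2}{|q|^2}T_+\psi$ then gives both formulas.

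For the bounds on $\Si(\tau)$ I would first treat $\Tt$. Since $\Tt=-\frac{|q|^2\De}{\Si^2}\gradt$ is orthogonal to $\Rt$, $e_1$ and $Z$ (Lemma \ref{lemma:TtRtandZ}), it is timelike wherever $\De>0$. The same holds for $N_\Si$, a positive multiple of $-\gradtau$, by Lemma \ref{lemma:N_Si}. By the dominant energy condition, $\QQ(\Tt,N_\Si)$ is then a nonnegative quadratic form. To get the quantitative bound \eqref{eq:EFestimates0-App}, I would expand $\QQ(\Tt,N_\Si)$ in the frame $(e_3,e_4,e_1,e_2)$, writing $\Tt=\frac12(\al e_4+\be e_3)+\ga e_2$. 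From \eqref{eq:gtadt-gradr}, $\Tt$ is proportional to $T+\frac{2amr}{\Si^2}Z$, and \eqref{eq:TZ-frame} gives $\al\sim 1$, $\be\sim\De/r^2$ and $\ga=O(a/r)$.

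I would then split into regions. For $r\ge 2r_0$, $N_\Si\sim e_3+e_4$ by \eqref{eq:N_Si-frame}, and the form is uniformly comparable to $|e_4\psi|^2+|e_3\psi|^2+|\nab\psi|^2$; here $\De/r^2\sim1$. For $r\le r_0$, \eqref{eq:N_Si-frame} gives $N_\Si\propto \frac{m^2}{r^2}e_4+O_+(1)e_3$ plus an $e_2$ component. The product of two future timelike vectors then gives a form whose $|e_3\psi|^2$ coefficient is of size $\al\cdot O_+(1)\cdot(\text{coefficient of }e_4\text{ in }N_\Si)$. Near the horizon this coefficient is $\gtrsim\De/r^2$, while the $|e_4\psi|^2$ and $|\nab\psi|^2$ coefficients stay bounded below. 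The transition region $r_0\le r\le 2r_0$ is compact, and strict timelikeness there gives uniform coercivity. The main obstacle is checking that the cross terms $e_a\psi\c e_4\psi$, produced by $\ga$ and by the $e_2$ component of $N_\Si$, can be absorbed near the horizon without destroying the $\De/r^2$ weight on $|e_3\psi|^2$. I would handle this by Cauchy--Schwarz, using $\ga^2\les a^2\be$ up to the factor $\De$, and if necessary by computing the determinant of the $2\times2$ block in $(e_4\psi,e_2\psi)$ at $r=r_+$.

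Finally, \eqref{eq:EFestimates1-App} follows by writing $T=c\,\Tt+\frac{2amr}{\Si^2}Z$, with $c=\frac{\Si^2}{|q|^2\De}\cdot\frac{|q|^2\De}{\Si^2}$-normalised so that $c\sim 1$. This gives
\[
\QQ(T,N_\Si)=c\,\QQ(\Tt,N_\Si)+\tfrac{2amr}{\Si^2}\QQ(Z,N_\Si).
\]
The second term contains $Z\psi\c N_\Si\psi$ and a Lagrangian term, both with the small coefficient $O(amr^{-3})$. The first of these, $Z\psi\c N_\Si\psi$, is bounded by Cauchy--Schwarz by $\ep\,|N_\Si\psi|^2+C\ep^{-1}\frac{a^2m^2}{r^6}|Z\psi|^2$. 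The Lagrangian term is $\g(Z,N_\Si)\LL[\psi]$ with $\g(Z,N_\Si)=O(a/r)$ by Lemma \ref{lemma:N_Si}, so it is $O(a^2m r^{-4})$ times quadratic terms already controlled by the $\Tt$ estimate. Absorbing the $\ep$ terms into $c_0(\cdots)$ leaves exactly the error $C\frac{a^2m^2}{r^6}|Z\psi|^2$.
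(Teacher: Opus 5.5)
The horizon identities and your plan for \eqref{eq:EFestimates0-App} are sound in outline, but there are slips. The serious problem is in \eqref{eq:EFestimates1-App}.

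\emph{Horizon identities.} You cannot set $\De=0$ in \eqref{eq:null-pair-in} on $\HH$. In ingoing coordinates $\De\,\pr_r\big|_{\mathrm{BL}}=\De\,\pr_r\big|_{\mathrm{in}}+(r^2+a^2)\That$, so $e_4|_\HH=\frac{2(r^2+a^2)}{|q|^2}T_+$. This is what makes $N_\HH=\frac12 e_4=\frac{r^2+a^2}{|q|^2}T_+$ consistent; your value of $e_4$ contradicts the $N_\HH\psi$ you then substitute.

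\emph{Bound for $\Tt$.} Near $\HH$, what protects the weight $\frac{\De}{r^2}$ is that $\ga=-\frac{a\sin\th\,|q|\De}{\Si^2}=O(\De)$, because $\Tt|_\HH=T_+\propto e_4$. Your $\ga=O(a/r)$ is not enough. With $\ga=O(\De)$, the cross terms $e_3\psi\c e_a\psi$ and $e_3\psi\c e_4\psi$ are $O(\De)$ and can be absorbed. The $(e_4\psi,e_2\psi)$ block is positive at $r_+$ precisely because $N_\Si$ is timelike. Also, the $|e_3\psi|^2$ coefficient is $\frac14$ times the product of the $e_3$-components of $\Tt$ and $N_\Si$, not what you wrote. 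With these corrections your regional argument works: explicit computation near $\HH$ and near infinity, dominant energy plus compactness in between. The paper avoids any splitting. It uses the exact identity
\[
\QQ(\Tt,\Nt)=\tfrac12\tfrac{|q|^2\De}{\Si^2}|\Nt\psi|^2-\tfrac12\tfrac{|q|^4}{\Si^2}\g(\Nt,\Nt)|\Rt\psi|^2+\tfrac12\big(|e_1\psi|^2+\tfrac{|q|^2}{\Si^2\sin^2\th}|Z\psi|^2\big)
\]
in the orthogonal frame $\{\Tt,\Rt,e_1,Z\}$, which reduces coercivity to $\g(\Nt,\Nt)\sim-1$, and then applies Lemma \ref{lemma:equivalence}.

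\emph{The gap in \eqref{eq:EFestimates1-App}.} You bound $Z\psi\c N_\Si\psi\le\ep|N_\Si\psi|^2+C\ep^{-1}\frac{a^2m^2}{r^6}|Z\psi|^2$ and absorb $\ep|N_\Si\psi|^2$. For $r\le r_0$, however, \eqref{eq:N_Si-frame} and $c_\Si=r^2/m^2$ give $N_\Si=\frac12e_4+\frac{r^2e_4(\tau)}{2m^2}e_3-\frac{r^2a\sin\th}{m^2|q|}e_2$, with $e_4(\tau)|_{r_+}=\frac{2(r_+^2+a^2)}{|q|^2}$. This is forced: any timelike vector has a nonzero $e_3$-component. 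So $\ep|N_\Si\psi|^2$ contains $\ep|e_3\psi|^2$ with a non-degenerate weight, which $c_0\frac{\De}{r^2}|e_3\psi|^2$ cannot absorb as $r\to r_+$.

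Separately, $\g(Z,N_\Si)=-c_\Si Z(\tau)=0$ exactly. Had it been $O(a/r)$ as you claim, your Lagrangian term would contain $e_3\psi\,e_4\psi$, which is also not controlled by the degenerate density near $\HH$.

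\emph{The inequality itself fails at $r=r_+$.} There $\Tt=T_+=\frac{|q|^2}{2(r_+^2+a^2)}e_4$. For $Y=\frac12(y^4e_4+y^3e_3)+y^ae_a$ one has $\QQ(e_4,Y)=\frac12y^4|e_4\psi|^2+y^a\,e_4\psi\,e_a\psi+\frac12y^3|\nab\psi|^2$, which contains no $e_3\psi$. Likewise $Z\psi|_\HH=(r_+^2+a^2)\frac{\sin\th}{|q|}e_2\psi-\frac12a\sin^2\th\,e_4\psi$ contains no $e_3\psi$. Hence, from $T=\Tt-\frac{2amr}{\Si^2}Z$,
\[
\QQ(T,N_\Si)\big|_{r=r_+}=\big(\text{terms free of }e_3\psi\big)-\frac{a\,r_+^2}{m^2|q|^2}\,Z\psi\;e_3\psi .
\]
Fix $e_4\psi,\nab\psi$ with $Z\psi\neq0$ and let $e_3\psi\to\pm\infty$: the left-hand side is unbounded below, while the right-hand side of \eqref{eq:EFestimates1-App} stays fixed. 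So the inequality fails whenever $a\sin\th\neq0$. Near $\HH$, completing the square in $e_3\psi$ shows the weight on $|Z\psi|^2$ must blow up like $\De^{-1}$.

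The paper's appendix uses the same Cauchy--Schwarz step ("using \eqref{eq:TZ-frame}"), so it does not close this gap either. A correct version must be modified near $\HH$. For example, restrict to $r\ge r_+(1+\de)$ with $C=C(\de)$. Alternatively, add an error $\ep|e_3\psi|^2$ near $\HH$ on the right, to be absorbed later by the red-shift estimate of Proposition \ref{prop:partialredshiftestimate}.
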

\begin{proof}
The identities in \eqref{eq:Festimates-App} are straightforward. To check \eqref{eq:EFestimates0-App},
recall, see \eqref{eq:gtadt-gradr},
\begin{align*}
\Rt&=- \g^{\mu\nu}\pr_\mu  r  \pr_\nu  =\frac 12  e_3(r ) e_4 +\frac 12 e_4(r)  e_3=-\frac 1 2 \big( e_4 - \frac{\De}{|q|^2} e_ 3 \big),\\
\Tt&=-\frac{|q|^2\De}{\Si^2}\gradt= T+\frac{2amr}{\Si^2} Z.
\end{align*}

\textit{$1^{st}$ lower bound in
\eqref{eq:EFestimates0-App}.}
In view of Lemma \eqref{lemma:TtRtandZ}, using also the identity\footnote{ Indeed $
\g(\Tt, \widetilde{N}) =\g\big(\Tt, \frac{\Si^2}{|q|^2 \De}\Tt+ f'(r)\Rt\big)=\frac{\Si^2}{|q|^2 \De}\g(\Tt, \Tt)= -\frac{\Si^2}{|q|^2 \De} \frac{\Delta|q|^2}{\Sigma^2}=-1$.} $\g(\Tt, \widetilde{N})=-1$ we deduce
\begin{align*}
\QQ[\psi](\Tt, \widetilde{N})&= \Tt \psi \widetilde{N}(\psi)- \frac 12 \g(\Tt, \widetilde{N}) \g^{\mu\nu} \pr_\mu \psi\pr_\nu \psi=  \Tt \psi \widetilde{N}(\psi)+ \frac 12 \g^{\mu\nu} \pr_\mu \psi\pr_\nu \psi\\
&= \Tt \psi \big(\frac{\Si^2}{|q|^2 \De}\Tt\psi+f'(r)\Rt\psi)\\
&\quad +\frac 12 \Big(-\frac{\Sigma^2}{\Delta |q|^2}|\widetilde{\T}\psi|^2+\frac{|q|^2}{\Delta}|\widetilde{R}\psi|^2+|e_1\psi|^2+\frac{|q|^2}{\Sigma^2\sin^2\th}|Z\psi|^2\Big)\\
&= \frac 12 \frac{\Si^2}{|q|^2 \De} I + \frac 12 \big(|e_1\psi|^2+\frac{|q|^2}{\Sigma^2\sin^2\th}|Z\psi|^2\big)
\end{align*}
where
\[
I  = |\Tt\psi|^2 + \frac{|q|^4}{\Si^2 } |\Rt\psi|^2 +2 \frac{|q|^2 \De}{\Si^2}  f'(r) \Tt \psi \Rt \psi.
\]
We rewrite $I$ in the form
\begin{align*}
I&=\big|\widetilde{T}\psi+ f'(r) \frac{|q|^2\De}{\Si^2}  \widetilde{R}\psi\big|^2+| \widetilde{R}\psi|^2\big(  - f'(r)^2  \frac{|q|^4\De^2}{\Si^4} + \frac{|q|^4}{\Si^2 } \big)\\
&=\big|\widetilde{T}\psi+f'(r) \frac{|q|^2\De}{\Si^2}  \widetilde{R}\psi\big|^2+| \widetilde{R}\psi|^2\frac{|q|^4}{\Si^4} \big(\Si^2- f'(r)^2 \De^2\big)\\
&=\big|\frac{\De |q|^2}{\Si^2}(\gradtau)\psi\big|^2+| \widetilde{R}\psi|^2\frac{|q|^4}{\Si^4} \big(\Si^2- f'(r)^2 \De^2\big)
\end{align*}
where we use \ref{eq:N_Si} in the last step. On the other hand, see \eqref{eq:g(N_Si,N_Si)},
\[
-1\sim  \g(\widetilde{N}, \widetilde{N})=-\frac{\Si^2}{|q|^2 \De}+\frac{\De}{|q|^2} f'(r) f'(r)= \frac{1}{|q|^2 \De}\big(-\Si^2+ f'(r)^2 \De^2\big)
\]
Hence, for some $c_0>0$
\begin{align*}
I&=\big|\frac{\De |q|^2}{\Si^2}(\gradtau)\psi\big|^2-\frac{\De|q|^2}{\Si^2}\g(\widetilde{N}, \widetilde{N})|\Rt\psi|^2 \ge  c_0
\frac{\De|q|^2}{\Si^2}\big(\frac{\De|q|^2}{\Si^2}\big|(\gradtau)\psi\big|^2+|\Rt \psi|^2 \big)
\end{align*}
Therefore, with some other $c_0>0$, $r\ge r_+$,
\[
\QQ(\Tt, \widetilde{N})\ge c_0\Big(\frac{|\De||q|^2}{\Si^2}\big|(\gradtau)\psi\big|^2 +|\Rt\psi|^2 +|e_1\psi|^2 +\frac{1}{r^2\sin^2 \th} |\Z\psi|^2\Big).
\]
The desired estimate follows in view of the fact that $N_\Si\sim \Nt$. In view of Lemma \ref{lemma:equivalence} below we
can write, for a somewhat smaller $c_0>0$,
\[
\QQ(\Tt,  N_\Si)\ge  c_0\big(  | e_4  \psi|^2 +\frac{|\De|}{r^2}|e_3\psi|^2 +|\nab\psi|^2 \big).
\]

\noindent \textit{$2^{nd}$ lower bound in
\eqref{eq:EFestimates0-App}.}

We use the relation $ \T=\Tt-\frac{2a mr }{\Si^2} \Z$.
Since $\g(\Z, N_\Si)=0$
\beaa
\QQ(\T, N_\Si)= \QQ(\Tt, N_\Si)- \frac{2a mr }{\Si^2} \QQ(\Z, N_\Si)=  \QQ(\Tt, N_\Si) -\frac{2a mr }{\Si^2} \Z\psi \c N_\Si \psi.
\eeaa
and, using \eqref{eq:TZ-frame},
\beaa
\frac{2a mr }{\Si^2}  \Z\psi \c N_\Si \psi&\les&  \de    \big(  | e_4  \psi|^2 +\frac{|\De|}{ r^2}  |e_3\psi|^2 +|\nab\psi|^2 \big) + O(\de^{-1})\frac{a^2 m^2}{r^6} |Z\psi|^2
\eeaa
Therefore, for $r\ge r_+$, with a slightly smaller $c_0$ and a large $C=C(\de)>0$,
\[
\QQ(\T, N_\Si)\ge  c_0\Big( | e_4  \psi|^2 +\frac{|\De|}{r^2}|e_3\psi|^2 +|\nab\psi|^2\Big)- C\frac{a^2m^2 }{r^6 }  |\Z\psi|^2.
\]
\end{proof}

\begin{lemma}
\lab{lemma:equivalence}
We have the following inequality:
\beq
| e_4  \psi|^2 +\frac{|\De|}{r^2}|e_3\psi|^2 +|\nab\psi|^2 \les \frac{|\De|}{r^2}|(\gradtau)\psi|^2+  |\Rt \psi|^2 +|e_1\psi|^2  +\frac{1}{r^2\sin^2\th}|Z\psi|^2 .
\eeq
\end{lemma}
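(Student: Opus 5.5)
The plan is to express each frame derivative on the left-hand side as a linear combination of $\gradtau\psi$, $\Rt\psi$, $e_1\psi$ and $\Z\psi$. We then estimate the coefficients and check that they are compatible with the weights on the right-hand side. The bridge between the two sets of vectorfields is $\T$, which we eliminate first.

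\textbf{Step 1: frame derivatives in terms of $\T$, $\Rt$, $\Z$.} Since $\Rt=-\gradr=-\frac{\De}{|q|^2}\pr_r$, we have $\De\pr_r=-|q|^2\Rt$. Substituting into \eqref{eq:null-pair-in} and \eqref{eq:canonicalHorizBasisKerr} gives
\[
e_4=\frac{r^2+a^2}{|q|^2}\T-\Rt+\frac{a}{|q|^2}\Z,\qquad
\De e_3=(r^2+a^2)\T+|q|^2\Rt+a\Z,\qquad
e_2=\frac{a\sin\th}{|q|}\T+\frac{1}{|q|\sin\th}\Z.
\]
The component $e_1\psi$ already appears on the right-hand side.

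\textbf{Step 2: eliminating $\T$.} By \eqref{eq:N_Si} and the definition $\Tt=\T+\frac{2amr}{\Si^2}\Z$, we get
\[
\T=-\frac{|q|^2\De}{\Si^2}\big(\gradtau+f'(r)\Rt\big)-\frac{2amr}{\Si^2}\Z.
\]
Substituting this into Step 1 expresses $e_4\psi$, $\De e_3\psi$ and $e_2\psi$ in terms of $\gradtau\psi$, $\Rt\psi$ and $\Z\psi$. It remains to bound each coefficient, using $\Si^2\sim r^4$, $|q|\sim r$, $f'=0$ for $r\ge 2r_0$, and $|\De f'|\les r^2$.
\begin{itemize}
\item In $e_4\psi$: the $\gradtau$ coefficient is $O(\De/r^2)$, and its square is $\les \De/r^2$. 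The $\Rt$ coefficient is $O(1)$. The $\Z$ coefficient is $O(r^{-2})$, which is $\les (r\sin\th)^{-1}$.
\item In $e_2\psi$: the $\T$ part carries the factor $a\sin\th/|q|$, which produces only bounded coefficients. The $\Z$ coefficient $1/(|q|\sin\th)$ matches the weight $\frac{1}{r^2\sin^2\th}|\Z\psi|^2$ exactly.
\end{itemize}

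\textbf{Step 3: the degenerate term $\frac{\De}{r^2}|e_3\psi|^2$.} We write it as $\frac{|\De e_3\psi|^2}{r^2\De}$. The $\gradtau$ coefficient of $\De e_3$ is $O(\De)$, so it contributes $\les \frac{\De}{r^2}|\gradtau\psi|^2$. The $\Z$ coefficient is $O(r^{-1})$, and dividing by $\De$ is harmless away from the horizon.

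The $\Rt$ coefficient is
\[
c_R:=|q|^2-\frac{(r^2+a^2)|q|^2\De f'}{\Si^2},
\]
and we need $c_R=O(\De)$ to avoid a $1/\De$ loss; we expect this to be the main point. Near the horizon $\chi=1$, so $\De f'=(r^2+a^2)-\frac{m^2\De}{r^2}$, which gives
\[
c_R=\frac{|q|^2}{\Si^2}\big(\Si^2-(r^2+a^2)^2\big)+O(\De)=-\frac{|q|^2a^2\sin^2\th\,\De}{\Si^2}+O(\De)=O(\De).
\]
The $\Z$ coefficient near the horizon requires the same kind of check. There, $a-\frac{2amr(r^2+a^2)}{\Si^2}$ vanishes at $r=r_+$, because $\Si^2=(r^2+a^2)^2=4m^2r_+^2$ there, so this coefficient is also $O(\De)$. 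Away from a neighbourhood of $r_+$, $\De\gtrsim r^2$ and all weights are trivially comparable.

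Summing the squared contributions from Steps 2 and 3 yields the stated inequality.
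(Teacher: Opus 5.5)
Your proposal is correct and is essentially the paper's argument: a linear change of frame expressing $e_4$, $e_3$, $e_2$ through $\mathrm{grad}\,\tau$, $\widetilde{R}$, $Z$, whose only nontrivial point is the horizon cancellation in the $\widetilde{R}$-coefficient of $e_3$ produced by the choice of $f'$ (the paper records it as $\frac{f'\Delta}{r^2+a^2}-1=O(\Delta)$ near $r_+$, and because it keeps $e_2$ as an intermediate vectorfield it does not need your separate check that the $Z$-coefficient vanishes at $r=r_+$). One harmless slip: the $Z$-coefficient of $\Delta e_3$ is exactly $a|q|^2\Delta/\Sigma^2$, which tends to $a$ as $r\to\infty$ rather than being $O(r^{-1})$; the resulting contribution $a^2/(r^2\Delta)\lesssim r^{-4}$ is still dominated by the weight $\frac{1}{r^2\sin^2\theta}$.
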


\begin{proof}
Observe that $\frac{f'(r)\De}{r^2+a^2}-1=-m^2r^{-2}\De$ near $r=r_+$ and
\begin{align*}
e_4  \psi&=-\frac{\De}{(r^2+a^2)} (\gradtau) \psi  -(\frac{f'(r)\De}{r^2+a^2}+1)\Rt \psi + \frac{a\sin\th \De}{(r^2+a^2)|q|}  e_2\psi,\\
e_3 \psi &=-\frac{|q|^2}{(r^2+a^2)} (\gradtau) \psi  -\frac{|q|^2}{\De}(\frac{f'(r)\De}{r^2+a^2}-1)\Rt \psi + \frac{a\sin\th |q|}{(r^2+a^2)}  e_2\psi,\\
e_2 \psi&= \frac{a\sin\th}{|q|}\T\psi +\frac{1}{|q| \sin\th } \Z\psi =\frac{a\sin\th}{|q|} \big(\Tt -\frac{2amr}{\Si^2} \Z\big)\psi  +\frac{1}{|q| \sin\th } \Z\psi\\
&=-\frac{a\sin\th |q|\De}{\Si^2}(\gradtau) \psi -\frac{a\sin\th |q|\De f'(r)}{\Si^2}(\gradtau) \psi +\big(\frac{1}{|q| \sin\th } - \frac{2a^2mr\sin\th }{|q|\Si^2} \big)\Z\psi.
\end{align*}
Therefore,
\begin{align*}
|e_2\psi|^2&\les \frac{a^2\De^2}{r^6}|(\gradtau) \psi|^2 +\frac{1}{r^2\sin^2\th} |\Z\psi|^2, \\
| e_4  \psi|^2&\les\frac{\De^2}{r^4}|(\gradtau) \psi|^2 +|\Rt\psi|^2 +  a^2 \frac{\De^2}{r^3} |e_2\psi|^2,\\
|e_3\psi|^2 &\les |(\gradtau)\psi|^2 +|\Rt\psi|^2 +  a^2 \frac{\De^2}{r^3} |e_2\psi|^2.
\end{align*}
Hence
\[
| e_4  \psi|^2 +\frac{|\De|}{r^2}|e_3\psi|^2 +|\nab\psi|^2 \les \frac{|\De|}{r^2}|(\gradtau)\psi|^2+  |\Rt \psi|^2 +|e_1\psi|^2   +\frac{1}{r^2\sin^2\th}|Z\psi|^2
\]
as stated.
\end{proof}


\section{Proof of results from section \ref{section:geodesics-Kerr}}


\subsection{Proof of Lemma \ref{Lemma:rangeofrfortrappednullgeodesics}}

\lab{sec:rangeofrfortrappednullgeodesics}
We have to check that the values
\begin{align*}
\rhat_1 &:=2m\left(1+\cos\left(\frac{2}{3}\arccos\left(-\frac{|a|}{m}\right)\right)\right),\\
\rhat_2 &:=2m\left(1+\cos\left(\frac{2}{3}\arccos\left(\frac{|a|}{m}\right)\right)\right),
\end{align*}
verify the equation $\Th(r)=0$ or, equivalently, $r^{-1} \Tht(r)=0$.

{\bf The case of $\rhat_2$.}
Let $x=\frac{|a|}{m}$ and assume $a>0$. Set $y= \cos \big(\frac{1}{3}\arccos x) $.
To calculate $\cos\big(\frac{2}{3}\arccos x\big) $
we make use of $\cos 2\th= 2 \cos^2 \th-1$ to deduce,
\[
\cos\big(\frac{2}{3}\arccos x) =2 \cos^2 \big(\frac{1}{3}\arccos x) -1 = 2y^2-1.
\]
Since $\cos(3 \th)= 4 \cos^3 \th - 3 \cos \th$
we deduce
\[
x=\cos \big(3  (\frac{1}{3}\arccos x)\big)= 4 \cos^3   (\frac{1}{3}\arccos x) -3 \cos  (\frac{1}{3}\arccos x) = 4 y^3 - 3 y.
\]
Therefore,
\[
x= 4 y^3 - 3 y, \quad \cos\big(\frac{2}{3}\arccos x\big) = 2y^2-1.
\]
i.e.,
\[
\frac am = 4 y^3 - 3 y,\quad  \cos\big(\frac{2}{3}\arccos \frac am \big)= 2y^2-1.
\]
We deduce
\[
\rhat_2=2m\left(1+\cos\left(\frac{2}{3}\arccos\left(\frac am \right)\right)\right)= 2m( 1+2y^2-1)=  4m y^2.
\]
Finally
\[
\rhat_2^3-6m\rhat_2^2+9m^2\rhat_2-4ma^2=(4my^2)^3- 6m( 4my^2)^2 +9m^2(4 m y^2)- 4ma^2:=E(y)
\]
We write
\[
E(y)
=64m^3(y^3)^2- 96 m^3 y^3 y+ 36 m^3 y^2- 4ma^2.
\]
Since $4 y^3 - 3 y=\frac a m$, i.e. $y^{3}=\frac{1}{4}\big(3 y +\frac am \big)$, we deduce
\begin{align*}
E(y)&= 64 m^3( y^3)^2 - 24 m^3 y^3 y + 36 m^3 y^2- 4ma^2\\
&=4 m^3\big( 3  y +\frac am \big)^2 -24 m^3\big( 3  y +\frac am \big)y + 36 m^3 y^2- 4ma^2\\
&=4 m^3\big( 9 y^2 + 6\frac am  y+\frac{a^2}{m^2}\big) - 72 m^3 y^2 - 24 am^2 y + 36 m^3 y^2- 4ma^2\\
&=36 m^3 y^2 + 24am^2 y - 72 m^3 y^2 - 24 am^2 y + 36 m^3 y^2\\
&=0.
\end{align*}
Thus $\rhat_2$ verifies our equation.

{\bf The case of $\rhat_1$.} Set $z= \cos \big(\frac{1}{3}\arccos (-\frac a m) \big)$.
Then, exactly as before,
\[
-\frac am = 4 z^3 - 3 z,\quad  \cos\big(\frac{2}{3}\arccos(- \frac am)\big)= 2z^2-1.
\]
Hence
\[
\rhat_1 = 2m\left(1+\cos\left(\frac{2}{3}\arccos\left(-\frac am \right)\right)\right) = 4mz^2
\]
verifies the same equation.

\subsection{Proof of Lemma \ref{Lemma:Deriv.yhat}}
\lab{sec:Deriv.yhat}
It suffices to check the identity for $\yhat_2=2\left(1+ \cos\left(\frac{2}{3}\arccos\left(\mu^{1/2}\right)\right)\right)$, the one for $\yhat_1$ is identical.
\bea
\lab{eq:1^stbis.Deriv-yhat}
\bsplit
\frac{d}{d\mu}\yhat_2&= -2\sin\left(\frac{2}{3}\arccos\left(\mu^{1/2}\right)\right)\frac 2 3  \frac{d}{d\mu}\arccos (\mu^{1/2})\\
&=-2\sin\left(\frac{2}{3}\arccos\left(\mu^{1/2}\right)\right)\Big(-\frac 2 3 \frac{1}{\sqrt{1-\mu}}\frac 12 \mu^{-1/2}\Big)\\
&=\frac{2}{3} \frac{1}{\sqrt{(1-\mu)\mu}}\sin\left(\frac{2}{3}\arccos\left(\mu^{1/2}\right)\right).
\end{split}
\eea
Continuing
\beaa
\frac{d}{d\mu}\yhat_2  &=&\frac{2}{3} \frac{1}{\sqrt{(1-\mu)\mu}}\sqrt{1-\cos^2\left(\frac{2}{3}\arccos\left(\mu^{1/2}\right)\right)}\\
&=&\frac 2 3  \frac{1}{\sqrt{\mu(1-\mu)}}  \sqrt{ 1-\big(\frac{\yhat_2}{2} -1\big)^2}\\
&=& \frac 1 3   \frac{1}{\sqrt{\mu(1-\mu)}}\sqrt{\yhat_2(4-\yhat_2)}
\eeaa
as stated. To calculate the second derivative of $\yhat_2$ we start from \eqref{eq:1^stbis.Deriv-yhat}
from which we derive
\beaa
\frac{d}{d\mu}\Big(\sqrt{\mu(1-\mu)} \yhat_2' \Big)&=& \frac{2}{3}  \frac{d}{d\mu} \sin\left(\frac{2}{3}\arccos\left(\mu^{1/2}\right)\right)\\
&=&\cos \left(\frac{2}{3}\arccos\left(\mu^{1/2}\right)\right) \Big(-\frac 2 3 \frac{1}{\sqrt{\mu(1-\mu) }}\Big)=- \frac 2 3 \frac{1}{\sqrt{\mu(1-\mu) }} \big(\frac{\yhat_2}{2} -1\big)\\
&=&-\frac{1}{3}  \frac{1}{\sqrt{\mu(1-\mu) }} \big(\yhat_2-2\big)
\eeaa
as stated. The identity for $\yhat_1$ follows in the same manner.

\section{Proof of Proposition \ref{Prop:refinedP}}
\lab{sec:refinedP}
Using \eqref{eq:DefPsi-P}, we write
\begin{align*}
\widetilde{\VV}^2T\Zhat\psi\c\psi_{\LL}&=\frac{r^3\widetilde{\VV}^2}{ 4a(1-\de_0m^2\frac{\TT_{-a}}{2r^5})}\Psi_1\c\psi_{\LL}+\de_0m^2\frac{r^2-3mr+2a^2}{4ar^2}\frac{\widetilde{\VV}^2}{1-\de_0m^2\frac{\TT_{-a}}{2r^5}}TT\psi\c\psi_{\LL}\\
&-\frac{a}{r^2}\frac{1-\de_0m^2\frac{\TT_{-a}}{4r^5}}{1-\de_0m^2\frac{\TT_{-a}}{2r^5}}\widetilde{\VV}^2\Zhat\Zhat\psi\c\psi_{\LL}+\frac{r^2-3mr+2a^2}{2ar^2}\frac{1-\frac12\de_0m^2z}{1-\de_0m^2\frac{\TT_{-a}}{2r^5}}\widetilde{\VV}^2\OO\psi\c\psi_{\LL}.
\end{align*}
We further compute (modulo acceptable spacetime divergence)
\begin{align*}
\frac{r^3\widetilde{\VV}^2}{ 4a(1-\de_0m^2\frac{\TT_{-a}}{2r^5})}\Psi_1\c\psi_{\LL}&=\frac{r^3\widetilde{\VV}^2}{ 4a(1-\de_0m^2\frac{\TT_{-a}}{2r^5})}\Psi_1\c(m^2TT\psi+\Zhat\Zhat\psi)\\
&=\frac{r^3\widetilde{\VV}^2}{ 4a(1-\de_0m^2\frac{\TT_{-a}}{2r^5})}(mT\Psi_1\c mT\psi+\Zhat\Psi_1\c\Zhat\psi)\\
&\geq -\frac12h(|mT\Psi_1|^2+|\Zhat\Psi_1|^2)-\frac{(r^3\widetilde{\VV}^2)^2}{32a^2(1-\de_0m^2\frac{\TT_{-a}}{2r^5})^2h}(|mT\psi|^2+|\Zhat\psi|^2)\\
&\geq -P-\frac{(r^3\widetilde{\VV}^2)^2}{32a^2(1-\de_0m^2\frac{\TT_{-a}}{2r^5})^2h}(|mT\psi|^2+|\Zhat\psi|^2).
\end{align*}
Using $\int_{\DD(\tau_1,\tau_2)}\frac{f(r)}{|q|^2}(|\pr_\th\psi|^2+\frac{1}{\sin^2\th}|Z\psi|)\geq \int_{\DD(\tau_1,\tau_2)} \frac{2f(r)}{|q|^2}|\psi|
$ and $\pr_\th=|q|e_1,\ \frac{Z}{\sin\th}=|q|e_2-a\sin\th T$, we further derive (modulo acceptable boundary terms)\begin{align*}
&\int_{\DD(\tau_1,\tau_2)}\frac{1}{|q|^2}\frac{(r^3\widetilde{\VV}^2)^2}{32a^2(1-\de_0m^2\frac{\TT_{-a}}{2r^5})^2h}|mT\psi|^2\\
&\quad\leq \frac12\int_{\DD(\tau_1,\tau_2)}\frac{1}{|q|^2}\frac{(r^3\widetilde{\VV}^2)^2}{32a^2(1-\de_0m^2\frac{\TT_{-a}}{2r^5})^2h}\Big(\big||q|e_1mT\psi\big|^2+\big|(|q|e_2-a\sin\th T)mT\psi\big|^2\Big)\\
&\quad \leq\int_{\DD(\tau_1,\tau_2)}\frac{1}{|q|^2}\frac{(r^3\widetilde{\VV}^2)^2}{32a^2(1-\de_0m^2\frac{\TT_{-a}}{2r^5})^2h}\Big(O^{\a\b}\D_\a mT\psi\D_\b mT\psi+a^2TT\psi\c m^2TT\psi\Big)\\
&\quad=\int_{\DD(\tau_1,\tau_2)}\frac{1}{|q|^2}\frac{(r^3\widetilde{\VV}^2)^2}{32a^2(1-\de_0m^2\frac{\TT_{-a}}{2r^5})^2h}\Big(a^2TT\psi+\OO\psi\Big)\c m^2TT\psi.
\end{align*}
Similarly, we have
\begin{align*}
&\int_{\DD(\tau_1,\tau_2)}\frac{1}{|q|^2}\frac{(r^3\widetilde{\VV}^2)^2}{32a^2(1-\de_0m^2\frac{\TT_{-a}}{2r^5})^2h}|\Zhat\psi|^2\\
&\leq\int_{\DD(\tau_1,\tau_2)}\frac{1}{|q|^2}\frac{(r^3\widetilde{\VV}^2)^2}{32a^2(1-\de_0m^2\frac{\TT_{-a}}{2r^5})^2h}\Big(a^2TT\psi+\OO\psi\Big)\c \Zhat\Zhat\psi.
\end{align*}
This finishes the proof of \eqref{eq:TZhatPsi}.

It remains to prove \eqref{eq:Psi1larger}. Recall that
\begin{align*}
\Psi_1&=-\de_0 m^2\frac{\TT_{-a}}{r^6}TT\psi+\frac{4a}{r^3}(1-\de_0m^2\frac{\TT_{-a}}{2r^5})T\Zhat\psi+\frac{4a^2}{r^5}(1-\frac12\de_0m^2\frac{\TT_{-a}}{4r^5})\Zhat\Zhat\psi\\
&-(1-\frac12\de_0 m^2z)\frac{2\TT_{-a}}{r^6}\OO\psi
\end{align*}
and we rewrite
\begin{align*}
\Psi_1&=-(\de_0-\frac{5ar^3}{m\TT_{-a}(1-\frac{5am}{r^2})})\frac{\TT_{-a}}{r^6}m^2TT\psi-\Big((1-\frac12\de_0m^2z)\frac{\TT_{-a}}{r^6}-\frac{2a}{5mr^3}\Big)\Zhat\Zhat\psi\\
&-\Big((1-\frac12\de_0m^2z)\frac{\TT_{-a}}{r^6}-\frac{2a}{5mr^3}\Big)\OO\psi+\Psi_3
\end{align*}
where
\begin{align*}
\Psi_3&=\Psi_3^\aund\psia=-\frac{5am}{r^3(1-\frac{5am}{r^2})}TT\psi+\frac{4a}{r^3}(1-\de_0m^2\frac{\TT_{-a}}{2r^5})T\Zhat\psi-\Big((1-\frac12\de_0m^2z)\frac{\TT_{-a}}{r^6}+\frac{2a}{5mr^3}\Big)\OO\psi\\
&+\Big(\frac{4a^2}{r^5}(1-\de_0m^2\frac{\TT_{-a}}{4r^5})+\big((1-\frac12\de_0m^2z)\frac{\TT_{-a}}{r^6}-\frac{2a}{5mr^3}\big)\Big)\Zhat\Zhat\psi.
\end{align*}
Then we compute
\beq\lab{eq:Psi1^2}
\bsplit
|\Psi_1|^2&=\Romanupper{1}+\Romanupper{2}+ (\de_0-\frac{5ar^3}{m\TT_{-a}(1-\frac{5am}{r^2})})^2\frac{m^2\TT^2_{-a}}{r^{12}}TT\psi\c m^2TT\psi\\
&+2(\de_0-\frac{4ar^3}{m\TT_{-a}(1-\frac{5am}{r^2})})\Big(1-\frac12\de_0m^2z-\frac{2ar^3}{5m\TT_{-a}}\Big)\frac{m^2\TT^2_{-a}}{r^{12}}TT\psi\c\Zhat\Zhat\psi\\
&+2(\de_0-\frac{4ar^3}{m\TT_{-a}(1-\frac{5am}{r^2})})\Big(1-\frac12\de_0m^2z-\frac{2ar^3}{5m\TT_{-a}}\Big)\frac{\TT^2_{-a}}{r^{12}}\OO\psi\c m^2TT\psi
\end{split}
\eeq
where
\begin{align*}
\Romanupper{1}&=2\Big((\de_0-\frac{5ar^3}{m\TT_{-a}(1-\frac{5am}{r^2})})\frac{\TT_{-a}}{r^6}m^2TT\psi+\big(1-\frac12\de_0m^2z-\frac{2ar^3}{5m\TT_{-a}}\big)\frac{\TT_{-a}}{r^6}\Zhat\Zhat\psi\Big)\c(-\Psi_3),\\
\Romanupper{2}&=\big(1-\frac12\de_0m^2z-\frac{2ar^3}{5m\TT_{-a}}\big)^2\frac{\TT_{-a}^2}{r^{12}}\big(\Zhat\Zhat\psi+\OO\psi\big)^2-2\big(1-\frac12\de_0m^2z-\frac{2ar^3}{5m\TT_{-a}}\big)\frac{\TT_{-a}}{r^6}\OO\psi\c\Psi_3+|\Psi_3|^2.
\end{align*}
{\bf $\Romanupper{1}$ is nonnegative.} Since (with $\de_0=10$) for $|a|/m\leq 0.75$ and $r\geq r_++3.5m>5m$
\begin{align*}
&\de_0-\frac{5ar^3}{m\TT_{-a}(1-\frac{5am}{r^2})}\geq10-\frac{5a}{m(1-\frac{3m}{r}+\frac{2a^2}{r^2})(1-\frac{a}{5m})}>10-\frac{5a}{m(\frac25+\frac{2a^2}{25m^2})(1-\frac{a}{5m})}>0,\\
&1-\frac12\de_0m^2z-\frac{2ar^3}{5m\TT_{-a}}>0,
\end{align*}
according to Lemmas \ref{lemma:IntegrationbypartsI} and \ref{Lem:sumofsquares}, it suffices to prove that $-\Psi_3^\aund S_\aund^{\mu\nu}$ is nonnegative. We write by using $|q|e_2=\frac{\Zhat}{\sin\th}-\frac{a\cos^2\th}{\sin\th}T$
\begin{align*}
-\Psi_3^\aund S_\aund^{\mu\nu}=\begin{pmatrix}
F_1
&F_2&F_3& F_4
\end{pmatrix}\begin{pmatrix}
T^\mu T^\nu\\
2T^{(\mu}\Zhat^{\nu)}\\
\Zhat^\mu \Zhat^\nu\\
(|q|e_1)^\mu(|q|e_1)^\nu
\end{pmatrix}
\end{align*}
where
\begin{align*}
F_1&=\frac{5am}{r^3(1-\frac{5am}{r^2})}+\frac{a^2\cos^4\th}{\sin^2\th}\Big((1-\frac12\de_0m^2z)\frac{\TT_{-a}}{r^6}+\frac{2a}{5mr^3}\Big),\quad F_4=(1-\frac12\de_0m^2z)\frac{\TT_{-a}}{r^6}+\frac{2a}{5mr^3},\\
F_2&=-\frac{2a}{r^3}(1-\de_0m^2\frac{\TT_{-a}}{2r^5})-\frac{a\cos^2\th}{\sin^2\th}\Big((1-\frac12\de_0m^2z)\frac{\TT_{-a}}{r^6}+\frac{2a}{5mr^3}\Big),\\
 F_3&=\frac{1}{\sin^2\th}\Big((1-\frac12\de_0m^2z)\frac{\TT_{-a}}{r^6}+\frac{2a}{5mr^3}\Big)-\Big(\frac{4a^2}{r^5}(1-\de_0m^2\frac{\TT_{-a}}{4r^5})+(1-\frac12\de_0m^2z)\frac{\TT_{-a}}{r^6}-\frac{2a}{5mr^3}\Big)\\
 &\geq \frac{1}{\sin^2\th}\frac{\frac{4a}{5m}r^3-4a^2r}{r^6}+\frac{\cos^2\th}{\sin^2\th}\Big((1-\frac12\de_0m^2z)\frac{\TT_{-a}}{r^6}-\frac{2a}{5mr^3}\Big)
 \end{align*}
Since $F_1, F_4>0$ for $r\geq r_++3.5m$, it suffices to prove that $F_1F_3-F_2^2>0$ for $r\geq r_++3.5m$. We compute
\begin{align*}
F_1F_3-F_2^2&> \frac{a^2}{\sin^2\th}\frac{4}{r^6}-\frac{4a^2}{r^6}(1-\de_0m^2\frac{\TT_{-a}}{2r^5})^2+\frac{\cos^2\th}{\sin^2\th}\frac{5am}{r^3}\Big((1-\frac12\de_0m^2z)\frac{\TT_{-a}}{r^6}-\frac{2a}{5mr^3}\Big)\\
&-\frac{\cos^2\th}{\sin^2\th}\frac{4a^2}{r^3}\Big((1-\frac12\de_0m^2z)\frac{\TT_{-a}}{r^6}-\frac{2a}{5mr^3}+\frac{4a}{5mr^3}\Big)\\
&-\frac{a^2\cos^4\th}{\sin^2\th}(\frac{4a^2}{r^5}+\frac{\TT_{-a}}{r^6}-\frac{2a}{5mr^3})\Big((1-\frac12\de_0m^2z)\frac{\TT_{-a}}{r^6}-\frac{2a}{5mr^3}+\frac{4a}{5mr^3}\Big)\\
&> \frac{a^2\cos^2\th}{\sin^2\th}\Big(\frac{4}{r^6}-\frac{4a}{5mr^3}(\frac{4a^2}{r^3}+\frac{\TT_{-a}}{r^6}+\frac{4a^2}{r^5}-\frac{2a}{5mr^3})\Big)\\
&+\frac{a^2\cos^2\th}{\sin^2\th}\Big(\frac{5m}{ar^3}-\frac{4}{r^3}-\frac{\TT_{-a}}{r^6}-\frac{4a^2}{r^5}+\frac{2a}{5mr^3}\Big)\Big((1-\frac12\de_0m^2z)\frac{\TT_{-a}}{r^6}-\frac{2a}{5mr^3}\Big)\geq 0
\end{align*}
where we use $\frac{\TT_{-a}}{r^6}\leq\frac{1}{r^3}$ and $\frac{4a^2}{r^5}-\frac{2a}{5mr^3}<0$ for $r\geq r_++3.5m$ in the last step.

{\bf Analysis of $\Romanupper{2}$.} We further compute
\begin{align*}
\Romanupper{2}&=\big((1-\frac12\de_0m^2z)\frac{\TT_{-a}}{r^6}-\frac{2a}{5mr^3}\big)^2\big(\Zhat\Zhat\psi+\OO\psi\big)^2\\
&-2\big((1-\frac12\de_0m^2z)\frac{\TT_{-a}}{r^6}-\frac{2a}{5mr^3}\big)\OO\psi\c\Psi_3+|\Psi_3|^2\\
&\geq\big((1-\frac12\de_0m^2z)\frac{\TT_{-a}}{r^6}-\frac{2a}{5mr^3}\big)^2\big(\frac34|\OO\psi|^2+|\Zhat\Zhat\psi|^2+2\OO\psi\c\Zhat\Zhat\psi\big)\\
&-\big((1-\frac12\de_0m^2z)\frac{\TT_{-a}}{r^6}-\frac{2a}{5mr^3}\big)\OO\psi\c\Psi_3.
\end{align*}
We rewrite
\begin{align*}
\Psi_3=-\Big((1-\frac12\de_0m^2z)\frac{\TT_{-a}}{r^6}+\frac{2a}{5mr^3}\Big)(\OO\psi-\Zhat\Zhat\psi)+\Psi_4
\end{align*}
where
\begin{align*}
\Psi_4&=\Psi_4^\aund\psia=-\frac{5am}{r^3(1-\frac{5am}{r^2})}TT\psi+\frac{4a}{r^3}(1-\de_0m^2\frac{\TT_{-a}}{2r^5})T\Zhat\psi\\
&+\Big(\frac{4a^2}{r^5}(1-\de_0m^2\frac{\TT_{-a}}{4r^5})-\frac{4a}{5mr^3}\Big)\Zhat\Zhat\psi\end{align*}
Proceeding as in the proof of the nonnegativity of $-\Psi_3^\aund S_{\aund}^{\mu\nu}$, we can also prove the nonnegativity of $-\Psi_4^\aund S_{\aund}^{\mu\nu}$ and thus modulo acceptable spacetime divergence
\begin{align*}
&-\big((1-\frac12\de_0m^2z)\frac{\TT_{-a}}{r^6}-\frac{2a}{5mr^3}\big)\OO\psi\c\Psi_3\\
&\quad\geq\big((1-\frac12\de_0m^2z)\frac{\TT_{-a}}{r^6}-\frac{2a}{5mr^3}\big)\big((1-\frac12\de_0m^2z)\frac{\TT_{-a}}{r^6}+\frac{2a}{5mr^3}\big)\Big(|\OO\psi|^2-\OO\psi\c\Zhat\Zhat\psi\Big).
\end{align*}
Therefore
\beq\lab{eq:controlofRomanupper2}
\bsplit
\Romanupper{2}&\geq \big((1-\frac12\de_0m^2z)\frac{\TT_{-a}}{r^6}-\frac{2a}{5mr^3}\big)\big((1-\frac12\de_0m^2z)\frac{\TT_{-a}}{r^6}-\frac{6a}{5mr^3}\big)\OO\psi\c\Zhat\Zhat\psi\\
&+\big((1-\frac12\de_0m^2z)\frac{\TT_{-a}}{r^6}-\frac{2a}{5mr^3}\big)\big((1-\frac12\de_0m^2z)\frac{7\TT_{-a}}{4r^6}+\frac14\frac{2a}{5mr^3}\big)|\OO\psi|^2\\
&+\big((1-\de_0m^2z)\frac{\TT_{-a}}{r^6}-\frac{2a}{5mr^3}\big)^2|\Zhat\Zhat\psi|^2\\
&\geq \big((1-\de_0m^2z)\frac{\TT_{-a}}{r^6}-\frac{a}{2mr^3}\big)\big((1-\frac12\de_0m^2z)\frac{\TT_{-a}}{r^6}-\frac{6a}{5mr^3}\big)\OO\psi\c\Zhat\Zhat\psi\\
&+\frac{7}{4}\big((1-\frac12\de_0m^2z)\frac{\TT_{-a}}{r^6}-\frac{2a}{5mr^3}\big)^2|\OO\psi|^2+\big((1-\de_0m^2z)\frac{\TT_{-a}}{r^6}-\frac{2a}{5mr^3}\big)^2|\Zhat\Zhat\psi|^2\\
&\geq \big((1+\sqrt{7})(1-\frac12\de_0m^2z)\frac{\TT_{-a}}{r^6}-(3+\sqrt{7})\frac{2a}{5mr^3}\big)\big((1-\frac12\de_0m^2z)\frac{\TT_{-a}}{r^6}-\frac{2a}{5mr^3}\big)\OO\psi\c\Zhat\Zhat\psi.
\end{split}
\eeq
Putting \eqref{eq:Psi1^2}, \eqref{eq:controlofRomanupper2} and the nonnegativity of $\Romanupper{1}$ together,
we obtain
\begin{align*}
|\Psi_1|^2&\geq(\de_0-\frac{5ar^3}{m\TT_{-a}(1-\frac{5am}{r^2})})^2\frac{m^2\TT^2_{-a}}{r^{12}}TT\psi\c m^2TT\psi\\
&+2(\de_0-\frac{5ar^3}{m\TT_{-a}(1-\frac{5am}{r^2})})\Big((1-\frac12\de_0m^2z)-\frac{2ar^3}{5m\TT_{-a}}\Big)\frac{m^2\TT^2_{-a}}{r^{12}}(TT\psi\c\Zhat\Zhat\psi+\OO\psi\c m^2TT\psi)\\
&+\big((1+\sqrt{7})(1-\frac12\de_0m^2z)-(3+\sqrt{7})\frac{2ar^2}{5m\TT_{-a}}\big)\big((1-\frac12\de_0m^2z)-\frac{2ar^3}{5m\TT_{-a}}\big)\frac{\TT^2_{-a}}{r^{12}} \OO\psi\c\Zhat\Zhat\psi\\
&\geq 2(\de_0-\frac{5ar^3}{m\TT_{-a}(1-\frac{5am}{r^2})})\Big((1-\frac12\de_0m^2z)-\frac{2ar^3}{5m\TT_{-a}}\Big)\frac{m^2\TT^2_{-a}}{r^{12}}TT\psi\c\psi_{\LL}\\
&+\big((1+\sqrt{7})(1-\frac12\de_0m^2z)-(3+\sqrt{7})\frac{2ar^2}{5m\TT_{-a}}\big)\big((1-\frac12\de_0m^2z)-\frac{2ar^3}{5m\TT_{-a}}\big)\frac{\TT^2_{-a}}{r^{12}}\OO\psi\c\psi_{\LL}.
\end{align*}
where the last inequality can be easily checked using Mathematica. Finally, using the facts that $P\geq \frac12h(|mT\Psi_1|^2+|\Zhat\Psi_1|^2)\geq \frac14h|Z\Psi_1|^2$ and $
\int_{\DD(\tau_1,\tau_2)}\frac{h}{4|q|^2}|Z\psi|^2\geq \int_{\DD(\tau_1,\tau_2)}\frac{h}{4|q|^2}|\psi|^2$, we finish the proof of \eqref{eq:Psi1larger}.


\begin{thebibliography}{99}
\raggedright

\bibitem{A}
S. Alinhac,
\textit{Energy multipliers for perturbations of the Schwarzschild metric},
Comm. Math. Phys. \textbf{288} (2009), no. 1, 199--224.

\bibitem{AB}
L. Andersson and P. Blue,
\textit{Hidden symmetries and decay for the wave equation on the Kerr spacetime},
Ann. of Math. (2) \textbf{182} (2015), no. 3, 787--853.

\bibitem{Bieri}
L. Bieri,
\textit{An extension of the stability theorem of the Minkowski space in general relativity},
Ph.D. thesis, ETH Z\"urich, 2007.

\bibitem{BPT}
J. M. Bardeen, W. H. Press, and S. A. Teukolsky,
\textit{Rotating black holes: Locally nonrotating frames, energy extraction, and scalar synchrotron radiation},
Astrophys. J. \textbf{178} (1972), 347--369.

\bibitem{Carter1}
B. Carter,
\textit{Global structure of the Kerr family of gravitational fields},
Phys. Rev. \textbf{174} (1968), 1559--1571.

\bibitem{Carter2}
B. Carter,
\textit{Hamilton--Jacobi and Schr\"odinger separable solutions of Einstein's equations},
Comm. Math. Phys. \textbf{10} (1968), 280--310.

\bibitem{CeJa}
C. Cederbaum and S. Jahns,
\textit{Geometry and topology of the Kerr photon region in the phase space},
Gen. Relativity Gravitation \textbf{51} (2019), no. 6, Paper No. 79, 29 pp.

\bibitem{Chr-Survey}
D. Christodoulou,
\textit{The nonlinear stability of rotating black holes},
in \textit{Surveys in Differential Geometry, Vol. XXVII: Geometry and Physics},
International Press, Somerville, MA, 2024, pp. 1--123.

\bibitem{Ch-memory}
D. Christodoulou,
\textit{Nonlinear nature of gravitation and gravitational-wave experiments},
Phys. Rev. Lett. \textbf{67} (1991), 1486--1489.

\bibitem{Ch02}
D. Christodoulou,
\textit{The global initial value problem in general relativity},
in V. G. Gurzadyan, R. T. Jantzen, and R. Ruffini (eds.),
\textit{The Ninth Marcel Grossmann Meeting},
World Scientific, Singapore, 2002, pp. 44--54.

\bibitem{DR1}
M. Dafermos and I. Rodnianski,
\textit{The red-shift effect and radiation decay on black hole spacetimes},
Comm. Pure Appl. Math. \textbf{62} (2009), no. 7, 859--919.

\bibitem{DR2}
M. Dafermos and I. Rodnianski,
\textit{A new physical-space approach to decay for the wave equation with applications to black hole spacetimes},
in \textit{XVIth International Congress on Mathematical Physics},
World Sci. Publ., Hackensack, NJ, 2010, pp. 421--432.

\bibitem{DRS}
M. Dafermos, I. Rodnianski, and Y. Shlapentokh-Rothman,
\textit{Decay for solutions of the wave equation on Kerr exterior spacetimes III: The full subextremal case $|a|<M$},
Ann. of Math. (2) \textbf{183} (2016), no. 3, 787--913.

\bibitem{DHRT2}
M. Dafermos, G. Holzegel, I. Rodnianski, and M. Taylor,
\textit{Quasilinear wave equations on Kerr black holes in the full subextremal range $|a|<M$},
arXiv:2410.03639.

\bibitem{BD86}
L. Blanchet and T. Damour,
\textit{Radiative gravitational fields in general relativity. I. General structure of the field outside the source},
Philos. Trans. Roy. Soc. London Ser. A \textbf{320} (1986), 379--430.

\bibitem{Dyatlov}
S. Dyatlov,
\textit{Asymptotics of linear waves and resonances with applications to black holes},
Comm. Math. Phys. \textbf{335} (2015), no. 3, 1445--1485.

\bibitem{FKSY}
F. Finster, N. Kamran, J. Smoller, and S.-T. Yau,
\textit{Decay of solutions of the wave equation in the Kerr geometry},
Comm. Math. Phys. \textbf{264} (2006), no. 2, 465--503.

\bibitem{GKS}
E. Giorgi, S. Klainerman, and J. Szeftel,
\textit{Wave equations estimates and the nonlinear stability of slowly rotating Kerr black holes},
arXiv:2205.14808.

\bibitem{H-K2}
L. He and S. Klainerman,
\textit{Whiting transform and flux estimates for solutions of wave equations in Kerr},
in preparation.

\bibitem{Hintz1}
P. Hintz,
\textit{Nonlinear stability of subextremal Kerr black holes},
arXiv:2606.28253.

\bibitem{Hormander}
L. H\"ormander,
\textit{Lectures on Nonlinear Hyperbolic Differential Equations},
Math\'ematiques \& Applications, vol. 26,
Springer-Verlag, Berlin, 1997.

\bibitem{Kehr1}
L. M. A. Kehrberger,
\textit{The case against smooth null infinity I: Heuristics and counter-examples},
Ann. Henri Poincar\'e \textbf{23} (2022), no. 3, 829--921.

\bibitem{KS}
S. Klainerman and J. Szeftel,
\textit{Global non-linear stability of Schwarzschild spacetime under polarized perturbations},
Annals of Mathematics Studies, vol. 210,
Princeton University Press, Princeton, NJ, 2020.

\bibitem{KS-GCM1}
S. Klainerman and J. Szeftel,
\textit{Construction of GCM spheres in perturbations of Kerr},
Ann. PDE \textbf{8} (2022), no. 2, Paper No. 17, 153 pp.

\bibitem{KS-GCM2}
S. Klainerman and J. Szeftel,
\textit{Effective results in uniformization and intrinsic GCM spheres in perturbations of Kerr},
Ann. PDE \textbf{8} (2022), no. 2, Paper No. 18, 89 pp.

\bibitem{KS:Kerr}
S. Klainerman and J. Szeftel,
\textit{Kerr stability for small angular momentum},
Pure Appl. Math. Q. \textbf{19} (2023), no. 3, 791--1678.

\bibitem{KS:Survey}
S. Klainerman and J. Szeftel,
\textit{Brief introduction to the nonlinear stability of Kerr},
Pure Appl. Math. Q. \textbf{20} (2024), no. 4, 1721--1761.

\bibitem{LT}
H. Lindblad and M. Tohaneanu,
\textit{A local energy estimate for wave equations on metrics asymptotically close to Kerr},
Ann. Henri Poincar\'e \textbf{21} (2020), no. 11, 3659--3726.

\bibitem{Ma}
S. Ma,
\textit{Uniform energy bound and Morawetz estimate for extreme components of spin fields in the exterior of a slowly rotating Kerr black hole II: Linearized gravity},
Comm. Math. Phys. \textbf{377} (2020), no. 3, 2489--2551.

\bibitem{MMTT}
J. Marzuola, J. Metcalfe, D. Tataru, and M. Tohaneanu,
\textit{Strichartz estimates on Schwarzschild black hole backgrounds},
Comm. Math. Phys. \textbf{293} (2010), no. 1, 37--83.

\bibitem{MaS}
S. Ma and J. Szeftel,
\textit{Energy-Morawetz estimates for the wave equation in perturbations of Kerr},
arXiv:2410.02341.

\bibitem{MaS2}
S. Ma and J. Szeftel,
\textit{Energy-Morawetz estimates for Teukolsky equations in perturbations of Kerr},
arXiv:2603.23437.

\bibitem{Millet}
P. Millet,
\textit{Optimal decay for solutions of the Teukolsky equation on the Kerr metric for the full sub-extremal range},
arXiv:2302.06946.

\bibitem{Mor}
C. S. Morawetz,
\textit{The decay of solutions of the exterior initial-boundary value problem for the wave equation},
Comm. Pure Appl. Math. \textbf{14} (1961), 561--568.

\bibitem{Yacov}
Y. Shlapentokh-Rothman,
\textit{Quantitative mode stability for the wave equation on the Kerr spacetime},
Ann. Henri Poincar\'e \textbf{16} (2015), no. 1, 289--345.

\bibitem{S-Rita1}
Y. Shlapentokh-Rothman and R. Teixeira da Costa,
\textit{Boundedness and decay for the Teukolsky equation on Kerr in the full subextremal range $|a|<M$: Frequency space analysis},
arXiv:2007.07211.

\bibitem{S-Rita2}
Y. Shlapentokh-Rothman and R. Teixeira da Costa,
\textit{Boundedness and decay for the Teukolsky equation on Kerr in the full subextremal range $|a|<M$: Physical space analysis},
arXiv:2302.08916.

\bibitem{Shen}
D. Shen,
\textit{Construction of GCM hypersurfaces in perturbations of Kerr},
Ann. PDE \textbf{9} (2023), no. 2, Paper No. 11, 142 pp.

\bibitem{Shen23}
D. Shen,
\textit{Stability of the Minkowski space with minimal decay},
arXiv:2310.07483.

\bibitem{Shen24}
D. Shen,
\textit{Exterior stability of Minkowski spacetime with borderline decay},
arXiv:2405.00735, accepted in Ann. Sci. \'Ec. Norm. Sup\'er.

\bibitem{St}
J. Stogin,
\textit{Nonlinear wave dynamics in black hole spacetimes},
arXiv:2603.14638.

\bibitem{TT}
D. Tataru and M. Tohaneanu,
\textit{A local energy estimate on Kerr black hole backgrounds},
Int. Math. Res. Not. IMRN (2011), no. 2, 248--292.

\bibitem{Teo}
E. Teo,
\textit{Spherical photon orbits around a Kerr black hole},
Gen. Relativity Gravitation \textbf{35} (2003), no. 11, 1909--1926.

\bibitem{W}
B. F. Whiting,
\textit{Mode stability of the Kerr black hole},
J. Math. Phys. \textbf{30} (1989), no. 6, 1301--1305.

\end{thebibliography}
\end{document}